\documentclass{article}

\usepackage[utf8]{inputenc}
\usepackage{hyperref}
\usepackage{natbib}
\usepackage{fullpage}
\usepackage{amssymb,amsmath,amsthm,amsfonts}
\usepackage{complexity}
\usepackage{cleveref}
\usepackage{paralist}
\usepackage{color}
\usepackage{comment}
\usepackage{mathrsfs}
\usepackage{float}
\usepackage[dvipsnames]{xcolor}
\usepackage{indentfirst}
\usepackage{complexity}
\usepackage{thmtools}
\usepackage{thm-restate}
\usepackage{bm}
\usepackage[boxed,linesnumbered]{algorithm2e}
\usepackage[normalem]{ulem}
\usepackage{rotating}
\usepackage{mathtools}
\usepackage{mdframed}
\usepackage{multicol}

\newcounter{anonymous}
\makeatletter
\newcommand{\dummylabel}[2]{\def\@currentlabel{#2}\label{#1}}
\makeatother

\numberwithin{equation}{section}

\usepackage{tikz}
\usetikzlibrary{positioning,calc,arrows.meta, positioning, shapes.multipart}

\makeatletter
\providecommand*{\cupdot}{%
  \mathbin{%
    \mathpalette\@cupdot{}%
  }%
}
\newcommand*{\@cupdot}[2]{%
  \ooalign{%
    $\m@th#1\cup$\cr
    \hidewidth$\m@th#1\cdot$\hidewidth
  }%
}
\makeatother

\newtheorem{theorem}{Theorem}[section]
\newtheorem{proposition}[theorem]{Proposition}

\newtheorem{lemma}[theorem]{Lemma}
\newtheorem{corollary}[theorem]{Corollary}

\theoremstyle{definition}
\newtheorem{definition}[theorem]{Definition}

\newtheorem{open}{Open Problem}

\theoremstyle{remark}
\newtheorem{preremark}[theorem]{Remark}
\newenvironment{remark}%
  {\vspace{0.5em}\begin{mdframed}[innertopmargin=-0em,skipabove=0em]\small
	\begin{preremark}\upshape}
{\end{preremark}\end{mdframed}}

\newenvironment{remarkn}[1]%
  {\vspace{0.5em}\begin{mdframed}[innertopmargin=0em,skipabove=0em]\small
	\begin{preremark}[#1]\upshape}
{\end{preremark}\end{mdframed}}

\definecolor{corlinks}{RGB}{150,0,150}
\definecolor{cormenu}{RGB}{0,0,150}
\definecolor{corurl}{RGB}{0,0,150}

\hypersetup{
	colorlinks=true,
	urlcolor=corlinks,
	linkcolor=corlinks,
	menucolor=cormenu,
	citecolor=corlinks,
	pdfborder= 0 0 0
}

\newcommand{\eps}{\varepsilon}

\newcommand{\Alice}{\mathsf{A}}
\newcommand{\Bob}{\mathsf{B}}
\newcommand{\eqdef}{\triangleq}
\newcommand{\calP}{\mathcal{P}}
\newcommand{\bbN}{\mathbb{N}}
\newcommand{\zo}{\{0,1\}}
\DeclareMathOperator*{\Ex}{\mathbb{E}}

\newcommand{\F}{\mathbb{F}}

\newcommand{\PV}{\mathsf{PV}}
\newcommand{\dWPHP}{\mathsf{dWPHP}}
\newcommand{\rWPHP}{\mathsf{rWPHP}}
\newcommand{\WPHP}{\mathsf{WPHP}}

\newcommand{\Hard}{\mathsf{Hard}}
\newcommand{\APC}{\mathsf{APC}}
\newcommand{\T}{\mathsf{T}}
\newcommand{\Log}{\mathsf{Log}}

\newcommand{\HARD}{\mathsf{HARD}}

\newcommand{\pr}{\mathsf{pr}}
\newcommand{\LossyCode}{\mathsf{LossyCode}}

\newcommand{\calD}{\mathcal{D}}

\newcommand{\rLB}{\mathsf{rLB}}

\newcommand{\CAPP}{\mathsf{CAPP}}

\newcommand{\TR}{\mathsf{TR}}

\newcommand{\SetDisj}{\mathsf{SetDisj}}
\newcommand{\EQ}{\mathsf{EQ}}

\newcommand{\Jerabek}{\text{Je\v r\'abek}}

\newcommand{\Krajicek}{\text{Kraj\'{i}\v{c}ek}}

\newcommand{\injecto}{\hookrightarrow}

\newcommand{\Size}{\mathsf{Size}}

\newcommand{\emphax}[1]{\textsc{#1}}
\newcommand{\emphmeta}[1]{\emph{#1}}
\newcommand{\pgr}[1]{\paragraph{#1.}}

\newcommand{\bbE}{\mathbb{E}}
\newcommand{\Id}{\mathbb{I}}

\newcommand{\bbQ}{\mathbb{Q}}
\newcommand{\RefYao}{\mathsf{Refuter}(\mathsf{Yao})}
\newcommand{\TFZPP}{\mathsf{TFZPP}}

\newcommand{\rcWPHP}{\texttt{\#}\mathsf{rWPHP}}
\newcommand{\rrWPHP}{\mathsf{rrWPHP}}

\newcommand{\rsLB}{\protect\underrightarrow{\mathsf{rLB}}}
\newcommand{\de}{\text{-}}
\newcommand{\some}{\mathsf{some}}

\newcommand{\pub}{\mathsf{pub}}
\newcommand{\prv}{\mathsf{priv}}

\newcommand{\Apx}{\mathsf{APX}}
\newcommand{\Eval}{\mathsf{Eval}}

\newcommand{\ITR}{\mathsf{ITR}}

\newcommand{\Bool}{\mathsf{Bool}}
\newcommand{\Fix}{\mathsf{Fix}}
\newcommand{\bbM}{\mathbb{M}}

\DeclareMathOperator{\sm}{\#}

\title{A Theory for Probabilistic Polynomial-Time  Reasoning\vspace{0.4cm}}

\author{
Lijie Chen\footnote{University of California at Berkeley. Email: \texttt{\href{mailto:lijiechen@berkeley.edu}{lijiechen@berkeley.edu}}}\\
 \and
Jiatu Li\footnote{Massachusetts Institute of Technology. Email: \texttt{\href{mailto:jiatuli@mit.edu}{jiatuli@mit.edu}}. This work was supported by National Science Foundation under Grant No.~CCF 2420092.}\\
\and
Igor C. Oliveira\footnote{University of Warwick. Email: \texttt{\href{mailto:igor.oliveira@warwick.ac.uk}{igor.oliveira@warwick.ac.uk}}. This work was supported in part by the UKRI Frontier Research Guarantee Grant EP/Y007999/1 and the Centre for Discrete Mathematics and its Applications (DIMAP) at the University of Warwick.}\\
\and
Ryan Williams\footnote{Massachusetts Institute of Technology. Email: \texttt{\href{mailto:rrw@mit.edu}{rrw@mit.edu}}. Parts of this work were completed while the author was visiting the Institute for Advanced Study, Princeton, NJ. This material is based upon work supported by the National Science Foundation under grants DMS-2424441 (at IAS) and CCF-2420092 (at MIT).}\\
}

\begin{document}
\maketitle

\pagenumbering{gobble} 

\begin{abstract} 
In this work, we propose a new bounded arithmetic theory, denoted $\APX_1$, designed to formalize a broad class of probabilistic arguments commonly used in theoretical computer science. Under plausible assumptions, $\APX_1$ is strictly weaker than previously proposed frameworks, such as  the  theory $\APC_1$ introduced in the seminal work of Jeřábek (2007). From a computational standpoint, $\APX_1$ is closely tied to approximate counting and to the central question in derandomization, the $\pr \BPP$ versus $\pr \P$ problem, whereas $\APC_1$ is linked to the dual weak pigeonhole principle and to the existence of Boolean functions with exponential circuit complexity.

A key motivation for introducing $\APX_1$ is that its weaker axioms expose finer proof-theoretic structure, making it a natural setting for several lines of research, including unprovability of complexity conjectures and reverse mathematics of randomized lower bounds. In particular, the framework we develop for $\APX_1$ enables the formulation of precise questions concerning the provability of $\pr \BPP = \pr \P$ in \emph{deterministic} feasible mathematics. Since the (un)provability of $\P$ versus $\NP$ in bounded arithmetic has long served as a central theme in the field, we expect this line of investigation to be of particular interest. 

Our technical contributions include developing a comprehensive foundation for probabilistic reasoning from weaker axioms, formalizing non-trivial results from theoretical computer science in $\APX_1$, and establishing a tailored witnessing theorem for its provably total $\TFNP$ problems. As a byproduct of our analysis of the minimal proof-theoretic strength required to formalize statements arising in theoretical computer science, we resolve an open problem regarding the provability of $\AC^0$ lower bounds in $\PV_1$, which was considered in earlier works by Razborov (1995), Krajíček (1995), and Müller and Pich (2020).
\end{abstract}

\newpage

\pagenumbering{roman} 

\tableofcontents

\newpage

\pagenumbering{arabic} 

\section{Introduction}

\subsection{Overview}

Bounded arithmetic extends traditional complexity theory by capturing not only the computational resources (e.g., running time or circuit size) required by algorithms, but also the complexity of proving their correctness. By integrating computational and proof complexity within a unified framework, it opens new angles on foundational questions in theoretical computer science. The area has a long history (see \citep{hajekpudlak-book, Krajicek-book, cook_nguyen_2010, krajicek_2019} and references therein) and has seen renewed momentum through new formalizations \citep{DBLP:journals/apal/BussKKK20, gaysin2024proofcomplexityuniversalalgebra, DBLP:conf/focs/Khaniki24, AARK25}; unprovability results \citep{DBLP:conf/stoc/PichS21, LO23, DBLP:conf/stoc/AtseriasBM23, DBLP:journals/lmcs/ChenLO25, DBLP:conf/stoc/0001RT25, DBLP:journals/corr/abs-2504-03320}; connections  to $\mathsf{TFNP}$ \citep{LLR24}, complex analysis \citep{jerabek:elementary-analytic-vtc0}, reverse mathematics \citep{CLO24, AT25}, complexity lower bounds \citep{DBLP:conf/stoc/GrosserC25, DBLP:conf/innovations/CarmosinoKKOT25}, and propositional proof complexity \citep{krajicek-25}; and applications in cryptography \citep{DBLP:conf/focs/0002J22, DBLP:conf/stoc/JinKLV24, JKLM25, JJMP25}, among other developments. We refer to \citep{buss-survey, Oliveira25} for background and for connections to algorithms and complexity theory.

Two central and extensively studied theories are Cook’s $\PV_1$ \citep{Coo75, KrajicekPT91} (see also \citep{krajicek_2019,DBLP:journals/eccc/Li25}) and Je\v r\'abek’s $\APC_1$ \citep{Jerabek04, Jerabek-phd, Jerabek07}. The theory $\PV_1$ formalizes polynomial-time reasoning and captures many classical results in algorithms and complexity. Since it is unclear whether randomized algorithms can, in general, be derandomized, $\PV_1$ is not well-suited for reasoning about probabilities or analyzing randomized algorithms. The theory $\APC_1$ extends $\PV_1$ by adding the dual weak pigeonhole principle $\dWPHP(\PV)$, yielding a convenient framework for reasoning about probabilities and randomized constructions. In particular, $\APC_1$ (and its mild extensions) is sufficient to formalize several nontrivial results, including the correctness of randomized algorithms for graph problems \citep{DBLP:journals/corr/abs-1103-5215}, polynomial identity testing \citep{AT25}, and circuit lower bounds \citep{DBLP:journals/apal/MullerP20}. However, the axioms of $\APC_1$ may be stronger than necessary: many results of interest could plausibly be provable in a weaker theory closer to $\PV_1$.

There are concrete reasons to expect $\APC_1$ to exceed the minimal strength required for probabilistic polynomial-time reasoning. On the one hand, $\APC_1$ is tied to $\dWPHP(\PV)$ and to the \emph{existence} of functions of exponential circuit complexity; from a computational perspective, the \emph{explicit construction} of such functions (i.e., circuit lower bounds) is a widely used derandomization assumption that may be stronger than the derandomization of $\pr\BPP$ (see, e.g., \citep{DBLP:conf/coco/Fortnow01, Goldreich11g, DBLP:journals/sigact/0001T23}). On the other hand, even if $\pr\BPP=\pr\P$ with a ``feasible'' proof, $\APC_1$ need not collapse to $\PV_1$; indeed, under plausible cryptographic assumptions, $\APC_1$ is strictly stronger than $\PV_1$ \cite{ILW23}. 

The search for a weaker theory that still supports the broad class of probabilistic arguments used across theoretical computer science is motivated by several considerations:

\begin{itemize}
\item \emph{Unprovability of complexity-theoretic conjectures.} A central objective in this area is to identify frameworks that both formalize existing tools in complexity theory and remain amenable to unprovability results. $\APC_1$ is likely strictly stronger than $\PV_1$ by \cite{ILW23}, and its witnessing functions cannot in general be made deterministic even if $\pr\BPP=\pr\P$, which complicate unprovability arguments and pose significant challenges (see, e.g., \cite{LO23, CKKO21}). In particular, the introduction of the strong principle $\dWPHP(\PV)$ is the main obstacle to extending unprovability of complexity lower bounds in $\PV_1$ \cite{DBLP:conf/stoc/PichS21} to $\APC_1$ \cite{LO23}. 
\item \emph{Bounded reverse mathematics with probabilistic reasoning.} Following recent developments such as \cite{CLO24, AT25} (see also \citep{cook_nguyen_2010} for related background), one can hope to pursue a systematic reverse mathematics of algorithms and complexity theory that classifies ``probabilistic proofs'' by the axioms they use. Similarly, it suggests the possibility of classifying randomized algorithms by the complexity of their \emph{correctness proofs}, supplementing the standard classification via space (see, e.g., \cite{Nisan92}) or circuit complexity (see, e.g., \cite{AW89}). This perspective is potentially insightful for derandomization, namely, derandomization based on the proof complexity of correctness proofs. Because $\dWPHP(\PV)$ is itself strong, $\APC_1$ is an overly powerful base theory for fine-grained correspondences weaker than $\dWPHP(\PV)$.
\item \emph{Correctness proofs in cryptography.}  Jain and Jin \citep{DBLP:conf/focs/0002J22} and subsequent papers \citep{DBLP:conf/stoc/JinKLV24, JKLM25, JJMP25,MDS25}  explore $\PV_1$ and its connection to propositional proofs to help construct $i\mathcal{O}$ and other cryptographic primitives, highlighting that the logical complexity of proving certain statements  can play an important role in  cryptographic systems and their efficiency. In particular, \cite{JJMP25,JKLM25} rely heavily on cryptographic primitives with $\PV_1$ proofs of \emph{correctness}. However, existing work typically considers ``perfect correctness'' because $\PV_1$ cannot natively talk about approximate counting and randomness, whereas $\APC_1$ seems both too strong and inconvenient for this purpose.
\item \emph{Feasible provability of probabilistic statements.} It is natural to formulate precise, feasible notions of the provability of $\pr\BPP=\pr\P$ and related questions. Yet even formulating $\pr\BPP=\pr\P$ feasibly is nontrivial, as it seems to require defining probabilistic computation within the theory in the first place. Given that the (un)provability of $\P=\NP$ in $\PV_1$ has long been central to bounded arithmetic (see, e.g., \citep{DBLP:journals/jsyml/CookK07, Oliveira25}), this direction holds significant potential for advancing the study of the interplay between randomized computations and mathematical proofs.
\end{itemize}

These considerations point to a common objective: designing a \emph{minimal theory} for reasoning about probabilities and randomized constructions in feasible mathematics. 

\paragraph{Summary of contributions.} We propose a theory corresponding to ``probabilistic polynomial-time reasoning'' in a strong sense.  Our main conceptual and technical contributions are:
\begin{enumerate}
\item \textbf{Theory $\APX_1$ and its relative strength.} We introduce $\APX_1$, establish its basic properties, and develop core probabilistic tools. The theory extends $\PV_1$ and is contained in $\APC_1$, in the sense that all of its consequences in the language of $\APC_1$ are also provable in $\APC_1$. Moreover, under plausible assumptions, $\APX_1$ is strictly weaker than $\APC_1$. 
\item \textbf{Advanced formalizations.} We formalize in $\APX_1$ several nontrivial results from algorithms and complexity, including the Blum-Luby-Rubinfeld linearity testing, Schwartz-Zippel lemma\footnote{It is worth noting that the standard proof of Schwartz-Zippel lemma (see, e.g., \cite[Lemma A.36]{AB09}) is not known to be formalizable even in $\APC_1$. In this work, we formalize an alternative proof due to Atserias and Tzameret \cite{AT25} in $\APX_1$.}, and an average-case $\AC^0$ lower bound for Parity. Additionally, as a byproduct of our refined analysis of $\AC^0$ circuits in bounded arithmetic, we describe a matching worst-case lower bound in $\PV_1$. The latter formalization addresses a problem considered by Razborov \citep{Razborov-switching-l}, Krajíček \citep[Section 15.2]{Krajicek-book}, and Müller-Pich \citep{DBLP:journals/apal/MullerP20}, which was only known for stronger theories.
\item \textbf{Tailored witnessing theorem.} We show that the provably total $\NP$ relations of $\APX_1$ deterministically reduce to a natural $\TFZPP$ problem\footnote{A $\TFNP$ problem $R(x,y)$ is said to be in $\TFZPP$ if, for every input $x$, at least an inverse-polynomial fraction of strings $y$ are valid solutions, i.e., $R(x,y)=1$. It is clear that $\TFZPP$ problems admit simple zero-error randomized algorithms running in polynomial time.} we introduce, $\mathsf{Refuter}(\mathsf{Yao})$. Moreover, if $\pr\BPP=\pr\P$, then $\APX_1$ admits deterministic polynomial-time witnessing.
\item[4.] \textbf{Feasible derandomization.} Using the new framework, we put forward a natural formalization of the fundamental question: Is $\pr\P=\pr\BPP$ feasibly provable? In other words, is there a deterministic feasible proof of general derandomization? 
\item[5.] \textbf{Reverse mathematics of randomness.} Finally, we show that $\APX_1$ serves as a suitable base theory for developing the reverse mathematics of average-case and randomized lower bounds, illustrated here through the study of randomized communication protocols and their communication complexity.
\end{enumerate}

Before presenting our results in more detail, we provide additional context and background.  

\paragraph{Dual use of $\dWPHP(\PV)$ in $\APC_1$.} Why does $\APC_1$, until now the weakest known theory capable of formalizing probabilities and randomized algorithms, appear stronger than necessary? By looking into the construction of $\APC_1$ \cite{Jerabek04,Jerabek-phd,Jerabek07}, we observe  two different reasons for introducing  $\dWPHP(\PV)$. 
\begin{itemize}
    \item First, it is used to implement approximate counting. Je\v r\'abek \cite{Jerabek04} shows that $\dWPHP(\PV)$ proves the existence of an exponentially hard Boolean function, and by formalizing a form of correctness of the Nisan-Wigderson PRG \cite{NW} in the theory $\PV_1$, we can approximately compute the acceptance probability of circuits by instantiating the PRG  with the hard Boolean function.
    \item Second, it also serves as a counting principle to derive tools in combinatorics and probability theory, including the inclusion-exclusion principle, union bound, and Chernoff bound \cite[Section 2]{Jerabek07}.
\end{itemize}
The first role appears essential, as approximate counting is the foundation for the formalization of probabilistic polynomial-time algorithms. However, $\dWPHP(\PV)$, as a counting principle, appears to be overly powerful and not necessary for many applications. 

\medskip

\begin{remarkn}{Computational Aspects of $\dWPHP$}
To add more context, $\dWPHP(\PV)$ asserts that for any function $f$ implemented by circuits whose co-domain is much larger than its domain, say $f:\zo^n\to\zo^{n+1}$, there exists a string in the co-domain that does not have a pre-image. The computational aspect of the principle, namely the search problem of finding such a string given a function $f$, has recently drawn attention in computational complexity (see \cite{Korten-survey} for a survey). This problem, which is now called the \emph{Range Avoidance Problem} \cite{DBLP:conf/innovations/KleinbergKMP21,DBLP:conf/focs/Korten21,DBLP:journals/eccc/RenSW22}, is known to be hard even for nondeterministic search algorithms under plausible assumptions \cite{ILW23,CL24}. 
\end{remarkn}

\paragraph{Axiomatizing approximate counting in $\APX_1$.} Since $\dWPHP(\PV)$ fulfills two essential functions in $\APC_1$, devising a weaker theory is nontrivial — one must find a way to relax the counting principle without sacrificing the capacity to formalize approximate counting. 

Our approach, which in hindsight appears quite natural, is to put \emph{approximate counting} at the foundation, elevating it to a central primitive rather than deriving it from stronger principles such as $\dWPHP(\PV)$ \cite{Jerabek04,Jerabek-phd,Jerabek07}. Starting from $\PV_1$ as the base theory, we directly introduce an oracle that is intended to perform approximate counting, and govern it with appropriate axioms. Through this approach, we decouple \emph{the concept of approximate counting} from \emph{counting principles}. 

The main technical challenge is to select an appropriate set of axioms. These axioms should be sufficiently strong to carry out our advanced formalizations, reverse mathematics results, and potentially more results in theoretical computer science. At the same time, the set of axioms should be minimal. The contradictory objectives make it hard to select appropriate axioms; indeed, it is not even a priori clear whether a suitable \emph{finite} set of axioms exists without resorting to variants of $\dWPHP(\PV)$. 

Perhaps surprisingly, we distill four simple and intuitive axioms that suffice to implement all our results, among which the only nontrivial axiom captures the ``local'' behavior of the approximate counting oracles. Arguably, this makes $\APX_1$ a plausible candidate for the minimal theory of probabilistic polynomial-time reasoning.

\medskip

\begin{remarkn}{Minimal Assumption for Derandomization} The conjectured inclusion $\pr \BPP \subseteq \pr \P$ is a central question in derandomization. 
The celebrated results of Impagliazzo, Nisan, and Wigderson \cite{NW,IW97} give a positive answer under $\E\nsubseteq\text{i.o.-}\SIZE[2^{\Omega(n)}]$, a plausible worst-case circuit lower bound. Conversely, derandomization results are also known to imply weaker circuit lower bounds such as $\NTIME(n^{\omega(1)})\nsubseteq\P_{/\poly}$ (see, e.g., \cite{IKW02,Williams14,Tell19}). Yet it has been a longstanding open problem  whether the strong circuit lower bounds used in \cite{NW,IW97} are \emph{necessary} for proving $\pr\BPP=\pr\P$. Indeed, there has been significant progress indicating that derandomization may \emph{not} require strong circuit lower bounds, see, e.g., \cite{DBLP:conf/coco/Fortnow01,Goldreich11g,CT21}. Moreover, several \emph{characterizations} of $\pr\BPP=\pr\P$ have been recently discovered \cite{LP22,Korten22,DBLP:journals/sigact/0001T23,CTW23,LPT24}, motivated by the question of understanding the \emph{minimal} assumption required for derandomization.

In a sense, our results attempt to address a similar question in the context of proof complexity. We aim to propose a \emph{minimal theory} that is strong enough to carry out meaningful feasible proofs on probabilistic polynomial-time algorithms. In particular, we provide evidence that $\dWPHP(\PV)$ and the existence of hard Boolean functions, which are at the foundation of Je\v r\'abek's theory $\APC_1$ \citep{Jerabek04, Jerabek-phd, Jerabek07}, might not be necessary in a minimal theory for probabilistic polynomial-time reasoning.  
\end{remarkn}

\subsection{Main Contributions}

We now describe our contributions and their implications in detail.

\subsubsection{Theory \texorpdfstring{$\APX_1$}{APX1}}\label{sec:intro_theory}

As alluded to above, rather than deriving probabilities from stronger combinatorial principles (as in $\APC_1$ via $\dWPHP(\PV)$), we axiomatize approximate counting directly. Our aim is a weaker theory in which the probability of any \emph{feasibly definable event}\footnote{In other words, an event $E \subseteq \{0,1\}^{m}$ for which there is a polynomial-size Boolean circuit $C$ such that $C(x) = 1$ if and only if $x \in E$.} can be named and  reasoned about with additive slack, while keeping proof-theoretic strength low. 

To achieve this, we introduce a first-order bounded arithmetic theory, $\APX_1$, whose central primitive is an \emph{approximate counting} function $\P$. Intuitively, given a Boolean circuit $C$ on $n$ input bits and a \emph{precision parameter} $\Delta$, the term $\P(C,\Delta)$ returns a rational number in $[0,1]$ that approximates the acceptance probability of $C$ within additive error $1/|\Delta|$, where $|\Delta|$ denotes the bitlength of the input parameter $\Delta$. For convenience, we often write $\P_\delta(C)$ instead of $\P(C,\Delta)$, where $\delta=1/|\Delta|$.

The equational core of $\APX_1$, called $\APX$, is obtained by extending Cook’s equational theory $\PV$ with the oracle symbol $\P$ (the new language is denoted $\PV(\P)$) and its governing axioms. $\APX_1$ is then the usual first–order closure of $\APX$, i.e., universal closures of $\APX$-equations together with the standard $\PV$-style induction on notation. 

\medskip

\begin{remarkn}{$\PV$ and $\PV_1$}
$\PV$ \citep{Coo75} is an equational theory whose intended model is $\mathbb{N}$ with the usual interpretation of basic symbols such as $0$, $+$, and $\times$. Its language contains a function symbol for every polynomial-time algorithm $f\colon \mathbb{N}^k \to \mathbb{N}$ (for any fixed $k$); these symbols and their defining axioms are given via Cobham’s characterization of the polynomial-time functions. The theory includes an induction scheme formalizing binary search and, in particular, proves induction for quantifier-free formulas (i.e., polynomial-time predicates). A standard first-order strengthening is $\PV_1$ \citep{KrajicekPT91}. While the formal definition of $\PV_1$ is fairly technical, the theory is robust: distinct presentations yield the same theorems. For example, $\PV_1$ has an equivalent axiomatization that avoids Cobham’s theorem \citep{jerabek:sharply-bounded}; alternatively, it can be presented as the set of all $\forall\Sigma^b_1$-sentences provable in Buss’s theory $\S^1_2$ \citep{Buss}. We refer to \citep{Oliveira25} for a brief overview and to \citep{DBLP:journals/eccc/Li25} for a detailed introduction.
\end{remarkn}

\vspace{0.2cm}

A key aspect of the definition of $\APX_1$ is to employ ``local'' constraints governing the behavior of $\P_\delta$, which together enforce the ``global'' desired behavior, i.e., that $\P_\delta$ approximates the acceptance probability of any input circuit up to an additive error term $\delta$. The entire probabilistic machinery of $\APX_1$ (random variables, expectation, tail bounds, etc.) is built on top of the axioms below. 

\paragraph{$\APX_1$ axioms governing $\P$.}
All axioms are universal $\PV(\P)$-equations; below $\beta^{-1}\in\Log$ is a freely available ``slack'' parameter used to absorb routine finite-precision effects.\footnote{The expression $\beta^{-1}\in\Log$ is standard notation in bounded arithmetic used to denote that $\beta = 1/|y|$ for some variable $y$, where $|y|$ is the bitlength of $y$.} We sketch the statements at an informal level; the formal version appears in \Cref{sec:sec_theory}.

\begin{itemize}
  \item \textbf{Basic Axiom.} For every Boolean circuit $C$ and $\Delta$, the value $\P(C,\Delta)$ is a rational in $[0,1]$ (encoded in $\PV$) and all $\PV(\P)$-provable equations hold. Together with an output-length bound for $\P(C,\Delta)$, this forces feasibility of approximate counting at any requested precision.
  \item \textbf{Boundary Axiom.} If $C$ is syntactically constant (reads no inputs), then $\P_\delta(C) \in\{0,1\}$ agrees with the output bit of $C$. Thus $\P$ is \emph{exact} on trivial cases.

  \item \textbf{Precision Consistency.} For any two precisions $\delta_1,\delta_2$ and circuit $C$,
  \[
  \big|\P_{\delta_1}(C)-\P_{\delta_2}(C)\big|\ \le\ \delta_1+\delta_2+\beta.
  \]
  Hence asking for finer precision can only move the reported probability by the sum of the specified error parameters (up to $\beta$).

  \item \textbf{Local Consistency.} If $C$ has at least one input bit, and $\Fix_b(C)$ denotes the circuit obtained by fixing the rightmost input bit to $b\in\{0,1\}$, then
  \[
  \Big|\,\P_\delta(C)\ -\ \tfrac12\big(\P_\delta(\Fix_0(C))+\P_\delta(\Fix_1(C))\big)\,\Big|\ \le\ 2\delta+\beta.
  \]
  Thus the reported acceptance probability of $C$ is (up to additive slack of $\beta$) the average of the reported probabilities after fixing a fresh random bit. This aims to capture the intended semantics of counting over the uniform hypercube.\footnote{In other words, for every (feasibly definable) set $X \subseteq \{0,1\}^n$, as $X$ is the disjoint union of $X_0$ and $X_1$, where $X_b = \{ x \in X \mid x_n = b\}$, we expect $|X|\approx |X_0|+|X_1|$.}
\end{itemize}

In practice, one can think of $\APX_1$ as the theory extending $\PV_1$ with the symbol $\P$ and its governing axioms, together with induction over quantifier-free formulas in the language $\PV(\P)$. Everything else -- such as random variables, expectation, union bound, etc. -- will be introduced and derived from the language and axioms inside $\APX_1$.

\paragraph{Soundness of approximate counting in $\APX_1$.}
We say that a $\PV$-standard model (i.e., $\mathbb{N}$) where the function symbol $\P$ is interpreted by any correct approximate counting function (returning a rational within $\pm 1/|\Delta|$ and exact on syntactically constant circuits) is a \emph{standard model} of $\APX_1$. A simple but central  result shows that these are \emph{exactly} the models satisfying the axioms (``admissible models''), yielding semantic soundness for the intended interpretation and a correct axiomatization of approximate counting when the underlying model is $\mathbb{N}$ (see \Cref{sec:models}).

\paragraph{Minimality of $\APX_1$.} We believe that $\APX_1$ is a good candidate of the \emph{minimal theory} for probabilistic polynomial-time reasoning. This is not a formal assertion from a mathematical perspective. However, the axioms above appear close to the weakest workable base theory that can consistently \emph{define} and \emph{operate on} approximate probabilities of feasibly described events. Specifically: 
\begin{itemize}
\item Any theory that reasons about probabilistic polynomial time algorithms should be able to \emph{define} the acceptance probability of the algorithms. This requires the capability of approximate counting with an additive error, i.e., the symbol $\P$. 
\item Because the \emph{Basic Axiom} and the \emph{Boundary Axiom} are rather syntactic promises of the oracle $\P$, we expect them to be available. Arguably, the \emph{Precision Consistency Axiom}, which asserts the consistency of $\P$ on different precision parameters, should also be available. Note that these three axioms are not sufficient, as one can easily specify a trivial and incorrect polynomial-time function such that these axioms are provable in $\PV_1$.
\item Therefore, we use the \emph{Local Consistency Axiom} to capture the correctness of $\P$ --- it shows that the approximate counting oracle withstands a simple statistical test with three queries made throughout the proof. It seems unlikely that one can make nontrivial use of an oracle for the purpose of approximate counting that may fail this test; subsequently, the axiom also seems necessary. 
\end{itemize}

Another evidence of the minimality of $\APX_1$ is that, computationally, the function symbol $\P$ aligns with the \emph{Circuit Acceptance Probability Problem} ($\mathsf{CAPP}$), which is complete for $\pr \BPP$ (see, e.g., \citep{DBLP:journals/fttcs/Vadhan12}). In contrast, the \emph{Range Avoidance Problem}, which corresponds to $\dWPHP(\PV)$ and is relevant for $\APC_1$, is likely hard even against nondeterministic algorithms \cite{ILW23,CL24}.  

\subsubsection{Probabilistic Reasoning in \texorpdfstring{$\APX_1$}{APX1}}\label{sec:intro_prob_reasoning}

We develop a self-contained ``probabilistic calculus'' inside $\APX_1$ using the approximate counting function $\P_\delta$. As a preliminary step, we show that $\P_\delta$ behaves in the expected way on feasibly described events. Concretely, $\APX_1$ establishes the following properties (each up to an arbitrarily small additive slack $\beta^{-1} \in \Log$):

\begin{itemize}
  \item \textbf{Semantic invariance.} $\P_\delta$ respects semantic equivalence, i.e., if $\APX_1$ proves that circuits $C$ and $D$ compute the same function, then $\bigl|\P_\delta(C)-\P_\delta(D)\bigr| \le 2 \cdot \delta + \beta$ (\Cref{lmm: global consistency}).

  \item \textbf{Permutation invariance.}  Permuting input bits does not noticeably change the value of $\P_\delta$. In other words, for any circuit $C$ and permutation $\pi$ of input bits, 
  $
  \bigl|\P_\delta(C\circ \pi)-\P_\delta(C)\bigr| \le 2 \cdot \delta+\beta
  $
  (\Cref{lmm: permutation approx}).

  \item \textbf{Existence via the probabilistic method.} Suppose that strings accepted by a circuit $C$ are considered \emph{good}. Then if good strings are abundant, i.e., more than $(\delta+\beta)$-fraction with respect to precision parameter $\delta$, there must exist a good string (\Cref{lem:prob_method}). A bit more formally, 
  $$\Apx_1 \vdash \forall n,\delta^{-1},\beta^{-1}\in\Log~\forall C ~(\beta > 0\land \P_\delta(C) > \delta + \beta \to \exists x\in\zo^n~C(x)=1).$$
  
  \item \textbf{Consistency on concrete circuits.} $\P_\delta$ agrees with simple tests. For instance, for a naturally defined threshold circuit $C_{<t}(x)$ on $n$-bit inputs that accepts if and only if $x$ (viewed as an integer) is less than $t$, $\P_\delta(C_{<t}) \approx t/2^n$ (see \Cref{sec:concrete_circuits}).
\end{itemize}

These meta-properties ensure that the definitions and inequalities developed in $\APX_1$ inherit the intended probabilistic behavior with only small, explicitly controlled additive losses.

\medskip
With these guarantees in place, we now introduce  \emph{feasible random variables}. A random variable $X$ is specified by an explicit support \(V\subseteq\mathbb{Q}\), a seed length \(n\), and a multi-output sampler circuit \(C:\{0,1\}^n\to V\). Its \emph{approximate expectation} is defined by querying $\P_\delta$ on the indicator Boolean circuits \(\{C_v\}_{v\in V}\), where $C_v(z)$ accepts $z$ if and only if $C(z) = v$. In other words:
$$
\bbE_{\delta} [X]\;\eqdef \; \sum_{v\in V} v\cdot \P_\delta(C_v).
$$
We observe that there exists a $\PV(\P)$ function $\E(V,n,C,\Delta)$ that computes $\Ex_{|\Delta|^{-1}}[X]$ for the random variable $X$ defined by $(V,n,C)$. Specifically, $\E$ enumerates all $v \in V$, constructs the corresponding circuit $C_v$, queries the oracle to obtain $p_v \gets \P(C_v,\Delta)$, and outputs the sum $\sum_{v \in V} v \cdot p_v$.

\medskip

We introduce a central technical tool that provides a general version of the \emph{averaging argument for expectation} (\Cref{sec: avg for expectation}): Given random variables $X_1,\ldots,X_m$ on the same seed and coefficients $\lambda_1,\ldots,\lambda_m$, $\APX_1$ can \emph{search} for a suffix $z$ of the seed such that a lower bound on the value  $\sum_i \lambda_i \cdot \bbE_\delta[X_i]$ is approximately preserved after fixing that suffix. This is used repeatedly to move between \emph{global} and \emph{pointwise} statements and underlies the proof of several results. We explain this technique in more detail in \Cref{sec:intro_techniques}.

Using the tool described above, together with some additional ideas, $\APX_1$ derives approximate formulations of several standard probability inequalities. In particular, it establishes the \emph{linearity of expectation} for linear combinations of feasible random variables, the \emph{union bound} for polynomially many events, and \emph{Markov's inequality} for non-negative variables with the usual $1/k$ decay.\footnote{We note that some results suffer an approximation loss that depends on $\|V\| \eqdef \sum_{v \in V} |v|$ and on the magnitude of the involved coefficients, where here $|\cdot|$ denotes absolute value. In applications where these quantities are polynomially bounded, this can be mitigated  by taking sufficiently small parameters $\delta$ and $\beta$ in the application of $\bbE_{\delta} [X]$.}

\medskip

\begin{remarkn}{Example: Union Bound in $\APX_1$; see \Cref{thm: union bound}} $\Apx_1$ proves the following statement. Let $n,m,\delta^{-1},\beta^{-1}\in\Log$, $C_1,\dots,C_m$ be single-output circuits, and $V=\{0,1\}$. Suppose that $\forall x\in\zo^n$ and $i\in[m]$, $C_i(x)\in V$, and let $Y,X_1,\dots, X_m$ be  random variables defined as follows.
\begin{compactitem}
\item For each $i\in[m]$, $X_i$ is defined by $(V,n,C_i)$. 
\item $Y$ is defined by $(V,n,S)$, where $S(x)\in\zo$ is a circuit such that $S(x) \le C_1(x) \lor \dots \lor C_m(x)$. 
\end{compactitem}
Then we have $\bbE_\delta[Y] \le \bbE_\delta[X_1] + \dots + \bbE_\delta[X_m] + (2\delta+\beta)\cdot m$.
\end{remarkn}

 Additionally, $\APX_1$ defines an \emph{approximate variance}
$
\mathsf{Var}_\delta[X]\eqdef \bbE_\delta\!\left[(X-\mu)^2\right],
$
with $\mu\eqdef \bbE_\delta[X]$,
and shows an identity of the form  \(\mathsf{Var}_\delta[X]\approx \bbE_\delta[X^2]-\mu^2\), which leads to a natural formulation of \emph{Chebyshev's inequality}. $\APX_1$ also formalizes \emph{(almost) pairwise independence} via approximate covariance, and proves that the variance of a sum is (approximately) the sum of variances for (almost) pairwise independent variables.  

\medskip
Finally, we address \emph{independence and concentration}. We work in $\APX_1$ with \emph{explicit} independence: variables are sampled by disjoint parts of the seed. Under this notion, the theory proves a \emph{multiplication principle}
\[
\bbE_\delta[XY]\;\approx\;\bbE_\delta[X]\cdot \bbE_\delta[Y],
\]
and, for Bernoulli variables, a convenient product bound
\[
\Bigl|\bbE_\delta\!\Bigl[\textstyle\prod_{i=1}^m X_i\Bigr]-\prod_{i=1}^m \bbE_\delta[X_i]\Bigr|\;\le\; 8\delta \cdot m.
\]
These yield \emph{one-sided error reduction}. Moreover, $\APX_1$ proves a \emph{Chernoff bound} for sums of \(m=O(\log n)\) i.i.d. Bernoulli random variables; the bound has the standard exponential tail with controlled additive slack.

\medskip

\begin{remarkn}{Example: One-Sided Error Reduction in $\APX_1$; see \Cref{thm:one-sided}} 
    For a Boolean circuit $C:\zo^n\to\zo$, let $C^{\lor k}:\zo^{nk}\to\zo$ be the circuit defined as $C^{\lor k}(x_1,\dots,x_k)\eqdef \bigvee_{i\in[k]} C(x_i)$. The following statement is provable in $\Apx_1$. For any $n,k,\delta^{-1},\beta^{-1}\in\Log$ and $C:\zo^n\to\zo$, if $\P_\delta(\lnot C) \le \eps$  then $\P_{\delta}(\lnot C^{\lor k})\le (\delta+\beta+\eps)^k+\delta + \beta$.
\end{remarkn}

We refer to \Cref{sec: basic properties} for a detailed description of how these different notions and results are implemented in $\APX_1$.

\subsubsection{Theoretical Computer Science in \texorpdfstring{$\APX_1$}{APX1}}

As explained above,  approximate counting -- as axiomatized in $\APX_1$ -- suffices to build the typical probabilistic toolkit (such as existence arguments, linearity of expectation, averaging argument, union bound, Markov, Chebyshev, limited independence, error reduction, and a version of Chernoff for logarithmically many samples). This lightweight yet robust framework can be exploited to formalize several nontrivial results. We illustrate this point through a set of detailed formalizations of influential results from different areas of theoretical computer science:

\begin{itemize}
\item Yao’s distinguisher-to-predictor transformation via the hybrid argument, a central tool in computational pseudorandomness (see \Cref{thm: Yao D2P});
\item the Schwartz-Zippel Lemma (as stated in \citep{AT25}), an algebraic result for polynomial identity testing with broad applications in randomness and complexity (see \Cref{thm: SW});
\item the classical lower bound for the parity function against bounded-depth polynomial-size circuits in circuit complexity (see \Cref{sec:AC0_LB});
\item the correctness of the Blum-Luby-Rubinfeld linearity test from sublinear time algorithms and property testing (see \Cref{sec: BLR}).
\end{itemize}

For concreteness and in order to contrast our results with previous work, we focus here on the formalization of circuit lower bounds for the $n$-bit parity function, denoted $\oplus_n$. In fact, we show that a stronger \emph{average-case} lower bound against depth-$d$ Boolean circuits ($\AC^0_d$) can be proved in $\APX_1$.

\dummylabel{avg parity}{Average-Case $\AC^0$ Lower Bound for $\oplus_n$}
\begin{restatable}[Average-Case $\AC^0$ Lower Bound for $\oplus_n$ in $\APX_1$]{theorem}{AvgParity}\label{thm:intro_ave_parity}
    For all constants $k,d\ge 1$, there exists a constant $n_0\ge 1$ such that $\APX_1$ proves the following statement. Let $n,\delta^{-1},\beta^{-1}\in\Log$, $n>n_0$, and $C:\zo^n\to\zo$ be an $\AC^0_d$ circuit of size at most $n^k$. Let $T_C:\zo^n\to\zo$ be the circuit that, given $x\in\zo^n$, outputs $1$ if and only if $C(x)=\oplus_n(x)$. Then 
    \begin{equation}
    \P_\delta(T_C)\le \frac{1}{2} + \frac{1}{n^{k}} + \delta + \beta.\label{equ: prob T small AC0}  
    \end{equation}
\end{restatable}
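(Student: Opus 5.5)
The plan is to formalize the random-restriction proof of Håstad's theorem, using Razborov's encoding proof of the switching lemma. That proof is combinatorial and injective, so it avoids $\dWPHP(\PV)$, and the only probabilistic facts needed are union bound, averaging and linearity of expectation from \Cref{sec: basic properties}.

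\textbf{Step 1: sample the input via a restriction.} Fix a sequence of restriction parameters $p_1,\dots,p_{d}$ with $p_i = n^{-\Theta(1)}$, depending on $k,d$. A uniform input $x\in\zo^n$ will be sampled from a seed $(r,y)$, where $r$ encodes a composed restriction $\rho=\rho_d\circ\cdots\circ\rho_1$ and $y$ fills the remaining stars.
\begin{itemize}
\item To keep everything uniform, I take restrictions with an exact number $m$ of stars, placed by a random permutation, and fill all fixed positions with uniform bits.
\item Then $x$ is a fixed permutation $\pi$ (depending on $r$) of uniform bits. By permutation invariance (\Cref{lmm: permutation approx}) and semantic invariance (\Cref{lmm: global consistency}), $\P_\delta(T_C)$ is within $O(\delta)+\beta$ of $\P_\delta$ of the circuit $T'(r,y)=T_C(\rho_r(y))$.
\item The cleanest alternative is to let $r$ range over a set of size a power of two; I will use the averaging argument for expectation from \Cref{sec: avg for expectation} in either case.
\end{itemize}

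\textbf{Step 2: split on good and bad restrictions.} Call $r$ \emph{good} if $C\restriction\rho_r$ is computed by a decision tree of depth $< m$. This is checked by a $\PV$ function that carries out the iterated switching construction, so that its correctness is provable in $\PV_1$. Write $T'\le B(r)\lor (G(r)\land T'(r,y))$, where $B$ is the bad event and $G$ the good event.
\begin{itemize}
\item By the union bound (\Cref{thm: union bound}), $\P(T')\lesssim \P(B)+\P(G\land T')$.
\item For good $r$, the tree leaves some star unqueried on every path. By local consistency, fixing that star's bit, the two halves of $y$ pair up with opposite parity and identical tree output. So for each fixed good $r$, $\P_\delta$ over $y$ is $\le 1/2+O(\delta)+\beta$.
\item This is lifted from pointwise to global using the averaging argument: if $\P(G\land T')$ exceeded $1/2+{}$slack, one could search for a suffix/prefix $r$ violating the pointwise bound. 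This is essentially an application of \Cref{lem:prob_method} to the averaged statement.
\end{itemize}

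\textbf{Step 3: bound the bad probability (main obstacle).} We need $\P_\delta(B)\le n^{-k}+{}$slack. The bad event is a union over the $\le d\cdot n^k$ gates and levels of "the switching lemma fails at this gate", so by the union bound it suffices to show that each such event has probability $\le (c\,p\,t)^s$ up to slack. Razborov's argument gives a $\PV$ encoding $E$ from failing restrictions into pairs (restriction with fewer stars, short advice), together with a $\PV$ decoder $D$ with $D(E(\rho))=\rho$. What remains is a \emph{counting lemma} in $\APX_1$: if an event $A$ admits $E$ and $D$ with $D(E(z))=z$ for $z\in A$ and $E(A)\subseteq S$ for an explicitly counted target set $S$ of density $\eta$ relative to the seed space, then $\P_\delta(A)\le\eta+O(\delta)+\beta$.

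For this lemma I would first try the following. Arrange the codes so that $E$ composed with padding is a partial bijection between seed spaces of different lengths, and let $F$ be the event "seed is the valid code of some $A$-element", which is decidable via $D$. Then compare $\P(A)$ and $\P(F)$ by permutation/semantic invariance after embedding both into a common seed length. Finally, bound $\P(F)$ by the threshold-circuit computation $\P_\delta(C_{<t})\approx t/2^n$ from \Cref{sec:concrete_circuits}, taking $S$ to be an initial segment. If an arbitrary injection cannot be handled this way, I would fall back on induction over seed bits via local consistency, exploiting the level structure of the restriction space.

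\textbf{Step 4: choose constants.} Take $n_0$ so that $d\,n^k (c\,p\,t)^s\le n^{-k}$ and $m\ge 1$. Combining Steps 2 and 3 then gives \eqref{equ: prob T small AC0}, after rescaling $\delta,\beta$ by constants absorbed into the free slack parameters.
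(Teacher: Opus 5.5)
\paragraph{The gap is Step~3.} The ``counting lemma'' you need is a retraction pigeonhole principle in disguise, and nobody knows how to prove it in $\APX_1$. To see this, take $C:\zo^n\to\zo^{n-1}$ and $D:\zo^{n-1}\to\zo^n$ with $D(C(x))=x$ for all $x$. Apply your lemma with $A=\zo^n$, encoder $E(x)=C(x)\circ 0$, target $S=\{y\in\zo^n: y_n=0\}$ of density $1/2$, and decoder $y\mapsto D(y_{<n})$. It yields $\P_\delta(\True_n)\le 1/2+O(\delta)+\beta$, which contradicts the Boundary Axiom. So your lemma proves $\rWPHP[n-1](\PV)$ in $\APX_1$.

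As the paper stresses, this would give $\APX_1$ all $\forall\Sigma^b_1$-consequences of $\APC_1$. Via \Cref{thm: witness TFNP}, it would also give a deterministic reduction from $\LossyCode$ to $\RefYao$. Its approximate-counting variants $\rcWPHP$ are exactly the class studied in \Cref{sec:rev_math}, whose provability in $\APX_1$ is unknown.

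Your proposed implementations do not escape this.
\begin{itemize}
\item \Cref{lmm: permutation approx} only covers permutations of coordinates.
\item Your partial-bijection plan needs $\P_\delta(A)\approx\P_\delta(F)$ for an arbitrary $\PV$ bijection $E:A\to F$ between definable sets. In the instance above, with $F=E(\zo^n)\subseteq S$, this gives the same contradiction.
\item The local-consistency fallback has nothing to grab onto. That axiom only relates a circuit to its two bit-fixings, and an arbitrary encoding ignores that structure.
\end{itemize}
So the premise ``injective, hence avoids $\dWPHP$'' is backwards. Encoding arguments are precisely what needs a pigeonhole principle. That is why the earlier formalizations of Kraj\'{i}\v{c}ek (via Razborov) and M\"uller--Pich live in $\APC_1$, and why the paper explicitly sets Razborov's proof aside.

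\paragraph{The paper's route.} The paper removes encodings entirely by following Furst--Saxe--Sipser as derandomized by Agrawal et al.
\begin{itemize}
\item It first chooses a set $T$ of $n-\sqrt n$ coordinates, deterministically and provably in $\PV_1$, using potential functions that implement conditional expectations (\Cref{lem:subset_selection}). Every bottom $c$-NF is then either narrow once $T$ is fixed, or has at least $k\ln n$ pairwise disjoint sub-clauses supported inside $T$.
\item Only a uniform assignment to $T$ is random. A wide NF survives only if none of its disjoint sub-clauses is trivialized. These are explicitly independent events, each of probability at most $1-2^{-c}$.
\item The multiplication principle (\Cref{thm: multiplication}) and the union bound then bound the bad probability (\Cref{lem:random_restriction_lemma}).
\item The averaging argument fixes an assignment that preserves the parity advantage. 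Two such rounds remove a layer.
\item The theorem follows by meta-induction on $d$ with $n\mapsto n^{1/4}$ and $k\mapsto 6k$, starting from a direct bit-fixing argument at $d=1$.
\end{itemize}

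\paragraph{A smaller point in Step~2.} The unqueried star depends on the root-to-leaf path, so you cannot apply local consistency to a single fixed bit. That part needs a path-wise argument, though this is repairable for small-depth trees.
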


The main technical challenge is to avoid ``encoding-based counting arguments'' (pigeonhole-principle variants) unavailable in $\APX_1$, such as those used in Razborov’s proof of the switching lemma \cite{Razborov-switching-l}. Instead, our proof builds on a technique of Furst, Saxe, and Sipser \cite{DBLP:journals/mst/FurstSS84}. The approach was refined by \cite{AAIPR01} (see also \cite{DBLP:conf/coco/Agrawal01}), who gave a deterministic polynomial-time algorithm that outputs an appropriate restriction supplied by the switching lemma. One of our contributions is to show that the correctness of the algorithm in \cite{AAIPR01} can be established within $\APX_1$. Combined with the probabilistic tools above and other ideas, this yields the \emph{average-case} lower bound in $\APX_1$.

As a consequence of our refined proof-theoretic framework, and with some additional effort, we can extract from the above formalization a \emph{worst-case} lower bound within the weaker theory $\PV_1$.

\dummylabel{worst parity}{$\oplus_n\notin\AC^0$} 
\begin{restatable}[Worst-Case $\AC^0$ Lower Bound for $\oplus_n$ in $\PV_1$]{theorem}{WorstParity}\label{thm:intro_worst_parity}
For all constants $k,d\ge 1$, there exists a constant $n_0\ge 1$ such that $\PV_1$ proves the following statement. For every $n\in\Log$, $n>n_0$, and $\AC^0_d$ circuit $C:\zo^n\to\zo$ of size at most $n^k$, there exists a string $x\in\zo^n$ such that $C(x)\ne \oplus_n(x)$. 
\end{restatable}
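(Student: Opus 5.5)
We will prove \Cref{thm:intro_worst_parity} by running the Furst--Saxe--Sipser restriction argument deterministically, using the derandomized restriction-finding algorithm of \cite{AAIPR01} as a $\PV$ function. The formal statement we aim for is the $\forall\Sigma^b_1$ sentence ``for every $C$ there is $x$ with $C(x)\ne\oplus_n(x)$''. Provability in $\PV_1$ is equivalent to exhibiting a polynomial-time function $W(C)$ together with a $\PV_1$ proof that $C(W(C))\neq\oplus_n(W(C))$. So we will construct $W$ explicitly and verify its correctness by quantifier-free (polynomial) induction. No approximate counting will be used.

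\textbf{Step 1: the algorithm.} Define $W$ as follows.
\begin{compactenum}
\item Run the deterministic algorithm of \cite{AAIPR01} layer by layer, $d$ times. Each round takes the current circuit and a partial restriction $\rho$ leaving at least $n^{\Omega(1/d)}$ variables free. It extends $\rho$ so that every bottom-level gate (a small-width DNF/CNF, after the usual width-reduction first round) collapses to a decision tree of depth at most a constant $s$. Then it merges two layers. The extension uses the method of conditional expectations over a limited-independence family, which is a concrete polynomial-time search.
\item After $d$ rounds, the restriction $\rho$ leaves $m\ge n^{\Omega(1/d)}>s$ free variables, and $C\restriction_\rho$ is computed by an explicitly produced decision tree $T$ of depth at most $s<m$.
\item Pick the leftmost path of $T$, with label $b$. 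It reads at most $s$ variables. Set those variables as the path dictates and set all remaining free variables so that the parity of the full assignment is $1-b$. This is possible because at least one free variable is unread.
\item Output the resulting $x$.
\end{compactenum}

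\textbf{Step 2: what $\PV_1$ must verify.} We only need \emph{soundness} of each round, not its probabilistic success analysis. Each round outputs explicit certificates: a restriction together with, for every gate, a decision tree of depth at most $s$. Checking that the tree computes the restricted gate might seem to require a universal statement. We avoid this by proving by induction on the tree, which is a $\PV$ induction, that for every input consistent with $\rho$, the gate's value equals the tree's value. The trees are built syntactically by the canonical decision-tree construction from the DNF, so the correctness statement is a universally quantified quantifier-free formula provable by induction on the construction.

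\textbf{Step 3: termination is the main obstacle.} We must also show in $\PV_1$ that the derandomized search always succeeds, i.e. that it finds a restriction with few enough stars fixed and small tree depth. The standard proof bounds a conditional expectation (an estimator) and shows it stays below $1$. This is exactly where the paper's $\APX_1$ proof of \Cref{thm:intro_ave_parity} uses probabilistic reasoning.

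Our plan is to replace that probabilistic argument with a pessimistic estimator that is an explicit rational-valued $\PV$ function, computed exactly by summation over a polynomial-size sample space (the $k$-wise independent family). The conditional-expectation step then reduces to the exact arithmetic identity ``average of the two children equals the parent''. Over a polynomial-size sample space this identity is provable in $\PV_1$ by induction over the sum. Exact counting over polynomially large sets is available in $\PV_1$; we never need to count over $2^n$.

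The remaining difficulty is proving the initial estimator bound, namely a switching-lemma-style tail bound under limited independence. We would carry it out through the union-bound and Janson-type counting of \cite{AAIPR01}, expressed as explicit finite sums of polynomially many terms. The key point to check is that every combinatorial inequality involved concerns sums over polynomially many objects, and is therefore a quantifier-free $\PV$ statement provable by induction. We expect this check to be the hardest and most delicate part of the argument.
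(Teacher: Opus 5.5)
Your outer structure is sound. Exhibiting a $\PV$ witness $W(C)$ is a legitimate way to prove this $\forall\Sigma^b_1$ sentence. Soundness of each round can be checked by open induction against explicit certificates, and flipping an unread free variable at the end works. The paper instead argues by contradiction with induction on $d$ in the meta-theory, which is equivalent.

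The gap is Step~3, and it is the whole content of the theorem. You neither define the pessimistic estimator nor prove its initial bound. The justification you offer is that each inequality ``concerns sums over polynomially many objects, and is therefore a quantifier-free $\PV$ statement provable by induction''. That inference is not valid, since true open statements need not be $\PV_1$-provable.

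The mechanism you propose also fails at the needed parameters:
\begin{itemize}
\item A polynomial-size sample space gives only $t$-wise independence for constant $t$. The tail bounds available from such independence (moment or Janson-type) for the event `none of $\Theta(\log n)$ disjoint constant-width clauses gets falsified' are only polylogarithmically small, of order $(\log n)^{-\Theta(t)}$.
\item You need $n^{-k-1}$ to union-bound over $n^k$ gates. Reaching that requires super-constant independence, and then the sample space is superpolynomial, so your estimator is no longer a $\PV$ sum.
\item Raising the wide/narrow threshold to $n^{\gamma}$ repairs the wide case. However, the narrow case of the FSS recursion conditions on the values given to a hitting set and uses that the remaining coordinates stay independent. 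Limited independence does not give you that.
\end{itemize}
Earlier formalizations of these restriction estimates are precisely the ones that needed $\APC_1$ or the $\Log\Log$ regime.

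The idea you are missing is to keep the fully independent product distribution and derandomize one coordinate at a time. The potentials used are the exact conditional expectations, with closed forms computable by $\PV$ terms. This works because the relevant events factor over disjoint sub-clauses, or reduce to binomial sums over sets of size $O(\log n)$. The paper does this in two stages.
\begin{itemize}
\item \Cref{lem:subset_selection}, built from \Cref{lmm: potential small sets} and \Cref{lmm: potential general}, chooses the set $T$ of variables to fix. It uses a potential satisfying $\Phi(x)=p\,\Phi(x*)+(1-p)\,\Phi(x\circ)$, so that every $c$-NF is either narrow or has at least $k\ln n$ disjoint sub-clauses inside $T$.
\item \Cref{lem:derand_restriction} then chooses the values on $T$ via $\Phi(x)=\sum_i\prod_j(1-\phi_{ij}(x))$. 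Here $\phi_{ij}(x)$ is $0$, or $2^{-u}$ where $u$ is the number of still-unfixed literals of the $j$-th sub-clause of $F_i$. This gives $\Phi(x)=\tfrac12(\Phi(x0)+\Phi(x1))$ and $\Phi(\varepsilon)<1$.
\end{itemize}
For each potential, three things are direct algebraic checks in $\PV_1$: the initial bound, the averaging identity (via the binomial recurrence and disjointness), and integrality at full assignments. Greedy extension together with open induction then shows the potential never increases, so at the end every gate is good. This is the termination argument your Step~3 needs; with it, the rest of your outline goes through.
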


Earlier formalizations of the worst-case parity lower bound for bounded-depth circuits required stronger theories. In particular, \citep{DBLP:journals/apal/MullerP20} and \citep{Krajicek-book} formalize different proofs in $\APC_1=\PV_1+\dWPHP(\PV)$, while \citep{Razborov-switching-l} works in $\PV_1$ but in the $\Log\Log$ regime -- i.e., with $n$ of doubly logarithmic order -- so the proof can manipulate exponentially large objects (see \citep{DBLP:journals/apal/MullerP20} for details).

These formalizations reinforce the intuition that a substantial portion of results in algorithms and complexity theory are already captured within $\PV_1$ or its mild extensions, and that establishing unprovability results would therefore be of considerable significance (see \citep{Oliveira25} for related discussions).

In \Cref{sec:intro_techniques} below, we elaborate on the proofs of \Cref{thm:intro_ave_parity} and \Cref{thm:intro_worst_parity}. For further details about these and other formalizations, see \Cref{sec:formalizations}.

\subsubsection{Witnessing, Relative Strength of  \texorpdfstring{$\APX_1$}{APX1}, and Provability of \texorpdfstring{$\pr \BPP = \pr \P$}{prBPP = prP}}

We now discuss relations between theories $\PV_1$, $\APX_1$, and $\APC_1$, and connections to the $\pr\BPP$ versus $\pr\P$ problem. We also introduce a new computational problem called $\RefYao$, and provide a tailored witnessing theorem for the $\forall\Sigma_1^b(\PV)$-consequences of $\APX_1$ (i.e.~provably total $\TFNP$ problems in $\APX_1$). 

\vspace{-0.2cm}

\paragraph{$\APX_1$ versus $\APC_1$.} By construction, every sentence provable in $\PV_1$ is also a theorem of $\APX_1$. It is also possible to show that if $\varphi$ is a sentence in the language of $\PV_1$ (i.e., without the approximate counting symbol $\P$) provable in $\APX_1$, then it is provable in $\APC_1$ (see \Cref{cor: APX subset APC1}). This means that, modulo the difference in languages (i.e.~$\APC_1$ does not have the symbol $\P$), $\APX_1$ is a sub-theory of $\APC_1$.\footnote{Indeed, there is a conservative extension of $\APC_1$ known as $\mathsf{HARD}^\mathsf{A}$ \cite{Jerabek07} that contains $\APX_1$ in a stronger sense --- the symbol $\P$ can be simulated by a term in $\mathsf{HARD}^{\sf A}$ such that all axioms governing $\P$ are provable (see \Cref{thm: ub APC}).}

On the other hand, under plausible computational assumptions, there are sentences provable in $\APC_1$ that are not provable in $\APX_1$ (see \Cref{cor: APC strict extension}).\footnote{More formally, there is a  $\forall \Sigma^b_2(\PV)$-sentence provable in $\APC_1$ that is not provable in $\APX_1$, under the existence of indistinguishability obfuscation and $\coNP$ not contained infinitely often in $\NP_{/\poly}$ (see \Cref{cor: APC strict extension}).} This is obtained by adapting a technique from \cite{ILW23}. In other words, $\PV_1$ is contained in $\APX_1$, while $\APX_1$ is likely strictly weaker than $\APC_1$. 

Subsequently, a fundamental research direction is to determine whether $\APX_1$ is stronger than $\PV_1$, a question closely connected to the $\pr\BPP$ versus $\pr\P$ problem and to understanding the role of randomness in feasible proofs.

\paragraph{$\PV_1$ versus $\APX_1$ and feasible derandomization.} From a meta-mathematical standpoint, it is natural to ask whether $\pr\BPP=\pr\P$ is \emph{\emph{(}un\emph{)}provable} in a weak arithmetic theory such as $\PV_1$. A key obstacle is formalization: the language of $\PV_1$ is tailored to \emph{deterministic} polynomial-time functions, whereas the statement $\pr\BPP = \pr\P$ quantifies over acceptance probabilities of circuits on an exponentially large space. We  propose the following  question.

\begin{open}\label{intro:open BPP = P}
Is there a $\PV$ function symbol $\widetilde{\P}(C,\Delta)$ for which the \emph{basic}, \emph{boundary}, \emph{precision consistency}, and \emph{local consistency} axioms (\Cref{sec:intro_theory}) are provable in $\PV_1$?
\end{open}

An unconditional positive answer seems out of reach at present, as it would immediately imply $\pr\BPP=\pr\P$ by the soundness of the approximate counting axioms and the polynomial running time of $\widetilde{\P}$ (see \Cref{thm: standard equiv admissible}). Intuitively, this would amount to a \emph{deterministic polynomial-time proof} of the collapse. At the moment, it is unclear whether a positive or a negative answer is more plausible.

A weaker possibility is that, even if no such $\PV$ function symbol $\widetilde{\P}(C,\Delta)$ exists with the axioms provable in $\PV_1$, adding the approximate counting oracle $\P$ might nonetheless be conservative for \emph{deterministic} statements in the base language. Formally:

\begin{open}\label{open:intro conservative PV}
Is $\APX_1$ conservative over $\PV_1$? Equivalently, does every first-order sentence in the language of $\PV$ that is provable in $\APX_1$ already have a proof in $\PV_1$?
\end{open}

A positive answer to \Cref{intro:open BPP = P} would imply a positive answer here.  The relationship between \Cref{open:intro conservative PV} and $\pr\BPP=\pr\P$ appears incomparable. If $\pr\BPP=\pr\P$ holds but only via a non-feasible proof, $\APX_1$ need not be conservative over $\PV_1$. Conversely, even if $\APX_1$ is conservative over $\PV_1$, it is not clear to us whether $\pr \BPP = \pr \P$ follows. At a high level, we are interested in the relationship between \emph{derandomization of computations} and \emph{derandomization of proofs}. While we are currently unable to provide definite answers, we believe these questions are fundamental and merit further study. We refer to \citep{krajicek-25} and references therein for related questions in the context of $\APC_1$ versus $\PV_1$.

\paragraph{A Witnessing Theorem for $\APX_1$: Reductions to $\RefYao$.} A key characteristic of bounded theories is to have a suitable \emph{witnessing theorem} corresponding to certain computational problems (see, e.g., \cite{Buss,KrajicekPT91}). We isolate a  
certain (total) search problem as a key computational task for producing witnesses for the  $\forall \Sigma^b_1(\PV)$-consequences of $\APX_1$.

\medskip

\noindent\emph{$\RefYao$}:
An instance fixes the following parameters: input length \(n\), multiset size \(m\), predictor circuit description size \(s\), and advantage \(\delta>0\). The input is a \emph{predictor generator}, i.e., a circuit
\[
G \colon \{0,1\}^{nm}\to [n]\times\{0,1\}^s
\]
which, on a flat distribution \(\mathcal{D}\in(\{0,1\}^n)^m\) (i.e., an \(m\)-tuple of $n$-bit strings), returns an index \(i\in[n]\) and the description of a predictor circuit \(P\colon \{0,1\}^{i-1}\to\{0,1\}\) of size $s$.
A \emph{solution} is any flat distribution \(\mathcal{D}\) such that, writing \((i,P)=G(\mathcal{D})\),
\[
\Pr_{x\leftarrow \mathcal{D}}\!\big[P(x_{<i})=x_i\big] < \tfrac{1}{2}+\delta.
\]

\medskip

Thus a solution $\mathcal{D}$ \emph{refutes} that $G$ can produce predictors for any distribution with advantage $\delta$. When parameters satisfy $(\delta^2/10)\cdot m \ge s + \lceil \log n\rceil +1$, $\RefYao$
lies in $\TFZPP$; in other words, uniformly random distribution is likely a solution. The problem is called $\RefYao$, as $G$ is intended to output a predictor like the standard ``distinguisher\(\to\)predictor'' transformation of Yao \cite{Yao82}.

\medskip

\begin{remarkn}{$\RefYao$ and  Derandomization}
    Note that $\RefYao$ requires generating a distribution $\mathcal{D}$ that is unpredictable with respect to a \emph{given} predictor generator $G$ -- a deterministic procedure that attempts to produce a predictor $P$ for $\mathcal{D}$. The distribution $\mathcal{D}$ need not be pseudorandom (or equivalently, unpredictable) against \emph{all small circuits}; it only needs to fool the specific generator $G$. This can be viewed as a special case of constructing \emph{targeted pseudorandom generators}, a task known to be $\pr\BPP$-complete (see  \cite{Goldreich11g, CT21, LPT24}).
\end{remarkn}

\medskip

We establish the following result for the provably total $\TFNP$ problems of $\APX_1$.

\begin{restatable}[Witnessing for $\APX_1$]{theorem}{WitnessingAPX}\label{thm: witness TFNP}
Let $\varphi(x,y)$ be a quantifier-free formula in the language of $\PV_1$. If $\APX_1\vdash \forall x~\exists y~\varphi(x,y)$, there exists a deterministic polynomial-time Turing reduction from the search problem defined by $\varphi$ to $\RefYao$ with parameters satisfying $(\delta^2/10)\cdot m \ge s + \lceil \log n\rceil +1$. 
\end{restatable}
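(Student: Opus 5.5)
We reduce to a witnessing theorem for a universal theory and then implement the oracle $\P$ by a deterministic procedure that either answers correctly or produces an instance of $\RefYao$ whose solution lets it repair itself. \textbf{Step 1 (Herbrand/KPT).} Since $\APX$ is a universal equational theory with quantifier-free induction (polynomial induction on notation for $\PV(\P)$ open formulas), we first apply Herbrand's theorem. This gives a $\PV(\P)$ term $t(x)$ such that $\varphi(x,t(x))$ holds in every model of the universal closure of the $\APX$ axioms; in particular it holds in every structure $(\mathbb{N},\widetilde{\P})$ where $\widetilde{\P}$ satisfies the Basic, Boundary, Precision Consistency and Local Consistency axioms on all queries asked during the evaluation of $t(x)$. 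Evaluating $t(x)$ makes polynomially many oracle queries $\P(C_j,\Delta_j)$, adaptively. As a technical simplification, we may push the axioms into the term: unwinding the proof, the finitely many instances of axioms used are witnessed by queries made by a polynomial-time oracle machine. So it suffices to answer each query with a value satisfying the local axioms relative to the other answers we give.

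\textbf{Step 2 (answering queries via flat distributions).} We answer a query $\P(C,\Delta)$ on $n$ inputs by picking a multiset $\mathcal{D}\in(\zo^n)^m$ and returning the empirical acceptance frequency $\Pr_{x\leftarrow\mathcal{D}}[C(x)=1]$, rounded. The Boundary Axiom then holds automatically. Consistency across precisions and local consistency hold provided $\mathcal{D}$ is \emph{unpredictable} for the relevant tests. Indeed, the local test compares the frequency of $C$ on $\mathcal{D}$ with the averaged frequencies of $\Fix_0(C),\Fix_1(C)$ evaluated on their own samples. If we choose the samples coherently, a violation yields a distinguisher between $\mathcal{D}$ and uniform on the last bit. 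Choosing the samples coherently means using one sample distribution per input length, with $\Fix_b(C)$ evaluated on prefixes. Yao's hybrid argument then converts that distinguisher into a next-bit predictor $(i,P)$.

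\textbf{Step 3 (the reduction).} Fix, for each input length appearing, a single flat distribution $\mathcal{D}$. Define the predictor generator $G(\mathcal{D})$ as follows: simulate $t(x)$ with answers computed from $\mathcal{D}$; check each axiom instance used; at the first violated instance, output the predictor $(i,P)$ obtained from the hybrid argument of Yao (\Cref{thm: Yao D2P}, whose constructive content is a polynomial-time transformation); otherwise output a dummy predictor. Here $G$ is a polynomial-size circuit computable from $x$. We query the $\RefYao$ oracle on $G$ and obtain $\mathcal{D}$ with no $\delta$-advantage for $G$'s predictor. Soundness of Yao's transformation then guarantees that the simulated answers violate no axiom, so $t(x)$ computed with these answers satisfies $\varphi$. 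Parameters: the advantage loss in the hybrid argument is a factor $n$, so we set $\delta$ about $1/(n|\Delta|)$. We choose $m$ polynomial so that $(\delta^2/10)m\ge s+\lceil\log n\rceil+1$, with $s$ the size of the predictors.

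\textbf{Main obstacle.} The hard part is adaptivity and multiple input lengths. Later queries depend on earlier answers, and different queries have different arities and precisions, so one sample per length and a single first-violation refutation must be made coherent. I would handle this iteratively: run the simulation, and whenever a violation is found on length $n$, call $\RefYao$ to refresh $\mathcal{D}_n$, then restart. To guarantee termination after polynomially many rounds, one can instead make $G$ itself encode the whole simulation, so that a single call with a tuple of distributions suffices. I also expect the precision-consistency checks to need careful rounding within the slack $\beta$.
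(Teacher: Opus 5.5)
There is a genuine gap, and it sits where you wrote ``technical simplification.'' Your architecture matches the paper's: Herbrand; answer $\P$ by empirical frequencies on an explicit flat distribution; turn a Local Consistency violation into a next-bit predictor; make one $\RefYao$ call on a generator that runs the whole simulation. But you claim that $\varphi(x,t(x))$ holds whenever $\widetilde{\P}$ satisfies the axioms on the queries made while evaluating $t(x)$, and this does not follow from Herbrand's theorem. Models of $\Apx_1$ satisfy the axioms at \emph{all} arguments. The $\Apx$ derivation of $t_\varphi(x,t_1(x),\dots,t_c(x))=1$ may use Local Consistency, through the substitution and structural induction rules, at circuits and precisions computed by auxiliary terms that never occur in the final equation.

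Making those axiom instances computable from $x$ and $\mathcal{D}$ by ``unwinding the proof'' is exactly the Predictor Extraction Lemma (\Cref{lmm: predictor extraction}), the technical core of the theorem. The paper proves it by induction on the $\Apx$ derivation:
\begin{itemize}
\item In the substitution case $t_1(x/t)=t_2(x/t)$, one evaluates $t^{\mathcal{D}}$ and recurses. This forces $\mathcal{D}$ to be long and large enough for the queries of \emph{all} terms in the derivation (sizes $k_0'(n+\ell(n))$, $m_0'(n+\ell(n))$, with $\ell$ the bounding value of $t$).
\item In the induction-rule case, one finds the first prefix $x_{\le i}$ where $f_1^{\mathcal{D}}$ and $f_2^{\mathcal{D}}$ diverge and recurses into the violated successor equation.
\item Only the Local Consistency case actually produces a predictor.
\end{itemize}
A strong form of Herbrand's theorem returning explicit axiom-instance terms could replace this. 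It would first require re-axiomatizing $\Apx_1$ so that every axiom other than the four $\P$-axioms is true in every $\mathbb{M}(\hat{\P})$. As defined, $\Apx_1$ takes every $\Apx$-provable equation (e.g.\ permutation symmetry) as an axiom, and such instances can fail under an empirical oracle.

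Two further steps would fail as written.

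First, ``one sample distribution per input length'' is incoherent. $\P$ must be a single function, and Local Consistency links a $t$-input $C$ to the $(t-1)$-input circuits $\Fix_b(C)$, which are ordinary queries. With separate $\mathcal{D}_t,\mathcal{D}_{t-1}$, a violation may only show that $\mathcal{D}_{t-1}$ differs from the prefixes of $\mathcal{D}_t$. That yields no predictor unless you know $\Pr_U[\Fix_b(C)]$. The paper uses one $\mathcal{D}\in(\{0,1\}^{k})^{m}$ and answers a $t$-input query by $\Pr_{u\gets\mathcal{D}}[C(u_{\le t})]$, independently of $\Delta$. The Basic, Boundary and Precision Consistency axioms then hold automatically, with no rounding. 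Adaptivity is also a non-issue, because $G$ simulates everything on the explicitly given $\mathcal{D}$. Your refresh-and-restart loop, which has no termination argument, is therefore unnecessary.

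Second, you cannot appeal to \Cref{thm: Yao D2P}. It is a theorem of $\Apx_1$ about $\P$, and Yao's transformation needs acceptance probabilities of hybrids containing uniform bits, which a deterministic predictor generator cannot compute. What makes the argument deterministic is that a Local Consistency violation is a gap between $\mathcal{D}_{\le t}$ and $\mathcal{D}_{<t}\times U_1$. With only one uniform bit, both quantities are computable exactly from $\mathcal{D}$. A single hybrid step then gives a predictor for bit $t$, e.g.\ $u_{<t}\mapsto C(u_{<t}0)\oplus 1$ or a constant, chosen by inspecting $\mathcal{D}$. It has size at most $n$ and advantage at least $1/(2n)$, so $m\ge n^4$ suffices and there is no factor-$n$ hybrid loss.
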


In \Cref{sec:intro_techniques} below, we provide an overview of the proof of \Cref{thm: witness TFNP}.

\paragraph{Relation to $\mathsf{LossyCode}$ and $\APC_1$.}
Recall the definition of the search problem $\mathsf{LossyCode}$ \cite{Korten22}: given a compressor circuit \(C\colon\{0,1\}^n\to\{0,1\}^{n-1}\) and a decompressor circuit \(D\colon\{0,1\}^{n-1}\to\{0,1\}^n\), output \(x\) with \(D(C(x))\neq x\). Similarly to $\RefYao$, this problem is total and in $\TFZPP$. 

We observe the existence of a deterministic polynomial-time mapping reduction from $\RefYao$ to $\mathsf{LossyCode}$ whenever the input instances of $\RefYao$ satisfy
\[
(\delta^2/10)\cdot m \;\ge\; s+\lceil\log n\rceil+1.
\]
Therefore, in the stated regime, derandomizing $\mathsf{LossyCode}$ subsumes derandomizing $\RefYao$. Since every $\APX_1$-provably total $\TFNP$ problem reduces to $\RefYao$,  under the above parameter condition it further reduces to $\mathsf{LossyCode}$.

Recall that Wilkie (unpublished) and Thapen \citep{thapen2002weak} (see \citep[Proposition 1.14]{Jerabek04} and \citep[Theorem D.1]{LPT24}) proved that $\mathsf{LossyCode}$ captures the $\forall\Sigma_1^b$-fragment of  $\APC_1$. Consequently, these results organize the $\TFNP$ landscapes of the two theories: $\mathsf{LossyCode}$ witnesses the  $\forall \Sigma^b_1$-consequences of $\APC_1$, while $\RefYao$ witnesses those of $\APX_1$.  

\medskip

 We return to these topics in \Cref{sec: witnessing}, providing detailed proofs of all results mentioned above and further discussions.

\subsubsection{Reverse Mathematics of Randomized and Average-Case Lower Bounds}

The \emph{retraction weak pigeonhole principle} ($\rWPHP(\PV)$) \cite{Jerabek07-dWPHP,LLR24,CLO24} is one of the most important combinatorial principles known to be provable in 
$\APC_1$, but whose provability in $\APX_1$
 remains unclear. Recall that $\rWPHP(\PV)$ asserts that for every $n,m \in \Log$ with $m < n$ and for all (deterministic) circuits $C \colon \{0,1\}^n \to \{0,1\}^{m}$ (``compressor'') and $D \colon \{0,1\}^{m} \to \{0,1\}^n$ (``decompressor''), there is $x \in \{0,1\}^n$ such that $D(C(x)) \neq x$. 
 
 In other words, $\rWPHP(\PV)$ captures the combinatorial principle underlying the total search problem $\mathsf{LossyCode}$ discussed above. Its provability in $\APX_1$ would mean that $\APX_1$ and $\APC_1$ prove the same $\forall \Sigma^b_1(\PV)$ sentences, and by the witnessing theorem (see \Cref{thm: witness TFNP}), this would further imply that $\LossyCode$ and $\RefYao$ are equivalent with respect to deterministic polynomial-time Turing reductions.
 
 \medskip
 
 We study \emph{counting variants} of the retraction weak pigeonhole principle and characterize their \emph{equivalence class} with respect to provability in $\APX_1$. We show that this class encompasses certain \emph{communication complexity lower bounds against randomized protocols}, establishing that these results are all equivalent (over the base theory $\APX_1$) to suitable variants of the retraction pigeonhole principle.

\paragraph{Counting Variants of $\rWPHP(\PV)$.} We consider the following statements:
\begin{itemize}
  \item \underline{Approximate Counting $\rWPHP$}: $\rcWPHP[m,\eps]$.\vspace{0.2cm}\\For any \emph{deterministic} compressor-decompressor pair with encoding length $m<n$, an $\varepsilon$-fraction of inputs cannot be correctly decompressed.
  \item \underline{Randomized Compression $\rWPHP$}: $\rrWPHP[m,\eps]$.\vspace{0.2cm}\\ For a \emph{randomized} compressor and a deterministic decompressor with encoding length $m < n$, there is some input on which the pair has error probability at least $\varepsilon$.
\end{itemize}
These principles are formalized in a natural way using the probabilistic framework provided by $\APX_1$.\\

\paragraph{One-Way Communication Complexity.} We prove an equivalence result involving  communication complexity (CC) lower bounds  against randomized \emph{one-way protocols} with either \emph{public randomness} or \emph{private randomness}. Recall that the Set Disjointness function $\SetDisj(x,y)$ outputs $1$ if and only if for every index $i\in[n]$, either $x_i=0$ or $y_i=0$, i.e., $x$ and $y$ have no common $1$-index. The following statements, presented informally for clarity, are relevant to our result:

\vspace{0.05cm}

\begin{itemize}
    \item  \underline{Public Randomized CC Lower Bound for Set Disjointness}: $\pub\de\rsLB^\SetDisj[m,\varepsilon]$.\vspace{0.2cm}\\
    Every public-coin one-way protocol computing $\SetDisj$ with communication complexity $m$ must have error probability at least $\eps$ on some input pair $(x,y)$.
   \item  \underline{Private Randomized  CC Lower Bound for Set Disjointness}: $\prv\de\rsLB^\SetDisj[m,\varepsilon]$.\vspace{0.2cm}\\
    Every private-coin one-way protocol computing $\SetDisj$ with communication complexity $m$ must have error probability at least $\eps$ on some input pair $(x,y)$.
      \item  \underline{Public Randomized  CC Lower Bound for Some Function}: $\pub\de\rsLB^\some[m,\varepsilon]$.\vspace{0.05cm}\\
    For every $n\in\Log$, there exists  $f:\zo^n\times \zo^n\to\zo$ such that $\pub\de\rsLB^{f}[m, \varepsilon]$ holds. 
      \item  \underline{Private Randomized  CC Lower Bound for Some Function}: $\prv\de\rsLB^\some[m,\varepsilon]$.\vspace{0.05cm}\\
   For every $n\in\Log$, there exists  $f:\zo^n\times \zo^n\to\zo$ such that $\prv\de\rsLB^{f}[m,\varepsilon]$ holds. 
\end{itemize}

\noindent We leave the details about the formalization of the corresponding lower bound sentences to \Cref{sec:rev_math}. We note that $\APX_1$ is able to show that some concrete functions admit low-cost communication protocols. For instance, using linear hashing, it proves that $\mathsf{Equality}$ admits public-randomness one-way communication protocols of cost $O(\log n)$.\\

We can state an informal  version of our equivalence result as follows.\footnote{In particular, the simplified formulation given here omits considerations about the number of random bits employed in the randomized protocols, which plays a role in the parameters of some statements.}

\begin{theorem}[Main Equivalence Result (Informal); see \Cref{thm:rev_equivalences}]\label{thm:intro_equiv}
    The following statements are equivalent over $\Apx_1$, for suitable relations between the constants $k\ge 1$ and $0<\varepsilon<1$, quantified outside the theory:
  \vspace{-0.1cm}
    \begin{multicols}{3}
    \begin{compactenum}[\rm(1)]
        \item $\rcWPHP[n-1,n^{-k}]$
        \vspace{0.085cm}
        \item $\rcWPHP[n^{\eps},n^{-k}]$
        \vspace{0.085cm}
        \item $\rrWPHP[n-1,n^{-k}]$
        \vspace{0.085cm}
        \item $\rrWPHP[n^{\eps},n^{-k}]$
        \columnbreak
        \item $\pub\de\rsLB^\SetDisj[n-1,n^{-k}]$
        \item $\pub\de\rsLB^\SetDisj[n^{\eps},n^{-k}]$
        \item $\prv\de\rsLB^\SetDisj[n-1,n^{-k}]$
        \item $\prv\de\rsLB^\SetDisj[n^{\eps},n^{-k}]$
        \columnbreak
        \item \,\,\,$\pub\de\rsLB^\some[n-1,n^{-k}]$
        \item $\pub\de\rsLB^\some[n^{\eps},n^{-k}]$
        \item $\prv\de\rsLB^\some[n-1,n^{-k}]$
        \item $\prv\de\rsLB^\some[n^{\eps},n^{-k}]$
    \end{compactenum}
    \end{multicols}
\end{theorem}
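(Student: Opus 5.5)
The plan is to prove the twelve statements equivalent over $\APX_1$ through a cycle of implications. Trivial directions (weakening parameters) handle most links, and two substantive reductions close the cycle. The cheap implications come first:
\begin{itemize}
\item Statements with encoding length $n^{\eps}$ are weaker than the corresponding ones with $n-1$, since a protocol or compressor of length $n^{\eps}$ can be padded to length $n-1$. Hence (1)$\Rightarrow$(2), (3)$\Rightarrow$(4), (5)$\Rightarrow$(6), and so on.
\item The $\SetDisj$ lower bounds imply the ``some function'' versions by taking $f=\SetDisj$.
\item A private-coin protocol is a special public-coin protocol, so public lower bounds imply private ones.
\item $\rrWPHP$ implies $\rcWPHP$ by averaging. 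Given a deterministic pair failing on less than an $\eps$-fraction, view it as a randomized compressor on a uniformly random input. A random self-reducibility trick (XOR the input with a random shift that is sent along) turns average-case failure into worst-case failure probability. I expect the reverse direction to need the averaging argument for expectation (\Cref{sec: avg for expectation}) to fix the compressor's randomness.
\end{itemize}

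The substantive directions are as follows.

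\textbf{From weak lower bounds to compression principles.} I want to derive $\rrWPHP[n-1,n^{-k}]$ from $\prv\de\rsLB^\some[n^{\eps},n^{-k}]$. Argue by contraposition. Given a randomized compressor $C$ and decompressor $D$ with small error on every input, Alice holds $x\in\zo^n$ (the truth table of $f(\cdot,\cdot)$ restricted appropriately, or her row). She compresses it repeatedly, and Bob decompresses and evaluates $f(x,y)$. To go from length $n-1$ to $n^{\eps}$, compose the compressor with itself about $n-n^{\eps}$ times on blocks. Union bound (\Cref{thm: union bound}) controls the accumulated error, provided compression is applied to short blocks, so that only polynomially many compositions are needed. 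This yields a low-cost private protocol for every $f$, contradicting (12).

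\textbf{From compression principles to lower bounds for $\SetDisj$.} I want to derive (5) from (3) or (1). Here I use the standard reduction. A one-way public-coin protocol for $\SetDisj$ with message length $m<n$ lets Bob recover Alice's input $x$: for each $i$, Bob runs the protocol with $y=e_i$ on the same message, and $\SetDisj(x,e_i)=\lnot x_i$. The error for each $i$ is at most $\eps$. So I will first apply error reduction, using the Chernoff bound for $O(\log n)$ repetitions or a majority over $O(\log n)$ independent copies, taking the message length to $O(m\log n)$. Then I apply the union bound over $i\in[n]$. This gives a randomized compressor-decompressor pair of length $<n$ with small error, contradicting $\rrWPHP$. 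The public coins are absorbed into the compressor's randomness, which is shared with $D$ by appending the seed to the message, or by fixing the seed via the averaging argument to get a deterministic pair and use $\rcWPHP$.

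\textbf{Expected difficulty.} The main obstacle is parameter bookkeeping inside $\APX_1$, where only limited tools are available: Chernoff is available only for logarithmically many samples, and every step incurs additive slack $\delta+\beta$. Length blowups from error reduction must stay below $n$. This is why the constants $k,\eps$ must be related suitably, and the precise formulation in \Cref{sec:rev_math} will dictate the exact choices.
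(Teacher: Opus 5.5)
\textbf{Gap.} You file $\rrWPHP\Rightarrow\rcWPHP$ under cheap implications, but it fails as described, and your cycle needs it. It is the arrow that leads from the randomized-compressor/private-coin statements (3), (4), (7), (8), (11), (12) back to (1). (The variant of your $\SetDisj$ reduction aimed at $\rrWPHP$ fails for the same reason; see below.)

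In the contrapositive you start from a deterministic $C:\zo^n\to\zo^{n-1}$ with small average error. You must produce a randomized compressor whose decompressor sees only the message, since in $\rrWPHP$ the decompressor $D$ receives no seed. Re-randomizing by a shift $\sd\in\zo^n$ is the right idea; it is the paper's Re-randomization Lemma. But if $\sd$ is sent along, the message $C(x\oplus\sd)\circ\sd$ has length $2n-1$ (or $n^{\eps}+n$ at stretch $n^{\eps}$), which exceeds $n$, so nothing is compressed. If $\sd$ is not sent, $D$ cannot undo the shift.

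The missing idea is to amortize the shifts through stretch amplification, as in the paper's proof of (4)$\Rightarrow$(1):
\begin{itemize}
\item Take $z\in\zo^{\ell}$ with $\ell=n^{10/\eps}$ and run $d=10n\log\ell$ rounds.
\item In round $i$, cut the current string into $n$-bit blocks $x_j$ plus a leftover $y_i$. Replace every $x_j$ by $C(x_j\oplus\sd_i)$, using \emph{one} fresh shift $\sd_i$ shared by all blocks of that round.
\item Output $(z_d,y_1,\dots,y_d,\sd_1,\dots,\sd_d)$. Its length is $O(n^2\log\ell)\le\ell^{\eps}$.
\end{itemize}
For each block event $X_{ij}$, fix the other seeds. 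The failure event is then $T(x_j\oplus\sd_i)$, whose approximate probability is, by re-randomization, close to the average failure probability of $(C,D)$, at most $n^{-k_1}$. Averaging over the other seeds and a union bound over the at most $d\ell/n$ events give worst-case error at most $(\ell+dn)^{-k_4}$ once $k_1$ is large. This necessarily moves the stretch to $n^{\eps}$, which is why the paper closes its cycle with (4)$\Rightarrow$(1).

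\textbf{Two smaller corrections.}

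In your $\SetDisj$ direction, drop the error reduction. With $m=n-1$, $O(m\log n)$ repetitions make the message longer than $n$, so the map no longer compresses. It is also unnecessary: the protocol's error is already $(n+r)^{-k_5}$, so a plain union bound over $i\in[n]$ with $k_5=k_1+1$ suffices. Appending the public seed works only if the seed is part of a \emph{deterministic} compressor's input, $(x,\sd)\mapsto(g_\Alice(x,\sd),\sd)$ from $n+r$ to $n+r-1$ bits. That contradicts $\rcWPHP$ at length $n+r$, which is exactly the paper's (1)$\Rightarrow$(5). It cannot contradict $\rrWPHP$, because the message $n-1+r$ is too long. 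Your alternative of fixing the seed by averaging also works.

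In direction (A), $C:\zo^n\to\zo^{n-1}$ cannot be applied to its own output ``about $n-n^{\eps}$ times''. You must pass to an input of length $\ell$ polynomial in $n+r$, split into $n$-bit blocks, with about $n\log\ell$ rounds. The protocol then lives at input length $\ell$, and Alice simply holds $x$, not a truth table of $f$. Stated that way, (A) is sound, since the $\rrWPHP$ decompressor never uses seeds and so no seed overhead arises. It is a reasonable alternative to the paper's direct (12)$\Rightarrow$(4).

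Your seed-fixing ``reverse direction'' correctly gives (1)$\Rightarrow$(3) and (2)$\Rightarrow$(4). With the amplified worst-to-average step added, your pieces close the cycle.
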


As a consequence, one of these statements is provable in $\APX_1$ if and only if every statement in \Cref{thm:intro_equiv} is provable in $\APX_1$. This result provides evidence that $\APX_1$ can serve as a suitable base theory for developing the reverse mathematics of average-case and randomized lower bounds. 

For more details and additional discussion, we refer to \Cref{sec:rev_math}.

\subsection{Techniques}\label{sec:intro_techniques}

We next outline some of the main techniques used in our proofs, starting with a recurring argument that establishes basic probabilistic inequalities in $\APX_1$.

\subsubsection{Probabilistic Reasoning in \texorpdfstring{$\APX_1$}{APX1}: The  ``Pointwise to Global'' Technique \texorpdfstring{(\Cref{sec:intro_prob_reasoning})}{}}

At the core of our probabilistic reasoning is a simple but powerful seed-fixing lemma that lets us pass from \emph{global} inequalities to \emph{pointwise} statements about suitably chosen restrictions of the randomness. Recall that a feasible random variable \(X\) is specified by an explicit support \(V\subseteq\mathbb{Q}\), a seed length \(n\), and a multi-output circuit \(C:\{0,1\}^n\to V\). Its approximate expectation is
\[
\bbE_\delta[X] \eqdef \sum_{v\in V} v\cdot \P_\delta(C_v),
\]
where \(C_v\) is the indicator circuit for the event \(C(x)=v\). This is a $\PV(\P)$-computable quantity.

\medskip 
\medskip 
\noindent\textbf{A general averaging argument for expectation (\Cref{thm: averaging on exp}).} The following holds in $\APX_1$. Let \(X_1,\ldots,X_m\) be random variables over the same seed with support \(V\), and fix coefficients \(\lambda_1,\ldots,\lambda_m\in\mathbb{Q}\). Write
\[
\mu \eqdef \sum_{i=1}^m \lambda_i\,\bbE_\delta[X_i],
\qquad
\mu\!\upharpoonright z \eqdef \sum_{i=1}^m \lambda_i\,\bbE_\delta[X_i \mid z],
\]
where \(X_i\mid z\) denotes \(X_i\) after fixing a suffix of the seed to \(z\). Then, for every desired suffix length \(k\), there exists \(z\in\{0,1\}^k\) such that
\[
\mu\!\upharpoonright z \;\ge\; \mu \;-\; (2\delta+\beta)\cdot\|\lambda\|_1\cdot \|V\|_1.
\tag{3.7}
\]
Thus a lower bound on \(\mu\) can be \emph{witnessed} (up to controlled additive slack) by conditioning on a  partial assignment of the seed.

\medskip 
\medskip 

The proof iteratively fixes one seed bit at a time. By a form of Local Consistency for expectation, the average of \(\mu\!\upharpoonright (0\circ z)\) and \(\mu\!\upharpoonright (1\circ z)\) is close to \(\mu\!\upharpoonright z\); hence it is possible to prove that one of the two extensions preserves the current value up to an additive loss \(O(\eta\cdot \|\lambda\|_1\cdot \|V\|_1)\), where $\eta$ is an auxiliary parameter in the proof. Greedily repeating this for \(k\) steps yields a \(k\)-bit suffix with total loss \(O(k \cdot \eta\cdot \|\lambda\|_1\|V\|_1)\). A  precision-smoothing argument (switching from \(\eta\) to \(\delta\) via precision consistency) then gives the stated \((2\delta+\beta)\)-type bound, independent of \(k\). The greedy construction is formally captured by a $\PV(\P)$-procedure $\mathsf{AvgSampler}$. Conceptually, $\mathsf{AvgSampler}$ searches for a good suffix using calls to the approximate counting oracle $\P$, and its correctness is established using (polynomial) induction on $k$ over a quantifier-free $\PV(\P)$-formula, which is available in $\APX_1$.

\medskip

\begin{remarkn}{Example: Consistency of Complementation (\Cref{cor:complementation}).}  Let \(E_1,E_2:\{0,1\}^n\to\{0,1\}\) be complementary predicates, i.e., $E_1 = \neg E_2$ as Boolean circuits. Let \(X_1,X_2\) be their indicator variables. We argue in $\APX_1$. Given an arbitrary $\beta^{-1} \in \Log$, we set $\eta \eqdef \beta/C$, for a large enough constant $C$. Pointwise, for \emph{every full assignment} \(\rho\) to the seed, using the relation between $E_1$ and $E_2$ we have
\[
\bbE_\eta[X_1\mid \rho] + \bbE_\eta[X_2\mid \rho] - 1 \;=\; 0.
\]
Set \(\lambda \eqdef (1,1,-1)\) and consider \(\mu\eqdef\bbE_\eta[X_1]+\bbE_\eta[X_2]-1\). Applying \Cref{thm: averaging on exp} with \(k=n\), we obtain a \(\rho\) such that \(\mu \le \mu\!\upharpoonright \rho + O(\eta)\). But $\mu\!\upharpoonright \rho$ is exactly $0$ by the pointwise identity above, yielding \(\bbE_\eta[X_1]+\bbE_\eta[X_2]-1 = O(\eta)\). Similarly, one can show that \(-\bbE_\eta[X_1]-\bbE_\eta[X_2]+1 = O(\eta)\). Translating expectations back to probabilities via the indicator correspondence, and applying a standard precision-smoothing argument, one can  conclude that $\P_\delta$ is consistent with complementation, i.e.,
\[
\big|\P_\delta(E_1) + \P_\delta(E_2) - 1\big| \;\le\; 2\delta + \beta.
\]
\end{remarkn}

\medskip

To summarize, \Cref{thm: averaging on exp} provides a  way to fix randomness while preserving lower bounds on linear combinations of expectations. It turns equalities or inequalities that hold \emph{for each seed} into quantitative global bounds in $\APX_1$ with explicit additive slack depending only on \(\delta,\beta\) and the natural \(\ell_1\) norms of the supports and coefficients. This mechanism is the engine that drives many of our probability inequalities and applications in \Cref{sec: basic properties}.

\medskip

\begin{remark}
The bit-by-bit fixing trick is a standard technique in computational complexity theory. For instance, it is used in the search-to-decision reduction for $\SAT$ \cite[Section 2.5]{AB09} and to derive circuit lower bounds from derandomization \cite{DBLP:journals/toc/AaronsonM11}, among other results. In particular, our approach is inspired by a new proof of $\BPP\subseteq\MA\subseteq\Sigma_2^p$ via a bit-by-bit ``dueling argument'' \cite[Lemma A.10]{LPT24}. 
\end{remark}

\subsubsection{Provability of Circuit Lower Bounds \texorpdfstring{(\Cref{thm:intro_ave_parity} and \ref{thm:intro_worst_parity})}{}}

\Cref{thm:intro_ave_parity} gives an \emph{average-case} lower bound for the parity function against depth-$d$ $\AC^0$ circuits, formalized in $\APX_1$, while \Cref{thm:intro_worst_parity} gives a corresponding \emph{worst-case} lower bound, formalized in the weaker theory $\PV_1$. As alluded to above, the challenge is to avoid encoding-based arguments that rely on pigeonhole principles (or frameworks that build on them), since they are unavailable in these theories. The first result showcases how $\APX_1$ approximate-probability calculus supports average-case arguments in a more  sophisticated setting, while the second shows that, with additional derandomization work, we can carry a corresponding worst-case lower bound argument entirely within $\PV_1$. 

At a high level, the formalizations implement a simplification and derandomization \cite{AAIPR01} of the Furst–Saxe–Sipser \citep{DBLP:journals/mst/FurstSS84} \emph{random restriction approach} to $\AC^0$ lower bounds. Recall that the argument proceeds in stages, where at each stage we fix a suitable partial restriction $\rho \colon [n] \to \{0,1,*\}$ that sets all input variables in $T \eqdef \rho^{-1}(\{0,1\}) \subseteq [n]$. The crucial point is that a depth-$d$ circuit $C$ \emph{simplifies} when restricted by $\rho$, leading to a not much larger circuit $C \!\upharpoonright \rho$ of depth $d -1$, while the parity function \emph{retains its hardness}. 

Two central lemmas employed in the specification of $\rho$ drive the proofs of \Cref{thm:intro_ave_parity} and \Cref{thm:intro_worst_parity}.

\medskip \medskip 
\noindent\textbf{Subset Selection Lemma (\Cref{lem:subset_selection}).} The first step is to algorithmically choose the set \(T \subseteq [n]\) of variables with a “\emph{narrow-or-wide}” guarantee: for every bounded-width CNF/DNF $F$ at the bottom of the circuit $C$, after fixing the variables in \(T\), $F$ either already depends on few literals (narrow) or contains many disjoint subclauses supported on \(T\) (wide). Crucially, this subset selection is \emph{constructed and proved correct} in $\PV_1$ using a delicate potential-function  argument that simulates the method of derandomization via conditional expectations. This makes the selection of the subset $T \subseteq [n]$ feasible in our theories, not merely existential.

\medskip \medskip
\noindent\textbf{Restriction Selection Lemmas (\Cref{lem:random_restriction_lemma} and \ref{lem:derand_restriction}).} Given the narrow-or-wide structure exposed by the subset selection step, we then \emph{choose values for variables in \(T\)} so that all relevant gates in the circuit $C$ simultaneously simplify after applying the resulting restriction $\rho$. There are two versions, matching our two theorems:

\begin{itemize}
  \item In the $\APX_1$ setting of  \Cref{thm:intro_ave_parity}, we consider a \emph{random assignment of bits} (\Cref{lem:random_restriction_lemma}) and then fix a good selection of the values via $\APX_1$'s ``pointwise-to-global'' \emph{averaging argument for expectation} explained above. This lets us form a partial restriction $\rho$ and obtain a corresponding circuit $C \!\upharpoonright \rho$ that approximately retains the relative advantage of $C$ when computing parity.
  \item In the more constrained $\PV_1$ setting of  \Cref{thm:intro_worst_parity}, we \emph{derandomize} the same choice (\Cref{lem:derand_restriction}). Again, this is implemented by a potential argument that feasibly simulates the method of conditional expectations within $\PV_1$. A crucial aspect of the proof that facilitates the formalization is that the relevant expectations depend on at most $O(\log n)$ input coordinates and thus can be efficiently computed by $\PV_1$ terms.
\end{itemize}

In both settings, the circuit lower bound is obtained by an inductive application of the restriction technique, as in the standard proof of the result. The details appear in \Cref{sec:AC0_LB}.

\medskip 

The novelty is not in the combinatorics of $\AC^0$ versus Parity but in the underlying \emph{proof-theoretic framework}. $\APX_1$ supplies a minimal yet sufficient probabilistic infrastructure that lets us carry out the average-case lower bound argument internally. In our proofs, this lets us define and reason about the agreement tester $T_C$ in the statement of \Cref{thm:intro_ave_parity}, define and analyze appropriate events, quantify advantage and pass from randomized restrictions to concrete choices, and keep track of the small additive losses accumulated across iterations — all within $\APX_1$.

Moreover, this streamlined setup and the perspective it provides clarify the boundary with the weaker theory $\PV_1$. The $\APX_1$ framework and our formalization isolate exactly where probabilistic reasoning is used  and where the argument is purely combinatorial. This separation indicates which components can be replaced by deterministic potential-based arguments available in $\PV_1$, thereby guiding the adaptation that yields our worst-case formalization in $\PV_1$.

\subsubsection{The Witnessing Theorem \texorpdfstring{(\Cref{thm: witness TFNP})}{}}

\Cref{thm: witness TFNP} states that every \(\forall\Sigma^b_1(\PV)\)-sentence provable in $\APX_1$ admits a deterministic polynomial-time Turing reduction to the total search problem $\RefYao$, with parameters obeying \((\delta^2/10)\cdot m \ge s+\lceil\log n\rceil+1\). Recall that an instance of $\RefYao$ gives a \emph{predictor generator} \(G\) and asks for a flat distribution \(\mathcal{D}\) such that the predictor \((i,P)=G(\mathcal{D})\) fails to predict the \(i\)-th bit of \(\mathcal{D}\) with advantage \(\delta\). (For the stated parameter range, $\RefYao$ is in $\TFZPP$ and  map-reduces to $\mathsf{LossyCode}$.)

\medskip

Suppose that $\APX_1 \vdash \forall x\,\exists y\,\varphi(x,y)$, where $\varphi$ is a quantifier-free $\PV$-formula. Given $x$ of length $n$, we describe a predictor generator $G_x$ such that a solution $\mathcal{D}$ to $\RefYao$ over $G_x$ allows us to compute $y$ such that $\varphi(x,y)$ holds. 

Starting from an $\APX_1$-proof  of \(\forall x\,\exists y\,\varphi(x,y)\), we first apply Herbrand’s theorem over the universal axiomatization of $\APX_1$ to obtain finitely many $\PV(\P)$-terms \(t_1,\dots,t_c\) such that \(\bigvee_i \varphi(x,t_i(x))\) holds, and encode this disjunction by a single term \(t_\varphi\) with the equational core $\APX$ proving \(t_\varphi(x,t_1(x),\ldots,t_c(x))=1\). In standard models (\Cref{sec:intro_theory}), these terms are polynomial-time oracle algorithms for the approximate-counting oracle \(\P\). 

Next, we describe the construction of $G_x$. Given a candidate flat distribution \(\mathcal{D}\) (say, over \(n'\)-bit strings and of support size \(m'\), where $n',m' = \poly(n)$ are large enough), we \emph{simulate} each oracle call $\P(C, \Delta)$ inside any \(t_i\) by \emph{empirical counting on \(\mathcal{D}\)}:
\[
\P(C,\Delta) \eqdef \Pr_{u\leftarrow \mathcal{D}}\big[C(u_{\le \ell})\big],
\]
where $C \colon \{0,1\}^{\ell} \to \{0,1\}$ and $\ell \leq n'$. Let \(t_i^{\mathcal{D}}(x)\) and \(t_\varphi^{\mathcal{D}}(x,\cdot)\) denote the resulting outputs. Note that these can be computed in \emph{deterministic} polynomial time, since $\mathcal{D}$ is explicitly given as a collection of $\poly(n)$ strings of length $\poly(n)$, and  $t_\varphi$, $t_1$, $\ldots$, $t_c$ run in polynomial time.

\medskip \medskip

\noindent \textbf{Predictor Extraction Lemma (\Cref{lmm: predictor extraction}).} The key technical step says: if under this simulation the $\APX$-provable equation fails, i.e., \(t_\varphi^{\mathcal{D}}(x,\cdot)\neq 1\), then we can algorithmically extract a small predictor \(P\) of size $s'$ that achieves advantage at least \(\delta'\) for an explicitly computed bit position of \(\mathcal{D}\), where \((\delta'^2/10)\cdot m' \ge s'+\lceil\log n'\rceil+1\). 

Conceptually, an $\APX$ proof asserts: for \emph{every} interpretation of \(\P\), either the equation holds or one of the approximate-counting axioms (\emph{Basic}, \emph{Boundary}, \emph{Precision Consistency}, \emph{Local Consistency}) is violated. Under our empirical interpretation, the first three axioms continue to hold, so any failure must exhibit a \emph{Local Consistency} violation of the form
\[
\Big|\Pr_{u\leftarrow \mathcal{D}}[C(u_{\le \ell})]-\tfrac12\!\left(\Pr_{u\leftarrow \mathcal{D}}[C(u_{<\ell}0)] + \Pr_{u\leftarrow \mathcal{D}}[C(u_{<\ell}1)]\right)\Big|
\;>\; \tfrac{2}{|\Delta|}+\tfrac{1}{|B|},
\]
for a circuit \(C\) and strings \(\Delta,B\) produced in the $\APX$ proof. 

Similarly to the analysis of Yao’s distinguisher-to-predictor lemma, such a gap yields a predictor for the next bit via a deterministic transformation of \(C\); here the ``signal'' comes not from distinguishing \(\mathcal{D}\) from uniform, but from detecting a \emph{local inconsistency} of empirical counts across bit-fixings. Thus predictors arise not only from distinguishers, but also from the ability to spot local inconsistencies when \(\mathcal{D}\) is used as a random source for approximate counting -- a viewpoint that might be of independent interest.\footnote{In particular, Yao's distinguisher to predictor transformation requires \emph{randomness} (unless we have $\pr\BPP=\pr\P$) \cite{LPT24}. The construction of predictors from local inconsistency, however, is deterministic.} 

Formally, the lemma is established by a \emph{proof-theoretic analysis}, proceeding by induction on the steps of the $\APX$ proof and tracking the parameters \((n',m',s',\delta')\) through the final rule used.

\medskip \medskip

Wrapping up the argument, for an input \(x\), the reduction outputs the predictor generator $G_x$ obtained from \Cref{lmm: predictor extraction}. Note  that any solution \(\mathcal{D}\) to the resulting $\RefYao$ instance \emph{cannot} trigger a successful predictor extraction, since by definition $G_x$ fails to produce a predictor on \(\mathcal{D}\). Hence given a solution $\mathcal{D}$ to this instance of $\RefYao$, it must make the simulated identity true, i.e., 
$$
t_\varphi(x,t^{\mathcal{D}}_1(x),\ldots,t^{\mathcal{D}}_c(x))=1.
$$
In other words, some \(t_i^D(x)\) is a valid witness \(y\) for \(\varphi(x,y)\). Finally, as observed above, because \(\mathcal{D}\) is explicit, all simulated calls and \(t_i^{\mathcal{D}}(x)\) are computable in deterministic polynomial time. 

This completes the sketch of the proof of \Cref{thm: witness TFNP}. For the details, see \Cref{sec: witnessing}.

\subsection{Related Work}

Below we provide a representative, though not exhaustive, list of related developments and references.

\paragraph{Probabilistic arguments in bounded arithmetic.} Paris, Wilkie, and Woods \citep{PWW88} (see also Pudlák \citep{DBLP:conf/csl/Pudlak90}) observed that many probabilistic arguments can be formalized using variants of the weak pigeonhole principle rather than exact counting. An early explicit link between the weak pigeonhole principle and randomized algorithms is due to Wilkie (cf.~\citep{Krajicek-book}), who showed that randomized polynomial-time algorithms witness all $\forall\Sigma^b_1$-consequences of $\S^1_2+\dWPHP(\PV)$.

Ojakian \citep{thesis_KO} undertakes a general study of how probabilistic methods from combinatorics can be formalized in bounded arithmetic. While such proofs can often be recast as purely counting-based arguments, the naive translation still leaves exponentially many objects to count. The central idea, again, is to use the weak pigeonhole principle to simulate the probabilistic counting argument and thereby avoid this blow-up. The  formalizations are carried out in $\S^1_2$ augmented with suitable variants of the pigeonhole principle.

Je\v r\'abek \citep{Jerabek04} showed that within $\PV_1$ one can compare the sizes of two bounded $\P/\poly$-definable sets by constructing a surjection from one onto the other; he used this to formalize descriptions of algorithms in $\ZPP$ and $\RP$. He further showed \citep{Jerabek04, Jerabek-phd} that $\APC_1=\PV_1+\dWPHP(\PV)$ is strong enough to formalize sophisticated derandomization results. In \citep{Jerabek07}, Je\v r\'abek developed a more systematic framework, showing in particular that for any bounded $\P/\poly$-definable set, $\APC_1$ proves that a suitable pair of surjective counting functions exists that approximates its cardinality up to a polynomially small error. (The notation $\APC_1$ follows the terminology of \citep{DBLP:journals/jsyml/BussKT14}.) 

Built on Je\v r\'abek's framework, Lê \cite{Le14} formalizes more results in $\APC_1$ and its extensions, including randomized matching algorithms, the Lov\'{a}sz Local Lemma, and the Goldreich-Levin theorem. Throughout these formalizations, Lê provides formulations of concepts in $\APC_1$ such as \emph{expectation}, \emph{Markov inequalities}, and \emph{pairwise-independence}, which we also consider in this project. The formalization of random variables and expectation in \cite{Le14} heavily relies on the machinery of $\APC_1$ and is thus inadequate for our purposes. 

\begin{remark}
A concrete open problem is to state and prove a stronger form of the Chernoff bound in $\APX_1$. In this work, we show that the Chernoff bound with $O(\log n)$ variables (i.e., strings of length $n$ are considered feasible) can be formalized in $\APX_1$. It is unclear whether we could state a clean and meaningful Chernoff bound with $O(n)$ variables: the error probability will be exponentially small, which could be much smaller than the approximate counting error of the function $\P$. Moreover, even if a meaningful formalization exists, it is unclear whether existing proofs of the Chernoff bound can be formalized in $\APX_1$. Note that a strong form of Chernoff bound with $O(n)$ variables can be formalized in \Jerabek's theory $\APC_1$ (see \cite[Proposition 2.18]{Jerabek07}). 
\end{remark}

\paragraph{The proof complexity of $\dWPHP(\PV)$.} There is evidence that the pigeonhole-based axioms used throughout these frameworks exceed what can be proved in purely polynomial-time theories: while $\dWPHP(\PV)$ is available in $\T^2_2$, relativized variants are unprovable already in $\S^2_2$ \citep{Riis1993}. As noted above, under cryptographic assumptions, $\PV_1$ does not prove $\dWPHP(\PV)$ \citep{ILW23}. This supports the common stance that $\PV_1$ is too weak to derive the $\WPHP$-style principles exploited in the above formalizations. For a comprehensive investigation of $\dWPHP(\PV)$ and its provability in bounded arithmetic, see \citep{krajicek-25}. 

Je\v r\'abek's approximate-counting toolbox includes general principles such as inclusion-exclusion and strong Chernoff-type estimates, all formalized inside $\APC_1$. By contrast, the development of $\APX_1$ deliberately starts from weaker primitives: while we recover Markov/Chebyshev-style reasoning, error reduction, and other basic probabilistic tools, we do not reprove all of Je\v r\'abek’s strongest concentration bounds here. It remains an interesting direction to test the limits of $\APX_1$: which stronger probabilistic inequalities (e.g.,~full-strength Chernoff) are intrinsically beyond its axioms?

\paragraph{Beyond approximate counting with additive error.} Certain combinatorial proofs—e.g., of Ramsey’s theorem—typically require counting \emph{sparse sets}, which is unavailable both in our framework and in Je\v r\'abek’s theory $\APC_1$. In our setting, for $X \subseteq \{0,1\}^n$ we can estimate $|X|$ only to within an additive error that is a polynomial fraction of $2^n$, whereas these arguments require accuracy within a polynomial fraction of $|X|$. Such counting becomes possible in theories stronger than $\APC_1$, as developed in  \citep{DBLP:journals/jsyml/Jerabek09}.

In a concurrent work, Thapen \cite{Thapen25} introduces a framework to formulate stronger complexity classes (such as $\oplus\P$ and $\#\P$) in the theory of $\TFNP$ in a way that is similar in spirit to our  axiomatization of approximate counting. For instance, given an oracle that is intended to compute $\oplus\P$, Thapen considered the relativized $\TFNP$ problem that searches for a ``local inconsistency'' of the oracle. Note that the $\TFNP$ framework in \cite{Thapen25} considers only  \emph{query complexity}, while we additionally consider proofs in bounded arithmetic. Nevertheless, it is conceivable that results in these two directions may have analogues in each framework given the similarity in the setup. 

\paragraph{Theories with explicit counting.} In \citep[Chapter 6]{Jerabek-phd},  Je\v r\'abek studies bounded theories with explicit counting, revisiting the Impagliazzo-Kapron \citep{IK06} second-order logic for formalizing cryptographic reasoning.  The logic is multi-sorted: first-order variables range over strings, while
second-order variables of sort $k>0$ range over $k$-ary (intended polynomial-time) functions. In particular, functions are second-order objects rather than function symbols in the language. The theory includes recursive counting constructs for expressing the sizes of definable bounded sets.

The same chapter also introduces a feasible theory of approximate counting using a 3-valued semantics based on Kleene's logic, equipped with an LPF-style implication to support induction-like reasoning. Counting is approximate: the semantics distinguishes between having many versus few solutions (while using 3-valued logic to allow an explicit indeterminate region). The resulting ``$\Sigma^c_1$-consequences''  admit probabilistic polynomial-time witnessing (see \citep[Theorem 6.2.20]{Jerabek-phd}).

The counting framework in these theories is considerably different from ours, relying on exact
counting terms or approximate counting quantifiers in different logical settings. We refer to
these references for details.

\paragraph{Bounded reverse mathematics.} Cook and Nguyen \citep{cook_nguyen_2010} provide a thorough exposition of the bounded reverse mathematics program, systematically developing theories of bounded arithmetic and presenting formalizations of key combinatorial and algorithmic results, with the goal of identifying the weakest axioms sufficient to prove them. 

Finally, we refer to \citep{Le14, thesis_Pich, DBLP:journals/apal/MullerP20, AT25} and references therein for numerous  examples of results from theoretical computer science that can be formalized in bounded arithmetic. It would be interesting to further investigate which of these formalizations can be carried out in $\APX_1$.\\

\noindent \textbf{Acknowledgements.} We would like to thank Jan Kraj{\'{i}}{\v{c}}ek for discussions related to the $\forall \Sigma^b_1$-conservativity of $\APC_1$ over $\PV_1$ and for bringing some references to our attention. We also thank Dimitrios Tsintsilidas for comments on an earlier version of the paper. We would also like to thank Surya Mathialagan, Shuo Pang, and Hanlin Ren for helpful discussions. Finally, we thank the anonymous STOC reviewers for useful comments about the presentation.

\section{Formal Definition of the Theory}\label{sec:sec_theory}

In this section, we formally define the equational theory $\Apx$ and its first-order counterpart $\Apx_1$. We assume basic familiarity with Cook's Theory $\PV$ \cite{Coo75}. The necessary background can be found in \cite[Chapter 12]{Krajicek-book}, \cite[Chapter 12]{krajicek_2019}, and \citep{DBLP:journals/eccc/Li25}. 

\subsection{Notation}
\label{sec: def: notation}

\paragraph{Base Theory.} Let $\PV(\P)$ be the theory $\PV$ relative to a fresh function symbol $\P$ with the axiom 
\[
\ITR(\P(C,\Delta),C\sm \Delta\sm \Delta) = \eps 
\]
that bounds the output length of the function symbol $\P$. Intuitively, the axiom means that the output length of $\P(C,\Delta)$ given strings $C$ and $\Delta$ as its input is at most $|C|\cdot |\Delta|^2$. This axiom ensures that $\PV(\P)$-terms are feasible functions in the standard model. Interested readers are referred to \cite{Jerabek04,Jerabek07,Jerabek07-dWPHP} for more examples of relativized $\PV$.

Slightly different from Cook's original notation, we will define $\PV(\P)$ with constant symbol $\eps$ (rather than $0$) and replace the initial functions $s_1(x)$ and $s_2(x)$ by $s_1(x)$ and $s_0(x)$, respectively. Other functions $\TR$, $\ITR$, $\circ$, and $\#$ are defined as in Cook's original definition. Let $\hat\P$ be a function over Boolean strings. The standard model of $\PV(\P)$ with respect to $\hat\P$, denoted by $\bbM(\hat\P)$, is defined as follows: 
\begin{compactitem}
\item The universe consists of all Boolean strings of finite length. 
\item The constant symbol $\eps$ is interpreted as the empty string. 
\item $s_b(x)$ (for $b\in\zo$) is interpreted as the function that appends $b$ to the right of the string $x$. 
\item $\TR(x)$ is interpreted as the function that trims the rightmost bit of $x$; $\ITR(x,y)$ is interpreted as the function that trims $x$ for $|y|$ times. 
\item $\circ$ is interpreted as string concatenation, while $\#(x,y)$ is interpreted as the function that concatenates $|y|$ copies of $x$.  
\item The function symbol $\P$ is interpreted as $\hat\P$.  
\end{compactitem}
A function introduced by one of the rules in $\PV$ (i.e.~introduction by terms or introduction by limited recursion on notation) is interpreted as the unique function over the universe that satisfies its introduction rule.

\newcommand{\IsCkt}{\mathsf{IsCkt}}
\newcommand{\IsConst}{\mathsf{IsConst}}
\newcommand{\Null}{\mathsf{Null}}
\newcommand{\True}{\mathsf{True}}

\paragraph{Circuits.} We define a few $\PV$ functions that manipulate Boolean circuits. Let $\IsCkt(C,z)$ be the $\PV$ function that outputs $1$ if $C$ is a circuit with input length $|z|$, and outputs $0$ otherwise; $\IsConst(C)$ be the $\PV$ function that outputs $1$ if $C$ is a circuit that does not read its input (i.e., there is no path from the output gate to an input variable); $\Bool(C)$ outputs $1$ if $\IsConst(C)$ and $C$ outputs $1$ and outputs $0$ if $\IsConst(C)$ and $C$ outputs $0$ (otherwise, outputs, e.g., $\eps$); $\Fix_b(C)$ be the function that, given a circuit $C$, output the circuit obtained from $C$ by fixing the rightmost input bit to be $b\in\zo$; $\Eval(C,x)$ be the function that evaluate the circuit $C$ on the input $x$. One may think of any straightforward implementations of these functions in $\PV$ as $\PV$ is a robust theory. 

For simplicity, we use the following abbreviations: 
\begin{compactitem}
\item For $n\in\Log$, $C\in B_n$ denotes $\IsCkt(C,1^n)$, i.e., $C$ is a circuit with $n$ input bits. Moreover, $\forall C\in B_n~\varphi(C)$ denotes $\forall C~(\IsCkt(C,1^n)\to \varphi(C))$ and $\exists C\in B_n~\varphi(C)$ denotes $\exists C~(\IsCkt(C,1^n)\land \varphi(C))$. 
\item For a circuit $C$, $C(x)$ denotes $\Eval(C,x)$. 
\item We use $\Null_n$ to denote the circuit with $n$ input bits that does not read its input bits and outputs $0$, and $\True_n$ to denote the circuit with $n$ input bits that does not read its input bits and outputs $1$. 
\end{compactitem}

\paragraph{$\PV$-Terms and Functions.} We say that a $\PV(\P)$-term is a $\PV$-term if its construction indicates that it does not call the $\P$-oracle. Formally, the set of $\PV$-terms is the minimum set that contains all base functions and is close under composition and the function formulation rules in $\PV$, that is: 
\begin{compactitem}
\item Base functions $s_i(x)$, $\TR(x)$, $\ITR(x)$, $\circ(x,y)$ are $\PV$-terms.
\item If $t$ is a $\PV$-term, the function $f_t$ introduced with the defining axiom $f_t = t$ is also a $\PV$ term. 
\item A term formulated from $\PV$-terms by composition is a $\PV$ term. 
\item If $g,h_0,h_1,k_0,k_1$ are $\PV$-terms, the function $f_\Pi$ constructed by limited recursion on notation from $\Pi=(g,h_0,h_1,k_0,k_1)$ is a $\PV$ term. 
\end{compactitem}
We say that a function symbol $f$ is a $\PV$-function if it is a $\PV$-term.

By the Cook-Levin theorem (see \cite{DBLP:journals/corr/Pich14} for a formalization in $\PV$), $\PV$ terms can be converted into polynomial-size Boolean circuits on any given input length $n\in\Log$, and the correctness can be proved in $\PV$. Similarly, $\PV(\P)$ terms can be converted into polynomial-size $\P$-oracle circuits on any given input length $n\in\Log$ with $\PV(\P)$-provable correctness. 

\paragraph{Encoding Conventions and Arithmetic Operations.} For functions and multi-output circuits, we will treat $\eps$ as \texttt{false} and any other value as \texttt{true} when we define the acceptance probability of the circuit. Let $\Bool(x)$ be the $\PV$ function that outputs $0$ if $x$ is $\eps$ and outputs $1$ otherwise. 

\newcommand{\IsNumber}{\mathsf{IsNumber}}
\newcommand{\IsRational}{\mathsf{IsRational}}

We assume that natural numbers are encoded in binary in a straightforward way. For instance, one can encode a natural number in dyadic notation as in \cite{Coo75} so that basic arithmetic operations such as addition, multiplication, and comparison can be defined naturally. We assume that the encoding can be verified efficiently, i.e., there is a $\PV$ function symbol $\IsNumber(x)$ that outputs $1$ if $x$ is the encoding of a natural number, and outputs $0$ otherwise, and use $[x\in\bbN]$ as the shorthand of $\IsNumber(x)$. We use $[x]_\bbN$ to denote the natural number encoded by $x$ when we want to be explicit about the interpretation of $x$ as a natural number. 

Elementary arithmetic operations, such as addition, multiplication, and comparison, can be defined naturally. Moreover, basic properties of the $\PV$ function symbols representing these operations can be established in $\PV$ (whenever the operations involve a feasible number of elements). 

We specify a standard encoding of rational numbers in $\PV$: We use the pair $(x,y)$ to denote the rational number
\[
\sum_{i=1}^{|x|} x_i 2^{i-1} + \sum_{j=1}^{|y|} y_j 2^{-j}.  
\]
Similarly to the encoding of natural numbers, we assume a $\PV$ function symbol $\IsRational(x)$ that tests whether $x$ encodes a rational number, and we use $[x \in \bbQ]$ as shorthand for $\IsRational(x)$. We write $[x]_\bbQ$ to denote the rational number encoded by $x$. We might directly treat $x$ as a rational number if this is clear from the context.

\paragraph{Data Structures and Explicit Sets.} We assume a straightforward encoding of \emph{explicit} sets (and multisets), i.e., sets of feasible size, that supports operations such as selection, union, intersection, and membership query. An explicit set $S$ may be encoded as a list containing all the elements in it. Note that this is different from the \emph{feasibly definable} sets in \cite{Jerabek04}, which may be of infeasible size. When discussing explicit sets,  we use $|S|$ to denote the size of $S$, i.e., the number of elements contained in $S$. When $S = \{q_1, \ldots, q_\ell\}$ is a set of rational numbers, we use $||S|| = \sum_{i=1}^{\ell} |q_i|$ to denote its $\ell_1$-norm, i.e., the sum of the absolute values of the elements in $S$.\footnote{While we abuse notation and employ $|\cdot|$ to denote both length and absolute value, the meaning will be clear in each context.} 

Moreover, for an explicit set $S$ and a quantifier-free formula $\varphi(x)$ in the language of $\PV_1$, we can define the universal quantification over $S$, denoted by $\forall x\in S:\varphi(x)$, as a \emph{quantifier-free} formula in $\PV_1$ that is true if and only if every element $x\in S$ satisfies $\varphi(x)$ (in the standard model). This is possible as $S$ is explicitly encoded, and thus there is a straightforward feasible algorithm that given the encoding of $S$, enumerates $S$ and checks whether there is an $x\in S$ such that $\varphi(x)$ is false. Similarly, we can define the existential quantification over $S$, denoted by $\exists x\in S:\varphi(x)$. All relevant deduction rules about quantification over sets should be admissible in $\PV$ assuming standard encoding, e.g., 
\begin{align}
& (\exists_i):\quad \frac{\Gamma\vdash \varphi[y/t]\quad \Gamma\vdash t\in S}{\Gamma\vdash \exists y\in S:\varphi}\\  & (\exists_e):\quad\frac{\Gamma\vdash\exists y\in S:\varphi\quad \Gamma,z\in S, \varphi[y/z]\vdash\psi}{\Gamma\vdash \psi}
\end{align} 
where in $(\exists_i)$ $t$ is an arbitrary term, and in $(\exists_e)$ $z$ must be a fresh variable that has no occurrence in $\Gamma,\varphi,\psi,y$. This ensures that most natural mathematical proofs regarding explicit sets can be easily formalized in $\PV$; see \cite[Chapter 4]{DBLP:journals/eccc/Li25} for more discussions. In the rest of the paper, we will only informally describe the proof and pinpoint the key idea to formalize it in $\PV$ if it is unclear.  

\subsection{Theory \texorpdfstring{$\Apx$}{}}

Intuitively, we will define the theory as $\PV(\P)$ together with additional axioms intended to formalize that $\P$ approximately computes the acceptance probability of a given circuit up to a specified precision. In other words, for every deterministic circuit $C$ and any $\Delta$, $\P(C,\Delta)$ outputs the encoding of a rational number in $[0,1]$ guaranteed to lie within the interval $[p - 1/|\Delta| , p + 1/|\Delta|]$, where $p$ denotes the acceptance probability of $C$. For simplicity, we will also denote $\P(C,\Delta)$ by $\P_\delta(C)$, where $\delta^{-1}\eqdef |\Delta|\in\Log$ is the precision of counting. 

\paragraph{Language of $\Apx$.} $\Apx$ is an equational theory whose language extends that of $\PV$ by including the new function symbol $\P$ and every additional function symbol that can be introduced through the usual function symbol introduction rules of $\PV$ (including composition and limited recursion on notation).\\

Although $\PV$ is an equational theory operating over strings, propositional connectives, arithmetic operations, and arithmetic relations (e.g., comparison between rational numbers) can be encoded by appropriate equations with desired properties (see, e.g., \citep{DBLP:journals/eccc/Li25}). This allows us to formulate the following axioms. 

\paragraph{Axioms of $\Apx$.} The axioms involve only universally quantified variables and can therefore be expressed as $\PV(\P)$ equations:

\dummylabel{basic}{Basic Axiom}
\dummylabel{boundary}{Boundary Axiom}
\dummylabel{precision}{Precision Consistency Axiom}
\dummylabel{local}{Local Consistency Axiom}
\dummylabel{logical}{Logical Rules}
\dummylabel{induction}{Induction Rule}
\newcommand{\refaxiom}[1]{\emphax{\ref{#1}}}

\begin{compactitem}
\item (\emph{Basic Axiom}). Any provable equation in $\PV(\P)$ is an axiom of $\Apx$. Moreover, $[\P(C,\Delta)\in \bbQ]=1$, $\P(C,\Delta)\le 1$, $0\le \P(C,\Delta)$ are axioms of $\Apx$, where ``$x\le y$'' is formalized by an appropriate $\PV$ equation that is valid if and only if $[x]_\bbQ\le[y]_\bbQ$.  
\item (\emph{Boundary Axiom}). For any $C\in B_n$, $\IsConst(C)\to \P_\delta(C) = \Bool(C)$. This axiom indicates that the acceptance probability of a syntactically constant circuit\footnote{In other words, there is no path from the output gate to an input variable, i.e., the relevant part of the circuit consists of Boolean operations applied to constant input bits.} that always outputs $b\in\zo$ is equal to $b$. 
\item (\emph{Precision Consistency Axiom}). For every $n,\delta_1^{-1},\delta_2^{-1},\beta^{-1}\in\Log$ and every $C\in B_n$,
\begin{equation}
\left|\P_{\delta_1}(C) - \P_{\delta_2}(C)\right|\le \delta_1+\delta_2 + \beta.
\end{equation}
Intuitively, this axiom states that the approximate counting function $\P$ should be consistent with different precision parameters.  
\item (\emph{Local Consistency Axiom}). For every $n,\delta^{-1},\beta^{-1}\in\Log$ and every $C\in B_n$, 
\begin{equation} 
\left|\P_\delta(C)-\frac{\P_\delta(\Fix_0(C))+\P_\delta(\Fix_1(C))}{2}\right| \le 2\cdot \delta + \beta.
\end{equation}
Intuitively, this axiom states that the approximate counting function $\P$ should be self-consistent in the sense that the acceptance probability of a circuit $C$ is close to the average acceptance probability of the circuit obtained by randomly fixing the rightmost input bit of $C$.
\end{compactitem}

\paragraph{Rules of $\Apx$.} Finally, the theory contains the following derivation rules:

\begin{compactitem}
\item (\emph{Logical Rules}). We include the logical rules of $\PV$: 
\begin{compactenum}
\item $t_1=t_2\vdash t_2=t_1$ 
\item $t_1=t_2,t_2=t_3\vdash t_1=t_3$ 
\item $t_1=t_2\vdash t_1(x/t) = t_2(x/t)$ 
\item $u=v\vdash t(x/u) = t(x/v)$ 
\end{compactenum}
\item (\emph{Structural Induction Rule}). Let $f_1(x,\vec y)$ and $f_2(x,\vec y)$ be $\PV(\P)$ functions. For $\PV(\P)$ functions $g(\vec y)$, $h_0(x,\vec y,z)$, and $h_1(x,\vec y,z)$, if the following equations are provable for $j\in\{1,2\}$ and $i\in\zo$
\begin{align}
& f_j(\eps,\vec y) = g(\vec y) \\ 
& f_j(s_i(x),\vec y) = h_i(x, \vec y, f_j(x,\vec y))
\end{align}
then we can deduce the equation $f_1(x,\vec y) = f_2(x,\vec y)$. This rule is analogous to the original induction rule in $\PV$. Intuitively, it means that if $f_1$ and $f_2$ are both identical to the function recursively defined from $g,h_0,h_1$, they are the same function.  
\end{compactitem}

\vspace{0.2cm}

\begin{remarkn}{Nested Probability Symbols}
    We stress that the function symbol $\P$ does not take $\P$-oracle circuits as input, and therefore sentences involving nested probability symbols such as 
    \[
    \Pr_x[\,\Pr_y[\varphi(x,y)]>\eps\,] > \delta 
    \]
    cannot be formalized directly in $\APX$. Nevertheless, for predicates $\varphi(x,p)$ and $\psi(y)$ that do not share inputs, the nested probability 
    \[
    \Pr_x\left[\psi\!\left(x,\Pr_{y}[\varphi(y)]\right)\right]
    \]
    can be expressed in $\APX$, as we can define a $\P$-oracle algorithm that first calculates $p=\Pr_{y}[\varphi(y)]$ by calling the $\P$-oracle, and then calculates $\Pr_x[\psi(x,p)]$ by calling the $\P$-oracle again.  
\end{remarkn}

\vspace{0.2cm}

\begin{remarkn}{Elementary Functions and Precision Issues}
In our formalizations, we sometimes employ elementary functions over the reals, such as $\sqrt{x}$, $\ln(x)$, or $\exp(x)$ (typically involving constants or for an $x$ of the form $a/b$ with $a,b \in \Log$). As the output of these functions may be an  irrational number, to implement them in $\PV$, we need to define each function by taking an additional parameter that determines the number of digits of precision. When the functions are defined appropriately, appropriate formulations of basic inequalities  (e.g.,~$\exp(x)\ge 1+x$) can be proved in $\PV$ by directly formalizing a standard mathematical proof. In this paper, in the context of the use of such values and inequalities, we always have a margin to tolerate any potential precision issue (e.g.,~the error term $\beta\in\Log^{-1}$ in axioms). For this reason, and following standard practice, we will not elaborate on the actual implementation of such functions and their basic properties. 
\end{remarkn}

\subsection{Models of \texorpdfstring{$\Apx$}{}}\label{sec:models}

Let $\bbM$ be the standard model of $\PV$, i.e., the universe is $\zo^*$, $\eps$ is interpreted as the empty string, and $s_0(x)$, $s_1(x)$ are interpreted as the functions that append $0$ and $1$ to $x$, respectively. For every function $\hat\P:\zo^*\times \zo^*\to\zo^*$, $\bbM(\hat\P)$ is the model of $\PV(\P)$ where the function symbol $\P$ is interpreted as the function $\hat\P$. 

\begin{definition}[Standard Models]
    Let $\hat\P \colon \zo^* \times \zo^* \to \zo^*$ be any \emph{correct approximate counting function}, i.e., 
    \begin{enumerate}
        \item $\hat\P(C,\Delta)$ outputs (the encoding of) a rational number $q\in [0,1]$ within the interval $[p-1/|\Delta|, p+1/|\Delta|]$ for every circuit $C:\zo^*\to\zo$ and $\Delta\in\zo^*$, where $p$ is the acceptance probability of $C$ and $q$ is of length at most $|C|\cdot |\Delta|^2$; and 
        \item $\hat\P(C,\Delta)$ outputs the correct value in $\{0,1\}$ whenever the input circuit $C$ satisfies $\IsConst(C)$. 
    \end{enumerate}
    We say that $\bbM(\hat\P)$ is \emph{a standard model} of $\Apx$. 
\end{definition}

\begin{definition}[Admissible Models]
    Let $\hat\P\colon \zo^*\times \zo^*\to\zo$ be a function. We say that $\bbM(\hat\P)$ is \emph{an admissible model} of $\APX$ if it satisfies all axioms and rules of $\APX$.
\end{definition}

The crucial observation is that a model is standard if and only if it is admissible. The proof of the theorem is highly constructive; indeed, similar induction arguments occur multiple times in the development of basic probability theory in $\APX_1$ (see \Cref{sec: basic properties}). 

\begin{theorem}\label{thm: standard equiv admissible}
Let $\hat \P:\zo^*\times\zo^*\to\zo^*$ be any function. Then $\bbM(\hat\P)$ is a standard model if and only if it is an admissible model.  
\end{theorem}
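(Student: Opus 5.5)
We prove the two directions separately. Throughout, "$\bbM(\hat\P)$ satisfies the axioms" means each universally quantified axiom holds for all strings, with arbitrary $\beta$, $\delta$, $\delta_1$, $\delta_2$ such that $\beta^{-1},\delta^{-1},\ldots$ are lengths of strings.

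\emph{Standard $\Rightarrow$ admissible.} Let $\hat\P$ be a correct approximate counting function. The output-length axiom $\ITR(\P(C,\Delta),C\sm\Delta\sm\Delta)=\eps$ holds by the length bound, and the Boundary Axiom holds by condition 2. The Basic Axiom holds because every $\PV(\P)$-provable equation is true in any model of $\PV(\P)$, and because $\hat\P$ outputs rationals in $[0,1]$. For Precision Consistency, with $p$ the true acceptance probability,
\[
|\hat\P_{\delta_1}(C)-\hat\P_{\delta_2}(C)|\le |\hat\P_{\delta_1}(C)-p|+|p-\hat\P_{\delta_2}(C)|\le\delta_1+\delta_2 .
\]
For Local Consistency, let $p_0,p_1$ be the true acceptance probabilities of $\Fix_0(C)$ and $\Fix_1(C)$. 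Then $p=(p_0+p_1)/2$ exactly, and the triangle inequality gives the bound $\delta+\tfrac12(\delta+\delta)=2\delta$. One caveat: if $\Fix_b(C)$ has no remaining inputs it need not be syntactically constant, but it is still a circuit over $\zo^0$ whose acceptance probability is its output bit, so correctness applies. Finally, the derivation rules (logical rules and structural induction) are sound in every model of $\PV(\P)$ of the form $\bbM(\hat\P)$. Induction is sound because, by induction on string length, the unique function defined by recursion from $g,h_0,h_1$ equals both $f_1$ and $f_2$.

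\emph{Admissible $\Rightarrow$ standard.} This is the substantive direction. The Basic Axiom gives rationality, membership in $[0,1]$, and the length bound, and the Boundary Axiom gives condition 2. What remains is to show $|\hat\P(C,\Delta)-p(C)|\le 1/|\Delta|$ for every $C\in B_n$. Since $\beta$ is universally quantified, it suffices to prove, for every $\beta>0$ of the form $1/|B|$, that $|\hat\P_\delta(C)-p(C)|\le\delta+\beta$ and then let $\beta\to0$.

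Naively unrolling Local Consistency by induction on $n$ loses $(2\delta+\beta)$ at each level, giving error about $2n\delta$, which is not good enough. We instead use the precision-smoothing trick from the paper's techniques overview. Pick a fine precision $\eta$ with $\eta^{-1}=|H|$ huge, say $\eta\le\beta/(4n+4)$, and a slack $\beta'\le\beta/(2n+2)$. Then we argue as follows.
\begin{compactenum}
\item By induction on $k\le n$, every circuit $D\in B_k$ obtained from $C$ by fixing a suffix of inputs satisfies $|\hat\P_\eta(D)-p(D)|\le k(2\eta+\beta')$. The base case $k=0$ is a circuit with no inputs. Such a circuit is either syntactically constant, in which case the Boundary Axiom gives exact agreement, or it reads no inputs because it has none. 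In the latter case we bridge by Local Consistency applied one level up, or treat the $k=0$ circuits by noting $\Fix$ produces circuits with no reachable input variables, and invoke the Boundary Axiom. I expect to need care exactly here: I would check that $\Fix_b$ of a one-input circuit is syntactically constant (an input gate replaced by a constant), and otherwise start the induction at $k=1$ using Local Consistency together with the Boundary Axiom. The inductive step is Local Consistency plus averaging, using $p(D)=\tfrac12(p(\Fix_0D)+p(\Fix_1D))$.
\item By this bound and the choice of $\eta$ and $\beta'$, we get $|\hat\P_\eta(C)-p(C)|\le\beta/2$.
\item Precision Consistency with $\delta_1=\delta$, $\delta_2=\eta$, and slack $\beta/4$ gives $|\hat\P_\delta(C)-\hat\P_\eta(C)|\le\delta+\eta+\beta/4$. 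Combining yields $|\hat\P_\delta(C)-p(C)|\le\delta+\beta$.
\end{compactenum}
Since $\beta$ was arbitrary, $|\hat\P_\delta(C)-p(C)|\le\delta$, so $\hat\P$ is a correct approximate counting function.

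The main obstacle is the base case and the bookkeeping of the syntactic behaviour of $\Fix$ and $\IsConst$. The analytic content is just the smoothing step, which is what removes the dependence on $n$.
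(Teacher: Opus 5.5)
Your proposal is correct and is essentially the paper's argument read in the forward direction. The paper argues by contradiction: it passes to a fine precision $\Xi$ via Precision Consistency, then follows a single path of bit-fixings (choosing $\sigma$ so the error shrinks by at most $3/|\Xi|$ per step) down to a $0$-input circuit, where the Boundary Axiom is violated; you instead bound the error at every level by induction and then smooth back to $\delta$.

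Two small points. First, your constants in step 2 only give $n(2\eta+\beta')\le n\beta/(n+1)$, not $\beta/2$; this is harmless because $\beta$ is arbitrary. Second, the base-case worry is unnecessary: a circuit in $B_0$ has no input variables, so it trivially satisfies $\mathsf{IsConst}$, which is exactly what the paper uses. Your fallback of starting the induction at $k=1$ would not help in any case, since Local Consistency at $k=1$ already refers to the oracle's values on $0$-input circuits.
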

\begin{proof}
    We will only prove the $(\Leftarrow)$ direction, as the converse is straightforward. Suppose, towards a contradiction, that $\bbM(\hat\P)$ is admissible but is not standard. Then there is a circuit $C:\zo^n\to\zo$ and $\Delta\in\zo^*$ such that 
    \begin{equation}
    \hat \P(C,\Delta) \notin p(C) \pm \frac{1}{|\Delta|},\label{equ: hat P is not standard} 
    \end{equation}
    where $p(C) \eqdef \Pr_{x\in\zo^n}[C(x)=1]$ is the acceptance probability of $C$.
    Note that $C$ and $\Delta$ are encoded by finite strings (in the standard model $\bbM$ of $\PV$), and $n\in\bbN$. 
    
    As $\bbM(\hat\P)$ is admissible, it must satisfy the \refaxiom{boundary}. Subsequently, \Cref{equ: hat P is not standard} does not hold when $n=0$. It suffices to consider the case that $n>0$. Suppose for contradiction we have  
    \begin{equation}
    \left|\hat \P(C,\Delta) - p(C)\right| > \frac{1}{|\Delta|}+\eps,
    \end{equation}
    where $\eps>0$. Let $\Xi \eqdef 1^{10(n+1)/\eps}$. As $\hat\P$ satisfies the \refaxiom{precision}, we have 
    \begin{equation}
    \left|\hat \P(C,\Xi)-p(C)\right| \ge \left|\hat \P(C,\Delta)-p(C)\right| - \left(\frac{1}{|\Delta|}+\frac{2}{|\Xi|}\right) > \eps - \frac{2}{|\Xi|}. \label{equ: P hat C far}
    \end{equation}

    Let $C_0,C_1$ be the circuits obtained by fixing the rightmost input bit of $C$ to be $0$ and $1$, respectively. As $\hat\P$ must satisfy the \refaxiom{local}, we have 
    \begin{equation}
    \left|\hat \P(C,\Xi) - \frac{\hat \P(C_0,\Xi)+\hat\P(C_1,\Xi)}{2}\right|\le \frac{3}{|\Xi|}, 
    \end{equation}
    Also, $p(C)=(p(C_0) + p(C_1))/2$ by its definition. Subsequently, there exists $\sigma\in\zo$ such that 
    \begin{equation}
    \left|\hat \P(C_\sigma,\Xi) - p(C_\sigma)\right|\ge \left|\hat \P(C,\Xi) -  p(C)\right|-\frac{3}{|\Xi|}. 
    \end{equation}
    
    Recall that $n\in\bbN$ is a standard integer. Let $C^{(0)}\eqdef C$, and $C^{(1)}\eqdef C_\sigma$. By the procedure defined above, for every $1 \leq i\le n$, we can define $C^{(i)}$ as the circuit obtained from $C^{(i-1)}$ by fixing the rightmost input bit such that 
    \[
    \left|\hat \P(C^{(i)},\Xi) - p(C^{(i)})\right| \ge  \left|\hat \P(C^{(i-1)},\Xi) - p(C^{(i-1)})\right| - \frac{3}{|\Xi|},
    \]
    and therefore by \Cref{equ: P hat C far} we  eventually have 
    \[
    \left|\hat \P(C^{(n)},\Xi)-p(C^{(n)})\right| \ge   \eps - \frac{6\cdot (n+1)}{|\Xi|} > 0. 
    \]
    Note that the circuit $C^{(n)}$ has input length $0$ and as a consequence computes a constant function. The value $p(C^{(n)})\in\zo$ is its acceptance probability. Since $\bbM(\hat\P)$ is admissible, it satisfies the \refaxiom{boundary}, and consequently $\hat \P(C^{(n)},\Xi) = p(C^{(n)})$. This  contradicts the above inequality.
\end{proof}

\begin{definition}
Among the standard models of $\Apx$, the one that interprets $\P$ by the exact counting function is called the \emph{exact standard model} of $\Apx$, denoted by $\bbM^*$.    
\end{definition}

\begin{proposition}[Soundness of $\APX$]\label{prop: soundness APX}
Provable equations in $\APX$ are true in any standard model of $\APX$. 
\end{proposition}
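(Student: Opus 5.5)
The plan is a standard soundness argument by induction on the length of the $\APX$ derivation. Fix an arbitrary standard model $\bbM(\hat\P)$. I will show that every axiom of $\APX$ holds in it, and that every derivation rule preserves truth in it, where truth of an equation means truth under all assignments of strings to its free variables. Soundness for every provable equation then follows. One preliminary point: by the output-length bound in the definition of a standard model, $\hat\P$ satisfies the axiom $\ITR(\P(C,\Delta),C\sm\Delta\sm\Delta)=\eps$, so $\bbM(\hat\P)$ is a model of $\PV(\P)$. Hence every $\PV(\P)$ function symbol has a well-defined interpretation there, namely the unique function satisfying its defining equations.

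\textbf{Axioms.} I will check the four groups in turn.
\begin{itemize}
\item The \refaxiom{basic}: equations provable in $\PV(\P)$ are true in any model of $\PV(\P)$. This is the usual soundness of $\PV$ relativized to an arbitrary oracle interpretation, since the defining axioms of each symbol hold by construction. The remaining clauses ($[\P(C,\Delta)\in\bbQ]=1$ and $0\le\P\le1$) are part of the definition of a correct approximate counting function.
\item The \refaxiom{boundary}: this is clause 2 of the definition.
\item The \refaxiom{precision}: both $\hat\P_{\delta_1}(C)$ and $\hat\P_{\delta_2}(C)$ lie within $\delta_1$ and $\delta_2$ respectively of $p(C)$. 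The triangle inequality gives the bound $\delta_1+\delta_2\le\delta_1+\delta_2+\beta$.
\item The \refaxiom{local}: the key identity is $p(C)=\tfrac12(p(\Fix_0(C))+p(\Fix_1(C)))$. Each of the three $\hat\P$ values is within $\delta$ of the corresponding true value, so the left-hand side is at most $\delta+\tfrac12(\delta+\delta)=2\delta\le 2\delta+\beta$.
\end{itemize}
Some care is needed with non-circuit inputs $C$. Each axiom is guarded by $C\in B_n$, so these cases are vacuous. The same guard, together with the convention that $\PV$ arithmetic on rationals is interpreted correctly in $\bbM$, justifies reading the encoded inequalities semantically.

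\textbf{Rules.} The four logical rules (symmetry, transitivity, substitution of terms for variables, and substitution of equals inside terms) clearly preserve validity under all assignments. The main obstacle, though still routine, is the \refaxiom{induction}. Suppose $f_1$ and $f_2$ both satisfy the recursion equations with $g,h_0,h_1$, and these equations are true in the model. Then I show by ordinary induction on $|x|$ in the metatheory that $f_1(x,\vec y)=f_2(x,\vec y)$ for all strings $x$ and $\vec y$:
\begin{itemize}
\item base case: for $x=\eps$ both sides equal $g(\vec y)$;
\item step: for $s_i(x)$, both sides equal $h_i(x,\vec y,f_j(x,\vec y))$, and by the inductive hypothesis the last argument is the same for $j=1,2$.
\end{itemize}
This uses that the universe of $\bbM$ consists exactly of finite strings built from $\eps$ by $s_0,s_1$, which is what makes the model standard.

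Combining these, a straightforward induction on derivations shows that every equation provable in $\APX$ is true in $\bbM(\hat\P)$. Since $\hat\P$ was an arbitrary correct approximate counting function, the claim holds in every standard model.
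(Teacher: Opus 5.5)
Your proposal is correct and takes the same approach as the paper, which says only that the claim ``can be verified by induction on the proof''; you have filled in the axiom-by-axiom and rule-by-rule checks (the triangle inequality for the \refaxiom{precision}, the identity $p(C)=\tfrac12(p(\Fix_0(C))+p(\Fix_1(C)))$ for the \refaxiom{local}, and metatheoretic induction on $|x|$ for the \refaxiom{induction}). One small inaccuracy is your claim that every axiom is guarded by $C\in B_n$: the range clauses of the \refaxiom{basic} and the $\PV(\P)$ output-length axiom are unguarded, while the definition of a standard model constrains $\hat\P$ only on circuits, so you need the implicit (and surely intended) convention that $\hat\P$ returns a suitably short rational in $[0,1]$ on every input --- this is an imprecision in the paper's definitions rather than a flaw in your argument.
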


\begin{proof}
    This can be verified by induction on the proof. 
\end{proof}

\subsection{First-Order Theory \texorpdfstring{$\Apx_1$}{Apx1}} 

In analogy with the first-order theory $\PV_1$, we will introduce a first-order theory $\Apx_1$ that includes all $\Apx$ provable equations as well as convenient deduction rules. 

\paragraph{Language of $\Apx_1$.} The language of the first-order theory $\Apx_1$ includes all $\PV(\P)$ symbols.

\paragraph{Axioms of $\Apx_1$.} The theory is axiomatized by the standard first-order logic with equality together with the following non-logical axiom schemes: 
\begin{compactitem}
\item For any provable equation $s(\vec x) = t(\vec x)$ of $\Apx$, $\forall\vec x~s(\vec x) = t(\vec x)$ is an axiom of $\Apx_1$. 
\item $\forall x~\forall y~(x=y\leftrightarrow s_i(x)=s_i(y))$, $i\in\zo$, is an axiom of $\Apx_1$.  
\item $\forall x~\eps\ne s_i(x)$ is an axiom of $\Apx_1$. 
\item $\forall x~s_0(x) \ne s_1(x)$ is an axiom of $\Apx_1$. 
\item ($n$-induction). Let $\varphi$ be a quantifier-free formula and $x_1,\dots,x_n,\vec y$ be variables. Suppose that $\varphi$ does not contain free variables other than $x_1,\dots,x_n$ and $\vec y$. Then 
\[
\forall \vec y~\left(\bigwedge_{j\in[n]} \varphi(x_j/\eps)\land\forall \vec x\left(\bigwedge_{\vec \sigma\in\zo^n}(\varphi\to\varphi_{\vec \sigma})\right) \to \forall\vec x~\varphi\right),
\]
where $\varphi_{\vec \sigma}$ denotes the formula $\varphi$ with all free occurrences of $x_i$ substituted by $s_{\sigma_i}(x_i)$ for each $i\in[n]$, is an axiom of $\Apx_1$.\\
\end{compactitem}
\dummylabel{ind}{$n$-Induction Axiom}
\dummylabel{ind 1}{$1$-Induction Axiom}
\newcommand{\refind}[1]{(#1-)\ref{ind}}

\newcommand{\Leq}{\mathsf{Leq}}
\newcommand{\Less}{\mathsf{Less}}
\newcommand{\calM}{\mathcal{M}}

We observe that $\Apx_1$ satisfies the following properties. 

\begin{proposition}\label{prop: apx1 universal}
    $\Apx_1$ admits a universal axiomatization. 
\end{proposition}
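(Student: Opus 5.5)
The plan follows the standard argument that $\PV_1$ is universally axiomatizable \citep{KrajicekPT91, Coo75}, relativized to the oracle symbol $\P$. The only non-universal axioms of $\Apx_1$ are the instances of the $n$-Induction Axiom. All other axioms are already universal: the $\forall\vec x\, s=t$ for $\Apx$-provable equations, the injectivity of $s_i$, $\eps\ne s_i(x)$, and $s_0(x)\ne s_1(x)$. (Each induction instance is $\forall\Pi_1$ once the inner $\forall\vec x$ in the hypothesis is pulled out, so it is not universal.) The goal is to exhibit a universal theory $T$ in the language $\PV(\P)$ with the same theorems as $\Apx_1$.

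The first step is to introduce, for each quantifier-free $\PV(\P)$-formula $\varphi(\vec x,\vec y)$, a $\PV(\P)$ function symbol $f_\varphi(\vec x,\vec y)$ that performs binary search for a counterexample. Let $\chi_\varphi$ be the $\PV(\P)$ characteristic function of $\varphi$; it exists because connectives and equality tests are $\PV$-definable, and composition with $\P$ is allowed. By limited recursion on notation, $f_\varphi(\vec x,\vec y)$ walks down prefixes of $\vec x$, simultaneously in all $n$ coordinates. It returns a tuple of prefixes $\vec u$ and a branch $\vec\sigma\in\zo^n$ such that $\varphi(\vec u)$ holds and $\varphi(\vec u_{\vec\sigma})$ fails, provided $\varphi(\vec x)$ fails and all base cases $\varphi(x_j/\eps)$ hold. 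The output is length-bounded, so limited recursion applies.

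The second step is to let $T$ consist of the universal axioms listed above together with, for each $\varphi$, the universal sentence
\[
\forall\vec x\,\forall\vec y\,\Bigl(\bigwedge_{j}\varphi(x_j/\eps)\land \neg\varphi(\vec x)\to \bigl(\varphi(\vec u)\land\neg\varphi(\vec u_{\vec\sigma})\bigr)[\,(\vec u,\vec\sigma)/f_\varphi(\vec x,\vec y)\,]\Bigr).
\]
This universal sentence implies the corresponding induction instance in pure logic, so $T\vdash\Apx_1$. Conversely, $\Apx_1$ proves each such sentence by quantifier-free induction on the length of the prefix processed by $f_\varphi$, using the defining equations of $f_\varphi$ (which are $\Apx$ axioms via the Basic Axiom) and case analysis on $\chi_\varphi$. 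Hence the two theories are equivalent.

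The main obstacle is bookkeeping: the recursion has to be set up simultaneously in $n$ variables to match the $n$-induction form, and the proof that it works has to stay within quantifier-free induction. This can be handled by reducing $n$-induction to $1$-induction on a single parameter $z$ that bounds the common prefix length, where $\varphi'(z)$ asserts $\varphi$ on the $|z|$-truncations. If that causes trouble, one can instead cite the equivalent Herbrandization: by the Kreisel–Krajíček–Pudlák–Takeuti argument, the theory has Skolem functions for quantifier-free induction, and these are $\PV(\P)$-definable because the language is closed under limited recursion. Nothing specific to $\P$ is used beyond this closure property.
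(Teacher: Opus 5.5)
Your approach is the paper's: the only non-universal axioms are the $n$-induction instances, and each is replaced by a universal statement about a $\PV(\P)$ prefix-search function $f_\varphi$. The paper asserts that this function's correctness is already provable in $\PV(\P)$, so the scheme becomes derivable from the remaining universal axioms. You instead add the Skolem sentences as new axioms and verify them inside $\Apx_1$, which amounts to the same thing.

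However, the step ``$\Apx_1$ proves each such sentence'' fails as written for $n\ge 2$. The cause is that your hypothesis evaluates the base case $\varphi(x_j/\eps)$ only at the given $\vec x$; the paper's sketch states the same equivalence and has the same imprecision. Take $n=2$ and $\varphi(x_1,x_2)\equiv |x_1|\neq |x_2|+1$. At $\vec x=(00,0)$, both $\varphi(\eps,0)$ and $\varphi(00,\eps)$ hold and $\varphi(00,0)$ fails. Yet $\varphi(\vec u)\to\varphi(\vec u_{\vec\sigma})$ holds for all $\vec u,\vec\sigma$, so no output of $f_\varphi$ can satisfy your conclusion, and the displayed sentence is false in every standard model. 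Concretely, the simultaneous prefix walk from $(00,0)$ bottoms out at $(0,\eps)$. That tuple has one coordinate equal to $\eps$ and the other truncated, and nothing in your hypothesis covers it.

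The repair is to read the base case of the scheme as universally quantified over the other coordinates, $\bigwedge_j\forall\vec x\,\varphi(x_j/\eps)$. With the base taken at a single $\vec x$, the scheme is itself false for the same $\varphi$ and $\vec x=(00,0)$. Then Skolemize that version: let $f_\varphi$ also output the bottom $\vec w$ of the chain, and use the universal axiom $\neg\varphi(\vec x)\to\bigvee_j\bigl(w_j=\eps\land\neg\varphi(\vec w)\bigr)\lor\bigl(\varphi(\vec u)\land\neg\varphi(\vec u_{\vec\sigma})\bigr)$. This sentence is true and is provable by open induction along the chain (indeed in $\PV(\P)$). Together with the universally quantified base and the step, it gives $\varphi(\vec x)$, so the rest of your argument goes through unchanged.
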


\begin{proof}[Proof Sketch]
    This is essentially the same as the proof that $\PV$ admits a universal axiomatization (see \cite{Coo75,Krajicek-book}). Note that all axioms of $\Apx_1$ are universal sentences except for the $n$-induction axiom scheme, which is a $\forall\exists$-sentence. In more detail, the $n$-induction axiom scheme is logically equivalent to the following sentence: For every $\vec y$ and $\vec x=(x_1,\dots,x_n)$ satisfying that 
    \begin{compactitem}
    \item $\varphi(x_j/\eps)$ for every $j\in[n]$, and 
    \item $\lnot \varphi$, 
    \end{compactitem}
    there exists an $\vec x'=(x_1',\dots,x_n')$ and $\vec \sigma=(\sigma_1,\dots,\sigma_n)\in\zo^n$ such that $\varphi(\vec x/\vec x')$ is true but $\varphi(\vec x/\vec x'_\sigma)$ is false, where $\vec x'_\sigma\eqdef (s_{\sigma_1}(x_1'),\dots,s_{\sigma_n}(x'_n))$. Nevertheless, there is a straightforward polynomial-time algorithm that outputs such $\vec x'$ given $\vec x$ and $\vec y$ by considering prefixes of $\vec x$, and the correctness of the algorithm can be proved in $\PV(\P)$. This can be used to show that the $n$-induction axiom scheme can be derived from other axiom schemes and $\Apx_1$ admits a universal axiomatization. 
\end{proof}

\begin{proposition}\label{prop: APX1 conservative}
    $\Apx_1$ is conservative over $\Apx$. 
\end{proposition}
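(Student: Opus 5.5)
We follow the standard argument that $\PV_1$ is conservative over $\PV$ (Cook; see also \cite{Krajicek-book}), relativized to the symbol $\P$. The claim is: if $\forall\vec x\,(s(\vec x)=t(\vec x))$ is provable in $\Apx_1$, then $s=t$ is provable in $\Apx$. Since $\Apx$ is an equational theory containing all of $\PV(\P)$ and closed under substitution and the structural induction rule, the proof has two parts. First we reduce an $\Apx_1$-proof to a quantifier-free derivation. Then we convert quantifier-free reasoning into equations.

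\textbf{Step 1 (universal axiomatization and Herbrand).} By \Cref{prop: apx1 universal}, $\Apx_1$ is axiomatized by universal closures of quantifier-free $\PV(\P)$-formulas. These are the $\Apx$-equations, the axioms on $s_0,s_1,\eps$, and the instances of the $n$-induction scheme, rewritten as universal sentences using the polynomial-time ``prefix search'' witness term that the proof of \Cref{prop: apx1 universal} provides. Suppose $\Apx_1\vdash\forall\vec x\,s=t$. By Herbrand's theorem (or cut elimination for a universal theory), there are finitely many substitution instances $A_1,\dots,A_r$ of these quantifier-free axioms, obtained by substituting $\PV(\P)$-terms, such that $A_1\wedge\dots\wedge A_r\to s=t$ is a propositional tautology in the atoms $u=v$.

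\textbf{Step 2 (propositional logic inside $\Apx$).} As in $\PV$, we define $\PV$ functions for the characteristic function of equality $\mathsf{eq}(u,v)$ and for the Boolean connectives. Every quantifier-free formula $\varphi$ is thereby translated into a term $t_\varphi$, and we show in $\Apx$ the following facts.
\begin{enumerate}
\item $\mathsf{eq}(u,v)=1$ implies $u=v$. This uses the structural induction rule on a suitable function, and its proof is copied verbatim from $\PV$.
\item Each axiom instance $A_i$ gives $t_{A_i}=1$.
\item Tautological implication is derivable. If $t_{A_1}=1,\dots,t_{A_r}=1$ are derivable and $A_1\wedge\dots\wedge A_r\to\psi$ is a tautology, then $t_\psi=1$ is derivable. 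This is done by case analysis on Boolean values, which in $\PV$ is handled by definition by cases together with the induction rule.
\end{enumerate}
Because $\PV(\P)$ is $\PV$ relativized to the new symbol, all of these $\PV$ lemmas hold for terms containing $\P$ without change. The $\P$-axioms enter only as ground equations (the \emph{Basic} axioms), and those are already axioms of $\Apx$.

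\textbf{Step 3 (the delicate axioms).} For the injectivity and disjointness axioms on $s_0,s_1,\eps$, the corresponding characteristic terms are $\PV$-provably equal to $1$, so their translations hold. The main obstacle is the $n$-induction instances. For these, we show that the translated universal form is derivable in $\Apx$. Let $w(\vec x,\vec y)$ be the witness term. It searches prefixes of $\vec x$, by limited recursion on notation, for the point where $\varphi$ stops holding. The equation to establish says, roughly,
\[
t_{\varphi}(\vec x)\vee\neg\textstyle\bigwedge_j t_{\varphi(x_j/\eps)} \vee\big(t_\varphi(w)\wedge\neg t_{\varphi_{\vec\sigma}}(w)\big)=1 .
\]
This is proved by the structural induction rule on the recursion defining $w$, in exactly the way Cook derives the induction rule inside $\PV$. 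For $n>1$ we first reduce to simultaneous recursion on all coordinates, or encode the tuple so that the problem becomes the $1$-variable case. Once Step 3 is done, Step 2 yields $t_{s=t}=1$, and fact 1 of Step 2 then gives $s=t$ in $\Apx$.
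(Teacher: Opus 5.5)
Your overall route is the standard Cook argument that the paper invokes by reference to $\PV_1$ over $\PV$: universal axiomatization, Herbrand, translating open formulas into terms, and deriving the translated axiom instances in $\Apx$. Steps~1--2 are fine apart from one routine omission. Herbrand's theorem yields a tautology only modulo equality, so you must also include instances of reflexivity, symmetry, transitivity and congruence among the $A_i$. Their translations are routine in $\PV$, e.g.\ via $f(\mathsf{Cond}(b,u,v))=\mathsf{Cond}(b,f(u),f(v))$ and $\mathsf{Cond}(\mathsf{eq}(u,v),u,v)=v$.

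The real error is the equation you propose to establish in Step~3: for $n\ge 2$ it is false. Take $n=2$, $\varphi(x_1,x_2)\equiv |x_2|\neq |x_1|+1$, and any $\vec x$ with $|x_1|=1$, $|x_2|=2$.
\begin{itemize}
\item $\varphi(\vec x)$ fails, while $\varphi(\varepsilon,x_2)$ and $\varphi(x_1,\varepsilon)$ both hold.
\item Appending one bit to every coordinate preserves $|x_2|-|x_1|$, so $\varphi(\vec w)\leftrightarrow\varphi(\vec w_{\vec\sigma})$ for all $\vec w,\vec\sigma$.
\end{itemize}
So your left-hand side evaluates to $0$ at this $\vec x$ in every standard model, whatever witness you choose. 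By \Cref{prop: soundness APX}, the equation cannot be derived in $\Apx$.

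The base hypotheses of the $n$-induction scheme must be read as universally quantified over the remaining coordinates. Read with $\vec x$ free, the scheme would itself be unsound, e.g.\ for $\varphi\equiv|x_1|=|x_2|$ at $\vec x=(\varepsilon,\varepsilon)$. Hence the Herbrand form needs a witness for the base case as well: the point where the simultaneous prefix search lands. Its other coordinates are prefixes of the $x_i$, not the $x_i$ themselves.

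Concretely, let $k=\min_i|x_i|$ and let $\vec x^{(k)}$ be $\vec x$ with the last $k$ bits of every coordinate deleted, so some coordinate of $\vec x^{(k)}$ is $\varepsilon$. The equation to derive is
\[
t_\varphi(\vec x)\vee\neg t_\varphi(\vec x^{(k)})\vee\bigl(t_\varphi(\vec w)\wedge\neg t_{\varphi_{\vec\sigma}}(\vec w)\bigr)=1,
\]
where $\vec w$ is the last point on the path $\vec x^{(k)},\vec x^{(k-1)},\dots,\vec x^{(0)}=\vec x$ at which $\varphi$ holds, and $\vec\sigma$ is the vector of bits appended at the next step. This statement is true. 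It is proved by the structural induction rule on a length counter along that path, as you intended, and for $n=1$ it coincides with your display. The sketch of \Cref{prop: apx1 universal} is phrased in the same loose way. With these two repairs your argument goes through.
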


\begin{proof}[Proof Sketch]
    The proof is essentially the same as the proof that $\PV_1$ is conservative over $\PV$. We refer interested readers to \cite{Coo75,Buss,Krajicek-book} for more details. 
\end{proof}

Similar to  $\PV_1$ (see \cite{KrajicekPT91}), we can show that a form of induction principle on quantifier-free formulas is provable in $\Apx_1$. In order to state this result, we need to make some remarks about notation.

Recall that we assume a straightforward encoding of natural numbers, such as the dyadic encoding in \cite{Coo75}, and use $[x]_\bbN$ to denote the natural number encoded by $x$. Let $\Less_\bbN(x,y)$ be the $\PV$ function that outputs $1$ if (in the standard model) $[x]_\bbN < [y]_\bbN$ and outputs $0$ otherwise. We use $[x<y]$ as a shorthand for $\Less_\bbN(x,y)$. We use $\forall x < y~\varphi(x)$ as a shorthand for $\forall x~([x < y] = 1\to \varphi(x))$, and $\exists x < y~\varphi(x)$ as a shorthand for $\exists x~([x<y] = 1\land \varphi(x))$. 

\begin{theorem}\label{thm: ind via binary search}
Let $\varphi(x,\vec y)$ be a quantifier-free formula. Then $\Apx_1$ proves  
\[
\forall \vec y~\forall b~\left(\varphi(0,\vec y) \land \forall x<b~(\varphi(x,\vec y)\to \varphi(x+1,\vec{y}))\to \varphi(b,\vec y)\right),
\]
where $0$ is the $\PV$-term encoding $0\in\bbN$ and $+$ is the $\PV$-function for addition of natural numbers. 
\end{theorem}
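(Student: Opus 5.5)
The plan is to mimic the standard proof that $\PV_1$ proves quantifier-free induction via binary search. The key point is that the statement only involves quantifier-free $\PV(\P)$-formulas. Every tool we need is a $\PV(\P)$-function together with the \refind{1} scheme (induction on notation), both available in $\Apx_1$.

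We argue in $\Apx_1$. Fix $\vec y$ and $b$, and suppose $\varphi(0,\vec y)$ holds but $\lnot\varphi(b,\vec y)$. First, I would define by limited recursion on notation a $\PV(\P)$-function $\mathsf{BS}(w,b,\vec y)$ that performs binary search on the interval $[0,b]$. It maintains a pair $(u,v)$ of numbers, starting from $(0,b)$. In each step it sets $m\eqdef\lfloor (u+v)/2\rfloor$ and evaluates the quantifier-free formula $\varphi(m,\vec y)$; this is possible because $\varphi$ is equivalent to an equation $t_\varphi(m,\vec y)=1$ for a $\PV(\P)$-term $t_\varphi$. If $\varphi(m,\vec y)$ is true it replaces $u$ by $m$, and otherwise it replaces $v$ by $m$. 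The recursion runs for $|w|$ steps, and the output is bounded by $b$, so limited recursion applies.

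Second, I would prove the invariant
\[
\psi(w)\;\eqdef\; [u_w\le v_w]\land \varphi(u_w,\vec y)\land\lnot\varphi(v_w,\vec y)\land [v_w-u_w\le \lceil b/2^{|w|}\rceil+1],
\]
where $(u_w,v_w)=\mathsf{BS}(w,b,\vec y)$. Since $\psi$ is quantifier-free, the \refind{1} on $w$ applies. The base case $w=\eps$ is exactly our assumptions. In the step case $w\mapsto s_i(w)$, the update rule preserves the first three conjuncts by a case split on $\varphi(m,\vec y)$, and the interval halves. These are elementary arithmetic facts about $\lfloor\cdot/2\rfloor$ that are already provable in $\PV$.

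Third, I would instantiate $w\eqdef b$, so that $|w|=|b|$ and $2^{|b|}>b$. This gives $v-u\le 1$. Together with $u\le v$, $\varphi(u)$ and $\lnot\varphi(v)$, it forces $v=u+1$. Moreover $u<v\le b$, so $u<b$ while $\varphi(u,\vec y)$ and $\lnot\varphi(u+1,\vec y)$ hold. This contradicts the hypothesis $\forall x<b~(\varphi(x)\to\varphi(x+1))$.

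I expect the main obstacle to be purely bookkeeping. One part is checking that the relevant arithmetic facts (about halving, comparison of dyadic numerals and the bound $v_w-u_w$) are provable in $\PV$, which the paper assumes. The other is checking that the relativization to $\P$ changes nothing, because the oracle only appears inside $t_\varphi$ and is treated as an arbitrary function symbol. Alternatively, the whole argument can be cited as the relativized version of the $\PV_1$ proof in \cite{KrajicekPT91}.
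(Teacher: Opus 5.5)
Your approach is the one the paper uses. The paper only sketches this theorem by citing the binary-search proof of open induction in $\PV_1$ and noting that it relativizes to $\PV(\P)$. Your write-up fills that sketch in, but one step fails as written.

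\textbf{The gap.} The invariant's length bound $v_w-u_w\le\lceil b/2^{|w|}\rceil+1$ is too weak for your third step. For $b\ge 1$ we have $\lceil b/2^{k}\rceil\ge 1$ for every $k$, so the right-hand side never drops below $2$. At $w\eqdef b$ you therefore only get $v-u\le 2$, which does not force $v=u+1$.

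\textbf{The fix.} Drop the $+1$. Write $d\eqdef v_w-u_w$ and $m=u_w+\lfloor d/2\rfloor$. The new gap is $\lfloor d/2\rfloor$ or $\lceil d/2\rceil$, and $\lceil\lceil b/2^{k}\rceil/2\rceil=\lceil b/2^{k+1}\rceil$. Hence the sharp bound $v_w-u_w\le\lceil b/2^{|w|}\rceil$ holds at $w=\eps$ and is preserved by the induction step. Once $2^{|w|}\ge b$ it gives $v-u\le 1$, and $u<v$ then yields $v=u+1$.

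Two smaller bookkeeping points:
\begin{itemize}
\item If numbers are encoded dyadically, as the paper allows, then $2^{|b|}>b$ can fail; for example, $b=2$ is a one-symbol string. Instantiate $w$ with a string of length $|b|+1$ instead.
\item Your final step uses $v\le b$, so add $v_w\le b$ to $\psi(w)$. It is preserved because $v$ only decreases.
\end{itemize}

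With these changes the argument goes through in $\Apx_1$ as you describe, using 1-induction on a quantifier-free $\PV(\P)$-formula.
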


\begin{proof}[Proof Sketch]
    The proof is essentially the same as the admissibility proof of such induction scheme in $\PV_1$, following a binary search argument. We refer interested readers to \cite{Coo75,Buss,KrajicekPT91,Krajicek-book} for more details. 
\end{proof}

\paragraph{Models of $\APX_1$.} Any model $\calM(\calP)$ of $\APX$ induces a model of $\APX_1$ with the same universe and interpretation for $\PV(\P)$ terms. In particular, a model of $\APX_1$ is said to be a \emph{standard} model if it is derived from any standard model of $\APX$. A first-order sentence $\varphi$ in the language of $\APX_1$ is said to be a \emph{true sentence} if it is true in \emph{any} standard model. We provide two examples:  
\begin{compactitem}
\item Let $C_{n,m}\equiv 0$ be a constant circuit that takes $(x,y)\in\zo^{n}\times \zo^m$. The sentence 
\[ 
\forall n\in\Log~\forall m\in\Log~\forall x\in\zo^n~\forall\delta^{-1}\in\Log~\P_\delta(C(x,\cdot)) \le 2\delta
\] 
in suitable formalization, is a true sentence as it holds when $\P_\delta(\cdot)$ is interpreted as any valid approximate counting oracle with additive error $\delta$. 
\item For the same circuit $C_{n,m}$, the sentence 
\[
\forall n\in\Log~\forall m\in\Log~\forall x\in\zo^n~\forall\delta^{-1}\in\Log~\P_\delta(C(x,\cdot)) = 0
\]
is true in the exact standard model, but is not true in the standard model where $\P_\delta(C(x,\cdot))\eqdef \delta$ and $\P_\delta(D)\eqdef 0$ when $D\ne C(x,\cdot)$. Therefore it is not a true sentence. 
\end{compactitem}

\begin{proposition}[Soundness of $\APX_1$]\label{prop: soundness APX1}
Any provable sentence $\varphi$ in $\APX_1$ is a true sentence. 
\end{proposition}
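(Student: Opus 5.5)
The plan is to reduce soundness of $\APX_1$ to soundness of the equational theory $\APX$ (\Cref{prop: soundness APX}), using the fact that a standard model of $\APX_1$ is obtained from a standard model $\bbM(\hat\P)$ of $\APX$ by keeping the same universe and interpretation. Since first-order logic is sound, it suffices to show that every non-logical axiom of $\APX_1$ holds in every standard model $\bbM(\hat\P)$. Then every $\APX_1$-provable sentence holds in every standard model, and so it is a true sentence.

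We check the axiom schemes one by one.
\begin{compactitem}
\item For an axiom $\forall \vec x~s(\vec x)=t(\vec x)$ with $s=t$ provable in $\APX$, \Cref{prop: soundness APX} gives that $s=t$ holds in $\bbM(\hat\P)$ under every assignment. Hence the universal closure holds.
\item The axioms on $s_0,s_1,\eps$ say that appending a bit is injective, that $\eps\ne s_i(x)$, and that $s_0(x)\ne s_1(x)$. These hold in $\zo^*$ under the intended interpretation.
\item For the $n$-induction scheme with quantifier-free $\varphi$, fix $\vec y$ and suppose the hypotheses hold but some tuple $\vec x=(x_1,\dots,x_n)$ of finite strings falsifies $\varphi$. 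Choose a counterexample minimizing $\sum_j|x_j|$. By the base hypothesis $\varphi(x_j/\eps)$, every $x_j$ is nonempty, so each $x_j=s_{\sigma_j}(x_j')$ for some string $x_j'$ and bit $\sigma_j$.
\item The tuple $\vec x'$ has smaller total length, so $\varphi(\vec x')$ holds by minimality. The step hypothesis $\varphi\to\varphi_{\vec\sigma}$ then gives $\varphi(\vec x)$, a contradiction.
\end{compactitem}
The induction argument uses only that the standard universe is well-founded, that is, that every element is a finite string. Since $\varphi$ is quantifier-free, its truth is evaluated pointwise in $\bbM(\hat\P)$ and nothing more is needed.

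The one thing to verify is \Cref{prop: soundness APX} itself, which is stated with a one-line proof by induction on derivations. The base cases are as follows.
\begin{compactitem}
\item The basic axioms hold because $\hat\P$ outputs rationals in $[0,1]$ within the length bound.
\item The boundary axiom is clause 2 of the standard-model definition.
\item Precision consistency holds because $|\hat\P_{\delta_1}(C)-\hat\P_{\delta_2}(C)|\le |\hat\P_{\delta_1}(C)-p|+|p-\hat\P_{\delta_2}(C)|\le\delta_1+\delta_2$.
\item Local consistency follows from $p(C)=(p(\Fix_0 C)+p(\Fix_1 C))/2$ together with three $\delta$-approximations, giving an error of at most $2\delta$.
\end{compactitem}
For the rules, the logical rules preserve validity trivially. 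The structural induction rule is sound because $\bbM(\hat\P)$ interprets every function symbol as a genuine function on $\zo^*$, and any two functions satisfying the same recursion on notation agree on all finite strings.

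No step is a real obstacle. The only point needing care is that "true sentence" quantifies over all standard models, so each verification must use only the correctness properties of $\hat\P$ and never the exact counting function.
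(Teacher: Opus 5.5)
Your proposal is correct and is the paper's one-line ``induction on the proof'' written out. Soundness of first-order logic reduces everything to checking the axioms of $\APX_1$ in each standard model $\bbM(\hat\P)$: the equational axioms come from \Cref{prop: soundness APX}, and the successor and $n$-induction axioms come from the structure of finite strings, where you correctly read the base hypotheses $\varphi(x_j/\eps)$ as holding for all values of the other variables. The one item you pass over is that the Basic Axiom also contains every $\PV(\P)$-provable equation. Its truth in $\bbM(\hat\P)$ follows by the same routine induction on $\PV(\P)$-derivations: the defining equations hold by construction of the interpretation, and the output-length axiom holds by the length bound on $\hat\P$.
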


\begin{proof}
    This can be verified by induction on the proof. 
\end{proof}

\newcommand{\refmeta}[1]{\emphmeta{\ref{#1}}}

\section{Probabilistic Reasoning in \texorpdfstring{$\APX_1$}{APX1}}
\label{sec: basic properties}

In this section, we prove meta-theorems that exhibit the robustness of the approximate counting function in $\Apx_1$ and develop basic concepts such as (approximate) expectation and variance for feasibly defined random variables. 

\subsection{Consistency of Approximate Counting}

We now state a couple of meta-theorems indicating that the approximate counting functionality provided in $\Apx$ is consistent in a strong sense.

\subsubsection{Global Consistency of Approximate Counting}

\paragraph{Monotonicity.} Suppose that there are two circuits $C_1,C_2:\zo^n\to\zo$ satisfying that $C_1(x) \le C_2(x)$ for every $x\in\zo^n$. Then the acceptance probability of $C_1(x)$ is at most that of $C_2(x)$. Therefore, if $\P$ is a function for approximate counting, the acceptance probability of $C_1$ reported by $\P$ should be no larger than the reported acceptance probability of $C_2$ plus twice the precision of counting. Formally:

\dummylabel{monotone}{Monotonicity of Approximate Counting}
\begin{restatable}{lemma}{ApxMono}\label{lmm: apx monotone}
    $\Apx_1$ proves that 
    \begin{align*}
    & \forall n\in\Log~\forall C_1,C_2\in B_n~\forall\delta^{-1}\in\Log~\forall \beta^{-1}\in\Log \\ 
    & \quad \big((\forall x\in\zo^n~C_1(x)\le C_2(x)\to \P_\delta(C_1)\le \P_{\delta}(C_2)+2\cdot \delta+\beta\big).
    \end{align*}
\end{restatable}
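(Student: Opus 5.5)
We argue inside $\Apx_1$, mirroring the proof of \Cref{thm: standard equiv admissible} but replacing the meta-level recursion on $n$ by a feasible greedy bit-fixing procedure together with quantifier-free induction (\Cref{thm: ind via binary search}). Fix $n,\delta,\beta$ and $C_1,C_2\in B_n$ with $\forall x\in\zo^n~C_1(x)\le C_2(x)$. Choose an auxiliary fine precision $\eta$ with $\eta^{-1}\in\Log$, say $\eta \eqdef \beta/(100(n+1))$. By the Precision Consistency Axiom (applied with slack $\beta/10$ for each circuit), it suffices to prove
\[
\P_\eta(C_1) \le \P_\eta(C_2) + \beta/2,
\]
since then $\P_\delta(C_1)\le \P_\eta(C_1)+\delta+\eta+\beta/10 \le \P_\eta(C_2)+\delta+\eta+\beta/10+\beta/2 \le \P_\delta(C_2)+2\delta+2\eta+\beta/5+\beta/2\le \P_\delta(C_2)+2\delta+\beta$.

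To bound $\P_\eta(C_1)-\P_\eta(C_2)$, consider the $\PV(\P)$ procedure that, for $i=0,1,\dots,n$, maintains a suffix $z_i\in\zo^i$ (with $z_0$ empty) and the pair of restricted circuits $C_j{\upharpoonright}z_i$, obtained by applying $\Fix$ repeatedly to the rightmost inputs. Given $z_i$, it queries $\P_\eta$ on $\Fix_b(C_j{\upharpoonright}z_i)$ for $b\in\zo$, $j\in\{1,2\}$, and picks $b$ maximizing $D_b\eqdef\P_\eta(\Fix_b(C_1{\upharpoonright}z_i))-\P_\eta(\Fix_b(C_2{\upharpoonright}z_i))$, setting $z_{i+1}\eqdef b z_i$. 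Writing $D(z)\eqdef \P_\eta(C_1{\upharpoonright}z)-\P_\eta(C_2{\upharpoonright}z)$, the Local Consistency Axiom for both circuits (with slack $\eta$) gives $D(z_i)\le \tfrac12(D_0+D_1)+2(3\eta)\le D(z_{i+1})+6\eta$. The quantifier-free statement $D(z_0)\le D(z_i)+6\eta i$ then follows by induction on $i\le n$ via \Cref{thm: ind via binary search}; here all quantities are rationals computed by a single $\PV(\P)$ term, so the formula is quantifier-free in $\PV(\P)$.

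At $i=n$, the circuits $C_j{\upharpoonright}z_n$ have no inputs left, so we need them to be syntactically constant. I would take $\Fix_b$ to substitute a constant gate, so that after $n$ fixings $\IsConst$ holds, which $\PV$ proves for the straightforward implementation. By the Boundary Axiom, $\P_\eta(C_j{\upharpoonright}z_n)=\Bool(C_j{\upharpoonright}z_n)=C_j(z_n)$; this last equation, saying that evaluating the fully fixed circuit equals evaluating $C_j$ on $z_n$, is a $\PV$-provable property of $\Fix$ and $\Eval$. The hypothesis gives $C_1(z_n)\le C_2(z_n)$, so $D(z_n)\le 0$. Hence $D(z_0)\le 6\eta n\le \beta/2$, which is what was needed.

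The main obstacle is bookkeeping rather than ideas. The induction formula must be quantifier-free, which is why the greedy choice is packaged as a $\PV(\P)$ function of $i$. The syntactic facts about $\Fix$ (constancy after $n$ fixings, and agreement with $\Eval$) must be verified in $\PV$. And the slack parameters in the axioms, which are instantiated via $\beta^{-1}\in\Log$, must be chosen uniformly as $\eta$ so that the accumulated loss $O(n\eta)$ stays below $\beta/2$. The precision-smoothing to pass from $\eta$ back to $\delta$ is the only place where the dependence on $n$ is removed.
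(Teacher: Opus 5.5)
Your proof is correct and follows the paper's argument. You use a $\PV(\P)$ greedy bit-fixing procedure justified by Local Consistency with slack $\eta$ (losing $6\eta$ per step), quantifier-free induction on the number of fixed bits, the Boundary Axiom on the fully fixed constant circuits, and Precision Consistency to pass from $\eta$ back to $\delta$. The only cosmetic difference is that you argue directly by choosing the bit that maximizes the gap, whereas the paper assumes $\P_\eta(C_1)>\P_\eta(C_2)+6n\eta$ and maintains this gap as an invariant until it contradicts the Boundary Axiom.
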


\newcommand{\pre}{\mathsf{pre}}
\newcommand{\suf}{\mathsf{suf}}

\begin{proof}
We argue in $\Apx_1$. Fix $\vec y$ and $n,\delta^{-1},\beta^{-1}\in\Log$ and circuits $C_1,C_2\in B_n$. Suppose that $\forall x~C_1(x)\le C_2(x)$. We will  prove that 
\[
\P_\delta(C_1)\le \P_{\delta}(C_2)+2\cdot \delta+\beta.
\]
Let $\eta^{-1}\in\Log$ be a parameter to be determined later, and $C_1^{k,x},C_2^{k,x}:\zo^{n-k}\to\zo$ be the circuits obtained by fixing the rightmost $k$ bits of $C_1,C_2$ by $x\in\zo^k$, respectively.  

We will prove that $\P_\eta(C_1)\le P_{\eta}(C_2)+6\cdot n\cdot \eta$. This suffices as we can pick $\eta=1/(100\cdot n\cdot \beta)$ and apply the \refaxiom{precision}. 

Towards a contradiction, assume that $\P_\eta(C_1)> P_{\eta}(C_2)+6\cdot n\cdot \eta$. We will design a $\P$-oracle algorithm that, for any such $C_1,C_2$ and $k\le n$, outputs a string $x$ of length $k$ that satisfies the invariant
\[ 
\P_{\eta}(C_1^{k,x}) > \P_{\eta}(C_2^{k,x}) + 6\cdot (n-k)\cdot \eta.
\] 
Moreover, the correctness of the algorithm can be proved in $\Apx_1$. The algorithm is an iterative algorithm that considers $k=0,1,\dots,n$:
\begin{compactitem}
\item For $k=0$, the algorithm outputs $\eps$. This is correct as $C_1^{0,\eps}=C_1$, $C_2=C_2^{0,\eps}$, and as a consequence the required statement follows from the assumption that $\P_\eta(C_1)> P_{\eta}(C_2)+6\cdot n\cdot \eta$. 
\item Now suppose the algorithm could output a string $x\in\zo^k$ such that 
$$\P_{\eta}(C_1^{k,x}) > \P_{\eta}(C_2^{k,x}) + 6\cdot (n-k)\cdot \eta.$$
Our goal is to output a string $x'\in\zo^{k+1}$ such that 
\[ 
\P_{\eta}(C_1^{k+1,x'}) > \P_{\eta}(C_2^{k+1,x'}) + 6\cdot (n-k-1)\cdot \eta. 
\] 
Note that by the \refaxiom{local}\footnote{Here, the parameter $\beta$ in the Local Consistency Axiom is set to $\eta$.}, we know that 
\begin{align*}
& \P_{\eta}(C_1^{k,x}) \ge (1/2)\cdot (\P_{\eta}(\Fix(C_1^{k,x},0)) + \P_{\eta}(\Fix(C_1^{k,x},1))) - 3\cdot \eta; \\ 
& \P_{\eta}(C_2^{k,x}) \le (1/2)\cdot (\P_{\eta}(\Fix(C_2^{k,x},0)) + \P_{\eta}(\Fix(C_2^{k,x},1))) + 3\cdot \eta.
\end{align*}
Subsequently, there must be $\sigma\in\zo$ such that $\P_\eta(\Fix(C_1^{k,x},\sigma)) > \P_\eta(\Fix(C_2^{k,x},\sigma))+6\cdot (n-k-1)\cdot \eta$. The algorithm queries the $\P$-oracle, finds such $\sigma\in\zo$, and outputs $x'\eqdef \sigma\circ x$. This satisfies the invariant, as $\Fix(C_1^{k,x},\sigma)=C_1^{k,\sigma\circ x}$ and $\Fix(C_2^{k,x},\sigma)=C_2^{k,\sigma\circ x}$. 
\end{compactitem}

\vspace{0.2cm}

It is clear that the correctness of the algorithm can be proved in $\Apx_1$ using induction for open formulas.\footnote{Formally, we use $n$-induction for $n=1$, and rely on the fact that open formulas are expressive enough, i.e., the language of $\Apx_1$ includes the function symbol $\P$ and oracle polynomial-time functions with access to $\P$.} It follows that as we fix $k=n$, the algorithm provably outputs $x^\star\in\zo^n$ such that $\P_{\eta}(C_1^{n,x^\star}) > \P_{\eta}(C_2^{n,x^\star})$, under the assumption that $\P_\eta(C_1)> P_{\eta}(C_2)+6\cdot n\cdot \eta$. Note that $C_1^{n,x}$ and $C_2^{n,x}$ are circuits that do not read their inputs and output $C_1(x^\star)$ and $C_2(x^\star)$, respectively. This violates the \refaxiom{boundary}, since $C_1(x)\le C_2(x)$ for every $x \in \{0,1\}^n$. 
\end{proof}

\paragraph{Global Consistency.} A corollary of monotonicity is that if two circuits are provably identical, their acceptance probabilities given by the oracle $\P$ should not differ significantly. Formally: 

\dummylabel{global}{Global Consistency}

\begin{lemma}\label{lmm: global consistency}
    $\Apx_1$ proves that 
    \begin{align*}
    & \forall n\in\Log~\forall C_1,C_2\in B_n~\forall\delta^{-1},\beta^{-1}\in\Log~\\ 
    & \quad \big((\forall x\in\zo^n~C_1(x)=C_2(x))\to |\P_\delta(C_1)-\P_{\delta}(C_2)|\le 2\cdot \delta+\beta\big).
    \end{align*}
\end{lemma}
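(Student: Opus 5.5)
This follows directly from the \refaxiom{monotone} statement (\Cref{lmm: apx monotone}), applied once in each direction. We argue in $\Apx_1$. Fix $n,\delta^{-1},\beta^{-1}\in\Log$ and circuits $C_1,C_2\in B_n$, and assume $\forall x\in\zo^n~C_1(x)=C_2(x)$. From this hypothesis we derive the two pointwise inequalities $\forall x\in\zo^n~C_1(x)\le C_2(x)$ and $\forall x\in\zo^n~C_2(x)\le C_1(x)$. Each is immediate, since for every fixed $x$ the quantifier-free fact $C_1(x)=C_2(x)$, with both sides in $\zo$, implies both comparisons in $\PV$.

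Next we instantiate \Cref{lmm: apx monotone} with the same $n,\delta,\beta$, first for the pair $(C_1,C_2)$ and then for the swapped pair $(C_2,C_1)$. This gives
\[
\P_\delta(C_1)\le \P_\delta(C_2)+2\delta+\beta
\quad\text{and}\quad
\P_\delta(C_2)\le \P_\delta(C_1)+2\delta+\beta .
\]
Both values are rationals in $[0,1]$ by the \refaxiom{basic}. Elementary properties of rational arithmetic and absolute value, which are provable in $\PV$, then combine the two inequalities into $|\P_\delta(C_1)-\P_\delta(C_2)|\le 2\delta+\beta$.

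No step here is a real obstacle. The only point needing care is that the hypothesis of \Cref{lmm: apx monotone} is a universally quantified statement over $x\in\zo^n$. In the first-order setting this is fine: we simply derive the universal inequality from the universal equality inside $\Apx_1$, using only logic plus the $\PV$-fact $a=b\to a\le b$ for bits. We then apply the lemma, which was itself proved as a universal implication.
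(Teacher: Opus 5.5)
Your proposal is correct and matches the paper's proof. The paper likewise applies \Cref{lmm: apx monotone} in both directions to get $\P_\delta(C_1)-\P_\delta(C_2)\le 2\delta+\beta$ and $\P_\delta(C_2)-\P_\delta(C_1)\le 2\delta+\beta$, then combines them using properties of the absolute value; your remark on deriving the pointwise inequalities from the pointwise equality correctly spells out a step the paper leaves implicit.
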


\begin{proof}
    We can prove by \Cref{lmm: apx monotone} that if $C_1(x)=C_2(x)$ for any $x\in\zo^n$, it follows that $\P_\delta(C_1)-\P_{\delta}(C_2)\le 2\cdot\delta+\beta$ and $\P_\delta(C_2)-\P_{\delta}(C_1)\le 2\cdot\delta+\beta$. Subsequently, we will have $|\P_\delta(C_1)-\P_{\delta}(C_2)|\le 2\cdot \delta+\beta$ as long as the absolute value function is properly defined. 
\end{proof}

\begin{remark}
    The standard way to formalize approximate counting for a polynomial-time decidable property in $\Apx_1$ is to first translate it to a circuit $C$ then query $\P_\delta(C)$. The global consistency property shows that the approximate counting oracle $\P$ is robust with respect to the translation of $\PV$ functions into circuits, provided that we can prove in $\Apx_1$ that the translation is functionally correct. The latter can be done already in $\PV_1$ (see, e.g., \citep[Section 2.4]{thesis_Pich}).
\end{remark}

\subsubsection{Permutational Symmetry of Approximate Counting}

Next we show that the approximate counting oracle $\P$ is \emph{permutational symmetric}, in the sense that a permutation of input variables does not change the acceptance probability given by $\P$ significantly. 

\newcommand{\Swap}{\mathsf{Swap}}

\paragraph{Local Symmetry.} As a first step, we show that swapping two adjacent input bits of a circuit does not change the acceptance probability significantly. Concretely: 
\begin{restatable}{lemma}{LocalSym}\label{lmm: local symm}
    $\Apx_1$ proves that
    \[
    \forall n\in\Log~\forall i\in[n-1]~\forall C\in B_n~\forall\delta^{-1},\beta^{-1}\in\Log~|\P_\delta(C) - \P_\delta(\Swap(C,i))| \le 2\cdot \delta + \beta, 
    \]
    where $\mathsf{Swap}(C,i)$ is a $\PV$-function that outputs a circuit $C'$ obtained by swapping the $i$-th and the $(i+1)$-th input bits (from the rightmost bit) of $C$.  
\end{restatable}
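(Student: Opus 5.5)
We follow the greedy bit-fixing template from the proof of \Cref{lmm: apx monotone}, now fixing the $i-1$ rightmost input bits, on which $C$ and $C'\eqdef\Swap(C,i)$ act identically. Argue in $\Apx_1$. Fix an auxiliary precision $\eta^{-1}\in\Log$, say $\eta\eqdef\beta/(100\cdot n)$. It suffices to prove
\[
|\P_\eta(C)-\P_\eta(C')|\le c\cdot n\cdot\eta
\]
for an absolute constant $c$. The stated bound then follows from the \refaxiom{precision}, applied once to each of $C$ and $C'$ to pass from $\eta$ to $\delta$; this costs $2\delta+2\eta+$ slack, which is absorbed into $\beta$.

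\textbf{Step 1 (reduce to $i=1$).} Suppose, towards a contradiction, that $\P_\eta(C)>\P_\eta(C')+c\cdot n\cdot\eta$; the other direction is symmetric. As in \Cref{lmm: apx monotone}, define a $\P$-oracle algorithm that fixes the rightmost bits one at a time for $k=0,\dots,i-1$. It maintains a string $x\in\zo^k$ with
\[
\P_\eta(C^{k,x})>\P_\eta(C'^{k,x})+c\cdot(n-k)\cdot\eta .
\]
The inductive step uses the \refaxiom{local} on both circuits and picks the better $\sigma\in\zo$. Its correctness follows from the \refaxiom{ind 1} on an open $\PV(\P)$-formula. The key syntactic fact, provable in $\PV_1$, is that for $k<i$ fixing the rightmost $k$ bits commutes with the swap: $\Fix_\sigma(\Swap(D,j))=\Swap(\Fix_\sigma(D),j-1)$. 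This holds up to semantic equality, which \Cref{lmm: global consistency} handles at an extra $O(\eta)$ cost per step (or exactly, if $\Swap$ is implemented by relabeling input wires). After $i-1$ steps we obtain circuits $D\eqdef C^{i-1,x}$ and $D'=\Swap(D,1)$, where $D$ has $m\ge2$ inputs and
\[
\P_\eta(D)>\P_\eta(D')+c\cdot(n-i+1)\cdot\eta .
\]

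\textbf{Step 2 (swapping the two rightmost bits).} Apply the \refaxiom{local} twice to each side, expanding $\P_\eta(D)$ into the average of the four values $\P_\eta(D_{ab})$ for $a,b\in\zo$, where $D_{ab}$ fixes the rightmost bit to $a$ and the next bit to $b$. Each application loses $3\eta$, so the total loss is at most $9\eta$. Similarly, $\P_\eta(D')$ is within $9\eta$ of the average of the four values $\P_\eta(D'_{ab})$. Since $D'_{ab}$ and $D_{ba}$ compute the same function (provable in $\PV_1$), \Cref{lmm: global consistency} gives $|\P_\eta(D'_{ab})-\P_\eta(D_{ba})|\le 3\eta$. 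The two averages are therefore within $3\eta$ of each other, and so
\[
|\P_\eta(D)-\P_\eta(D')|\le 21\eta .
\]
With $c\ge 21$, this contradicts the invariant. Note that the accumulated loss is linear in $n$ and does not compound, which is why a small $\eta$ suffices.

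\textbf{Main obstacle.} The delicate part is the bookkeeping in Step 1: maintaining a syntactic (or provably semantic) identity between the fixed swapped circuit and the swap of the fixed circuit. The index $i$ is counted from the right, and each fixing shifts it down by one, so the invariant must carry this shift. Doing the whole argument inside a single open formula amenable to \refaxiom{ind 1} also requires care. I would try to define $\Swap$ so that the commutation is literally a $\PV$-provable equation of circuit codes, and fall back on \Cref{lmm: global consistency} only if that fails.
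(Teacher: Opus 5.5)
Your proof is correct and is the paper's argument: the paper combines the same two-bit expansion via the Local Consistency Axiom and Global Consistency, proving $|\P_\eta(\Fix(C,x))-\P_\eta(\Fix(\Swap(C,i),x))|\le 20\eta$ for every $x\in\zo^{i-1}$, with the same greedy fixing of the $i-1$ rightmost bits, only presented in the opposite order. The commutation bookkeeping you flag is unnecessary if, as in the paper, you track $\Fix(\Swap(C,i),x)$ directly and use the functional equivalence $\Fix(C,\sigma_1\circ\sigma_2\circ x)\equiv\Fix(\Swap(C,i),\sigma_2\circ\sigma_1\circ x)$ only in the final two-bit step.
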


\begin{proof}
We argue in $\Apx_1$. Fix $n\in\Log$, $i\in[n-1]$, $C\in B_n$, $\delta^{-1}\in\Log$, and $\beta^{-1}\in\Log$. Let $\eta^{-1}\in\Log$ be a parameter to be determined later. 

For $C\in B_n$ and $|x|<n$, we define $\Fix(C,x)$ as the $\PV$ function that outputs a circuit obtained by fixing the last $|x|$ input bits of $C$ to $x$, i.e., it outputs $C_x\in B_{n-|x|}$ such that $C_x(u) = C(u\circ x)$. Note that for properly defined $\PV$ functions $\Swap$ and $\Fix$, we can prove in $\PV$ that if $|x| = i-1$, $\sigma=(\sigma_1,\sigma_2)\in\zo^2$, $C_{x,\sigma}^{12} \eqdef \Fix(C,\sigma_1\circ\sigma_2\circ x)$ is functionally equivalent to $C_{x,\sigma}^{21}\eqdef \Fix(\Swap(C,i),\sigma_2\circ\sigma_1\circ x)$. Moreover, we may assume that it is provable in $\PV$ that for any circuit $C\in B_n$, $x\in\zo^k$, and $z\in\zo^{n-k-1}$, $\Eval(\Fix(C,x), s_i(z)) = \Eval(\Fix(C,s_i(x)),z)$, and that $\Fix(C,\eps) = C$. 

Therefore, by the \refmeta{global} of approximate counting, we have that 
\begin{equation}
\forall x\in\zo^{i-1}~\forall\sigma\in\zo^2~|\P_{\eta}(C_{x,\sigma}^{12}) - \P_{\eta}(C_{x,\sigma}^{21})| \le 3\cdot\eta.\label{equ: fixing sigma 12} 
\end{equation}
That is, for $x\in\zo^{i-1}$, if we arbitrarily fix the \emph{rightmost} two bits of $\Fix(C,x)$ and $\Fix(\Swap(C,i),x)$, their acceptance probabilities are close. By applying \refaxiom{local} twice and subsequently the \refmeta{global} of approximate counting, we can prove that
\begin{align*} 
& \forall x\in\zo^{i-1}~\left|\P_{\eta}(\Fix(C,x))-(1/4)\sum_{\sigma\in\zo^2} \P_{\eta}(C^{12}_{x,\sigma})\right| \le 10\cdot\eta; \\ 
& \forall x\in\zo^{i-1}~\left|\P_{\eta}(\Fix(\Swap(C,i),x))-(1/4)\sum_{\sigma\in\zo^2} P_{\eta}(C^{21}_{x,\sigma})\right| \le 10\cdot\hat \eta.
\end{align*}
Subsequently, we know from \Cref{equ: fixing sigma 12} that
\begin{equation}
\forall x\in\zo^{i-1}~|\P_{\eta}(\Fix(C,x)) - \P_{\eta}(\Fix(\Swap(C,i),x))| \le 20\cdot\eta.\label{equ: swapping expanding twice} 
\end{equation}
This shows that the acceptance probabilities of $C$ and $\Swap(C,i)$ are close when the rightmost $i-1$ bits of them are both fixed by $x\in\zo^{i-1}$. 

We will now prove that 
\begin{equation}
|\P_\eta(C) - \P_\eta(\Swap(C,i))| \le 20\cdot (n+1)\cdot \eta.\label{equ: swap close eta}
\end{equation}
This suffices as we can pick $\eta\eqdef \beta/(100\cdot(n+1))$ and apply the \refaxiom{precision}. 

Suppose, towards a contradiction, that \Cref{equ: swap close eta} does not hold. We design an iterative algorithm that given $C$, $i$, and $k\le i-1$, outputs a string $x$ of length $k$ such that 
\[
|\P_\eta(\Fix(C,x)) - \P_\eta(\Fix(\Swap(C,i),x))| > 20\cdot (n+1-k)\cdot \eta.
\]
The algorithm is essentially the same as the algorithm in the proof of \Cref{lmm: apx monotone}, i.e., it extends the string by one bit in each iteration by querying the approximate counting oracle. In particular, the base case $k=0$ is satisfied, as \Cref{equ: swap close eta} does not hold. Therefore, for $k=i-1$, the algorithm outputs a string $x\in\zo^{i-1}$ such that 
\[
|\P_\eta(\Fix(C,x)) - \P_\eta(\Fix(\Swap(C,i),x))| > 20\cdot (n+1-i)\cdot \eta \ge 20\cdot \eta. 
\]
This violates \Cref{equ: swapping expanding twice} and thus completes the proof. 
\end{proof}

\paragraph{Permutational Symmetry.} We can then state and prove the \emph{permutational symmetry} of approximate counting by decomposing a permutation into a sequence of transformations $C\mapsto \Swap(C,i)$. 

\newcommand{\Permute}{\mathsf{Permute}}

We assume a straightforward encoding of permutations of $[n]$ for $n\in\Log$, and write $\pi\in S_n$ as an abbreviation of ``$\pi$ is a permutation of $[n]$'' encoded by a straightforward $\PV$ function. Let $C\in B_n$ and $\pi\in S_n$ be a permutation of $[n]$. We define $\Permute(C,\pi)$ be the $\PV$ function that outputs a circuit $C\circ\pi\in B_n$ defined as $(C\circ\pi)(x) = C(x_{\pi_n}\circ\dots\circ x_{\pi_1})$. Then we have that: 

\dummylabel{permutation}{Permutational Symmetry}
\begin{lemma}\label{lmm: permutation approx}
    $\Apx_1\vdash \forall n\in\Log~\forall \pi\in S_n~\forall C\in B_n~\forall \delta^{-1},\beta^{-1}\in\Log~|\P_\delta(C)-\P_\delta(C\circ\pi)|\le 2\cdot\delta+\beta$. 
\end{lemma}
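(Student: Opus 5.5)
The plan is to decompose $\pi$ into adjacent transpositions and chain applications of \Cref{lmm: local symm}, while keeping the additive error independent of the number of swaps by working at a finer precision and smoothing with the \refaxiom{precision} at the end.

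We argue in $\Apx_1$. Fix $n,\pi,C,\delta,\beta$, and let $\eta^{-1}\in\Log$ be a parameter to be determined. First, fix an explicit $\PV$ procedure (e.g., bubble sort) that, given $\pi\in S_n$, outputs a sequence $i_1,\dots,i_L\in[n-1]$ with $L\le n^2$. Define circuits $D_0\eqdef C$ and $D_{j}\eqdef\Swap(D_{j-1},i_j)$. In $\PV_1$, prove by open induction on $j$ that the composition of the first $j$ swaps realizes the corresponding partial permutation. Conclude that $D_L$ is functionally equivalent to $C\circ\pi$, i.e., $\forall x\in\zo^n~D_L(x)=(C\circ\pi)(x)$. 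This is pure bookkeeping about the bubble-sort invariant, and I expect it to be routine, though it is the main place where encoding details must be handled carefully. If bubble sort is awkward, the alternative is to move the element destined for position $1$ into place by consecutive swaps and recurse. Either way, the sequence is explicit and polynomially long.

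Next, the probabilistic part. By \Cref{lmm: local symm} with precision $\eta$ and slack $\eta$, we get $|\P_\eta(D_{j-1})-\P_\eta(D_j)|\le 3\eta$ for every $j\in[L]$. Summing this along the chain requires induction. Let $\psi(j)$ be the quantifier-free $\PV(\P)$-formula $|\P_\eta(C)-\P_\eta(D_j)|\le 3j\eta$, where $D_j$ is computed by a $\PV$-function of $(C,\pi,j)$. Applying \Cref{thm: ind via binary search} to $\psi$, with the triangle inequality for rationals at each step, yields $|\P_\eta(C)-\P_\eta(D_L)|\le 3n^2\eta$. \Cref{lmm: global consistency} then gives $|\P_\eta(D_L)-\P_\eta(C\circ\pi)|\le 3\eta$. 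Altogether, $|\P_\eta(C)-\P_\eta(C\circ\pi)|\le 3(n^2+1)\eta$.

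Finally, choose $\eta\eqdef\beta/(100(n^2+1))$, which satisfies $\eta^{-1}\in\Log$ since $n,\beta^{-1}\in\Log$. Apply the \refaxiom{precision} twice, once to $C$ and once to $C\circ\pi$, each between precisions $\delta$ and $\eta$ with slack $\beta/10$:
\[
|\P_\delta(C)-\P_\delta(C\circ\pi)|\le 2\delta+2\eta+3(n^2+1)\eta+\beta/5\le 2\delta+\beta.
\]
The only genuine subtlety is making sure the induction formula is open, with $\P$ available in the language of $\Apx_1$, and that the chain of circuits is given by a single $\PV$ term in $j$. This holds because the circuits $D_j$ are computable by iterating $\Swap$.
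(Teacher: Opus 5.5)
Your proposal is correct and is essentially the paper's proof. Both decompose $\pi$ into adjacent swaps, chain the local-symmetry lemma at precision $\eta$ by open induction to obtain the bound $3j\eta$, use global consistency to pass from the last swapped circuit to $C\circ\pi$, and finally return to precision $\delta$ via the Precision Consistency Axiom. The only difference is cosmetic: you bound the number of swaps explicitly by $n^2$ using bubble sort, whereas the paper leaves the swap count as a generic feasible $k$.
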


\begin{proof}[Proof Sketch]
    Under a straightforward encoding of permutations of $[n]$, we can prove in $\PV$ that there is a list $\ell=(i_1,\dots,i_k)$ for some $k\in\Log$ such that $C\circ \pi$ is functionally equivalent to $C_k$ defined as 
    \[
    C_0 \eqdef C, \quad C_j \eqdef \Swap(C_{j-1}, i_j) \quad (j\in[k]). 
    \]
    By induction on $j$, we can prove by applying \Cref{lmm: local symm} that for any $\eta\in\Log$, $|\P_{\eta}(C)-\P_{\eta}(C_j)|\le 3\cdot j\cdot \eta$. This, together with the \refmeta{global} of approximate counting, implies that 
    \[
    |\P_{\eta}(C)-\P_{\eta}(C\circ\pi)|\le 3\cdot (k+1)\cdot \eta. 
    \]
    We then prove the lemma by taking $\eta\eqdef \beta/(10\cdot (k+1))$ and applying the \refaxiom{precision}. 
\end{proof}

\subsubsection{Existence Lemma for Approximate Counting}

An important counting principle is that if a mathematical object can be sampled with non-zero probability, then it must \emph{exist}. This simple result is the bedrock of the celebrated probabilistic method in combinatorics (see, e.g., \cite{AlonSpencer}). The following lemma formalizes the principle in the context of approximate counting: 

\dummylabel{avg}{Existence Lemma for Approximate Counting}
\begin{lemma}\label{lem:prob_method}
    $\Apx_1 \vdash \forall n,\delta^{-1},\beta^{-1}\in\Log~\forall C\in B_n ~(\beta > 0\land \P_\delta(C) > \delta + \beta \to \exists x\in\zo^n~C(x)=1$.
\end{lemma}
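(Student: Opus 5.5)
Argue in $\Apx_1$ by contradiction, reducing to \Cref{lmm: apx monotone}. Fix $n,\delta^{-1},\beta^{-1}\in\Log$ with $\beta>0$ and $C\in B_n$ with $\P_\delta(C)>\delta+\beta$, and assume $\forall x\in\zo^n~C(x)=0$. Take $C_2\eqdef\Null_n$, the syntactically constant circuit outputting $0$. Then $\forall x~C(x)\le\Null_n(x)$ holds, and monotonicity (with a slack $\beta'$ to be chosen) gives $\P_\delta(C)\le\P_\delta(\Null_n)+2\delta+\beta'$. The \refaxiom{boundary} gives $\P_\delta(\Null_n)=0$, so $\P_\delta(C)\le 2\delta+\beta'$.

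This bound is too weak, since the hypothesis only yields $\P_\delta(C)>\delta+\beta$. The factor $2\delta$ in monotonicity arises because both sides are evaluated at precision $\delta$, while the constant side is exact. To recover the sharper $\delta$, switch precision. Choose $\eta^{-1}\in\Log$ with $\eta\le\beta/10$. The \refaxiom{precision} gives $\P_\delta(C)\le\P_\eta(C)+\delta+\eta+\beta/10$. Applying the monotonicity argument at precision $\eta$ gives $\P_\eta(C)\le \P_\eta(\Null_n)+2\eta+\beta/10=2\eta+\beta/10$. Combining these,
\[
\P_\delta(C)\le\delta+3\eta+\beta/5\le\delta+\beta/2<\delta+\beta,
\]
which is a contradiction. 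Hence $\exists x\in\zo^n~C(x)=1$. Circuits are Boolean-valued, so $C(x)\neq 0$ means $C(x)=1$, and this step is routine in $\PV$.

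The main obstacle is only parameter bookkeeping. We must pick $\eta,\beta'$ as elements of $\Log^{-1}$, e.g. $\eta^{-1}=10\cdot\beta^{-1}$, which is fine since $\Log$ is closed under multiplication by constants. Alternatively, one can redo the bit-fixing argument of \Cref{lmm: apx monotone} directly with $C_2=\Null_n$ at precision $\eta$: starting from $\P_\eta(C)>6n\eta$, the greedy $\P$-oracle algorithm yields $x^\star$ with $\P_\eta(C^{n,x^\star})>0$, contradicting the \refaxiom{boundary} since $C(x^\star)=0$. The precision axiom then transfers this to precision $\delta$ with loss $\delta+\eta+\beta''$. This route avoids reliance on the constant $2$ and is the fallback if the direct invocation is awkward.
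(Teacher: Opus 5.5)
Your proposal is correct and is essentially the paper's proof: pass to a finer precision $\eta=\Theta(\beta)$ via the Precision Consistency Axiom, compare $C$ with $\Null_n$ at precision $\eta$, and use the Boundary Axiom ($\P_\eta(\Null_n)=0$) to get a contradiction. The paper invokes Global Consistency where you invoke Monotonicity directly; since only the one-sided inequality is needed and Global Consistency is derived from Monotonicity, this is the same argument.
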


\begin{proof}
    We argue in $\Apx_1$. Fix any $n,\delta^{-1},\beta^{-1}\in\Log$ and $n$-input circuit $C\in B_n$. Suppose that $\beta > 0$ and $\P_\delta(C) > \delta + \beta$, and let $\eta^{-1}\in\Log$ be determined later. 
    
    By the \refaxiom{precision}, we can see that $\P_\eta(C)\ge \beta - \eta$. Suppose, towards a contradiction, that for every $x\in\zo^n$, $C(x)=0$. In such case, $C(\vec x)$ is equivalent to the circuit $\Null_n$ that always outputs $0$. By the \refmeta{global} of approximate counting, we have that 
    \[
    \P_\eta(\Null_n) \ge \P_\eta(C) - 3\eta \ge \beta - 4\eta. 
    \]
    Let $\eta = \beta /10$. We have that $\P_\eta(\Null_n) > 3\eta$, which leads to a contradiction with the \refaxiom{boundary}.
\end{proof}

We note that a more general version of the principle will be proved in \Cref{sec: avg for expectation} following a similar but more complicated argument, which will be later used to prove the \emph{linearity of approximate expectation}.  

\subsubsection{Approximate Counting for Concrete Circuits}\label{sec:concrete_circuits}

In this subsection, we consider the behavior of the approximate counting oracle on concrete circuits: the ``less-than-$t$'' circuit that parses its input as a number and outputs $1$ if it is less than a fixed threshold, and circuits with a small number of inputs.   

\paragraph{``Less-than-$t$'' Circuits.} Let $t\in\{0,1,\dots,2^n\}$, $C_{< t}:\zo^n\to\zo$ be the circuit that parses its input as the binary encoding of a number $x\in \{0,1,\dots,2^n-1\}$ and accepts if and only if $x< t$. The following lemma shows in $\Apx_1$ that the acceptance probability of $C_{< t}$ is approximately $t/2^n$, as expected. 

\dummylabel{less than t}{Less-than-$t$ Circuits}
\begin{lemma}[Less-than-$t$ Circuits]\label{lmm: less than t}
$\Apx_1\vdash \forall n,\delta^{-1},\beta^{-1}\in\Log~\forall t\in\{0,1,\dots,2^n\}~\left|\P_{\delta}(C_{<t})-t/2^n\right|\le \delta+\beta$. 
\end{lemma}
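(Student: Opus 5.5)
We prove the bound first at an auxiliary precision $\eta$ with a loss growing linearly in $n$, and then pass to precision $\delta$. Concretely, we aim to show in $\Apx_1$ that $|\P_\eta(C_{<t})-t/2^n|\le c\cdot(n+1)\cdot\eta$ for a suitable constant $c$. We then set $\eta\eqdef\beta/(10c(n+1))$ and apply the \refaxiom{precision} with slack $\beta/10$. This gives $|\P_\delta(C_{<t})-t/2^n|\le \delta+\eta+\beta/10+c(n+1)\eta\le\delta+\beta$.

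The natural structure is recursion on the bits of $t$, using the \refaxiom{local}. Fixing the rightmost input bit of $C_{<t}$ corresponds to fixing one bit of the number. It is cleaner to first apply \Cref{lmm: permutation approx}, or simply to choose the encoding convention, so that the rightmost input bit is the most significant bit of $x$. This costs only an extra $O(\eta)$ at each level.

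Fixing the most significant bit to $b$ yields a circuit on $n-1$ bits. By the \refmeta{global} of approximate counting, this circuit is functionally equivalent to a threshold circuit:
\begin{itemize}
\item if $t\le 2^{n-1}$, then fixing $b=0$ gives a circuit equivalent to $C_{<t}$ on $n-1$ bits, and fixing $b=1$ gives a circuit equivalent to $\Null_{n-1}$;
\item if $t>2^{n-1}$, then fixing $b=0$ gives a circuit equivalent to $\True_{n-1}$, and fixing $b=1$ gives a circuit equivalent to $C_{<t-2^{n-1}}$.
\end{itemize}
Functional equivalence is a $\PV$-provable property of the explicit circuit constructions. For the constant circuits $\Null_{n-1}$ and $\True_{n-1}$, the \refaxiom{boundary} gives their $\P$-values exactly as $0$ and $1$. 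The local consistency loss is $3\eta$ and the global consistency losses are $O(\eta)$. So each level contributes at most $c\eta$ of error, and the target value $t/2^n$ satisfies exactly the same recursion, namely $t/2^n=\tfrac12(t'/2^{n-1}+\text{const})$ with the constant being $0$ or $1$.

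The main obstacle is that $n$ is only in $\Log$, so this recursion cannot be unrolled metamathematically; it has to be carried out by induction inside $\Apx_1$. We will follow the pattern of \Cref{lmm: apx monotone}. We define a $\PV(\P)$ procedure that, for $k=0,\dots,n$, tracks the current threshold $t_k$ (after $k$ most significant bits are processed), the current circuit $D_k$ (obtained by fixing the first $k$ bits along the branch that is not constant), and the accumulated affine relation between $\P_\eta(C_{<t})$ and $\P_\eta(D_k)$. The invariant to verify is the quantifier-free formula
\[
\left|\P_\eta(C_{<t})-\Big(a_k+2^{-k}\P_\eta(D_k)\Big)\right|\le c\cdot k\cdot\eta,
\]
where $a_k$ records the contributions of the constant branches, and $a_k+2^{-k}t_k/2^{n-k}=t/2^n$ holds exactly.

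We prove this invariant by induction on $k$ using \Cref{thm: ind via binary search}. The errors do not double along the way, because each step multiplies the previous error by $1/2$ before adding $c\eta$. At $k=n$, the circuit $D_n$ is constant, so the \refaxiom{boundary} gives $\P_\eta(D_n)=t_n\in\zo$. This yields the claimed $c(n+1)\eta$ bound.
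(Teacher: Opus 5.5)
Your proof is correct and follows essentially the paper's argument. Both walk down the single non-constant branch determined by the bits of $t$, after one use of permutational symmetry. At each step both apply the Local Consistency Axiom, handle the constant sibling with Global Consistency plus the Boundary Axiom, prove the invariant by quantifier-free induction in $\Apx_1$, and finish with a Precision Consistency switch from $\eta$ to $\delta$.

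The only difference is presentational. The paper argues by contradiction: it propagates a lower bound $|\P_\eta(C_i)-t_i/2^{n-i}|>\beta-20\eta(i+1)$ down to a $0$-input circuit, where it violates the Boundary Axiom. You instead carry the affine invariant directly. Strictly speaking, in your invariant it is the new local error at step $k$ that gets scaled by $2^{-k}$, rather than the previous error being halved, but your stated bound $ck\eta$ holds either way.
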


\begin{proof}
    We argue in $\Apx_1$. Fix $n,\delta^{-1},\beta^{-1}, t\in\{0,1,\dots,2^n\}$. Let $\eta^{-1}\in\Log$ be a parameter to be determined later. Note that when $t=2^n$, $C_{<t}$ is functionally equivalent to $\True_n$, and thus the lemma immediately follows from the \refmeta{global} and \refaxiom{boundary}.\footnote{For this step to hold, we need for $C_{<t}$ to be provably  equivalent to $\True_n$. This will hold in $\PV_1$ for a natural implementation of the circuits $C_{< t}$.} In the rest of the proof, we assume $t<2^n$.  
    
    Suppose, towards a contradiction, that $\left|\P_{\delta}(C_{\le t})-t/2^n\right|>\delta+\beta$. By the \refaxiom{precision}, we have that $|\P_\eta(C_{<t})-\P_{\delta}(C_{<t})|\le \delta+2\eta$, and subsequently 
    \[ 
    \left|\P_{\eta}(C_{<t})-t/2^n\right|>\beta-2\eta. 
    \] 
    We may assume that the rightmost bit of $t$ is the most significant bit; this is without loss of generality by the \refmeta{permutation} of approximate counting. 
    
    We will design an $\P$-oracle iterative algorithm that, in the $i$-th iteration, outputs $t_i\in\{0,1,\dots,2^{n-i}-1\}$ satisfying the following condition: 
    \begin{compactitem}
    \item Let $C_i:\zo^{n-i}\to\zo$ be the circuit that parses its input as a number $x\in \{0,1,\dots,2^{n-i}-1\}$ and outputs $1$ if $x<t_i$. Then $\left|\P_{\eta}(C_i)-t_i/2^{n-i}\right|>\beta-20\eta\cdot (i+1)$.
    \end{compactitem}
    The algorithm starts with $t_0\eqdef t$ (and thus $C_0\eqdef C_{<t}$). In the $i$-th iteration, the algorithm considers the rightmost bit (i.e.~the most significant bit) of $t_i$. Recall that $\Fix(C,b)$ outputs the circuit obtained from $C$ by fixing the rightmost input bit to be $b$. 
    
    If the rightmost bit of $t_i$ is $0$, the algorithm outputs $t_{i+1}\eqdef t_i$. It is clear that $\Fix(C_i,1)$ is functionally equivalent to the $\Null_{n-i-1}$, and thus by the \refmeta{global} and \refaxiom{boundary}, $\P_{\eta}(\Fix(C_i,1))\le 3\eta$. Let $C_{i+1}$ be the circuit that parses its input as a number $x\in \{0,1,\dots,2^n-1\}$ and outputs $1$ if $x<t_{i+1}$. It follows that $\Fix(C_i,0)$ is functionally equivalent to $C_{i+1}$, and thus by the \refmeta{global} of approximate counting, $|\P_\eta(\Fix(C_i,0))-\P_\eta(C_{i+1})|\le 3\eta$. Subsequently: 
    \begin{align*}
    &~\left|\P_{\eta}(C_{i+1})-t_{i+1}/2^{n-i-1}\right| \\
    \ge &~\left|\P_\eta(\Fix(C_{i},0))-t_{i+1}/2^{n-i-1}\right| -3\eta \\ 
    \ge &~\left|\P_\eta(\Fix(C_{i},0)) + \P_\eta(\Fix(C_{i},1))-t_{i+1}/2^{n-i-1}\right| - 6\eta \\
    = &~2\cdot \left|\frac{\P_\eta(\Fix(C_{i},0)) + \P_\eta(\Fix(C_{i},1))}{2}-\frac{t_{i+1}}{2^{n-i}}\right| - 6\eta  \\ 
    \ge &~2\cdot \left|\P_\eta(C_i)-t_{i}/2^{n-i}\right| - 12\eta \tag{\refaxiom{local}} \\ 
    > &~\beta - 20\eta\cdot(i+1) - 12\eta \ge \beta - 20\eta\cdot (i+2). 
    \end{align*}

    If the rightmost bit of $t_i$ is $1$, the algorithm outputs $t_{i+1}=t_i-2^{n-i-1}$. It is clear that $\Fix(C_i,0)$ is functionally equivalent to $\True_{n-i+1}$, and thus by the \refmeta{global} and \refaxiom{boundary}, $|\P_\eta(\Fix(C_i,0))-1|\le 3\eta$. Let $C_{i+1}$ be the circuit that parses its input as a number $x\in \{0,1,\dots,2^n-1\}$ and outputs $1$ if $x<t_{i+1}$. It follows that $\Fix(C_i,1)$ is functionally equivalent to $C_{i+1}$, and thus by the \refmeta{global} of approximate counting, $|\P_\eta(\Fix(C_i,1))-\P_\eta(C_{i+1})|\le 3\eta$. Subsequently: 
    \begin{align*}
    &~\left|\P_{\eta}(C_{i+1})-t_{i+1}/2^{n-i-1}\right| \\
    \ge &~\left|\P_\eta(\Fix(C_{i},1))-t_{i+1}/2^{n-i-1}\right| -3\eta \\ 
    \ge &~\left|\P_\eta(\Fix(C_{i},0))-1 + \P_\eta(\Fix(C_{i},1))-t_{i+1}/2^{n-i-1}\right| - 6\eta \\
    = &~2\cdot \left|\frac{\P_\eta(\Fix(C_{i},0)) + \P_\eta(\Fix(C_{i},1))}{2}-\frac{t_{i+1}+2^{n-i-1}}{2^{n-i}}\right| - 6\eta  \\ 
    = & ~2\cdot \left|\frac{\P_\eta(\Fix(C_{i},0)) + \P_\eta(\Fix(C_{i},1))}{2}-\frac{t_i}{2^{n-i}}\right| - 6\eta\\
    \ge &~2\cdot \left|\P_\eta(C_i)-t_{i}/2^{n-i}\right| - 12\eta \tag{\refaxiom{local}} \\ 
    > &~\beta - 20\eta\cdot(i+1) - 12\eta \ge \beta - 20\eta\cdot (i+2). 
    \end{align*}
    
    It is clear that the correctness of the algorithm follows from the induction principle for polynomial-time verifiable properties allowed by the \refaxiom{ind} of $\Apx_1$. Therefore, after $n$ iterations, the algorithm will output $t_n=0$ such that the acceptance probability of the circuit $C_n\equiv \Null_0$ is at least $\beta-20\eta\cdot (n+1)$. By setting $\eta\eqdef \beta/(40n+40)$, we can conclude a contradiction using the \refaxiom{boundary}. 
\end{proof}

\paragraph{Circuits with Short Inputs.} For circuits $C:\zo^n\to\zo$ such that $n\in\Log\Log$ it is feasible to enumerate all inputs of $C$. The following lemma shows that $\P_\delta(C)$ is consistent with its acceptance probability computed via the brute-force algorithm. 

\dummylabel{brute force}{Brute Force Counting Lemma}
\begin{lemma}[Brute Force Counting Lemma]\label{lmm: short input}
$\Apx_1$ proves the following statement. For every $n\in\Log\Log$, circuit $C\in B_n$, and $\delta^{-1},\beta^{-1}\in\Log$, let $t$ be the number of accepting inputs of $C$. Then $|\P_\delta(C)-t/2^n|\le \delta+\beta$. In particular, if $\delta\le 2^{-n-1}$, $t$ is the nearest integer to $\P_\delta(C)\cdot 2^n$.    
\end{lemma}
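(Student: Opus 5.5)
We argue in $\Apx_1$ and follow the same ``recursive expansion'' pattern used for \Cref{lmm: less than t}, except that now the whole binary tree of restrictions is feasible, so we can expand it completely rather than follow a single path. Since $n\in\Log\Log$, the $2^n$ inputs of $C$ can be enumerated in polynomial time. Hence $t$ is given by a $\PV$ term, and so are all restrictions $\Fix(C,x)$ for $x\in\zo^{\le n}$ (there are $2^{n+1}-1$ of them).

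Fix an auxiliary precision $\eta^{-1}\in\Log$, to be chosen later. For $0\le k\le n$, define the quantity
\[
S_k \eqdef \frac{1}{2^k}\sum_{x\in\zo^k}\P_\eta(\Fix(C,x)).
\]
This sum is over an explicit set of feasible size, so it is a $\PV(\P)$ term. The plan is to prove by induction on $k$ (quantifier-free induction, \Cref{thm: ind via binary search}) that
\[
|S_0-S_k|\le 3k\eta .
\]
For the step from $k$ to $k+1$, apply the \refaxiom{local} with slack parameter $\eta$ to each $\Fix(C,x)$ with $|x|=k$. This gives
\[
\bigl|\P_\eta(\Fix(C,x))-\tfrac12(\P_\eta(\Fix(C,0\circ x))+\P_\eta(\Fix(C,1\circ x)))\bigr|\le 3\eta .
\]
Averaging these inequalities over $x$ gives $|S_k-S_{k+1}|\le 3\eta$. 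This uses only the triangle inequality for explicit sums of rationals, which is provable in $\PV$.

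At level $k=n$, each $\Fix(C,x)$ is a syntactically constant circuit. (If the $\PV$ implementation of $\Fix$ does not literally yield $\IsConst$, first replace it by the constant circuit with output $C(x)$, using the \refmeta{global} at cost $3\eta$ per leaf, which is still $O(\eta)$ after averaging.) By the \refaxiom{boundary}, $\P_\eta(\Fix(C,x))=C(x)$, so $S_n=t/2^n$ exactly. Since $S_0=\P_\eta(C)$, we get
\[
|\P_\eta(C)-t/2^n|\le 3(n+1)\eta .
\]
Now choose $\eta\eqdef\beta/(10(n+1))$ and apply the \refaxiom{precision} to move from precision $\eta$ to precision $\delta$:
\[
|\P_\delta(C)-t/2^n|\le\delta+\eta+\beta/2+3(n+1)\eta\le\delta+\beta .
\]

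For the ``in particular'' clause, take $\delta\le 2^{-n-1}$ and $\beta$ small, say $\beta^{-1}=2^{n+2}$, which lies in $\Log$ because $n\in\Log\Log$. The bound then gives $|\P_\delta(C)\cdot 2^n-t|<1/2+2^n\beta\le 3/4$. To conclude that $t$ is the nearest integer, I would instead absorb $\beta$ via the exact bound $\delta$: if $\P_\delta(C)\cdot 2^n$ were at distance at least $1/2$ from $t$ with $\delta\le 2^{-n-1}$, pick $\beta<2^{-n}\cdot(\text{the excess})$, which is possible since all quantities are feasible rationals. This yields a contradiction except at an exact tie, which I treat as within the convention for ``nearest''.

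The main obstacle is purely one of bookkeeping. First, each level's average must be a single $\PV(\P)$ term, so that the inductive statement is open; this works because $2^k\le 2^n$ is polynomial. Second, the accumulated slack must be linear in $n$ and not exponential. Averaging, rather than summing the local errors, is exactly what keeps the loss per level at $3\eta$.
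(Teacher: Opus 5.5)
\textbf{Main bound.} Your proof of $|\P_\delta(C)-t/2^n|\le\delta+\beta$ is correct, but it follows a different route from the paper's.

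The paper first proves a general Dueling Lemma: feasible circuits $P_0,\dots,P_n$ that are locally consistent up to $\eta$ and agree with $C$ at the leaves must satisfy $|\P_\delta(C)-P_0|\le\delta+O(n\eta)$. Its proof is by contradiction. A $\P$-oracle algorithm follows a single root-to-leaf path on which $|\P_\eta(C^{x})-P_{|x|}(x)|$ stays large, until the Boundary Axiom fails at a leaf. The paper then instantiates this with the brute-force estimators $P_i(x)$, the exact acceptance probability of $C$ with its last $i$ bits fixed to $x$.

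You instead expand the whole tree and show by open induction that the level averages $S_k$ drift by at most $3\eta$ per level. This is more elementary: there is no oracle search, no auxiliary $P_i$, and no contradiction. However, it depends on the tree itself being feasible ($S_k$ sums $2^k$ oracle values). The dueling argument only needs each level estimator to be a feasible circuit. That is why the same path-following pattern is reused for monotonicity, the less-than-$t$ lemma and the averaging argument with $n\in\Log$. In the present lemma, the paper uses $n\in\Log\Log$ only to build the $P_i$ and verify their exact consistency in $\PV_1$.

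\textbf{The \emph{in particular} clause.} Your justification here does not go through. The main bound only gives $|\P_\delta(C)\cdot 2^n-t|\le\tfrac12+2^n\beta$ for every $\beta^{-1}\in\Log$.

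Suppose the distance exceeded $\tfrac12$. The excess would be a positive dyadic rational, but nothing in the axioms prevents it from being as small as roughly $2^{-|C|\cdot|\Delta|^2}$, which is the output-length bound on $\P$. A $\beta$ with $\beta^{-1}\in\Log$ below $2^{-n}$ times that excess would require a string of length exponential in $|C|\cdot|\Delta|^2$. Such a string need not exist in a model of $\Apx_1$. A rational with a feasible encoding need not have its reciprocal in $\Log$.

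The argument does work if $\delta<2^{-n-1}$ strictly. Since $\delta=1/|\Delta|$, the margin satisfies $2^{-n-1}-\delta\ge 1/(2^{n+1}|\Delta|)$. Taking $\beta^{-1}\eqdef 2^{n+2}|\Delta|$, which lies in $\Log$ because $n\in\Log\Log$, gives $|\P_\delta(C)\cdot 2^n-t|\le\tfrac12-1/(4|\Delta|)<\tfrac12$.

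At $\delta=2^{-n-1}$ exactly, you only get that $t$ is within $\tfrac12+2^n\beta$ of $\P_\delta(C)\cdot 2^n$ for every $\beta^{-1}\in\Log$. The paper gives no separate argument for this clause, so it should be read either with that slack or with a strict bound on $\delta$.
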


The proof of the lemma employs the \refaxiom{local} and the \refaxiom{precision}. We will formalize the argument using the following general tool: If there is a sequence of circuits serving as an approximating counting algorithm for a circuit $C$, in the sense that it satisfies the boundary condition and is locally consistent, then the acceptance probability estimated by the algorithm is necessarily close to $\P_\delta(C)$. Formally:  

\dummylabel{dueling}{Dueling Lemma}
\begin{lemma}[Dueling Lemma]
$\Apx_1$ proves the following statement. Let $n\in\Log$, $C\in B_n$ be a circuit, and $P_0,P_1,\dots,P_n$ be circuits that output rational numbers such that $P_i$ is of input length $i$. Let $\delta^{-1},\eta^{-1}\in\Log$. Suppose that for every $i<n$ and every $x\in\zo^i$,  
\[ 
\left|P_i(x) - \frac{P_{i+1}(x\circ 0)+P_{i+1}(x\circ 1)}{2}\right|\le \eta; 
\]
and that for every $x\in\zo^n$, $P_n(x)=C(x)$. Then $|\P_\delta(C)-P_0|\le \delta + 4\eta\cdot (n+1)$. 
\end{lemma}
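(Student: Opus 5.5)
The plan is to reuse the bit-by-bit fixing argument from the proof of \Cref{thm: standard equiv admissible} and \Cref{lmm: apx monotone}. This time the approximate counter $\P$ is ``dueled'' against the explicit family $P_0,\dots,P_n$ instead of against a second oracle value. We argue in $\Apx_1$ and work throughout at precision $\eta$. Specifically, we first show
\[
|\P_\eta(C)-P_0|\le 4\eta\cdot n .
\]
Then the \refaxiom{precision}, applied with $\delta_1=\delta$, $\delta_2=\eta$ and slack $\beta\eqdef\eta$, gives
\[
|\P_\delta(C)-P_0|\le \delta+2\eta+4\eta n\le \delta+4\eta(n+1),
\]
which is the claim.

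For $x\in\zo^k$ with $k\le n$, let $D_x\eqdef\Fix(C,x)\in B_{n-k}$ be the circuit obtained by fixing $k$ of the input bits of $C$ to $x$. The fixing order must be chosen so that $D_{x\circ b}$ is $\Fix(D_x,b)$, i.e.\ fixing the rightmost remaining input bit of $D_x$. This matches the way the hypothesis extends $x$ to $x\circ b$. If the encoding conventions of $\Fix$ and of the $P_i$ disagree on bit order, I would either reindex the $P_i$ or apply \Cref{lmm: permutation approx} to $C$ first, paying an extra $O(\eta)$ that is absorbed by adjusting constants. These are the same $\PV$-provable facts about $\Fix$ already used in the proof of \Cref{lmm: local symm}.

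Suppose, towards a contradiction, that $|\P_\eta(C)-P_0|>4\eta n$. Define a $\P$-oracle iterative algorithm that, for each $k\le n$, outputs $x\in\zo^k$ satisfying the invariant
\[
|\P_\eta(D_x)-P_k(x)|>4\eta(n-k).
\]
The case $k=0$ is exactly the assumption. For the inductive step, let $a_b\eqdef \P_\eta(D_{x\circ b})$ and $q_b\eqdef P_{k+1}(x\circ b)$ for $b\in\zo$.
\begin{itemize}
\item The \refaxiom{local}, applied with slack $\eta$, gives $\bigl|\P_\eta(D_x)-\tfrac{a_0+a_1}{2}\bigr|\le 3\eta$.
\item The hypothesis on the $P_i$ gives $\bigl|P_k(x)-\tfrac{q_0+q_1}{2}\bigr|\le\eta$.
\item By the triangle inequality, $\bigl|\tfrac{(a_0-q_0)+(a_1-q_1)}{2}\bigr|>4\eta(n-k)-4\eta$.
\end{itemize}
Hence some $b\in\zo$ satisfies $|a_b-q_b|>4\eta(n-k-1)$. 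The algorithm finds this $b$ by querying $\P$ and computing $P_{k+1}$, and outputs $x\circ b$.

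The invariant is a quantifier-free $\PV(\P)$-formula in $k$, so its correctness for all $k\le n$ follows from \Cref{thm: ind via binary search}, or equivalently the \refind{1}.

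At $k=n$ we obtain $x^\star\in\zo^n$ with $|\P_\eta(D_{x^\star})-P_n(x^\star)|>0$. However, $D_{x^\star}$ has all inputs fixed, so it satisfies $\IsConst$ and outputs $C(x^\star)$. The \refaxiom{boundary} therefore gives $\P_\eta(D_{x^\star})=C(x^\star)=P_n(x^\star)$, a contradiction.

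The main obstacle I expect is the formal bookkeeping rather than the inequality chase. Three points need care:
\begin{itemize}
\item the bit-order convention between $\Fix$ and the indexing $x\circ b$ of the $P_i$;
\item the fact that a fully fixed circuit is provably syntactically constant and evaluates to $C(x^\star)$, which may require \Cref{lmm: global consistency} at a cost of another $3\eta$;
\item checking that the whole iteration is a genuine $\PV(\P)$ function, so that open induction applies.
\end{itemize}
Any constant-factor losses from these steps are absorbed by running the argument at a precision $\eta'=\eta/c$ for a suitable constant $c$, and then applying the \refaxiom{precision} at the end.
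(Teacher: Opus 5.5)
Your proof is correct and is essentially the paper's argument. You assume a gap at precision $\eta$, greedily extend a prefix one bit at a time using the Local Consistency Axiom together with the hypothesis on the $P_i$ (losing $4\eta$ per step), contradict the Boundary Axiom on the fully fixed circuit, and return to precision $\delta$ via the Precision Consistency Axiom. Your constant bookkeeping is tidier than the paper's, and the bit-order caveat you raise is a real mismatch in the paper's write-up: its proof extends $x_i$ to $\sigma\circ x_i$, while the hypothesis is stated with $x\circ b$. Your fix via permutational symmetry at a finer precision $\eta/c$ handles it.
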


\begin{proof}
    We argue in $\Apx_1$. Fix $n\in\Log$, $C\in B_n$, the circuits $P_0,P_1,\dots,P_n$, and $\delta^{-1},\eta^{-1}\in\Log$. Suppose that it satisfies the two conditions in the lemma. Assume for contradiction that $|\P_\delta(C)-P_0|>\delta + 3\eta\cdot (n+1)$. By the \refaxiom{precision}, we have that $|\P_\eta(C)-P_0|> 4\eta\cdot n+3\eta$. 
    
    Let $C^x$ be the circuit obtained by fixing the rightmost $|x|$ bits of $C$ to be $x$. Consider the following algorithm that, in the $i$-th iteration, outputs $x_i\in\zo^i$ such that $|\P_\eta(C^{x_i})-P_i(x_i)| > 4\eta \cdot (n-i)+3\eta$. The algorithm starts with $x_0\eqdef \eps$. In the $i$-th iteration, we know by the \refaxiom{local} that 
    \begin{align*}
    |\P_\eta(C^{x_i})-P_i(x_i)| &\le \left|\frac{\P_\eta(C^{0\circ x_i})+\P_\eta(C^{1\circ x_i})}{2}-P_i(x_i)\right|+3\eta \\ 
    &\le \left|\frac{\P_\eta(C^{0\circ x_i})+\P_\eta(C^{1\circ x_i})}{2}-\frac{P_{i+1}(0\circ x_i) + P_{i+1}(1\circ x_i)}{2}\right| + 4\eta \\ 
    &\le \frac{1}{2}\Big(|\P_\eta(C^{0\circ x_i})-P_{i+1}(0\circ x_i)| + |\P_\eta(C^{1\circ x_i})-P_{i+1}(1\circ x_i)|\Big) + 4\eta. 
    \end{align*}
    This means that for some $\sigma\in\zo$, $|\P_\eta(C^{\sigma\circ x_i})-P_{i+1}(\sigma\circ x_i)|> 4\eta\cdot (n-i-1)+3\eta$. The algorithm then proceeds by setting $x_{i+1}=\sigma\circ x_i$. 
    
    It is clear that the correctness of the algorithm can be proved by induction on a quantifier-free formula, which is available in $\Apx_1$. Therefore, in the $n$-th iteration, the algorithm outputs a string $x_n\in\zo^n$ such that $|\P_\eta(C^{x_n})-P_n(x_n)| > 3\eta$. However, this violates the \refaxiom{boundary} as $P_n(x_n)=C(x_n)$, and the circuit $C^{x_n}$ is a constant circuit that outputs $C(x_n)$. 
\end{proof}

\begin{proof}[Proof of \Cref{lmm: short input}]
    We argue in $\Apx_1$. Fix $n\in\Log\Log$ and $C\in B_n$, $\delta^{-1},\beta^{-1}\in\Log$, and let $t$ be the number of accepting inputs of $C$. Let $P_0,\dots,P_n$ be the circuits such that $P_i(x)$ takes an $i$-bit input $x$ and outputs the acceptance probability of $C^x:\zo^{n-i}\to\zo$ defined as $C^x(z)=C(z\circ x)$. In particular, $P_0=t/2^n$. Let $\eta^{-1}\in\Log$ be determined later. 
    
    Since $n \in \Log\Log$, it is provable in $\PV_1$ that $P_0,\dots,P_n$ satisfy the conditions in the \refmeta{dueling}. Then $|\P_\delta(C)-P_0|=|\P_\delta(C)-t/2^n| \le \delta + 4\eta\cdot (n+1)$. The lemma follows by setting $\eta\eqdef \beta/(4n+4)$. 
\end{proof}

\subsection{Approximate Expectation and its Basic Theory}

\newcommand{\disU}{\mathcal{U}}
\newcommand{\disD}{\mathcal{D}}
\newcommand{\calF}{\mathcal{F}}
\newcommand{\Sampler}{\mathsf{Sampler}}

We now develop a theory of feasible random variables and their approximate expectation. 

\subsubsection{Definition of Random Variables and Approximate Expectation}

We first define the approximate expectation of a discrete random variable taking values in $\mathbb{Q}$. Recall that explicit sets are sets encoded by an explicit list and, in particular, the size of explicit sets are always feasible. To have the expectation being feasibly computable (approximately), we restrict to the setting where the support of random variables are given as an \emph{explicit set}. 

Let $n\in\Log$, $V$ be an explicit set of rational numbers, and $C$ be a multi-output circuit such that $\Apx_1$ proves that
\[
\forall x\in\zo^n~C(x)\in V. 
\]
We  say that $(V,n,C)$ defines a random variable $X\eqdef C(\disU_n)$, and define the expectation of $X$ as 
\[
\sum_{v\in V} v\cdot\Pr_x[C(x)=v],
\]
where the probability can be implemented by the approximate counting quantifier $\P$ in $\Apx$. This leads to the following formal definition of random variables and expectation. 

\begin{definition}[Random Variable]
Let $V$ be an explicit set of rational numbers, $n\in\Log$, and $C$ be a multi-output circuit. We say that $(V,n,C)$ defines a \emph{random variable} $X$ over $V$ if $\forall x\in\zo^n~C(x)\in V$. The set $V$ is called the \emph{support} of $X$, $C$ is called the \emph{sampler} of $X$, and $n$ is called the \emph{seed length} of $X$. 
\end{definition}

\begin{definition}[Approximate Expectation]
Let $(V,n,C)$ be a tuple defining a random variable $X$ over $V$, and $\delta^{-1}\in\Log$. We define the \emph{approximate expectation} of $X$, denoted by $\Ex_\delta [X]$, as 
\[
\sum_{v\in V} v\cdot \P_\delta(C_v), 
\]
where $C_v$ is the $n$-input circuit that given $x\in\zo^n$, output $1$ (resp.~$0$) if $C(x)=v$ (resp.~$C(x)\ne v$), ``$\cdot$'' denotes the multiplication of rational numbers, and $\sum$ denotes the summation of rational numbers.
\end{definition}

We note that there is a $\PV(\P)$ function $\E(V,n,C,\Delta)$ computing $\bbE_{|\Delta|^{-1}}[X]$ for the random variable $X$ defined from $(V,n,C)$; it enumerates over $v\in V$, constructs the circuit $C_v$, calls the oracle $p_v\gets \P(C_v,\Delta)$, and sums over $v\cdot p_v$ for all $v\in V$. To see that this algorithm is feasible, notice that $V$ is an explicit set of feasible size, and under the encoding specified in \Cref{sec: def: notation}, the total length of all rational numbers in $V$ is $\PV$-provably feasible. 

For simplicity, we will use the notation $C:\zo^n\to\bbQ$ to denote that $C$ is a multi-output circuit whose output is parsed as a rational number.

Moreover, one may think of the acceptance probability of a circuit $C:\zo^n\to\zo$ as the expectation of the indicating random variable $I_C \eqdef (\{0,1\}, n, C)$ up to a small additive error, as shown by the following proposition. Therefore, the properties of expectation we will prove next also translate to properties of approximate counting.

\begin{proposition}\label{prop: indicator variable}
$\Apx_1$ proves the following statement. Let $n,\delta^{-1}\in\Log$ and $C:\zo^n\to\zo$ be a Boolean circuit. Let $I_C\eqdef (\zo, n,C)$ be the indicator random variable for $C(x) = 1$. Then for any $\beta^{-1}\in\Log$, $|\P_\delta(C) - \bbE_\delta[I_C]| \le 2\delta + \beta$.
\end{proposition}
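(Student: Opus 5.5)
By definition, $\bbE_\delta[I_C]=0\cdot\P_\delta(C_0)+1\cdot\P_\delta(C_1)=\P_\delta(C_1)$. Here $C_1$ is the $n$-input circuit that outputs $1$ iff $C(x)=1$, i.e., the indicator circuit built from $C$ by comparing its output with the constant $1$. So the claim reduces to bounding $|\P_\delta(C)-\P_\delta(C_1)|$. The two circuits are syntactically different but functionally identical, so the plan is a direct application of the \refmeta{global} of approximate counting (\Cref{lmm: global consistency}).

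The steps, in order, are as follows.

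\begin{enumerate}
\item Argue in $\Apx_1$ and fix $n,\delta^{-1},\beta^{-1}\in\Log$ and a Boolean circuit $C\in B_n$.
\item Unfold the $\PV(\P)$ definition of $\E(\{0,1\},n,C,\Delta)$. Show in $\PV$ that the sum over the explicit set $V=\{0,1\}$ evaluates to the rational $0\cdot\P_\delta(C_0)+1\cdot\P_\delta(C_1)$, which equals $\P_\delta(C_1)$ by basic rational arithmetic in $\PV$. Here one uses the Basic Axiom to know that $\P_\delta(C_1)$ is a rational in $[0,1]$.
\item Prove in $\PV_1$ that $\forall x\in\zo^n~C(x)=C_1(x)$. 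This needs a routine evaluation lemma: $\Eval(C_1,x)=1$ iff $\Eval(C,x)=1$, for the natural construction of $C_v$ as $C$ followed by an equality-test subcircuit against $v$. Since $C$ is Boolean-valued, i.e., $C(x)\in\{0,1\}$ is part of the hypothesis that $(\{0,1\},n,C)$ defines a random variable, the two outputs agree pointwise.
\item Apply \Cref{lmm: global consistency} to $C$ and $C_1$ with parameters $\delta,\beta$. This gives $|\P_\delta(C)-\P_\delta(C_1)|\le 2\delta+\beta$, which is the desired bound.
\end{enumerate}

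The only point needing care is step 3. The encoding convention treats $\eps$ as false, and multi-output versus single-output conventions differ, so one must ensure the comparison gate in $C_v$ interprets $C$'s output bit consistently with $\Eval(C,x)\in\{0,1\}$. If $C$'s Boolean output could also be read as $\eps$ or as a string, I would first normalize $C$ so that it outputs exactly one bit. Any such normalization is again functionally equivalent in $\PV_1$, so that discrepancy is likewise absorbed by \Cref{lmm: global consistency}, possibly after halving $\beta$ and chaining two applications via the Precision Consistency Axiom. Everything else is bookkeeping.
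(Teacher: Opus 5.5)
Your proposal is correct and matches the paper's proof: unfold the definition to get $\bbE_\delta[I_C]=\P_\delta(C_1)$, note that $C_1$ and $C$ are functionally equivalent, and apply \Cref{lmm: global consistency} once. The normalization detour in your last paragraph is unnecessary: pointwise functional equivalence is transitive, so a single application of global consistency already gives $2\delta+\beta$. By contrast, naively chaining two applications at the same precision would only give $4\delta+\beta$.
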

\begin{proof}
We argue in $\Apx_1$. Fix $n,\delta^{-1},\beta^{-1}\in\Log$ and a circuit $C:\zo^n\to\zo$. By the definition of approximate expectation, we know that $\bbE_\delta[I_C] = 1 \cdot \P_\delta[C_1]$, where $C_1(x), C(x)$ are functionally equivalent circuits. By the \refaxiom{global} of approximate counting, we have that 
\[
|\P_\delta(C) - \bbE_\delta[I_C]| = |\P_\delta(C) - \P_\delta[C_1]| \le 2\delta + \beta. \qedhere
\]
\end{proof}

\subsubsection{Basic Properties of Approximate Expectation}

\paragraph{Precision Consistency.} Similar to the \refaxiom{precision} for approximate counting, the definition of approximate expectation is consistent with respect to different precisions as shown in the proposition below.

\dummylabel{precision exp}{Precision Consistency of Expectation}
\begin{proposition}\label{prop: precision consistency exp}
    $\Apx_1$ proves the following statement. Let $n,\delta_1^{-1},\delta_2^{-1}\in\Log$, $C:\{0,1\}^n\to \bbQ$, $V\subseteq\bbQ$ be an explicit set such that $\forall x\in\zo^n~C(x)\in V$. Let $X$ be the random variable defined by $(V,n,C)$. Then for every $\beta^{-1}\in\Log$, 
    \[
    \left|\bbE_{\delta_1}[X] - \bbE_{\delta_2}[X]\right| \le (\delta_1+\delta_2+\beta)\cdot \|V\|,
    \]
    where $\|V\|\eqdef \sum_{v\in V}|v|$ is the $\ell_1$-norm of $V$ and $|v|$ denotes the absolute value of the rational $v$. 
\end{proposition}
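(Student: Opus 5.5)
The plan is to unfold the definition of approximate expectation and apply the \refaxiom{precision} termwise, then sum with the triangle inequality. Concretely, by definition
\[
\bbE_{\delta_1}[X] - \bbE_{\delta_2}[X] = \sum_{v\in V} v\cdot\bigl(\P_{\delta_1}(C_v) - \P_{\delta_2}(C_v)\bigr),
\]
where the same indicator circuit $C_v$ appears in both sums. This is because the $\PV(\P)$ function $\E$ constructs $C_v$ from $(V,n,C)$ independently of the precision parameter, so the identity is provable in $\PV(\P)$ by unfolding the definition of $\E$.

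For each $v\in V$, the circuit $C_v\in B_n$, so the \refaxiom{precision} instantiated with the given $\delta_1,\delta_2,\beta$ yields $|\P_{\delta_1}(C_v)-\P_{\delta_2}(C_v)|\le \delta_1+\delta_2+\beta$. Multiplying by $|v|$ gives $|v|\cdot|\P_{\delta_1}(C_v)-\P_{\delta_2}(C_v)|\le |v|\cdot(\delta_1+\delta_2+\beta)$.

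The remaining step is to sum these bounds over the explicit set $V$. We use the triangle inequality for a feasibly long sum of rationals, which gives $|\sum_{v} a_v|\le\sum_v |a_v|$, and linearity of summation, which gives $\sum_v |v|\cdot c = c\cdot\|V\|$. Both facts are proved by induction on the length of the list encoding $V$. The induction formula is quantifier-free: it asserts that the partial sum over the first $j$ elements of $V$ is bounded by $(\delta_1+\delta_2+\beta)$ times the partial $\ell_1$-norm. Its step uses the termwise bound above, which is itself an instance of a universally quantified axiom and therefore available inside the open induction of $\Apx_1$ (via \Cref{thm: ind via binary search}).

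I expect no genuine obstacle. The only mildly delicate point is making sure that the termwise inequality, which quantifies over $v\in V$, can be packaged as an open formula for the induction. This works because the quantifier ranges over an explicit set, and the instance needed at step $j$ is obtained by substituting the $j$-th element of $V$ into the \refaxiom{precision}. Rational arithmetic on feasibly sized lists is routine in $\PV_1$.
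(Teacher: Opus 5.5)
Your proposal is correct and is essentially the paper's proof: you unfold $\bbE_{\delta_i}[X]=\sum_{v\in V}v\cdot\P_{\delta_i}(C_v)$ with the same indicator circuits $C_v$, apply the Precision Consistency Axiom to each $C_v$, and sum over the explicit set $V$. Your use of $\bigl|\sum_v v\,a_v\bigr|\le\sum_v|v|\,|a_v|$, with $a_v=\P_{\delta_1}(C_v)-\P_{\delta_2}(C_v)$, is actually a bit more careful than the paper's displayed chain. The paper's middle step bounds $\bigl|\sum_v v\,a_v\bigr|$ by $\bigl|\sum_v v\,(\delta_1+\delta_2+\beta)\bigr|$, which fails when $V$ has values of both signs, although its final bound $(\delta_1+\delta_2+\beta)\cdot\|V\|$ is still correct.
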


\begin{proof}
We argue in $\Apx_1$. Recall that $C_v$ is the circuit that outputs $1$ if $C(x)=v$, and outputs $0$ otherwise. By the definition of approximate counting, we can see that 
\begin{align*}
\left|\bbE_{\delta_1}[X] - \bbE_{\delta_2}[X]\right| &= 
\left|\sum_{v\in V}v\cdot \P_{\delta_1}(C_v) - \sum_{v\in V}v\cdot \P_{\delta_2}(C_v)\right| \\ 
&\le \left|\sum_{v\in V} v\cdot (\P_{\delta_1}(C_v) - \P_{\delta_2}(C_v)\right| \\ 
&\le \left|\sum_{v\in V} v\cdot (\delta_1+\delta_2+\beta)\right| \le (\delta_1+\delta_2+\beta)\cdot \|V\|,
\end{align*}
where the second last inequality follows from the \refaxiom{precision}. 
\end{proof}

\paragraph{Local Consistency.} Similarly, we can prove that approximate expectation is locally consistent by fixing the rightmost bit of the random seed to be $0$ or $1$ randomly. Formally: 

\dummylabel{local exp}{Local Consistency of Expectation}
\begin{proposition}\label{prop: local consistency exp}
    $\Apx_1$ proves the following statement. Let $n,\delta^{-1}\in\Log$, $C:\zo^n\to\bbQ$, $V\subseteq\bbQ$ be an explicit set such that $\forall x\in\zo^n~C(x)\in V$. Let $X$ be the random variable defined by $(V,n,C)$. Then for every $\beta^{-1}\in\Log$, 
    \[
    \left|\bbE_{\delta}[X] - \frac{\bbE_{\delta}[X|_0] + \bbE_{\delta}[X|_1]}{2}\right| \le (2\delta+\beta)\cdot \|V\|,
    \]
    where $\|V\|\eqdef \sum_{v\in V}|v|$ is the $\ell_1$ norm of $V$, and for $b\in\zo$, $X|_b$ denotes the random variable defined by $(V,n-1,\Fix_b(C))$. 
\end{proposition}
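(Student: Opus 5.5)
The plan is to reduce the statement termwise to the \refaxiom{local}, in the same way \Cref{prop: precision consistency exp} was reduced to the \refaxiom{precision}. First unfold the definitions. By definition,
\[
\bbE_\delta[X]=\sum_{v\in V} v\cdot \P_\delta(C_v),\qquad \bbE_\delta[X|_b]=\sum_{v\in V} v\cdot \P_\delta\big((\Fix_b(C))_v\big)\quad (b\in\zo).
\]
Here $(\Fix_b(C))_v$ is the indicator circuit for the event that $\Fix_b(C)$ outputs $v$. Note that $X|_b$ is a legitimate random variable over $V$: $\PV$ proves $\Fix_b(C)(z)=C(z\circ b)\in V$.

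The key observation is that $(\Fix_b(C))_v$ and $\Fix_b(C_v)$ are functionally equivalent, and this is provable in $\PV$ for natural implementations of $\Fix$ and of the indicator construction. Depending on the implementation, they may even be syntactically equal. I would either arrange the implementation so that they coincide, or invoke \refmeta{global} for each $v$. In the latter case each $v$ costs an extra $2\eta+\eta$, and this must be absorbed into the slack.

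To keep the final constant exactly $(2\delta+\beta)\cdot\|V\|$, the cleanest route is to make the two circuits syntactically identical. If that is not possible, I would handle the slack using the freedom in $\beta$, as follows. Apply the \refaxiom{local} to $C_v$ with slack $\beta/3$. Compare the precision-$\delta$ values of $\P_\delta((\Fix_b(C))_v)$ and $\P_\delta(\Fix_b(C_v))$. Plain \refmeta{global} would lose $2\delta$ here, which is too much. So instead I would use the monotonicity argument of \Cref{lmm: apx monotone} in the sharper form it actually proves: equivalent circuits differ by at most $6n\eta$ at precision $\eta$. That bound is then transferred to precision $\delta$ only through the axiom applied to the same circuit.

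This transfer step is the obstacle I expect. Done naively, it picks up an additive $2\delta$ per $v$. So I would first try to define $\Fix$ and the indicator construction so that they commute syntactically, which is routine and which I expect the paper intends.

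With that in place, the rest is the triangle inequality, summed over $v\in V$:
\[
\Big|\bbE_\delta[X]-\tfrac{\bbE_\delta[X|_0]+\bbE_\delta[X|_1]}{2}\Big|
\le\sum_{v\in V}|v|\cdot\Big|\P_\delta(C_v)-\tfrac{\P_\delta(\Fix_0(C_v))+\P_\delta(\Fix_1(C_v))}{2}\Big|.
\]
Each term on the right is bounded by $2\delta+\beta$ via the \refaxiom{local} applied to $C_v\in B_n$. Summing gives $(2\delta+\beta)\cdot\|V\|$. The finite sum over the explicit set $V$, and the linearity manipulations of rational sums, are formalized in $\PV_1$ by induction on the length of the list encoding $V$, exactly as in \Cref{prop: precision consistency exp}.
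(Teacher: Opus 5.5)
Your proof is correct once you commit to an encoding in which $(\Fix_b(C))_v$ and $\Fix_b(C_v)$ are $\PV$-provably the same string. For example, let $\Fix_b$ replace the rightmost input gate by a constant gate in place, and let the indicator construction only append gates on top of the output wires. The paper leaves these functions to ``any straightforward implementation,'' so that choice is legitimate. With it, one application of the Local Consistency Axiom to each $C_v$ at precision $\delta$, plus the triangle inequality, gives exactly $(2\delta+\beta)\cdot\|V\|$.

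The paper does not rely on syntactic commutation, so your guess about its intent is off. It works at an auxiliary precision $\eta=\beta/10$ instead:
\begin{itemize}
\item Global Consistency replaces $(\Fix_b(C))_v$ by $\Fix_b(C_v)$ at cost $3\eta\|V\|$.
\item The axiom at precision $\eta$ costs another $3\eta\|V\|$.
\item Only at the end does it return to precision $\delta$, via \Cref{prop: precision consistency exp} applied to $X$ and to $X|_0,X|_1$. This costs $(\delta+2\eta)\|V\|$ for $\bbE[X]$ and the same for the average.
\end{itemize}
So the $2\delta$ in the target is paid by this single precision transfer, not by the axiom.

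This is exactly how to resolve the obstacle you left open in your fallback. Comparing $\P_\delta((\Fix_b(C))_v)$ with $\P_\delta(\Fix_b(C_v))$ one circuit at a time necessarily loses about $2\delta$: in a standard model the two values may sit at $p+\delta$ and $p-\delta$, where $p$ is their common acceptance probability. No sharper form of monotonicity can save that step. The fix is never to compare at precision $\delta$ except once, on the whole linear expression.

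Your route is shorter and needs neither the auxiliary precision nor the induction underlying Global Consistency, but it ties the result to encoding details. The paper's route is implementation-independent, and it is the same ``work at $\eta$, transfer once'' pattern used throughout \Cref{sec: basic properties}.
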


\begin{proof}
We argue in $\Apx_1$. Let $\eta^{-1}\in\Log$ be determined later. Recall that $\Fix_b(C)$ outputs the circuit obtained by fixing the rightmost input bit of $C$ to be $b$, where $b\in\zo$. By the \refaxiom{local}, we know that for every $v \in V$, 
\[
\left|\P_\eta(C_v) - \frac{\P_\eta(\Fix(C_v,0))+\P_\eta(\Fix(C_v,1))}{2}\right| \le 3\eta. 
\]
By the definition of approximate counting, we can calculate that
\begin{align*}
& ~\left|\bbE_{\eta}[X] - \frac{\bbE_{\eta}[X|_0] + \bbE_{\eta}[X|_1]}{2}\right| \\ 
=& ~\left|\sum_{v\in V}v\cdot \P_\eta(C_v)- \frac{1}{2}\left(\sum_{v\in V}v\cdot \P_\eta((\Fix_0(C))_v) + \sum_{v\in V}v\cdot \P_\eta((\Fix_1(C))_v)\right)\right| \\ 
\le & ~\left|\sum_{v\in V}v\cdot \P_\eta(C_v)- \frac{1}{2}\left(\sum_{v\in V}v\cdot \P_\eta(\Fix_0(C_v)) + \sum_{v\in V}v\cdot \P_\eta(\Fix_1(C_v))\right)\right| + 3\eta \cdot \|V\|\tag{\refmeta{global}}\\ 
=& ~\sum_{v\in V}|v|\cdot \left|\P_\eta(C_v) - \frac{\P_\eta(\Fix(C_v,0))+\P_\eta(\Fix(C_v,1))}{2}\right| + 3\eta\cdot \|V\| \\ 
\le& ~6\eta\cdot \|V\|.
\end{align*}
The result follows from the \refmeta{precision exp} by taking $\eta\eqdef \beta/10$. 
\end{proof}

\paragraph{Consistency in Support Extension.} Suppose that $X$ is a random variable defined by the tuple $(V,n,C)$. Consider an explicit set $V'$ such that $V\subseteq V'$. We can define another random variable $X'$ that is essentially the same as $X$ by considering the tuple $(V',n,C)$. The following proposition shows that the expectation of $X'$ and $X$ are nearly the same, i.e., a support extension does not affect the expectation of a random variable significantly.

\dummylabel{support ext}{Consistency in Support Extension}
\begin{proposition}\label{prop: support extension exp}
$\Apx_1$ proves the following statement. Let $n,\delta^{-1},\beta^{-1}\in\Log$, $C:\zo^n\to\bbQ$, $V,V'\subseteq\bbQ$ be explicit sets such that $\forall x\in\zo^n~C(x)\in V\subseteq V'$. Let $X$ be the random variable defined by $(V,n,C)$, and $X'$ be the random variable defined by $(V', n,C)$. Then:
\[
\left|\bbE_\delta[X] - \bbE_\delta[X']\right| \le (2\delta+\beta) \cdot \|V'\setminus V\|\le (2\delta+\beta)\cdot \|V'\|. 
\]
where $\|V'\setminus V\| = \sum_{v\in V'\setminus V}|v|$ is the $\ell_1$-norm of $V'\setminus V$.
\end{proposition}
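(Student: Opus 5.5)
We compare the two sums term by term. By definition, $\bbE_\delta[X'] = \sum_{v\in V'} v\cdot\P_\delta(C_v)$ and $\bbE_\delta[X]=\sum_{v\in V} v\cdot \P_\delta(C_v)$, where for each $v\in V$ the indicator circuit $C_v$ is literally the same circuit in both definitions (it depends only on $C$ and $v$, not on the ambient support). Hence, provably in $\PV$ (by a straightforward induction on the length of the explicit list $V'$, splitting the sum over $V'$ into the sum over $V$ and over $V'\setminus V$),
\[
\bbE_\delta[X'] - \bbE_\delta[X] \;=\; \sum_{v\in V'\setminus V} v\cdot \P_\delta(C_v).
\]
So it suffices to bound each term with $v\in V'\setminus V$.

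For such a $v$, the hypothesis $\forall x\in\zo^n~C(x)\in V$ together with $v\notin V$ gives $\forall x\in\zo^n~C_v(x)=0$; that is, $C_v$ is functionally equivalent to $\Null_n$. By the \refmeta{global} (\Cref{lmm: global consistency}), $|\P_\delta(C_v)-\P_\delta(\Null_n)|\le 2\delta+\beta$, and by the \refaxiom{boundary}, $\P_\delta(\Null_n)=0$ since $\Null_n$ is syntactically constant with output $0$. Combined with the \refaxiom{basic} ($\P_\delta(C_v)\ge 0$), this yields $0\le \P_\delta(C_v)\le 2\delta+\beta$ for every $v\in V'\setminus V$.

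Plugging in, $|\bbE_\delta[X']-\bbE_\delta[X]|\le \sum_{v\in V'\setminus V}|v|\cdot(2\delta+\beta)=(2\delta+\beta)\cdot\|V'\setminus V\|$, and the last inequality $\|V'\setminus V\|\le\|V'\|$ is trivial. The only step requiring care is formalizing the uniform bound over all $v\in V'\setminus V$ inside $\APX_1$: the statement ``$\forall v\in V'\setminus V:\ \P_\delta(C_v)\le 2\delta+\beta$'' is a quantifier-free $\PV(\P)$-formula over an explicit set, and it is derived from the universally quantified global-consistency lemma instantiated at each $v$ (via the $(\exists_e)$/$(\exists_i)$-style rules for explicit sets). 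The summation estimate then follows by induction on the length of the list $V'\setminus V$ applied to a quantifier-free formula about partial sums, which is available in $\APX_1$ by \Cref{thm: ind via binary search}. No precision smoothing is needed, since the global-consistency bound already carries the slack $\beta$.
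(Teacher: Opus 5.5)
Your proposal is correct and follows the paper's proof essentially verbatim: split $\bbE_\delta[X']-\bbE_\delta[X]$ into the sum over $V'\setminus V$, observe that each $C_v$ with $v\in V'\setminus V$ is provably equivalent to $\Null_n$, and bound $\P_\delta(C_v)\le 2\delta+\beta$ via \refmeta{global} and the \refaxiom{boundary}. Your extra remarks (using $\P_\delta(C_v)\ge 0$ from the \refaxiom{basic} to pass to $|v|\cdot\P_\delta(C_v)$, and the formalization of the bound uniformly over the explicit set) only make explicit details the paper leaves implicit.
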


\begin{proof}
We argue in $\Apx_1$. By the definition of approximate expectation, we know that 
\begin{align}
\left|\bbE_\delta[X] - \bbE_\delta[X']\right| = \left|\sum_{v\in V'\setminus V} v\cdot \P_\delta(C_v)\right| \le \sum_{v\in V'\setminus V} |v| \cdot \P_\delta(C_v), 
\end{align}
where $C_v(x)$ outputs $1$ if $C(x)=v$, and $0$ otherwise. Therefore it suffices to prove that $\P_\delta(C_v) \le 2\delta+\beta$ for $v\in V'\setminus V$. Fix any $v\in V'\setminus V$. Note that since $C(x)\in V$ for $x\in\zo^n$, we know that $C(x)\ne v$ and thus $C_v$ is (provably) functionally equivalent to $\Null_n$. The desired bound then follows from the \refmeta{global} of approximate counting using that $\P_\delta(\Null_n)=0$ by the \refaxiom{boundary}. 
\end{proof}

\paragraph{Permutational Symmetry.} Suppose that $X,X'$ are random variables defined by the tuples $(V,n,C)$ and $(V,n,C\circ\pi)$, where $\pi\in S_n$ denotes a permutation of the input bits. Similar to the \refmeta{permutation} of approximate counting, we will show that $\bbE[X]\approx \bbE[X']$. 

\dummylabel{permutation exp}{Permutational Symmetry of Expectation}
\begin{proposition}\label{prop: permutation exp}
$\Apx_1$ proves the following statement. Let $n,\delta^{-1},\beta^{-1}\in\Log$, $C:\zo^n\to\bbQ$, $V\subseteq\bbQ$ be an explicit set such that $\forall x\in\zo^n~C(x)\in V$. Let $\pi\in S_n$ be a permutation of the input bits. Let $X,X'$ be the random variables defined by $(V,n,C)$ and $(V,n,C\circ\pi)$, respectively. Then: 
\[
|\bbE_\delta[X] - \bbE_\delta[X']|\le (2\delta+\beta)\cdot \|V\|,
\]
where $\|V\|$ is the $\ell_1$-norm of $V$. 
\end{proposition}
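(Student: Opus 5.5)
The plan is to reduce the statement to the \refmeta{permutation} of approximate counting (\Cref{lmm: permutation approx}), applied value by value. The key observation is that for each $v\in V$, the indicator circuit of $C\circ\pi$ is functionally the permuted indicator circuit of $C$. That is, $(C\circ\pi)_v$ is provably (in $\PV_1$) functionally equivalent to $C_v\circ\pi$, since both output $1$ on $x$ iff $C(x_{\pi_n}\circ\dots\circ x_{\pi_1})=v$.

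We argue in $\Apx_1$. Let $\eta^{-1}\in\Log$ be a parameter to be fixed later. For each $v\in V$ we combine two earlier results:
\begin{itemize}
\item by the \refmeta{global} of approximate counting (\Cref{lmm: global consistency}), $|\P_\eta((C\circ\pi)_v)-\P_\eta(C_v\circ\pi)|\le 3\eta$;
\item by \Cref{lmm: permutation approx}, $|\P_\eta(C_v\circ\pi)-\P_\eta(C_v)|\le 3\eta$.
\end{itemize}
Together these give $|\P_\eta(C_v)-\P_\eta((C\circ\pi)_v)|\le 6\eta$ for every $v\in V$. This is a quantifier-free statement, and since $V$ is an explicit set it holds simultaneously for all $v\in V$.

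Summing over $V$ with weights $v$ exactly as in the proof of \Cref{prop: precision consistency exp} then gives
\[
|\bbE_\eta[X]-\bbE_\eta[X']|\le \sum_{v\in V}|v|\cdot|\P_\eta(C_v)-\P_\eta((C\circ\pi)_v)|\le 6\eta\cdot\|V\|.
\]

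To pass from precision $\eta$ back to precision $\delta$, we apply the \refmeta{precision exp} (\Cref{prop: precision consistency exp}) to both $X$ and $X'$, with slack $\eta$. This bounds $|\bbE_\delta[X]-\bbE_\eta[X]|$ and $|\bbE_\delta[X']-\bbE_\eta[X']|$ each by $(\delta+2\eta)\cdot\|V\|$.

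The triangle inequality then yields a total bound of $(2\delta+10\eta)\cdot\|V\|$. Choosing $\eta\eqdef\beta/10$ gives the claimed bound $(2\delta+\beta)\cdot\|V\|$.

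The only step needing care is the $\PV_1$-provable functional equivalence $(C\circ\pi)_v\equiv C_v\circ\pi$. This is routine for natural implementations of $\Permute$ and of the indicator construction. Everything else is bookkeeping of additive errors, and the sum over $V$ is feasible because $V$ is explicit.
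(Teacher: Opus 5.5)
Your proof is correct and is essentially the paper's own argument. You use global consistency to replace $(C\circ\pi)_v$ by $C_v\circ\pi$, permutational symmetry of approximate counting to get a $6\eta$ bound for each $v$, a sum weighted by $|v|$, and precision consistency of expectation with $\eta=\beta/10$, which yields exactly $(2\delta+10\eta)\cdot\|V\|=(2\delta+\beta)\cdot\|V\|$, with the same constants as in the paper.
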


\begin{proof}
    We argue in $\Apx_1$. Let $\eta^{-1}\in\Log$ be determined later. We can calculate that  
    \begin{align*}
    |\bbE_\eta[X]-\bbE_\eta[X']| &\leq \sum_{v\in V}|v|\cdot\left|\P_\eta(C_v)-\P_\eta((C\circ\pi)_v)\right|\\
    &\le \sum_{v\in V}|v|\cdot \left|\P_\eta(C_v)-\P_\eta(C_v\circ \pi)\right|+3\eta\cdot \|V\| \\ 
    &\le 6\eta\cdot \|V\|.
    \end{align*}
    Here, the second line follows from \refmeta{global} of approximate counting, and the third line follows from the \refmeta{permutation} of approximate counting. Subsequently, the proposition follows from the \refmeta{precision exp} by taking $\eta\eqdef \beta/10$.  
\end{proof}

\subsubsection{Averaging Argument for Expectation}
\label{sec: avg for expectation}

We will prove a general version of the \emph{averaging argument} that allows us to search for a suffix of the seed such that the given linear combination of expectations of random variables $X_1,\dots,X_m$ is approximately preserved after fixing part of the seed.   

Suppose that $X_1,\dots,X_m$ are random variables supported on $V$ defined by a sequence of circuits $C_1,\dots,C_m$, each with seed length $n$, i.e., for each $i\in[m]$ and every $x\in\zo^n$, $C_i(x)\in V$. Let $\delta^{-1}\in\Log$. Let $\lambda_1,\dots,\lambda_m\in\bbQ$ be coefficients, and consider the quantity 
\begin{equation}
\mu_{n,m,\delta,\vec\lambda}\eqdef\lambda_1\cdot \bbE_\delta[X_1] + \lambda_2\cdot \bbE_\delta[X_2] + \dots + \lambda_m \cdot \bbE_\delta[X_m]. \label{equ: def mu}
\end{equation}
Let $z\in\zo^k$ for $k\in[n]$. We can define the random variable $X_i|_z$ for each $i\in[m]$ from $(V,n-k,\Fix(C_i,z))$, where $\Fix(C_i,z)$ outputs the circuit obtained from $C_i$ by fixing the rightmost $k$ bits to $z$. That is, $X_i|_z$ is the random variable obtained by fixing the last $k$ input bits of $C_i$ to be $z$.\footnote{This is without loss of generality by the \refmeta{permutation exp}.} Let $\mu_{n,m,\delta,\vec\lambda}|_z$ be the quantity
\begin{equation}
\mu_{n,m,\delta, \vec\lambda}|_z\eqdef \lambda_1\cdot \bbE_\delta[X_1|_z] + \lambda_2\cdot \bbE_\delta[X_2|_z] + \dots + \lambda_m \cdot \bbE_\delta[X_m|_z]. \label{equ: def mu z}
\end{equation}

\begin{theorem}[Averaging Argument for Expectation]\label{thm: averaging on exp}\dummylabel{avg on exp}{Averaging Argument for Expectation}
$\Apx_1$ proves the following statement. Let $n,m,\delta^{-1}\in\Log$, $C_1,\dots,C_m:\zo^n\to\bbQ$ be circuits, $V\subseteq\bbQ$ be an explicit set such that $\forall x\in\zo^n~\forall i\in[m]~C_i(x)\in V$, and $\vec\lambda=(\lambda_1,\dots,\lambda_m)$ be a list of length $m$ such that $\lambda_i\in\bbQ$ for $i\in[m]$.  

Then for every $k\in[n]$ and $\beta^{-1}\in\Log$, there is a $z\in\zo^k$ such that 
\begin{equation}\label{equ: inequal def averaging}
\mu_{n,m,\delta,\vec\lambda}|_z \ge \mu_{n,m,\delta,\vec\lambda} - (2\delta + \beta) \cdot \|V\|\cdot\|\vec \lambda\|,
\end{equation}
where $\|V\|\eqdef \sum_{v\in V}|v|$ and $\|\vec \lambda\|\eqdef\sum_{i\in[m]}|\lambda_i|$ are the $\ell_1$-norm of $V$ and $\lambda$, respectively, and $\mu_{n,m,\delta,\vec\lambda}$ and $\mu_{n,m,\delta,\vec\lambda}|_z$ are defined by \Cref{equ: def mu} and \Cref{equ: def mu z}, respectively. 
\end{theorem}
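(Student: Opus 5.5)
We follow the same bit-by-bit fixing strategy used in the proofs of \Cref{lmm: apx monotone}, the \refmeta{dueling}, and \Cref{lmm: less than t}. The plan is to first work at an auxiliary precision $\eta$ with $\eta^{-1}\in\Log$, prove a lossy version whose error grows linearly in $k$, and then transfer back to precision $\delta$ using the \refmeta{precision exp}. Throughout we write $\mu^{\eta}$ and $\mu^{\eta}|_z$ for the quantities \Cref{equ: def mu} and \Cref{equ: def mu z} evaluated at precision $\eta$. All of these are $\PV(\P)$-computable: each is a feasible sum of calls $\E(V,n-|z|,\Fix(C_i,z),\cdot)$.

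\textbf{Step 1 (one-step loss).} For any $z\in\zo^j$ with $j<n$, apply the \refmeta{local exp} to each $X_i|_z$ at precision $\eta$, with slack $\eta$. For this we need $\PV$ to prove that $\Fix_b(\Fix(C_i,z))$ equals $\Fix(C_i,b\circ z)$, or at least is functionally equivalent to it. Functional equivalence suffices, after one more use of the \refmeta{global} of approximate counting inside each $C_v$, which adds $O(\eta)\|V\|$. Multiplying each bound by $\lambda_i$ and summing gives
\[
\Bigl|\mu^\eta|_z-\tfrac12\bigl(\mu^\eta|_{0\circ z}+\mu^\eta|_{1\circ z}\bigr)\Bigr|\le c\,\eta\,\|V\|\,\|\vec\lambda\|
\]
for a small absolute constant $c$. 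Consequently some $\sigma\in\zo$ satisfies $\mu^\eta|_{\sigma\circ z}\ge \mu^\eta|_z-c\eta\|V\|\|\vec\lambda\|$.

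\textbf{Step 2 (greedy search and induction).} Define the $\PV(\P)$ procedure $\mathsf{AvgSampler}$. It starts from $z_0=\eps$. At step $j$ it computes both $\mu^\eta|_{0\circ z_j}$ and $\mu^\eta|_{1\circ z_j}$ with oracle calls, and sets $z_{j+1}=\sigma\circ z_j$ for the larger of the two. By Step 1 we have $\mu^\eta|_{z_{j+1}}\ge\mu^\eta|_{z_j}-c\eta\|V\|\|\vec\lambda\|$. The invariant $\mu^\eta|_{z_j}\ge \mu^\eta-j\,c\,\eta\,\|V\|\,\|\vec\lambda\|$ is a quantifier-free $\PV(\P)$ formula in $j$. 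It therefore follows by the induction of \Cref{thm: ind via binary search}, or by $1$-induction as in \Cref{lmm: apx monotone}. Taking $j=k$ yields $z\in\zo^k$ with $\mu^\eta|_z\ge\mu^\eta-kc\eta\|V\|\|\vec\lambda\|$. Note that $z$ must be taken to be the output of this algorithm, so that the existential witness is a term.

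\textbf{Step 3 (precision smoothing).} By the \refmeta{precision exp}, applied both to the unrestricted variables and to the restricted variables $X_i|_z$, for every $i$ we get $|\bbE_\delta[X_i]-\bbE_\eta[X_i]|\le(\delta+\eta+\eta)\|V\|$, and the same bound for $X_i|_z$. Hence $|\mu-\mu^\eta|$ and $|\mu|_z-\mu^\eta|_z|$ are each at most $(\delta+2\eta)\|V\|\|\vec\lambda\|$. Combining these with Step 2, the total loss is $(2\delta+4\eta+kc\eta)\|V\|\|\vec\lambda\|$. Choosing $\eta\eqdef\beta/(c(k+4)+4)$, which satisfies $\eta^{-1}\in\Log$ since $k\le n\in\Log$, makes this at most $(2\delta+\beta)\|V\|\|\vec\lambda\|$.

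\textbf{Main obstacle.} The delicate part is Step 3's bookkeeping. A naive chaining of precision consistency would cost $2\delta$ on each side, giving $4\delta$ overall, while the statement allows only $2\delta$. So the slack must be set up exactly as above: one application of $(\delta+\eta+\eta)$ on each side, which contributes exactly $\delta$ per side. A secondary concern is formalizing the identity $\Fix_b\circ\Fix(\cdot,z)=\Fix(\cdot,b\circ z)$ and the linear-combination arithmetic over the feasible list $\vec\lambda$ in $\PV$. Both are routine.
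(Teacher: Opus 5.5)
Your proposal is correct and follows the paper's proof essentially step for step. The paper's $\mathsf{AvgSampler}$ greedily extends $z$ one bit at a time using the Local Consistency of Expectation at an auxiliary precision $\eta$, and maintains the invariant $\mu^\eta|_z \ge \mu^\eta - 3k\eta\|V\|\|\vec\lambda\|$ by open induction. It then transfers back to precision $\delta$ via the Precision Consistency of Expectation, at cost $(\delta+2\eta)\|V\|\|\vec\lambda\|$ on each side. Your explicit choice $\eta=\beta/(c(k+4)+4)$, together with your remark on $\Fix_b(\Fix(C_i,z))$ versus $\Fix(C_i,b\circ z)$, makes the final bookkeeping slightly more careful than the paper's.
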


\newcommand{\AvgSampler}{\mathsf{AvgSampler}}

\begin{proof}
We argue in $\Apx_1$. Fix $n,m,\delta^{-1}\in\Log$, $C_1,\dots,C_m:\zo^n\to\bbQ$, let $V\subseteq \bbQ$ be an explicit set, $\vec \lambda=(\lambda_1,\dots,\lambda_m)\in\bbQ$, $k\in[n]$, $\beta^{-1}\in\Log$. Recall that by definition, we have that for each $b\in\zo$,
\begin{align}
& \mu_{n,m,\eta,\vec\lambda}|_z = \lambda_1 \cdot \bbE_\delta[X_1|_z] + \dots +\lambda_m \cdot \bbE_\delta[X_m|_z]; \\ 
& \mu_{n,m,\eta,\vec\lambda}|_{z\circ b} = \lambda_1 \cdot \bbE_\delta[X_1|_{z\circ b}] + \dots +\lambda_m \cdot \bbE_\delta[X_m|_{z\circ b}].
\end{align}

We will design a $\P$-oracle polynomial-time algorithm $\AvgSampler(\pi)$ that takes 
\[ 
\pi\eqdef (1^n,1^m,1^{\delta^{-1}},C_1,\dots,C_m,V,\vec\lambda,1^k, 1^{\beta^{-1}})
\] 
as its input and outputs $z\in\zo^k$ such that \Cref{equ: inequal def averaging} holds. \Cref{thm: averaging on exp} then follows if the correctness of $\AvgSampler(\pi)$ can be proved in $\Apx_1$. 

\newcommand{\cst}{3}

$\AvgSampler(\pi)$ is an iterative algorithm on $k$ (i.e.~the length of $z$). We will prove the invariant that for any $k\in\{0,1,\dots,n\}$, the algorithm $\AvgSampler(\pi)$ outputs a string $z\in\zo^k$ such that 
\begin{equation}\label{equ: diff exp ind}
\mu_{n,m,\eta,\vec\lambda}|_z \ge \mu_{n,m,\eta,\vec \lambda} - \cst k\cdot \eta\cdot \|V\|\cdot\|\vec \lambda\|
\end{equation}
where $\eta^{-1} \in \Log$. We note that if this is possible, we can set $\eta \eqdef \beta / (\cst n+3)$ so that \Cref{equ: inequal def averaging} follows from the \refmeta{precision exp}. Specifically, we can see that 
\begin{align*}
& \left|\mu_{n,m,\eta,\vec\lambda}|_z - \mu_{n,m,\delta,\vec \lambda}|_z\right| \\
= & \left|\lambda_1\cdot(\bbE_\eta[X_1|_z]-\bbE_{\delta}[X_1|_z]) + \dots + \lambda_m\cdot (\bbE_\eta[X_m|_z]-\bbE_{\delta}[X_m|_z]) \right| \\ 
\le & \left|\lambda_1 \cdot (\delta+2\eta)\cdot \|V\| + \dots + \lambda_m \cdot (\delta+2\eta)\cdot \|V\|\right| \le (\delta+2\eta)\cdot\|\vec \lambda\|\cdot \|V\|
\end{align*}
and similarly 
\[
\left|\mu_{n,m,\eta,\vec\lambda} - \mu_{n,m,\delta,\vec \lambda}\right| \le (\delta + 2\eta)\cdot \|\vec\lambda\|\cdot\|V\|. 
\]
\Cref{equ: inequal def averaging} then follows from the triangle inequality.  

For $k=0$, $\AvgSampler(\pi)$ outputs $\eps$, and \Cref{equ: diff exp ind} holds as $\mu_{n,m,\vec y,\eta,\vec \lambda}|_z = \mu_{n,m,\vec y,\eta,\vec\lambda}$ by definition. Suppose that it has already obtained a string $z\in\zo^{k}$ such that \Cref{equ: diff exp ind} holds. Our goal is to choose a bit $b\in\zo$ such that 
\[
\mu_{n,m,\eta,\vec\lambda}|_{b\circ z} \ge \mu_{n,m,\eta,\vec \lambda} - \cst (k+1)\cdot \eta \cdot\|V\|\cdot\|\vec \lambda\|.   
\]
For each $i\in [m]$, we know by the \refmeta{local exp} that 
\[ 
\left|\bbE_\eta[X_i|_z] - \frac{\bbE_\eta[X_i|_{0\circ z}] + \bbE_\eta[X_i|_{1\circ z}]}{2}\right| \le 3\eta\cdot \|V\|
\] 
It then follows that 
\begin{align*}
  & \left|\mu_{n,m,\eta,\vec\lambda}|_z - \frac{\mu_{n,m,\eta,\vec\lambda}|_{0\circ z} + \mu_{n,m,\eta,\vec\lambda}|_{1\circ z}}{2}\right| \\ 
= & \left|\sum_{i\in[m]} \lambda_i\cdot \left(\bbE_\eta[X_i|_z] - \frac{\bbE_\eta[X_i|_{0\circ z}] + \bbE_\eta[X_i|_{1\circ z}]}{2}\right)\right| \le 3\eta \cdot\|\vec\lambda\|\cdot\|V\|. 
\end{align*}
Therefore, for some $b\in\zo$, we will have that $\mu_{n,m,\eta,\vec \lambda}|_{b\circ z} \ge \mu_{n,m,\eta,\vec \lambda}|_{z} - 3\eta \cdot\|\vec\lambda\|\cdot\|V\|$, which subsequently implies that 
\[
\mu_{n,m,\eta,\vec\lambda}|_{b\circ z} \ge \mu_{n,m,\eta,\vec \lambda}|_z - 3\eta\cdot\|\vec\lambda\|\cdot\|V\| \ge \mu_{n,m,\eta,\vec \lambda} - \cst(k+1)\cdot\eta \cdot \|\vec\lambda\|\cdot\|V\|.
\]
The algorithm $\AvgSampler$ can use the $\P$-oracle to determine $b$ and output $b\circ z$. This completes the proof.
\end{proof}

\subsubsection{Complementation}

An easy corollary of the \refmeta{avg on exp} is \emph{complementary counting}. That is, if $X$ is a random variable over $\zo$ and $\overline X\eqdef 1-X$, then $\bbE[\overline X]=1-\bbE[X]$. Formally: 

\dummylabel{complementation}{Complementary Counting}
\begin{corollary}[Complementary Counting]\label{cor:complementation}
    $\APX_1$ proves the following statement. Let $n,\delta^{-1},\beta^{-1}\in\Log$, $C_1,C_2\in B_n$ such that for every $x\in\zo^n$, $C_1(x)\ne C_2(x)$. Then $|\P_\delta(C_1)+\P_\delta(C_2)-1|\le 2\delta + \beta$. Moreover, let $X_1,X_2$ be the indicator random variables of $C_1,C_2$ over $\zo$, respectively. Then $|\bbE_\delta[X_1]+\bbE_\delta[X_2]-1|\le 2\delta + \beta$. 
\end{corollary}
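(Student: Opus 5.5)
We argue in $\APX_1$, following the plan sketched in the introduction: work at an auxiliary precision $\eta\eqdef\beta/c$ for a large constant $c$, prove both one-sided bounds via the \refmeta{avg on exp}, and return to precision $\delta$ with the \refaxiom{precision}. Since $C_1,C_2$ are Boolean with $C_1(x)\ne C_2(x)$, we have pointwise $C_1(x)+C_2(x)=1$. Let $X_1,X_2$ be the indicator variables $(\zo,n,C_1)$ and $(\zo,n,C_2)$. Let $X_3$ be the constant variable $(\zo,n,\True_n)$, so that ``$-1$'' is also a linear combination of expectations.

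\textbf{Applying the averaging argument.} Take coefficients $\vec\lambda=(-1,-1,1)$, so $\|\vec\lambda\|=3$ and $\|V\|=1$. Consider
\[
\mu\eqdef -\bbE_\eta[X_1]-\bbE_\eta[X_2]+\bbE_\eta[X_3].
\]
Apply \Cref{thm: averaging on exp} with $k=n$ and precision $\eta$. This yields $z\in\zo^n$ with $\mu|_z\ge \mu-(2\eta+\eta)\cdot 3$. After fixing all $n$ seed bits, each circuit $\Fix(C_i,z)$ is syntactically constant. Hence $\bbE_\eta[X_i|_z]$ is exactly $C_i(z)$ by the \refaxiom{boundary}: the indicator circuit $(\Fix(C_i,z))_1$ is itself constant, so no slack is incurred. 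Therefore $\mu|_z=-C_1(z)-C_2(z)+1=0$, and so $\mu\le O(\eta)$. Running the same argument with $\vec\lambda=(1,1,-1)$ gives $-\mu\le O(\eta)$. Thus $|\bbE_\eta[X_1]+\bbE_\eta[X_2]-\bbE_\eta[X_3]|=O(\eta)$.

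\textbf{Cleaning up.} By the \refaxiom{boundary} and \refmeta{global}, $|\bbE_\eta[X_3]-1|\le 3\eta$. Moreover, \Cref{prop: indicator variable} gives $|\P_\eta(C_i)-\bbE_\eta[X_i]|\le 3\eta$. Together these give $|\P_\eta(C_1)+\P_\eta(C_2)-1|\le c'\eta$ for an explicit constant $c'$. Now apply the \refaxiom{precision} to each of $C_1,C_2$, with slack $\eta$. This gives
\[
|\P_\delta(C_1)+\P_\delta(C_2)-1|\le 2\delta+(c'+4)\eta,
\]
and choosing $c$ large makes this at most $2\delta+\beta$. The statement for $\bbE_\delta[X_i]$ follows in the same way. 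The definitional identity $\bbE_\delta[X_i]=\P_\delta((C_i)_1)$ combined with \refmeta{global} costs additional $O(\eta)$ terms. To avoid picking up an extra $2\delta$, we first work entirely at precision $\eta$ for the expectations, and only afterwards move to $\delta$ via \Cref{prop: precision consistency exp}.

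\textbf{Main obstacle.} The main obstacle is bookkeeping of constants so that only $2\delta$ survives. Each conversion (indicator $\leftrightarrow$ $\P$, constant circuits, precision switch) must be done at precision $\eta$, with a single precision-consistency step to $\delta$ at the end, where the two $\delta$'s come from the two circuits. A minor formal point is checking that the $k=n$ restrictions are syntactically constant, so the \refaxiom{boundary} applies. I expect $\Fix$ with all inputs fixed to produce a circuit with no input paths, provably in $\PV$; if not, I would pass through \refmeta{global} with an extra $O(\eta)$ loss.
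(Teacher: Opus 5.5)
Your proof is correct and follows essentially the same route as the paper. Like the paper, you apply the averaging argument for expectation with $k=n$ at an auxiliary precision $\eta$ in both directions, evaluate the fully restricted variables exactly via the boundary axiom, convert to $\P_\eta$ through Proposition~\ref{prop: indicator variable}, and finish with a single precision-consistency step (Proposition~\ref{prop: precision consistency exp} for the expectation version). The only cosmetic difference is your auxiliary constant variable $X_3$: in its proof the paper lets the constant $-1$ pass unchanged through the inequality $\mu|_z\ge\mu-\text{slack}$, applying the averaging argument with $\vec\lambda=(\pm1,\pm1)$, which saves the extra $3\eta$ from bounding $|\bbE_\eta[X_3]-1|$.
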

\begin{proof}
    We argue in $\APX_1$. Fix $n,\delta^{-1},\beta^{-1}\in\Log$, $C_1,C_2\in B_n$. Let $\eta^{-1}\in\Log$ be determined later, and $X_1,X_2$ be the indicator random variables of $C_1$ and $C_2$, respectively. It is clear that for any total assignment $\rho$ to the seed, $\bbE_\eta[X_1|_\rho] + \bbE_\eta[X_2|_\rho] - 1 = 0$. Therefore, by \refmeta{avg on exp} with $k = n$, $\bbE_\eta[X_1] + \bbE_\eta[X_2]-1 \le 6\eta$. Similarly, we can show that $-\bbE_\eta[X_1]-\bbE_\eta[X_2]+1\le 6\eta$. This implies that 
    \begin{equation}
     \left|\bbE_\eta[X_1]+\bbE_\eta[X_2]-1\right| \le 6\eta.\label{equ: X1 and X2 close to 1}   
    \end{equation}
    By \Cref{prop: indicator variable}, we have 
    \[
    |\P_\eta(C_1)+\P_\eta(C_2)-1|\le 12\eta. 
    \]
    Subsequently, by the \refaxiom{precision}, we have $|\P_\delta(C_1)+\P_\delta(C_2)-1|\le 2\delta + 16\eta$. The desired bound then follows by setting $\eta \eqdef \beta/30$. The ``Moreover'' part follows from \Cref{equ: X1 and X2 close to 1} by the \refmeta{precision exp}.  
\end{proof}

\subsubsection{Linearity of Expectation} 

We are now ready to prove the (approximate) linearity of expectation, one of the most useful results in probability theory. Let $X_1,\dots,X_m,Y$ be random variables over an explicit set $V$. For a random seed $z$ of the random variables, we use $X_i|_z$ and $Y|_z$ to denote the value that $X_i$ and $Y$ evaluate to, respectively. Suppose that for each random seed $z$, we have that 
\[
Y|_z = \gamma + \lambda_1\cdot X_1|_z + \lambda_2\cdot X_2|_z + \dots + \lambda_m\cdot X_m|_z,
\]
for some $\lambda_1,\dots,\lambda_m\in\bbQ$. Then we should be able to obtain that $\bbE[Y]$ is close to 
\[
\gamma + \lambda_1\cdot \bbE[X_1] + \lambda_2\cdot \bbE[X_2] + \dots + \lambda_m\cdot \bbE[X_m].
\]

Formally, we have that: 

\dummylabel{linearity}{Linearity of Expectation}
\begin{theorem}[Linearity of Expectation]\label{thm: linearity exp}
$\Apx_1$ proves the following: Let $n,m,\delta^{-1},\beta^{-1}\in\Log$, $C_1,\dots,C_m:\zo^n\to\bbQ$ be a list of circuits, $\vec\lambda = (\lambda_1,\dots,\lambda_m)$ be a list of length $m$ such that $\lambda_i\in\bbQ$ for $i\in[m]$, $\gamma\in\bbQ$, and $V\subseteq\bbQ$ be an explicit set such that: 
\begin{compactitem}
\item For any $x\in\zo^n$ and $i\in[m]$, $C_i(x)\in V$.
\item For any $x\in\zo^n$, $\gamma + \lambda_1\cdot C_1(x) + \lambda_2\cdot C_2(x) + \dots + \lambda_m\cdot C_m(x)\in V$.  
\end{compactitem}

Let $X_i$ be the random variable defined by $(V,n,C_i)$ for $i\in[m]$, and $Y$ be the random variable defined by $(V,n,S)$, where $S:\zo^n\to\bbQ$ is a circuit such that  
\[
S(x)=\gamma+\lambda_1\cdot C_1(x) + \lambda_2\cdot C_2(x)+\dots + \lambda_m\cdot C_m(x). 
\]
Then: 
\begin{equation}
\left|\bbE_\delta[Y] - (\gamma + \lambda_1\cdot \bbE_\delta[X_1] + \dots + \lambda_m\cdot \bbE_\delta[X_m])\right| \le (2\delta + \beta)\cdot \|V\|\cdot \|\vec \lambda\|,  
\end{equation}
where $\|V\|$ and $\|\vec \lambda\|$ are the $\ell_1$-norm of $V$ and $\lambda$, respectively. 
\end{theorem}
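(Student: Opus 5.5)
The plan is to reduce the statement to two applications of the \refmeta{avg on exp} (\Cref{thm: averaging on exp}), following the template of the proof of \Cref{cor:complementation}. Work in $\Apx_1$ with an auxiliary precision $\eta^{-1}\in\Log$ to be fixed at the end. Consider the $m+1$ random variables $Y,X_1,\dots,X_m$, all over the common support $V$ and seed length $n$. Use the coefficient vector $\vec\lambda'\eqdef(1,-\lambda_1,\dots,-\lambda_m)$, and handle the constant $\gamma$ by moving it outside the linear combination. Define
\[
\mu \eqdef \bbE_\eta[Y] - \sum_{i\in[m]}\lambda_i\,\bbE_\eta[X_i] - \gamma .
\]
The claim is that for every total seed assignment $\rho\in\zo^n$ we have $\mu|_\rho = 0$ up to a negligible error. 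Once the seed is fully fixed, each circuit $\Fix(C_i,\rho)$ and $\Fix(S,\rho)$ is syntactically constant. The indicator circuits $(\Fix(C_i,\rho))_v$ therefore have no inputs, and the \refaxiom{boundary} gives $\P_\eta((\Fix(C_i,\rho))_v)=[C_i(\rho)=v]$. Hence $\bbE_\eta[X_i|_\rho]=C_i(\rho)$ exactly. If fixing creates indicator circuits that are only semantically (not syntactically) constant, I would instead use \refmeta{global}, which loses $3\eta\|V\|$ per variable. The pointwise hypothesis $S(\rho)=\gamma+\sum_i\lambda_iC_i(\rho)$ then yields $\mu|_\rho=0$, or at worst $|\mu|_\rho|\le 3\eta\|V\|(1+\|\vec\lambda\|)$.

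Next, apply \Cref{thm: averaging on exp} with $k=n$ to the coefficients $-\vec\lambda'$. This gives some $\rho$ with $-\mu|_\rho\ge -\mu - 3\eta\|V\|(1+\|\vec\lambda\|)$, and hence $\mu \le \mu|_\rho + O(\eta)\|V\|(1+\|\vec\lambda\|)$. Applying the theorem symmetrically with $+\vec\lambda'$ gives the reverse inequality. Together these give $|\mu|\le c\,\eta\,\|V\|(1+\|\vec\lambda\|)$ for a small constant $c$.

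Finally, pass from precision $\eta$ to $\delta$. By the \refmeta{precision exp} (\Cref{prop: precision consistency exp}), each $\bbE_\eta[\cdot]$ differs from $\bbE_\delta[\cdot]$ by at most $(\delta+2\eta)\|V\|$. Summing with weights $1$ and $|\lambda_i|$ gives a total error of $(\delta+2\eta)\|V\|(1+\|\vec\lambda\|)$ plus $c\eta\|V\|(1+\|\vec\lambda\|)$.

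The main obstacle is bookkeeping. The statement claims error $(2\delta+\beta)\|V\|\|\vec\lambda\|$, while the naive count gives roughly $\delta\|V\|(1+\|\vec\lambda\|)$. I would first try to absorb the extra term $\delta\|V\|$ coming from $Y$, using $2\delta\|V\|\|\vec\lambda\|\ge \delta\|V\|(1+\|\vec\lambda\|)$ whenever $\|\vec\lambda\|\ge1$. I would then treat the degenerate case $\|\vec\lambda\|<1$ separately, or else read the norm $\|\vec\lambda\|$ as including the coefficient $1$ of $Y$, as the averaging theorem naturally does. The $\eta$-terms are absorbed by choosing $\eta\eqdef\beta/(10(c+3))$, which is a legitimate choice since $\eta^{-1}\in\Log$.
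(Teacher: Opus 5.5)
Your proposal is correct and is essentially the paper's proof: apply the averaging argument (\Cref{thm: averaging on exp}) with $k=n$ to the coefficient vector $(1,-\lambda_1,\dots,-\lambda_m)$ and to its negation, and evaluate each fully fixed seed exactly via the Boundary Axiom. (Your two directions are swapped: the $-\vec\lambda'$ application gives $\mu\ge\mu|_\rho-(\text{slack})$, not $\mu\le\mu|_\rho+(\text{slack})$. This is harmless because you use both.)

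The paper applies the averaging theorem directly at precision $\delta$, which is the source of its $2\delta$. Your single transfer from $\eta$ back to $\delta$ actually yields the sharper bound $(\delta+\beta)\|V\|(1+\|\vec\lambda\|)$. Reading $\|\vec\lambda\|$ as the $\ell_1$-norm of $(1,\vec\lambda)$ is the right fix; it is what the paper's own averaging step and its later applications (the factors $m+1$ and $|\mu|+1$) implicitly use. A separate treatment of $\|\vec\lambda\|<1$ cannot work, however: for $\vec\lambda=0$ and an $S$ that reads its input, the literal bound $0$ fails in some standard model.
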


\begin{proof}
We argue in $\Apx_1$. We first prove that $\bbE_\delta[Y] - (\gamma+\lambda_1\cdot \bbE_\delta[X_1] + \dots + \lambda_m\cdot \bbE_\delta[X_m]) \le (2\delta + \beta)\cdot \|V\|\cdot \|\vec \lambda\|$. Fix $n,m,\delta^{-1},\beta^{-1}\in\Log$, $C_1,\dots,C_m:\zo^n\to\bbQ$, $\vec\lambda$, $\gamma$, and $V$. Let 
\[
\mu\eqdef 1\cdot \bbE_{\delta}[Y] + (-\lambda_1)\cdot \bbE_\delta[X_1] + \dots + (-\lambda_m)\cdot \bbE_{\delta}[X_m], 
\]
and for each $z\in\zo^n$, we define $Y|_z$ and $X_i|_z$ to be random variable with seed length $0$ as 
\begin{equation}
Y|_z \eqdef S(z),\quad X_i|_z\eqdef C_i(z), \quad (i\in[m]), \\ 
\end{equation}
and $\mu|_z$ as 
\begin{equation}
\mu|_z \eqdef 1\cdot \bbE_{\delta}[Y|_z]+ (-\lambda_1)\cdot \bbE_{\delta}[X_1|_z] + \dots + (-\lambda_m) \cdot \bbE_{\delta}[X_m|_z].
\end{equation}
By the \refmeta{avg on exp}, there is a string $z\in\zo^n$ such that $\mu|_z \ge \mu - (2\delta+\beta)\cdot\|V\|\cdot\|\vec\lambda\|$, which implies that $\mu\le \mu|_z + (2\delta+\beta)\cdot\|V\|\cdot\|\vec\lambda\|$. 

Notice that $\bbE_\delta[Y|_z]$ is defined as 
\[
\bbE_\delta[Y|_z] = \sum_{v\in V}v\cdot \P_{\delta}(S_v) = \sum_{v\in V}v\cdot \Bool(S_v) = S(z),
\]
where $S_v$ is the circuit with no input that outputs $1$ if and only if $S(z)=1$. The second equality follows from the \refaxiom{boundary}. Similarly, we can prove that for each $i\in [m]$, $\bbE_\delta[X_i|_z] = C_i(z)$. Subsequently, $\mu|_z = S(z) - (\lambda_1\cdot C_1(z) + \dots + \lambda_m\cdot C_m(z)) = \gamma$, which further implies that $\mu\le \gamma + (2\delta + \beta)\cdot \|V\|\cdot\|\vec\lambda\|$, i.e., 
\[
\bbE_\delta[Y] - (\gamma+\lambda_1\cdot \bbE_\delta[X_1] + \dots + \lambda_m\cdot \bbE_\delta[X_m]) \le (2\delta + \beta)\cdot \|V\|\cdot \|\vec \lambda\|. 
\]

Finally, we can apply the same argument to $\mu'$ and $\mu'|_z$ defined by 
\begin{align*}
\mu' & \eqdef (-1)\cdot \bbE_\delta[Y] + \lambda_1\cdot \bbE_\delta[X_1] + \dots + \lambda_m\cdot\bbE_\delta[X_m] \\ 
\mu'|_z & \eqdef (-1)\cdot \bbE[Y|_z]+ \lambda_1\cdot \bbE[X_1|_z] + \dots + \lambda_m \cdot \bbE[X_m|_z]
\end{align*}
to conclude that $\mu=-\mu'\ge \gamma - (2\delta+\beta)\cdot\|V\|\cdot\|\vec\lambda\|$. This completes the proof.
\end{proof}

\subsection{Probability Inequalities}

We now develop several standard inequalities related to (approximate) probability and expectation, including the union bound, Markov's inequality, and Chebyshev's inequality.

\subsubsection{Union Bound}

Another application of the averaging argument for approximate expectation (see \Cref{thm: averaging on exp}) is the union bound. Recall that the acceptance probability of a circuit $C$ can be formalized as the expectation of its indicating random variable $I_C\in\zo$. Therefore the union bound can be derived from the following principle: Let $X_1,\dots,X_m, Y$ be Boolean-valued random variables such that for any random seed $z$, $Y|_z = X_1|_z\lor X_2|_z\lor\dots\lor X_m|_z$. Then $\bbE[Y]$ should not be much larger than $\bbE[X_1] + \bbE[X_2] + \dots + \bbE[X_m]$. Formally: 

\dummylabel{union bound}{Union Bound}
\begin{theorem}[Union Bound]\label{thm: union bound}
$\Apx_1$ proves the following statement. Let $n,m,\delta^{-1},\beta^{-1}\in\Log$, $C_1,\dots,C_m\in B_n$ be single-output circuits, $V=\{0,1\}$. Suppose that $\forall x\in\zo^n$ and $i\in[m]$, $C_i(x)\in V$, and let $Y,X_1,\dots, X_m$ be  random variables defined as follows.
\begin{compactitem}
\item For each $i\in[m]$, $X_i$ is defined by $(V,n,C_i)$. 
\item $Y$ is defined by $(V,n,S)$, where $S(x)\in\zo$ is a circuit such that $S(x) \le C_1(x) \lor \dots \lor C_m(x)$. 
\end{compactitem}
Then we have $\bbE_\delta[Y] \le \bbE_\delta[X_1] + \dots + \bbE_\delta[X_m] + (2\delta+\beta)\cdot m$.
\end{theorem}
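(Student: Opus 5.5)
We follow the template of the proof of \Cref{thm: linearity exp}: reduce to the \refmeta{avg on exp} with $k=n$ and then evaluate pointwise. Fix the parameters in $\Apx_1$, let $\eta^{-1}\in\Log$ be chosen later, and use the support $V=\{0,1\}$, so $\|V\|=1$. Define the linear combination
\[
\mu \eqdef 1\cdot \bbE_\eta[Y] + (-1)\cdot\bbE_\eta[X_1]+\dots+(-1)\cdot \bbE_\eta[X_m],
\]
with coefficient vector $\vec\lambda=(1,-1,\dots,-1)$. Then $\|\vec\lambda\|=m+1$.

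By \Cref{thm: averaging on exp} with $k=n$, there is a full seed $z\in\zo^n$ such that
\[
\mu \le \mu|_z + (2\eta+\eta)(m+1).
\]
Here we instantiate the theorem's $\delta$ and $\beta$ both as $\eta$. Each restricted variable $Y|_z$ and $X_i|_z$ has seed length $0$. Its sampler is therefore a syntactically constant circuit, and as in the proof of \Cref{thm: linearity exp} the \refaxiom{boundary} gives
\[
\bbE_\eta[Y|_z]=S(z), \qquad \bbE_\eta[X_i|_z]=C_i(z).
\]
Hence $\mu|_z = S(z)-\sum_i C_i(z)$. The pointwise fact needed is $S(z)\le \sum_{i} C_i(z)$. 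This follows from $S(z)\le C_1(z)\lor\dots\lor C_m(z)$: if $S(z)=1$ then some $C_i(z)=1$, and all values are nonnegative. This is a simple $\PV$-provable statement about an explicit list of $m$ bits. So $\mu|_z\le 0$, and therefore
\[
\bbE_\eta[Y]\le \sum_i \bbE_\eta[X_i] + 3\eta(m+1).
\]

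To pass from precision $\eta$ to $\delta$, apply the \refmeta{precision exp} to each of the $m+1$ variables, with $\|V\|=1$. This gives
\[
\bbE_\delta[Y]\le \bbE_\eta[Y]+\delta+2\eta, \qquad \bbE_\eta[X_i]\le \bbE_\delta[X_i]+\delta+2\eta.
\]
Chaining these, the total error is $(\delta+2\eta)(m+1)+3\eta(m+1)$. The main obstacle is the bookkeeping of this error, which contains $(m+1)\delta$, whereas the target allows only $2\delta\cdot m$. For $m\ge1$ we have $(m+1)\delta\le 2m\delta$, so it suffices to choose $\eta\eqdef\beta/(10(m+1))$ so that $5\eta(m+1)\le \beta\le \beta m$. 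This yields
\[
\bbE_\delta[Y]\le \sum_i\bbE_\delta[X_i]+(2\delta+\beta)\cdot m.
\]
The degenerate case $m=0$ is handled separately. There $S\equiv 0$, so by the \refaxiom{boundary} and the \refmeta{global} (or \Cref{prop: support extension exp}-style reasoning) $\bbE_\delta[Y]=\P_\delta(S_1)$ is bounded by $2\delta+\beta$. This case is only needed if the statement is read with $m=0$; otherwise we assume $m\ge 1$.
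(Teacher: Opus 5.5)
Your proof is correct and follows the paper's argument: the averaging argument (\Cref{thm: averaging on exp}) with $k=n$ applied to $\bbE[Y]-\sum_i\bbE[X_i]$, the Boundary Axiom to evaluate the seed-length-$0$ restrictions as $S(z)$ and $C_i(z)$, and the pointwise inequality $S(z)\le\sum_i C_i(z)$. Your detour through precision $\eta$ is actually a small improvement: the paper applies the averaging argument directly at precision $\delta$, which with $\|\vec\lambda\|=m+1$ only gives $(2\delta+\beta)(m+1)$, whereas your bookkeeping with $(m+1)\delta\le 2m\delta$ gives exactly the stated $(2\delta+\beta)\cdot m$. The only slip is the $m=0$ remark: there the target reads $\bbE_\delta[Y]\le 0$, which your bound $2\delta+\beta$ does not give, and that instance is not provable at all (if $S\equiv 0$ but reads its input, a correct approximate counting oracle may report a positive value for the indicator circuit of $S$), so $m\ge 1$ should simply be assumed.
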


\begin{proof}
We argue in $\Apx_1$. Fix $n,m,\delta^{-1},\beta^{-1}\in\Log$, $C_1,\dots,C_m\in B_n$, and $V=\{0,1\}$. For each $z\in\zo^n$, we define $Y|_z$ as the random variable with seed length $0$ that outputs $S(z)$, and $X_i$ as the random variable with seed length $0$ that outputs $C_i(z)$ for $i\in[m]$. Let $\mu$ and $\mu|_z$ be defined as 
\begin{align}
\mu & \eqdef 1\cdot \bbE_\delta[Y] + (-1)\cdot \bbE_\delta[X_1] + \dots + (-1)\cdot \bbE_{\delta}[X_m],\\ 
\mu|_z & \eqdef 1\cdot \bbE_\delta[Y|_z] + (-1)\cdot \bbE_\delta[X_1|_z] + \dots + (-1)\cdot \bbE_{\delta}[X_m|_z]. 
\end{align}
By the \refmeta{avg on exp}, we can conclude that $\mu\le \mu|_z + (2\delta + \beta)\cdot m$ for some string $z$. 

It then suffices to show that $\mu|_{z}\le 0$. Similarly to the proof of \Cref{thm: linearity exp}, we can prove by the \refaxiom{boundary} that $\bbE_{\delta}[Y|_z] = S(z)$ and $\bbE_\delta[X_i|_z] = C_i(z)$ for $i\in[m]$. Subsequently, we know by the definition of $\mu|_z$ and the assumption on $S$  that 
\[
\mu|_z = S(z) - (C_1(z) + \dots + C_m(z)) \le 0, 
\]
is provable in $\APX_1$. This completes the proof. 
\end{proof}

\subsubsection{Markov's Inequality}

Next, we consider Markov's inequality. For a random variable $X$ over an explicit set $V$, we should be able to prove that the probability that $X \ge k\cdot \bbE[X]$ cannot be much larger than $1/k$. This can be naturally formalized as follows: 

\dummylabel{markov}{Markov's Inequality}
\begin{theorem}[Markov's Inequality]\label{thm: markov}
The following statement is provable in $\Apx_1$. Let $X$ be a random variable defined by $(V,n,C)$, where $V$ is an explicit set of non-negative rational numbers, $n\in\Log$, and $C:\zo^n\to\bbQ$ is a circuit. Let $\delta^{-1},\beta^{-1}\in\Log$, $\mu\in\bbQ$ with $\mu \ge \bbE_\delta[X]$ and $\mu > 0$, $k\in\bbQ$ with $k > 0$, and $T(x)$ be the circuit that outputs $1$ if $C(x) \ge k\cdot \mu$, and outputs $0$ otherwise. Then
\[
\P_\delta(T) \le \delta +  k^{-1}\cdot (1+\delta\cdot \mu^{-1}\cdot \|V\|) + \beta \cdot (\mu^{-1}\cdot \|V\|+1),
\]
where $\|V\|$ is the $\ell_1$-norm of $V$.   
\end{theorem}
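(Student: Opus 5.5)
The plan is to reduce Markov's inequality to a pointwise inequality between two random variables, then lift it with the \refmeta{avg on exp}, exactly as in the proofs of \Cref{thm: linearity exp} and \Cref{thm: union bound}. Let $I_T$ be the indicator random variable $(\zo,n,T)$. Pointwise, for every seed $z$, non-negativity of $V$ gives
\[
k\mu\cdot T(z)\le C(z).
\]
I would therefore consider the linear combination
\[
\mu_0 \eqdef k\mu\cdot\bbE_\eta[I_T] + (-1)\cdot\bbE_\eta[X].
\]
To apply \Cref{thm: averaging on exp} with a single common support, I would use $V'\eqdef V\cup\{0,1\}$, and use \refmeta{support ext} to relate expectations over $V$ and over $V'$. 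With $k=n$ (full seed fixing), I obtain $z\in\zo^n$ with $\mu_0\le \mu_0|_z + (2\eta+\eta')\cdot\|V'\|\cdot(k\mu+1)$. By the \refaxiom{boundary}, as in the proof of \Cref{thm: linearity exp}, $\mu_0|_z = k\mu T(z)-C(z)\le 0$.

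This yields
\[
k\mu\cdot\bbE_\eta[I_T]\le \bbE_\eta[X] + O(\eta)\cdot(\|V\|+1)(k\mu+1).
\]
I then translate back to the target precision. \refmeta{precision exp} gives $\bbE_\eta[X]\le\bbE_\delta[X]+(\delta+2\eta)\|V\|\le\mu+(\delta+2\eta)\|V\|$. \Cref{prop: indicator variable} together with the \refaxiom{precision} gives $\P_\delta(T)\le \bbE_\eta[I_T]+\delta+O(\eta)$. Dividing by $k\mu$ leaves
\[
\P_\delta(T)\le \delta + k^{-1} + k^{-1}\delta\mu^{-1}\|V\| + O(\eta)\cdot(\text{stuff}),
\]
which matches the main terms of the target bound.

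The main obstacle is the slack bookkeeping. The error from the averaging argument scales with $\|\vec\lambda\|=k\mu+1$, and dividing by $k\mu$ produces terms like $\eta\|V'\|(1+1/(k\mu))$. These must fit inside $\beta(\mu^{-1}\|V\|+1)$. I would pick $\eta\eqdef\beta/c$ for a large constant $c$, chosen to be smaller still if $k^{-1}$ is large.

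A cleaner route avoids the $k$-dependence entirely. Divide the pointwise inequality first, i.e.\ use coefficients $(1,-1/(k\mu))$, so that $\|\vec\lambda\|=1+1/(k\mu)$. The case $k\le1$ is trivial because then the right-hand side is at least $1$. When $k>1$ I would choose $\eta\eqdef\beta/c$. The extra $+1$ in $\|V'\|$ coming from the indicator's support is absorbed by the $+1$ in $\beta(\mu^{-1}\|V\|+1)$.

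Alternatively, to keep supports separate, one could apply averaging to the two variables over their own supports. Theorem~\ref{thm: averaging on exp} requires a common $V$, though, so I would use support extension to $V'$ and accept the extra additive $O(\eta)(\|V\|+1)$ terms, which are again absorbed by the choice of $\eta$.
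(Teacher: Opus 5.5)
Your plan is a legitimate alternative to the paper's route: lift the pointwise inequality $k\mu\,T(z)\le C(z)$ with one application of \Cref{thm: averaging on exp}, then fix the whole seed and use the boundary axiom. However, the step you flag as the main obstacle, the slack bookkeeping, does not work as you claim, and that step carries the quantitative content of the statement. Because \Cref{thm: averaging on exp} forces a common support, its error is $(2\eta+\eta')\cdot\|V'\|\cdot\|\vec\lambda\|$, with $\|V'\|\le\|V\|+1$ and, in your cleaner route, $\|\vec\lambda\|=1+(k\mu)^{-1}$. Expanding the product gives two cross terms. One is $\eta\|V\|$, the indicator's coefficient times the scale of $X$. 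The other is $\eta\,(k\mu)^{-1}$, the coefficient of $X$ times the scale of the indicator. Your support extensions to $V'$ contribute terms of the same two shapes. Neither term is bounded by $\beta(\mu^{-1}\|V\|+1)$ when $\eta=\beta/c$. You also cannot shrink $\eta$ to compensate: $\mu$ and $\|V\|$ are binary-encoded rationals of possibly exponential magnitude, while $\eta^{-1}$ must lie in $\Log$. For instance, take $\mu=2^N$, $k=2$, $V=\{0,2^{N+1}\}$, so that $T$ is not trivially $0$. The allowed slack is $3\beta$, but your error contains $\eta\cdot 2^{N+1}$. With $V=\{0,2^{-N}\}$, $\mu=2^{-N-1}$, $k=2$, the error contains $\eta\cdot 2^{N}$. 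As written, you only obtain a bound with slack of order $\beta(\|V\|+1)(1+(k\mu)^{-1})$.

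The paper avoids mixing the two scales. It writes $\bbE_\eta[X]=\sum_{v\in V}v\,\P_\eta(C_v)$ and replaces each $\P_\eta(C_v)$ by the expectation of the Boolean indicator $X_v$. This costs $O(\eta)\|V\|$, which after dividing by $k\mu$ becomes $O(\eta)\mu^{-1}\|V\|$. It then drops the terms with $v<k\mu$ by non-negativity and applies the union bound to $T=\bigvee_{v\ge k\mu}C_v$. The union-bound cost $O(\eta)(|V|+1)$ involves only the cardinality $|V|$, which is feasible, so $\eta=\Theta(\beta/(|V|+1))$ works.

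Your route can be repaired without that decomposition, in either of two ways.
\begin{itemize}
\item The proof of \Cref{thm: averaging on exp} uses local consistency of expectation for each variable separately. It therefore yields the error $(2\eta+\eta')\sum_i|\lambda_i|\cdot\|V_i\|$ with individual supports. Here that is $O(\eta)(1+\|V\|/(k\mu))\le O(\eta)(1+\mu^{-1}\|V\|)$ for $k\ge1$.
\item Alternatively, put the scale into the support instead of the coefficient. Use the variable $k\mu\cdot I_T$ with support $\{0,k\mu\}$ and coefficients $(1,-1)$. The common-support error is then $O(\eta)(\|V\|+k\mu)$, which becomes $O(\eta)(\mu^{-1}\|V\|+1)$ after dividing by $k\mu$.
\end{itemize}
Either repair gives a one-shot proof that needs neither the level-set decomposition nor the union bound.
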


\begin{proof}
We argue in $\Apx_1$. Fix $n,\delta^{-1},\beta^{-1}\in\Log$, $C:\zo^n\to\bbQ$, $\mu,k\in\bbQ$, and $V$. Let $T(x)$ be the circuit as defined above, and $m = |V|$. We define the following random variables: 
\begin{compactitem}
\item $Y$ is the indicator variable of $T(x)$, i.e., it is defined by $(\zo, n, T)$.  
\item For each $v\in V$, $X_v$ is the indicator variable of $\EQ(C(x),v)$. Formally, let $C_v(x)$ be the circuit that outputs $1$ if and only if $C(x)=v$, and outputs $0$ otherwise, $X_v$ is the random variable defined by $(\zo, n, C_v)$.
\end{compactitem}

Let $\eta^{-1}\in\Log$ be a precision parameter to be determined later, and for $b\in\zo$, let $C^{(b)}_v(x)\in\zo$ be the circuit that outputs $1$ if $C(x)=b$. By the definition of approximate expectation, we have
\begin{align*}
\sum_{v\in V}v\cdot \bbE_\eta[X_v] &= \sum_{v\in V}\sum_{b\in\zo} v\cdot b\cdot \P_\eta(C_v^{(b)}) = \sum_{v\in V} v\cdot \P_\eta(C_v^{(1)})
\end{align*}
 By the definition of $C_v(x)$ and $C^{(1)}_v(x)$, we know that $C^{(1)}_v(x)=C_v(x)$, and therefore by the \refaxiom{global} of approximate counting, we have that
\[ 
|\P_\eta(C^{(1)}_v) - \P_\eta(C_v)| \le 3\cdot \eta. 
\]
Subsequently, we can see from the \refmeta{precision exp} that 
\begin{align}
\mu&\ge \bbE_{\delta}[X] \ge \bbE_\eta[X]-(\delta+2\eta)\cdot \|V\| \\ 
&= \sum_{v\in V} v\cdot \P_\eta(C_v) - (\delta+2\eta)\cdot \|V\| \\ 
& \ge \sum_{v\in V} v\cdot \bbE_\eta[X_v] - (\delta+5 \eta)\cdot \|V\| \\ 
& \ge k\mu \sum_{v\in V,v\ge k\mu}\bbE_\eta[X_v] - (\delta+5\eta)\cdot \|V\|, 
\end{align}
where the last inequality uses that $k$, $\mu$, and $V$ are all nonnegative.
This implies that 
\begin{equation}
\sum_{v\in V,v\ge k\cdot \mu}\bbE_\eta [X_v] \le k^{-1}\cdot (1 + (\delta+5\eta)\cdot \mu^{-1}\cdot  \|V\|). \label{equ: upper bound exp}
\end{equation} 

Below we also rely on the following inequality, which   follows from \Cref{prop: indicator variable}: 
\begin{equation}
|\bbE_\eta[Y] - \P_{\eta}(T)| \le 3\cdot \eta. \label{equ: exp Y}
\end{equation}

It is clear from the definition of $T(x)$ and $C_v(x)$ that
\[
T(x) = \bigvee_{v\in V,v\ge k\cdot \mu} C_v(x). 
\]  
Therefore, by the \refmeta{union bound}, we can conclude that 
\begin{align*}
\P_{\eta}(T) & \le \bbE_\eta [Y] + 3\cdot\eta  \tag{\Cref{equ: exp Y}}\\ 
&\le \sum_{v\in V,v\ge k\cdot \mu}\bbE_\eta[X_v] + 3\cdot \eta\cdot (m+1) \tag{Union Bound}\\ 
&\le k^{-1}\cdot (1+(\delta+5\eta)\cdot\mu^{-1} \cdot \|V\|) + 3\cdot \eta \cdot (|V| + 1), \tag{\Cref{equ: upper bound exp}} 
\end{align*}
Finally, we take $\eta = \min\{\beta / (50(|V|+1)), \beta k/(50(|V|+1))\}$ and apply the \refaxiom{precision}, so  
\[
\P_{\delta}(T) \le \delta + \P_{\eta}(T)  + 2\eta \le \delta +  k^{-1}\cdot (1+\delta\cdot \mu^{-1}\cdot \|V\|) + \beta \cdot (\mu^{-1}\|V\|+1).
\]
This completes the proof. 
\end{proof}

\subsubsection{Variance and Chebyshev's  Inequality} 

\newcommand{\Var}{\mathsf{Var}}

Next, we develop the basic theory of (approximate) variance and prove a form of Chebyshev's Inequality. 

\paragraph{Definition and Basic Properties.} Let $X$ be a random variable defined by the tuple $(V,n,C)$, where $n\in\Log$, $V$ is an explicit set, and $C:\zo^n\to\bbQ$. We can define a random variable $X^2$ by the tuple $(V^2,n,C^2)$, where $V^2\eqdef \{v^2\mid v\in V\}$ and $C^2(x) \eqdef (C(x))^2$. Similarly, we can define a random variable $X-\mu$ for any $\mu\in\bbQ$ by the tuple $(V-\mu, n, C_{-\mu})$, where $V-\mu\eqdef \{v-\mu\mid v\in V\}$ and $C_{-\mu}(x) \eqdef C(x) - \mu$. We can then define: 

\begin{definition}[Approximate Variance]
Let $X$ be a random variable defined by the tuple $(V,n,C)$, $\delta^{-1}\in\Log$. The approximate variance of $X$ with precision parameter $\delta$, denoted by $\Var_\delta[X]$, is defined as 
\[
\Var_\delta[X] \eqdef \bbE_\delta[(X-\mu)^2]
\]
where $\mu \eqdef \bbE_\delta[X]\in\bbQ$. 
\end{definition}

As an example, we prove an analogy of the equality $\Var[X] = \bbE[X^2]-\bbE[X]^2$ for approximate variance by directly formalizing the standard proof in $\Apx_1$. 

\begin{proposition}\label{prop: variance alternative def}
$\Apx_1$ proves the following statement. Let $X$ be a random variable defined by the tuple $(V,n,C)$, where $V\subseteq\bbQ$ is an explicit set, $n\in\Log$, and $C:\zo^n\to\bbQ$ satisfies $\forall x\in\zo^n~C(x)\in V$. Let $\mu\eqdef \bbE_\delta[X]\in\bbQ$. Then for any $\delta^{-1},\beta^{-1}\in\Log$,
\[ 
|\Var_\delta[X] - (\bbE_\delta[X^2] - \mu^2)| \le (2\delta +\beta)\cdot(1+|\mu|)\cdot \|\hat V\|,
\] 
where $\hat V = V\cup (V-\mu)^2\cup V^2\cup \{1\}$. 
\end{proposition}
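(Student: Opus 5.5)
The plan is to apply the \refmeta{linearity} (\Cref{thm: linearity exp}) once, to the pointwise identity $(C(x)-\mu)^2 = C(x)^2 - 2\mu\cdot C(x) + \mu^2$. We work in $\Apx_1$ with $\mu\eqdef\bbE_\delta[X]$, which is a fixed rational computed by a $\PV(\P)$ term. We regard $\mu$ as a parameter, so that the samplers $C_{-\mu}^2$, $C^2$, and $C$ are ordinary circuits.

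\textbf{Step 1: set up a common support.} All three random variables are placed on the common explicit support $\hat V = V\cup (V-\mu)^2\cup V^2\cup\{1\}$. Concretely, we set $Y\eqdef(\hat V,n,(C-\mu)^2)$, $X_1\eqdef(\hat V,n,C^2)$ and $X_2\eqdef(\hat V,n,C)$. The hypotheses of \Cref{thm: linearity exp} hold as follows:
\begin{compactitem}
\item every value $C(x)$ lies in $V\subseteq\hat V$;
\item every value $C(x)^2$ lies in $V^2\subseteq\hat V$;
\item the combination $\gamma+\lambda_1 C^2(x)+\lambda_2 C(x)$, with $\gamma=\mu^2$, $\lambda_1=1$, $\lambda_2=-2\mu$, equals $(C(x)-\mu)^2\in(V-\mu)^2\subseteq\hat V$.
\end{compactitem}
The last equality is a rational-arithmetic identity provable in $\PV$.

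\textbf{Step 2: apply linearity.} \Cref{thm: linearity exp} gives
\[
\left|\bbE_\delta[Y]-\left(\mu^2+\bbE_\delta[X_1]-2\mu\,\bbE_\delta[X_2]\right)\right|\le(2\delta+\eta)\cdot\|\hat V\|\cdot(1+2|\mu|),
\]
with slack $\eta$ to be chosen.

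\textbf{Step 3: return to the original supports.} By \refmeta{support ext} (\Cref{prop: support extension exp}), moving from support $\hat V$ back to the natural supports changes each expectation only by an additional $(2\delta+\eta)$ times a norm. This yields:
\begin{compactitem}
\item $\bbE_\delta[Y]\approx\Var_\delta[X]$ (support $(V-\mu)^2$);
\item $\bbE_\delta[X_1]\approx\bbE_\delta[X^2]$ (support $V^2$);
\item $\bbE_\delta[X_2]\approx\mu$ (support $V$).
\end{compactitem}
The third point is exact up to the extension error, because $\mu$ was defined using support $V$. Substituting, $\mu^2-2\mu\cdot\mu=-\mu^2$, so the target combination becomes $\bbE_\delta[X^2]-\mu^2$.

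\textbf{Main obstacle: the constants.} Naively adding the errors gives roughly $(2\delta+\eta)\|\hat V\|(1+2|\mu|)$ plus extension errors of order $(2\delta+\eta)\|\hat V\|(2+2|\mu|)$. That exceeds the stated bound $(2\delta+\beta)(1+|\mu|)\|\hat V\|$ in its coefficient on $\delta$.

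To tighten this, I would first rerun the linearity argument directly, via \Cref{thm: averaging on exp} and \Cref{prop: precision consistency exp}. I would work at an auxiliary precision $\eta\ll\beta/(\|\hat V\|(1+|\mu|))$, reduce to the pointwise identity, and then convert only once back to precision $\delta$. The $\delta$-dependence then enters only through the single precision-consistency step for each of the quantities $\bbE[Y]$, $\bbE[X_1]$, and $\mu$.

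If even that does not give the exact coefficient, I would accept a constant-factor loss. This is harmless for all later uses, since $\beta$ is arbitrary and the paper treats such constants loosely. Alternatively, I would absorb the loss by noting that the norm of $\hat V$ already dominates $\|V\|$ and $\|V^2\|$.
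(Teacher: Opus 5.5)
Your Steps 1--3 as written do overshoot the $\delta$-coefficient, as you note. Those steps apply linearity and support extension directly at precision $\delta$. Accepting a constant-factor loss would not prove the proposition as stated. However, your ``tightening'' plan is exactly the paper's proof, and it closes with the exact constant. You also do not need to re-prove linearity for it.

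Concretely, apply linearity of expectation and support extension at an auxiliary precision $\eta$. This shows that $\bbE_\eta[(X-\mu)^2]-\bbE_\eta[X^2]+\mu^2$ is within $12\eta(1+|\mu|)\|\hat V\|$ of $2\mu(\mu-\bbE_\eta[X])$. Now convert back to precision $\delta$ via precision consistency:
\begin{itemize}
\item it costs $(\delta+2\eta)\|\hat V\|$ for each of $\bbE[(X-\mu)^2]$ and $\bbE[X^2]$;
\item the bound $|\mu-\bbE_\eta[X]|=|\bbE_\delta[X]-\bbE_\eta[X]|\le(\delta+2\eta)\|\hat V\|$ contributes $2|\mu|(\delta+2\eta)\|\hat V\|$.
\end{itemize}
So the $\delta$-coefficient is exactly $2(1+|\mu|)\|\hat V\|$, and all $\eta$-terms total at most $16\eta(1+|\mu|)\|\hat V\|$.

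One correction: take $\eta\eqdef\beta/c$ for a constant $c$ (the paper uses $\beta/40$), not $\eta\ll\beta/(\|\hat V\|(1+|\mu|))$. Every $\eta$-error already carries the factor $(1+|\mu|)\|\hat V\|$, so a constant $c$ suffices. Moreover, entries of $V$ can be exponentially large in their bit-length, and hence so can $\|\hat V\|$ and $|\mu|$. Your choice therefore need not satisfy $\eta^{-1}\in\Log$.
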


\begin{proof}
We argue in $\Apx_1$. Fix $V\subseteq \bbQ$, $n,\delta^{-1},\beta^{-1}\in\Log$, and $C:\zo^n\to\bbQ$. Let $\hat V = V\cup (V-\mu)^2\cup V^2\cup \{1\}$ be an explicit set, and $\mu\eqdef \bbE_\delta[X]$. We define random variables $\hat Y$, $\hat X$, $\hat X^2$ over $\hat V$ as follows: 
\begin{compactitem}
\item $\hat Y$ is the random variable that outputs $(X-\mu)^2$. Formally, let $Y(x) \eqdef (C(x)-\mu)^2$ be a circuit. We can prove that $Y(x)\in \hat V$ from the assumption $C(x)\in V$ for any $x\in\zo^n$. We then define $\hat Y$ by the tuple $(\hat V, n, Y)$. 
\item $\hat X$ be the random variable that outputs $X$. Formally, it is defined by the tuple $(\hat V,n,C)$. 
\item $\hat X^2$ be the random variable that outputs $X^2$. Formally, let $C^2(x)\eqdef (C(x))^2$ be a circuit. We can prove that $C^2(x)\in \hat V$ from the assumption $C(x)\in V$ for any $x\in\zo^n$. We then define $\hat X^2$ by the tuple $(\hat V,n,C^2)$. 
\end{compactitem}

Let $\eta^{-1}\in\Log$ be determined later. By the definition of the circuits $Y$, $C$, $C^2$, it is clear that 
\[
Y(x) = \mu^2 + 1\cdot C^2(x) + (-2\mu)\cdot C(x). 
\]
By the \refmeta{linearity}, we can see that 
\begin{equation}
\left|\bbE_\eta [\hat Y] - (\mu^2 + \bbE_\eta[\hat X^2] - 2\mu \cdot \bbE_\eta[\hat X])\right| \le 6\eta \cdot \|\hat V\|\cdot (|\mu| + 1).\label{equ: upper bound linearity}
\end{equation}
By the \refmeta{support ext}, we can also conclude that
\begin{align}
\left|\bbE_\eta[\hat Y] - \bbE_\eta[(X-\mu)^2]\right|,\left|\bbE_\eta[\hat X^2] - \bbE_\eta[X^2]\right|, \left|\bbE_\eta[\hat X] - \bbE_\eta[X]\right|\le 3\eta \cdot \|\hat V\|.\label{equ: upper bound support ext} 
\end{align}

By the triangle inequality and the \refmeta{precision exp}, we have 
\begin{align*}
& |\Var_\delta[X] - (\bbE_\delta[X^2] - \mu^2)| \\ 
=~&|\bbE_\delta[(X-\mu)^2] - (\bbE_\delta[X^2] - \mu^2)| \\ 
\le~& |\bbE_\eta[(X-\mu)^2] - (\bbE_\eta[X^2] - \mu^2)| + 2\cdot (\delta+2\eta)\cdot\|\hat V\| \tag{\Cref{prop: precision consistency exp}}\\ 
\le~& |\bbE_\eta[\hat Y] - \bbE_\eta[\hat X^2] + \mu^2| + (2\delta+10\eta )\cdot\|\hat V\| \tag{\Cref{equ: upper bound support ext}} \\ 
\le~& |\mu^2 - (2\mu\cdot\bbE_\eta[\hat X] - \mu^2)| + (2\delta+10\eta )\cdot\|\hat V\| + 6\eta\cdot \|\hat V\|\cdot (|\mu|+1) \tag{\Cref{equ: upper bound linearity}} \\ 
\le~& |\mu^2 - (2\mu\cdot\bbE_\eta[X] - \mu^2)| + (2\delta+10\eta )\cdot\|\hat V\| + 6\eta\cdot \|\hat V\|\cdot (|\mu|+1) + 6\eta\cdot |\mu|\cdot \|\hat V\| \tag{\Cref{equ: upper bound support ext}
}\\ 
\le~& |\mu^2 - (2\mu\cdot\bbE_\delta[X] - \mu^2)| + (2\delta+10\eta )\cdot\|\hat V\| + 6\eta\cdot \|\hat V\|\cdot (|\mu|+1)\\ 
&\quad\quad  + 6\eta\cdot |\mu|\cdot \|\hat V\| + 2|\mu| \cdot (\delta+2\eta) \cdot\|\hat V\| \tag{\Cref{prop: precision consistency exp}}\\ 
\le~& (2\delta+10\eta )\cdot\|\hat V\| + 6\eta\cdot \|\hat V\|\cdot (|\mu|+1) + 6\eta\cdot|\mu|\cdot \|\hat V\| +  2|\mu|\cdot (\delta+2\eta)\cdot \|\hat V\|.\tag{$\mu \eqdef\bbE_\delta[X]$}
\end{align*}
The theorem then follows by taking $\eta = \beta / 40$. 
\end{proof}

\paragraph{Chebyshev's Inequality.} We now prove a form of Chebyshev's inequality that provides a tail bound for random variables with known (approximate) variance. Formally: 

\begin{theorem}[Chebyshev's Inequality]
$\Apx_1$ proves the following statement. Let $X$ be a random variable defined by the tuple $(V,n,C)$, $\delta^{-1}\in\Log$, where $V\subseteq\bbQ$ is an explicit set, $n\in\Log$, and $C:\zo^n\to\bbQ$ is a circuit such that $\forall x\in\zo^n~C(x)\in V$. Let   $\mu\eqdef \bbE_\delta[X]\in \bbQ$, $\sigma^2\eqdef\Var_\delta[X]$, and $T(x)$ be the circuit that outputs $1$ if $(C(x)-\mu)^2\ge k\cdot \sigma^2$, and outputs $0$ otherwise. 

Then for any $\beta^{-1}\in\Log$ and $k\in\bbQ$, where $k > 0$, we have that 
\[
\P_{\delta}(T) \le  \delta +  k^{-1}\cdot (1 + \delta\cdot \sigma^{-2}\cdot \|\hat V\|) + \beta\cdot (\sigma^{-2}\cdot \|\hat V\| + 1), 
\]
where $\hat V = (V-\mu)^2 = \{(v-\mu)^2 \mid v\in V\}$. 
\end{theorem}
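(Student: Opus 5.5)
The plan is to reduce Chebyshev's inequality directly to the \refmeta{markov} (\Cref{thm: markov}) applied to the nonnegative random variable $Z\eqdef (X-\mu)^2$. We argue in $\Apx_1$. Fix $V,n,C,\delta,\beta,k$ as in the statement, and let $\mu\eqdef\bbE_\delta[X]$ and $\sigma^2\eqdef\Var_\delta[X]$. Both are rationals computed by $\PV(\P)$ terms, so they can be hard-wired into circuits. Let $Z$ be the random variable defined by $(\hat V,n,C_{-\mu}^2)$, where $C_{-\mu}^2(x)\eqdef (C(x)-\mu)^2$. From $\forall x~C(x)\in V$ we get $\forall x~C_{-\mu}^2(x)\in\hat V$, and $\hat V=(V-\mu)^2$ is an explicit set of nonnegative rationals. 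By definition, $\Var_\delta[X]=\bbE_\delta[(X-\mu)^2]=\bbE_\delta[Z]$. Here we take the random variable $(X-\mu)^2$ to be represented by exactly this tuple, so that this is a syntactic identity and not merely an approximation.

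Next, apply \Cref{thm: markov} to $Z$ with the bound $\mu_{\mathrm{M}}\eqdef\sigma^2$, which satisfies $\mu_{\mathrm{M}}\ge \bbE_\delta[Z]$ with equality, and with the same $k$. The threshold circuit in Markov's inequality outputs $1$ iff $C_{-\mu}^2(x)\ge k\cdot\sigma^2$. This is exactly the circuit $T$ of the statement, or at least $\PV$-provably functionally equivalent to it. Markov then yields
\[
\P_\delta(T)\le \delta+k^{-1}\cdot(1+\delta\cdot\sigma^{-2}\cdot\|\hat V\|)+\beta\cdot(\sigma^{-2}\cdot\|\hat V\|+1),
\]
which is precisely the claimed bound. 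If $T$ is only functionally equivalent and not syntactically equal to Markov's circuit, we would instead run Markov with a smaller slack $\eta$. We then transfer the bound using the \refmeta{global}, which costs $2\delta+\eta$, and absorb the difference by the \refaxiom{precision} in the usual way. To avoid the extra $\delta$, it is cleaner to arrange the constructions so that the circuits coincide.

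The main obstacle is the side condition $\mu_{\mathrm{M}}>0$ in \Cref{thm: markov}, which requires $\sigma^2>0$. The statement implicitly needs this, since it involves $\sigma^{-2}$. We therefore treat $\sigma^2>0$ as a hypothesis: when $\sigma^2\le 0$ the expression $\sigma^{-2}$ is undefined, and under any convention that makes the right-hand side at least $1$ the inequality is trivial by the \refaxiom{basic}. Beyond this, only bookkeeping remains: checking that the $\ell_1$-norm appearing in Markov is $\|\hat V\|$ for the support $\hat V$ used to define $Z$, and that $\mu$ and $\sigma^2$ are fixed before the circuits are built, so there is no circularity in the definitions.
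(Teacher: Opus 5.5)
Your proposal matches the paper's proof: it defines $Y$ by $(\hat V,n,S)$ with $S(x)=(C(x)-\mu)^2$, observes that $\sigma^2=\bbE_\delta[Y]$ holds by definition, and applies Markov's inequality to $Y$ with threshold parameter $\sigma^2$. Making the side condition $\sigma^2>0$ explicit is a reasonable addition, since the paper leaves it implicit in the $\sigma^{-2}$ term.
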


\begin{proof}
We argue in $\Apx_1$. Fix $V,n,\delta^{-1},\beta^{-1},C:\zo^n\to\bbQ, \mu\eqdef \bbE_\delta[X], \sigma^2 \eqdef \Var_\delta[X], \beta^{-1}, k$. Let $Y$ be the random variable defined by the tuple $(\hat V, n, S)$, where $S(x) \eqdef (C(x)-\mu)^2$. By the definition of approximate variance, we know that 
\[
\sigma^2 = \Var_\delta[X] = \bbE_\delta[(X-\mu)^2] = \bbE_\delta[Y]. 
\]
By applying \refmeta{markov} to the random variable $Y$, we can see that 
\[
\P_\delta(T) \le \delta + k^{-1}\cdot (1 + \delta\cdot \sigma^{-2}\cdot \|\hat V\|) + \beta\cdot (\sigma^{-2}\cdot \|\hat V\| + 1).
\]
This completes the proof.
\end{proof}

\subsubsection{Pairwise Independence and Variance}

\newcommand{\Cov}{\mathsf{Cov}}

Now we develop the notion of (almost) pairwise independence, and prove a form of the equality $\Var[X_1 + \dots + X_m] = \Var[X_1] + \dots + \Var[X_m]$ for pairwise independent random variables $X_1,\dots,X_m$. This combined with  Chebyshev's inequality serve as a standard technique to reduce the error probability of randomized algorithms. 

\paragraph{Definition of (Almost) Independence.} We start with the definition of (almost) independence of random variables. Let $X_1,X_2$ be random variables over $V_1,V_2$, respectively. Recall that the covariance of $X$ and $Y$, denoted by $\Cov(X,Y)$, is defined as the quantity $\bbE[X\cdot Y] - \bbE[X]\cdot \bbE[Y]$, where $X\cdot Y$ is a random variable over $V_1V_2\eqdef \{v_1\cdot v_2\mid v_1\in V_1,v_2\in V_2\}$. Formally: 
\begin{definition}[Covariance]
Let $X_1$ and $X_2$ be the random variables defined by the tuples $(V_1,n,C_1)$ and $(V_2,n,C_2)$, respectively, where $V_1,V_2\subseteq \bbQ$ are explicit sets, $n\in\Log$, and $C_1,C_2:\zo^n\to \bbQ$ are circuits. Let $\delta^{-1}\in\Log$, $V_1V_2\eqdef\{v_1\cdot v_2\mid v_1\in V_1,v_2\in V_2\}$, $Y(x)$ be the circuit computing $C_1(x)\cdot C_2(x)$, and $Y$ be the random variable defined by the tuple $(V_1V_2,n,Y)$. The $\delta$-approximate covariance of $X_1$ and $X_2$, denoted by $\Cov_\delta(X_1,X_2)$, is defined as 
\[
\Cov_{\delta}(X_1,X_2) \eqdef |\bbE_{\delta}[Y] - \bbE_{\delta}[X_1]\cdot \bbE_{\delta}[X_2]|. 
\]
\end{definition}

\begin{definition}[Almost Independence]
Let $\delta^{-1}\in\Log$, $\eps\in\bbQ$. Random variables $X_1$ and $X_2$ are said to be $\eps$-almost $\delta$-approximately independent  if $\Cov_{\delta}(X_1,X_2)\le \eps$.  
\end{definition}

We can then define the pairwise independence of a sequence of random variables. 

\begin{definition}[Pairwise Independence]
Let $n,m\in\Log$, $C_1,\dots,C_m:\zo^n\to\bbQ$ be circuits, and $V\subseteq\bbQ$ be an explicit set such that $\forall i\in[m]~\forall x\in\zo^n~C_i(x)\in V$. Let $X_1,\dots, X_m$ be random variables, where for each $i\in[m]$, $X_i$ is defined by the tuple $(V,n,C_i)$, and let $\delta^{-1}\in\Log$ and $\eps\in\bbQ$. The sequence $X_1,\dots,X_m$ of random variables is said to be $\eps$-almost $\delta$-approximately pairwise independent if for every pair $(i,j)$ with $i,j\in [m]$ and $i\ne j$, $X_i$ and $X_j$ are $\eps$-almost $\delta$-approximately independent.  
\end{definition}

We may drop the parameter $\eps$ and simply say $\delta$-approximately independent if $\eps=0$. Note that since the approximate expectation of random variables may incur an error, random variables $X_1$ and $X_2$ may not be perfectly independent even if $\eps=0$.

\paragraph{Sum of Pairwise Independent Variables.} Now we are ready to prove the following result: Suppose that $X_1,\dots,X_m$ are almost pairwise independent, and $Y=X_1 + \dots + X_m$. Then the variance of $Y$ is close to the sum of the variances of $X_1,X_2,\dots,X_m$. Formally: 
\begin{theorem}
The following statement is provable in $\Apx_1$. Let $n,m\in\Log$, $C_1,\dots,C_m:\zo^n\to\bbQ$ be circuits, and $V\subseteq\bbQ$ be an explicit set such that the following holds: 
\begin{compactitem}
\item $\forall i\in[m]~\forall x\in\zo^n~C_i(x)\in V$;
\item $\forall x\in\zo^n~C_1(x) + C_2(x) + \dots + C_m(x)\in V$. 
\end{compactitem}

Let $X_1,\dots, X_m$ be random variables, where for each $i\in[m]$, $X_i$ is defined by the tuple $(V,n,C_i)$. Let $Y$ be the random variable defined by the tuple $(V,n,S)$, where $S(x)$ is the circuit computing $C_1(x) + C_2(x) + \dots + C_m(x)$. Let $\delta^{-1},\beta^{-1}\in\Log$ and $\eps\in\bbQ$. Suppose that $X_1,\dots,X_m$ are $\eps$-almost $\delta$-approximately pairwise independent. Then  
\[
\left|\Var_\delta[Y] - (\Var_\delta[X_1] + \dots + \Var_\delta[X_m])\right| \le (\eps +3\delta\cdot \|V\|^2) \cdot m^2 + \beta\cdot (\|V\|+1)^3,
\]
where $\|V\|=\sum_{v\in V}|v|$ is the $\ell_1$-norm of $V$.
\end{theorem}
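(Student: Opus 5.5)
We formalize the textbook computation $\Var[\sum_i X_i]=\sum_i\Var[X_i]+\sum_{i\ne j}\Cov(X_i,X_j)$. We argue in $\Apx_1$ and fix an auxiliary precision $\eta^{-1}\in\Log$ to be chosen at the end; all linearity steps are carried out at precision $\eta$ and then transferred to $\delta$ via \Cref{prop: precision consistency exp}. Write $\mu_i\eqdef\bbE_\delta[X_i]$ and $\mu_Y\eqdef\bbE_\delta[Y]$.

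First, relate $\mu_Y$ to $\sum_i\mu_i$. Apply \refmeta{linearity} with $\gamma=0$ and $\lambda=(1,\dots,1)$ to get
\[
|\mu_Y-\textstyle\sum_i\mu_i|\le(2\delta+\beta')\cdot m\cdot\|V\|.
\]
Next, use \Cref{prop: variance alternative def} to write $\Var_\delta[Y]\approx\bbE_\delta[Y^2]-\mu_Y^2$ and $\Var_\delta[X_i]\approx\bbE_\delta[X_i^2]-\mu_i^2$. Both $Y^2$ and the $X_i^2$ must be viewed over a single explicit support. For this we take a common support $\hat V\supseteq V\cup V^2\cup VV\cup\{1\}$ and move between supports with \refmeta{support ext}. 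Note that $\|\hat V\|\le(\|V\|+1)^2$, which will account for the power of $\|V\|+1$ in the final bound.

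Now expand pointwise. For every seed $x$,
\[
S(x)^2=\sum_i C_i(x)^2+\sum_{i\ne j}C_i(x)C_j(x).
\]
This is a linear identity among the random variables $Y^2$, $X_i^2$, and $X_iX_j$, all over $\hat V$, with about $m^2$ coefficients equal to $1$. Applying \refmeta{linearity} once gives
\[
\bbE[Y^2]\approx\sum_i\bbE[X_i^2]+\sum_{i\ne j}\bbE[X_iX_j].
\]
Almost pairwise independence then gives $|\bbE_\delta[X_iX_j]-\mu_i\mu_j|\le\eps$ for each pair, which accounts for the $\eps m^2$ term. Combining,
\[
\Var_\delta[Y]\approx\sum_i\bbE[X_i^2]+\sum_{i\ne j}\mu_i\mu_j-\mu_Y^2 .
\]
Substituting $\mu_Y\approx\sum_i\mu_i$, the cross terms cancel, leaving $\sum_i(\bbE[X_i^2]-\mu_i^2)\approx\sum_i\Var_\delta[X_i]$.

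The main obstacle is the error bookkeeping. There are three sources to track.
\begin{itemize}
\item \textbf{Precision of the covariance.} Independence is only assumed at precision $\delta$. So the terms $\bbE[X_iX_j]$ must be evaluated at precision $\delta$ (or converted using \Cref{prop: precision consistency exp}), and each such conversion costs roughly $\delta\|V\|^2$ per pair. This is where the $3\delta\|V\|^2m^2$ term comes from.
\item \textbf{Squaring $\mu_Y$.} Replacing $\mu_Y^2$ by $(\sum_i\mu_i)^2$ multiplies the linearity error by $|\mu_Y|+|\sum_i\mu_i|\le 2m\|V\|$. This gives roughly $2\delta m^2\|V\|^2$ plus a $\beta'$ term.
\item \textbf{Absorbing $\beta'$-terms.} All remaining $\beta'$-dependent errors carry factors of size at most $m^2\|V\|^3$. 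These are absorbed by choosing $\beta'$ (and $\eta$) to be $\beta/(\mathrm{poly}(m)\cdot\text{const})$, which leaves $\beta(\|V\|+1)^3$.
\end{itemize}
I expect the delicate point to be fitting the $\delta$-terms exactly into $3\delta\|V\|^2m^2$. The plan is to apply linearity at precision $\eta$, where its error becomes pure $\beta$-slack, and to incur $\delta$-costs only through the $m^2$ covariance terms, the $\mu_Y$ substitution, and the variance identities. If the stated constant $3$ is too tight for this accounting, I would sharpen it by using \Cref{prop: variance alternative def} directly at precision $\delta$ rather than converting through $\eta$.
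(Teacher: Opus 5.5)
Your outline is the paper's: expand $S(x)^2$ pointwise, apply linearity at an auxiliary precision $\eta$, use the covariance hypothesis at precision $\delta$, relate $\bbE[Y]$ to $\sum_i\bbE[X_i]$, and invoke \Cref{prop: variance alternative def}. The step that does not go through is the passage between $\mathsf{Var}_\delta[\cdot]$ and $\bbE[\cdot^2]-\mu^2$.
\begin{itemize}
\item Applying \Cref{prop: variance alternative def} at precision $\delta$ costs $(2\delta+\beta)(1+|\mu|)\|\hat V\|$, where $\hat V\supseteq(V-\mu)^2$.
\item Applying it at precision $\eta$ only moves the same cost into the conversion $\mathsf{Var}_\eta\to\mathsf{Var}_\delta$.
\end{itemize}
Either way, each of the $m+1$ variables contributes an error of order $\delta\,\|(V-\mu)^2\|$. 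This is a genuine $\delta$-error, so no choice of $\eta$ or $\beta'$ absorbs it. Moreover, $\|(V-\mu)^2\|$ can be of order $|V|\cdot\|V\|^2$, which is not bounded by any function of $\|V\|$ (put many tiny rationals into $V$).

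Your estimate $\|\hat V\|\le(\|V\|+1)^2$ covers $V\cup V^2\cup VV\cup\{1\}$, but not $(V-\mu)^2$, which the variance identity needs. Your fallback of running the identity directly at precision $\delta$ only adds more $\delta$-terms. Even the rest of your $\delta$-budget does not close. With means taken at precision $\delta$, the squaring step costs $2\delta m(m+1)\|V\|^2$, not $2\delta m^2\|V\|^2$, and moving $\bbE[Y^2]$ and $\bbE[X_i^2]$ between precisions $\delta$ and $\eta$ costs more. In the paper the means are $\bbE_\eta[Y]$ and $\bbE_\eta[X_i]$, and the constant $3$ comes only from converting $\bbE_\eta[X_iX_j]$ and $\bbE_\eta[X_i]\bbE_\eta[X_j]$ to precision $\delta$ in order to use the covariance hypothesis.

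In fact no bookkeeping can reach the stated bound, because it fails in a standard model. Take the following instance.
\begin{itemize}
\item Let $n=m=2$, $C_i(x)=x_i$, and $S(x)=x_1+x_2$.
\item Let $V=\{0,1,2\}\cup\{\gamma_1,\dots,\gamma_N\}$ with distinct $\gamma_j\in(0,1/N)$ and $N\ge 4$, so that $\|V\|\le 4$.
\item Interpret $\P$ by exact counting, except that the indicator circuits of the never-occurring events $(S(x)-1)^2=(1-\gamma_j)^2$ are answered with $\delta$. This is a standard model.
\end{itemize}
In this model $\bbE_\delta[X_i]=1/2$ and $\mathsf{Cov}_\delta(X_1,X_2)=0$, so $\eps=0$ is allowed. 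Also $\mathsf{Var}_\delta[X_i]=1/4$ and $\bbE_\delta[Y]=1$. However, $\mathsf{Var}_\delta[Y]=1/2+\delta\sum_j(1-\gamma_j)^2\ge 1/2+N\delta/2$, while the right-hand side is at most $192\delta+125\beta$. With $N=1000$ and $\beta=\delta$ the inequality fails, so by \Cref{prop: soundness APX1} it is not provable in $\Apx_1$.

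The paper's proof has the same gap. It establishes the bound for $\mathsf{Var}_\eta$, where $\eta\le\beta/(2000m^2|V|)$ absorbs the $|V|$-dependence. Its final display combines the $\mathsf{Var}_\eta$ identities \eqref{equ: tech var Y} and \eqref{equ: tech var Xi}, yet silently writes $\mathsf{Var}_\delta$. What the argument, yours or the paper's, can actually establish is one of two things: the bound with $\mathsf{Var}_\eta$ in place of $\mathsf{Var}_\delta$, or the $\mathsf{Var}_\delta$ version with an additional term of order $\delta\,m\,|V|\,(\|V\|+1)^3$ on the right-hand side.
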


\begin{proof}
    We argue in $\Apx_1$. Fix $n,m$, circuits $C_1,\dots,C_m\in\zo^n\to\bbQ$, $V,\delta^{-1},\eps^{-1},\beta^{-1}$. Let $\eta^{-1}\in\Log$ be a precision parameter to be determined later, $\mu \eqdef \bbE_\eta[Y]$, and $\mu_i \eqdef \bbE_\eta[X_i]$ for $i\in[m]$. 
    
    \paragraph{Overview of the proof.} Recall that by \Cref{prop: variance alternative def}, we have that 
    \begin{equation}
    \left|\Var_\eta[Y] - (\bbE_\eta[Y^2] - \mu^2)\right| \le 3\eta\cdot (1+|\mu|)\cdot \|\hat V_Y\|,\label{equ: tech var Y}
    \end{equation}
    where $\hat V_Y = V\cup (V-\mu)^2\cup V^2 \cup \{1\}$. Similarly, for each $i\in[m]$, we have that 
    \begin{equation}
    \left|\Var_\eta[X_i] - (\bbE_\eta[X_i^2] - \mu_i^2)\right| \le 3\eta \cdot (1+|\mu_i|)\cdot \|\hat V_i\|, \label{equ: tech var Xi}
    \end{equation}
    where $\hat V_i = V\cup(V-\mu_i)^2\cup V^2\cup\{1\}$. Therefore, it suffices to bound 
    \begin{equation}
    \Delta \eqdef |\bbE_\eta[Y^2] - (\bbE_\eta[X_1^2] + \dots + \bbE_\eta[X_m^2]) - \mu^2 + (\mu_1^2 + \dots \mu_m^2)|  \label{equ: tech def delta}  
    \end{equation}
    and combine it with \Cref{equ: tech var Y} and \eqref{equ: tech var Xi}. At a high level, our plan is to apply the \refmeta{linearity} to prove that $\bbE_\eta[Y^2]$ is close to 
    \begin{equation}
    \sum_{i=1}^m \bbE_\eta[X_i^2] + \sum_{i,j\in[m],i\ne j}\bbE_\eta[X_iX_j], \label{equ: tech Y sum}
    \end{equation}
    which is subsequently close to 
    \[
    \sum_{i=1}^m \bbE_\eta[X_i^2] + \sum_{i,j\in[m],i\ne j}\mu_i\mu_j
    \]
    by the almost pairwise independence of $X_1,\dots,X_m$. Finally, we can apply the \refmeta{linearity} to show that 
    \[
    \mu_1^2 + \dots + \mu_m^2 + \sum_{i,j\in[m],i\ne j}\mu_i\mu_j = \left(\sum_{i=1}^m \mu_i\right)^2 \approx \mu^2.
    \]
    Putting the estimates together provides the upper bound for $\Delta$.

     \paragraph{Step 1: Approximation of $\bbE_\eta[Y^2]$.} We first show that $\bbE_\eta[Y^2]$ is close to \Cref{equ: tech Y sum}. Recall that $Y^2$, $X^2$, and $X_{i}X_j$ are the random variables over $V^2$ defined as follows: 
    \begin{compactitem}
    \item $Y^2$ is defined by the tuple $(V^2, n, S^2)$, where $S^2(x)\eqdef (S(x))^2$. 
    \item For $i\in[m]$, $X_i^2$ is defined by the tuple $(V^2, n, C_i^2)$, where $C_i^2(x)\eqdef (C_i(x))^2$.  
    \item For $i,j\in[m]$ such that $i\ne j$, $X_{i}X_j$ is defined by the tuple $(V^2, n, C_{ij})$, where $C_{ij}(x)\eqdef C_i(x)\cdot C_j(x)$.
    \end{compactitem}
    
    It is clear from the definition of the terms that for any $z\in\zo^n$, we have 
    \[
    S^2(z) = \sum_{i=1}^m C_i^2(z) + \sum_{i,j\in[m],i\ne j} C_{ij}(z). 
    \]
    Thus by the \refmeta{linearity}, we have 
    \begin{equation}
    \left|\bbE_\eta[Y^2] - \left(\sum_{i=1}^m \bbE_\eta[X_i^2] + \sum_{i,j\in[m],i\ne j}\bbE_\eta[X_{i}X_j]\right)\right| \le 3\eta \cdot \|V^2\|\cdot (m^2 + m + 1) \le 9\eta \cdot m^2\cdot \|V^2\|. \label{equ: tech step 1 finished}
    \end{equation}

    \paragraph{Step 2: Applying pairwise independence.} In this step we show that $\bbE_\eta[X_iX_j]$ is close to $\mu_i\mu_j$ for any $i,j\in[m]$. Recall that $\mu_i\eqdef \bbE_\eta[X_i]$. By the \refmeta{precision exp}, we have  
    \[
    \left|\bbE_{\eta}[X_iX_j] - \bbE_\delta[X_iX_j]\right| \le (\delta+2\eta)\cdot \|V^2\|
    \]
    for any $i,j\in[m]$, $i\ne j$. Moreover, since $X_1,\dots,X_m$ are $\eps$-almost $\delta$-approximately pairwise independent, we know that 
    \[
    \left|\bbE_\delta[X_iX_j] - \bbE_\delta[X_i]\cdot \bbE_\delta[X_j]\right| \le \eps. 
    \]
    For each $i\in[m]$, we have that $|\bbE_\delta[X_i] - \bbE_\eta[X_i]| \le (\delta + 2\eta)\cdot \|V\|$, which implies that 
    \begin{align*}
        & \left|\bbE_\delta[X_i]\cdot \bbE_\delta[X_j] - \bbE_\eta[X_i]\cdot \bbE_\eta[X_j]\right| \\ 
    \le & \left|\bbE_\delta[X_i]\cdot \bbE_\delta[X_j] - \bbE_\eta[X_i]\cdot \bbE_\delta[X_j]\right| + \left|\bbE_\eta[X_i]\cdot \bbE_\delta[X_j] - \bbE_\eta[X_i]\cdot \bbE_\eta[X_j]\right| \\ 
    \le & (\delta + 2\eta)\cdot \|V\| \cdot 
    (|\bbE_\eta[X_i]| + |\bbE_\delta[X_j]|)\le 2(\delta+2\eta)\cdot \|V\|^2. 
    \end{align*}
    Combining the equations above, we have that 
    \begin{equation}
    |\bbE_\eta[X_iX_j] - \mu_i\mu_j| \le \eps + (\delta+2\eta)\cdot \|V^2\| + 2(\delta+2\eta)\cdot \|V\|^2 \le \eps + 3(\delta+2\eta)\|V\|^2. \label{equ: tech step 2 finished}
    \end{equation}

    \paragraph{Step 3: Applying linearity of expectation.} The last step is to prove that $\mu\approx \mu_1 + \dots + \mu_m$. Recall that $\mu \eqdef \bbE_\eta[Y]$ and $\mu_i\eqdef \bbE_\eta[X_i]$, where $Y$ is defined by the tuple $(V,n,S)$ and $X_i$ is defined by the tuple $(V,n,C_i)$. It is clear from the definition of $S$ and $C_i$ that for any $z\in\zo^n$, 
    \[
    S(z) = C_1(z) + C_2(z) + \dots + C_m(z).  
    \]
    Therefore, by the \refmeta{linearity}, we have that 
    \begin{equation}
    |\mu - (\mu_1+\dots+\mu_m)| = |\bbE_\eta[Y] - (\bbE_\eta[X_1] + \dots + \bbE_\eta[X_m])| \le 3\eta\cdot \|V\|\cdot (m+1). \label{equ: tech step 3 middle}
    \end{equation}
    Subsequently, we can see that    
    \begin{align}
    \left|\mu^2 - (\mu_1+\dots+\mu_m)^2\right| =~& |\mu + (\mu_1+\dots+\mu_m)|\cdot |\mu - (\mu_1+\dots+\mu_m)| \nonumber\\ 
    \le~&(2|\mu| + 3\eta\cdot \|V\|\cdot (m+1))\cdot 3\eta\cdot \|V\|\cdot (m+1) \nonumber \nonumber \\ 
    \le~& 6|\mu| \cdot \eta \cdot 2m \cdot \|V\| + 9\eta^2 \cdot (2m)^2\cdot \|V\|^2 \nonumber \\ 
    \le~& 4\cdot (9\eta^2+6\eta)\cdot m^2\cdot \|V\|^2 \tag{since $|u| \leq \|V\|$} \nonumber \\ 
    \le~& 60\cdot \eta \cdot (m\|V\|)^2, \label{equ: tech step 3 finished}
    \end{align}
    where the last inequality holds as we will take $\eta \le 1$. 

    \paragraph{Wrapping things up.} Combining \Cref{equ: tech step 1 finished}, \eqref{equ: tech step 2 finished} and \eqref{equ: tech step 3 finished}, we can see that 
    \begin{align*}
    \Delta =~& |\bbE_\eta[Y^2] - (\bbE_\eta[X_1^2] + \dots + \bbE_\eta[X_m^2]) - \mu^2 + (\mu_1^2 + \dots \mu_m^2)| \\ 
    \le~&\left|\sum_{i,j\in[m],i\ne j}\bbE_\eta[X_iX_j] - \mu^2 + \sum_{i=1}^m \mu_i^2\right| + 9\eta\cdot m^2\cdot \|V\|^2 \tag{\Cref{equ: tech step 1 finished}} \\ 
    \le~&\left|\sum_{i,j\in[m],i\ne j}\mu_i\mu_j - \mu^2 + \sum_{i=1}^m \mu_i^2\right| + 9\eta\cdot m^2\cdot \|V\|^2 + m^2(\eps + 3(\delta+2 \eta)\cdot \|V\|^2) \tag{\Cref{equ: tech step 2 finished}} \\ 
    \le~&\left|-\mu^2 + \left(\sum_{i=1}^m \mu_i\right)^2\right| + (\eps +3\delta\cdot \|V\|^2) \cdot m^2 + 20\cdot \eta\cdot (m\|V\|)^2 \\ 
    \le~& 60\cdot \eta \cdot (m\|V\|)^2+ (\eps +3\delta\cdot \|V\|^2) \cdot m^2 + 20\cdot \eta\cdot (m\|V\|)^2 \tag{\Cref{equ: tech step 3 finished}} \\ 
    \le~& (\eps +3\delta\cdot \|V\|^2) \cdot m^2 + 80\cdot \eta \cdot (m\|V\|)^2. 
    \end{align*}

    Finally, we combine this with \Cref{equ: tech var Y} and \eqref{equ: tech var Xi}, which gives 
    \begin{align}
        \left|\Var_\delta[Y] - (\Var_\delta[X_1] + \dots + \Var_\delta[X_m])\right| \le \Delta + 3\eta \cdot (1+|\mu|)\cdot \|\hat V_Y\| + 3\eta \cdot \sum_{i=1}^m (1+|\mu_i|) \cdot \|\hat V_i\|.\label{equ: tech upper bound with above} 
    \end{align}
    Note that 
    \begin{align*}
    |\mu| \le\|\hat V_Y\| \le~&\|V\| + \|(V-\mu)^2\| + \|V^2\| + 1 \\ 
    \le~&\|V\| + \|V\|^2 + 1 + \sum_{v\in V} (v-\mu)^2 \\ 
    \le~&\|V\| + \|V\|^2 + 1 + \|V\|^2 + |V|\cdot \mu^2 + 2\mu\cdot \|V\|^2 \\ 
    \le~& 2\cdot \|V\|^3 + (|V| + 2)\cdot \|V\|^2 + \|V\| + 1 = 8\cdot |V|\cdot (\|V\|+1)^3.   
    \end{align*}
    Similarly, we have that $|\mu_i| \le \|\hat V_i\| \le 8\cdot |V|\cdot (\|V\|+1)^3$. 
    
    Let $\eta \eqdef \min \{\beta / (2000\cdot m^2\cdot |V|), 1/10\} \leq 1$. By combining \Cref{equ: tech upper bound with above} and the upper bound above, we have that 
    \begin{align*}
        & \left|\Var_\delta[Y] - (\Var_\delta[X_1] + \dots + \Var_\delta[X_m])\right| \\ 
    \le~& (\eps +3\delta\cdot \|V\|^2) \cdot m^2 + 80\cdot \eta \cdot (m\|V\|)^2 + 1000\cdot m\cdot \eta\cdot |V|\cdot (\|V\|+1)^3 \\ 
    \le~& (\eps +3\delta\cdot \|V\|^2) \cdot m^2 + \beta\cdot (\|V\|+1)^3.  
    \end{align*}
    This completes the proof. 
\end{proof}

\subsection{Independence, Error Reduction, and Concentration Bounds}

We now consider the provability in $\Apx_1$ of concentration bounds for independent and identically distributed (i.i.d.) random variables, which are important tools in combinatorics, probability, and the analysis of  randomized algorithms. 

\subsubsection{Explicit Independence}

Before stating and proving the concentration bounds, we formally define the way we formulate independent and identically distributed random variables, and prove a form of the multiplication principle for approximate counting. 

\pgr{Formalization of i.i.d.~RVs} We will only consider random variables that are ``explicitly'' i.i.d., in the sense that they are defined by the same sampling algorithm using disjoint parts of the random seed. This suffices for our applications and greatly simplifies the calculation of parameters for approximate counting. We first formally define explicit independence of random variables as follows: 

\begin{definition}[Explicit Independence]
Let $X$ and $X'$ be random variables defined by $(V,n,C)$ and $(V,n,C')$, respectively. We say that $X$ and $X'$ are \emph{explicitly independent} if $C$ and $C'$ read disjoint bits of the $n$-bit seed; that is, there is a partition $\pi_1\cup \pi_2$ of $[n]$ such that for any seed $x\in\zo^n$, $C$ only reads $x_{\pi_1}$ and $C'$ only reads $x_{\pi_2}$, where $x_{\pi}$ denotes the bits of $x$ with indices in $\pi$. 
\end{definition}

Similarly, we define explicitly i.i.d.~random variables as follows: 

\begin{definition}[Explicitly i.i.d.~RVs]
Let $n,m\in\Log$, $C:\zo^n\to\bbQ$ be a circuit, $V\subseteq\bbQ$ be an explicit set such that $\forall x\in\zo^n~C(x)\in V$. The explicitly i.i.d.~random variables $X_1,\dots,X_m$ defined by the tuple $(V,n,C)$ are obtained as follows. 

Let $C_i(\cdot):\zo^{nm}\to V$ be the circuit such that for any $\overline x = x_1\circ x_2\circ \dots\circ x_m\in\zo^n$, where $x_i\in\zo^n$ for each $i\in[m]$, $C_i(\overline x) \eqdef C(x_i)$. For each $i\in[m]$, the random variable $X_i$ is defined by the tuple $(V,nm,C_i)$. 
\end{definition}

\subsubsection{Multiplication Principle}

We will need a form of multiplication principle: For any explicitly independent random variables $X$ and $Y$, we have $\Ex[XY]\approx \Ex[X]\cdot \Ex[Y]$, or equivalently, $\Cov(X,Y)$ is small. In other words, explicitly independent random variables are approximately independent. Formally: 

\dummylabel{multiplication}{Multiplication Principle}
\begin{theorem}[Multiplication Principle]\label{thm: multiplication}
    $\Apx_1$ proves the following statement. Let $n,\delta^{-1}\in\Log$, $V_1,V_2$ be explicit sets, $C_1,C_2\in\zo^n\to\bbQ$ be circuits. Suppose that the random variables $X_1$ and $X_2$, defined by the tuples $(V_1,n,C_1)$ and $(V_2,n,C_2)$, respectively, are explicitly independent. Then for any $\beta^{-1}\in\Log$,
    \[
    \Cov_\delta(X,Y) = |\bbE_{\delta}[X_1X_2] - \bbE_{\delta}[X_1]\cdot \bbE_{\delta}[X_2]| \le (2\delta+\beta)\cdot \|\hat V\| + (4\delta +\beta)\cdot \|\hat V\|^2, 
    \]
    where $\hat V \eqdef V_1\cup V_2\cup V_1V_2$, and $V_1 V_2 \eqdef \{v_1 v_2 \mid v_1 \in V_1, v_2 \in V_2\}$. 
\end{theorem}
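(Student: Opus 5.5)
The plan is to reduce the multiplication principle to the \refmeta{avg on exp}, using the structure of explicit independence to turn a two-variable statement into a one-variable statement after fixing part of the seed.

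\textbf{Normalizing the seed.} By the \refmeta{permutation exp}, applied to $X_1$, $X_2$ and $X_1X_2$ with the same permutation and a loss of $O(\eta)\cdot\|\hat V\|$ each, I may assume the seed splits as $x = u\circ z$. Here $C_1$ reads only $u\in\zo^{n-k}$, and $C_2$ reads only the rightmost $k$ bits $z$. I will work at an auxiliary precision $\eta$ and return to $\delta$ at the end via the \refmeta{precision exp}, choosing $\eta\approx\beta/\mathrm{poly}(n)$.

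\textbf{Key auxiliary claim (dummy variables).} Let $D$ be a circuit that does not read its rightmost $k$ input bits. Then for every $z\in\zo^k$,
\[
|\P_\eta(\Fix(D,z))-\P_\eta(D)|\le O(k\eta).
\]
To prove it, fix one bit at a time. If $D'$ does not read its rightmost bit, then $\Fix_0(D')$ and $\Fix_1(D')$ are functionally equivalent. The \refmeta{global} therefore gives $|\P_\eta(\Fix_0 D')-\P_\eta(\Fix_1 D')|\le 3\eta$. Combined with the \refaxiom{local}, this gives $|\P_\eta(D')-\P_\eta(\Fix_b D')|\le O(\eta)$ for each $b$. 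Induction on $k$ (open induction in $\Apx_1$) then gives the claim. Applying it to each indicator $(C_1)_v$ and summing yields
\[
|\bbE_\eta[X_1|_z]-p|\le O(k\eta)\|V_1\|\quad\text{for all } z, \qquad\text{where } p\eqdef\bbE_\eta[X_1].
\]

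\textbf{Main step.} Take the linear combination $\mu\eqdef\bbE_\eta[X_1X_2]-p\cdot\bbE_\eta[X_2]$, with coefficients $(1,-p)$ and $|p|\le\|\hat V\|$, and apply the \refmeta{avg on exp} with suffix length $k$. This gives a $z$ such that $\mu\le\mu|_z+(2\eta+\eta)\cdot\|\hat V\|(1+\|\hat V\|)$, where $\mu|_z=\bbE_\eta[X_1X_2|_z]-p\cdot\bbE_\eta[X_2|_z]$. I then simplify the two terms of $\mu|_z$.
\begin{itemize}
\item After fixing $z$, the variable $X_2|_z$ is (provably equivalent to) the constant $C_2(z)$. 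By the \refmeta{global} and the \refaxiom{boundary}, $\bbE_\eta[X_2|_z]\approx C_2(z)$.
\item Pointwise, $X_1X_2|_z = C_2(z)\cdot X_1|_z$. The \refmeta{linearity} with the single coefficient $C_2(z)$ gives $\bbE_\eta[X_1X_2|_z]\approx C_2(z)\,\bbE_\eta[X_1|_z]$, with error $O(\eta)\|\hat V\|^2$.
\end{itemize}
Combining these with the auxiliary claim gives $\mu|_z\approx C_2(z)(\bbE_\eta[X_1|_z]-p)\approx 0$. Running the same argument on $-\mu$ gives the lower bound. Finally, $\bbE_\eta[X_2]$ is replaced by $\bbE_\delta[X_2]$, and likewise for the other expectations, using the \refmeta{precision exp}. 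The product $p\cdot\bbE[X_2]$ converts a $\delta\|\hat V\|$ discrepancy into the $4\delta\|\hat V\|^2$ term in the statement.

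\textbf{Expected obstacle.} The main difficulty is the parameter bookkeeping rather than any single idea. Every $\eta$-level loss must be absorbed into $\beta$ by making $\eta$ polynomially small in $n,\|\hat V\|$. Only the final precision switch is allowed to contribute $\delta$-terms, and these must come out exactly as $2\delta\|\hat V\|+4\delta\|\hat V\|^2$. A secondary technical point is making sure that the constant circuit $\Fix(C_2,z)$, padded with unread inputs, can be handled. The cleanest route is to compare it via the \refmeta{global} with the syntactically constant circuit, and then apply the \refaxiom{boundary} to that constant circuit.
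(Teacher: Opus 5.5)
Your proof is correct, but you organize it differently from the paper.

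\emph{The paper's route.} The paper applies the averaging argument for expectation twice, in a nested way. It starts from the exact identity $\bbE_\eta[\hat Y|_{\rho_1}|_{\rho_2}]=\bbE_\eta[\hat X_1|_{\rho_1}|_{\rho_2}]\cdot\bbE_\eta[\hat X_2|_{\rho_1}|_{\rho_2}]$, which holds once both parts of the seed are fixed. It then averages over $\rho_2$ with coefficient $\bbE_\eta[\hat X_1|_{\rho_1}]$, and afterwards over $\rho_1$ with coefficient $\bbE_\eta[\hat X_2]$. It also derives the fact that fixing unread seed bits does not change an expectation (for example $\bbE_\eta[\hat X_2|_{\rho_1}]\approx\bbE_\eta[\hat X_2]$) from further two-sided applications of the same averaging argument.

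\emph{Your route.} You apply the averaging argument only once, over the bits read by $X_2$. You then replace the inner averaging by Linearity of Expectation with the single constant coefficient $C_2(z)$. Finally, you prove the unread-bits invariance as a standalone lemma, by bit-by-bit fixing with Local and Global Consistency plus open induction.

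\emph{What each buys.} Your version is shorter and more modular. The dummy-variable lemma is also reusable; the paper re-derives instances of it ad hoc, for example in Yao's transformation and in one-sided error reduction. Its $O(k\eta)$ loss is harmless, since $\eta$ is taken polynomially small in $n$. The paper's route relies on a single tool and gets $k$-independent intermediate losses, which keeps its bookkeeping uniform.

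\emph{One point to make explicit.} The averaging argument, and likewise your linearity step, need the relevant variables on a common support. So first support-extend $X_1X_2$, $X_2$ and $X_1$ to $\hat V$. The resulting $O(\eta)\|\hat V\|$ and $O(\eta)\|\hat V\|^2$ losses are absorbed into $\beta$, as you intend.
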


\begin{proof}
    We argue in $\Apx_1$. Fix $n,\delta^{-1}\in\Log$, $V_1,V_2,C_1,C_2$, and $\beta^{-1}\in\Log$. Let $\eta^{-1}\in\Log$ be determined later and $\hat V \eqdef V_1\cup V_2\cup V_1V_2$ be an explicit set.

    \pgr{Overview of the proof} At a high level, the proof goes as follows. Let $\pi_1\cup \pi_2$ be a partition of $[n]$ such that for any seed $x\in\zo^n$, $X_1$ only reads $x_{\pi_1}$ and $X_2$ only reads $x_{\pi_2}$. Suppose, towards a contradiction, that $|\bbE[X_1 X_2] - \bbE[X_1]\cdot \bbE[X_2]|$ is large. By the \refmeta{avg on exp}, we may find an assignment $\rho_1$ to the part $x_{\pi_1}$ of the seed such that 
    \[
    \left|\bbE[X_1X_2|_{\rho_1}] - \bbE[X_1|_{\rho_1}] \cdot \bbE[X_2]\right| 
    \]
    is large. More formally, we are applying \refmeta{avg on exp} by treating $X_1X_2$ and $X_1$ as random variables with coefficients $1$ and $-\bbE[X_2]$, respectively. Note that since $X_2$ does not read the part $x_{\pi_1}$ of the seed, we know that $\bbE[X_2]$ is close to $\bbE[X_2|_{\rho_1}]$, and subsequently 
    \[
    \left|\bbE[X_1X_2|_{\rho_1}] - \bbE[X_1|_{\rho_1}] \cdot \bbE[X_2|_{\rho_1}]\right| 
    \]
    is also large. 

    Next, we apply the \refmeta{avg on exp} again by treating $X_1X_2|_{\rho_1}$ and $X_2|_{\rho_1}$ as random variables with coefficients $1$ and $-\bbE[X_1|_{\rho_1}]$, respectively, using the seed $x_{\pi_2}$. This gives an assignment $\rho_2$ to $x_{\pi_2}$ such that 
    \[
    \left|\bbE[X_1X_2|_{\rho_1}|_{\rho_2}] - \bbE[X_1|_{\rho_1}] \cdot \bbE[X_2|_{\rho_1}|_{\rho_2}]\right|. 
    \]
    is large. Note that, again, since $X_1$ does not read the part $x_{\pi_2}$ of the seed, we know that $\bbE[X_1|_{\rho_1}]$ is close to $\bbE[X_1|_{\rho_1}|_{\rho_2}]$, and subsequently 
    \[
    \left|\bbE[X_1X_2|_{\rho_1}|_{\rho_2}] - \bbE[X_1|_{\rho_1}|_{\rho_2}] \cdot \bbE[X_2|_{\rho_1}|_{\rho_2}]\right|. 
    \]
    is also large. However, this is impossible as $X_1X_2|_{\rho_1}|_{\rho_2}$, $X_1|_{\rho_1}|_{\rho_2}$, and $X_2|_{\rho_1}|_{\rho_2}$ are random variables with seed length $0$ and are supposed to satisfy $X_1X_2|_{\rho_1}|_{\rho_2} = X_1|_{\rho_1}|_{\rho_2}\cdot X_2|_{\rho_1}|_{\rho_2}$ by definition. 

    We now prove the theorem in detail. Note that we will implement the proof idea above in backward direction for simplicity of calculation.  

    \pgr{Step 1: Averaging argument after fixing $\rho_1$} Let $\hat Y,\hat X_1,\hat X_2$ be the random variables over $\hat V\eqdef V_1\cup V_2\cup V_1V_2$ obtained from $X_1X_2$, $X_1$, $X_2$ via support extension. Let $\pi_1\cup\pi_2$ be a partition of $[n]$ such that for any seed $x\in\zo^n$, $C_1$ only reads $x_{\pi_1}$ and $C_2$ only reads $x_{\pi_2}$. Let $\eta^{-1}\in\Log$ be a parameter to be determined later. It is clear that for any fixed assignments $\rho_1,\rho_2$ to $x_{\pi_1},x_{\pi_2}$, respectively, 
    \begin{equation}
    \bbE_{\eta}[\hat Y|_{\rho_1}|_{\rho_2}] = \bbE_\eta[\hat X_1|_{\rho_1}|_{\rho_2}]\cdot \bbE_\eta[\hat X_2|_{\rho_1}|_{\rho_2}]\label{equ: eq after fixing both}
    \end{equation}
    by the definition of the random variables. 
    
    Note that $\hat X_1|_{\rho_1}$ is a random variable that does not read the its seed $x_{\rho_2}$. Therefore, for any fixed $\rho_2,\rho_2'$, $|\bbE_{\eta}[\hat X_1|_{\rho_1}|_{\rho_2'}] -\bbE_\eta[\hat X_1|_{\rho_1}|_{\rho_2}]| = 0$. By the \refmeta{avg on exp}, we have that for any $\rho_2$, 
    \begin{equation}
        |\bbE_\eta[\hat X_1|_{\rho_1}]-\bbE_\eta[\hat X_1|_{\rho_1}|_{\rho_2}]|\le 3\eta\cdot \|\hat V\|,\label{equ: close X1 assigned}
    \end{equation}
    and subsequently by \Cref{equ: eq after fixing both}, 
    \begin{equation}
    |\bbE_{\eta}[\hat Y|_{\rho_1}|_{\rho_2}]- \bbE_\eta[\hat X_1|_{\rho_1}]\cdot \bbE_\eta[\hat X_2|_{\rho_1}|_{\rho_2}]|\le 3\eta\cdot \|\hat V\|\cdot \bbE_\eta[\hat X_2|_{\rho_1}|_{\rho_2}] \le 3\eta \cdot \|\hat V\|^2.
    \end{equation}
    
    Therefore, by the \refmeta{avg on exp}, we have that for any fixed assignment $\rho_1$ to $x_{\pi_1}$, 
    \begin{equation}
    |\bbE_{\eta}[\hat Y|_{\rho_1}]- \bbE_\eta[\hat X_1|_{\rho_1}]\cdot \bbE_\eta[\hat X_2|_{\rho_1}]| \le 3\eta \cdot \|\hat V\|\cdot (1+|\bbE_\eta[\hat X_1|_{\rho_1}]|)\le 3\eta \cdot \|\hat V\|\cdot (1+\|\hat V\|). \label{equ: result step 1 avg}
    \end{equation}
    Note that here we treat $\hat Y|_{\rho_1}$, $\hat X_2|_{\rho_1}$ as random variables, and $\bbE_\eta[\hat X_1|_{\rho_1}]$ as a coefficient. 

    \pgr{Step 2: Averaging argument again} Similarly to \Cref{equ: close X1 assigned}, we will first show that $\bbE_\eta[\hat X_2|_{\rho_1}]$ is close to $\bbE_\eta[\hat X_2]$ for any $\rho_1$. We can see that for any assignments $\rho_1,\rho_1'$ to $x_{\pi_1}$ and $\rho_2$ to $x_{\pi_2}$, $\mathbb{E}_\eta[\hat X_2|_{\rho_1}|_{\rho_2}] = \bbE_\eta[\hat X_2|_{\rho_1'}|_{\rho_2}]$ as $\hat X_2$ does not read $x_{\pi_1}$. Therefore, by the \refmeta{avg on exp}, for every $\rho_1,\rho_1'$, 
    \[
    |\bbE_{\eta}[\hat X_2|_{\rho_1}] - \bbE_\eta[\hat X_2|_{\rho_1'}]| \le 6\eta \cdot \|\hat V\|\cdot 
    \]
    Again, by the \refmeta{avg on exp} applied to $\hat{X}_2$, we get $\rho_1'$ such that 
    \[
    |\bbE_{\eta}[\hat X_2|_{\rho_1}] - \bbE_\eta[\hat X_2]| \le 3\eta \cdot \|\hat V\| + |\bbE_{\eta}[\hat X_2|_{\rho_1}] - \bbE_\eta[\hat X_2|_{\rho_1'}]| \le 9\eta \cdot \|V\|. 
    \]
    Combining this with \Cref{equ: result step 1 avg}, we have 
    \begin{align}
    & |\bbE_{\eta}[\hat Y|_{\rho_1}]- \bbE_\eta[\hat X_1|_{\rho_1}]\cdot \bbE_\eta[\hat X_2]| \le |\bbE_{\eta}[\hat Y|_{\rho_1}]- \bbE_\eta[\hat X_1|_{\rho_1}]\cdot \bbE_\eta[\hat X_2|_{\rho_1}]| + 9\eta \cdot \|\hat V\|^2\nonumber  \\ 
    & \quad \le 12\eta\cdot \|\hat V\|^2 + 3\eta \cdot \|\hat V\|. \label{equ: step 2 avg before finish}
    \end{align}
    By applying the \refmeta{avg on exp} on \Cref{equ: step 2 avg before finish} with random variables $\hat Y$ and $\hat X_1$, we have that 
    \begin{align}
    & |\bbE_{\eta}[\hat Y]- \bbE_\eta[\hat X_1]\cdot \bbE_\eta[\hat X_2]| \le |\bbE_{\eta}[\hat Y|_{\rho_1}]- \bbE_\eta[\hat X_1|_{\rho_1}]\cdot \bbE_\eta[\hat X_2]| +  3\eta\cdot \|V\|\cdot (1+\|V\|)\nonumber  \\ 
    & \quad \le 15\eta\cdot \|\hat V\|^2 + 6\eta \cdot \|\hat V\|. \label{equ: result step 2 avg}
    \end{align}

    \pgr{Wrapping things up} Note that by the \refmeta{support ext}, we have that 
    \[
    |\bbE_\eta[\hat Y] - \bbE_\eta[X_1X_2]|, |\bbE_\eta[\hat X_1]-\bbE_\eta[X_1]|, |\bbE_\eta[\hat X_2]-\bbE_\eta[X_2]| \le 3\eta\cdot \|\hat V\|. 
    \]
    By the \refmeta{precision exp}, we have 
    \[ 
    |\bbE_\delta[X_1X_2] - \bbE_\eta[X_1X_2]|, |\bbE_\delta[ X_1]-\bbE_\eta[X_1]|, |\bbE_\delta[X_2]-\bbE_\eta[X_2]| \le (2\delta+\eta)\cdot \|\hat V\|.
    \]
    We can therefore deduce from \Cref{equ: result step 2 avg} that 
    \begin{align*}
        & |\bbE_{\delta}[X_1X_2] - \bbE_{\delta}[X_1]\cdot \bbE_{\delta}[X_2]| \\ 
    \le~& |\bbE_{\eta}[X_1X_2] - \bbE_{\eta}[X_1]\cdot \bbE_{\eta}[X_2]| + (2\delta + \eta)\cdot \|\hat V\|+(4\delta +2\eta)\cdot \|V\|^2 \\ 
    \le~& |\bbE_{\eta}[\hat Y] - \bbE_{\eta}[\hat X_1]\cdot \bbE_{\eta}[\hat X_2]| + (2\delta + \eta)\cdot \|\hat V\|+(4\delta +2\eta)\cdot \|V\|^2 + 3\eta\cdot \|\hat V\| + 6\eta\cdot \|\hat V\|^2 \\ 
    \le~& |\bbE_{\eta}[\hat Y] - \bbE_{\eta}[\hat X_1]\cdot \bbE_{\eta}[\hat X_2]| + (2\delta + 4\eta)\cdot\|\hat V\| + (4\delta + 8\eta)\cdot \|\hat V^2\| \\ 
    \le~& (2\delta + 10\eta)\cdot \|\hat V\| + (4\delta + 23\eta)\cdot \|\hat V\|^2. 
    \end{align*}
    This completes the proof by setting $\eta = \beta /100$. 
\end{proof}

Subsequently, we can obtain the following more convenient form of the  multiplication principle for explicitly independent Bernoulli random variables: 

\dummylabel{multiplication BRVs}{Multiplication Principle for Bernoulli RVs}
\begin{corollary}[Multiplication Principle for Bernoulli RVs]
    $\APX_1$ proves the following statement. Let $n,m,\delta^{-1}\in\Log$ and $X_1,\dots,X_m$ be explicitly independent random variables over $\zo$ with seed length $n$. Then
    \[
    \left|\bbE_{\delta}[X_1X_2\dots X_m] - \prod_{j\in [i]}\bbE_\delta[X_i]\right| \le 8\delta \cdot m,
    \]
    where the random variable $X_1 X_2 \dots X_m$ is defined in the natural way.
\end{corollary}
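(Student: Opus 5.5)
The plan is an induction on $j$, using the Multiplication Principle (\Cref{thm: multiplication}) once per step with Bernoulli supports and a finer internal precision. The error $8\delta m$ has no free slack parameter $\beta$, so I will prove the bound with total error $\le 4\delta m + O(\beta m)$ for an auxiliary $\beta^{-1}\in\Log$. I then choose $\beta$ so that this is at most $8\delta m$, for instance $\beta\eqdef\delta/100$. This is allowed because $\delta^{-1}\in\Log$ gives $\beta^{-1}\in\Log$.

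For $j\le m$, let $Z_j\eqdef X_1\cdots X_j$ be the random variable defined by $(\zo,n,S_j)$, where $S_j$ is the circuit computing $\bigwedge_{i\le j}C_i(x)$ for the samplers $C_i$. Explicit independence of $X_1,\dots,X_m$ gives a partition $\pi_1\cup\dots\cup\pi_m$ of the seed coordinates. Then $S_j$ reads only $\bigcup_{i\le j}\pi_i$ and $C_{j+1}$ reads only $\pi_{j+1}$, so $Z_j$ and $X_{j+1}$ are explicitly independent. This is checkable syntactically, and the partition is an explicit object, so it can be handled in $\PV$.

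We apply \Cref{thm: multiplication} with $V_1=V_2=\zo$. Then $\hat V=\zo$ and $\|\hat V\|=1$, which gives
\[
\bigl|\bbE_\delta[Z_jX_{j+1}]-\bbE_\delta[Z_j]\cdot\bbE_\delta[X_{j+1}]\bigr|\le 6\delta+2\beta.
\]
The circuit computing $S_j(x)\cdot C_{j+1}(x)$ is provably equivalent to $S_{j+1}$. By \refmeta{global} (through the definition of expectation on $\{0,1\}$-valued variables), $\bbE_\delta[Z_jX_{j+1}]$ and $\bbE_\delta[Z_{j+1}]$ differ by at most $2\delta+\beta$.

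Set $e_j\eqdef|\bbE_\delta[Z_j]-\prod_{i\le j}\bbE_\delta[X_i]|$. Since $0\le\bbE_\delta[X_{j+1}]\le 1$ by the \refaxiom{basic}, we get
\[
e_{j+1}\le e_j\cdot\bbE_\delta[X_{j+1}]+(6\delta+2\beta)+(2\delta+\beta)\le e_j+8\delta+3\beta.
\]
This already has a constant of $8$ per step. A cleaner route is to run the whole argument at an auxiliary precision $\eta$: the per-step loss becomes $O(\eta)$. At the end we use \refmeta{precision exp} and \Cref{prop: indicator variable} to convert $\bbE_\eta$ back to $\bbE_\delta$, once for $Z_m$ and once for each $X_i$. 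Each $X_i$ enters the product with a factor of magnitude at most $1$ (up to the slack), so the per-factor conversion error $(\delta+2\eta)$ accumulates to about $\delta m$. The total is then $\le 2\delta m+O(\eta m)$, and taking $\eta=\delta/100$ gives $\le 8\delta m$ comfortably.

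The iteration over $j$ is formalized as induction on a quantifier-free $\PV(\P)$-formula, namely $e_j\le(\text{bound})\cdot j$. This is available via \Cref{thm: ind via binary search}, since each $e_j$ is $\PV(\P)$-computable from the explicit list of circuits.

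I expect the main obstacle to be the bookkeeping in the conversion from $\eta$-precision to $\delta$-precision inside the product. The bound $|\prod a_i-\prod b_i|\le\sum|a_i-b_i|$ for $a_i,b_i\in[0,1]$ must be proved in $\PV$ by induction, and the values $\bbE_\delta[X_i]$ must be clamped to lie in $[0,1]$, which holds by the \refaxiom{basic} since $\bbE_\delta[X_i]=\P_\delta((C_i)_1)$. I also need to make sure the hypothesis ``explicitly independent'' for $m$ variables is read as a single $m$-way partition, so that each prefix product is explicitly independent of the next variable.
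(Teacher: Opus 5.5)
Your argument is correct, and its skeleton matches the paper's proof. Both induct on $j$ over prefix products, apply the Multiplication Principle once per step to $X_1\cdots X_j$ and $X_{j+1}$, and use $0\le\bbE_\delta[X_{j+1}]\le 1$ to carry the previous error forward.

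The difference is in the error bookkeeping. The paper needs no auxiliary parameter. The Multiplication Principle holds for every $\beta^{-1}\in\Log$, so the paper simply instantiates $\beta:=\delta$; with $\|\hat V\|=1$ each step then costs $(2\delta+\delta)+(4\delta+\delta)=8\delta$, exactly the target rate. It also implicitly takes $X_1\cdots X_{i+1}$ to be the product variable of $X_1\cdots X_i$ and $X_{i+1}$, so no Global Consistency bridge is needed.

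Your AND-circuit definition adds $2\delta+\beta$ per step. That is why your direct recursion $e_{j+1}\le e_j+8\delta+3\beta$ overshoots. It still closes because $e_1=0$, but only with $\beta\le 8\delta/(3m)$ (say $\beta=\delta/m$). Your stated choice $\beta=\delta/100$ would not suffice once $m\ge 268$.

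Your auxiliary-precision route is the genuinely different part. You work throughout at precision $\eta$ and pay $\delta$ only in the final conversions. Those conversions use Precision Consistency of Expectation for $Z_m$ and each $X_i$, plus the $\PV$-provable bound $|\prod a_i-\prod b_i|\le\sum|a_i-b_i|$ for $a_i,b_i\in[0,1]$. This yields roughly $(m+1)\delta+O(\eta m)$, and it is this route that makes a constant choice like $\eta=\delta/100$ legitimate.

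The paper's version is shorter and avoids the telescoping lemma. Yours buys a sharper constant: essentially $\delta$ per factor instead of $8\delta$.
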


\begin{proof}
    We argue in $\APX_1$. Fix $n,m,\delta^{-1}\in\Log$. We prove by induction on $i$ such that for every $i\in[m]$, we have 
\begin{equation}
\left|\bbE_\delta[X_1X_2\dots X_i] - \prod_{j\in [i]}\bbE_\delta[X_i] \right| \le 8\delta \cdot i.\label{equ: ind hyp multiplication} 
\end{equation}
Note that this employs induction on a quantifier-free formula, which is available in $\Apx_1$. The base case $i=1$ is straightforward. Suppose that \Cref{equ: ind hyp multiplication} holds. By the \refmeta{multiplication}, 
\begin{align}
    & \left|\bbE_\delta[X_1X_2\dots X_iX_{i+1}] - \bbE_\delta[X_1X_2\dots X_i]\cdot \bbE_\delta[X_{i+1}]\right|\nonumber \\ 
    & \quad \le  3\delta\cdot \|V\| + 5\delta\cdot \|V\|^2\le 8\delta, \label{equ: apply multiplication principle}
\end{align}
and subsequently 
\begin{align*}
    & \left|\bbE_\delta[X_1X_2\dots X_iX_{i+1}] -  \prod_{j\in [i+1]}\bbE_\delta[X_i] \right| \\ 
\le~& \left|\bbE_\delta[X_1X_2\dots X_i]\cdot \bbE_\delta[X_{i+1}] - \prod_{j\in [i+1]}\bbE_\delta[X_i]  \right| + 8\delta \tag{\Cref{equ: apply multiplication principle}}\\ 
\le~&  \left|\bbE_\delta[X_1X_2\dots X_i] - \prod_{j\in [i]}\bbE_\delta[X_i] \right|\cdot |\bbE_\delta[X_{i+1}]| + 8\delta  \\ 
\le~& 8\delta\cdot i \cdot \|V\| + 8\delta \le 8\delta\cdot (i+1). \tag{Induction Hypothesis}
\end{align*}
This completes the proof. 
\end{proof}

\subsubsection{Error Reduction for One-Sided Error Statements}

The multiplication principle allows us to prove the correctness of error reduction via repetition for one-sided error algorithms. Specifically, for any circuit $C$ that accepts a $(1-\eps)$-fraction of its inputs, the circuit $C^{\lor k}(x_1,\dots,x_k)=\bigvee_{i\in[k]} C(x_i)$ accepts all but an $\eps^k$-fraction its inputs. 

This is formalized as the following theorem: 

\dummylabel{os error red}{One-sided Error Reduction Lemma}
\begin{theorem}[One-sided error reduction lemma]\label{thm:one-sided}
    Let $C^{\lor k}:\zo^{nk}\to\zo$ be the circuit defined as $C^{\lor k}(x_1,\dots,x_k)\eqdef \bigvee_{i\in[k]} C(x_i)$ for any circuit $C:\zo^n\to\zo$. The following statement is provable in $\Apx_1$. For any $n,k,\delta^{-1},\gamma^{-1},\beta^{-1}\in\Log$ and circuit $C:\zo^n\to\zo$, if $\P_\delta(\lnot C) \le \eps$ and $\gamma\ge (\delta+\beta+\eps)^k+\delta + \beta$, then $\P_{\delta}(\lnot C^{\lor k})\le \gamma$. 
\end{theorem}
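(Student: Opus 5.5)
We derive the statement from the Multiplication Principle for Bernoulli RVs, applied to the negated copies. We work in $\APX_1$ with a fresh precision $\eta^{-1}\in\Log$, to be fixed later in terms of $\beta$ and $k$.

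\textbf{Setting up the variables.} For $i\in[k]$ let $D_i(x_1,\dots,x_k)\eqdef \lnot C(x_i)$ on seed length $nk$, and let $X_i$ be its indicator random variable over $\zo$. The $X_i$ are explicitly independent, since $D_i$ reads only the block $x_i$. The circuit $\lnot C^{\lor k}$ is provably functionally equivalent to $\bigwedge_i D_i$. Hence it is equivalent to the sampler of the product variable $X_1X_2\cdots X_k$.

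\textbf{Chain of estimates.} By Global Consistency and Proposition~\ref{prop: indicator variable}, $\P_\eta(\lnot C^{\lor k})$ is within $O(\eta)$ of $\bbE_\eta[X_1\cdots X_k]$. By the Multiplication Principle for Bernoulli RVs, this is within $8\eta k$ of $\prod_i \bbE_\eta[X_i]$. Each factor is handled as follows. By the Permutational Symmetry and Global Consistency lemmas, $\P_\eta(D_i)$ is close to $\P_\eta(D')$, where $D'$ is $\lnot C$ padded with $n(k-1)$ dummy inputs. Peeling off the dummy bits with the Local Consistency Axiom and Global Consistency, in the style of the Dueling Lemma, gives $|\P_\eta(D')-\P_\eta(\lnot C)|\le O(\eta)$. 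Precision Consistency then gives $\P_\eta(\lnot C)\le \P_\delta(\lnot C)+\delta+2\eta\le \eps+\delta+2\eta$. Hence each $\bbE_\eta[X_i]\le \eps+\delta+c\eta$ for a constant $c$. Every factor lies in $[0,1]$, up to $O(\eta)$ slack that we absorb. Therefore the product is at most $(\eps+\delta+c\eta)^k$; the induction over factors uses monotonicity of multiplication and only quantifier-free induction. If we choose $c\eta\le\beta$, this is at most $(\eps+\delta+\beta)^k$. The monotonicity of $x\mapsto x^k$ on nonnegative rationals is provable in $\PV$, and the power is computed by iterated multiplication.

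\textbf{Returning to precision $\delta$.} Combining the estimates gives $\P_\eta(\lnot C^{\lor k})\le (\eps+\delta+\beta)^k+O(k\eta)$. Precision Consistency then yields
\[
\P_\delta(\lnot C^{\lor k})\le (\eps+\delta+\beta)^k+\delta+O(k\eta).
\]
We take $\eta\eqdef\beta/(100(k+1))$, so the error term is at most $\beta$ and the right-hand side is at most $\gamma$.

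\textbf{Main obstacle.} The delicate step is bookkeeping the additive losses so that they stay $O(k\eta)$ rather than compounding multiplicatively with the product. The fix is to keep everything at the fine precision $\eta$ and convert to $\delta$ only once, at the very end. A second point that needs care is that the $\delta$-approximations in the hypothesis must enter only through the base $\eps+\delta+\beta$. This works because the hypothesis is invoked once per factor, and the slack $\beta$ is split to absorb the $c\eta$ terms.
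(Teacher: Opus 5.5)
Your proof is correct and follows essentially the paper's argument: the explicitly independent indicators $X_i=\lnot C(x_i)$, the Multiplication Principle for Bernoulli RVs at a fine precision $\eta=\Theta(\beta/k)$, a per-factor bound $\bbE_\eta[X_i]\le \eps+\delta+O(\eta)$ absorbed into the base of the power, and a single Precision Consistency step at the end. The only difference is cosmetic. The paper gets the per-factor bound, and the link between $\P_\eta(\lnot C^{\lor k})$ and $\bbE_\eta[X_1\cdots X_k]$, directly from the Averaging Argument for Expectation; you instead re-derive the needed special case by hand via Permutational Symmetry, Global Consistency and bit-peeling. That bit-peeling step needs the usual finer-precision refinement to give $O(\eta)$ rather than $O(nk\eta)$. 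Also, $\bbE_\eta[X_i]\in[0,1]$ holds exactly by the Basic Axiom, so no slack is needed for the monotonicity step.
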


\begin{proof}
We argue in $\Apx_1$. Fix $n,\delta^{-1},\gamma^{-1},\beta^{-1}\in\Log$ and a circuit $C:\zo^n\to\zo$. Let $V\eqdef \{0,1\}$, and let $\eta^{-1}\in\Log$ be a parameter to be determined later. Let $X_i$ be the random variable that takes an $nk$-bit seed $(x_1,\dots,x_k)\in(\zo^n)^k$ and outputs $\lnot C(x_i)$ for $i\in[k]$. It is clear that $X_1,\dots,X_m$ are explicit independent random variables. Let $Y$ be the random variable that takes an $nk$-bit seed $(x_1,\dots,x_k)\in(\zo^n)^k$ and outputs $\lnot C^{\lor k}(x_1,\dots,x_k)$. 

\paragraph{RVs and approximate counting.} It is clear that $Y$ is the indicator random variable for the circuit $\lnot C^{\lor k}$, and thus by \Cref{prop: indicator variable}, $|\bbE_\eta[Y]-\P_\eta(\lnot C^{\lor k})|\le 3\eta$. 

We will prove that for every $i\in[k]$, $|\bbE_\eta[X_i]-\P_\eta(\lnot C)|\le 6\eta$. Fix any $i\in[k]$ and let $\pi_i\cup\overline{\pi_i}=[nk]$, where $\pi_i$ denotes the set of indices corresponding to the $i$-th $n$-bit block that $X_i$ reads, and $\overline{\pi_i}$ denotes the other $(k-1)n$ indices. For every assignment $\rho$ to $\overline{\pi_i}$, we can see that $X_i|_{\rho}$ is the indicator random variable for $C$, and thus by \Cref{prop: indicator variable},
\[
|\bbE_\eta[X_i|_\rho] - \P_\eta(\lnot C)|\le 3\eta. 
\]
Subsequently, we can apply the \refmeta{avg on exp} to prove that 
\begin{equation}
|\bbE_\eta[X_i] - \P_\eta(\lnot C)|\le 6\eta.
\end{equation}

Note that by the assumption, we have that $\P_\delta(\lnot C)\le \eps$. By the \refaxiom{precision} for approximate counting, we have $\P_\eta(C) \le \delta + 2\eta+\eps$, and thus 
\begin{equation}
\bbE_\eta[X_i]\le \P_\eta(C) + 6\eta \le \delta +8\eta +\eps\label{equ: xi upper bound prC}
\end{equation}
for every $i\in[k]$. 

\paragraph{Wrapping things up.} We first prove that $|\bbE_\eta[Y]-\bbE_\eta[X_1X_2\dots X_k]|\le 6\eta$. Note that $Y$ and $X_1X_2\dots X_k$ are both random variables taking $nk$-bit seeds, and for every assignment $\rho$ to the seeds, we have $\bbE_\eta[Y|_\rho] = \bbE_\eta[X_1 X_2\dots X_k|_\rho]$. By the \refmeta{avg on exp}, we have 
\begin{equation}
|\bbE_\eta[Y]-\bbE_\eta[X_1X_2\dots X_k]|\le 3\eta\cdot \|V\|\cdot 2\le 6\eta. \label{equ: Y vs products}
\end{equation}

Additionally, by the \refaxiom{precision} of approximate counting, we have 
\[ 
\P_\delta(\lnot C^{\lor k})\le \P_\eta(\lnot C^{\lor k}) + \delta + 2\eta.
\]

Let $\eta \eqdef \beta / (20k)$. Recall that it suffices to prove $\bbE_\eta[Y]\le \gamma$. It follows that
\begin{align*} 
\P_\delta(\lnot C^{\lor k}) &\le \P_\eta(\lnot C^{\lor k}) + \delta + 2\eta \tag{\refaxiom{precision}}\\ 
&\le \bbE_\eta[Y] + \delta + 5\eta  \\ 
&\le \bbE_\eta[X_1X_2\dots X_k] + \delta + 12\eta \tag{\Cref{equ: Y vs products}} \\ 
&\le \prod_{i\in[k]}\bbE_\eta[X_i] + \delta + 8\eta k + 12\eta \tag{\refmeta{multiplication BRVs}}\\ 
&\le \prod_{i\in[k]}(\delta+8\eta+\eps) + \delta + 8\eta k + 12\eta \tag{\Cref{equ: xi upper bound prC}} \\ 
&\le (\delta+\beta+\eps)^k + \delta + \beta \le \gamma,
\end{align*}
which completes the proof. 
\end{proof}

\subsubsection{Chernoff Bound for \texorpdfstring{$O(\log n)$}{O(log n)} Random Variables}

We consider a form of the Chernoff bound where the number of random variables $m\in\Log\Log$. In a nutshell, we formalize a combinatorial proof using Binomial coefficients due to Chv\'{a}tal (see, e.g., \cite[Section 3.2]{Mulzer18}).

\begin{definition}[Binomial coefficient, in $\PV_1$]
Let $n,m\in\Log$ and $m\le n$. The binomial coefficient $\binom{n}{m}$ is defined recursively as: 
\begin{equation}
\binom{n}{0} \eqdef 1,\; \binom{n}{m} \eqdef \binom{n-1}{m-1} + \binom{n-1}{m}.\label{equ: binom def} 
\end{equation}
For $m > n$, we let $\binom{n}{m} \eqdef 0$.
Note that the function computing $(1^n,1^m)\mapsto \binom{n}{m}$ can be defined by a $\PV$ function that runs in $\poly(n)$ time, such that \Cref{equ: binom def} is provable in $\PV_1$. 
\end{definition}

\dummylabel{binomial}{Binomial Theorem}
\begin{lemma}[Binomial theorem]
The following statement is provable in $\PV_1$. For every $n\in\Log$ and $x,y\in\bbQ\setminus\{0\}$, 
\[
(x+y)^n = \sum_{i=0}^n\binom{n}{i}x^iy^{n-i}.  
\]
\end{lemma}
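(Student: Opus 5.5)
We prove the identity by induction on $n$, as in the textbook argument. Since $n\in\Log$, the quantities involved are feasible: $\binom{n}{i}$ is computed by the $\PV$ function from the definition, and $x^iy^{n-i}$ and the sum over $i\le n$ are polynomial-time computable rationals. Hence the statement
\[
\varphi(n)\;:\equiv\;(x+y)^n=\sum_{i=0}^n\binom{n}{i}x^iy^{n-i}
\]
is a quantifier-free $\PV$ formula with parameters $x,y$. The idea is to apply the induction principle for quantifier-free formulas (the $\PV_1$ analogue of \Cref{thm: ind via binary search}) to $\varphi(n)$, with $n$ given in unary, i.e.\ on the length parameter.

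\textbf{Base case.} For $n=0$ both sides equal $1$. This uses $\binom{0}{0}=1$ and $x^0=y^0=1$.

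\textbf{Inductive step.} Assume $\varphi(n)$. Then
\[
(x+y)^{n+1}=(x+y)\sum_{i=0}^n\binom{n}{i}x^iy^{n-i}=\sum_{i=0}^n\binom{n}{i}x^{i+1}y^{n-i}+\sum_{i=0}^n\binom{n}{i}x^iy^{n+1-i}.
\]
Reindex the first sum by $j=i+1$ and merge the two sums termwise. Pascal's rule \eqref{equ: binom def}, together with the boundary conventions $\binom{n}{-1}=0$ and $\binom{n}{n+1}=0$, gives $\binom{n}{j-1}+\binom{n}{j}=\binom{n+1}{j}$, which is exactly the right-hand side of $\varphi(n+1)$.

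Each algebraic manipulation is a fact about explicit finite sums of rationals, proved in $\PV_1$ by its own induction on the number of summands:
\begin{itemize}
\item distributivity of multiplication over an explicit sum;
\item the shift-of-index identity $\sum_{i=0}^{n}a_{i+1}=\sum_{j=1}^{n+1}a_j$;
\item splitting and merging of sums of explicit lists.
\end{itemize}
These are routine formalizations of facts about iterated addition of feasibly many rationals, as in \cite[Chapter 4]{DBLP:journals/eccc/Li25}.

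\textbf{Main obstacle.} The delicate point is not the algebra but making $\varphi$ an honest open formula. The sum must be expressed by a $\PV$ function $\mathsf{Sum}(1^n,x,y)$ defined by limited recursion. The sum-manipulation lemmas must then be stated for that function, with the summands given as an explicit list. Once these are in place, the induction goes through verbatim. The hypothesis $x,y\neq 0$ only serves to avoid conventions about $0^0$ and may be used harmlessly.
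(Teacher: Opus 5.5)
Your proposal is correct and follows the same route as the paper: induction on $n\in\Log$ with base case $n=0$ (both sides equal $1$), and the inductive step via the recursion for binomial coefficients. The paper leaves implicit the details you spell out, namely expressing the sum by a $\PV$ function and the routine finite-sum manipulations needed to make the induction formula open.
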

\begin{proof}[Proof Sketch]
Fix $n\in\Log$ and $x,y\in\bbQ$. We prove by induction on $n\in\Log$. In the base case, the equation trivially holds for $n=0$ as both sides are $1$. The induction step follows from \Cref{equ: binom def}.  
\end{proof}

Now we are ready to prove the Chernoff bound when $n\in\Log\Log$. 

\dummylabel{loglog Chernoff}{Chernoff Bound (LogLog Form)}
\begin{theorem}[Chernoff bound I, $\Log\Log$ form]
The following sentence is provable in $\APX_1$. Let $V=\zo$, $n,\delta^{-1},\beta^{-1}\in\Log$, and $m\in\Log\Log$. Let $X_1,\dots,X_m$ be a sequence of explicit i.i.d.~random variables over $V$ defined by a tuple $(V,n,C)$ and taking an $nm$-bit seed $(z_1,\dots,z_m)\in(\zo^{n})^m$. Let $p\eqdef \bbE_\delta[X_i]\in\bbQ$ for any $i\in[m]$, and $Y_{\ge k}$ be the indicator variable of $X_1+\dots+X_m\ge k$ that takes an $nm$-bit seed. Then for $t\in\bbQ$, $0\le t\le 1$, and $k\ge (1+t)p m$, 
\[
\bbE_\delta[Y_{\ge k}]\le \left(e^{-t^2p/4} + (4\delta+\beta)\cdot 2^{-pt(1+t)}\right)^m + \delta + \beta.
\]
\end{theorem}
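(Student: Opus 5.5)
The plan is to formalize Chvátal's binomial-coefficient proof. The quantity $\bbE[Y_{\ge k}]$ is bounded by the expectation of the random variable
\[
Z\eqdef \sum_{S\subseteq[m],\,|S|=k}\prod_{i\in S}X_i,
\]
divided by $\binom{k}{k}$, or more simply by $\sum_{|S|=k}\bbE[\prod_{i\in S}X_i]$. The key point is that $m\in\Log\Log$. Consequently there are only $2^m\le\poly$ subsets $S$, and all of these objects are explicit.

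\textbf{Step 1 (pointwise domination).} For every seed $z$ we have $Y_{\ge k}(z)\le \sum_{|S|=k}\prod_{i\in S}X_i(z)$, since if at least $k$ of the $X_i$ equal $1$, some $k$-subset has all its $X_i$ equal to $1$. This is checkable in $\PV_1$ because it involves an explicit enumeration over $2^m$ subsets. The union-bound style argument via the \refmeta{avg on exp}, exactly as in the proof of \Cref{thm: union bound} with indicator variables $W_S\eqdef\prod_{i\in S}X_i$, then gives
\[
\bbE_\eta[Y_{\ge k}]\le \sum_{|S|=k}\bbE_\eta[W_S]+(2\eta+\eta)\binom{m}{k}.
\]

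\textbf{Step 2 (products).} Each $W_S$ is a product of $k$ explicitly independent Bernoulli variables. By the \refmeta{multiplication BRVs}, $|\bbE_\eta[W_S]-\prod_{i\in S}\bbE_\eta[X_i]|\le 8\eta k$. Each $\bbE_\eta[X_i]$ is within $O(\eta)$ of $p$, as in the proof of \Cref{thm:one-sided}: fix the other blocks via the \refmeta{avg on exp}, and then use \Cref{prop: precision consistency exp} to pass from $\eta$ to $\delta$. Since $|\bbE_\eta[X_i]-p|\le \delta+O(\eta)$, we obtain
\[
\bbE_\eta[Y_{\ge k}]\le \binom{m}{k}(p+\delta+O(\eta))^k+O(\eta)\,2^m m.
\]

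\textbf{Step 3 (arithmetic in $\PV_1$).} Write $q\eqdef p+\delta+O(\eta)$. Chvátal's bound is obtained from the \refmeta{binomial}: for $\lambda\ge1$,
\[
\binom{m}{k}q^k\le \lambda^{-k}(1-q+\lambda q)^m.
\]
This holds because $\binom mk q^k\lambda^k(1-q)^{m-k}$ is one term of the binomial expansion of $(1-q+\lambda q)^m$; the $(1-q)^{m-k}$ factor requires some care, so I would instead use the variant $\binom mk (\lambda q)^k\le(1+\lambda q)^m$. Take $\lambda=1+t$ (or $2^{t}$-type choices, which is presumably where the $2^{-pt(1+t)}$ factor in the statement comes from) and $k\ge(1+t)pm$. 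Then apply standard elementary inequalities $1+x\le e^x$ and $(1+t)\ln(1+t)-t\ge t^2/4$ for $0\le t\le1$; by the remark on elementary functions these are provable in $\PV_1$ with precision slack. This yields $(e^{-t^2p/4})^m$ for the main term. The perturbation $q=p+\delta'$ contributes a term $(1+t)\delta'$ inside the $m$-th power, divided by $\lambda^{k/m}\ge\lambda^{(1+t)p}$. This is the source of the $(4\delta+\beta)\cdot2^{-pt(1+t)}$ summand inside the power. Finally, choose $\eta\le\beta/(100\cdot 2^m m)$, which is feasible since $2^m\in\Log$, and apply the \refaxiom{precision} and \Cref{prop: precision consistency exp} to return to precision $\delta$, accounting for the additive $\delta+\beta$.

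\textbf{Main obstacle.} I expect the main difficulty to be Step 3's bookkeeping: making the perturbation from $p$ to $p+\delta'$ land inside the $m$-th power with exactly the stated form. Getting the precise shape $e^{-t^2p/4}+(4\delta+\beta)2^{-pt(1+t)}$ requires choosing $\lambda$ appropriately and bounding $\lambda^{-(1+t)p}$. The probabilistic steps are routine applications of earlier results, enabled by $m\in\Log\Log$.
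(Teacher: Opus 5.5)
There is a genuine gap. It comes from your choice of dominating variable in Step 1, and it makes Step 3 impossible, not just delicate.

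The bound $Y_{\ge k}\le\sum_{|S|=k}\prod_{i\in S}X_i$ discards the information that the coordinates outside $S$ are $0$. The resulting quantity $\binom{m}{k}q^k$ is much larger than the target when $0\le t\le 1$. For example, with $p=1/4$, $t=1$, $k=m/2$ you get $\binom{m}{m/2}4^{-m/2}=\binom{m}{m/2}2^{-m}=\Theta(m^{-1/2})$, while the claimed bound is roughly $e^{-m/16}$. For small $p$, $\binom mk p^k$ even exceeds $1$. Since $m\in\Log\Log$ is unbounded, no arithmetic in $\PV_1$ can turn your intermediate bound into the statement.

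The same issue explains your two attempted inequalities in Step 3:
\begin{itemize}
\item $\binom mk q^k\le\lambda^{-k}(1-q+\lambda q)^m$ is false in general. It holds only with the extra factor $(1-q)^{m-k}$ on the left.
\item $\binom mk(\lambda q)^k\le(1+\lambda q)^m$ is true but does not help. With $1+x\le e^x$ and $\lambda=1+t$ it gives exponent $pm\bigl[(1+t)-(1+t)\ln(1+t)\bigr]$, which is positive for $t\le 1$. You need $pm\bigl[t-(1+t)\ln(1+t)\bigr]\le -pmt^2/4$, and the missing $-qm$ is exactly the lost factor $(1-q)^{m-k}$.
\end{itemize}

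The missing idea is to use the exact decomposition by the set of ones. Pointwise, $Y_{\ge k}=\sum_{|\alpha|\ge k}Y_\alpha\overline{Y}_\alpha$, where $Y_\alpha=\prod_{i\in\alpha}X_i$ and $\overline{Y}_\alpha=\prod_{i\notin\alpha}(1-X_i)$. All $2^m$ sets can be enumerated explicitly because $m\in\Log\Log$. The paper then proceeds as follows.
\begin{enumerate}
\item The multiplication principle for Bernoulli variables, with precision consistency and complementation for $1-X_i$, gives $\bbE_\eta[Y_\alpha]\le(p+\delta+O(\eta m))^{|\alpha|}+O(\eta m)$ and $\bbE_\eta[\overline{Y}_\alpha]\le(1-p+\delta+O(\eta m))^{m-|\alpha|}+O(\eta m)$.
\item $Y_\alpha$ and $\overline{Y}_\alpha$ read disjoint blocks, so the multiplication principle handles their product.
\item The union bound and the averaging argument then give $\bbE_\eta[Y_{\ge k}]\le\sum_{j\ge k}\binom mj(p+\eps)^j(1-p+\eps)^{m-j}+O(\eta m2^m)$, with $\eps=\delta+O(\eta m)$.
\end{enumerate}
Only at this point does Chvátal's trick apply. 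Multiply the $j$-th term by $\tau^{j-k}\ge 1$ with $\tau=e^t$, add the nonnegative terms with $j<k$, and use the binomial theorem to obtain $\tau^{-k}\bigl(1+p(\tau-1)+\eps(\tau+1)\bigr)^m$. Then $e^t\le 1+t+3t^2/4$ and $1+x\le e^x$ give the main term $e^{-t^2p/4}$. The bounds $\tau+1<4$ and $\tau^{-k/m}\le e^{-pt(1+t)}\le 2^{-pt(1+t)}$ give the perturbation term $(4\delta+\beta)2^{-pt(1+t)}$, once $\eta$ is chosen much smaller than $\beta/(m2^m)$. So the factor $2^{-pt(1+t)}$ is just a weakening of $e^{-pt(1+t)}$; it does not come from a $2^t$-type choice of $\lambda$.
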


\begin{proof}
    We argue in $\APX_1$. Fix $V=\zo$, $n,\delta^{-1},\beta^{-1}\in\Log$, $m\in\Log\Log$, and $X_1,\dots,X_m$. Let $p\eqdef \bbE_\delta[X_i]$, $t,k$, and $Y_{\ge k}$ be defined as above. Let $\eta^{-1}\in\Log$ be a parameter to be determined later. By the \refmeta{precision exp}, 
    \begin{equation}
    |\bbE_\eta[X_i] - p| = |\bbE_\delta[X_i] - \bbE_\eta[X_i]| \le \delta +2\eta. 
    \end{equation}

    \paragraph{Probability of each subset.} Let $\alpha\subseteq[m]$ encode a subset of variables. We define the random variable $Y_\alpha$ over $\zo$ as follows: Given any seed $(z_1,\dots,z_m)\in(\zo^n)^m$, $Y_\alpha = 1$ if and only if for every $i\in\alpha$, $X_i=1$. In the first step, we show that 
    \begin{equation}
    \bbE_\eta[Y_\alpha] \le (p+\delta + O(\eta m))^{|\alpha|} 
    \end{equation}
    for each $\alpha\subseteq[m]$. Note that as $m\in\Log\Log$, we can define all random variables $Y_\alpha$ for $\alpha\subseteq[m]$ by an explicit list of circuits. This will be useful later in the proof.

 Fix any $\alpha=\{i_1,\dots,i_t\}\subseteq[m]$. As $X_1,\dots,X_m$ are explicitly independent random variables, by the  \refmeta{multiplication BRVs}, we have that 
    \[
    \left|\bbE_\eta\left[\prod_{j\in[t]} X_{i_j}\right] - \prod_{j\in[t]}\bbE_\eta[X_{i_j}]\right| \le 8\eta m. 
    \]

    Subsequently, we have 
    \[
    \bbE_\eta\left[\prod_{j\in[t]} X_{i_j}\right] \le 8\eta m + (p+\delta+2\eta)^{|\alpha|}.  
    \]
    Note that for every assignment $\rho$ to the random seed, $\bbE_\eta[Y_\alpha|_\rho] = \bbE_\eta[\prod_{j\in[t]} X_{i_j}|_\rho]$. By the \refmeta{avg on exp}, we have 
    \begin{equation}
    \bbE_\eta[Y_\alpha] \le \bbE_\eta\left[\prod_{j\in[t]} X_{i_j}\right] + 6\eta \le 14\eta m+(p+\delta+2\eta)^{|\alpha|}. \label{equ: ub Y alpha}
    \end{equation}

    Let $\overline{Y}_\alpha$ be the random variable defined over $\zo$ as follows: Given any seed $(z_1,\dots,z_m)\in(\zo^n)^m$, $Y_\alpha=1$ if and only if for every $i\in[m]\setminus \alpha$, $X_i=0$. Similar to the proof above, we have that 
    \begin{equation}
    \bbE_\eta\left[\overline Y_\alpha\right] \le (1-p+\delta+10\eta m)^{m-|\alpha|} + 20\eta m.  \label{equ: ub Y alpha not}
    \end{equation}

    \paragraph{Combining all subsets.} Let $\hat Y_{\ge k}$ be the following random variable over $\zo$: 
    \[
    \hat Y_{\ge k} \eqdef \sum_{\alpha\subseteq[m], |\alpha|\ge k}Y_\alpha \overline Y_\alpha. 
    \]
    By applying the \refmeta{multiplication} and \refmeta{linearity}, we have 
    \begin{align*}
    &\quad \quad \bbE_\eta[\hat Y_{\ge k}] \\ 
    & \le 3\eta \cdot 2^m + \sum_{\alpha\subseteq[m],|\alpha|\ge k}\bbE_\eta[Y_{\alpha}\overline Y_\alpha] \tag{\refmeta{union bound}} \\ 
    & \le 3\eta \cdot 2^m + 8\eta\cdot 2^m + \sum_{\alpha\subseteq[m],|\alpha|\ge k}\bbE_\eta[Y_{\alpha}]\cdot\bbE[\overline Y_\alpha]    \tag{\refmeta{multiplication}}\\
    & \le 11\eta\cdot 2^m + \sum_{\alpha\subseteq[m],|\alpha|\ge k}\left((p+\delta+2\eta m)^{|\alpha|} + 14\eta m\right) \left((1-p+\delta+10\eta m)^{m-|\alpha|}+20\eta m\right) \tag{\Cref{equ: ub Y alpha,equ: ub Y alpha not}}\\
    & \le 90\eta m \cdot 2^m + \sum_{j\ge k}\binom{m}{j} (p+\delta+2\eta m)^{j} (1-p+\delta+10\eta m)^{m-j}. 
    \end{align*}
    Note that the binomial number in the last line is efficiently computable (even using a brute-force counting algorithm) as $m,j\in\Log\Log$. 
    
    Recall that $Y_{\ge k}$ is the random variable indicating that $X_1+\dots+X_m\ge k$. Note that for every assignment $\rho$ to the random seed, we have that $\bbE_\eta[Y_{\ge k}|_\rho] \le \bbE_\eta[\hat Y_{\ge k}|_\rho]$. Therefore, by the \refmeta{avg on exp}, we have
    \begin{equation}
    \bbE_\eta[Y_{\ge k}] \le \bbE_\eta[\hat Y_{\ge k}] + 6\eta \le \sum_{j\ge k}\binom{m}{j} (p+\delta+2\eta m)^{j} (1-p+\delta+10\eta m)^{m-j} + 100\eta m \cdot 2^m.\label{equ: Y ge k ub} 
    \end{equation}

    \paragraph{Binomial coefficient inequalities.} It remains to prove an upper bound for \Cref{equ: Y ge k ub}. Note that as $m\in\Log\Log$, we can easy formalize the standard proof in \cite[Section 3.2]{Mulzer18}, where all equalities about binomial coefficients can be proved in $\PV$. 
    
    Let $\tau=e^\lambda \ge 1$ be a parameter to be determined later. In more detail, we can perform the following calculation for any $\eps^{-1}\in\Log$:
    \begin{align*}
        & \sum_{j\ge k}\binom{m}{j} (p+\eps)^{j}(1-p+\eps)^{m-j}  \\ 
    \le & \sum_{j\ge k}\binom{m}{j} (p+\eps)^{j}(1-p+\eps)^{m-j} \tau^{j-k} + \sum_{0\le j < k} (p+\eps)^{j}(1-p+\eps)^{m-j} \tau^{j-k} \\ 
    =   & \tau^{-k} \sum_{j=0}^m \binom{m}{j} \left((p+\eps)\tau\right)^j (1-p+\eps)^{m-j} \\ 
    =   & \tau^{-k} \big(1+p(\tau-1)+\eps(\tau+1)\big)^m \tag{\refmeta{binomial}} \\ 
    \le & \left(\frac{1+p(e^\lambda -1)+\eps(e^\lambda+1)}{e^{\lambda p (1+t)}}\right)^m. 
    \end{align*}
    Note that for $\lambda \in (0,1)$, we have $1+\lambda \le e^\lambda \le 1+\lambda + 3\lambda^2/4$, and this is provable in $\PV$. Then 
    \begin{align*}
        \ln\left(1 + p(e^\lambda-1)\right) \le p(e^\lambda-1) \le p \lambda +3p\lambda^2/4. 
    \end{align*}
    This implies that $1 + p(e^\lambda-1) \le \exp(p \lambda +3p\lambda^2/4)$. We set $\eps \eqdef \delta + 30\eta m$, $\eta \le \beta/(120m)$, then $\eps \le \delta + \beta / 4$, and $\lambda\eqdef t$. Then we have 
    \[
    \left(\frac{1+p(e^\lambda -1)+\eps(e^\lambda+1)}{e^{\lambda p (1+t)}}\right)^m \le \left(e^{-t^2p/4} + 4(\delta+\beta/4)\cdot e^{-pt(1+t)}\right)^m. 
    \]
    Finally, we can obtain that 
    \begin{align*}
    \bbE_\delta[Y_{\ge k}] & \le \bbE_\eta[Y_{\ge k}] + \delta + 2\eta \tag{\refmeta{precision exp}} \\ 
    & \le \left(e^{-t^2p/4} + 4(\delta+\beta/4)\cdot e^{-pt(1+t)}\right)^m + 200\eta m \cdot 2^m + \delta + 2\eta \\ 
    & \le \left(e^{-t^2p/4} + (4\delta+\beta)\cdot e^{-pt(1+t)}\right)^m + \delta + \beta, 
    \end{align*}
    where the last inequality follows if we set $\eta \eqdef \beta / (1000 m\cdot 2^m)$. 
\end{proof}

Using essentially the same proof with $\tau = e^{-\lambda} < 1$, we can obtain a Chernoff bound for the other side of the tail probability: 

\begin{theorem}[Chernoff bound II, $\Log\Log$ form]
The following sentence is provable in $\APX_1$. Let $V=\zo$, $n,\delta^{-1},\beta^{-1}\in\Log$, and $m\in\Log\Log$. Let $X_1,\dots,X_m$ be a sequence of explicit i.i.d.~random variables over $V$ that takes an $nm$-bit seed $(z_1,\dots,z_m)\in(\zo^{n})^m$. Let $p\eqdef \bbE_\delta[X_i]\in\bbQ$, and $Y_{\le k}$ be the indicator variable of $X_1+\dots+X_m\le k$ that takes an $nm$-bit seed. Then for $t\in\bbQ$, $0\le t\le 1$, and $k\le (1-t)p m$, 
\[
\bbE_\delta[Y_{\le k}]\le \left(e^{-t^2p/4} + (4\delta+\beta)\cdot 2^{-pt(1-t)}\right)^m + \delta + \beta.
\]
\end{theorem}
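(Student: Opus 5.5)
The plan is to mirror the proof of Chernoff bound I, with the roles of successes and failures exchanged, or equivalently with $\tau=e^{-\lambda}<1$. Fix $\eta^{-1}\in\Log$ to be chosen later. By the \refmeta{precision exp}, $|\bbE_\eta[X_i]-p|\le\delta+2\eta$.

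The subset-level estimates carry over verbatim. Since $m\in\Log\Log$, we can list all $\alpha\subseteq[m]$ explicitly and define $Y_\alpha$ (all $X_i=1$ for $i\in\alpha$) and $\overline Y_\alpha$ (all $X_i=0$ for $i\notin\alpha$). The \refmeta{multiplication BRVs} together with the \refmeta{avg on exp} give the bounds \eqref{equ: ub Y alpha} and \eqref{equ: ub Y alpha not}.

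Next, set $\hat Y_{\le k}\eqdef\sum_{|\alpha|\le k}Y_\alpha\overline Y_\alpha$. For every full seed $\rho$ we have $Y_{\le k}|_\rho\le\hat Y_{\le k}|_\rho$, because exactly one $\alpha$, namely the set of indices with $X_i=1$, contributes. The union bound, the \refmeta{multiplication}, and the \refmeta{avg on exp} then give
\[
\bbE_\eta[Y_{\le k}]\le\sum_{j\le k}\binom{m}{j}(p+\eps)^j(1-p+\eps)^{m-j}+O(\eta m2^m),
\]
with $\eps=\delta+O(\eta m)$.

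The only new part is the binomial tail estimate, which is a finite computation in $\PV$ since $m\in\Log\Log$. Take $\tau=e^{-\lambda}$ with $\lambda=t$. For $j\le k$ we have $\tau^{j-k}\ge1$, and all other terms are nonnegative, so we may insert $\tau^{j-k}$ and extend the sum to all $j$. The \refmeta{binomial} then bounds the sum by
\[
\tau^{-k}\bigl(1-p(1-e^{-\lambda})+\eps(1+e^{-\lambda})\bigr)^m\le\left(\frac{1-p(1-e^{-\lambda})+2\eps}{e^{-\lambda p(1-t)}}\right)^m,
\]
using $k\le(1-t)pm$.

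To finish, we need $1-p(1-e^{-\lambda})\le\exp(-p\lambda+p\lambda^2/2)$, via $e^{-\lambda}\le1-\lambda+\lambda^2/2$ and $1+x\le e^x$, both provable in $\PV$ with precision slack. The main term is then at most $\exp(-p t+pt^2/2+pt(1-t))=e^{-pt^2/2}\le e^{-t^2p/4}$. The error term is $2\eps\, e^{\lambda p(1-t)}\le 2\eps\, e^{pt(1-t)}$; the stated form uses $2^{-pt(1-t)}$, so this is where the constants need care.

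This error term is the main obstacle: the dependence must be matched to the stated $(4\delta+\beta)\cdot2^{-pt(1-t)}$. I would first re-derive Theorem I's analogous step and follow the same conventions, absorbing constants into $\beta$ by choosing $\eps\le\delta+\beta/4$. If the exponent sign really differs, I would use $e^{pt(1-t)}\le e$ and adjust the constant, or note that the bound as stated is treated loosely in the same way as in Theorem I.

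Finally, pass from $\eta$ to $\delta$ via the \refmeta{precision exp}, taking $\eta\eqdef\beta/(1000m2^m)$ so that all $O(\eta m2^m)$ losses fall below $\beta$.
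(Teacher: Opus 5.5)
Your route is the paper's: its entire proof of this theorem is the remark that one reruns Chernoff bound I with $\tau=e^{-\lambda}<1$. Your reduction to $\sum_{j\le k}\binom{m}{j}(p+\eps)^j(1-p+\eps)^{m-j}$, the insertion of $\tau^{j-k}\ge 1$, and the main-term bound $e^{-pt^2/2}\le e^{-t^2p/4}$ are all correct.

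The one step you leave open is the error term. It closes without changing any constant, and neither of your fallbacks is needed. The first fallback, $e^{pt(1-t)}\le e$, would in fact be too weak: it yields $2e\,\eps$, which already exceeds $4\eps=4\delta+\beta\ge(4\delta+\beta)\cdot 2^{-pt(1-t)}$ when $\eps=\delta+\beta/4$.

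The missing observation is that $p=\bbE_\delta[X_i]$ is a single value of $\P_\delta$ (the $v=0$ term vanishes), so $p\in[0,1]$ by the Basic Axiom. Together with $0\le t\le 1$ this gives $x\eqdef pt(1-t)\le 1/4$, hence $(2e)^{x}\le(2e)^{1/4}<2$. Choosing $\eps\le\delta+\beta/4$ as in Theorem I, your error term therefore satisfies
\[
\eps\,(1+e^{-t})\,e^{x}\;\le\;2\eps\, e^{x}\;<\;4\eps\, 2^{-x}\;\le\;(4\delta+\beta)\cdot 2^{-pt(1-t)}.
\]
The main term $e^{x}\bigl(1-p(1-e^{-t})\bigr)$ is nonnegative because $p\le 1$. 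So raising the sum of the two bounds to the $m$-th power gives exactly the stated inequality.

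The sign flip relative to Theorem I is therefore harmless. In the lower-tail case, $2^{-pt(1-t)}\in[2^{-1/4},1]$ is not a genuine decay factor, just a bounded constant that the slack between your factor $2$ and the stated $4$ absorbs.
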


\section{Theoretical Computer Science in \texorpdfstring{$\Apx_1$}{APX1}}\label{sec:formalizations}

In this section, we formalize in $\Apx_1$ several fundamental results from algorithms, complexity theory, and related areas. 

\subsection{Yao's Distinguisher-to-Predictor Transformation}\label{sec:Yao_transformation}

\begin{theorem}[Yao's transformation]\label{thm: Yao D2P}
$\Apx_1$ proves the following statement. Let $n,m,\delta^{-1},\beta^{-1}\in\Log$, $G:\zo^m\to\zo^n$ be a multi-output circuit, and $C\in B_n$ be a circuit such that 
\begin{equation}
\left|\P_\delta(C\circ G)-\P_\delta(C)\right|\ge 2\delta+\eps\label{equ: Yao condition}
\end{equation}
for some $\eps\in(0,1)\cap\bbQ$, where $C\circ G$ is the $m$-input circuit defined as $(C\circ G)(u)\eqdef C(G(u))$. 

Then there is an index $i\in[n]$ and a circuit $P:\zo^{i-1}\to\zo$ such that the following holds: Let $T(u)\in B_{m}$ be the circuit such that $T(u)=1$ if $P(G(u)_{< i}) = G(u)_{i}$ (i.e., $P$ successfully predicts the $i$-th bit of $G(u)$), then 
\[
\left|\P_\delta(T)-\frac{1}{2}\right|\ge \frac{\eps}{4n} - (\delta + \beta). 
\]
\end{theorem}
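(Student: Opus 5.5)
We formalize the standard hybrid argument. Work in $\Apx_1$ with an auxiliary precision $\eta^{-1}\in\Log$, fixed at the end to be a small fraction of $\beta/n$ (say $\eta\eqdef\beta/(100n)$). For $j\in\{0,1,\dots,n\}$ define the hybrid circuit $H_j:\zo^{m}\times\zo^{n}\to\zo$ by $H_j(u,r)\eqdef C(G(u)_{\le j}\circ r_{>j})$. It reads $u$ and the last $n-j$ bits of a uniform $r$; the seed is padded to a common length $m+n$ so that all hybrids are random variables on the same seed. By the \refmeta{global} combined with a seed-padding argument (the \refmeta{avg on exp} applied to a variable that ignores part of its seed, as in the proof of \Cref{thm:one-sided}), $\P_\eta(H_n)$ is within $O(\eta)$ of $\P_\eta(C\circ G)$ and $\P_\eta(H_0)$ is within $O(\eta)$ of $\P_\eta(C)$. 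Together with the \refaxiom{precision}, hypothesis \eqref{equ: Yao condition} then gives $|\P_\eta(H_n)-\P_\eta(H_0)|\ge \eps-O(\eta)$.

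By the triangle inequality over $n$ terms, which is feasible by polynomial summation of rationals in $\PV$, there is an $i\in[n]$, found by explicit search using the oracle, with $|\P_\eta(H_i)-\P_\eta(H_{i-1})|\ge \eps/n-O(\eta)$. Fix such $i$ and a sign $s\in\{\pm1\}$ with $s(\P_\eta(H_i)-\P_\eta(H_{i-1}))\ge\eps/n-O(\eta)$.

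Next, introduce the randomized predictor. For $r\in\zo^{n-i+1}$ with $r=(r_i,r_{>i})$, let $P_r(y_{<i})\eqdef r_i\oplus[s=-1]\oplus\lnot C(y_{<i}\circ r_i\circ r_{>i})$, i.e., output $r_i$ if $C$ accepts and $1-r_i$ otherwise, flipped if $s=-1$. Let $T'(u,r)$ indicate $P_r(G(u)_{<i})=G(u)_i$. The classical identity
\[
\Pr[T'=1]=\tfrac12+s\,(\Pr[H_i]-\Pr[H_{i-1}])
\]
is derived pointwise. For each fixed $(u,r_{>i})$, set $a\eqdef C(G(u)_{\le i}\circ r_{>i})$ and $b\eqdef C(G(u)_{<i}\circ\overline{G(u)_i}\circ r_{>i})$. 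Then $T'$, averaged over the bit $r_i$, equals $\tfrac12+s(a-\tfrac{a+b}{2})$, and $H_{i-1}$ averaged over $r_i$ equals $\tfrac{a+b}2$. To lift this to approximate counting, express the rows as random variables on seed $(u,r)$. Use the \refmeta{local exp} to average out $r_i$, which requires placing $r_i$ as the rightmost seed bit via the \refmeta{permutation exp}. Then apply \refmeta{linearity} with a constant support $V=\zo$ and coefficients of $\ell_1$-norm $O(1)$. This yields $\P_\eta(T')\ge\tfrac12+\eps/n-O(\eta)$, or the symmetric bound.

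To derandomize, apply the \refmeta{avg on exp} to the single variable $T'$ (coefficient $\pm1$), fixing the $r$-part of the seed, with $k=n-i+1$ after permuting so that $r$ is the suffix. This gives an $r^\star$ with $\P_\eta(T'|_{r^\star})\ge\P_\eta(T')-O(\eta)$. Set $P\eqdef P_{r^\star}$. Then $T'|_{r^\star}$ is provably functionally equivalent to $T$, so the \refmeta{global} transfers the bound to $\P_\eta(T)$, and the \refaxiom{precision} moves it to precision $\delta$. This gives $|\P_\delta(T)-\tfrac12|\ge \eps/n-\delta-\beta$, which is stronger than the claimed $\eps/(4n)$. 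The $\eps/4n$ slack absorbs the constant losses that arise when turning $C$-probabilities into expectations via \Cref{prop: indicator variable}.

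The main obstacle is the seed bookkeeping. The hybrids, $C\circ G$, and $C$ live on different seed lengths, and the classical identity holds only after averaging over $r_i$. Each such step must be reduced to pointwise equalities handled by the averaging theorem or local consistency, and each costs only $O(\eta)$, which keeps the total loss at $O(n\eta)\le\beta$. I would first prove a small lemma: a circuit that ignores some seed bits has approximately the same $\P_\eta$ as its restriction to the read bits. This follows from the \refmeta{dueling} or \refmeta{avg on exp}, and I would reuse it for all hybrid comparisons.
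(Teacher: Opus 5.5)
Your proof is correct and follows the same route as the paper: hybrids on a common seed, seed-padding via the Averaging Argument for Expectation to compare the extreme hybrids with $C$ and $C\circ G$, a telescoping search for $i$, a randomized next-bit predictor built from $C$, and fixing its coins with the averaging argument. The one technical difference is the one-bit averaging step, which you handle directly by Local Consistency of Expectation (after permuting $r_i$ to the rightmost position) and a single application of Linearity of Expectation, whereas the paper detours through the auxiliary variable $\overline{X_i}$, the pointwise splittings $X^0+X^1$ and $T_0+T_1$, and several factor-of-two relations, each proved by fixing all but one seed bit and invoking the Brute Force Counting Lemma; your $\eps/n$ bound is genuinely stronger than the paper's $\eps/(4n)$, which only reflects looser constants there (note that the additive $O(\eta)$ losses are absorbed into $\beta$ by taking $\eta$ a small constant fraction of $\beta$, not by that multiplicative slack).
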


\begin{proof}
We formalize the standard proof of Yao's lemma in $\Apx_1$ (see, e.g., \cite[Chapter 9]{AB09}). Fix $n,m,\delta^{-1},\beta^{-1}\in\Log$, $G:\zo^m\to\zo^n$, and circuit $C\in B_n$. For every index $i\in\{0,1,\dots,n\}$, we define the circuit $C_i:\zo^m\times \zo^n\to\zo$ as follows: 
\begin{itemize}
    \item $C_i$ parses its input as $(u,x)\in \zo^m\times \zo^n$.
    \item Let $z\eqdef G(u)_{\le i}\circ x_{>i}$, i.e., the string where the first $i$ bits agree with the first $i$ bits of $G(u)$, and the remaining bits agree with the last $n-i$ bits of $x$. The circuit $C_i(u,x)$ then outputs $C(z)$.  
\end{itemize}
Let $X_0,X_1,\dots,X_n$ be the random variables over $\{0,1\}$ with seed length $m+n$, where $X_i$ is the indicator variable of $C_i(j,x)=1$. That is, $X_i$ is defined by the tuple $(\zo, m+n, C_i)$. 

\pgr{Step 1: Gap between $\bbE_\eta[X_0]$ and $\bbE_\eta[X_n]$} We first argue that $|\bbE_\eta[X_0]-\bbE_\eta[X_n]|$ is large. Note that by \Cref{equ: Yao condition} and the \refaxiom{precision}, we have 
\begin{equation}\label{equ: Cr and C far}
    \left|\P_\eta(C\circ G)-\P_\eta(C)\right|\ge \eps-4\eta.
\end{equation}

Note that $X_0$ is the random variable over $\zo$ with seed length $m+n$ that, on the seed $(u,x)\in\zo^m\times \zo^n$, ignores the first part of the seed and outputs $C(x)$. For any assignment $u\in\zo^m$ to the first part of the seed, it can be verified that $|\bbE_\eta[X_0|_u]-\P_\eta(C)| \le 3\eta$, and thus by the \refmeta{avg on exp}, we know that 
$$|\bbE_\eta[X_0]-\P_\eta(C)|\le 6\eta.$$
Similarly, we can prove that 
\begin{equation}\label{equ: Xn close to YC}
    \left|\bbE_\eta[X_n]-\P_\eta(C\circ G)\right|\le 6\eta. 
\end{equation}
This is because $X_n$ is the random variable that, on the seed $(u,x)\in\zo^m\times \zo^n$, ignores the second part of the seed and outputs $C\circ G(u)$. 

By combining \Cref{equ: Cr and C far} and \Cref{equ: Xn close to YC}, we have 
\begin{equation}
|\bbE_\eta[X_0]-\bbE_{\eta}[X_n]| \ge \eps - 16\eta 
\end{equation}

\pgr{Step 2: Gap between $\bbE_\eta[X_i]$ and $\bbE_{\eta}[\overline{X_{i+1}}]$} As $|\bbE_\eta[X_0]-\bbE_{\eta}[X_n]| \ge \beta - 16\eta+\eps$, we know that for some $1\le i\le n$, 
\begin{equation}
|\bbE_\eta[X_{i-1}]-\bbE_{\eta}[X_{i}]| \geq \frac{\eps}{n}-\frac{16\eta}{n}.\label{equ: diff xi xsi} 
\end{equation}
In more detail, suppose towards a contradiction that this is not true. We can prove by induction on $j$ that if $j\le n$, $|\bbE_\eta[X_0]-\bbE_{\eta}[X_j]| < (j/n)\cdot (\eps-16\eta)$. 

We will produce a predictor $P$ for this index $i$. Recall that both $X_{i-1}$ and $X_i$ are random variables that parse their seeds as $(u,x)\in\zo^{m}\times \zo^n$, and  
\begin{compactitem}
\item $X_{i-1}$ outputs $C(G(u)_{<i}\circ x_i\circ x_{>i})$; 
\item $X_i$ outputs $C(G(u)_{<i}\circ G(u)_i\circ x_{>i})$.
\end{compactitem}
Let $\overline{X_i}$ be the random variable that parses its input as $(u,x)\in\zo^m\times \zo^n$ and outputs $C(G(u)_{<i}\circ \overline{G(u)_i}\circ x_{>i})$. We will prove 
\begin{equation}
\left|\bbE_\eta[X_i] - \bbE_{\eta}[\overline{X_i}]\right| > \frac{\eps}{2n} - 100\eta \label{equ: diff xi not xi}
\end{equation} 
by rewriting both $\bbE_\eta[X_{i-1}]$ and $\bbE_\eta[X_i]$ in \Cref{equ: diff xi xsi}. 

Let $X_{i-1}^0, X_{i-1}^1$ be the random variables over $\zo$ that parse their seeds as $(u,x)\in\zo^m\times \zo^n$ satisfying that: 
\begin{compactitem}
\item $X_{i-1}^0$ outputs $1$ if and only if $C(G(u)_{<i}\circ x_i\circ x_{>i})=1$ and $G(u)_i=x_i$. 
\item $X_{i-1}^1$ outputs $1$ if and only if $C(G(u)_{<i}\circ \overline{G(u)_i}\circ x_{>i})=1$ and $G(u)_i\ne x_i$. 
\end{compactitem} 
One can observe that for any assignment $\rho$ to their seeds, $X_{i-1}|_\rho = X_{i-1}^0|_\rho+ X_{i-1}^1|_\rho$, and thus by the \refmeta{linearity}, 
\begin{equation}
\left|\bbE_\eta[X_{i-1}] - \left(\bbE_{\eta}[X_{i-1}^0] + \bbE_{\eta}[X_{i-1}^1]\right)\right| \le 6\eta. \label{equ: xii close 0 plus 1}
\end{equation}

Similarly, let $X_{i}^0, X_{i}^1$ be the random variables over $\zo$ that parse their seeds as $(u,x)\in\zo^m\times \zo^n$ satisfying that:
\begin{compactitem}
\item $X_{i}^0$ outputs $1$ if and only if $C(G(u)_{<i}\circ x_i\circ x_{>i})=1$ and $G(u)_i=x_i$. 
\item $X_{i}^1$ outputs $1$ if and only if $C(G(u)_{<i}\circ G(u)_i\circ x_{>i})=1$ and $G(u)_i\ne x_i$. 
\end{compactitem}
For any assignment $\rho$ to their seeds, $X_i|_\rho = X_{i}^0|_\rho + X_i^1|_\rho$, and thus by the \refmeta{avg on exp}, 
\begin{equation}
\left|\bbE_\eta[X_{i}] - \left(\bbE_{\eta}[X_{i}^0] + \bbE_{\eta}[X_{i}^1]\right)\right| \le 6\eta.  \label{equ: xi close 0 plus 1}   
\end{equation}

One can observe that $X_i^0$ and $X_{i-1}^0$ are exactly the same random variable. Moreover, we argue that 
\begin{equation}
\left|\bbE_\eta[X_i] - 2\cdot \bbE_\eta[X_i^1]\right| \le 20\eta. \label{equ: xi close to xi 1}
\end{equation}
(As a sanity check, $\bbE[X_i]=2\cdot \bbE[X^1_i]$ in exact expectation.) Assume for contradiction that this does not hold. By the \refmeta{avg on exp}, there is an assignment $\rho$ to all but $x_i$ such that 
\[
\left|\bbE_\eta[X_i|_\rho] - 2\cdot \bbE_\eta[X_i^1|_\rho]\right| > 10\eta. 
\]
Note that both $X_i|_\rho$ and $X_{i}^1|_\rho$ have seed length $1$, and we know that $\bbE[X_i|_\rho] = 2\cdot \bbE[X_i^1|_\rho]$. This leads to a contradiction by the \refmeta{brute force}. Similarly, we can prove that 
\begin{equation}
\left|\bbE_\eta[X_{i-1}] - 2\cdot \bbE_\eta[X_{i-1}^1]\right| \le 10\eta.\label{equ: xii close to xii 1}
\end{equation}

By \Cref{equ: diff xi xsi,equ: xii close 0 plus 1,equ: xi close 0 plus 1,equ: xi close to xi 1,equ: xii close to xii 1}, we have 
\begin{align*}
    \left|\bbE_\eta[X_i] - \bbE_{\eta}[\overline{X_i}]\right| & \ge \left|\left(\bbE_{\eta}[X_{i-1}^0] + \bbE_{\eta}[X_{i-1}^1]\right) - \left(\bbE_{\eta}[X_{i}^0] + \bbE_{\eta}[X_{i}^1]\right)\right| - 12\eta\tag{\Cref{equ: xii close 0 plus 1,equ: xi close 0 plus 1}} \\ 
    & \ge \left|\bbE_\eta[X_{i-1}^1] - \bbE_\eta[X_{i}^1]\right| - 12\eta \\ 
    & \ge \frac{\left|\bbE_\eta[X_{i-1}] - \bbE_\eta[X_{i}]\right|}{2} - 60\eta \tag{\Cref{equ: xi close to xi 1,equ: xii close to xii 1}}\\ 
    & > \frac{\eps}{2n} - 100\eta. \tag{\Cref{equ: diff xi xsi}}
\end{align*}

\pgr{Step 3: Producing the Predictor} We now prove that \Cref{equ: diff xi not xi} suffices to produce the predictor. Let $P:\zo^{i-1}\times \zo^n\to\zo$ and $T:\zo^m\times \zo^n\to \zo$ be the following circuits: $P(v,x)\eqdef C(v\circ x_i\circ x_{>i})\oplus x_i$, and $T(u,x)$ outputs $1$ if and only if $P(G(u)_{<i},x) = G(u)_i$. 

We also define two circuits $T_0,T_1:\zo^m\times \zo^n\to \zo$ such that 
\begin{compactitem}
\item $T_0(u,x)$ outputs $1$ if $C(G(u)_{<i}\circ G(u)_i\circ x_{>i}) = 0$ and $x_i = G(u)_i$. 
\item $T_1(u,x)$ outputs $1$ if $C(G(u)_{<i}\circ \overline{G(u)_i}\circ x_{>i})=1$ and $x_i \ne G(u)_i$.
\end{compactitem}
One can prove (in $\PV_1$) that $T_0(u,x) = 1$ if and only if $T(u,x)=1$ and $x_i = G(u)_i$, while $T_1(u,x) = 1$ if and only if $T(u,x)=1$ and $x_i \ne G(u)_i$. Therefore for every $(u,x)\in\zo^m\times \zo^n$, $T(u,x) = T_0(u,x) + T_1(u,x)$. By considering their indicator variables and applying the \refmeta{linearity}, one can prove that 
\begin{equation}
\left|\P_\eta(T) - (\P_\eta(T_0) + \P_\eta(T_1))\right| \le 50\eta.\label{equ: T close to T0 plus T1} 
\end{equation}

Next, we argue that 
\begin{equation}
\left|\bbE_\eta[X_i] - \left(1-2\cdot \P_\eta(T_0)\right)\right| \le 20\eta. \label{equ: Xi close to PT0}
\end{equation}
Suppose, towards a contradiction, that \Cref{equ: Xi close to PT0} does not hold. Let $I_{T_0}$ be the indicator random variable of $T_0$, and we have $|\bbE_\eta[I_{T_0}]-\P_\eta(T_0)|\le 3\eta$ (see \Cref{prop: indicator variable}). Both $X_i$ and $I_{T_0}$ have seed $(u,x)\in\zo^m\times \zo^n$. By the \refmeta{avg on exp}, there is an assignment $\rho$ to all variables but $x_i$ (i.e.~the $i$-th bit of $x$) such that 
\[
\left|\bbE_\eta[X_i|_\rho] - \left(1-2\cdot \bbE_\eta[I_{T_0}|_\rho]\right)\right| > 10\eta. 
\]
Note that both $X_i|_\rho$ and $I_{T_0}|_\rho$ have seed length $1$, and we know by the definition that\footnote{Indeed, since after fixing $\rho$ the only randomness is $x_i$ and $X_i\!\restriction_\rho$ is independent of $x_i$, we have $I_{T_0}\!\restriction_\rho
=\mathbf{1}\{x_i=G(u)_i\}\,\mathbf{1}\{X_i\!\restriction_\rho=0\}
=\mathbf{1}\{x_i=G(u)_i\}\,\bigl(1-X_i\!\restriction_\rho\bigr)$. Consequently, 
$\mathbb{E}\!\bigl[I_{T_0}\!\restriction_\rho\bigr]
= \tfrac12 \cdot \bigl(1-X_i\!\restriction_\rho\bigr)
=\tfrac12 \cdot \Bigl(1-\mathbb{E}\bigl[X_i\!\restriction_\rho\bigr]\Bigr)$, which implies that $\mathbb{E}\bigl[X_i\!\restriction_\rho\bigr]
=1-2 \cdot \,\mathbb{E}\bigl[I_{T_0}\!\restriction_\rho\bigr].$
} 
\[
\left|\bbE[X_i|_\rho] - \left(1-2\cdot \bbE[I_{T_0}|_\rho]\right)\right| = 0. 
\]
By the \refmeta{brute force}, we conclude a contradiction that thus proves \Cref{equ: Xi close to PT0}. Similarly, we can prove that 
\begin{equation}
\left|\bbE_\eta[\overline{X_i}] - 2\cdot \P_\eta(T_1)\right| \le 20\eta. \label{equ: not Xi close to PT1}
\end{equation}

By combining \Cref{equ: diff xi not xi,equ: T close to T0 plus T1,equ: Xi close to PT0,equ: not Xi close to PT1}, we have that 
\begin{align*}
\left|\P_\eta(T) - \frac{1}{2}\right|&\ge \left|\P_\eta(T_0) + \P_\eta(T_1)-\frac{1}{2}\right| - 50\eta \tag{\Cref{equ: T close to T0 plus T1}} \\ 
&\ge \left|\frac{1-\bbE_\eta[X_i]}{2} + \frac{\bbE_{\eta}[\overline{X_i}]}{2} - \frac{1}{2}\right| - 80\eta \tag{\Cref{equ: Xi close to PT0,equ: not Xi close to PT1}} \\ 
&\ge \frac{1}{2}\cdot \left|\bbE_\eta[X_i]-\bbE_\eta[\overline{X_{i}}]\right| - 80 \eta \\ 
&\ge \frac{\eps}{4n} - 180\eta \tag{\Cref{equ: diff xi not xi}}.
\end{align*}
Recall that $T$ takes an input $(u,x)\in\zo^m\times \zo^n$. Let $T_x$ be the circuit obtained by fixing the second part of its input to be $x\in\zo^n$. By considering its indicator variable (see \Cref{prop: indicator variable}) and applying the \refmeta{avg on exp}, there exists a string $x\in\zo^n$ such that 
\[
\left|\P_\eta[T_x]-\frac{1}{2}\right| \ge \frac{\eps}{4n} - 200\eta.  
\]
By the definition, one can see that $T_x$ evaluating to $1$ means the predictor $P_x(v)\eqdef C(v \circ x \circ x_{>i})\oplus x_i$ correctly predicts the $i$-th bit of $G(u)$. This concludes the proof by applying the \refmeta{global} of approximate counting, the \refaxiom{precision}, and setting $\eta\eqdef \beta/400$. 
\end{proof}

\subsection{Schwartz-Zippel Lemma}

Before stating the Schwartz-Zippel lemma, we need to clarify the formalization of finite fields and polynomials. A finite field $\F$ is said to be feasible if $|\F|\in\Log$, e.g., $\F=\F_p$ for some prime number $p\in\Log$. It is verified in \cite[Section 4.3.3]{Jerabek-phd} that for any feasible field, the field elements can be encoded such that (1) the field operations can be implemented by $\PV$ function symbols, and (2) field axioms can be proved in $\PV$. For simplicity, we identify an element in $\F$ and its encoding as a string. 

Fix a feasible field $\F$. A degree-$d$ univariate polynomial $p\in\F[x]$ can be defined by a list of coefficients $c_0,c_1,\dots,c_d\in\F$ such that $p(x)\eqdef c_0+c_1x+\dots+c_d x^d$. A polynomial is said to be nonzero if any of $c_0,c_1,\dots,c_d$ is nonzero.

\begin{proposition}[Implicit in {\cite[Lemma 4.3.6]{Jerabek-phd}}]\label{prop: root upper}
It is provable in $\PV$ that any nonzero degree-$d$ polynomial $p\in\F[x]$ has at most $d$ roots.  
\end{proposition}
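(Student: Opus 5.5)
We will prove the statement in $\PV_1$ (a $\forall$ statement over quantifier-free matrices, so this is provable in $\PV$ as well). The argument is the standard factor theorem plus induction on the degree. Everything involves only a feasible number of field elements and coefficients, because $|\F|\in\Log$ and $d\in\Log$. First, we formalize polynomial division by a linear factor. Synthetic division is a $\PV$ function that, given $p=(c_0,\dots,c_d)$ and $a\in\F$, outputs $q=(b_0,\dots,b_{d-1})$ and a remainder $r$. Its coefficients are defined by the recursion $b_{d-1}=c_d$ and $b_{j-1}=c_j+a\,b_j$, and the remainder is $r=c_0+a\,b_0$. Using the field axioms and induction on the number of coefficients, we prove in $\PV$ that $p(x)=(x-a)q(x)+r$ holds for every $x\in\F$. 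In particular $r=p(a)$, so if $a$ is a root then $p(x)=(x-a)q(x)$ pointwise. We also note that if $p$ is nonzero and $p(a)=0$, then $q$ is nonzero. Indeed, if $q$ were zero then $p$ would be the zero polynomial, since its coefficients $c_j = b_{j-1}-a b_j$ would all vanish.

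Second, we fix the meaning of ``at most $d$ roots''. We phrase it as the quantifier-free statement: for every explicit list $a_1,\dots,a_{d+1}$ of pairwise distinct elements of $\F$, some $p(a_i)\neq 0$. An equivalent phrasing is that the explicit set $\{a\in\F: p(a)=0\}$, computed by enumerating the feasible field, has size at most $d$. These two phrasings are interderivable in $\PV$. We then prove the list version by induction on $d$, using the induction scheme for quantifier-free formulas (Theorem~\ref{thm: ind via binary search}, specialized to $\PV_1$). The formula is universally quantified over $p$ and the list, which is fine once we pass to the universal form, or we can apply Herbrand-style reasoning with an explicit witness function.

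Base case $d=0$: a nonzero constant polynomial has no root. Inductive step: suppose $p$ has degree $d$ and vanishes on distinct $a_1,\dots,a_{d+1}$. Divide by $x-a_{d+1}$ to get a nonzero $q$ of degree $d-1$. For each $i\le d$ we have $0=p(a_i)=(a_i-a_{d+1})q(a_i)$, and $a_i-a_{d+1}\neq 0$. Since a field has no zero divisors, which is provable from the field axioms in $\PV$, it follows that $q(a_i)=0$. So $q$ vanishes on $d$ distinct points, contradicting the induction hypothesis.

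To make the induction quantifier-free, we define a $\PV$ function $W(p,\vec a)$ that iterates the division step and returns an index $i$ with $p(a_i)\neq 0$. We then prove its correctness by induction on $d$.

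The main obstacle is bookkeeping rather than mathematics. Specifically, we must ensure that ``degree $d$'' allows a zero leading coefficient, because the paper's definition only requires some nonzero coefficient. The division step must preserve nonzeroness for arbitrary coefficient lists, not only monic ones. The argument above handles this: nonzeroness of $q$ follows from nonzeroness of $p$ without any assumption on the leading coefficient. We must also ensure that the induction is on a quantifier-free formula with the witness function $W$ built in. Both issues are routine in $\PV$ by the cited encoding of feasible fields from \cite[Section 4.3.3]{Jerabek-phd}.
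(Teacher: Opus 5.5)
Your proposal is correct. The paper gives no argument of its own beyond the pointer to Je\v{r}\'abek's thesis, and your proof is the standard argument behind that citation: synthetic division, preservation of nonzeroness without assuming a monic leading coefficient, and iterated division packaged into an explicit witness function $W$. The one point to state precisely is how $W$'s correctness is proved. Fix $(p,\vec a)$ as parameters, compute the intermediate quotients $p_k$ from them by a $\PV$ function, and use open induction on the number $k$ of division steps with the invariant that $p_k$ is nonzero and $p(x)=\prod_{l>d+1-k}(x-a_l)\cdot p_k(x)$ for all $x\in\F$. Your phrase ``induction on $d$'' suggests a formula universally quantified over $p$ and the list, and that is not an instance of open induction.
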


\begin{proposition}[Implicit in \cite{Jerabek-phd}]
It is provable in $\PV$ that for any $d< |\F|$, any distinct $x_1,\dots,x_d\in \F$, and $y_1,\dots,y_d\in\F$, there is a degree-$d$ polynomial $p\in\F[x]$ such that $p(x_i)=y_i$ for every $i\in[d]$. 
\end{proposition}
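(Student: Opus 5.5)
The statement is Lagrange interpolation over a feasible field, formalized in $\PV$. (As written it has a degree-$d$ polynomial through $d$ points, so degree $d-1$ already suffices; I will construct degree at most $d-1$ and pad with zero coefficients.) The plan is to write down the interpolating polynomial explicitly as a $\PV$ term acting on coefficient lists, and then verify the interpolation identities by induction on the number of points.

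\textbf{Polynomial arithmetic in $\PV$.} First I will fix $\PV$ functions on coefficient lists: addition, scalar multiplication, multiplication by a linear factor $(x-a)$, and evaluation via Horner's rule. Using the field axioms provable in $\PV$ for feasible $\F$, I will prove in $\PV$ the homomorphism properties
\[
\mathrm{ev}(p+q,a)=\mathrm{ev}(p,a)+\mathrm{ev}(q,a),\qquad \mathrm{ev}(\lambda p,a)=\lambda\,\mathrm{ev}(p,a),\qquad \mathrm{ev}((x-b)p,a)=(a-b)\,\mathrm{ev}(p,a).
\]
Each of these is proved by induction on the length of the coefficient list, which is an open induction available in $\PV$.

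\textbf{Newton form rather than Lagrange basis.} To avoid reasoning about products of $d$ factors all at once, I will build the polynomial incrementally. Set $p_0=0$. Given $p_{j}$ with $p_j(x_i)=y_i$ for all $i\le j$, define
\[
p_{j+1}=p_j+c_{j+1}\prod_{i\le j}(x-x_i),\qquad c_{j+1}=\frac{y_{j+1}-p_j(x_{j+1})}{\prod_{i\le j}(x_{j+1}-x_i)}.
\]
The denominator is nonzero because the points are distinct and a product of nonzero field elements is nonzero. That last fact also needs an induction, using that $\F$ is a field provably in $\PV$.

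The map $j\mapsto p_j$ is polynomial-time because $d<|\F|\in\Log$. The inductive statement ``$\forall i\le j~\mathrm{ev}(p_j,x_i)=y_i$'' becomes a quantifier-free formula once we use bounded quantification over the explicit list of indices. Its induction step has two parts. For $i\le j$, the added term vanishes at $x_i$: evaluation of the product gives a product containing the factor $x_i-x_i=0$. At $x_{j+1}$, the added term equals $y_{j+1}-p_j(x_{j+1})$ by the homomorphism properties.

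The induction itself is available in $\PV_1$; since $\PV_1$ is conservative over $\PV$, and the statement is $\forall\exists$ with the witness $p_d$ given by an explicit $\PV$ term, we obtain the equational form.

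\textbf{Main obstacle.} The step I expect to be fiddliest is the vanishing lemma: $\mathrm{ev}\bigl(\prod_{i\le j}(x-x_i),a\bigr)=\prod_{i\le j}(a-x_i)$, and this product is $0$ if and only if some factor is $0$. It needs a nested induction over the list of factors, combining the homomorphism property for $(x-b)$ with the absence of zero divisors in $\F$. I would prove it once as a general lemma about iterated products and reuse it both for the vanishing at earlier points and for the nonvanishing of the denominator.
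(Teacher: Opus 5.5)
Your argument is correct. Note, though, that the paper does not prove this proposition at all: it is stated without proof and attributed to Je\v{r}\'abek's thesis \cite{Jerabek-phd}, relying on the $\PV$-formalization of feasible fields imported from there. So your write-up is a self-contained reconstruction rather than an alternative to an argument in the paper.

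The natural alternative to your construction is the Lagrange formula $p=\sum_{i} y_i L_i$ with $L_i=\prod_{l\neq i}(x-x_l)(x_i-x_l)^{-1}$. Formalizing it needs two separate inductions: one for the basis property $L_i(x_k)=[i=k]$, and one for additivity of evaluation over a $d$-term sum of polynomials. Your Newton form instead packs everything into a single open induction on $j$, with the quantifier-free invariant that $p_j$ interpolates the first $j$ points. The witness $p_d$ is still an explicit $\PV$ term, so conservativity of $\PV_1$ over $\PV$ gives the equational form exactly as you say.

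Both routes rest on the iterated-product lemma you isolate. Its only non-syntactic ingredient, the absence of zero divisors, is immediate from the $\PV$-provable inverse: $ab=0$ and $a\neq 0$ give $b=a^{-1}(ab)=0$.

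One remark on the statement itself. The hypothesis $d<|\F|$ indicates that the intended version has $d+1$ points $x_0,\dots,x_d$ and a degree-$d$ interpolant. Your construction produces a polynomial of degree at most $N-1$ through any $N$ distinct points, so it covers that reading verbatim, with no padding needed.
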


Let $m,d\in\Log$. We say that a circuit $C$ \emph{computes an $m$-variate function in $\F$}, denoted by $C:\F^m\to\F$, if for every $x_1,\dots,x_m\in\F$, $C(x_1,\dots,x_m) \in\F$. Depending on the encoding of field elements, some circuit may not compute an $m$-variate function in $\F$ as it outputs a string that does not encode any field element. We say that $C:\F^m\to\F$ has \emph{individual degree at most $d$} if for every $i\in[m]$ and any assignment $\rho$ to all but the $i$-th variable, there is a polynomial $p_{i,\rho}\in\F[x]$ of degree at most $d$ such that $p_{i,\rho}(x)=C|_\rho(x)$ for $x\in\F$. 

\begin{theorem}[Schwartz-Zippel Lemma]\label{thm: SW}
    $\Apx_1$ proves the following statement. Let $\F$ be a feasible field such that each field element is encoded by a string of length $b\in\Log\Log$. Let $m,d\in\Log$, $d < |\F|$, and $C:\F^m\to\F$ be a circuit of individual degree at most $d$. Let $T_C:(\zo^{b})^m\to \zo$ be the circuit that given $(x_1,\dots,x_m)\in\zo^b$, it accepts if $x_i\in\F$ for each $i\in[m]$, and $C(x_1,\dots,x_m)=0$. 
    
    Suppose that for some $\vec z\in \F^m$, $C(\vec z)\ne 0$. Then for every $\delta^{-1},\beta^{-1}\in\Log$, $\P_\delta(T_C)\le md/|\F|+\delta + \beta$. 
\end{theorem}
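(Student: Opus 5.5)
We would formalize the Atserias--Tzameret style argument: proceed by induction on the number of variables, using the witness $\vec z$ to guide a ``hybrid'' walk from $\vec z$ to a uniformly random point, changing one coordinate at a time. For $j\in\{0,1,\dots,m\}$ let $H_j$ denote the hybrid point whose first $j$ coordinates are random $x_1,\dots,x_j$ and whose remaining coordinates are $z_{j+1},\dots,z_m$. Let $E_j$ be the event (on the seed $(x_1,\dots,x_m)\in(\zo^b)^m$, with non-field encodings treated as failure) that $C(H_j)=0$. Then $E_0$ is false by assumption and $E_m$ is exactly $T_C$. We aim to show that $\Pr[E_j\land\lnot E_{j-1}]\le d/|\F|$ up to small additive error. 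The union bound (\Cref{thm: union bound}) applied to the pointwise inclusion $E_m\subseteq \bigcup_{j\in[m]}(E_j\land \lnot E_{j-1})$, which holds since $E_0$ is false, then yields $\P_\delta(T_C)\le md/|\F|+\delta+\beta$ after the usual precision smoothing with an auxiliary $\eta=\beta/O(m)$ and the \refaxiom{precision}.

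To bound each term $\Pr[E_j\land\lnot E_{j-1}]$, we would use the \refmeta{avg on exp}. Suppose, for contradiction, that the indicator expectation exceeds $d/|\F|+O(\eta)$. Fix all seed blocks other than the $j$-th block $x_j$ (this is legitimate up to \refmeta{permutation exp}), and obtain an assignment $\rho$ to those blocks such that the restricted indicator still has expectation $>d/|\F|+O(\eta)$. Under $\rho$, the event $\lnot E_{j-1}$ is fixed, since it does not depend on $x_j$, so it must be true; in particular $C|_\rho(z_j)\ne 0$ for the restriction $C|_\rho$ to the $j$-th variable. By the individual-degree hypothesis there is a polynomial $p_{j,\rho}$ of degree $\le d$ agreeing with $C|_\rho$ on $\F$. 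This polynomial is nonzero because it is nonzero at $z_j$, so by \Cref{prop: root upper} it has at most $d$ roots.

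Now the restricted event has seed length $b\in\Log\Log$, so we invoke the \refmeta{brute force}. Its $\P_\eta$-value is within $O(\eta)$ of (number of roots)$/2^b\le d/|\F|$, where we use $2^b\ge|\F|$ and the fact that non-field strings are rejected. This contradicts the bound obtained above.

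We expect the main obstacle to be bookkeeping rather than mathematics: setting up the hybrid circuits and seed conventions so that, after fixing all blocks except one, the restricted circuit really matches $C|_\rho$ on the single free variable (which needs provable functional equivalence plus \refmeta{global}), and checking that the counting of roots among $2^b$ strings is provable in $\PV_1$. The latter is feasible because $b\in\Log\Log$ lets us enumerate, and we would derive it from \Cref{prop: root upper} by listing the roots.
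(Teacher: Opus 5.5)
Your proposal is correct and follows essentially the same hybrid argument as the paper. You use the same hybrids interpolating from $\vec z$ to a random point, the \refmeta{avg on exp} to fix all blocks but one, the individual-degree hypothesis together with \Cref{prop: root upper}, and the \refmeta{brute force} on the $b$-bit restricted circuit. The only difference is bookkeeping: you combine the steps by a union bound over the first-change events $E_j\land\lnot E_{j-1}$, whereas the paper bounds the differences $\bbE_\eta[X_i]-\bbE_\eta[X_{i-1}]$ directly (averaging with coefficients $1,-1$, then a case split on whether the fixed hybrid $T_C^{i-1}|_\rho$ is constantly true or false) and telescopes by induction on $i$.
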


\begin{proof}
    The key idea is to formalize the proof of Atserias and Tzameret \cite{AT25} in $\Apx_1$. We argue in $\Apx_1$. Fix any feasible field $\F$, $b\in\Log\Log$, $m,d\in\Log$, $C:\F^m\to\F$, $\vec z=(z_1,\dots,z_m)\in\F_m$, and $\delta^{-1},\beta^{-1}\in\Log$, as in the statement. 
    
    Let $\eta^{-1}\in\Log$ be a parameter to be determined later. For every $i\in\{0,1,\dots,m\}$, we define $T_C^i:(\zo^b)^m\to\zo$ as follows: Given $(x_1,\dots,x_m)\in\zo^b$, it accepts if $x_i\in\F$ for each $i\in[m]$ and $C(x_1,\dots,x_{i},z_{i+1},\dots,z_m) = 0$. It is clear that $T^0_C\equiv \Null$, and $T^m_C\equiv T_C$. 
    
    Let $X_0,X_1,\dots,X_m$ be the indicator random variables for $T_C^0,T_C^1,\dots,T_C^m$. We will prove that for every $i\ge 1$, 
    \begin{equation} 
    \bbE_\eta[X_{i}]-\bbE_\eta[X_{i-1}]\le d/|\F| + 10\eta.\label{equ: diff small Xi vs Xi-1}
    \end{equation}
    Assume for contradiction that it is not the case. By \refmeta{avg on exp}, there is an assignment $\rho$ to all but the $i$-th part $x_i$ of the seed such that 
    \begin{equation}
    \bbE_\eta[X_i|_\rho] - \bbE_{\eta}[X_{i-1}|_\rho] >  d/|\F|+10\eta-6\eta>0.\label{equ: lb from contradiction Xi vs Xi-1} 
    \end{equation}
    Note that $X_{i-1}|_\rho$ is the indicator random variable of $T_C^{i-1}|_\rho$, and by the definition, $T_C^{i-1}|_\rho$ is a constant circuit that does not read the seed. Therefore it falls into one of the two cases: 
    \begin{itemize}
    \item Suppose that $T_C^{i-1}|_\rho\equiv \True$. We also know that the circuit $\EQ(T_C^{i-1}|_\rho,1)$ that given $x\in\zo^b$, outputs $1$ if and only if $T_C^{i-1}(x)=1$ is also functionally equivalent to $\True$. One can prove by the definition that 
    \[
    \bbE_\eta[X_{i-1}|_\rho] = \P_\eta(\EQ(T^{i-1}_C|_\rho, 1)) = 1, 
    \]
    where the second inequality follows from the \refaxiom{boundary}. This leads to a contradiction to \Cref{equ: lb from contradiction Xi vs Xi-1}.
    \item Otherwise, $T_C^{i-1}|_\rho\equiv \Null$. Let $\rho=(y_1,\dots,y_{i-1},*,y_{i+1},\dots,y_m)$. Recall that as the individual degree of $C$ is at most $d$, there is a polynomial $p\in\F[x]$ such that $p\equiv C(y_1,\dots,y_{i-1},\cdot,z_{i+1},\dots,z_m)$. It follows from $T_C^{i-1}|_\rho\equiv \Null$ and \Cref{equ: lb from contradiction Xi vs Xi-1} that 
    \[
    p(z_i) = C(y_1,\dots,y_{i-1},z_i,z_{i+1},\dots,z_m) \ne 0,
    \]
    and thus $p$ is a nonzero polynomial. By \Cref{prop: root upper}, it has at most $d$ roots. Moreover, as the input length of $T_C^{i-1}|_\rho$ is $b\in\Log\Log$, we can prove by \Cref{prop: indicator variable} and \refmeta{brute force} that 
    \[
    \bbE_\eta[X_i|_\rho]\le \P_\eta(T_C^i)+2\eta  \le d/|\F|+4\eta.  
    \]
    This leads to a contradiction to \Cref{equ: lb from contradiction Xi vs Xi-1}.
    \end{itemize}

    Finally, as $T^0_C\equiv\Null$, we know that $\bbE_\eta[X_0] \leq 10 \eta$. By induction on $i$ (using the \refaxiom{ind}) and \Cref{equ: diff small Xi vs Xi-1}, we can prove that 
    \begin{equation}
    \bbE_\eta[X_m]\le md/|\F|+10\eta\cdot (m+1).\label{equ: Xm small hybrid result} 
    \end{equation}
    Subsequently, we have 
    \begin{align*}
    \P_\delta(T_C) &\le \P_\eta(T_C)+ \delta + 2\eta \tag{\refaxiom{precision}} \\ 
    &\le \P_\eta(T_C^m) + \delta + 5\eta \tag{\refmeta{global}}\\
    &\le \bbE_\eta(X_m) + \delta + 8\eta \tag{\Cref{prop: indicator variable}} \\ 
    &\le md/|\F|+\delta+ 10\eta \cdot (m+1) + 8\eta \\ 
    &\le md/|\F| + \delta + \beta,
    \end{align*}
    where the last inequality follows by taking $\eta\eqdef \beta / (10m+20)$. 
\end{proof}

\subsection{Linear Hashing}
\label{ss: linear hash}

The linear hash function $x\mapsto Ax\bmod 2$ is one of the simplest constructions of hash functions. The following theorem formalizes that linear hashing is an (almost) universal hash function. 

\begin{theorem}[Universality of linear hashing]\label{thm: linear hashing}
    $\Apx_1$ proves the following statement. Let $n,m,\delta^{-1},\beta^{-1}\in\Log$. For every $x,y\in\zo^n$, let $T_{x,y}:\zo^{nm}\to\zo$ be the circuit that parses its input as a Boolean matrix $A\in\zo^{m\times n}$ and outputs $1$ if and only if $Ax\equiv A y\pmod 2$. Then for all distinct $x,y\in\zo^n$, $\P_\delta(T_{x,y})\le \delta + \beta + (1/2 +\beta)^{m}$.  
\end{theorem}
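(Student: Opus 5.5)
Let $z\eqdef x\oplus y\ne 0^n$ and fix an index $j\in[n]$ with $z_j=1$. Since $Ax\equiv Ay\pmod 2$ holds if and only if $Az\equiv 0\pmod 2$, the circuit $T_{x,y}$ is provably (in $\PV_1$) equivalent to the circuit that accepts $A$ exactly when every row $a_r$ satisfies $\langle a_r,z\rangle=0$. We therefore view the seed as $m$ disjoint blocks of $n$ bits. Let $X_r$ be the indicator variable of $\langle a_r,z\rangle=0$. These are explicitly i.i.d.\ random variables defined by a single tuple $(\zo,n,D)$, where $D(a)\eqdef 1-\langle a,z\rangle \bmod 2$. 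With this view, $T_{x,y}$ computes the product $X_1\cdots X_m$.

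The proof then has three steps. First, we show that $\P_\eta(D)\le 1/2+O(\eta)$ for an auxiliary precision $\eta$. This is a pairing argument. By \refmeta{permutation}, we may move coordinate $j$ to the rightmost position. For every fixing $\rho$ of the other $n-1$ bits, the circuit $D|_\rho$ has one input bit and accepts exactly one of its two values. So by the \refmeta{brute force}, or directly by the \refaxiom{local} together with the \refaxiom{boundary}, we get $\P_\eta(D|_\rho)\le 1/2+O(\eta)$. We then apply \refmeta{avg on exp} to the indicator of $D$ with coefficient $\lambda=1$ and suffix length $n-1$, after permuting so that the fixed part is the suffix. This produces a $\rho$ with $\bbE_\eta[I_D]\le \bbE_\eta[I_D|_\rho]+O(\eta)$, which gives $\bbE_\eta[X_r]\le 1/2+O(\eta)$. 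To pass from $X_r$, which lives on the $nm$-bit seed, to $D$, we use the same argument as in the proof of \Cref{thm:one-sided}: fix the other blocks and apply \refmeta{avg on exp} once more.

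Second, we apply the \refmeta{multiplication BRVs} to obtain
\[
\bbE_\eta[X_1\cdots X_m]\le \prod_r\bbE_\eta[X_r]+8\eta m\le (1/2+c\eta)^m+8\eta m.
\]
We then relate $\bbE_\eta[X_1\cdots X_m]$ to $\P_\eta(T_{x,y})$. This uses \Cref{prop: indicator variable} together with \refmeta{avg on exp} with $k=nm$, since the two agree pointwise on every full seed, exactly as in the proof of \Cref{thm:one-sided}.

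Third, we choose $\eta\eqdef\beta/(cm)$ for a large constant $c$. Then $c\eta\le\beta$, the additive losses sum to at most $\beta/2$, and the \refaxiom{precision} converts $\P_\eta$ to $\P_\delta$ at a cost of $\delta+2\eta$. Altogether this gives $\P_\delta(T_{x,y})\le\delta+\beta+(1/2+\beta)^m$.

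The main obstacle is the first step, specifically the bookkeeping of applying \refmeta{avg on exp} after a permutation that brings coordinate $j$ to the end, while keeping every loss $O(\eta)$ rather than $O(n\eta)$. This works because \refmeta{avg on exp} loses only $(2\eta+\beta')\|V\|\|\lambda\|$, independent of the suffix length. A secondary subtlety is that the multiplication principle's error $8\eta m$ is additive and does not multiply into the product, so it suffices to take $\eta$ inversely proportional to $m$.
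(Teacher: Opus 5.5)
Your proposal is correct and is essentially the paper's proof: the paper likewise shows $\P_\eta(\lnot C)\le 1/2+O(\eta)$ for the single-row circuit by fixing every seed bit except one coordinate $i$ with $x_i\ne y_i$ via the \refmeta{avg on exp} and invoking the \refmeta{brute force}, and then applies \Cref{thm:one-sided}, whose proof is precisely your multiplication-principle-plus-averaging step (which is why the paper can take $\eta=\beta/20$, the dependence on $m$ being absorbed inside that lemma). One harmless slip: since $\Fix$ and the averaging argument fix the \emph{rightmost} bits, you should permute coordinate $j$ to the leftmost position rather than the rightmost, so that the $n-1$ fixed bits form the suffix.
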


\begin{proof}
    We argue in $\APX_1$. Fix $n,m,\delta^{-1},\beta^{-1}\in\Log$, $x,y\in\zo^n$, and let $T_{x,y}$ be the circuit defined above. Suppose that $x\ne y$. We will prove that $\P_\delta(T_{x,y})\le \delta + \beta + (1/2 +\beta)^{m}$. 

    We first upper bound the probability when $m=1$. Let $\eta^{-1}\in\Log$ be a parameter to be determined later, and $C:\zo^n\to\zo$ be the circuit that given $c\in\zo^n$, it outputs $\langle c,x-y\rangle\bmod 2$. We will prove that $\P_\eta(\lnot C)\le 1/2 + 6\eta$.

    Suppose, towards a contradiction, that $\P_\eta(\lnot C) > 1/2 + 6\eta$. As $x\ne y$, there is an index $i\in[n]$ such that $x_i\ne y_i$. Fix the index $i$. Let $X$ be the indicator random variable of $\lnot C$; by \Cref{prop: indicator variable}, we know that $\bbE_\eta[X]>1/2+6\eta-3\eta =  1/2+3\eta$. Subsequently, by \refmeta{avg on exp}, there is an assignment $\rho$ to all but the $i$-th bit of the random seed such that $\bbE_\eta[X|_\rho] > 1/2$. However, as the seed length of $X|_\rho$ is $1$ and the probability is $1/2$, this violates the \refmeta{brute force}. 

    Let $C^{\lor m}:\zo^{nm}\to\zo$ denote the circuit $C^{\lor m}(x_1,\dots,x_m)\eqdef \bigvee_{i\in[m]} C(x_i)$. By \refmeta{os error red}, we have
    \[
    \P_\eta(\lnot C^{\lor m}) \le (1/2 + 8\eta)^m + 2\eta.  
    \]
    It can be observed that $\lnot C^{\lor m}$ is functionally equivalent to $T_{x,y}$, and thus by the \refmeta{global} of approximate counting, 
    \[
    \P_\eta(T_{x,y}) \le \P_\eta(\lnot C^{\lor m}) + 3\eta \le (1/2 + 8\eta)^m + 5\eta.  
    \]
    By the \refaxiom{precision}, we have 
    \[
    \P_\delta(T_{x,y})\le \P_\eta(T_{x,y}) + \delta + 2\eta \le (1/2 + 8\eta)^m + \delta + 7\eta. 
    \]
    This completes the proof when we take $\eta\eqdef \beta/20$. 
\end{proof}

\subsection{Lower Bounds for Parity Against \texorpdfstring{$\AC^0$}{AC0} Circuits}\label{sec:AC0_LB}

The first result of this section is a formalization in $\APX_1$ of an average-case lower bound for the Parity function $\oplus_n$ against $\AC^0$. Our proof is based on a technique due to Furst, Saxe, and Sipser \cite{DBLP:journals/mst/FurstSS84}. Previous formalizations of the lower bound\footnote{Both results consider only worst-case lower bounds, but it can be verified that an average-case lower bound follows from a similar argument when formalized appropriately (in the style of \Jerabek{} \cite{Jerabek-phd,Jerabek07}).}  due to M\"uller and Pich \cite{DBLP:journals/apal/MullerP20} (following \cite{DBLP:journals/mst/FurstSS84}) and \Krajicek{} \cite[Theorem 15.2.3]{Krajicek-book} (following Razborov's proof of the switching lemma \cite{Razborov-switching-l}) require the theory $\APC_1=\PV_1+\dWPHP(\PV)$.\footnote{Note that both results automatically give a formalization of the worst-case lower bound in $\PV_1+\rWPHP(\PV)$, as $\APC_1$ is $\forall\Sigma_1^b$-conservative over $\PV_1+\rWPHP(\PV)$ \cite{Jerabek04,Jerabek07} and the worst-case lower bound can be formalized as a $\forall\Sigma_1^b$-sentence (see, e.g., \cite[Theorem 1.1]{DBLP:journals/apal/MullerP20}).}

\AvgParity*

The main technical challenge is to avoid ``encoding-based counting argument'' that rely on $\dWPHP(\PV)$, which is not available in $\APX_1$. The encoding-based counting argument is used in both \cite{DBLP:journals/apal/MullerP20} and Razborov's \cite{Razborov-switching-l} proof of the switching lemma. This was partially addressed by Agrawal et al.~\cite{AAIPR01} (see also \cite{DBLP:conf/coco/Agrawal01})\footnote{As mentioned in \cite{AAIPR01}, the result was known to Ajtai and Wigderson (unpublished).}, which presented a deterministic polynomial-time algorithm that outputs a suitable restriction given by the switching lemma. As one of our contributions, we show that the correctness of the algorithm in Agrawal et al.~\cite{AAIPR01} can be established in $\PV_1$. This, together with tools developed in \Cref{sec: basic properties}), allow us to formalize the \emph{average-case} lower bound in $\APX_1$. 

Interestingly, using the same technique, we further show that the \emph{worst-case} lower bound $\oplus_n\notin\AC^0$ can be formalized in $\PV_1$. This resolves an open problem from \cite{DBLP:journals/apal/MullerP20}. 

\WorstParity*

\paragraph{Notation.} We let $n$ denote the number of input variables. A $k$-CNF is a propositional formula of the form $C_1\land C_2\land \cdots \land C_m$, where each clause $C_i$ is a disjunction of at most $k$ literals, i.e., variables or their negations. Similarly, a $k$-DNF is a disjunction of terms, where each term is a conjunction of at most $k$ literals. We say that a formula is a $k$-NF if it is either a $k$-CNF or a $k$-DNF. A clause\footnote{In the context of $k$-NFs, we  use ``clause'' to refer to a subformula, regardless of the connective.} $C$ of a $k$-NF can be described by its type (i.e.~$\land$ or $\lor$) and two subsets $S_C^+,S_C^-\subseteq[n]$ of size $|S_C^+|+|S_C^-|\le k$, where $S_C^+$ denotes the ID of variables in the clause, and $S_C^-$ denotes the ID of negations of variables in the clause. Let $S_C\eqdef S_C^+\cup S_C^-$. 

Note that we may assume without loss of generality that $S_C^+\cap S_C^-=\varnothing$, as otherwise the clause will be either always $0$ or always $1$. 

For simplicity, we assume without loss of generality that $\AC^0$ circuits satisfy the following properties:
\begin{compactitem}
\item The circuit is \emph{layered}, i.e., gates in each layer are fed by gates only in previous layer. 
\item All negation gates are pushed to be directly above input variables. Equivalently, there is no negation gate inside the circuit, i.e., gates in the first layer can be fed by literals, i.e., input variables or their negations. 
\end{compactitem}
An $\AC^0$ circuit satisfying these properties is called a \emph{well-formed} circuit. Note that an arbitrary $\AC^0$ circuit can be transformed into a well-formed circuit with only a polynomial size overhead, and the correctness of the transformation can be proved in $\PV_1$. We start directly from well-formed circuits to simplify calculations.  

\subsubsection{Deterministic Selection of Subset for Restriction}

We start by stating two lemmas that formalize the core combinatorial property used in the proof of Furst, Saxe, and Sipser \cite{DBLP:journals/mst/FurstSS84}. Similar functions are used implicitly in \cite{AAIPR01}. Here we use potential functions in order to implement a derandomization via the method of conditional expectations. The approach provides both a feasible algorithm and a feasible proof.

\newcommand{\wide}{\mathsf{s}}

\begin{lemma}[Potential Function for Small Sets]\label{lmm: potential small sets}
The following sentence is provable in $\PV_1$ for every constant $c\ge 1$. Let $n,t\in\Log$ with $t \leq n$, $s,m\in\Log\Log$, $s\le m$, $p\eqdef t/n\in\bbQ$, and $S_1,\dots,S_m\subseteq [n]$ be disjoint sets of size at most $c$ such that $pc<1$. There is a circuit $\Phi_\wide:\{*,\circ\}^{\le n}\to\bbQ\cap [0,1]$ such that the following holds. 
\begin{compactitem}
\item \emph{(Initial Condition)}. $\Phi_\wide(\eps) \le m^s(pc)^{m-s}$. 
\item \emph{(Recursion Condition)}. For every $x\in\{*,\circ\}^i$, $i<n$, we have 
\[
\Phi_\wide(x) = p\cdot \Phi_\wide(x *) + (1-p)\cdot \Phi_\wide(x\circ). 
\]
\item \emph{(Termination Condition)}. For every $x\in\{*,\circ\}^n$, $\Phi_\wide(x)\in\zo$. Moreover, let $T_x\eqdef \{i\in[n]\mid x_i=\circ\}$. Then $\Phi_\wide(x)=0$ if and only if there are at least $s$ subsets $S\in\{S_1,\dots,S_m\}$ such that $S\subseteq T_x$. 
\end{compactitem}
\end{lemma}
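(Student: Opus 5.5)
The plan is to take $\Phi_\wide(x)$ to be the conditional probability of the \emph{bad} event. Each coordinate $i\in[n]$ independently receives $*$ with probability $p$ and $\circ$ with probability $1-p$. The bad event is that fewer than $s$ of the sets $S_1,\dots,S_m$ are contained in $T$, the set of $\circ$-coordinates. We condition on the first $|x|$ coordinates being given by $x$.

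Since $m\in\Log\Log$, this quantity can be written as an explicit feasible formula. For a prefix $x$, each $S_j$ is in one of three states:
\begin{compactitem}
\item \emph{killed}: some coordinate of $S_j$ already carries $*$;
\item \emph{alive}: all its fixed coordinates carry $\circ$, and $k_j(x)\ge 0$ of its coordinates are still unfixed;
\item \emph{contained}: this is the special case of alive with $k_j(x)=0$.
\end{compactitem}
Put $q_j(x)\eqdef(1-p)^{k_j(x)}$ for alive sets and $q_j(x)\eqdef 0$ for killed sets. Then define
\[
\Phi_\wide(x)\eqdef\sum_{U\subseteq[m],\,|U|<s}\ \prod_{j\in U}q_j(x)\prod_{j\notin U}(1-q_j(x)).
\]
The sum ranges over at most $2^m\le\poly(n)$ subsets, and each $q_j$ is a rational of polynomial bit-length because $|S_j|\le c$. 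So $\Phi_\wide$ is given by a polynomial-size circuit, and its values lie in $[0,1]$. A dynamic program over the count $|U|$ would also work.

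\textbf{Termination condition.} When $|x|=n$, every $k_j=0$, so each $q_j\in\{0,1\}$. Exactly one term of the sum is then nonzero, namely $U=\{j: S_j\subseteq T_x\}$, and that term equals $1$. Hence $\Phi_\wide(x)=1$ if $|U|<s$ and $\Phi_\wide(x)=0$ otherwise. This is a direct $\PV_1$ computation.

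\textbf{Recursion condition.} Let $i=|x|+1$. If coordinate $i$ lies in no $S_j$, then all $q_j$ are unchanged, so $\Phi_\wide(x*)=\Phi_\wide(x\circ)=\Phi_\wide(x)$ and the identity is trivial. Otherwise, by disjointness, $i$ lies in exactly one set $S_j$. First regroup the sum according to whether $j\in U$. This shows that $\Phi_\wide$ is affine in $q_j$ with the other $q$'s fixed: $\Phi_\wide=q_jA+(1-q_j)B$, where $A,B$ do not depend on $q_j$. This is a finite-sum rearrangement, provable in $\PV_1$ by the usual explicit-set reasoning. Next, $x*$ sets $q_j$ to $0$, and $x\circ$ sets $q_j$ to $(1-p)^{k_j-1}$ (if $S_j$ was already killed, both are $0$ and the identity is trivial). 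Then
\[
pB+(1-p)\bigl[(1-p)^{k_j-1}A+(1-(1-p)^{k_j-1})B\bigr]=(1-p)^{k_j}A+(1-(1-p)^{k_j})B=\Phi_\wide(x),
\]
which is a routine rational identity.

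\textbf{Initial condition.} This is the step I expect to be the main obstacle, because the bound must be obtained without any probabilistic reasoning, as an explicit inequality between finite sums. At $x=\eps$ no coordinate is fixed, so $q_j=(1-p)^{|S_j|}$. First, Bernoulli's inequality, proved in $\PV_1$ by induction on $|S_j|\le c$, gives
\[
1-q_j\le|S_j|\,p\le pc<1.
\]
Next, observe that every $U$ with $|U|<s$ has complement $W=[m]\setminus U$ of size at least $m-s+1\ge m-s$. For each such $U$, fix a subset $W'\subseteq W$ with $|W'|=m-s$. Then expand
\[
\prod_{j\in W'}(1-q_j)=\prod_{j\in W'}(1-q_j)\prod_{j\notin W'}\bigl(q_j+(1-q_j)\bigr)
\]
into a sum over subsets of $[m]\setminus W'$. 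This expansion contains the term indexed by $U$ with coefficient $1$, and all its terms are nonnegative. Since distinct $U$ remain distinct terms in this expansion, we obtain
\[
\Phi_\wide(\eps)\le\sum_{|W'|=m-s}\ \prod_{j\in W'}(1-q_j)\le\binom{m}{m-s}(pc)^{m-s}\le m^s(pc)^{m-s}.
\]
The last inequality is again a simple induction. All sums here range over explicit lists of at most $2^m$ subsets, so every step is an inequality between feasible rational sums and can be verified in $\PV_1$ by induction on the length of the list.
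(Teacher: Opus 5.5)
Your proposal is correct and is essentially the paper's proof. Your potential $\sum_{|U|<s}\prod_{j\in U}q_j\prod_{j\notin U}(1-q_j)$ is exactly the paper's $\sum_{|\alpha|<s}\phi(x,\alpha)$: the paper's $\phi$ and $\psi$ encode your killed/alive cases. Your affine-in-$q_j$ argument fills in the recursion and termination checks that the paper omits as a ``tedious calculation.''

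The only difference is in the initial condition. The paper simply drops the factors $q_j\le 1$ term by term, obtaining $\phi(\eps,\alpha)\le (pc)^{m-|\alpha|}$, and then sums to $\sum_{j<s}\binom{m}{j}(pc)^{m-j}\le m^s(pc)^{m-s}$ using $pc<1$. Your $W'$-covering expansion is valid but does more work than that direct bound needs.
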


\begin{proof}[Proof Sketch]
We argue in $\PV_1$. Let $n,t,s\in\Log$, $m\in\Log\Log$, $p\eqdef t/n$, and $S_1,\dots,S_m\subseteq [n]$. The circuit $\Phi_\wide$ is defined as follows: Given $x\in\{*,\circ\}^{i}$ for some $0\le i\le n$, let $T_x\eqdef \{j\le i\mid x_j=\circ\}$. It outputs 
\[
\Phi_\wide(x)\eqdef \sum_{\alpha\subseteq[m],|\alpha|< s} \phi(x,\alpha),
\]
where 
\begin{align*}
& \phi(x,\alpha)\eqdef \begin{cases}
    \displaystyle0 & \exists j\in \alpha~\exists k\in[i]~(k\in S_j\land x_k=*)\\ 
    \displaystyle\prod_{j\in[\alpha]}(1-p)^{|S_j\setminus T_x|}\prod_{j'\in[m]\setminus \alpha}\psi(x,j') & \text{otherwise}
\end{cases}, \\ 
& \psi(x,j')\eqdef \begin{cases}
    1 & \exists k\in [i]~(k\in S_{j'}\land x_k = *) \\ 
    1 - (1-p)^{|S_{j'}\setminus T_x|} &\text{otherwise}
\end{cases}. 
\end{align*}

For instructive purposes, we mention the combinatorial interpretation of the functions (which is not a part of the $\PV_1$ proof): Given any $x\in\{*,\circ\}^i$, we randomly assign $x_{i+1},\dots,x_{n}$ independently to $*$ with probability $p$ and to $\circ$ with probability $1-p$. Let $T_x'\eqdef \{i\in[n]\mid x_i=\circ\}$. Then 
\begin{compactitem}
\item $\psi(x,j')$ is the probability that $S_{j'}\nsubseteq T_x'$. 
\item $\phi(x,\alpha)$ is the probability that $S_{j}\subseteq T_x'$ if and only if $j\in \alpha$. 
\item $\Phi_\wide(x)$ is the probability that at most $s-1$ subsets $S\in\{S_1,\dots,S_m\}$ satisfy $S\subseteq T_x$. 
\end{compactitem}

We come back to the $\PV_1$ proof. The recursion condition and termination condition can be verified by a tedious but straightforward calculation, which we omit here. To prove the initial condition, notice that 
\begin{align*}
& \psi(\eps,j') = 1 - (1-p)^{|S_j'|} \le 1 - (1-p)^c \le pc\\ 
& \phi(\eps,\alpha) \le \prod_{j'\in[m]\setminus \alpha}\psi(\eps,j') \le (pc)^{m-|\alpha|} \\ 
& \Phi_\wide(\eps) \le \sum_{0\le j<s}\binom{m}{j}(pc)^{m-j} \le m^s(pc)^{m-s}
\end{align*}
The argument can be implemented in $\PV_1$ using that $s, m \in \Log\Log$ and $c$ is constant. This completes the proof. 
\end{proof}

\begin{lemma}[Potential for General Set Systems]\label{lmm: potential general}
For every choice of constants $k,c\ge 1$, there are constants $b,n_0\ge 1$ such that the following sentence is provable in $\PV_1$. Let $n,t,m\in\Log$, $n>n_0$, $p\eqdef t/n\in\bbQ$, and $S_1,S_2,\dots,S_m\subseteq[n]$ be nonempty subsets of size at most $c$. Then there is a circuit $\Phi:\{*,\circ\}^{\le n}\to\bbQ\cap[0,1]$ such that the following holds. 
\begin{compactitem}
\item \emph{(Initial Condition)}. If $t\le \sqrt n$, then $\Phi(\eps) \le n^{-k}$. 
\item \emph{(Recursion Condition)}. For every $x\in\{*,\circ\}^i$, $i < n$, we have 
\[
\Phi(x) = p\cdot \Phi(x *) + (1-p)\cdot \Phi(x\circ). 
\]
\item \emph{(Termination Condition)}. For every $x\in\{*,\circ\}^n$, $\Phi(x)\in\zo$. Moreover, for $T_x\eqdef \{i\in[n]\mid x_i=\circ\}$, if $\Phi(x)=0$, then one of the following conditions holds: 
\begin{compactenum}
\item $|S_1\cup S_2\cup \dots\cup S_m\setminus T_x|\le b$.
\item There are disjoint nonempty sets $V_1,\dots,V_\ell\subseteq[n]$ that are subsets of $\ell$ distinct sets among $S_1,\dots,S_m$, such that $V_i\subseteq T_x$ for every $i\in[\ell]$, where $\ell \ge k\ln n$. 
\end{compactenum}
Moreover, the disjoint sets $V_1,\dots,V_\ell$ can be obtained by a $\PV$ function given $S_1,\dots,S_m$ and $x\in\{*,\circ\}^n$. 
\end{compactitem}
\end{lemma}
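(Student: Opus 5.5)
The plan is to reduce the general set system to the disjoint case of \Cref{lmm: potential small sets} using a greedy maximal disjoint subfamily (a ``sunflower-style'' decomposition), and to define $\Phi$ as a sum of potentials for the failure events, one per level of a bounded-depth recursion on the set size $c$. I would proceed by induction on $c$, meta-level, so $c$ is a fixed constant in each instance. First, a $\PV$ function greedily selects a maximal family of pairwise disjoint sets $S_{j_1},\dots,S_{j_r}$ among $S_1,\dots,S_m$. There are two cases. If $r\ge \ell_0\eqdef$ some threshold of order $k'\ln n$ (with $k'$ a large constant depending on $k,c$), truncate to $m'\eqdef\ell_0\in\Log\Log$ disjoint sets and use $\Phi_\wide$ from \Cref{lmm: potential small sets} with $s\eqdef k\ln n$; the initial bound $m'^s(pc)^{m'-s}$ with $p\le n^{-1/2}$ is at most $n^{-k-1}$ for $n>n_0$, and if $\Phi_\wide(x)=0$ we get at least $s$ of these disjoint sets contained in $T_x$, giving condition 2 with $V_i\eqdef S_{j_i}$. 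If $r<\ell_0$, then the union $H$ of the selected sets has size at most $c\ell_0=O(\log n)$, and every $S_j$ meets $H$ by maximality.

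In the second case, for each $\sigma\subseteq H$ consider the ``link'' sets $S_j\setminus\sigma$ for those $S_j$ with $S_j\cap H=\sigma$ nonempty; these have size at most $c-1$, so the induction hypothesis (applied with $c-1$ and a larger $k$) gives potentials $\Phi_\sigma$ for each such subsystem. I would define $\Phi$ as the sum of: the potential of the event ``more than $b'$ coordinates of $H$ receive $*$'' (an explicit product/binomial formula, computable since only $|H|=O(\log n)$ coordinates matter, with initial value $\le\binom{|H|}{b'}p^{b'}\le n^{-k-1}$ for $b'$ a large constant), plus $\sum_\sigma\Phi_\sigma$, truncated at $1$ is not allowed (it would break linearity), so instead I would keep $\Phi$ as an unbounded sum and note the termination condition only needs $\Phi(x)\in\zo$ to be replaced by $\Phi(x)\in\bbN$; if the statement insists on $\zo$, replace the sum by the exact probability of the union of the failure events, computed by a conditional-probability formula that stays feasible because the relevant events depend on $O(\log n)$ coordinates of $H$ combined with the inductively defined potentials. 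The recursion condition then holds termwise by linearity, and the initial condition holds by a union bound over $2^{|H|}=n^{O(1)}$ values of $\sigma$, which is why the inductive $k$ must grow by an additive constant at each of the $c$ levels; the number of subsystems actually affected is bounded by choosing only $\sigma$ with $\sigma$ fully mapped to $\circ$, which is what makes the link argument meaningful.

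For termination: if $\Phi(x)=0$, at most $b'$ points of $H$ are stars, and for each $\sigma\subseteq H\cap T_x$ the subsystem satisfies condition 1 or 2 of the inductive hypothesis. If some subsystem gives $\ell$ disjoint sets $V_i\subseteq (S_{j_i}\setminus\sigma)\cap T_x$, these are subsets of distinct $S_j$'s, so condition 2 holds for the original system. Otherwise every subsystem leaves at most $b_{c-1}$ unfixed points, and sets meeting the star part of $H$ contribute at most $b'$ points of $H$ plus, for each of the $\le 2^{b'+\dots}$ relevant $\sigma$, at most $b_{c-1}$ further points, giving condition 1 with $b\eqdef b'+2^{c b'}b_{c-1}$ roughly. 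The main obstacle is exactly this bookkeeping: sets $S_j$ whose trace $\sigma$ contains a starred point are not controlled by a subsystem with $\sigma\subseteq T_x$, so I would instead index subsystems by the trace restricted to starred coordinates, handled by defining the subsystem potentials relative to the random star-pattern on $H$ (i.e., making $\Phi$ a conditional-expectation sum over patterns of $H$), and verifying the recursion condition for coordinates inside versus outside $H$ separately; all identities are finite polynomial manipulations over $O(\log n)$-size index sets and hence provable in $\PV_1$.
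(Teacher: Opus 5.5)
Your wide case matches the paper's, but your narrow case has a genuine gap, and it is the bookkeeping problem you flag yourself. Grouping the non-selected sets by their trace $\sigma=S_j\cap H$ and recursing on each link system separately cannot produce a constant $b$. The number of distinct traces can be $\Theta(\log n)$ or more. Each link system can satisfy condition 1 on its own. Nothing in your $\Phi$ bounds the \emph{total} number of starred link points.

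Here is a concrete failure. Take $c=2$ and a maximal disjoint family $\{a_i,b_i\}$, $i\le r$, listed first, so $H=\bigcup_i\{a_i,b_i\}$. Choose $r$ with $b<2r<k\ln n$. For each $h\in H$, add the set $\{h,y_h\}$, where the $y_h\notin H$ are fresh and distinct. Every link system is then a single singleton $\{y_h\}$, so each $\Phi_\sigma\equiv 0$, and your $\Phi$ reduces to the binomial tail on $H$. Now take a leaf $x$ with no star in $H$ and every $y_h$ starred. Then $\Phi(x)=0$, yet $|S\setminus T_x|=2r>b$. Moreover $S\cap T_x=H$, so at most $|H|=2r<k\ln n$ disjoint nonempty subsets of the $S_j$ fit inside $T_x$. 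Neither termination condition holds.

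Your proposed repair (indexing subsystems by the starred part of the trace, via a conditional-expectation sum over patterns of $H$) is not a concrete construction, and its recursion condition is not checked. Separately, summing potentials breaks the required range $\bbQ\cap[0,1]$ and the $\{0,1\}$-valued leaves. Replacing the statement by $\Phi(x)\in\bbN$ does not prove the lemma as stated, and you do not show that the fallback ``exact probability of the union'' is feasibly computable.

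The missing idea is to recurse on the \emph{single} residual family $\{S_j\setminus H\}_j$, pooling all traces, and to iterate rather than combine potentials. The paper computes maximal disjoint families $\mathcal{V}_1,\dots,\mathcal{V}_d$ of the successive residual families. By maximality, every surviving residual loses at least one element per round, so $d\le c$. Together the $\mathcal{V}_i$ cover $S_1\cup\dots\cup S_m$, and the members of each $\mathcal{V}_i$ are disjoint subsets of distinct $S_j$.

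If some $\mathcal{V}_i$ has at least $2k\ln n$ members, the paper uses $\Phi_{\mathsf{s}}$ on those members alone, with $s=\ell/2$, and ignores all earlier levels; this works because condition 2 by itself suffices. Otherwise the entire union has size at most $2kc^2\ln n$. Then one binomial-tail potential (``more than $b$ coordinates of $S$ are starred,'' with $b\ge 10k$) gives condition 1 directly.

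In either case only one potential is used. So the range and $\{0,1\}$-leaf requirements are automatic, and no union bound over traces is needed.
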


\newcommand{\calU}{\mathcal{U}}
\newcommand{\calV}{\mathcal{V}}

\begin{proof}
    Fix any constants $k,c\ge 1$ and let $b,n_0\ge 1$ be constants to be determined later. We argue in $\PV_1$. Fix $n,t,m\in\Log$, $p\eqdef t/n$, and $S_1,\dots,S_m\subseteq [n]$.

    \paragraph{Disjoint Set Decomposition.} Consider the iterative algorithm: Let $\calU_0\gets \{S_1,\dots,S_m\}$. In the $i$-th step, where $i \geq 1$, we choose a maximal set $\calV_i\subseteq \calU_{i-1}$ such that the sets in $\calV_i$ are disjoint, and compute $\calU_{i}$ as follows: 
    \begin{compactitem} 
    \item Let $\calU_i\gets \varnothing$. For every $S\in\calU_{i-1}$, we include $S\setminus (\bigcup_{S\in \calV_{i}} S)$ in $\calU_i$ if this set is nonempty.   
    \end{compactitem}
    Note that each set remaining in $\calU_i$ must be a subset of some $S_1,\dots,S_m$. Moreover, it must be of size at most $c-i$, as the maximal set $\calV_i$ intersects with each set in $\calU_{i-1}$ (otherwise it is not maximal). The algorithm terminates if no set is added to $\calU_i$ during an iteration. 
    
    Let $d\le c$ and $\calV_1,\dots,\calV_d$ be the sets obtained by the algorithm. One can prove that the union of the sets in $\calV_1,\dots,\calV_d$ covers $S_1\cup\dots\cup S_m$. Furthermore, it can be verified that for each $i\in[d]$ and $V_1,\dots,V_\ell\in \calV_i$, there are distinct $i_1,\dots,i_\ell\in [m]$ such that $V_1\subseteq S_{i_1},\dots,V_\ell\subseteq S_{i_\ell}$.  

    \paragraph{Wide Case.} We first consider the case that for some $i\in [d]$, $\calV_i$ contains at least $\ell \eqdef 2k\ln n$ sets. Fix $i$ to be the smallest number satisfying this. Let $V_1,\dots,V_\ell$ be the first $\ell$ sets in $\calV_i$. Let $\Phi_\wide(x)$ be the potential function in \Cref{lmm: potential small sets} for $V_1,\dots,V_\ell$ and $s\eqdef \ell/2$ (note that $\ell\in\Log\Log$). We define $\Phi(x)\eqdef \Phi_\wide(x)$. 
    
    It then suffices to verify that the three required properties hold.  
    \begin{compactitem}
    \item (\emph{Initial Condition}). By \Cref{lmm: potential small sets}, we know that 
    \[ 
    \Phi(\varepsilon)\le \ell^{\ell/2}\cdot (pc)^{\ell /2} \le \left(\frac{bkc\ln n}{\sqrt n}\right)^{\ell / 2} \le n^{-k}, 
    \] 
    where we assume a choice of $b \geq 2$ and 
    a sufficiently large $n$, which can be ensured by setting $n_0$ as a large constant. 
    \item (\emph{Recursion Condition}). It follows from the recursion condition in \Cref{lmm: potential small sets}. 
    \item (\emph{Termination Condition}). For $x\in\{*,\circ\}^n,\Phi(x)\in\zo$ by \Cref{lmm: potential small sets}. Let $T_x\eqdef \{i\in [n]\mid x_i=\circ\}$. Suppose that $\Phi(x)=0$. We know that there are at least $\ell/2\ge k\ln n$ sets $V$ among $V_1,\dots,V_\ell$ such that $V\subseteq T_x$. This satisfies the second termination condition. Moreover, the $k\ln n$ sets can be obtained by a $\PV$ function given $S_1,\dots,S_m$ and $x\in\{*,\circ\}^n$, using the algorithm that constructs $\calV_1,\dots,\calV_d$. 
    \end{compactitem}

    \paragraph{Narrow Case.} Now we consider the case that $\calV_i\le 2k\ln n$ sets for every $i\in[d]$. Note that the union of the sets in $\calV_1,\dots,\calV_d$ covers $S_1\cup \dots\cup S_m$. We know that 
    \[
    |S_1\cup \dots\cup S_m| = \left|\bigcup_{i\in [d]}\bigcup_{V\in \calV_i}V\right| \le d\cdot (2k  \ln n) \cdot c \le 2kc^2\ln n. 
    \]
    
    Recall that $b\ge1 $ is a constant to be determined. Let $S\eqdef S_1\cup \dots\cup S_m$. We define the potential function $\Phi(x)$ as follows. Given $x\in \{*,\circ\}^{i}$ for $i < n$, let $q^*\eqdef |\{j\in [i]\cap S\mid  x_j=*\}|$, $q^{\circ}\eqdef |\{j\in[i]\cap S\mid x_j=\circ\}|$, and $q^{\diamond}\eqdef |S| - q^*-q^\circ$. Then 
    \begin{equation}
    \Phi(x)\eqdef \sum_{j=\max\{b+1,q^*\}}^{|S|-q^\circ} \binom{q^\diamond}{j-q^*} p^{j-q^*}(1-p)^{q^\diamond-j+q^*}. 
    \end{equation}
    For instructive purposes, we mention that the combinatorial interpretation of the function is as follows. We randomly assign $x_{i+1},\dots,x_n$ independently to $*$ with probability $p$ and to $\circ$ with probability $1-p$. Then $\Phi(x)$ is the probability that the number of indices $j\in[n]\cap S$ such that $x_j=*$ is at least $b+1$. The combinatorial interpretation is \emph{not} a part of the $\PV_1$ proof. 

    Note that $\Phi(x)\ge 0$ as each term is non-negative, and $\Phi(x)\le 1$ by the \refmeta{binomial}. It suffices to verify the three required properties. 
    \begin{compactitem}
    \item (\emph{Initial Condition}). Note that 
    \[
    \Phi(\eps) \le \sum_{j=b+1}^{|S|} \binom{|S|}{j} p^j \le \sum_{j=b+1}^{|S|}|S|^j p^j\le \sum_{j=b+1}^{|S|} \left(\frac{2kc^2\ln n}{\sqrt n}\right)^{j} \le n^{-k},
    \]
    where the last inequality holds if we set $b\ge 10k$ and $n_0$ to be sufficiently large. 
    \item (\emph{Recursion Condition}). Fix any $x\in\{*,\circ\}^i$ for $i< n$, we consider whether $i+1\in S$. If not, we have that $\Phi(x) = \Phi(x*) = \Phi(x\circ)$ and the equation holds. Otherwise, let $q^*\eqdef |\{j\in [i]\cap S\mid  x_j=*\}$, $q^{\circ}\eqdef \{j\in[i]\cap S\mid x_j=\circ\}$, and $q^{\diamond}\eqdef |S| - q^*-q^\circ$. We can see that 
    \begin{align*}
    & \Phi(x*)\eqdef \sum_{j=\max\{b+1,q^*+1\}}^{|S|-q^\circ} \binom{q^\diamond-1}{j-q^*-1} p^{j-q^*-1}(1-p)^{q^\diamond-j+q^*}; \\ 
    & \Phi(x\circ)\eqdef \sum_{j=\max\{b+1,q^*\}}^{|S|-q^\circ-1} \binom{q^\diamond-1}{j-q^*} p^{j-q^*}(1-p)^{q^\diamond-1-j+q^*}.
    \end{align*}
    It follows from \Cref{equ: binom def} that $\Phi(x)=p\cdot \Phi(x*)+(1-p)\cdot \Phi(x\circ)$. 
    \item (\emph{Termination Condition}). For $x\in\{*,\circ\}^n$, let $q^*,q^\circ,q^\diamond$ be defined as above. We have $q^\diamond=0$ and $q^*+q^\circ = |S|$. In that case, one can observe that if $q^*>b$, then $\Phi(x)=1$, and otherwise $\Phi(x)=0$. Moreover, we know by the definition of $q^*$ that if $\Phi(x)=0$, for $T_x\eqdef \{i\in[n]\mid x_i=\circ\}$, we have $|S\setminus T_x| = q^*\le b$. This satisfies the first termination condition.  
    \end{compactitem}
    This completes the proof. 
\end{proof}

We are now ready to state the \emph{Subset Selection Lemma}. 

\dummylabel{subset selection}{Subset Selection Lemma}
\begin{lemma}[Subset Selection Lemma]\label{lem:subset_selection}
For every $k,c\ge 1$, there are constants $b,n_0\ge 1$ such that the following sentence is provable in $\PV_1$. Let $n,t,\ell\in\Log$, $n>n_0$, $\ell\le n^k$, and $F_1,F_2,\dots,F_\ell$ be $c$-NFs over $n$ input variables. If $t\le \sqrt n$, there exists a subset $T\subseteq[n]$ of size at most $n-t$ such that for every $i\in [\ell]$, at least one of the following conditions hold. 
\begin{compactitem}
\item If we fix the $j$-th variable for every $j\in T$, $F_i$ is fed by at most $b$ literals.  
\item There are $m'\ge k\ln n$ disjoint non-empty clauses $C_1',\dots,C'_{m'}$ such that (1) each $C'_j$ is a sub-clause of a different clause in $F_i$; (2) $S_{C_j'}\subseteq T$.
\end{compactitem}
Moreover, the subset $T$ and the clauses $C_1',\dots,C_m'$ for each $i$ are computed by a $\PV$ function given $F_1,\dots,F_\ell$ and $1^n,1^t,1^\ell$.  
\end{lemma}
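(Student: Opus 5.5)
We will derive the Subset Selection Lemma from \Cref{lmm: potential general} by the method of conditional expectations over a combined potential. For each $c$-NF $F_i$, let $S^{(i)}_1,\dots,S^{(i)}_{m_i}$ be the variable sets $S_C$ of its clauses; these are nonempty sets of size at most $c$. We may drop constant clauses, since $S_C^+\cap S_C^-=\varnothing$. We apply \Cref{lmm: potential general} with exponent $k' \eqdef 2k+1$ in place of $k$ to get circuits $\Phi^{(i)}$ and constants $b,n_0$. The combined potential is $\Psi(x)\eqdef \Phi_0(x)+\sum_{i\in[\ell]}\Phi^{(i)}(x)$. Here $\Phi_0$ is a potential controlling the size of $T$: $\Phi_0(x)$ is the (binomially computable, as in the narrow case of \Cref{lmm: potential general}) probability that fewer than $t$ of the coordinates end up as $*$, given that the remaining coordinates are independently $*$ with probability $p'$. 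We would rather use a slightly larger $p'$, say $p'=2t/n$, so that the lower tail is small. This is why the final bound is stated as $|T|\le n-t$ rather than as an equality. Alternatively, and more simply, we take $p=t/n$, drop $\Phi_0$, and at the end pad the set of $*$'s by moving arbitrary coordinates from $T$ to $*$. This is harmless because it only unfixes variables: condition 1 uses $S\setminus T_x$, which could grow. So padding does break condition 1, and I will keep $\Phi_0$ with $p'=2t/n\le 2\sqrt n$. The initial condition of \Cref{lmm: potential general} needs $t\le\sqrt n$; we handle this by applying it with $t'=2t$ after enlarging $n_0$, or by adjusting constants, since the proof there only used $p\le O(1/\sqrt n)$.

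The argument then proceeds in the following order. First, the initial value: $\Psi(\eps)\le \Phi_0(\eps)+\ell\cdot n^{-k'}\le \Phi_0(\eps)+n^{-k-1}<1$. Here $\Phi_0(\eps)$ is the Chernoff-type lower tail of $\mathrm{Bin}(n,2t/n)$ below $t$. When $t$ is large this is bounded by a simple Chebyshev-style computation on binomial sums. When $t$ is tiny, say $t\le$ a constant, we instead choose $p'$ so that the requirement is vacuous or the bound is trivial. I would try to sidestep this entirely by requiring only $|T|\le n-t$ when $t\ge 1$. Second, each summand satisfies the recursion $\Psi(x)=p\Psi(x*)+(1-p)\Psi(x\circ)$, provably in $\PV$, by linearity of the recursion conditions. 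Third, we run the greedy algorithm: starting from $x=\eps$, append whichever of $*,\circ$ gives the smaller $\Psi$. This keeps $\Psi$ non-increasing, which is proved by induction on open formulas (\Cref{thm: ind via binary search}) in $\PV_1$. It yields $x\in\{*,\circ\}^n$ with $\Psi(x)<1$. Fourth, by the termination conditions every $\Phi^{(i)}(x)\in\zo$ and $\Phi_0(x)\in\zo$, so all of them are $0$. Setting $T\eqdef T_x$, $\Phi_0(x)=0$ gives at least $t$ stars, hence $|T|\le n-t$. For each $i$, $\Phi^{(i)}(x)=0$ gives one of the two cases. In case 1, only $b$ variables of $F_i$ stay unfixed. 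In case 2, we get disjoint sets $V_j\subseteq T$, each inside a distinct clause, with $\ell'\ge k'\ln n\ge k\ln n$ of them. The sub-clause $C'_j$ is the restriction of that clause to the literals on $V_j$. The "moreover" part follows since the greedy run, $T$, and the $V_j$ are all $\PV$-computable.

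The main obstacle is the size guarantee on $T$, i.e. making $\Phi_0$ work uniformly for all $t\le\sqrt n$ while keeping $p'$ small enough for the initial condition of \Cref{lmm: potential general}. The rest is bookkeeping on parameters, which are $\Log$ or $\Log\Log$ so that all sums are feasible.
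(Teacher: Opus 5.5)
Your plan is correct, and its skeleton is the paper's. You take one potential per $c$-NF from \Cref{lmm: potential general}, sum them, run the greedy conditional-expectation walk down to a leaf, and read off $0/1$ values there. The genuine difference is how you force $|T|\le n-t$.

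The paper keeps $p=t/n$ and uses \Cref{lmm: potential general} verbatim, with $3k$ in place of $k$. It then adds a \emph{linear} term $q^\circ(x)+(1-p)(n-|x|)$, which is the conditional expectation of the number of $\circ$'s (here $q^\circ(x)$ counts the $\circ$'s in $x$). It also weights the NF potentials by $n$, so $\Phi(x)=q^\circ(x)+(1-p)(n-|x|)+n\sum_i\Phi_i(x)$ and $\Phi(\eps)<(1-p)n+1=n-t+1$. At a leaf this $\Phi$ is an integer, hence at most $n-t$. For $t\ge 1$ this gives both $|T_x|\le n-t$ and $\Phi_i(x)=0$ for all $i$, with no tail estimate at all.

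Your bad-event potential $\Phi_0$ for ``fewer than $t$ stars'' instead forces you to inflate to $p=2t/n$ (note $p\le 2/\sqrt n$, not $2\sqrt n$). So you must re-check \Cref{lmm: potential general} for $t'\le 2\sqrt n$. This is harmless, since its proof only uses $p=O(1/\sqrt n)$. You must also prove a lower-tail bound in $\PV_1$.

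That obstacle is milder than you fear, and your special handling of small $t$ is unnecessary. Since $j<t$ forces $|j-2t|\ge t+1$, you can use the identity $\sum_{j}\binom{n}{j}p^j(1-p)^{n-j}(j-np)^2=np(1-p)$. It is provable in $\PV_1$ by induction on $n\in\Log$, just like the Binomial Theorem. It gives $\Phi_0(\eps)\le 2t/(t+1)^2\le 1/2$ for every $t\ge 1$, and $\Phi_0\equiv 0$ when $t=0$.

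Your version buys a uniform template: every summand is a $[0,1]$-valued bad-event probability, and the conclusion is a plain union bound. The paper's integrality trick buys a shorter argument, with no concentration estimate and no change of parameters in \Cref{lmm: potential general}.
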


\begin{proof}
Fix any $k,c\ge 1$ and let $b,n_0\ge 1$ be determined later. We argue in $\PV_1$. Fix $n,t \in\Log$, $\ell\le n^k$, $p\eqdef t/n$, and $c$-NFs $F_1,\dots,F_\ell$ over $n$ variables. 

Fix any $i\in[\ell]$. Suppose that $F_i$ has $m_i$ clauses, and let $C_{ij}$ be the $j$-th clause of $F_i$. Let $S_{ij}\eqdef S_{C_{ij}}$ be the subset of variables which or whose negation feeds $C_{ij}$. For each $i\in[\ell]$, consider the sets $S_{i1},\dots,S_{im_i}$; by \Cref{lmm: potential general} (using $3k$ instead of $k$), there are constants $b',n_0'$ and a potential function $\Phi_i:\{*,\circ\}^{\le n}\to\bbQ\cap[0,1]$ such that the following conditions hold. 
\begin{compactitem}
\item (\emph{Initial Condition}). $\Phi_i(\eps)\le n^{-3k}$; 
\item (\emph{Recursion Condition}). $\Phi_i(x)=p\cdot \Phi_i(x*)+(1-p)\cdot \Phi_i(x\circ)$. 
\item (\emph{Termination Condition}). For $x\in\{*,\circ\}^n$, $\Phi_i(x)\in\zo$. Moreover, let $T_x\eqdef \{i\in[n]\mid x_i=\circ\}$. If $\Phi_i(x)=0$, then one of the following two conditions holds: 
\begin{compactenum}
\item $|S_{i1}\cup \dots\cup S_{im_i}\setminus T_x|\le b$. This effectively means that if we fix $j$-th variable for every $j\in T_x$, then $F_i$ is fed by at most $b$ literals. 
\item There are disjoint nonempty sets $V_1,\dots,V_{\ell_i}\subseteq[n]$ that are subsets of $\ell_i$ distinct sets among $S_{i1},\dots,S_{im_i}$, such that $V_i\subseteq T_x$ for every $i\in[\ell]$, where $\ell_i \ge 3k \ln n \ge k \ln n$. This means that there are $m'=\ell_i\ge k\ln n$ disjoint clauses $C_1',\dots,C_{m'}'$, each of which is a sub-clause of a clause in $F_i$, such that $S_{C_{j}'}\subseteq T$.  
\end{compactenum}
Therefore, if $\Phi_i(x) = 0$, then the clause satisfies the required property if we choose $T\eqdef T_x$.  
\end{compactitem}

Consider the potential function $\Phi(x)$ as follows. Given any $x\in\{*,\circ\}^i$, $i\le n$, let $q^\circ$ be the number of $\circ$'s in $x$. We define 
\[
\Phi(x)\eqdef q^\circ + (1-p)(n-i) + n\cdot \sum_{i\in[\ell]} \Phi_i(x). 
\]
We can show that $\Phi(x)$ satisfies the following conditions: 
\begin{compactitem}
\item (\emph{Initial Condition}). $\Phi(\eps)\le (1-p)n + n\cdot \ell \cdot n^{-3k} < (1-p)n + 1$. 
\item (\emph{Recursion Condition}). $\Phi(x)=p\cdot \Phi(x*) + (1-p)\cdot \Phi(x\circ)$. 
\item (\emph{Termination Condition}). For $x\in\{*,\circ\}^n$, $\Phi(x)$ is an integer. Moreover, $\Phi(x)$ is at most the sum of (1) the number of $*$'s in $x$ and (2) $n$ times the number of $c$-NFs $F_i$ that violates the required properties if we choose $T=T_x=\{i\in[n]\mid x_i=\circ\}$. In particular, if $\Phi(x)\le (1-p)n$, $T=T_x$ is a desired subset.   
\end{compactitem}

It remains to construct $x\in\{*,\circ\}^n$ such that $\Phi(x)\le (1-p)n$. Indeed, it can be obtained by the greedy algorithm that, starting from $x\gets \eps$,  appends either $*$ or $\circ$ to $x$ to minimize $\Phi(x)$. By induction on $i$, we can prove that the string $x\in\{*,\circ\}^{\le i}$ after the $i$-th round of the algorithm satisfies that $\Phi(x)\le \Phi(\eps) < (1-p)n+1$. This is available in $\PV_1$ as the property can be verified by a straightforward polynomial-time algorithm. Finally, the string $x\in\{*,\circ\}^n$ obtained after $n$ rounds satisfies that $\Phi(x)\le (1-p)n$, as it must be an integer smaller than $(1-p)n+1$. This completes the proof. 
\end{proof}

\subsubsection{Average-Case Lower Bound in \texorpdfstring{$\APX_1$}{APX1}}

We say that a partial assignment $\rho$ \emph{trivializes} an NF $F$ if $F$ is a constant function after applying $\rho$. Note that for every NF $F$ that contains a non-constant clause $C$, there is an assignment to variables in $C$ that trivializes $F$. 

\dummylabel{random restriction}{Random Restriction Lemma}
\begin{lemma}[Random Restriction Lemma] \label{lem:random_restriction_lemma}
    For every $k, c, b\in\bbN$, there exists an $n_0\ge 1$ such that the following is provable in $\APX_1$. Let $n,t,\ell\in\Log$, $n>n_0$. Let $F_1,\dots,F_\ell$ be  $c$-NFs over $n$ input variables, and $T\subseteq[n]$ be a subset of size $n-t$ such that for every $i\in[\ell]$, at least one of the following conditions hold. 
    \begin{compactitem}
    \item \emph{(Narrow)}. If we simultaneously fix all variables whose indices are in $T$, the $c$-NF $F_i$ will be fed by at most $b$ literals. 
    \item \emph{(Wide)}. There are $m'_i\ge k\ln n$ (explicitly given) non-empty disjoint clauses $C_{i,1}',\dots,C'_{i,m'_i}$ such that (1) each $C_{i,j}'$ is a sub-clause of a different clause in $F_i$; (2) $S_{C'_{i,j}}\subseteq T$ for every $j\in[m'_i]$. 
    \end{compactitem}

    Let $Y$ be the random variable over $\zo$ that takes a seed $x$ of length $n$, parses it as an assignment $\rho$ to variables in $T$ (i.e.~it fixes the $i$-th variable to $x_i$ for every $i\in T$), and outputs $1$ if and only if at least one of $F_1,\dots,F_\ell$ is neither trivialized nor depends on at most $b$ literals after applying $\rho$. Then for every $\delta^{-1},\beta^{-1}\in\Log$ 
    \[
    \bbE_\delta[Y] \le \ell \cdot (1-2^{-c}+\beta)^{k\ln n} + \delta+\beta. 
    \]
\end{lemma}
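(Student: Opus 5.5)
The plan is a union bound over the $\ell$ formulas, combined with a per-formula bound that comes from the one-sided error reduction lemma (\Cref{thm:one-sided}). Write $L\eqdef\lceil k\ln n\rceil$. For each $i\in[\ell]$, let $Y_i$ be the indicator of the event ``$F_i$ is neither trivialized nor fed by at most $b$ literals after $\rho$.'' Pointwise, $Y(x)\le Y_1(x)\lor\dots\lor Y_\ell(x)$, and this is provable in $\PV_1$. The \refmeta{union bound} therefore gives $\bbE_\eta[Y]\le\sum_i\bbE_\eta[Y_i]+3\eta\ell$ for an auxiliary precision $\eta$. So it suffices to show $\bbE_\eta[Y_i]\le(1-2^{-c}+\beta)^{L}+O(\eta)$ for each $i$. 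At the end we convert from $\eta$ to $\delta$ using the \refmeta{precision exp}, taking $\eta\eqdef\beta/(100\ell)$.

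Fix $i$. If $F_i$ is narrow, then $Y_i\equiv 0$. This is provable, since fixing all of $T$ leaves at most $b$ literals whatever values are used. The \refmeta{global} of approximate counting and the \refaxiom{boundary} then give $\bbE_\eta[Y_i]\le 3\eta$.

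Suppose instead $F_i$ is wide, with disjoint sub-clauses $C'_{i,1},\dots,C'_{i,L}$ whose variable sets lie in $T$. For each $j$, let $D_j$ be the circuit that reads only the bits of the seed indexed by $S_{C'_{i,j}}$ and accepts iff $\rho$ restricted to those bits does \emph{not} make $C'_{i,j}$ evaluate to the value that trivializes its parent clause, and hence $F_i$. (For a CNF this means falsifying the sub-clause; for a DNF it means satisfying the sub-term.) Any such $\rho$ trivializes $F_i$, so pointwise $Y_i\le\bigwedge_{j\le L}D_j$. The $D_j$ read disjoint parts of the seed.

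Each $D_j$ has at most $c\le\log\log$ relevant inputs and exactly one rejecting assignment on those inputs. Using the \refmeta{brute force} together with the \refmeta{avg on exp}, we get $\P_\eta(D_j)\le 1-2^{-|S_{C'_{i,j}}|}+O(\eta)\le 1-2^{-c}+O(\eta)$. The averaging step fixes the irrelevant seed bits, exactly as in the linear hashing proof.

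Next apply the \refmeta{multiplication BRVs} to the explicitly independent $D_1,\dots,D_L$. This yields $\bbE_\eta[\prod_j D_j]\le\prod_j\bbE_\eta[D_j]+8\eta L\le(1-2^{-c}+O(\eta))^{L}+8\eta L$. Finally, the \refmeta{avg on exp}, applied to $Y_i-\prod_j D_j\le 0$ pointwise, transfers this bound to $Y_i$.

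The main obstacle is formalization bookkeeping rather than mathematics. The $D_j$ are not i.i.d. copies on consecutive blocks: they are explicitly independent on scattered coordinates, and they are not identically distributed. Still, the multiplication principle only needs explicit independence, so it applies. The permutational symmetry results let us treat the coordinates as contiguous blocks if needed.

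The other point is that $L$ depends on $n$, and the statement uses $k\ln n$ as an exponent. I would use exactly $L$ sub-clauses and note that $(1-2^{-c}+\beta)^{L}\le(1-2^{-c}+\beta)^{k\ln n}$. The $O(\eta)$ error absorbed into the base is harmless because it sits inside the $+\beta$, and $n_0$ only needs to make $L\ge1$.
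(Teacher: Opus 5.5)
Your outline is the paper's proof:
\begin{itemize}
\item a union bound over the $F_i$, with narrow formulas contributing nothing;
\item brute-force counting for each sub-clause;
\item the multiplication principle on the disjoint sub-clauses;
\item the averaging argument to pass from a pointwise inequality to expectations;
\item precision smoothing at the end.
\end{itemize}
The step that fails is the pointwise domination $Y_i\le\bigwedge_j D_j$, that is, the claim that a $\rho$ falsifying some $C'_{i,j}$ trivializes $F_i$ (CNF case; dually for DNFs). That is true only when $C'_{i,j}$ is an entire clause of $F_i$. If it is a proper sub-clause, falsifying it leaves the other literals of the parent clause. Their variables may lie outside $T$, and $\rho$ never touches them.

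Concretely, take $c=2$, $b=1$, $\ell=1$, and $F_1=\bigwedge_{j=1}^{M}(x_j\lor y_j)$ on distinct variables, with $M\ge k\ln n$, $y_j\in T$ and $x_j\notin T$. The one-literal sub-clauses $(y_j)$ witness (Wide). Yet $F_1|_\rho=\bigwedge_{j:\rho(y_j)=0}x_j$ is neither constant nor dependent on at most one literal as soon as two of the $y_j$ are set to $0$. In the exact standard model $\bbM^*$ this gives $\bbE_\delta[Y]=1-(M+1)2^{-M}$. The claimed bound $(3/4+\beta)^{k\ln n}+\delta+\beta$, by contrast, tends to $0.02$ for $\delta=\beta=1/100$ and large $n$. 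So if proper sub-clauses are allowed, the statement is false in a standard model, and by soundness it is not provable in $\APX_1$ by any argument. No amount of bookkeeping rescues this step.

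The paper's proof has the same hole. It asserts $Y\le\bigvee_{i\le\ell'}X_i$ with $X_i=\prod_j X_{ij}$, which is exactly your domination, without justification. Both arguments go through verbatim once (Wide) requires each $C'_{i,j}$ to be a whole clause of $F_i$. That requirement holds automatically for the $1$-NFs in Restriction~1. It does not hold for the sets in $\calV_i$, $i\ge 2$, produced by \Cref{lmm: potential general}, since those are proper subsets of clauses. So the paper's later application of this lemma in Restriction~2 would also need revisiting.

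A smaller point: the multiplication principle costs $8\eta L$ per wide formula, so your total slack is $O(\eta\,\ell L)$. Your choice $\eta\eqdef\beta/(100\ell)$ is therefore too coarse; take $\eta\eqdef\beta/(100\,\ell L)$ instead (the paper uses $\beta/(20\ell 2^c n^c)$).
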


\begin{proof}
    We argue in $\APX_1$. Fix $k,c,b\in\bbN$ and let $n_0\ge 1$ be a constant to be determined. Fix $n,t,\ell\in\Log$, $F_1,\dots,F_\ell$, and $T\subseteq[n]$. Let $\eta^{-1}\in\Log$ be a parameter to be determined later.  
    
    For simplicity, we assume that $F_1,\dots,F_{\ell'}$ are the gates that satisfy the second condition, and let $m'_i\ge k\cdot \ln n$ be the number of disjoint clauses $C'_{i,1},\dots,C'_{i,m'_i}$. We will define random variables over $\zo$ with $n$ bit seeds as follows. 
    \begin{compactitem}
    \item For every $i\in[\ell']$ and $j\in[m_i']$, $X_{ij}$ is defined as the following random variable: Let $x\in\zo^n$ be the seed. Then $X_{ij} = 1$ if and only if fixing the $r$-th variable to $x_r$ for every $r \in T$ does not trivialize $C'_{i,j}$.
    \item For every $i\in[\ell']$, let $X_{i}=\prod_{j=1}^{m'_i} X_{ij}$.
    \item Notice that $Y \leq \bigvee_{i\in[\ell']} X_i$.  
    \end{compactitem}
    Note that each $X_{ij}$ reads at most $c\in\bbN$ bits of its seed. Thus by the \refmeta{brute force}, we know that $\bbE_\eta[X_{ij}] \le 1-2^{-c} + 2\eta$. By the \refmeta{multiplication}, we have 
    \[
    \bbE_\eta[X_{i1}X_{i2}\dots X_{im'_i}] \le (1-2^{-c} + 2\eta)^{m'_i} + 8\eta\cdot m'_i \le (1-2^{-c} + 2\eta)^{k\cdot \ln n} + 8\eta\cdot |F_i| \le (1-2^{-c}+\beta)^{k\ln n} + 8\eta \cdot |F_i|,
    \]
    where the last inequality holds when $\eta\le \beta / 2$. By the \refmeta{union bound}, 
    \[
    \bbE_\eta[Y]\le \sum_{i=1}^{\ell'} \bbE_\eta[X_i] + 3\eta\cdot \ell' \le \ell\cdot (1-2^{-c}+\beta)^{k\ln n} + 8\eta \cdot \ell \cdot (2^c n^c)+ 3\eta \cdot \ell. 
    \]
    The lemma then follows from the \refmeta{precision exp} by setting $\eta \eqdef \beta / (20 \cdot \ell \cdot 2^c n^c)$.
\end{proof}

Now we are ready to prove the average-case lower bound for $\oplus_n$ against $\AC^0$.

\AvgParity*

\begin{proof}
    We prove by induction on $d$ in the meta-theory that the statement holds for every $k$. The constant $n_0^{k,d}\ge 1$ will be determined later in the proof. In both the base case and induction case, we argue in $\APX_1$. Fix $n,\delta^{-1},\beta^{-1}\in\Log$ and the circuit $C:\zo^n\to\zo$. Let $\eta^{-1}\in\Log$ be a parameter to be determined later. 

    \paragraph{Base Case.} Suppose that $d=1$. Towards a contradiction, assume that \Cref{equ: prob T small AC0} does not hold. Then by the \refaxiom{precision}, we have 
    \begin{equation}
    \P_\eta(T_C) > \frac{1}{2} + \frac{1}{n^{k}} + (\beta - 2\eta). \label{equ: contra T large AC0}
    \end{equation}
    Note that $\Fix_b(T_C)$ is the circuit $T_{\Fix_b(C)\oplus b}$. 

    We first show that for $D:\zo^n\to\zo$, if $D$ depends on at most $n-1$ of its input bits, then $\P_\eta(T_D)\le 1/2 + O(\eta)$. Suppose that $D$ does not depend on the $i$-th input bit. Let $X_D$ be the indicator random variable of $T_D$ and $\rho$ be any assignment to all but the $i$-th input bit of $D$. By the \refmeta{brute force}, we have that $\bbE_\eta[X_D|_\rho]\le 1/2 + 2\eta$. Subsequently, by the \refmeta{avg on exp}, we can conclude that 
    \begin{equation}
    \bbE_\eta[X_D] \le \frac{1}{2} + 2\eta+ 3\eta \le \frac{1}{2} + 5\eta.  \label{equ: ub constant circuit}
    \end{equation}
    This implies that $\P_\eta(T_D)\le 1/2 + 8\eta$ by \Cref{prop: indicator variable}. 

    Now we assume that $C$ depends on all of its input bits and is of depth at most $d = 1$. Consider the following iterative $\P$-oracle algorithm. Let $C_0\eqdef C$ and $s_0\eqdef 0$. The algorithm maintains the invariant that after the $i$-th round, $C_i$ depends on all of its input bits. Given that the invariant holds after the $(i-1)$-th round, there exists $b_i\in\zo$ such that 
    \begin{compactitem}
    \item $\Fix_{b_i}(C_{i-1})$ is a constant circuit;
    \item $\Fix_{1-b_i}(C_{i-1})$ depends on all of its input bits.   
    \end{compactitem} 
    Fix that $b_i\in\zo$. The algorithm then defines $C_i\eqdef \Fix_{1-b_i}(C_{i-1})\oplus (1-b_i)$ and $s_i\eqdef s_{i-1}\oplus (1-b_i)$ in the $i$-th round, and the invariant is maintained. 
    
    We will prove by induction on $i\le n$ that 
    \[
    \P_\eta(T_{C_i\oplus s_i}) > \frac{1}{2}+\frac{2^i}{n^k}+(\beta - 8(i+1)\cdot\eta). 
    \]
    (Note that $n \in \Log$ and $i \leq n$, so the induction hypothesis can be expressed by an open formula in $\APX_1$.)
    The base case follows from \Cref{equ: contra T large AC0}. Suppose that the inequality holds for $i<n$. Notice that 
    \begin{align*}
    \P_\eta(T_{C_{i}\oplus s_i}) &\le \frac{\P_\eta(\Fix_{b_i}(T_{C_{i}})\oplus s_i) + \P_\eta(\Fix_{1-b_i}(T_{C_i})\oplus s_i)}{2} \tag{\refaxiom{precision}} \\ 
    & = \frac{\P_\eta(T_{\Fix_{b_i}(C_{i})\oplus s_i\oplus b_i}) + \P_\eta(T_{\Fix_{1-b_i}(C_i)\oplus s_i\oplus (1-b_i)})}{2} \\ 
    & = \frac{\P_\eta(T_{\Fix_{b_i}(C_{i})\oplus s_i\oplus b_i}) + \P_\eta(T_{C_{i+1}\oplus s_{i+1}})}{2} \\ 
    &\le \frac{1}{2}\left(\frac{1}{2}+8\eta+ \P_\eta(T_{C_{i+1}\oplus s_{i+1}})\right). 
    \end{align*}
    (The last inequality follows as $\Fix_{b_i}(C_i)$ is a constant circuit and thus does not depend on all of its input bits.) It then follows by the induction hypothesis that 
    \[ 
    \P_\eta(T_{C_{i+1}\oplus s_{i+1}}) > \frac{1}{2}+\frac{2^{i+1}}{n^k}+(\beta - 8(i+2)\cdot\eta).
    \]  
    
    We set $n_0\in\bbN$ to be sufficiently large such that $2^{n-10}/n^k\ge 1$ for every $n> n_0$. Therefore, we have that $\P_\eta(T_{C_{n-10}}) > 3/2+(\beta - 8(n+1)\cdot \eta)$, where $C_{n-10}$ has input length exactly $10$ and is an $\AC^0$ circuit of depth $1$. This is provably impossible if we set $\eta \eqdef \beta / (20n)$ by the \refmeta{brute force}. 

    \paragraph{Induction Case.} Suppose that the theorem holds for $d\in\bbN$. Our goal is to prove the theorem for $d+1$. Let $n_0^{k,d}$ be the constant $n_0\ge 1$ corresponding to the theorem for $d$ and $k$. Fix any $k\ge 1$ and let $n_0^{k,d+1}$ be a constant to be determined. Towards a contradiction, assume that \Cref{equ: prob T small AC0} does not hold. Then by the \refaxiom{precision}, we have 
    \begin{equation}
    \P_\eta(T_C) > \frac{1}{2} + \frac{1}{n^{k}} + (\beta - 2\eta). \label{equ: contra T large AC0 ind}
    \end{equation}
    
    At a high level, we will apply random restrictions twice to convert $C$ to an $\AC^0_{d}$ circuit that computes the parity function w.h.p.~on a smaller input length; after that, we can apply the induction hypothesis to conclude the proof.  

    \subparagraph{Restriction 1.} Let $G_1,G_2,\dots,G_\ell$ be the gates in the first layer, i.e., directly fed by literals. We may view them as $1$-NFs, as each literal can be viewed as a clause with one literal. Let $t\eqdef \sqrt n$. By the \refmeta{subset selection}, there exists a subset $T_1\subseteq[n]$ of size at most $n-t$ such that for every gate $G_i$, one of the conditions hold. 
    \begin{compactitem}
    \item If we fix all variables in $T_1$, $G_i$ will be fed by at most $b_1$ variables, where $b_1\in\bbN$ is a constant. 
    \item At least $100k\cdot \ln n$ literals of $G_i$ are using variables in $T_1$. 
    \end{compactitem}
    Fix the subset $T_1\subseteq[n]$. We assume that $|T_1|=n-t$; if not, we add $n-t-|T_1|$ arbitrary elements to it. 
    
    We define random variables over $\zo$ with $n$ bit seeds as follows. 
    \begin{compactitem}
    \item Let $Y$ be the random variable in the \refmeta{random restriction}. That is, given $x\in\zo^n$, it parses $x$ as a partial assignment $\rho$ to variables in $T_1$, and outputs $1$ if and only if at least one of the gates is neither trivialized nor depends on at most $b$ literals after applying $\rho$. 
    \item Let $Y_T$ be the indicator random variable of $T_C$. 
    \end{compactitem}
    By the \refmeta{random restriction} and using $c = 1$, we have that 
    \[
    \bbE_\eta[Y]\le \ell \cdot (1/2 +\eta)^{100 k\ln n} + 2\eta \le n^{-8k}, 
    \]
    where the last inequality holds when $\eta$ is sufficiently small and $n$ is sufficiently large (by setting $n_0^{k,d+1}\in\bbN$). By the \refmeta{avg on exp}, there exists an assignment $\rho_1$ to variables in $T_1$ such that 
    \begin{equation}
    \bbE_\eta[Y_T|_{\rho_1}]-\bbE_\eta[Y|_{\rho_1}] \ge \frac{1}{2} + \frac{1}{n^{k}} + (\beta - 30\eta) - \frac{1}{n^{8k}} > 5\eta,
    \end{equation}
    where the last inequality holds if $\eta$ is sufficiently small. 

    Fix the assignment $\rho_1$. Note that $Y|_{\rho_1}\in\zo$ as $Y$ only reads its input variables in $T_1$. Therefore, we must have $Y|_{\rho_1} = 0$. In this case, all gates are either trivialized or fed by at most $b_1$ variables after applying $\rho_1$, and thus can be replaced by a gate of fan-in at most $b_1$. 
    
    Let $\sigma_1\eqdef \oplus_{n-t}(\rho_1)$, $n_1\eqdef t$, and $C_1:\zo^{n_1}\to\zo$ be the circuit obtained from $C$ by applying the assignment $\rho_1$, replacing each gate in the first layer with an equivalent gate of fan-in at most $b_1$, and XORing the output of the circuit with the bit $\sigma_1$. Note that $C|_{\rho_1}(x)=C_1(x)\oplus \sigma_1$ for every $x\in\zo^{n_1}$. Moreover, $C_1$ is of size at most $n^k\le n_1^{2k}$. 
    
    Let $T_{C_1}:\zo^{n_1}\to\zo$ be the circuit that, given $x$, it outputs $1$ if and only if $C_1(x)=\oplus_{n_1}(x)$. It turns out that $Y_T|_{\rho_1}$ is the indicator random variable of $T_{C_1}$; to see this, notice that for every assignment $x$ to all variables but $T_1$, $C(x\cup \rho_1) = \oplus_n(x\cup \rho_1)$ if and only if $C|_{\rho_1}(x)=(\oplus_{n_1}(x))\oplus \sigma_1$, where $C|_{\rho_1}(x)\oplus \sigma_1 = C_1(x)$. Therefore, by \Cref{prop: indicator variable}, we have that 
    \[
    \P_\eta(T_{C_1}) \ge \bbE_{\eta}[Y_T|_\rho] - 3\eta \ge \frac{1}{2} + \frac{1}{n^{k}} + (\beta -33\eta)-\frac{1}{n^{8k}} \ge \frac{1}{2}+\frac{1}{n_1^{4k}} + (\beta - 33\eta), 
    \]
    where the last inequality holds if $n$ is sufficiently large (by setting $n_0^{k,d+1}\in\bbN$).

    \subparagraph{Restriction 2.} As mentioned above, each gate in the first layer of $C_1$ has fan-in at most $b_1$, and thus the gates in the second layer of $C_1$ computes $b_1$-NFs. Let $F_1,F_2,\dots,F_{\ell_1}$ be the $b_1$-NFs in the second layer of $C_1$. Let $t_1\eqdef \sqrt n_1$. By the \refmeta{subset selection} with appropriate choice of parameters, there exists a subset $T_2\subseteq [n]$ of size at most $n_1-t_1$ such that for every $i\in[\ell_1]$, one of the conditions hold. 
    \begin{compactitem}
    \item If we fix all variables in $T_2$, $F_i$ will depend on at most $b_2$ variables, where $b_2\in\bbN$ is a constant. 
    \item There are $m_1'\ge 100k\cdot 4^{b_1}\cdot  \ln n$ disjoint sub-clauses of $F_i$ that only use literals from variables in $T_2$. 
    \end{compactitem}
    Let $n_2\eqdef t_1$ and fix the set $T_2\subseteq[n]$. We assume that $|T_2|=n_1-t_1$; if not, we add $n_1-t_1-|T_2|$ arbitrary elements to it. We define  random variables over $\zo$ with $n$ bit seeds as follows. 
    \begin{compactitem}
    \item Let $Y'$ be the random variable in the \refmeta{random restriction}. That is, given $x\in\zo^n$, it parses $x$ as a partial assignment $\rho$ to variables in $T_2$, and outputs $1$ if and only if each of the $b$-NFs is either trivialized or depends on at most $b_2$ literals after applying $\rho$. 
    \item Let $Y_T'$ be the indicator random variable of $T_{C_1}$. 
    \end{compactitem}
    By the \refmeta{random restriction}, we have that 
    \[
    \bbE_\eta[Y'] \le \ell_1\cdot (1 - 2^{-b_1}+\eta)^{100k\cdot 4^{b_1}\cdot \ln n} + 2\eta \le n_1^{-8k},  
    \]
    where the last inequality holds when $\eta$ is sufficiently small and $n$ is sufficiently large (by setting $n_0^{k,d+1}\in\bbN$). By the \refmeta{avg on exp}, there exists an assignment $\rho_2$ to variables in $T_2$ such that 
    \begin{equation}
    \bbE_\eta[Y'_T|_{\rho_2}]-\bbE_\eta[Y'|_{\rho_2}] \ge \frac{1}{2} + \frac{1}{n_1^{4k}} + (\beta - 36\eta) - \frac{1}{n_1^{8k}} > 5\eta,\label{equ: YT prime rho large}
    \end{equation}
    where the last inequality holds if $\eta$ is sufficiently small. 

    Fix the assignment $\rho_2$. Note that $Y'|_{\rho_2}\in\zo$ as it only reads its input variables in $T_2$. Therefore, we must have $Y_{\rho_2}' = 0$. In such case, all $b_1$-NFs (i.e.~gates in the second layer of $C_1$) are either trivialized or fed by at most $b_2$ variables. In such case, we can transform $C_1$ into an equivalent circuit of depth at most $d$ as follows. Suppose that $d\ge 2$ (the case for $d=1$ is left as an exercise). For each gate $G$ in the second layer, if it is not trivialized, we remove $G$ and consider each gate $G'$ in the third layer originally fed by $G$:
    \begin{compactitem}
    \item If $G'$ is an AND gate, we rewrite $G$ as an equivalent CNF of size at most $b_2\cdot 2^{b_2}$ and connect all clauses of it to $G'$. 
    \item If $G'$ is an OR gate, we rewrite $G$ as an equivalent DNF of size at most $b_2\cdot 2^{b_2}$ and connect all clauses of it to $G'$.  
    \end{compactitem}
    In either case, the circuit remains functionally equivalent. 
    
    Let $\sigma_2\eqdef \oplus_{n_1-t_1}(\rho_2)$ and $C_2:\zo^{n_2}\to\zo$ be the depth-$d$ circuit that computes $C_1|_{\rho_2}(x)\oplus \sigma_2$. The size of $C_2$ blows up by a linear factor, which is at most $O(n^k)\le n_2^{6k}$, when $n$ is sufficiently large (by setting $n_0^{k,d+1}\in\bbN$). Let $T_{C_2}:\zo^{n_2}\to\zo$ be the circuit that, given $x$, it outputs $1$ if and only if $C_2(x)=\oplus_{n_2}(x_2)$. As before, $Y_T'|_\rho$ is the indicator random variable of $T_{C_2}$. Therefore, we have 
    \begin{align}
    \P_\eta(T_{C_2}) & \ge \bbE_\eta[Y_T'|_\rho] - 3\eta \tag{\Cref{prop: indicator variable}} \\
    & \ge \frac{1}{2} + \frac{1}{n_1^{4k}} + (\beta - 39\eta) - \frac{1}{n_1^{8k}} \tag{\Cref{equ: YT prime rho large}} \\ 
    & \ge \frac{1}{2} + \frac{1}{n_2^{6k}} + (\beta - 39\eta) \nonumber\\ 
    & > \frac{1}{2} + \frac{1}{n_2^{6k}} + 2\eta, \label{equ: TC2 reach contradiction}
    \end{align}
    where the last two lines hold when $n$ is sufficiently large (by setting $n_0^{k,d+1}\in\bbN$) and $\eta$ is sufficiently small.  

    Now we arrive at a contradiction: $C_2:\zo^{n_2}\to\zo$ is a depth-$d$ circuit of size at most $n_2^{6k}$, and it computes parity with advantage $1/n_2^{6k}$. This violates \Cref{equ: prob T small AC0}. The theorem then follows from the induction hypothesis for depth $d$ and size $n_2^{6k}$ if we set $\eta^{-1}\in\Log$ and $n_0^{k,d+1}\in\bbN$ appropriately based on $b_1,b_2,n_0^{6k,d}$ and the requirements of inequalities used in the proofs. 
\end{proof}

\subsubsection{Worst-Case Lower Bound in \texorpdfstring{$\PV_1$}{PV1}}

First, we derandomize the \refmeta{random restriction} via an explicit implementation of the method of conditional expectations in $\PV_1$.

\dummylabel{derandomized restriction}{Derandomized Restriction Lemma}
\begin{lemma}[Derandomized Restriction Lemma]\label{lem:derand_restriction}
    For every $k, c, b\in\bbN$, there exists an $n_0\ge 1$ such that the following is provable in $\PV_1$. Let $n,t,\ell\in\Log$, $n>n_0$. Let $F_1,\dots,F_\ell$ be $c$-NFs over $n$ input variables, and $T\subseteq[n]$ be a subset of size $n-t$ such that for every $i\in[\ell]$, at least one of the following conditions hold. 
    \begin{compactitem}
    \item \emph{(Narrow)}. If we fix the $j$-th variable for every $j\in T$, $F_i$ is fed by at most $b$ literals. 
    \item \emph{(Wide)}. There are $m'_i\ge k\ln n$ (explicitly given) disjoint clauses $C_{i,1}',\dots,C'_{i,m'_i}$ such that (1) each $C'_{i,j}$ is a sub-clause of a different clause in $F_i$; (2) $S_{C'_{i,j}}\subseteq T$ for every $j\in[m'_i]$. 
    \end{compactitem}

    Suppose that $\ell\cdot (1-2^{-c})^{k\ln n} < 1$. Then there exists an assignment $\rho$ to the variables in $T$ such that each of $F_1,\dots,F_\ell$ is either trivialized or depends on at most $b$ literals after applying $\rho$. 
\end{lemma}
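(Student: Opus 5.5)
We derandomize the Random Restriction Lemma by the method of conditional expectations, implemented as a potential function over partial assignments to $T$, mirroring the proof of the Subset Selection Lemma. Enumerate $T=\{j_1<\dots<j_{n-t}\}$. For each wide $F_i$, discard all but the first $m \eqdef \lceil k\ln n\rceil$ of the given disjoint sub-clauses; note $m\in\Log\Log$. For a partial assignment $y\in\zo^{r}$ to $j_1,\dots,j_r$, define
\[
\psi_{i,j}(y)\eqdef\begin{cases}0 & y \text{ already trivializes } C'_{i,j},\\ 1-2^{-u_{i,j}(y)} & \text{otherwise,}\end{cases}
\]
where $u_{i,j}(y)$ is the number of unassigned variables of $C'_{i,j}$. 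If $y$ already falsifies the literal pattern needed to trivialize $C'_{i,j}$, then $\psi_{i,j}(y)=1$. Set $\Psi_i(y)\eqdef\prod_{j\le m}\psi_{i,j}(y)$ and $\Phi(y)\eqdef\sum_{i\text{ wide}}\Psi_i(y)$. Combinatorially, $\Psi_i(y)$ is the probability that no $C'_{i,j}$ gets trivialized under a uniform completion, but only the algebraic identities will be used. All quantities are $\PV$-computable exact rationals, since $c$ is constant, $m\in\Log\Log$, and $\ell\le$ feasible.

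The key steps are as follows. (1) \emph{Initial condition:} $\psi_{i,j}(\eps)=1-2^{-|S_{C'_{i,j}}|}\le 1-2^{-c}$, so $\Phi(\eps)\le \ell(1-2^{-c})^{m}\le \ell(1-2^{-c})^{k\ln n}<1$. (2) \emph{Recursion condition:} $\Phi(y)=\tfrac12(\Phi(y0)+\Phi(y1))$. Since the sub-clauses of a fixed $F_i$ are disjoint, the variable $j_{r+1}$ lies in at most one $C'_{i,j}$. Hence only one factor of $\Psi_i$ changes, and for that factor one checks $\psi_{i,j}(y)=\tfrac12(\psi_{i,j}(y0)+\psi_{i,j}(y1))$ by a finite case analysis: one value of the bit keeps the clause alive with $u-1$ free variables, and the other kills the trivializing pattern. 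This is the step I expect to need the most care, namely getting the definition of $\psi$ exactly right so that the identity is a provable equation of rationals in all cases, including $u=1$ and already-dead patterns. Linearity over $i$ then follows trivially in $\PV$. (3) \emph{Greedy construction:} starting from $y=\eps$, append the bit minimizing $\Phi$. This is a $\PV$ function, and by open induction on $r$ (available in $\PV_1$) we get $\Phi(y_r)\le\Phi(\eps)<1$ for all $r\le n-t$.

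(4) \emph{Termination:} for a full assignment $\rho\eqdef y_{n-t}$ to $T$, every $\psi_{i,j}(\rho)\in\zo$, because every variable of $C'_{i,j}$ lies in $T$. So $\Phi(\rho)$ is a nonnegative integer less than $1$, hence $0$. Thus for each wide $i$ some $C'_{i,j}$ is trivialized by $\rho$. We must choose the trivializing pattern so that trivializing a sub-clause trivializes $F_i$: in a CNF, a sub-clause set to true makes its clause true, but that does not trivialize the CNF. So the pattern must instead be taken as the one making the sub-clause's containing clause evaluate to the value that fixes $F_i$. That is, we set all literals of $C'_{i,j}$ false for a CNF clause, which is impossible for a sub-clause alone. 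The correct reading is therefore that the relevant pattern is the one making the containing clause constant at the absorbing value of $F_i$'s top connective; for the bottom connective OR under AND, the absorbing value is false, requiring all of the clause's literals. I will handle this the same way as in the Random Restriction Lemma, counting the event "$C'_{i,j}$ is trivialized in the direction that trivializes $F_i$", and match that definition in $\psi$. Narrow $F_i$ depend on at most $b$ literals after any $\rho$ by hypothesis, which concludes the proof.
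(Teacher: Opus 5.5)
\textbf{Steps (1)--(3) match the paper.} Your steps (1)--(3) are exactly the paper's argument. The paper's $\phi_{ij}$ is your $1-\psi_{i,j}$: ``negatively determined'' is your dead pattern, and ``$d$-far from positively determined'' is a live pattern with $u_{i,j}=d$. The averaging identity is verified by the same disjointness observation, and $\rho$ is found by the same greedy procedure as in the Subset Selection Lemma.

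\textbf{The gap in step (4) is the one you sensed, and it cannot be closed.} Your potential can only track patterns on the variables of $C'_{i,j}$ itself. Termination needs these variables to lie in $T$, and the one-factor recursion needs them to be disjoint across $j$. When $C'_{i,j}$ is a proper sub-clause, no such pattern makes $F_i$ constant:
\begin{itemize}
\item giving all its literals the absorbing value leaves the containing clause undetermined when that clause has free variables outside $T$;
\item satisfying it merely deletes the clause.
\end{itemize}
Your proposed switch to the pattern of the whole containing clause destroys both properties. That pattern generally involves variables outside $T$, so $\rho$ never completes it and $\psi_{i,j}$ need not be $0/1$ at the end. Also, containing clauses need not be disjoint. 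Appealing to the Random Restriction Lemma does not help, since its inequality $Y\le\bigvee_i X_i$ relies on the same implication. The paper's proof has the identical hole: it asserts that $\Phi(x)=0$ forces $x$ to trivialize every $F_i$, which is valid only when each $C'_{i,j}$ is an entire clause of $F_i$.

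\textbf{The statement is false once proper sub-clauses are allowed.} Take $c=2$, $\ell=1$, $m=\lceil k\ln n\rceil+b+1$, and
\[
F_1=\bigwedge_{j=1}^{m}\bigl((x_j\vee z_j)\wedge(\neg x_j\vee w_j)\bigr),
\]
with $T$ consisting of all variables other than the $z_j,w_j$ (so $t=2m$). The one-literal sub-clauses $(x_j)$ of the clauses $(x_j\vee z_j)$ witness the Wide condition, and $\ell\cdot(1-2^{-c})^{k\ln n}=(3/4)^{k\ln n}<1$. Yet for every assignment $\rho$ to $T$ we get $F_1|_\rho=\bigwedge_{j} y_j$ with $y_j\in\{z_j,w_j\}$. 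This is a non-constant function of $m>b$ variables, so the conclusion fails.

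\textbf{What your argument does prove.} Like the paper's, it proves the version in which every $C'_{i,j}$ is a whole clause of $F_i$. There the absorbing pattern is all literals $0$ for a clause of a CNF and all literals $1$ for a term of a DNF. This covers $c=1$, i.e.\ Restriction~1 of the parity proofs. The Subset Selection Lemma does produce proper sub-clauses when the wide family comes from $\mathcal{V}_i$ with $i\ge 2$. Handling those needs a genuinely additional ingredient, such as the Furst--Saxe--Sipser induction on clause width via a small hitting set, together with a reformulated lemma.
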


\begin{proof}
    We argue in $\PV_1$. Let $k,c,b\in\bbN$ and $n_0\ge 1$ be a constant to be determined later. Fix $n,t,\ell\in \Log$, $c$-NFs $F_1,\dots,F_\ell$, and $T\subseteq[n]$. We say that a $c$-NF is good after applying a restriction $\rho$ to the variables in $T$ if it is either trivialized or depends on at most $b$ literals. As the $c$-NFs satisfying the first bullet are good regardless of the assignment $\rho$, we assume without loss of generality that all $c$-NFs satisfy the second bullet. We will construct an assignment $\rho$ such that all such $c$-NFs are trivialized. 

    For simplicity of presentation, we assume that $T=\{1,2,\dots,n-t\}$. Fix any $i\in[\ell]$, and let $C'_{i,1},\dots,C'_{i,m'_i}$ be the disjoint sub-clauses such that $S_{C'_{i,j}}\subseteq T$. For a partial assignment $x\in\zo^{\le n-t}$ to the first $|x|$ variables, we say that: 
    \begin{compactitem}
    \item $C'_{i,j}$ is positively determined if it is an AND gate and all literals of it are fixed to $1$, or it is an OR gate and all literals of it are fixed to $0$. 
    \item $C'_{i,j}$ is negatively determined if it is an AND gate and one of its literals is fixed to $0$, or it is an OR gate and one of its literals is fixed to $1$. 
    \item $C'_{i,j}$ is $d$-far from positively determined if it is not negatively determined, and there are exactly $d$ of its literals that remain unfixed.  
    \end{compactitem}
    We define $\phi_{ij},\Phi_i,\Phi:\zo^{\le n-t}\to\zo$ as follows. Given any $x\in\zo^{\le n-t}$ parsed as a partial assignment to the first $|x|$ literals, 
    \begin{align}
    & \Phi(x)\eqdef \sum_{i=1}^\ell \Phi_i(x), \quad \Phi_i(x)\eqdef \prod_{j=1}^{m_i'}(1-\phi_{ij}(x))\\ 
    & \phi_{ij}(x)\eqdef \begin{cases}
        0 & C'_{ij} \text{ is negatively determined} \\ 
        2^{-d} & C'_{ij} \text{ is $d$-far from positively determined}
    \end{cases}
    \end{align}

    For instructive purposes, we mention that the combinatorial interpretation of $\Phi(x)$ is the expected number of $c$-NFs that are not trivialized if we extend $x$ to an assignment to variables in $T$ by fixing each unfixed bit uniformly at random. Note that this is not a part of the $\PV_1$ proof. Instead, we prove that: 
    \begin{compactitem}
    \item (\emph{Initial Condition}). Note that $\phi_{ij}(\eps) \ge 2^{-c}$ and thus 
    \[ 
    \Phi_i(\eps)\le (1-2^{-c})^{m_i'} \le (1-2^{-c})^{k\ln n} < \frac{1}{\ell}, \quad \Phi(\eps) < 1. 
    \] 
    \item (\emph{Recursion Condition}). For every $x\in\zo^{<n-t}$, we can prove that $\Phi_i(x) = (\Phi_i(x0) + \Phi_i(x1))/2$. To see this, notice that: 
    \begin{compactitem}
    \item When the $(|x|+1)$-th variable does not appear in $C_{i,1},\dots,C_{i,m'_i}$, $\Phi_i(x)=\Phi(x0)=\Phi(x1)$. 
    \item Otherwise, it appears in exactly one of $C_{i,1},\dots,C_{i,m'_i}$ as the clauses are disjoint. Assume for simplicity that it appears in $C_{i,1}$ and it is an OR gate. Then 
    \begin{align*}
    & \Phi_i(x) = (1-\phi_{i1}(x))\cdot \prod_{j=2}^{m_i'}(1-\phi_{ij}(x)),\\  
    & \Phi_i(x0) = (1-2\cdot \phi_{i1}(x))\prod_{j=2}^{m_i'}(1- \phi_{ij}(x)),\quad 
    \Phi_i(x1) = \prod_{j=2}^{m_i'}(1-\phi_{ij}(x)).
    \end{align*}
    \end{compactitem}
    Thus $\Phi_i(x)=(\Phi_i(x0)+\Phi_i(x1))/2$. Subsequently, $\Phi(x)=(\Phi(x0)+\Phi(x1))/2$. 
    \item (\emph{Termination Condition}). $\Phi_i(x)\in\{0,1\}$ for $x\in\zo^{n-t}$. Moreover, if $\Phi(x)=0$, the partial assignment $x$ will trivialize all $c$-NFs. 
    \end{compactitem}
    The lemma then follows from a greedy algorithm as in the \refmeta{subset selection}. 
\end{proof}

\WorstParity*

\begin{proof}[Proof Sketch]
The proof closely follows the proof of \refmeta{avg parity}, so we will only sketch the argument. We prove it by induction on $d$ in the meta-theory, and the constant $n_0=n_0^{k,d}$ depends on both $k$ and $d$. The case when $d=1$ is easy and left as an exercise. 

For $d\ge 2$, we assume towards a contradiction that $C$ computes $\oplus_n$. We first apply the \refmeta{subset selection} to find a subset $T$ of size $n-\sqrt n$ by viewing the gates in the first layer as $1$-NFs, and then apply the \refmeta{derandomized restriction} to find an assignment $\rho$ to variables in $T$ such that the gates in the first layer are either trivialized or of fan-in at most $b=O(1)$ after applying $\rho$. Let $n_1=\sqrt n$. We can construct (from $C$ and $\rho$) a circuit $C_1$ that computes $\oplus_{n_1}$ on the unfixed bits such that all gates in the first layer are of fan-in $b$. 

We then apply the \refmeta{subset selection} again to find a subset $T_1$ of size $n_1-\sqrt{n_1}$ by viewing the gates in the second layer as $b$-NFs, and then apply the \refmeta{derandomized restriction} to find an assignment $\rho_1$ to variables in $T_1$ such that the gates in the second layer are either trivialized or of fan-in at most $b_1=O(1)$ after applying $\rho_1$. Let $n_2=\sqrt n_1$. We can then construct (from $C_1$ and $\rho_1$) a circuit $C_2$ of depth at most $d-1$ that computes $\oplus_{n_2}$ on the unfixed bits. The size of the circuit is at most $n^{k+1} \leq n_2^{6k}$. This leads to a contradiction to the induction hypothesis by setting $n_0^{k,d}$ to be sufficiently large based on $n_0^{6k,d-1}$. 
\end{proof}

\subsection{Blum-Luby-Rubinfeld Linearity Testing}
\label{sec: BLR}

We now formalize the linearity testing algorithm due to Blum, Luby, and Rubinfeld \cite{BLR93}. Recall that a function $g:\zo^n\to\zo$ is said to be linear if $g(x\oplus y)=g(x) \oplus g(y)$, where $\oplus$ denotes bit-wise XOR; equivalently, $g(x)=\langle x,z\rangle \bmod 2$ for some $z\in\zo^n$. Let $g:\zo^n\to\zo$ be a function. Blum, Luby, and Rubinfeld \cite{BLR93} proved that for any sufficiently small constant $\eps>0$: 
\begin{itemize}
\item (\emph{Linearity Testing}): If $g$ is $\eps$-far from any linear function, then the BLR linearity testing algorithm fails with probability at least $\Omega(\eps)$. Conversely, if $g$ is $\eps$-close to a linear function, the BLR linearity testing algorithm fails with probability at most $O(\eps)$. 
\item (\emph{Self Correction}): The key idea behind linearity testing is a random self correctness algorithm: If $g$ is $\eps$-close to a linear function $\hat g$, then the function $f(x,r)\eqdef g(x\oplus r)\oplus g(r)$ is a \emph{randomized} algorithm that computes $\hat g$ with error $O(\eps)$, where $x$ is the input and $r$ is the random seed.
\end{itemize}

Linearity testing is the key component of the exponential length PCP theorem $\NP\subseteq\PCP[\poly,1]$, which is further used to reduce the number of queries in the proof of the PCP theorem $\NP=\PCP[\log n,1]$ (see, e.g., \cite{Harsha-thesis}).

\newcommand{\BLR}{\mathsf{BLR}}

We first state the main theorems, namely the completeness and soundness of the BLR linearity testing. The completeness states that a function that is close to a linear function is likely to be accepted. Formally: 

\dummylabel{completeness BLR testing}{Completeness of BLR Linearity Testing}
\begin{restatable}[Completeness of BLR linearity testing]{theorem}{BLRComplete}
    $\APX_1$ proves the following. Let $n,\delta^{-1},\beta^{-1}\in\Log$,  $C:\zo^n\to\zo$ be a circuit, and $z\in\zo^n$ be a string. Let $\eps\in\bbQ$ such that $\eps < 1/2$. Define the following circuits: 
    \begin{compactitem}
    \item Let $T_C(x)$ be the circuit that outputs $1$ if and only if $C(x)\ne \langle x,z\rangle\bmod 2$.
    \item Let $T_{C,\BLR}(x,y):\zo^n\times \zo^n\to\zo$ be the circuit that outputs $1$ when $C(x)\oplus C(y)\ne C(x\oplus y)$. 
    \end{compactitem}
    Suppose that $\P_\delta(T_C)\le \eps$. Then  $\P_\delta(T_{C,\BLR})\le 3\eps+4\delta+\beta$. 
\end{restatable}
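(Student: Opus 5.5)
We will formalize the usual union-bound argument. Let $L(x)\eqdef\langle x,z\rangle\bmod 2$. If $C(x)\oplus C(y)\ne C(x\oplus y)$, then, since $L$ is linear, at least one of $C(x)\ne L(x)$, $C(y)\ne L(y)$, $C(x\oplus y)\ne L(x\oplus y)$ must hold. Concretely, we define three circuits on the $2n$-bit seed $(x,y)$:
\[
A_1(x,y)\eqdef T_C(x),\qquad A_2(x,y)\eqdef T_C(y),\qquad A_3(x,y)\eqdef T_C(x\oplus y).
\]
We check in $\PV_1$ that $T_{C,\BLR}(x,y)\le A_1(x,y)\lor A_2(x,y)\lor A_3(x,y)$ for all $x,y$. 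This is a pointwise propositional fact using $L(x)\oplus L(y)=L(x\oplus y)$, which is provable for inner products mod $2$. Fix $\eta^{-1}\in\Log$ small, to be chosen at the end. We apply the \refmeta{union bound} with $m=3$ to the indicator variables, and pass between $\P_\eta$ and $\bbE_\eta$ via \Cref{prop: indicator variable}. This gives $\P_\eta(T_{C,\BLR})\le \sum_{j}\P_\eta(A_j)+O(\eta)$.

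Next we bound each $\P_\eta(A_j)$ by $\P_\eta(T_C)+O(\eta)$. For $A_1$ and $A_2$, the circuit ignores half the seed. For every assignment $u$ to the ignored half, $A_j|_u$ is (provably) functionally equivalent to $T_C$, so the \refmeta{global} gives $|\P_\eta(A_j|_u)-\P_\eta(T_C)|\le 3\eta$. The \refmeta{avg on exp} then lifts this to $|\P_\eta(A_j)-\P_\eta(T_C)|=O(\eta)$, exactly as in Step 1 of the proof of \Cref{thm: Yao D2P}. (For $A_2$ we may first use the \refmeta{permutation} so that the read half is positioned where the averaging lemma fixes bits; either ordering works.)

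The only genuine obstacle is $A_3$, because $x\oplus y$ is a change of variables, not a permutation of input bits. We would handle it by fixing $x$ rather than $y$. For every $x\in\zo^n$, the circuit $A_3|_x$ on input $y$ computes $T_C(x\oplus y)$. We want $\P_\eta(A_3|_x)\approx\P_\eta(T_C)$, i.e.\ invariance of $\P$ under the shift $y\mapsto x\oplus y$. We would prove this shift invariance by the bit-by-bit duel used in \Cref{lmm: local symm} and the \refmeta{dueling}. Suppose $|\P_\eta(D)-\P_\eta(D')|$ is large, where $D'(y)=D(y\oplus x)$. Fix the rightmost bits of $y$ one at a time, fixing bit $b$ in $D$ and bit $b\oplus x_i$ in $D'$; the \refaxiom{local} lets us preserve the gap, losing $O(\eta)$ per step. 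After $n$ steps both circuits are constants that are provably equal, which contradicts the \refaxiom{boundary}. Hence $|\P_\eta(D)-\P_\eta(D')|\le O(n\eta)$, which is fine once $\eta\ll\beta/n$. With shift invariance in hand, the \refmeta{avg on exp} over $x$ again gives $\P_\eta(A_3)\le\P_\eta(T_C)+O(n\eta)$.

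Finally, the \refaxiom{precision} converts the hypothesis to $\P_\eta(T_C)\le\eps+\delta+2\eta$ and the conclusion back to precision $\delta$. This yields
\[
\P_\delta(T_{C,\BLR})\le \delta+2\eta+3(\eps+\delta+2\eta)+O(n\eta)=3\eps+4\delta+O(n\eta).
\]
Setting $\eta\eqdef\beta/(cn)$ for a large constant $c$ gives the claimed bound $3\eps+4\delta+\beta$.
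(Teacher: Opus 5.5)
Your proposal is correct and follows essentially the paper's argument: a union bound over linearity-violating events, the averaging argument to pass from restricted to global probabilities, and shift invariance under $y\mapsto x\oplus y$, which is exactly the paper's \refmeta{rerand}, proved by the same bit-by-bit local-consistency duel. The only difference is organizational. The paper splits $T_{C,\BLR}$ into $T_C(x)$ and the self-correction failure event $C(x\oplus y)\oplus C(y)\ne\langle x,z\rangle \bmod 2$, and bounds the latter by the separately proved \refmeta{completeness BLR correction}, which is itself a two-term union bound using re-randomization. You instead flatten this into a single three-term union bound and reprove re-randomization inline.
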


The soundness states that if a function is likely to be accepted by the BLR linearity testing algorithm, then it is close to a linear function. 

\begin{restatable}[Soundness of BLR linearity testing]{theorem}{BLRSound}
    $\APX_1$ proves the following statement. Let $n,\delta^{-1},\beta^{-1}\in\Log$ and $C:\zo^n\to\zo$ be a circuit. Let $\eps\in\bbQ$. Assume that $\eps,\delta,\beta < 0.01$. We define the following circuits: 
    \begin{compactitem}
    \item For every $z\in\zo^n$, let $T_{C,z}(x)$ be the circuit that outputs $1$ if and only if $C(x)\ne \langle x,z\rangle\bmod 2$. 
    \item Let $T_{C,\BLR}(x,y)$ be the circuit that outputs $1$ if and only if $C(x)\oplus C(y)\ne C(x\oplus y)$.  
    \end{compactitem}
    Suppose that $\P_\delta(T_{C,\BLR})\le \eps$. Then there exists a string $z\in\zo^n$ such that $\P_\delta(T_{C,z})\le 5\eps + 6\delta + \beta$. 
\end{restatable}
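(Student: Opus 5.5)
The plan is to formalize the classical \emph{majority self-correction} proof of BLR soundness. The key point is that the self-corrected function is used only to \emph{compute} $z$. Every quantity that is then fed to $\P$ will be an ordinary circuit, so no nested probabilities are needed.

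Fix a small precision $\eta^{-1}\in\Log$. Define the $\PV(\P)$ function $g(x)$ as follows: $g(x)=1$ iff $\P_\eta(D_x)>1/2$, where $D_x(y)\eqdef C(x\oplus y)\oplus C(y)$ is a circuit with $x$ hardwired. Set $z_i\eqdef g(e_i)$ for $i\in[n]$. Since $z$ is produced by a $\PV(\P)$ term, the theorem reduces to three claims.

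\textbf{Step 1 (translation invariance).} First I will prove that the map $(x,y)\mapsto(x,x\oplus y)$ preserves $\P$ up to $O(\eta)$. For a fixed $x$, the map $y\mapsto x\oplus y$ flips some input bits. Flipping a single bit is invariant by the \refaxiom{local}, since it exchanges $\Fix_0$ and $\Fix_1$ and the \refaxiom{boundary} gives the base case. For $n$ bits one can instead use a bit-by-bit dueling argument as in \Cref{lmm: local symm}. The \refmeta{avg on exp} then lifts the fixed-$x$ statement to the two-seed statement.

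From this, the circuits $R_1(x,y)\eqdef[C(x\oplus y)\oplus C(y)\ne C(x)]$ and variants such as $R_2(y,w)\eqdef[C(x\oplus y)\oplus C(x\oplus w)\ne C(y\oplus w)]$ (for fixed $x$) all have $\P_\eta$ at most $\eps+\delta+O(\eta)$, by comparison with $T_{C,\BLR}$. This uses the \refmeta{global}, the \refmeta{permutation}, and the \refaxiom{precision}.

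\textbf{Step 2 ($g$ is a strong majority and is linear).} For each fixed $x$, let $p_x$ be the estimated probability that $D_x(y)\ne g(x)$. The disagreement event $[D_x(y)\ne D_x(w)]$ on seed $(y,w)$ is contained in the union of two BLR-type failures, each of the form $R_2$. By the \refmeta{union bound} it therefore has probability at most $2\eps+O(\delta+\eta)$.

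On the other hand, the \refmeta{multiplication} (explicit independence of $y$ and $w$) together with the \refmeta{linearity} gives $\Pr[D_x(y)\ne D_x(w)]\approx 2p_x(1-p_x)$. Since $p_x\le 1/2+O(\eta)$, this yields $p_x\le 4\eps+O(\delta+\eta)$ for every $x$.

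Next, for any $x,w$, consider three events in the seed $y$:
\[
g(x)\ne D_x(y\oplus w),\qquad g(w)\ne D_w(y),\qquad g(x\oplus w)\ne C(x\oplus w\oplus y)\oplus C(y).
\]
By Step 1 and the bound on $p_x$, each has estimated probability at most $4\eps+O(\delta+\eta)$. Their union has probability $<1$ because $\eps,\delta,\beta<0.01$. So the \refmeta{avg} yields a $y$ avoiding all three events, and that $y$ certifies $g(x\oplus w)=g(x)\oplus g(w)$.

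Finally, by induction on the number of coordinates (open induction), $g(x)=\langle x,z\rangle\bmod 2$ for all $x$.

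\textbf{Step 3 (closeness).} For every fixed $x$ with $\langle x,z\rangle\ne C(x)$, we have $g(x)\ne C(x)$. Hence $\P_\eta(R_1(x,\cdot))\ge 1/2-O(\eta)$, by the definition of $g$ and the \refmeta{complementation}.

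Apply the \refmeta{avg on exp} to $\mu=\tfrac12\,\bbE_\eta[T_{C,z}]-\bbE_\eta[R_1]$, where $T_{C,z}$ is viewed as a variable on seed $(x,y)$, fixing the $x$-part. This gives $\tfrac12\P(T_{C,z})\le\P(R_1)+O(\eta)\le\eps+\delta+O(\eta)$, i.e.\ $\P(T_{C,z})\le 2\eps+2\delta+O(\eta)$. This is within the stated bound $5\eps+6\delta+\beta$ after choosing $\eta\eqdef\beta/C$ and applying the \refaxiom{precision}.

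The main obstacle is expected to be Step 2. The nonlinear identity $\Pr[\text{disagree}]=2p(1-p)$ must be obtained from approximate counting, which requires the multiplication principle and careful error bookkeeping. In addition, the threshold defining $g$ is only approximate, so $g$ is well-defined only up to an $O(\eta)$ margin. The constants (for example $12\eps<1$) must leave room for all of these slack terms.
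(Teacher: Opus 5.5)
Your proposal is correct and follows essentially the paper's route: re-randomization (the paper's Re-randomization Lemma), a multiplication-principle computation showing the self-corrected value is a strong majority (the paper's collision lemma and single-valuedness lemma), the same three-event argument for linearity of $g$, the choice $z_i=g(e_i)$ with open induction, and an averaging argument for closeness. The differences are minor. You threshold $g$ at $1/2$ rather than at $1-4\eps-(4\delta+\beta)$, and you close with the factor-$2$ majority inequality rather than the strong-majority bound; this gives the stronger conclusion $2\eps+3\delta+\beta$ in place of the required $5\eps+6\delta+\beta$.
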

\dummylabel{soundness BLR testing}{Soundness of BLR Linearity Testing}

We formalize the combinatorial proof \cite{BLR93} via majority correction (see \citep{DBLP:journals/tit/BellareCHKS96} for an alternate proof). Note that the same proof is also formalized by Pich \cite{DBLP:journals/apal/Pich15} in $\APC_1$ to prove the exponential PCP theorem $\NP\subseteq\PCP[\poly,1]$, and our main contribution is to show that it can be formalized in the (possibly weaker) theory $\APX_1$.\footnote{Our formalization is slightly different: We formalize linear functions $x\mapsto \langle x,z\rangle \bmod 2$ by explicitly giving $z$, while Pich \cite{thesis_Pich} formalizes linear functions $f$ using the sentence that for every $x,y$, $f(x\oplus y)=f(x)\oplus f(y)$; nevertheless, the difference in formalization does not matter in most cases.}

\subsubsection{Two Useful Lemmas}

Before formalizing the BLR linearity testing algorithm, we prove two useful lemmas. The first lemma shows that the acceptance probability of a circuit does not change significantly if the input is XORed with a \emph{fixed} string. Formally: 

\dummylabel{rerand}{Re-randomization Lemma}
\begin{lemma}[Re-randomization]\label{lmm: rerand}
$\Apx_1$ proves the following statement. For every $n,\delta^{-1},\beta^{-1}\in\Log$, circuit $T:\zo^n\to\zo$, and $x\in\zo^n$, let $T_x^{\oplus}:\zo^n\to\zo$ be the circuit defined as $T_x^{\oplus}(r)\eqdef T(x\oplus r)$. Then $\left|\P_\delta(T)-\P_\delta(T_x^\oplus)\right| \le 2\delta + \beta$.
\end{lemma}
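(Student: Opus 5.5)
The plan is to run the same bit-by-bit dueling argument used for Local Symmetry (\Cref{lmm: local symm}) and Monotonicity (\Cref{lmm: apx monotone}), this time comparing $T$ with $T_x^\oplus$. We argue in $\Apx_1$ and fix $n,\delta^{-1},\beta^{-1}$, $T$, and $x$. Introduce an auxiliary precision $\eta^{-1}\in\Log$. The intermediate goal is
\[
\left|\P_\eta(T)-\P_\eta(T_x^\oplus)\right|\le c\cdot (n+1)\cdot \eta
\]
for a small constant $c$. Once this holds, choosing $\eta\eqdef\beta/(10c(n+1))$ and applying the \refaxiom{precision} twice gives the stated bound $2\delta+\beta$.

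The key structural fact, which is purely syntactic and provable in $\PV_1$, concerns fixing the rightmost $k$ bits. For $z\in\zo^k$, let $x^{(k)}$ denote the last $k$ bits of $x$. Then $\Fix(T_x^\oplus,z)$ is functionally equivalent to $(\Fix(T,z\oplus x^{(k)}))^\oplus_{x'}$, where $x'$ consists of the remaining $n-k$ bits of $x$. In particular, when $k=n$ both $\Fix(T_x^\oplus,z)$ and $\Fix(T,z\oplus x)$ are constant circuits outputting the same bit $T(z\oplus x)$.

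The argument then pairs up the restriction $z$ of $T_x^\oplus$ with the restriction $z\oplus x^{(k)}$ of $T$. Suppose toward contradiction that the gap at precision $\eta$ exceeds $c(n+1)\eta$. We maintain the invariant
\[
\left|\P_\eta(\Fix(T_x^\oplus,z))-\P_\eta(\Fix(T,z\oplus x^{(k)}))\right|>c(n+1-k)\eta
\]
and extend $z$ one bit at a time. Applying the \refaxiom{local} to both circuits shows that
\[
\P_\eta(\Fix(T_x^\oplus,z))\approx \tfrac12\sum_{b}\P_\eta(\Fix(T_x^\oplus,bz)),
\qquad
\P_\eta(\Fix(T,z\oplus x^{(k)}))\approx \tfrac12\sum_{b}\P_\eta(\Fix(T,(b\oplus x_{n-k})(z\oplus x^{(k)}))),
\]
each up to an error of $3\eta$. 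Here $x_{n-k}$ denotes the next bit of $x$ to be consumed. The map $b\mapsto b\oplus x_{n-k}$ is a bijection on $\zo$, so the two averages range over matched pairs. Hence for some $b$ the gap between the matched pair is at least the current gap minus $6\eta$, and we append that $b$. When the oracle is used to pick $b$, each circuit must be handled through the \refmeta{global} of approximate counting, which costs another $O(\eta)$ per step. The correctness of this iterative algorithm follows by open induction (\refind{1}).

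When $k=n$ the two circuits are constant with the same output bit, so the \refaxiom{boundary} makes the gap equal to $0$. This contradicts the invariant, whose right-hand side is then $c\eta>0$.

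I expect the main obstacle to be the bookkeeping of the indices rather than any probabilistic step. Specifically, one must check that the rightmost-bit fixing order matches which bit of $x$ is XORed at each stage, and that the functional equivalence above is correctly stated for partial fixings. If this direct matching proves awkward, a fallback is to handle one bit of $x$ at a time. Flipping a single input bit $i$ can be reduced, via Permutational Symmetry (\Cref{lmm: permutation approx}), to flipping the rightmost bit. Flipping the rightmost bit merely swaps $\Fix_0$ and $\Fix_1$, so by the \refaxiom{local} its cost is $O(\eta)$. Composing these $n$ single-bit flips by induction on the number of flipped bits, then applying the \refaxiom{precision}, would also give the result.
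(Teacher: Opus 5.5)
Your proposal is correct and follows essentially the same route as the paper. Both run a bit-by-bit dueling algorithm that pairs the restriction $z$ of one circuit with the restriction $z$ XORed with the corresponding suffix of $x$ of the other. Each step uses the Local Consistency Axiom together with Global Consistency, the Boundary Axiom gives the contradiction once all $n$ bits are fixed, and the Precision Consistency Axiom carries the bound from $\eta$ back to $\delta$. The only difference is cosmetic: you bound the $\eta$-gap first and then change precision, as in the Monotonicity proof, whereas the paper assumes the gap at precision $\delta$ and passes to $\eta$ inside the contradiction. Your fallback via single-bit flips is not needed.
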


\begin{proof}
    We argue in $\Apx_1$. Fix $n,\delta^{-1},\beta^{-1}\in\Log$, $x\in\zo^n$, the circuit $T$ and $T_x^\oplus$. Let $\eta^{-1}\in\Log$ be a parameter to be determined later. Suppose, towards a contradiction, that $|\P_\delta(T)-\P_\delta(T_x^\oplus)| > 2\delta + \beta$. Then by the \refaxiom{precision}, we have that 
    \[
    |\P_\eta(T)-\P_\eta(T_x^\oplus)| > \beta - 4\eta. 
    \]
    
    Recall that for a circuit $C$, $C^{k,z}$ denotes the circuit obtained by fixing the last $k$ input bits of $C$ to be $z$. We will design an $n$-round iterative $\PV(\P)$ algorithm that, in the $i$-th round, outputs a string $z_i\in\zo^i$ such that 
    \begin{equation}\label{equ: rerand i round}
    |\P_\eta(T^{i,z_i})-\P_\eta(T^{\oplus,i,z_i'}_x)| > \beta - 10\cdot (i+1)\cdot \eta.  
    \end{equation}
    where $z_i'\eqdef z_i\oplus x_{>n-i}$. The algorithm initializes by setting $z_0\eqdef \eps$. In the $i$-th round, it works as follows: 
    \begin{compactitem}
    \item Recall that by the invariant that \Cref{equ: rerand i round} holds in the $(i-1)$-th round, we have 
    \[ 
    |\P_\eta(T^{i-1,z_{i-1}})-\P_\eta(T^{\oplus,i-1,z_{i-1}'}_x)| > \beta - 10\cdot i\cdot \eta. 
    \] 
    \item By the \refaxiom{local}, we know that 
    \begin{align*}
    & |\P_\eta(T^{i-1,z_{i-1}})-\P_\eta(T^{\oplus,i-1,z_{i-1}'}_x)| \\ 
    \le~& \frac{1}{2}\sum_{b\in\zo}\left|\P_\eta(\Fix_b(T^{i-1,z_{i-1}}))-\P_\eta(\Fix_{b\oplus x_{n-i+1}}(T^{\oplus,i-1,z_{i-1}'}_x))\right| + 3\eta. 
    \end{align*}
    Subsequently, there is a constant $b\in\zo$ such that 
    \begin{equation}\label{equ: rerand result round i raw}
    \left|\P_\eta(\Fix_b(T^{i-1,z_{i-1}}))-\P_\eta(\Fix_{b\oplus x_{n-i+1}}(T^{\oplus,i-1,z_{i-1}'}_x))\right| > \beta - 10\cdot i \cdot \eta - 3\eta.  
    \end{equation}
    The algorithm finds such $b\in\zo$ by querying the $\P$-oracle, and outputs $z_{i}\eqdef b\circ z_{i-1}$. 
    \end{compactitem}
    
    To see that the algorithm is correct, notice that the circuit $\Fix_b(T^{i-1,z_{i-1}})$ is functionally equivalent to $T^{i,z_i}$, and $\Fix_{b\oplus x_{n-i+1}}(T^{\oplus i-1,z_{i-1}'}_x)$ is functionally equivalent to $T^{\oplus,i,z_i}_x$. Therefore, by \Cref{equ: rerand result round i raw} and the \refmeta{global}, we have 
    \[
    |\P_\eta(T^{i,z_i})-\P_\eta(T^{\oplus,i,z_i'}_x)| \ge \left|\P_\eta(\Fix_b(T^{i-1,z_{i-1}}))-\P_\eta(\Fix_{b\oplus x_{n-i+1}}(T^{\oplus,i-1,z_{i-1}'}_x))\right| - 6\eta \ge \beta -10\cdot (i+1)\cdot \eta. 
    \]
    The correctness of the algorithm can thus be proved by induction on a $\PV(\P)$ term, which is available by \Cref{thm: ind via binary search}. 

    Finally, in the $n$-th round, the algorithm outputs a string $z_n\in\zo^n$ such that 
    \[
    |\P_\eta(T^{n,z_n})-\P_\eta(T^{\oplus,n,z_n'}_x)| > \beta - 10\cdot (n+1)\cdot \eta, 
    \]
    where $z_n'\eqdef z_n\oplus x$. Note that both circuits above have input length $0$ and, by the definition, must output the same value. This violates the \refaxiom{boundary} by setting $\eta\eqdef \beta/(100(n+1))$.  
\end{proof}

The second lemma is as follows. Let $X_1$ and $X_2$ be two explicitly i.i.d.~RVs over $\zo$. If $\Pr[X_1= X_2]$ is larger than $1/2$, then $\bbE[X_i]$ must be biased. Formally: 

\begin{lemma}\label{lmm: collision biased}
$\APX_1$ proves the following statement. Let $n,\delta^{-1},\beta^{-1}\in\Log$, $V=\zo$, and $X_1,X_2$ be explicitly i.i.d.~RVs over $V$ defined by the circuit $C:\zo^n\to\zo$. Let $Y_c$ be the indicator random variable of $X_1= X_2$, where $X_1$ and $X_2$ takes disjoint random seeds. Then for $i\in\{1,2\}$,  
\[
\left|\bbE_\delta[X_i]-\frac{1}{2}\right| \ge \sqrt{\frac{\bbE_\delta(Y_c)}{2} - \frac{1}{4} - 5\delta -\beta}.
\]
\end{lemma}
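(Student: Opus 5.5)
The plan is to formalize the identity $\Pr[X_1=X_2]=p^2+(1-p)^2 = \tfrac12 + 2(p-\tfrac12)^2$, where $p=\bbE[X_1]=\bbE[X_2]$. We argue in $\APX_1$ with an auxiliary precision $\eta^{-1}\in\Log$, to be set to a small multiple of $\beta$ at the end.

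\textbf{Step 1: rewrite the collision event.} Seed by seed we have $Y_c = X_1X_2 + (1-X_1)(1-X_2) = 1 - X_1 - X_2 + 2X_1X_2$. Applying the \refmeta{linearity} (with support $\{0,1\}$ and constant $\gamma=1$) gives
\[
\bbE_\eta[Y_c] \approx 1 - \bbE_\eta[X_1] - \bbE_\eta[X_2] + 2\,\bbE_\eta[X_1X_2]
\]
up to $O(\eta)$.

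\textbf{Step 2: factor the product term.} Since $X_1,X_2$ read disjoint halves of the seed, the \refmeta{multiplication} gives $\bbE_\eta[X_1X_2]\approx \bbE_\eta[X_1]\cdot\bbE_\eta[X_2]$ up to $O(\eta)$.

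\textbf{Step 3: identify the two marginals.} We show that $\bbE_\eta[X_1]$ and $\bbE_\eta[X_2]$ are both close to $p_\eta\eqdef\P_\eta(C)$. For $X_1$, fix any assignment $\rho$ to the half of the seed that $X_1$ does not read. Then $X_1|_\rho$ is the indicator of $C$, so \Cref{prop: indicator variable} applies pointwise. The \refmeta{avg on exp}, applied in both directions as in the proof of \Cref{thm:one-sided}, then gives $|\bbE_\eta[X_1]-p_\eta|\le O(\eta)$. The argument for $X_2$ is symmetric.

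\textbf{Step 4: complete the square.} Substituting Steps 2 and 3 into Step 1 yields
\[
\bbE_\eta[Y_c] \le 1-2p_\eta+2p_\eta^2 + O(\eta) = \tfrac12 + 2\bigl(p_\eta-\tfrac12\bigr)^2 + O(\eta).
\]
Here we use that the products of quantities in $[0,1]$ change by only $O(\eta)$ under the perturbations above. Rearranging gives $(p_\eta-\tfrac12)^2 \ge \bbE_\eta[Y_c]/2 - \tfrac14 - O(\eta)$.

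\textbf{Step 5: move back to precision $\delta$.} We transfer this bound to the $\delta$-quantities in the statement. By the \refmeta{precision exp}, $|\bbE_\delta[Y_c]-\bbE_\eta[Y_c]|\le \delta+2\eta$ and $|\bbE_\delta[X_i]-\bbE_\eta[X_i]|\le\delta+2\eta$. The main obstacle is this step: turning an additive error in $p$ into the additive $5\delta$ that sits \emph{inside} the square root. A naive square-root bound loses $\sqrt{\delta}$, so we avoid it by working with squares throughout.

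Let $a\eqdef|\bbE_\delta[X_i]-\tfrac12|\le \tfrac12$. Then
\[
\bigl|a^2 - (p_\eta-\tfrac12)^2\bigr| \le 2\cdot\max(\cdot)\cdot|\text{diff}| \le (\delta+O(\eta))\cdot 1,
\]
since both $|\cdot-\tfrac12|$ values are at most $\tfrac12+O(\delta)$. This gives
\[
a^2 \ge \bbE_\delta[Y_c]/2 - \tfrac14 - \tfrac{\delta}{2} - \delta - O(\eta),
\]
which stays within the $5\delta+\beta$ budget. If the right-hand side is negative the statement is vacuous (reading $\sqrt{\cdot}$ of a nonpositive quantity as $0$). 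Otherwise we take square roots, using a $\PV$-provable monotonicity of the approximate $\sqrt{\cdot}$ from the remark on elementary functions, and choose $\eta\eqdef\beta/C$ for a large constant $C$ to absorb all $O(\eta)$ terms into $\beta$.
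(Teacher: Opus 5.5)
Your proposal is correct and is essentially the paper's argument: decompose the collision indicator into product terms, factor them with the Multiplication Principle, identify both marginals with the approximate acceptance probability of $C$, and use $p^2+(1-p)^2 = 1-2(p-p^2)$ before passing back to precision $\delta$. The differences are only cosmetic. The paper writes $Y_c = X_1X_2 + \overline{X}_1\overline{X}_2$ and combines Complementary Counting with two applications of the Multiplication Principle. You instead use the single identity $Y_c = 1 - X_1 - X_2 + 2X_1X_2$ with Linearity of Expectation and carry out the precision transfer on the squares, which stays comfortably within the $5\delta$ budget.
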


\begin{proof}
We argue in $\APX_1$. Fix $n,\delta^{-1},\beta^{-1}\in\Log$, circuits $C,T$, and random variables $X_1,X_2,Y_c$. Let $\eta^{-1}\in\Log$ be a parameter to be determined later. Note that as $X_1$ and $X_2$ are both the indicator random variable of $C$, we can prove by \Cref{prop: indicator variable} that $\left|\bbE_\eta[X_1] - \P_\eta(C)\right|, \left|\bbE_\eta[X_2] - \P_\eta(C)\right| \le 3\eta$, 
and subsequently 
\begin{equation}\label{equ: X1 X2 are close}
\left|\bbE_\eta[X_1]-\bbE_\eta[X_2]\right| \le 6\eta. 
\end{equation}

Let $Y_0,Y_1$ be random variables over $\zo$ such that $Y_i$ takes $(x_1,x_2)\in\zo^n\times \zo^n$ and output $1$ if and only if $C(x_1)=C(x_2)=i$. It is easy to see that for every assignment $\rho=(x_1,x_2)\in\zo^{2n}$ to the seed, $Y_c|_\rho = Y_0|_\rho + Y_1|_\rho$. Therefore, by the \refmeta{avg on exp}, 
\begin{equation}\label{equ: collision close to 0 plus 1}
\left|\bbE_\eta[Y_c] - \bbE_\eta[Y_0] - \bbE_\eta[Y_1]\right| \le 6\eta. 
\end{equation}

Let $\overline{X}_1,\overline{X}_2$ be the random variables defined by $1-X_1$ and $1-X_2$, respectively. Using \refmeta{complementation}, 
\begin{equation} \label{equ: Xi complement}
\left|\bbE_\eta[\overline{X}_i] + \bbE_\eta[X_i] - 1\right| \le 6\eta 
\end{equation}
for $i\in\{1,2\}$. We can further observe that for every assignment $\rho$ to the random seed, $Y_0|_\rho = \overline{X}_1\overline X_2|_\rho$ and $Y_1|_\rho = X_1X_2|_\rho$, and subsequently by the \refmeta{avg on exp}, 
\begin{equation*}
\left|\bbE_\eta[Y_0] - \bbE_\eta[\overline{X}_1\overline X_2]\right| \le 6\eta, \quad \left|\bbE_\eta[Y_1] - \bbE_\eta[X_1X_2]\right| \le 6\eta.  
\end{equation*}
Subsequently, by the \refmeta{multiplication}, 
\begin{equation}\label{equ: Y0 Y1 close to multiplication}
\left|\bbE_\eta[Y_0] - \bbE_\eta[\overline{X}_1]\cdot \bbE_\eta[\overline X_2]\right| \le 14\eta, \quad \left|\bbE_\eta[Y_1] - \bbE_\eta[X_1]\cdot \bbE_\eta[X_2]\right| \le 14\eta. 
\end{equation}

Fix any $i\in\{1,2\}$ and let $p\eqdef \bbE_\delta[X_i]$, $q\eqdef \bbE_\delta[Y_C]$. For simplicity, we assume that $0<\bbE_\eta[X_1]<1$ and $0<\bbE_\eta[X_2]<1$. We can perform the following calculation: 
\begin{align*}
\bbE_\delta[Y_C] &\le \bbE_\eta[Y_C] + (\delta + 2\eta) \tag{\refmeta{precision exp}} \\ 
&\le \bbE_\eta[Y_0] + \bbE_\eta[Y_1] + (\delta + 8\eta) \tag{\Cref{equ: collision close to 0 plus 1}} \\ 
&\le \bbE_\eta[\overline{X}_1]\cdot \bbE_\eta[\overline X_2] + \bbE_\eta[X_1]\cdot \bbE_\eta[X_2] + (\delta + 36\eta) \tag{\Cref{equ: Y0 Y1 close to multiplication}} \\ 
&\le (1-\bbE_\eta[X_1]+6\eta)(1-\bbE_\eta[X_2]+6\eta) + \bbE_\eta[X_1]\cdot \bbE_\eta[X_2] + (\delta + 36\eta) \tag{\Cref{equ: Xi complement}}\\ 
&\le (1-\bbE_\eta[X_i]+6\eta)(1+\bbE_\eta[X_{i}]+12\eta) + \bbE_\eta[X_i]\cdot (\bbE_\eta[X_{i}]-6\eta) + (\delta + 36\eta) \tag{\Cref{equ: X1 X2 are close}} \\ 
&\le (1-p+\delta +14\eta)^2 + (p+\delta + 14\eta)^2 + (\delta + 36\eta) \tag{\refmeta{precision exp}} \\ 
&\le (1-p)^2 + (\delta + 14\eta)^2 + 2(1-p)(\delta+14\eta) + p^2 + (\delta+14\eta)^2 + 2p(\delta+14\eta) + (\delta + 36\eta) \\ 
&\le (1-p)^2 + p^2 + 5\delta + 78\eta \\ 
&\le 1 - 2(p-p^2) + 5\delta + \beta, 
\end{align*}
where the last inequality holds if we set $\eta\eqdef \beta/100$. Thus we have $p-p^2 \le (1-q)/2 + 5\delta + \beta$, and subsequently 
\[
\left|\frac{1}{2}-p\right| = \sqrt{\left(\frac{1}{2}-p\right)^2} = \sqrt{\frac{1}{4}-(p-p^2)} \ge \sqrt{\frac{1}{4}-\left(\frac{1-q}{2}+5\delta + \beta\right)} \ge \sqrt{\frac{q}{2}-\frac{1}{4}-5\delta - \beta}. 
\]
This completes the proof. 
\end{proof}

\subsubsection{Completeness of BLR Linearity Testing}

We first formalize the completeness of the linearity testing algorithm. That is, if a circuit $C:\zo^n\to\zo$ computes a function that is indeed close to a linear function $x\mapsto \langle z,x\rangle\bmod 2$, then the self-correction algorithm works. Formally: 

\dummylabel{completeness BLR correction}{Completeness of BLR Self Correction}
\begin{lemma}[Completeness of BLR self-correction]\label{lmm: completeness self-correction}
    $\APX_1$ proves the following statement. Let $n,\delta^{-1},\beta^{-1}\in\Log$, $C:\zo^n\to\zo$ be a circuit, and $z\in\zo^n$ be a string. Let $\eps\in\bbQ$ such that $\eps < 1/2$. Define the following circuits: 
    \begin{compactitem}
    \item Let $T_C(x)$ be the circuit that outputs $1$ if and only if $C(x)\ne \langle x,z\rangle\bmod 2$.
    \item Let $D(x,r):\zo^n\times \zo^n\to\zo$ be the circuit that outputs $C(x\oplus r)\oplus C(r)$. 
    \item For $x\in\zo^n$, let $T_{D,x}(r)$ be the circuit that outputs $1$ if and only if $D(x,r)\ne \langle x,z\rangle\bmod 2$. 
    \end{compactitem}
    Suppose that $\P_\delta(T_C)\le \eps$. Then for every $x\in\zo^n$, $\P_\delta(T_{D,x})\le 2(\delta+\eps)+\beta$. 
\end{lemma}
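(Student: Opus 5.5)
The plan is to work at an auxiliary precision $\eta^{-1}\in\Log$, with $\eta\eqdef\beta/C$ for a large constant $C$. Every step is carried out at precision $\eta$, and the precision consistency axiom converts to precision $\delta$ only at the two ends. Fix $x\in\zo^n$. The combinatorial core is a pointwise fact provable in $\PV_1$. By linearity of $r\mapsto\langle r,z\rangle \bmod 2$,
\[
\langle x,z\rangle \bmod 2=\langle x\oplus r,z\rangle\oplus\langle r,z\rangle \bmod 2 ,
\]
so for every $r$ we have $T_{D,x}(r)\le T_C(x\oplus r)\lor T_C(r)$. In the notation of the \refmeta{rerand}, this is $T_{D,x}(r)\le (T_C)^\oplus_x(r)\lor T_C(r)$.

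The steps, in order, are as follows.

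\begin{enumerate}
\item From $\P_\delta(T_C)\le\eps$ and the \refaxiom{precision}, obtain $\P_\eta(T_C)\le \eps+\delta+2\eta$.
\item By the \refmeta{rerand} at precision $\eta$, obtain $\P_\eta((T_C)^\oplus_x)\le \eps+\delta+5\eta$.
\item Pass to indicator random variables over $V=\zo$ with seed $r$, using \Cref{prop: indicator variable}. Let $Y$ be the indicator of $T_{D,x}$, and let $X_1,X_2$ be the indicators of $(T_C)^\oplus_x$ and $T_C$.
\item Apply the \refmeta{union bound} with $m=2$. This gives $\bbE_\eta[Y]\le\bbE_\eta[X_1]+\bbE_\eta[X_2]+O(\eta)$, and hence $\P_\eta(T_{D,x})\le 2(\eps+\delta)+O(\eta)$.
\end{enumerate}

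Each of these is a direct invocation of an earlier result, so everything up to this point is routine.

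The main obstacle is the final passage from $\P_\eta(T_{D,x})$ back to $\P_\delta(T_{D,x})$. The \refaxiom{precision} charges another $+\delta$ at this step. The naive chain therefore ends at $2\eps+3\delta+O(\eta)$ instead of the stated $2(\delta+\eps)+\beta$.

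My first attempt to avoid this is to never leave precision $\delta$ on the output side. I would run the union bound and re-randomization directly at $\delta$, applying the \refmeta{avg on exp} to the combination $\bbE_\delta[Y]-\bbE_\delta[X_1]-\bbE_\delta[X_2]$. However, the averaging theorem itself loses $(2\delta+\beta)\|V\|\|\vec\lambda\|$, which is worse.

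A second attempt is a bespoke bit-fixing (``dueling'') argument in the style of the proof of the \refmeta{rerand}. It would compare $\P(T_{D,x})$ against the two summands simultaneously, with the hope of paying only the single $\delta$ that is already present in the hypothesis.

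I must flag a concern about the target itself. In a standard model we can have true $\Pr[T_C]=\eps+\delta$ and true $\Pr[T_{D,x}]$ close to $2(\eps+\delta)$, since the error event of $D$ is an XOR and can have probability nearly twice that of $T_C$. A $\delta$-accurate oracle could then report up to $2\eps+3\delta$. If this is right, then by soundness the stated constant cannot be proved as written.

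If neither attempt succeeds, I would prove $\P_\delta(T_{D,x})\le 2\eps+3\delta+\beta$ by the chain above, and check whether the later uses of this completeness lemma tolerate the extra $\delta$. The completeness of BLR linearity testing, which has $4\delta$ in its bound, suggests they do.
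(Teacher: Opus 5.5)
Your steps 1--4 are the paper's argument: Precision Consistency down to an auxiliary $\eta$, the Re-randomization Lemma for the shifted error event, the Union Bound over the two error events, and Precision Consistency back to $\delta$. The only cosmetic difference is that the paper introduces $T_\lor$ and applies Monotonicity separately. You instead feed the pointwise domination $T_{D,x}\le (T_C)^\oplus_x\lor T_C$ directly into the Union Bound, which already allows $S\le C_1\lor C_2$.

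More importantly, your concern about the target is correct. The paper reaches $2(\delta+\eps)+\beta$ only through an arithmetic slip in its final display. There it bounds $\P_\eta(T_\lor)+\delta+5\eta$, with $\P_\eta(T_\lor)\le 2\eps+2\delta+19\eta$, by $2\eps+2\delta+24\eta$; the correct bound is $2\eps+3\delta+24\eta$.

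You can upgrade your worry to a proof of unprovability.
\begin{itemize}
\item Let $S=\{y : C(y)\ne\langle y,z\rangle \bmod 2\}$. By linearity, $T_{D,x}(r)=\mathbf{1}_S(r)\oplus\mathbf{1}_S(x\oplus r)$, so the true density of $T_{D,x}$ is $|S\triangle(S\oplus x)|/2^n$.
\item Take a unit vector $x=e_i$ and $S\subseteq\{y: y_i=0\}$ of density $\eps+\delta\le 1/2$. Then $S\cap(S\oplus x)=\varnothing$, so $T_{D,x}$ has density exactly $2(\eps+\delta)$.
\item Let $\hat\P$ be exact everywhere except $\hat\P(T_C,\Delta)\eqdef\eps$ and $\hat\P(T_{D,x},\Delta)\eqdef 2\eps+3\delta$. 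This is a correct approximate counting function, so $\bbM(\hat\P)$ is a standard model.
\item Concretely, take $|\Delta|=16$, $\eps=1/8$, $|B|=32$, $n\ge 4$. The hypothesis $\hat\P(T_C,\Delta)=1/8\le\eps$ holds. The oracle reports $7/16>13/32=2(\delta+\eps)+\beta$, so the conclusion fails.
\end{itemize}
By \Cref{prop: soundness APX1}, the lemma as stated is not provable in $\APX_1$.

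So both of your attempts to avoid the extra $\delta$ cannot succeed and should be dropped: running everything at precision $\delta$, and the bespoke dueling argument. The correct statement is $\P_\delta(T_{D,x})\le 2\eps+3\delta+\beta$, which your chain proves with $\eta=\beta/C$.

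Your guess about downstream uses is also right. The completeness of BLR linearity testing invokes this lemma at the auxiliary precision $\eta$, with hypothesis $\P_\eta(T_C)\le\eps+\delta+2\eta$. There the corrected lemma still gives $2(\eps+\delta)+O(\eta)$, so the bound $3\eps+4\delta+\beta$ in that theorem is unaffected.
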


\begin{proof}
We argue in $\APX_1$. Fix $n,\delta^{-1},\beta^{-1}\in\Log$, the circuit $C$, $z\in\zo^n$, and $\eps\in\bbQ$. Let $T_C,D,T_{D,x}$ be the circuit as defined above, and $\eta^{-1}\in\Log$ be a parameter to be determined later. Suppose that $\P_\delta(T_C)\le \eps$, we know by the \refaxiom{precision} that 
\begin{equation}
\P_\eta(T_C)\le \eps + \delta + 2\eta. \label{equ: closeness} 
\end{equation}

Fix any $x\in\zo^n$. Let $X_1,X_2$ be random variables over $\zo$ that takes a seed $r\in\zo^n$, where $X_1=1$ if $C(r)\ne \langle r,z\rangle \bmod 2$, and $X_2=1$ if $C(x\oplus r) \ne \langle x\oplus r,z\rangle \bmod 2$. It is clear that $X_1$ is the indicator random variable of $T_C$, and thus by \Cref{equ: closeness} and \Cref{prop: indicator variable}, $\bbE_\eta[X_1]\le \eps + \delta + 5\eta$. By \Cref{prop: indicator variable} and the \refmeta{rerand}, we can further show that 
\[
\bbE_\eta[X_2]\le \P_\eta(T_C) + 6\eta \le \eps + \delta+8\eta. 
\]

Let $T_\lor(r)$ be the circuit that outputs $1$ if and only if $C(r)\ne \langle r,z\rangle \bmod 2$ or $C(x\oplus r) \ne \langle x\oplus r,z\rangle \bmod 2$, and $Y$ be the indicator random variable of $T_\lor$. By \Cref{prop: indicator variable} and the \refmeta{union bound}, we have 
\begin{equation}
\P_\eta(T_\lor) \le \bbE_{\eta}[Y] + 3\eta \le \bbE_\eta[X_1] + \bbE_\eta[X_2] + 6\eta \le 2\eps + 2\delta + 19\eta.\label{equ: ub T or} 
\end{equation}

Finally, we observe that if $T_{D,x}(r) = 1$, then $T_\lor(r)=1$. To see this, assume that $T_\lor(r)=0$, we have 
\[
D(x,r) = C(x\oplus r)\oplus C(r) = \langle x\oplus r,z\rangle + \langle r,z\rangle \bmod 2 = \langle x,z\rangle \bmod 2,  
\]
which implies that $T_{D,x}(r)=0$. Therefore, we have that 
\begin{align*}
\P_\delta(T_{D,x}) &\le \P_\eta(T_{D,x})+\delta + 2\eta \tag{\refaxiom{precision}} \\ 
&\le \P_\eta (T_\lor) + \delta + 5\eta \tag{\refmeta{monotone}} \\ 
&\le 2\eps + 2\delta + 24\eta.  \tag{\Cref{equ: ub T or}}
\end{align*}
This completes the proof by setting $\eta \eqdef \beta/30$. 
\end{proof}

It can be observed that this immediately gives the completeness of the BLR identity testing algorithm. Namely, if $C$ is close to a linear function, then it passes the linearity testing with high probability.  

\BLRComplete*

\begin{proof}
    We argue in $\APX$. Fix $n,\delta^{-1},\beta^{-1}\in\Log$, $C:\zo^n\to\zo$, $z\in\zo^n$, $\eps\in\bbQ$, and $T_C$, $T_{C,\BLR}$ be the circuits as described above. Let $T'_{C,\BLR}:\zo^{n}\times\zo^n\to\zo$ be the circuit that given $(x,y)$, outputs $1$ when $C(x\oplus y)\oplus C(y) \ne \langle x,z\rangle \bmod 2$. Let $I_C,I_{C,\BLR},I_{C,\BLR}'$ be the indicator random variables of $T_C,T_{C,\BLR},T_{C,\BLR}'$, respectively. 

    Let $\eta^{-1}\in\Log$ be a parameter to be determined later. Note that by the \refmeta{completeness BLR correction} and \Cref{prop: indicator variable}, we can prove that for any assignment $\rho$ to be first part of the seed of $T'_{C,\BLR}$, we have $\bbE_\eta[I_{C,\BLR}|_\rho]\le 2(\delta+\eps)+4\eta$. Subsequently, by the \refmeta{avg on exp}, we have  
    \begin{equation}
        \bbE_\eta[I_{C,\BLR}] \le 2(\delta+\eps)+4\eta + 6\eta \le 2(\delta+\eps) + 10\eta. 
    \end{equation} 
    By \Cref{prop: indicator variable} and \refaxiom{precision}, we also have $\bbE_{\eta}[I_C] \le \eps + \delta + 5\eta$.  
    
    It can be observed that if $T_{C,\BLR}(x,y)=1$, then either $T_{C}(x,y)=1$ or $T_{C,\BLR}'(x,y)=1$. Therefore, by the \refmeta{union bound}, we can prove that
    \[
    \bbE_\eta[I_{C,\BLR}] \le \bbE_\eta[I_C] + \bbE_\eta[I_{C,\BLR}'] + 6\eta \le 3(\delta+\eps)+15\eta. 
    \]
    Subsequently, by \Cref{prop: indicator variable}, $\P_\eta(T_{C,\BLR})\le 3(\delta+\eps)+\delta+18\eta$. The result then follows from the \refaxiom{precision} by setting $\eta\eqdef \beta/30$. 
\end{proof}

\subsubsection{Correctness of Majority Correction}

We move on to prove the soundness of the BLR linearity testing. As a first step, we prove that if $C$ passes the linearity testing, then the BLR self correction algorithm is \emph{single-valued}. Formally:  

\begin{lemma}[Single-valuedness of BLR correction]\label{lmm: single valued}
    $\APX_1$ proves the following statement. Let $n,\delta^{-1},\beta^{-1}\in\Log$ and $C:\zo^n\to\zo$ be a circuit. Let $\eps\in\bbQ$. Assume that $\eps,\delta,\beta\le 0.01$. Define the following circuits: 
    \begin{compactitem}
    \item Let $T_{C,\BLR}(x,y)$ be the circuit that outputs $1$ if and only if $C(x)\oplus C(y)\ne C(x\oplus y)$.  
    \item Let $D_{x,b}(r):\zo^n\to\zo$ be the circuit that outputs $1$ if and only if $C(x\oplus r)\oplus C(r)=b$. 
    \end{compactitem}
    Suppose that $\P_\delta(T_{C,\BLR})\le \eps$. For every $x\in\zo^n$, $\P_\delta(D_{x,b})\ge 1-4\eps-(4\delta+\beta)$ for some $b\in\zo$.
\end{lemma}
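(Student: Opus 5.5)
The plan is to reduce the statement to \Cref{lmm: collision biased}, applied to the random variable $X(r)\eqdef C(x\oplus r)\oplus C(r)$. Fix $x\in\zo^n$ and let $X_1,X_2$ be explicitly i.i.d.\ copies of $X$ defined by the circuit $D_{x,1}$, with disjoint seeds $r_1,r_2$. Let $Y_c$ be the indicator of $X_1=X_2$. Once we show that $\bbE_\eta[Y_c]$ is at least $1-2\eps-O(\delta+\eta)$, \Cref{lmm: collision biased} gives
\[
|\bbE_\eta[X_1]-1/2|\ge\sqrt{1/4-\eps-O(\delta+\eta)}\ge 1/2-2\eps-O(\delta+\eta),
\]
using $\sqrt{1/4-u}\ge 1/2-u$ for small $u\le 0.1$. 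So either $\bbE_\eta[X_1]\ge 1-2\eps-O(\cdot)$, and $b=1$ works, or $\bbE_\eta[X_1]\le 2\eps+O(\cdot)$, and \refmeta{complementation} gives $b=0$. We then convert expectations back to $\P_\delta(D_{x,b})$ using \Cref{prop: indicator variable} (together with averaging over the unused seed half), and finish with the \refaxiom{precision} with $\eta$ a small multiple of $\beta$. The target constant $4\eps$ leaves room for these losses.

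The main step is the lower bound on $\bbE[Y_c]$. Pointwise, $X_1\ne X_2$ means
\[
C(x\oplus r_1)\oplus C(r_1)\oplus C(x\oplus r_2)\oplus C(r_2)=1.
\]
Write $E_1(r_1,r_2)$ for the event $C(x\oplus r_1)\oplus C(r_2)\ne C(x\oplus r_1\oplus r_2)$ and $E_2(r_1,r_2)$ for the event $C(r_1)\oplus C(x\oplus r_2)\ne C(x\oplus r_1\oplus r_2)$. If neither holds, the two parities agree, so $1-Y_c\le E_1\lor E_2$ pointwise. The \refmeta{union bound}, combined with \refmeta{complementation} for $1-Y_c$, then gives $\bbE[1-Y_c]\le \bbE[E_1]+\bbE[E_2]+O(\eta)$.

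It remains to show $\P_\eta(E_i)\le\eps+\delta+O(\eta)$. The event $E_1$ is exactly $T_{C,\BLR}$ evaluated at $(x\oplus r_1, r_2)$. This is a change of variables: XOR by $x$ on the first block. The \refmeta{rerand} handles it, applied to $T_{C,\BLR}$ with the shift $(x,0^n)$, together with the \refmeta{global} to match circuits syntactically. For $E_2$ we use the shift $(0^n,x)$ and \refmeta{permutation} to swap the two blocks. The hypothesis $\P_\delta(T_{C,\BLR})\le\eps$ plus the \refaxiom{precision} then yields the bound.

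I expect the main obstacle to be bookkeeping rather than ideas. The points to watch are matching the circuit for $E_i$ to a re-randomized copy of $T_{C,\BLR}$ provably in $\PV_1$, and tracking the additive slack through the square-root step so that the final bound is $4\eps+4\delta+\beta$. The hypothesis $\eps,\delta,\beta\le 0.01$ ensures the argument of the square root is positive and that the inequality $\sqrt{1/4-u}\ge 1/2-u$ applies.
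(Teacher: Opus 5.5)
Your proposal is correct and follows essentially the paper's route: lower-bound the collision probability of two explicitly i.i.d.\ copies of $r\mapsto C(x\oplus r)\oplus C(r)$ by a union bound over two re-randomized copies of the BLR test, apply \Cref{lmm: collision biased}, and convert back to $\P_\delta(D_{x,b})$ via averaging, complementation and precision consistency. The only structural difference is that you use the shifts $(x,0^n)$ and $(0^n,x)$, whereas the paper uses the unshifted test together with the shift $(x,x)$; for $E_2$ no block permutation is needed, since $E_2$ is literally $T_{C,\BLR}(r_1,x\oplus r_2)$.

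One slip to fix: $\sqrt{1/4-u}\ge 1/2-u$ is false for $u>0$ (the reverse inequality holds). The correct bound is $\sqrt{1/4-u}\ge 1/2-2u$ for $0\le u\le 1/4$, which is exactly what your displayed estimate uses. With it the constants still fit within $4\eps+4\delta+\beta$.
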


\begin{proof}
    We argue in $\APX_1$. Fix $n,\delta^{-1},\beta^{-1}\in\Log$, the circuit $C$, and $\eps\in\bbQ$. Let $T_{C,\BLR}$ and $D_{x,b}$ be the circuits as defined above, and $\eta^{-1}\in\Log$ be a parameter to be determined later. Suppose that $\P_\delta(T_{C,\BLR})\le \eps$, we know by the \refaxiom{precision} that 
    \begin{equation}\label{equ: BLR ub}
    \P_\eta(T_{C,\BLR})\le \eps + \delta + 2\eta. 
    \end{equation}

    Fix any $x\in\zo^n$. Let $D_{x}':\zo^{2n}\to\zo$ be circuit that takes $(r_1,r_2)\in\zo^n$ are the input, and outputs $1$ if $C(x\oplus r_1)\oplus C(r_1) = C(x\oplus r_2)\oplus C(r_2)$. Let $Y$ be the indicator random variable of $D_x'$, and $\overline Y$ be the indicator random variable of $1-D_x'$. It follows from \refmeta{avg on exp} that $\bbE_\eta [Y] \ge 1-\bbE_\eta[\overline Y] - 6\eta$. 
    
    Consider the following two circuits $T,T':\zo^{2n}\to\zo$: 
    \begin{compactitem}
    \item $T(r_1,r_2)\eqdef 1$ if and only if $C(r_1)\oplus C(r_2)\ne C(r_1\oplus r_2)$. 
    \item $T'(r_1,r_2)\eqdef 1$ if and only if $C(x\oplus r_1)\oplus C(x\oplus r_2) \ne  C((x\oplus r_1)\oplus (x\oplus r_2))$. 
    \end{compactitem}
    Let $X,X'$ be the indicator random variable of $T$ and $T'$, respectively. It is clear that $T$ is exactly $T_{C,\BLR}$, and thus by \Cref{equ: BLR ub}, $\P_\eta(T)\le \eps + \delta + 2\eta$. Similarly, as $T_i'$ is obtained $T_{C,\BLR}$ by taking bitwise-XOR to the input string with the fixed string $(x,x)$, by \Cref{equ: BLR ub} and the \refmeta{rerand}, $\P_\eta(T')\le \eta + \delta + 5\eta$. 

    Moreover, one can observe that $D_x'(x,y)=0$ implies that either $T(x,y)$ or $T'(x,y)$ outputs $1$: This is because if $T(x,y)=T'(x,y)=0$, we can conclude that 
    \[
    C(x\oplus r_1)\oplus C(x\oplus r_2)\oplus C(r_1)\oplus C(r_2) = C(r_1\oplus r_2)\oplus C((x\oplus r_1)\oplus (x\oplus r_2)) = 0, 
    \]
    which implies that $D_x'(x,y)=1$. Subsequently, by the \refmeta{union bound}, we have 
    \[
    \bbE_\eta[\overline{Y}] \le \bbE_\eta[X_i] + \bbE_\eta[X_i'] + 3\eta \le 2(\eps+\delta) + 10\eta. 
    \]
    and thus $\bbE_\eta[Y]\ge 1-\bbE_\eta[\overline Y]-6\eta \ge 1 - 2(\eps+\delta) - 16\eta$. 

    Let $I_x,I_x'$ be explicitly i.i.d.~RVs over $\zo$ defined by the circuit $D_x(r)\eqdef C(x\oplus r)\oplus C(r)$, and $Y_c$ is the indicator random variable of $I_x=I_x'$. By definitions, we can see that $Y_c|_\rho=Y|_\rho$ for any assignment $\rho$, and thus by the \refmeta{avg on exp}, 
    \[
    \bbE_\eta[Y_c] \ge \bbE_\eta[Y] - 6\eta \ge 1-2(\eps+\delta)-22\eta. 
    \]
    Subsequently, by \Cref{lmm: collision biased}, we have 
    \begin{equation}
        \left|\bbE_\eta[I_x]-\frac{1}{2}\right| \ge \sqrt{\frac{1-2(\eps+\delta)-22\eta}{2}-\frac{1}{4}-6\eta} \ge \sqrt{\frac{1}{4}-(\eps+\delta+17\eta)} \ge \frac{1}{2}-4(\eps+\delta+17\eta). 
    \end{equation}

    Recall that $I_x$ is the random variable that takes $(r_1,r_2)\in\zo^{2n}$ as random seed and outputs $C(x\oplus r_1)\oplus C(r_1)$. Suppose that $\bbE_\eta[I_x] \ge 1 - 4(\eps+\delta+17\eta)$. By \refmeta{avg on exp}, there is an assignment $\rho$ of the second part $r_2$ of the seed (which was for $I_{x}'$) such that 
    \[
    \bbE_\eta[I_x|_\rho] \ge \bbE_\eta[I_x] - 6\eta \ge 1 - 4\eta - (4\delta + 74\eta). 
    \]
    As $I_x|_\rho$ is the indicator random variable of $D_{x,1}$, it follows from \Cref{prop: indicator variable} that $\P_\eta(D_{x,1}) \ge \bbE_\eta[I_x] - 3\eta \ge 1 - 4\eps - (4\delta + 74\eta)$. It suffices if we set $\eta\le \beta/100$. The other case $\bbE_\eta[I_x] \le 4(\eps+\delta+17\eta)$ can be resolved by considering $\overline{I_x}\eqdef 1-I_x$. 
\end{proof}

\Cref{lmm: single valued} shows that the BLR self correction algorithm is single-valued assuming that the circuit $C$ passes the linearity testing. Second, we show that the ``corrected'' function $g(\cdot )$ satisfies that $g(x)\oplus g(y)=g(x\oplus y)$ for \emph{every} $x,y\in\zo^n$. Formally: 

\begin{lemma}[Linearity of BLR correction]\label{lmm: linearity BLR}
    $\APX_1$ proves the following statement. Let $n,\delta^{-1},\beta^{-1}\in\Log$ and $C:\zo^n\to\zo$ be a circuit. Let $\eps\in\bbQ$. Assume that $\eps,\delta,\beta\le 0.01$. Let $D_{x,b}$ be the circuit in \Cref{lmm: single valued}, and $g(x)$ be the $\P$-oracle circuit that works as follows: Given $x\in\zo^n$, it outputs $b\in\zo$ if $\P_{\delta}(D_{x,b})\ge 1-4\eps -(4\delta+\beta)$, and $\bot$ otherwise.
    
     Let $T_{C,\BLR}(x,y)$ be the circuit that outputs $1$ if and only if $C(x)\oplus C(y)\ne C(x\oplus y)$. Suppose that $\P_\delta(T_{C,\BLR})\le \eps$. Then for every $x_1,x_2\in\zo^n$, $g(x_1)\oplus g(x_2)=g(x_1\oplus x_2)$. 
\end{lemma}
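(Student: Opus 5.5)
The plan is to formalize the classical majority-correction argument, using \Cref{lmm: single valued} to get that $g$ is well-defined everywhere and then a union bound over three ``bad'' events on a shared random string $r$. First, by \Cref{lmm: single valued}, for every $x$ there is $b$ with $\P_\delta(D_{x,b})\ge 1-4\eps-(4\delta+\beta)$. I will also check that only one $b$ can qualify. The circuits $D_{x,0}$ and $D_{x,1}$ are complementary, so \refmeta{complementation} gives $\P_\delta(D_{x,0})+\P_\delta(D_{x,1})\le 1+2\delta+\beta$, and both being at least $1-4\eps-4\delta-\beta$ would contradict $\eps,\delta,\beta\le 0.01$. Hence $g(x)\in\zo$ is well defined, and $\P_\delta(\lnot D_{x,g(x)})\le 4\eps+6\delta+2\beta$, again by complementation.

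Now fix $x_1,x_2$ and suppose for contradiction that $g(x_1)\oplus g(x_2)\ne g(x_1\oplus x_2)$. Let $\eta^{-1}\in\Log$ be a small auxiliary precision. Over the seed $r\in\zo^n$, consider three bad events:
\begin{compactitem}
\item $E_1(r)$: $C(x_1\oplus r)\oplus C(r)\ne g(x_1)$, which is $\lnot D_{x_1,g(x_1)}$;
\item $E_2(r)$: $C(x_1\oplus x_2\oplus r)\oplus C(x_1\oplus r)\ne g(x_2)$, which is $\lnot D_{x_2,g(x_2)}$ evaluated at $x_1\oplus r$;
\item $E_3(r)$: $C(x_1\oplus x_2\oplus r)\oplus C(r)\ne g(x_1\oplus x_2)$, which is $\lnot D_{x_1\oplus x_2,g(x_1\oplus x_2)}$.
\end{compactitem}
The bounds for $E_1$ and $E_3$ come directly from the previous paragraph. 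For $E_2$, I apply the \refmeta{rerand} with the shift $x_1$ to transfer the bound on $\lnot D_{x_2,g(x_2)}$. After converting to precision $\eta$ with the \refaxiom{precision}, each bad event has probability at most about $4\eps+6\delta+2\beta+O(\eta)$.

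Next I need the pointwise fact that some bad event always occurs. For every $r$, the three XOR identities telescope: $[C(x_1\oplus r)\oplus C(r)]\oplus[C(x_1\oplus x_2\oplus r)\oplus C(x_1\oplus r)] = C(x_1\oplus x_2\oplus r)\oplus C(r)$. So if none of $E_1,E_2,E_3$ held, we would get $g(x_1)\oplus g(x_2)=g(x_1\oplus x_2)$. Under the contradiction hypothesis, the constant circuit $\True_n$ therefore satisfies $\True_n(r)\le E_1(r)\lor E_2(r)\lor E_3(r)$, and this is provable in $\PV_1$. The \refmeta{union bound} together with \Cref{prop: indicator variable} then gives $1=\P_\eta(\True_n)\le 3(4\eps+6\delta+2\beta)+O(\eta)$. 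This is at most about $0.4$ for parameters below $0.01$, which is a contradiction once $\eta$ is chosen as a small multiple of $\beta$.

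The main obstacle is a subtlety in defining $g$. It is a $\P$-oracle circuit whose value depends on $\P_\delta$, and the events $E_i$ use the \emph{values} $g(x_i)$ rather than circuits that call the oracle. The fix is to evaluate $g(x_1),g(x_2),g(x_1\oplus x_2)$ in the ambient argument first and hardwire the resulting bits as constants into $E_1,E_2,E_3$, so that these are ordinary circuits and $\P$ is never nested. This also requires the threshold comparison in $g$ to exactly match the bound obtained from \Cref{lmm: single valued}, so that $g(x)\ne\bot$ holds provably. That is where I will check the constants.
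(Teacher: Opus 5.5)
Your proposal is correct and follows essentially the same route as the paper. Both arguments hardwire $b_1=g(x_1)$, $b_2=g(x_2)$, $b_\oplus=g(x_1\oplus x_2)$ as constants, bound the three bad events using \Cref{lmm: single valued}, complementation and the re-randomization lemma for the one shifted event, and combine them with the union bound and the pointwise XOR telescoping identity. The differences are cosmetic. The paper re-randomizes the $x_1\oplus x_2$ event by $x_2$ rather than the $x_2$ event by $x_1$. Instead of arguing by contradiction against $\True_n$, it lower-bounds the probability that all three good events hold and extracts an explicit good $r$ via the averaging argument. Your uniqueness check for $g(x)$ is a small addition that the paper leaves implicit.
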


\begin{proof}
    We argue in $\APX_1$. Fix $n,\delta^{-1},\beta^{-1}\in\Log$, the circuit $C$, and $\eps\in\bbQ$. Assume that $\P_\delta(T_{C,\BLR})\le \eps$. Note that by \Cref{lmm: single valued} and $\eps,\delta,\beta<0.01$, $g(x)\in\zo$ for every $x\in\zo^n$. Fix $x_1,x_2\in\zo^n$ and let $b_j\eqdef g(x_j)$ for $j\in\{1,2\}$, and $b_\oplus\eqdef g(x_1\oplus x_2)$. 
    
    Let $\eta^{-1}\in\Log$ be a parameter to be determined later, and $X_1,X_2,X_\oplus$ be the random variables over $\zo$ with seed $r\in\zo^n$ defined as follows: 
    \begin{compactitem}
    \item $X_1$ outputs $1$ if and only if $b_1 = C(x_1\oplus r)\oplus C(r)$. 
    \item $X_2$ outputs $1$ if and only if $b_2 = C(x_2\oplus r)\oplus C(r)$. 
    \item $X_\oplus$ outputs $1$ if and only if $b_\oplus = C(x_1\oplus x_2\oplus (x_2\oplus r)) \oplus C(x_2\oplus r)$.
    \end{compactitem}
    Note that as $b_j\eqdef g(x_j)$, we know by the definition of $g$ that $\P_{\delta}(D_{x_j,b_j})\ge 1-4\eps-(4\delta+\beta)$. It can be observed that $X_1,X_2$ are the indicator random variables of $D_{x_1,b_1}, D_{x_2,b_2}$, thus by \Cref{prop: indicator variable} and the \refmeta{precision exp},
    \[
    \bbE_{\eta}[X_1],\bbE_{\eta}[X_2] \ge 1-4\eps-(5\delta+\beta-2\eta).  
    \]
    Moreover, $X_\oplus$ is the indicator variable of the circuit that outputs $1$ if $b_\oplus = C(x_1\oplus x_2\oplus (x_2\oplus r)) \oplus C(x_2\oplus r)$, and the circuit is obtained from $D_{x_1\oplus x_2,b_\oplus}$ by taking XOR to the input with a fixed string $x_2$. Therefore, by \Cref{prop: indicator variable} and the \refmeta{rerand}, we have $\bbE_{0.01}[X_\oplus]\ge 1-4\eps-(5\delta+\beta-2\eta)$. 

    Let $\overline{X}_1\eqdef 1-X_1$, $\overline{X}_2\eqdef 1-X_2$, and $\overline{X}_\oplus\eqdef 1-X_\oplus$. Let $Y\eqdef \overline{X}_1\lor \overline{X}_2\lor \overline{X}_\oplus$ and $\overline{Y} \eqdef 1-Y$. Then using \refmeta{complementation} and \refmeta{union bound}, we have 
    \begin{align*}
    \bbE_{\eta}[Y]& \le \bbE_\eta[\overline X_1] + \bbE_\eta[\overline X_2] + \bbE_\eta[\overline X_\oplus] + 9\eta \tag{\refmeta{union bound}}\\ 
       &  \le (1-\bbE_\eta[X_1]) + (1 - \bbE_\eta[X_2]) + (1 - \bbE_\eta[X_\oplus]) + 18\eta \tag{\refmeta{complementation}} \\ 
       & \le 12\eps + 15\delta + \beta + 24\eta.  
    \end{align*}
    Again, using \refmeta{complementation}, we have $\bbE_{\eta}[\overline {Y}] \ge 1 - (12\eps + 15\delta + \beta + 24\eta)$. By setting $\eta \eqdef \beta / 100$, we have $\bbE_{\eta}[\overline {Y}] > 3\eta$. 
    
    By \refmeta{avg on exp}, there exists an assignment $\rho$ such that $\bbE_{0.01}[\overline{Y}|_\rho] \ge 0.05$, or in other words, $\overline{Y}|_\rho=1$ as its seed length is $0$ after applying the restriction $\rho$. By the definition of the random variables, this indicates that     
    \begin{align*}
        b_1 &= C(x_1\oplus r)\oplus C(r),\\ 
        b_2 &= C(x_2\oplus r)\oplus C(r), \\ 
        b_\oplus &= C(x_1\oplus x_2\oplus (x_2\oplus r)) \oplus C(x_2\oplus r) = C(x_1\oplus r)\oplus C(x_2\oplus r).  
    \end{align*}
    It immediately follows that $b_\oplus =b_1\oplus b_2$.  
\end{proof}

\subsubsection{Soundness of BLR Linearity Testing}

Now we are ready to prove the soundness of the BLR linearity testing. At a high level, we will recover the string $z\in\zo^n$ that defines the linear function using the oracle circuit $g(\cdot)$. It is worth noting that the correctness proof of the string $z$ is quite non-trivial: It crucially builds on the tools for random variables developed in \Cref{sec: basic properties}, especially \refmeta{avg on exp}. 

\BLRSound*

\begin{proof}
    We argue in $\APX_1$. Fix $n,\delta^{-1},\beta^{-1}\in\Log$, the circuit $C$, and $\eps\in\bbQ$. Let $T_{C,z}$ and $T_{C,\BLR}$ be the circuits as defined above, and $\eta^{-1}\in\Log$ be a parameter to be determined later. Suppose that $\P_\delta(T_{C,\BLR})\le \eps$, we know by the \refaxiom{precision} that 
    \begin{equation}\label{equ: BLR ub 2}
    \P_\eta(T_{C,\BLR})\le \eps + \delta + 2\eta. 
    \end{equation}

    Let $D_{x,b}(r)\eqdef C(x\oplus r)\oplus C(r)\oplus (1-b)$ be the circuit in \Cref{lmm: single valued}, and $g(x)$ be the oracle circuit in \Cref{lmm: linearity BLR}. By \Cref{lmm: single valued}, we have that for every $x\in\zo^n$, there is a $b\in\zo$ such that
    \begin{equation}
        \P_\eta(D_{x,b}) \ge 1 - 4(\eps + \delta + 2\eta) - 5\eta \ge  1-4(\eps+\delta) - 13\eta. \label{equ: single valued eta}
    \end{equation}
    Note that we will choose $\eta$ such that $13\eta < 0.01$. Therefore, by \Cref{lmm: linearity BLR} that for every $x\in\zo^n$, the bit $b$ satisfying \Cref{equ: single valued eta} is given by $g(x)$. Moreover, for every $x_1,x_2\in\zo^n$, we have that $g(x_1)\oplus g(x_2)=g(x_1\oplus x_2)$. 

    Let $e_i$ be the string that is $0$ on all but the $i$-th bit, and $z\eqdef g(e_1)\circ g(e_2)\circ \dots\circ g(e_n)$. That is, $z_i=g(e_i)$ for every $i\in[n]$. Let $X$ and $Y$ be random variables over $\zo$ that take a seed $(x,r)\in\zo^n\times \zo^n$ of length $2n$ and are defined as follows.
    \begin{compactitem}
    \item $X$ outputs $1$ if $C(x\oplus r)\oplus C(x)\ne C(r)$. That is, $X$ is the indicator random variable of $T_{C,\BLR}$. By \Cref{prop: indicator variable},  
    \begin{equation}
        \bbE_\eta[X] \le \P_\eta(T_{C,\BLR}) + 3\eta \le \eps + \delta + 5\eta. 
    \end{equation} 
    \item $Y$ outputs $1$ if $C(x) \ne \langle x,z\rangle \bmod 2$. Note that for every assignment $r$ to the second part of its seed, $Y|_r$ is the indicator random variable of $T_{C,z}$. 
    \end{compactitem} 

    Next, we will prove that for every $x\in\zo^n$, 
    \begin{equation}
    \left|\bbE_\eta[X|_x] - \bbE_\eta[Y|_x]\right| \le 4(\eps+\delta) + 13\eta. \label{equ: close X and Y after assignment}
    \end{equation} 
    Recall that by the definition of $X$ and $D_{x,b}$, we have that for every assignment $x$ to the \emph{first} part of their seeds, $X|_x$ is the indicator random variable of $D_{x,1}$, and for $\overline X\eqdef 1-X$, $\overline X|_x$ is the indicator random variable of $D_{x,0}$. Therefore, by \Cref{lmm: single valued}, 
    \[
    \max \{ \bbE_\eta[X|_x], 1-\bbE_\eta[X|_x] \} \ge 1- 4(\eps+\delta+2\eta) - 5\eta \ge 1-4(\eps+\delta)-13\eta.   
    \]
    We consider the case that $\bbE_\eta[X|_x] \ge 1-4(\eps+\delta)-13\eta$, and the other case is similar. By the definition of $g$, we know that $g(x) = 1$, and subsequently 
    \[
    \langle x,z\rangle \bmod 2 = \sum_{i\in[n],x_i=1} g(e_i)\bmod 2 = g(x),
    \]
    where the last equality follows from \Cref{lmm: linearity BLR} and the \refaxiom{ind 1} (note that the induction axiom suffices as $n\in\Log$). For any assignment $r$ to the second part of the seed, we have $\bbE_\eta[Y|_x|_y] = 1$, which subsequently implies that $\bbE_\eta[Y|_x] = 1$. Therefore, for any assignment $x\in\zo^n$ to the first part of the seed, 
    \[
    \left|\bbE_\eta[X|_x] - \bbE_\eta[Y|_x]\right| \le 4(\eps+\delta) + 13\eta. 
    \]
    By the \refmeta{avg on exp}, we have $\left|\bbE_\eta[X] - \bbE_\eta[Y]\right| \le 4(\eps + \delta) + 19\eta$, and thus 
    \[
        \bbE_\eta[Y] \le \bbE_\eta[X] + 4(\eps + \delta) + 19\eta \le  e. 
    \]

    Again, by the \refmeta{avg on exp}, there is an assignment $r\in\zo^n$ to the second part of its seed such that 
    \begin{equation}
        \bbE_\eta[Y |_r] \le \bbE_\eta[Y] + 3\eta \le 5(\eps + \delta) + 27\eta. 
    \end{equation}
    Fix the assignment $r$. As mentioned above, $\bbE_\eta[Y|_r]$ is the indicator random variable of $T_{C,z}$, and thus by \Cref{prop: indicator variable} and \refaxiom{precision}, 
    \[ 
    \P_\delta(T_{C,z}) \le \P_\eta(T_{C,z}) + \delta + 2\eta \le \bbE_\eta[Y|_r] + \delta  + 5\eta \le  5\eps + 6\delta + 32\eta. 
    \] 
    It completes the proof by taking $\eta\eqdef \beta / 50$. 
\end{proof}

\section{Witnessing Theorems and Relative Strength of \texorpdfstring{$\Apx_1$}{Apx1}}\label{sec: witnessing}

In this section, we prove a witnessing theorem for $\Apx_1$ and consider its relation to other theories of bounded arithmetic, including $\PV_1$ and $\APC_1$. 

\subsection{Provably Total \texorpdfstring{$\TFNP$}{TFNP} Problems in \texorpdfstring{$\APX_1$}{APX1}}

In this subsection, we will introduce a witnessing theorem for the $\forall\Sigma_1^b$-consequences of $\APX_1$ (i.e.~provably total $\TFNP$ problems in $\APX_1$). 

\subsubsection{A \texorpdfstring{$\TFZPP$}{TFZPP} Problem: \texorpdfstring{$\RefYao$}{Refuter(Yao)}} 

We will first introduce a $\TFZPP$ problem\footnote{A search problem is said to be in $\TFZPP$ if it is a $\TFNP$ problem solvable by randomized polynomial-time algorithms.} called \emph{Refutation of Yao-Predictor Generators}; we denote it by $\RefYao$. Recall that Yao's distinguisher-to-predictor transformation \cite{Yao82} (see  \Cref{sec:Yao_transformation}) shows that if a distribution $\disD$ over $\zo^n$ is not $\eps$-pseudorandom, i.e., there is a circuit $C:\zo^n\to\zo$ (called \emph{distinguisher}) such that 
\[
\Big |\Pr[C(\disD)] - \Pr[C(\disU_n)]\Big | > \eps, 
\]
then there exists $i\in[n]$ and a \emph{predictor} $P_i:\zo^{i-1}\to\zo$ such that 
\[
\Pr_{x\gets \disD}[P(x_{<i}) = x_i] \ge \frac{1}{2} + \frac{\eps}{4n},
\]
i.e., $P$ predicts the $i$-th bit of $\mathcal{D}$ with advantage at least $\varepsilon/4n$.
This transformation serves as a key step in the construction and analysis of pseudorandom generators (see, e.g., \cite{NW,IW97}): it shows that an unpredictable distribution is  necessarily pseudorandom. 

In the statement below, we say that a discrete probability distribution $\mathcal{D}$ is \emph{flat} if it is uniform over its support, i.e., over the set of elements with non-zero probability over $\mathcal{D}$. The \emph{size} of the distribution is the size of its support. We will represent flat distributions explicitly as a list of strings. In the subsequent discussions, we might tacitly assume that the relevant distribution is flat and explicitly represented.

\begin{definition}
The search problem $\RefYao$ is defined as follows. 
\begin{compactitem}
\item (\emph{Parameters}). Length of strings $n$, distribution size $m$, predictor size $s$, and advantage $\delta\in[0,1]$.
\item (\emph{Input}). A circuit $G:\zo^{nm}\to[n]\times \zo^s$ (called \emph{predictor generator}). 
\item (\emph{Solution}). Any explicit flat distribution $\disD\in(\zo^{n})^m$ of size $m$ such that the following holds:

\vspace{0.2cm}

Let $(i,P)\eqdef G(\disD)$, where $P:\zo^{i-1}\to\zo$ is parsed as a circuit of description length $\leq s$. Then 
\[
\Pr_{x\gets \disD}[P(x_{<i}) = x_i] < \frac{1}{2} + \delta. 
\]
In other words, $P$ is not a predictor of the $i$-th bit of $\disD$ with advantage $\delta$.  
\end{compactitem}
\end{definition}

For concreteness, one may think of the parameter regime $m=n^{10}$, $s = n^2$, and $\delta = 0.1$. In this case, a random distribution of $m$ strings of length $n$ is likely $o(1)$-pseudorandom against any circuit of size $s$, and thus must be a solution of $\RefYao$ no matter the input circuit $G$. 

At a high level, $\RefYao$ asks to generate a distribution $\disD$ that is unpredictable against a \emph{given} predictor generator $G$ --- a deterministic algorithm that aims to output a predictor $P$ for $\disD$. The distribution $\disD$ is not necessarily an unpredictable (or equivalently, pseudorandom) distribution against \emph{small circuits}; it suffices to fool the given \emph{deterministic} predictor generator $G$. This makes it a special case of constructing targeted PRGs, which is known to be $\pr\BPP$-complete (see, e.g., \cite{Goldreich11g,CT21,LPT24}).

\subsubsection{Connection to \texorpdfstring{$\LossyCode$}{Lossy Code}} 

A closely related $\TFZPP$ relation is the \emph{Lossy Code Problem}; we denote it by $\LossyCode$. Inspired by the literature in bounded arithmetic (see \citep[Section 3.1]{Jerabek07} and the discussion below), the problem is defined in \cite{Korten22} as a more feasible variant of the Range Avoidance Problem; see \cite{Korten-survey} and references therein for an introduction to this line of work. 

\begin{definition}
    The search problem $\LossyCode$ is defined as follows. 
    \begin{compactitem}
    \item (\emph{Input}). Circuits $C:\zo^{n}\to\zo^{n-1}$ and $D:\zo^{n-1}\to\zo^{n}$. These two circuits are called \emph{compressor} and \emph{decompressor}, respectively.
    \item (\emph{Output}). A string $x\in\zo^n$ such that $D(C(x))\ne x$.
    \end{compactitem}
\end{definition}

It is clear that $\LossyCode\in\TFZPP$. Indeed, Wilkie (unpublished) and Thapen \cite{thapen2002weak} proved that the problem captures the $\forall\Sigma_1^b$-fragment of the theory $\APC_1$. 

\begin{theorem}[{\cite[Proposition 1.14]{Jerabek04}, also see \cite[Theorem D.1]{LPT24}}]\label{thm: APC1 witnessing}
Let $\phi(x,y)$ be a quantifier-free formula in the language of $\APC_1$ that only has $x$ and $y$ as open variables. If $\APC_1\vdash \forall x~\exists y~\phi(x,y)$, then there is a deterministic polynomial-time reduction from the following problem to $\LossyCode$: Given $n\in\bbN$, output  $m\in\bbN$ such that $\phi(n,m)$ is true in the standard model. 
\end{theorem}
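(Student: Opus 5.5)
This theorem is cited from \cite{Jerabek04} and \cite{LPT24}. I would reprove it by combining two ingredients: the standard KPT/Herbrand witnessing for the universal theory $\PV_1$, and the $\forall\Sigma^b_1$-conservativity of $\APC_1$ over $\PV_1+\rWPHP(\PV)$ \cite{Jerabek04,Jerabek07}.

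The first step uses that conservativity. Since $\forall x\,\exists y\,\phi(x,y)$ is $\forall\Sigma^b_1$ (with $y$ polynomially bounded, as in the intended formalization), it follows that $\PV_1+\rWPHP(\PV)\vdash\forall x\,\exists y\,\phi(x,y)$. Now $\rWPHP(\PV)$ is the scheme stating that for all circuit pairs $(C,D)$ with $C\colon\zo^n\to\zo^{n-1}$ and $D\colon\zo^{n-1}\to\zo^n$, there is $u$ with $D(C(u))\ne u$. Compactness lets us replace the scheme by a single instance. The key observation is that $\rWPHP$ is itself a $\forall\exists$ statement whose inner matrix is quantifier-free. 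So the theory proving $\phi$ is a universal theory ($\PV_1$) together with finitely many $\forall\exists$ axioms of the form $\forall (C,D)\,\exists u\,\neg(D(C(u))=u)$.

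The second step Skolemizes. I add a fresh function symbol $h(C,D)$ returning a non-decompressible $u$, together with the open axiom $D(C(h(C,D)))\ne h(C,D)$. The resulting theory is universal, so Herbrand's theorem gives finitely many terms $t_1,\dots,t_c$ in $\PV(h)$ with $\PV(h)\vdash\bigvee_i\phi(x,t_i(x))$, modulo the Skolem axioms. By soundness this disjunction holds in every model where $h$ is interpreted as any correct $\LossyCode$ solver. Each $t_i$ is a polynomial-time oracle algorithm calling $h$, so the reduction simply evaluates the $t_i$ with oracle calls answered by $\LossyCode$ queries and outputs whichever $t_i(x)$ satisfies $\phi$. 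This gives a deterministic polynomial-time Turing reduction. The parameter $n$ of each query is polynomial, and compressors with shorter outputs can be padded down to $n-1$ bits.

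The main obstacle is the conservativity step: $\APC_1$ is defined via $\dWPHP(\PV)$, not $\rWPHP$. I would follow Je\v{r}\'abek and Thapen. Suppose $\PV_1+\rWPHP\nvdash\psi$. Take a countable model of $\PV_1+\rWPHP+\neg\psi$ and build an extension that satisfies $\dWPHP(\PV)$ while preserving the failure of the $\Sigma^b_1$ statement, by iteratively adjoining non-surjectivity witnesses. The thing to check is that each step preserves $\rWPHP$-consistency.

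A fallback is a direct KPT-style student–teacher argument on $\dWPHP$. Instances $f\colon\zo^n\to\zo^{2n}$ would be answered by a candidate $y$. A counterexample preimage $u$ with $f(u)=y$ feeds back into a Lossy Code instance built from $f$ by iterated composition, which shrinks $2n$ bits to $n$ and then to $n-1$ via a decoder $D$ that inverts along the found preimages. This is the route taken in \cite[Theorem D.1]{LPT24}.
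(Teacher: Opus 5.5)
Your main route is sound, but note that the paper gives no proof of this statement; it only cites Je\v{r}\'abek and Li--Pyne--Tell. The Skolemization step works because the axiom $D(C(h(C,D)))\ne h(C,D)$ is universal and its truth does not depend on any other term. Hence soundness in $\bbN$, with $h$ interpreted as any $\LossyCode$ solver, turns the Herbrand terms into a deterministic polynomial-time Turing reduction. Three small points need to be added:
\begin{itemize}
\item memoize repeated queries, so that the oracle answers really define a function;
\item answer invalid queries with a default value;
\item pad compressors to $N-1$ output bits.
\end{itemize}
The unbounded $\exists y$ causes no trouble: $\APC_1$ is axiomatized by bounded formulas, so Parikh's theorem bounds $y$. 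The $\forall\Sigma^b_1$-conservativity of $\APC_1$ over $\PV_1+\rWPHP(\PV)$ is asserted in the paper's own footnote, so as a proof modulo citation this is fine. Be aware, though, that all the content lies in that conservativity. Your model-theoretic sketch does not say how $\rWPHP$ survives when a non-image is adjoined, which is exactly the difficulty, so present it as a citation rather than a proof.

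Your fallback does not work as described, query by query, but repairing it gives a genuinely different, self-contained route. Write $\PV_1\vdash\forall x\,\exists C\,\forall y\,\exists(u,z)\,([C\text{ valid}\wedge C(u)=y]\vee\phi(x,z))$ and Herbrandize with a new symbol $g(x,C)$. The term $u_i$ that is supposed to invert the answer $g(x,C_i)$ may depend on answers to later $g$-queries, whose arguments in turn depend on that same answer. So at the moment query $C_i$ must be answered, no fixed circuit is available to serve as the compressor. Also, no iterated composition of $f$ is involved: the compressor must be the proof's inverter term, and the decompressor is the queried circuit itself.

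The fix is to compress the whole answer vector at once:
\begin{itemize}
\item Order the constantly many $g$-subterms compatibly with nesting.
\item Let a seed $\vec r=(r_1,\dots,r_k)$ supply the answers $\vec y$. Copy the earlier answer when a circuit repeats, and answer circuits of constant input length by brute force.
\item Suppose some disjunct $i$ inverts, and let $a$ be the first position at which its circuit was queried. Encode $\vec r$ as $(i,a,u_i,\vec r_{-a})$ together with the unused bits of $r_a$.
\item To decode, recompute $y_{<a}$ from $r_{<a}$, hence the circuit queried at position $a$, and recover $y_a$ by applying that circuit to $u_i$.
\item For the usual stretch $n\mapsto 2n$, the $n_a$ bits saved pay for the $O(1)$ bits of $(i,a)$.
\end{itemize}
A $\LossyCode$ solution for this single pair is a seed on which no disjunct inverts. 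Since $\vec y$ is consistent, it extends to a total $g$, and the Herbrand disjunction then forces $\phi(x,z_i)$ for some $i$. This yields a one-query mapping reduction directly from the $\dWPHP$-proof. The conservativity over $\PV_1+\rWPHP(\PV)$ then follows as a corollary instead of being needed as input.
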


Moreover, it has been recently discovered that some natural $\TFZPP$-search problems admit deterministic reductions to $\LossyCode$ or its variants: constructing large prime numbers with factoring oracles \cite{Korten22} and the simulation of catalytic logspace machines \cite{CLMP25}. Variants of $\LossyCode$ are relevant to both full and partial derandomizations of $\pr\BPP$; see \cite{LPT24} for a comprehensive introduction. 

Note that assuming $\pr\BPP=\pr\P$, both $\LossyCode$ and $\RefYao$ are in $\FP$. Nevertheless, it is interesting to discover the relative hardness of their derandomization. By adapting an idea from \cite{Korten22}, we show that $\RefYao$ admits a deterministic polynomial-time reduction to $\LossyCode$. Therefore, showing that  $\RefYao\in\FP$  is necessary before proving that  $\LossyCode \in \FP$. 

\begin{theorem}[Implicit in the proof of {\cite[Corollary 41]{Korten22}}]\label{thm: ref yao to lossy}
    There is a deterministic polynomial-time mapping reduction from $\RefYao$ with parameters $(\delta^2/10)\cdot m \ge s + \lceil \log n\rceil + 1$ to $\LossyCode$.
\end{theorem}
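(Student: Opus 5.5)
The reduction follows Korten's compression idea: a predictor generator that succeeds on every flat distribution yields a compressor--decompressor pair that shrinks every $\mathcal{D}$. Given an instance $G:\zo^{nm}\to[n]\times\zo^s$ of $\RefYao$ with $(\delta^2/10)\cdot m\ge s+\lceil\log n\rceil+1$, we build circuits $C:\zo^{N}\to\zo^{N-1}$ and $D:\zo^{N-1}\to\zo^{N}$ with $N\eqdef nm$. The intended property is that $D(C(\mathcal{D}))=\mathcal{D}$ whenever $G(\mathcal{D})=(i,P)$ predicts the $i$-th bit of $\mathcal{D}$ with advantage at least $\delta$. Then any $\LossyCode$ solution $x$ with $D(C(x))\ne x$, parsed as $\mathcal{D}\in(\zo^n)^m$, is a $\RefYao$ solution. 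No verification step is needed, because the solution is checked by the definition itself.

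\textbf{Compressor.} On input $\mathcal{D}=(x^1,\dots,x^m)$, compute $(i,P)=G(\mathcal{D})$ and write down four things:
\begin{enumerate}
\item the index $i$, using $\lceil\log n\rceil$ bits;
\item the description of $P$, using $s$ bits;
\item all bits $x^j_{\ne i}$ of every string except the $i$-th coordinate, using $m(n-1)$ bits;
\item the error vector $e\in\zo^m$ with $e_j\eqdef P(x^j_{<i})\oplus x^j_i$, encoded compactly.
\end{enumerate}
If $P$ has advantage at least $\delta$, then $e$ has Hamming weight $w\le(1/2-\delta)m$. We encode $e$ by its rank among weight-$w$ strings, which costs $\log\binom{m}{\le(1/2-\delta)m}\le(1-H\text{-gap})\,m\le m-2\delta^2 m/\ln 2$ bits (plus $\log m$ to record $w$, absorbed in the slack).

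The total length is then at most $\lceil\log n\rceil+s+m(n-1)+m-(\delta^2/10)\cdot m\cdot c\le N-1$. This uses the hypothesis, after tuning constants: $1-h(1/2-\delta)\ge 2\delta^2/\ln 2\ge \delta^2/10\cdot$(margin). If the advantage is below $\delta$, $C$ outputs an arbitrary fixed string, and correctness is not required in that case.

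\textbf{Decompressor.} It reads $i$, $P$, the strings $x^j_{\ne i}$ and $e$. It then reconstructs $x^j_i=P(x^j_{<i})\oplus e_j$ for each $j$, which is possible because $x^j_{<i}$ is available. Both circuits run in polynomial time. Ranking and unranking fixed-weight strings is polynomial via binomial coefficients, which have polynomially many bits. So the map from $G$ to $(C,D)$ is a deterministic polynomial-time mapping reduction, and the solution map is the identity.

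\textbf{Main obstacle.} The one step needing care is the counting bound $\log_2\binom{m}{\le(1/2-\delta)m}+\log m\le m-(\delta^2/10)m+\text{(slack)}$, together with exact bit bookkeeping so that the output length is exactly $N-1$ (pad if shorter). I would use the entropy bound $\binom{m}{\le\alpha m}\le 2^{h(\alpha)m}$ and $h(1/2-\delta)\le 1-(2/\ln 2)\delta^2$. This leaves margin $\approx 2.88\,\delta^2 m$. That margin comfortably covers $s+\lceil\log n\rceil+1$ (bounded by $(\delta^2/10)m$) plus the $\log m$ for $w$, provided $\log m\le \delta^2 m$, which holds in the nontrivial regime. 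If this last inequality is tight, I would instead enumerate $e$ directly among all strings of weight at most $(1/2-\delta)m$, avoiding the separate encoding of $w$.
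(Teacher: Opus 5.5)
Your proposal is correct and is essentially the paper's proof: compress $\mathcal{D}$ by recording $i$, $P$, the $m(n-1)$ bits outside coordinate $i$, and a compact encoding of the error vector of weight at most $(1/2-\delta)m$, with your entropy bound playing the role of the paper's Chernoff estimate. One small note: the hypothesis alone does not force $\log m\le\delta^2 m$, so your fallback should be the default. That fallback ranks $e$ directly among all strings of weight at most $(1/2-\delta)m$, at a cost of $\lceil\log_2\binom{m}{\le(1/2-\delta)m}\rceil\le m-(2/\ln 2)\delta^2 m+1$ bits, and it makes the length bookkeeping go through with room to spare.
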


\begin{proof}
    Note that we can encode $m$-bit strings with Hamming weight at most $k$ by $\log_2 \binom{m}{k}+O(\log k)$ bits, where the encoding and decoding algorithms run in polynomial time (see, e.g., \citep[Lemma 5.4]{CLO24}). In particular, when $k\eqdef (1/2 - \delta)\cdot m$ and $m$ is sufficiently large, the encoding length is
    \begin{align*}
        &~\log_2\binom{m}{(1/2-\delta)\cdot m} + O(\log (1/2-\delta) + \log m) \\ 
    \le &~m + \log_2(e^{-(2\delta)^2m/4}) + O(\log (1/2-\delta) + \log m) \\ 
    \le &~m - (\delta^2/10)\cdot  m,
    \end{align*}
    where the first inequality follows from the Chernoff bound. 
    
    Now we describe the reduction. Given any predictor generator $G:\zo^{nm}\to[n]\times \zo^s$, consider the following compressor $C:\zo^{nm}\to\zo^{nm-1}$ and decompressor $D:\zo^{nm-1}\to\zo^{nm}$: 
    \begin{itemize}
    \item (\emph{Compressor}). Given any $\disD\in\zo^{nm}$, the compressor parses it as a distribution over $n$-bit strings of size $m$. It computes $(i,P)\eqdef G(\disD)$. If $P$ fails to predict the $i$-th bit of $\disD$ with advantage $\delta$, it fails and aborts. Otherwise,  
    \[
    \Pr_{x\gets \disD}[P(x_{<i}) = x_i] \ge  \frac{1}{2} + \delta.
    \]
    Let $y$ be the $m$-bit string defined as $y_j\eqdef P(x_{<i}^{(j)})\oplus x_i^{(j)}$, where $x^{(j)}$ is the $j$-th string in $\disD$. Then $y$ is a string of Hamming weight at most $(1/2-\delta)\cdot m$, and thus can be efficiently encoded using $m-(\delta^2/10)\cdot m$ bits. 
    
    Let $\hat y$ be its encoding, and $\disD_{-i}\in\zo^{m(n-1)}$ be the distribution $\disD$ after removing the $i$-th bit from all strings. The compressor outputs the tuple $(i,P,\hat y,\disD_{-i})$, which is of length at most 
    \[ 
    \lceil \log n\rceil + s +(m-(\delta^2/10)\cdot m) + (nm- m) < nm 
    \]
    due to the assumption on parameters.
    \item (\emph{Decompressor}). When the compressor does not fail, the decompressor can recover $\disD$ from $(i,P,\hat y,\disD_{-i})$ by first recovering $y$ then computing the missing bits
    \[ 
    x^{(j)}_i\eqdef P(x_{<i}^{(j)})\oplus y_j. 
    \] 
    \end{itemize}
    Given a predictor generator $G$ and parameters as above, the mapping reduction from $\RefYao$ to $\LossyCode$ outputs $(C,D)$ as an instance of $\LossyCode$. 

    It suffices to prove that the reduction is correct. Given any string $\disD$ such that $D(C(\disD))\ne \disD$, we know by the discussion above that the compressor must fail. In other words, $G(\disD)$ fails to produce a predictor with advantage $\delta$. This means that $\disD$ is a solution to the $\RefYao$ instance and thus concludes the proof.  
\end{proof}

\subsubsection{The Witnessing Theorem} 

We are now ready to show the following  witnessing theorem for $\APX_1$: any provably total $\TFNP$ problem in $\APX_1$ is deterministically reducible to $\RefYao$. 

\WitnessingAPX*

Note that the inequality $(\delta^2/10)\cdot m \ge s + \lceil \log n\rceil +1$ implies that the $\RefYao$ instance reduces to $\LossyCode$ and, in particular, it is a total search problem. A more refined analysis of our proof may lead to an improved trade-off between the parameters, which we leave for future work.

To prove this witnessing theorem, we will need the standard Herbrand's theorem for universal first-order theories and a lemma that extracts a predictor from an  $\APX$ proof. The latter requires a proof-theoretic analysis and is deferred to the end of the section (see \Cref{sec:predictor_extraction_proof}).

\newcommand{\calT}{\mathcal{T}}
\newcommand{\calO}{\mathcal{O}}

\begin{theorem}[Herbrand's Theorem; see, e.g.,  \cite{Buss94}]
    Let $\calT$ be a universal first-order theory and $\varphi(x,y)$ be a quantifier-free formula with only $x$ and $y$ as open variables. If $\calT\vdash\forall x~\exists y~ \varphi(x,y)$, there exists a constant $c\in\bbN$ and terms $t_1,t_2,\dots,t_c$ such that 
    \[
    \calT \vdash  \forall x~\bigvee_{i=1}^c \varphi(x,t_i(y))\,.
    \]
\end{theorem}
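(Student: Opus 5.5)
The statement to prove is Herbrand's theorem for universal theories. (Note the typo: the conclusion should read $\varphi(x,t_i(x))$, with the terms applied to $x$.) I would prove it by the standard model-theoretic argument via compactness, and I would not attempt a syntactic cut-elimination proof.

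\textbf{Setup.} Suppose, towards a contradiction, that for every finite list of terms $t_1,\dots,t_c$ in the language of $\calT$, the theory $\calT$ does not prove $\forall x~\bigvee_{i} \varphi(x,t_i(x))$. Introduce a fresh constant symbol $a$. Consider the set of sentences
\[
\Sigma \eqdef \calT \cup \{\lnot\varphi(a,t(a)) \mid t \text{ a term with at most the variable } x\}.
\]

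\textbf{Step 1: $\Sigma$ is consistent.} A finite subset of $\Sigma$ mentions only finitely many terms $t_1,\dots,t_c$. If it were inconsistent, then $\calT\vdash \bigvee_i \varphi(a,t_i(a))$. Since $a$ does not occur in $\calT$, generalization on constants gives $\calT\vdash\forall x\bigvee_i\varphi(x,t_i(x))$, contradicting the assumption. By compactness, $\Sigma$ therefore has a model $\calM$.

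\textbf{Step 2: pass to a substructure.} Let $\calM_0$ be the substructure of $\calM$ whose domain is $\{t^{\calM}(a^{\calM})\}$, the values of all terms at $a$. This set is closed under every function symbol (the constant symbols, such as $\eps$, are terms too), so $\calM_0$ is a well-defined substructure. Two facts make it work:
\begin{itemize}
\item Universal sentences are preserved under substructures, so $\calM_0\models\calT$, because $\calT$ is universal. This is the only place where universality is used, and it is the key point.
\item Quantifier-free formulas are absolute between $\calM_0$ and $\calM$. Hence $\calM_0\models\lnot\varphi(a,b)$ for every element $b$ of $\calM_0$, since each such $b$ equals $t(a)$ for some term $t$.
\end{itemize}
Consequently $\calM_0\models\calT\wedge\lnot\forall x\exists y~\varphi(x,y)$. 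This contradicts $\calT\vdash\forall x\exists y~\varphi$ by soundness.

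\textbf{Main obstacle.} The argument is short. The only subtlety is making sure the term closure is nonempty and closed under all function symbols. This holds because $a$ itself is a term, and closure is by construction. For the intended application, the universality of $\APX_1$ is supplied by \Cref{prop: apx1 universal}.
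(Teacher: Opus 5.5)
Your proof is correct, and you are right that the conclusion should read $\varphi(x,t_i(x))$.

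The paper gives no argument of its own. It simply cites Buss, who proves Herbrand's theorem proof-theoretically via cut elimination. In the sequent calculus, with the quantifier-free instances of the universal axioms as initial sequents, one eliminates (free) cuts from a proof of $\exists y\,\varphi(a,y)$. After this, every formula other than the end-formula is quantifier-free, and the terms $t_i$ are read off as the witnesses at the $\exists$-introductions.

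Your route is the standard model-theoretic one. Compactness gives a model of $\calT$ in which $a$ has no term-definable witness. Restricting to the substructure generated by $a$ keeps the universal axioms true and leaves quantifier-free facts unchanged, so $a$ then has no witness at all.

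The two approaches buy different things.
\begin{itemize}
\item The cut-elimination proof is constructive. The terms, and an explicit (if huge) bound on $c$, are computed from the given proof, and the method extends to Herbrand's theorem for arbitrary prenex formulas.
\item Your proof is shorter and uses only compactness plus preservation of universal sentences under substructures. However, it only shows that suitable terms exist; they can then be found by exhaustive search through $\calT$-proofs.
\end{itemize}

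For the paper's purposes, existence is all that is needed. In the witnessing theorem for $\APX_1$ the provable sentence is fixed, and the reduction is built from whatever finite list of $\PV(\P)$-terms exists. Nothing requires the terms to be extracted effectively from the proof.
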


\begin{restatable}[Predictor Extraction Lemma]{lemma}{Extraction}\label{lmm: predictor extraction}
    Let $t_1(\vec x)=t_2(\vec x)$ be an equation provable in $\APX$. Then there are polynomials $k(n)$, $m(n)$, and a deterministic  polynomial-time algorithm $E$ that satisfies the following conditions when $n$ is sufficiently large: 
    \begin{itemize}
    \item \emph{(}Input\emph{)}. A string $\vec x\in\zo^n$ and a flat distribution $\disD\in(\zo^{k})^m$ of size $m$ over $k$-bit strings.
    \item \emph{(}Simulation of Terms\emph{)}. Recall that $t_1,t_2$ are interpreted as polynomial-time $\P$-oracle algorithms in standard models. We will simulate the algorithms on input $\vec x$ as follows: For every oracle query $\P(C,\Delta)$, where $C:\zo^{t}\to\zo$, we will ensure that $t\le k(n)$ and answer the query by 
    \begin{equation}\label{equ: simulation of P by dist}
    \P(C,\Delta)\eqdef \Pr_{u\gets\disD}[C(u_{\le t})]. 
    \end{equation}
    We denote the output of $t_1$ in the simulation as $t_1^{\disD}(\vec x)$, and the output of $t_2$ as $t_2^{\disD}(\vec x)$. 
    \item \emph{(}Output\emph{)}. Suppose that $t_1^\disD(\vec x)\ne t_2^{\disD}(\vec x)$. Then  $E(\vec x,\disD)$ outputs $i\in[k(n)]$ and a circuit $P \colon \{0,1\}^{i-1} \to \{0,1\}$ of size at most $s$ such that $P$ predicts the $i$-th bit of $\disD$ with advantage $\delta$ such that
    $$(\delta^2/10) \cdot m(n)\ge s+\lceil \log k(n)\rceil +1.
    $$ 
    \end{itemize}
\end{restatable}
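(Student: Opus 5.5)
We argue by induction on the length of the $\APX$ derivation of $t_1(\vec x)=t_2(\vec x)$, proving a stronger invariant for every line of the proof. Every term occurring in the derivation is a polynomial-time $\P$-oracle algorithm. Taking the maximum over these finitely many terms, we fix polynomials $k(n)$ bounding the input lengths of all oracle queries and $m(n)$ a sufficiently large polynomial, chosen last. The invariant for a derived equation $s(\vec x)=u(\vec x)$ is the following. There is a polynomial-time extractor $E_{s,u}$ such that whenever $s^\disD(\vec x)\ne u^\disD(\vec x)$ under the empirical simulation \eqref{equ: simulation of P by dist}, $E_{s,u}(\vec x,\disD)$ outputs an index $i$ and a predictor $P$. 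The predictor must have size at most some polynomial $s(n)$ and advantage at least $\delta(n)$, where $1/\delta$ is polynomial, with $s$ and $\delta$ depending only on the proof. We then choose $m(n)\ge 10(s+\lceil\log k\rceil+1)/\delta^2$, which gives the required parameter relation.

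\emph{Axioms.} The basic axioms hold for the empirical interpretation. The simulated value is a rational in $[0,1]$. Its length bound holds because $\Delta$ and the circuit size are large enough relative to $\log m$; if not, we pad the encoding. Equations of $\PV(\P)$ provable in $\PV(\P)$ remain valid for every interpretation of $\P$ satisfying the length axiom, by soundness of $\PV(\P)$ relative to any oracle. The boundary axiom holds exactly, because a constant circuit has empirical acceptance equal to its output. Precision consistency holds trivially, since the simulated answer does not depend on $\Delta$, so the difference is $0$. The only nontrivial axiom is local consistency. A violation means $|\Pr_\disD[C(u_{\le\ell})]-\tfrac12(\Pr_\disD[C(u_{<\ell}0)]+\Pr_\disD[C(u_{<\ell}1)])|>2/|\Delta|+1/|B|$, which is at least $1/\mathrm{poly}(n)$ since $|\Delta|,|B|$ are bounded by the running time of the terms. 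Here $\Fix_b$ fixes the rightmost bit; we arrange the bit ordering so that this is coordinate $\ell$ of $u$. Write $p_b=\Pr_\disD[C(u_{<\ell}b)]$ and $p=\Pr_\disD[C(u_{\le\ell})]$. Then $p-\tfrac12(p_0+p_1)=\tfrac12\mathbb{E}_\disD[(-1)^{1-u_\ell}(C(u_{<\ell}1)-C(u_{<\ell}0))]$, a Yao-type next-bit identity. This gives a predictor for bit $\ell$ from the prefix $u_{<\ell}$. We randomize over the guess $g\in\{0,1\}$ and the outputs $C(u_{<\ell}g)$, then derandomize: we try the constantly many candidate predictors $u_{<\ell}\mapsto C(u_{<\ell}b)\oplus c$ (and constants) and check each empirically on $\disD$. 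One of them has advantage at least a constant fraction of the gap. This step is deterministic because $\disD$ is explicit.

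\emph{Rules.} For symmetry and transitivity, if the conclusion fails under the simulation, then one of the premises fails, and we call that premise's extractor. For substitution $t_1=t_2\vdash t_1(x/t)=t_2(x/t)$, we compute $t^\disD(\vec x)$ and feed it as the new input. Its length is polynomial, so the polynomial bounds compose, with finitely many compositions per proof. For the congruence rule $u=v\vdash t(x/u)=t(x/v)$, a discrepancy in the conclusion forces $u^\disD\ne v^\disD$ at the same input. The induction rule is the main obstacle. Suppose $f_1^\disD(x,\vec y)\ne f_2^\disD(x,\vec y)$. We scan the prefixes $x'$ of $x$ from $\eps$ upward, using $|x|$ many simulations, and find either a base-case mismatch or the first $x'$ where $f_j^\disD(s_i(x'),\vec y)$ disagrees with $h_i(x',\vec y,f_j^\disD(x',\vec y))$ for some $j$. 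Such a point exists, because otherwise both $f_j$ equal the recursively defined function by a simple induction in the meta-level. This located point is an instance of a premise equation that fails under the simulation, so we invoke that premise's extractor at the new input $(x',\vec y)$. The simulation must answer $\P$ queries the same way at every level, and it does, because the empirical oracle is a fixed function of $\disD$. This uniformity is what makes the induction rule go through.

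Finally, the number of nested extractor calls is bounded by the proof length, which is a constant, and each call costs polynomial time. So $E$ runs in polynomial time, and the advantage and size bounds are the minimum and maximum over the finitely many axiom instances reachable in the proof. All these instances are evaluated at polynomially bounded inputs, so the bounds are polynomial.
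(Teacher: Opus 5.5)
Your proposal is correct and is essentially the paper's proof: induction on the $\APX$ derivation, with the Basic, Boundary and Precision Consistency axioms valid under the empirical oracle, a Local Consistency violation turned into a next-bit predictor for the rightmost input coordinate of $C$ via the Yao-type identity, and the rules handled by pushing a simulated failure back to a premise (evaluating $t^{\mathcal{D}}$ for substitution, scanning prefixes of $x$ for the induction rule); your choice of $k$ first and $m$ last simply replaces the paper's strengthened hypothesis ``for all $k\ge k_0$ and $m\ge m_0$''. The one point to fix is your remark on the $\PV(\P)$ output-length axiom: the exact value $\Pr_{u\gets\mathcal{D}}[C(u_{\le t})]$ can need about $\log m$ bits while $|C|\cdot|\Delta|^2$ may be constant (a tiny $C$ with $|\Delta|=1$), so padding cannot help; instead, round the empirical value to the permitted length and absorb the rounding error into the slack of the Precision and Local Consistency axioms---a detail the paper's proof also passes over.
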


\begin{proof}[Proof of \Cref{thm: witness TFNP}]
    Recall that $\APX_1$ admits a universal axiomatization (see \Cref{prop: apx1 universal}). Suppose that $\APX_1\vdash \forall x~\exists y~\varphi(x,y)$. By Herbrand's  theorem, there are terms $t_1,\dots,t_c$ in the language of $\APX_1$ such that 
    \[
    \APX_1\vdash \forall x~\bigvee_{i=1}^c \varphi(x,t_i(y))\,,
    \]
    for some $c\in\bbN$. 
    
    Note that the language of $\APX_1$ is the language of $\PV$ extended by the approximate counting oracle $\P$; therefore, $t_1,\dots,t_c$ are polynomial time $\P$-oracle algorithms in the standard model. Let $t_\varphi$ be a term in $\APX$ such that 
    \[
    \APX_1\vdash \bigvee_{i=1}^{c}\varphi(x,t_i(x))\leftrightarrow t_\varphi(x,t_1(x),\dots,t_c(x))=1.  
    \]
    This can be done as $\varphi$ is a quantifier-free formula; see, e.g., \cite[Chapter 3]{DBLP:journals/eccc/Li25}. Then we know that $\APX_1$ proves that $\forall x~t_\varphi(x,t_1(x),\dots,t_c(x))=1$. As $\APX_1$ is conservative over $\APX$ (see \Cref{prop: APX1 conservative}), we know that $\APX\vdash t_\varphi(x,t_1(x),\dots,t_c(x))=1$. 

    By \Cref{lmm: predictor extraction} (instantiated with $t_1\eqdef t_\varphi(x,t_1(x),\dots,t_c(x))$ and $t_2\eqdef 1$), there are $k=k(n)$, $m=m(n)$, $s=s(n)$, $\delta=\delta(n)$ and a polynomial time $E$ such that the following holds. Given $x\in\zo^n$ and a distribution $\disD\in(\zo^{k})^m$ of size $m$, 
    \begin{compactitem}
    \item either $t_\varphi^{\disD}(x,t_1^{\disD}(x),\dots,t_k^{\disD}(x))=1$; or 
    \item $E(x,\disD)$ outputs $i\in[k(n)]$ and $P$ that predicts the $i$-th bit of $\disD$ with advantage $\delta$, where $(\delta(n)^2/10) \cdot m(n)\ge s(n)+\lceil \log k(n)\rceil +1$. 
    \end{compactitem}

    The reduction produces the circuit $E(x,\cdot)$ as an instance of $\RefYao$, where the size-$m(n)$ distribution is supported over $k(n)$-bit strings, the predictor size is $s(n)$, and the advantage is $\delta(n)$. Given $x$ of length $n$, for any solution $\disD$ to the resulting instance $E(x, \cdot)$ of $\RefYao$, we know by definition that $E(x,\disD)$ cannot output a predictor with advantage $\delta$. As a result, the first bullet above must hold: 
    \[
    t_\varphi^{\disD}(x,t_1^{\disD}(x),\dots,t_k^{\disD}(x))=1. 
    \]
    Subsequently, given any solution $\disD$ to the instance of $\RefYao$, one of $t_1^{\disD}(x),t_2^{\disD}(x),\dots,t_k^{\disD}(x)$ must output $y$ such that $\varphi(x,y)$ holds. This gives a correct reduction, as simulations of $t_1^{\disD}(x),t_2^{\disD}(x),\dots,t_k^{\disD}(x)$ can be implemented in deterministic polynomial time given $x$ and the explicit description of $\mathcal{D}$.  
\end{proof}

\subsection{Relationship to \texorpdfstring{$\PV_1$}{PV1}: Is \texorpdfstring{$\mathsf{prBPP} =  \mathsf{prP}$}{prBPP=prP} Feasibly Provable?}

In this subsection, we introduce a few questions regarding the relative strength of $\APX_1$ and $\PV_1$. We will discuss their importance and connection to the program of proving $\pr\BPP=\pr\P$. No meaningful progress is reported in the paper; we believe the resolution of the questions, even conditionally, would advance our understanding of feasible mathematics and derandomization. 

\paragraph{Feasible proof of $\pr\BPP=\pr\P$.} One major open problem in complexity theory is whether derandomization is possible in general with polynomial runtime overhead. The seminal work of Nisan, Wigderson, and Impagliazzo \cite{NW,IW97} shows that $\pr\BPP=\pr\P$ follows from exponential circuit lower bounds for $\E=\DTIME[2^n]$; hence many researchers expect a positive answer. However, despite enormous efforts, both $\pr\BPP=\pr\P$ and the circuit lower bounds for $\E$ remain open.

From the perspective of meta-mathematics, an interesting question is to study whether $\pr\BPP=\pr\P$ is \emph{\emph{(}un\emph{)}provable} in a weak arithmetic theory such as $\PV_1$. A technical challenge is that, as the language of $\PV_1$ is designed to capture \emph{deterministic} polynomial time computable functions, it is a priori not obvious how to formalize the statement $\pr\BPP=\pr\P$, which involves the acceptance probability of circuits over inputs from a set of exponential size. 

We propose the investigation of the following related question. 

\begin{open}\label{open: BPP = P}
    Is there a $\PV$ function symbol $\P(C,\Delta)$ such that the \refaxiom{basic}, \refaxiom{boundary}, \refaxiom{precision}, and \refaxiom{local} are provable in $\PV_1$?  
\end{open}

We note that an unconditional positive answer is unlikely to be obtained in the near future, as it immediately implies $\pr\BPP=\pr\P$ by the soundness of $\PV_1$ and \Cref{thm: standard equiv admissible}. Indeed, a positive answer shows, intuitively, that $\pr\BPP=\pr\P$ admits a \emph{deterministic polynomial-time proof}. To our knowledge, it is unclear whether a positive or negative answer is more plausible.

\paragraph{Feasibly provable derandomization for deterministic statements.} On the other hand, we may also consider a weaker collapse: it is in principle possible that, despite that there may not be a $\PV$ function symbol $\P$ such that the relevant axioms of $\APX_1$ are provable in $\PV_1$, the introduction of the oracle $\P$ does not help in proving any sentence that \emph{does not} involve the oracle $\P$. Formally:  

\begin{open}\label{open: conservative PV}
    Is $\APX_1$ \emph{conservative} over $\PV_1$? In other words, is it the case that every first-order sentence in the language of $\PV$ that is provable in $\APX_1$ is also provable in $\PV_1$? 
\end{open}

It is clear that a positive answer to \Cref{open: BPP = P} implies a positive answer to \Cref{open: conservative PV}. Moreover, a positive answer of this open problem would immediately imply a witnessing theorem that improves \Cref{thm: witness TFNP}: any $\APX_1$ provably total $\TFNP$ relation (expressed by a quantifier-free formula in the language of $\PV_1$) is in $\FP$. This is because $\PV_1$ provably total $\TFNP$ problems are in $\FP$ (see, e.g., \cite[Section 3.1]{Oliveira25}). 

An interesting characteristic of \Cref{open: conservative PV} is that it appears to be \emph{incomparable} with $\pr\BPP=\pr\P$. If $\pr\BPP=\pr\P$ but the proof is not feasible, $\APX_1$ may not be conservative over $\PV_1$. More interestingly, if the answer to \Cref{open: conservative PV} is positive, it is still unclear to us whether $\pr\BPP=\pr\P$ or any other nontrivial derandomization follows. Formally: 

\begin{open}
    Suppose that $\APX_1$ is conservative over $\PV_1$.  Does it follow that $\pr\BPP=\pr\P$, $\pr\BPP=\pr\ZPP$, or any other unknown general derandomization  result hold? 
\end{open}

At a high level, this is to ask whether it is necessary to \emph{derandomize computations in general} if we want to \emph{derandomize proofs in general}. We contend that these problems are fundamental and merit deeper investigation.  

\subsection{Relationship to \texorpdfstring{$\APC_1$}{APC1}}

\newcommand{\NW}{\mathsf{NW}}
\newcommand{\HardA}{\Hard^{\mathsf{A}}}
\newcommand{\UAPC}{\mathsf{UAPC}}
\newcommand{\iO}{i\mathcal{O}}
\newcommand{\HARDA}{\HARD^\mathsf{A}}

We now study the relative strength of $\APX_1$ and $\APC_1$. We will show that, in a formal sense, $\APC_1$ can be viewed as an extension of $\APX_1$. We will then show that $\APC_1$ is likely a \emph{strict} extension of $\APX_1$. Finally, we introduce a few open problems related to the relative strength of $\APC_1$ and $\APX_1$. 

\subsubsection{An Upper Bound: \texorpdfstring{$\APC_1$}{APC1} Extends \texorpdfstring{$\APX_1$}{APX1}}

We first prove an upper bound for $\APX_1$ that is implicit in \Jerabek{}'s results on approximate counting \cite{Jerabek07}. In particular, this shows that a provable first-order sentence in the language of $\PV_1$ is also provable in $\APC_1$.

We start by defining a sentence $\HardA_\eps(\alpha)$ in the language of the relativized theory $\PV_1(\alpha)$. This sentence formalizes that $\alpha(1^{2^n})$ outputs a truth table of length $2^n$ that is $2^{-\eps n}$-hard on average. Formally: 

\begin{definition}[{\cite[Definition 2.1]{Jerabek07}}]
$\HardA_\eps(\alpha)$ is the following sentence in the language of $\PV_1(\alpha)$: For every $n\in\Log\Log$ and $x$ such that $||x||=n$, $\alpha(x)\in\zo^{2^n}$ is a truth-table of a Boolean function $f$ in $n$ variables such that for every circuit $C:\zo^n\to\zo$ of size at most $2^{\eps n}$, 
\[
\Pr_{x\gets\zo^n}[C(x)=f(x)] \le \frac{1}{2} + \frac{1}{2^{\eps n}}.
\]
Note that this probability is defined by a brute-force exact counting algorithm as $n\in\Log\Log$. 
\end{definition}

\begin{definition}[{\cite[Definition 2.13]{Jerabek07}}]
The theory $\HARDA$ is defined as $\PV_1(\alpha)+\dWPHP(\PV(\alpha))+\HardA_{1/4}(\alpha)$, where $\dWPHP(\PV(\alpha))$ denotes the dual Weak Pigeonhole Principle for $\PV(\alpha)$ functions. 
\end{definition}

The following theorem can be proved using tools from \cite{Jerabek07}, where $\NW(\cdot,\cdot)$ is an instantiation of the Nisan-Wigderson PRG \cite{NW} with the hard truth table provided by $\alpha(\cdot)$. The proof is straightforward but requires familiarity with the theory $\APC_1$; for completeness, we provide a proof of the theorem in \Cref{apd: ub APC}.

\begin{restatable}[Simulating $\P(C,\Delta)$ with $\NW(C,\Delta)$]{theorem}{UBAPX}\label{thm: ub APC}
    For every $\eps<1/3$, there is a term $\NW(C,\Delta)$ in the language of $\PV_1(\alpha)$ such that  \refaxiom{basic}, \refaxiom{boundary}, \refaxiom{local}, and \refaxiom{precision} are provable in $\HARDA$ when the oracle $\P(C,\Delta)$ is replaced by $\NW(C,\Delta)$.  
\end{restatable}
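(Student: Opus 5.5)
We define $\NW(C,\Delta)$ as the empirical acceptance probability of $C$ on the outputs of the Nisan--Wigderson generator built from the hard truth table $\alpha(\cdot)$, and then verify each axiom in $\HARDA$ using Je\v{r}\'abek's approximate counting machinery from \cite{Jerabek07}.

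\textbf{The term.} Given a circuit $C:\zo^n\to\zo$ and $\Delta$, we choose a block length $\ell$ with $2^{\eps\ell}\ge \mathrm{poly}(|C|,|\Delta|)$. We then pick a combinatorial design over seed length $O(\ell)$ and let $\NW(C,\Delta)$ be the fraction of seeds $s\in\zo^{O(\ell)}$ with $C(\NW_{\alpha}(s))=1$. The truth table $\alpha(1^{2^\ell})$ has length $\mathrm{poly}$, so this average is a $\PV(\alpha)$ term. We round the result to the precision allowed by the output-length bound.

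\textbf{Basic and Boundary axioms.} These hold syntactically. The value is a rational in $[0,1]$ by construction. If $\IsConst(C)$, then every seed gives the same output, so the average equals $\Bool(C)$; here we must take care that rounding preserves the values $0$ and $1$ exactly.

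\textbf{Precision and Local Consistency.} These are the substantive part, and we reduce both to the key fact from \cite{Jerabek07}: in $\HARDA$ (using $\HardA_{1/4}$ together with $\dWPHP(\PV(\alpha))$), for every $C$ the $\NW$ average at parameter $\Delta$ lies within $1/|\Delta|$ of the approximate-counting value $\Pr_x[C(x)]$. That value is defined through Je\v{r}\'abek's surjection-based notion of set size, which satisfies exact-up-to-$\epsilon$ additivity, in particular $\Pr[C]\approx\frac12(\Pr[\Fix_0 C]+\Pr[\Fix_1 C])$.

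Given the key fact, Precision Consistency follows by the triangle inequality through $\Pr[C]$, since both $\NW(C,\Delta_1)$ and $\NW(C,\Delta_2)$ are close to $\Pr[C]$. Local Consistency follows by comparing each of $\NW(C,\Delta)$, $\NW(\Fix_0C,\Delta)$ and $\NW(\Fix_1 C,\Delta)$ with its true approximate count (error $\delta$ each, so $2\delta$ after averaging), and then applying additivity of approximate counting, which only costs an arbitrarily small $\beta$.

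\textbf{Main obstacle.} The hard step is the key fact itself. It amounts to formalizing the Nisan--Wigderson hybrid and predictor argument so as to conclude that $\NW$ fools $C$ relative to Je\v{r}\'abek's counting, and doing so uniformly in $C$, since the axioms are universally quantified. We will cite Je\v{r}\'abek's formalization of $\NW$ correctness in $\PV_1$ plus $\dWPHP$ \cite{Jerabek07}, and check that its error parameters match $1/|\Delta|$ once $\ell$ is chosen as above. The condition $\eps<1/3$ enters here: the design and the $2^{\eps n}$ hardness bound from $\HardA_{1/4}$ must leave room to absorb the size of $C$.
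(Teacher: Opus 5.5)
Your proposal is correct and follows essentially the paper's argument. The paper sets $\mathsf{NW}(C,\Delta)\eqdef \Size(C,1^{|\Delta|})/2^n$, where $\Size$ is Je\v{r}\'abek's Nisan--Wigderson estimator satisfying $X\approx_\delta \Size(C,1^{\delta^{-1}})$ in $\HARD^{\mathsf{A}}$ (your ``key fact''); it reads off the Basic and Boundary axioms directly and derives Precision and Local Consistency from Je\v{r}\'abek's approximate-counting calculus. The only difference is packaging: since $\HARD^{\mathsf{A}}$ has no term for a ``true'' count $\Pr[C]$, your triangle-inequality step is realized by the size-consistency lemma ($s\lesssim_{\varepsilon}X\lesssim_{\delta}t$ implies $s\le t+(\varepsilon+\delta+\xi)2^n$) applied to $s_1\lesssim_{\delta_1}X\lesssim_{\delta_2}s_2$, and your additivity step by the union/inclusion--exclusion rules plus an explicitly composed surjection relating $X$ to $X_0\cupdot X_1$.
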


We note that by \cite[Theorem 2.13]{Jerabek07}, the theory $\HARDA$ is a conservative extension of $\APC_1$. It then immediately follows that: 

\begin{corollary}\label{cor: APX subset APC1}
    Any first-order sentence in the language of $\PV_1$ provable in $\APX_1$ is also provable in $\APC_1$. 
\end{corollary}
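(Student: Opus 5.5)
The plan is to derive the corollary from \Cref{thm: ub APC} together with the conservativity of $\HARDA$ over $\APC_1$ (\cite[Theorem 2.13]{Jerabek07}), via a translation of $\APX_1$-proofs into $\HARDA$-proofs. Let $\varphi$ be a sentence in the language of $\PV_1$ with $\APX_1\vdash\varphi$. Fix some $\eps<1/3$ (say $\eps=1/4$, matching $\HardA_{1/4}$) and let $\NW(C,\Delta)$ be the $\PV_1(\alpha)$-term given by \Cref{thm: ub APC}. Define the translation $\tau$ that replaces every occurrence of the symbol $\P$ by the term $\NW$. More precisely, every $\PV(\P)$ function symbol $f$ is sent to the $\PV(\alpha)$ function symbol obtained by the same introduction rules (composition, limited recursion on notation), with $\P$ replaced by $\NW$. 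This is well defined because $\NW$ is itself a $\PV(\alpha)$ term and $\PV(\alpha)$ is closed under these rules. One also has to check that the length-bound axiom for $\P$ survives; it does, since it is part of the \refaxiom{basic} and is covered by \Cref{thm: ub APC}.

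The next step is to show by induction on $\APX$-derivations that whenever $\APX\vdash t_1=t_2$, the theory $\HARDA$ proves $\tau(t_1)=\tau(t_2)$. The cases are as follows.
\begin{compactitem}
\item Defining equations of $\PV(\P)$ functions translate to defining equations of $\PV(\alpha)$ functions.
\item The four axioms governing $\P$ translate to statements provable in $\HARDA$, which is exactly the content of \Cref{thm: ub APC}.
\item The logical rules are preserved trivially.
\item The structural induction rule translates into an instance of the $\PV(\alpha)$ induction rule, and $\PV_1(\alpha)\subseteq\HARDA$ proves the corresponding universal equation, since $\PV_1(\alpha)$ admits open induction.
\end{compactitem}

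We then lift this to $\APX_1$. Its non-logical axioms are the universal closures of $\APX$-equations, the successor axioms, and the $n$-induction scheme for quantifier-free $\PV(\P)$ formulas. Under $\tau$ these become, respectively, universal closures of equations provable in $\HARDA$ (by the previous step), the same successor axioms, and open induction for $\PV(\alpha)$ formulas, which is available in $\PV_1(\alpha)$. First-order logic commutes with the substitution of a term for a function symbol, so $\APX_1\vdash\psi$ implies $\HARDA\vdash\tau(\psi)$ for every sentence $\psi$.

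Since $\varphi$ does not mention $\P$, we have $\tau(\varphi)=\varphi$, hence $\HARDA\vdash\varphi$. By conservativity of $\HARDA$ over $\APC_1$ for sentences in the language of $\PV_1$, we conclude $\APC_1\vdash\varphi$.

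The main obstacle is the induction step over $\APX$-proofs, specifically making sure the substitution $\P\mapsto\NW$ is a legitimate interpretation at the level of function-symbol introduction and the structural induction rule. The axioms for $\P$ themselves are supplied wholesale by \Cref{thm: ub APC}, so they pose no difficulty. I would handle the obstacle by first passing to $\PV_1(\alpha)$, where $\NW$ can be added as a defined function symbol, and then citing the standard fact that theories $\PV_1(\beta)$ are closed under substituting a definable $\PV$ term for the oracle.
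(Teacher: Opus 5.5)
Your proposal is correct and takes the same route as the paper, which derives the corollary in one line from \Cref{thm: ub APC} together with \Jerabek{}'s theorem that $\HARDA$ is a conservative extension of $\APC_1$. Your induction over $\APX$-derivations and the lift to $\APX_1$ spell out the translation $\P\mapsto\NW$, which the paper leaves implicit in the phrase ``it then immediately follows.''
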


\subsubsection{A Conditional Separation: \texorpdfstring{$\APC_1$}{APC1} is Likely Stronger Than \texorpdfstring{$\APX_1$}{APX1}} 

As $\APX_1$ is an alternative theory for polynomial-time approximate counting and probabilistic reasoning, an interesting question is whether it is \emph{strictly} weaker than $\APC_1$. We provide a positive answer under plausible assumptions, by adapting a technique from \cite{ILW23}.      

A main technical tool is a KPT witnessing theorem (see \cite{KrajicekPT91,Oliveira25}) for the theory $\APX_1$, where the ``student'' is implemented by polynomial-size circuits. Formally: 

\begin{restatable}[KPT Witnessing with Circuits]{definition}{KPTdef}\label{def: KPT}
    Let $\calT$ be an extension of $\PV_1$. We say that $\calT$ satisfies the \emph{KPT witnessing property with circuits} if the following holds. Let $\varphi(\vec x,y,z)$ be any quantifier-free formula in the language of $\PV_1$ such that $\calT\vdash \forall\vec x~\exists y~\forall z~\varphi(\vec x,y,z)$. Then there is a constant $k\in\bbN$ and functions $f_1,f_2,\dots,f_k$ (in the standard model) such that the following holds. 

    For every vector $\vec x$ of strings  and every $z_1,z_2,\dots,z_k\in\zo^*$, it holds in the standard model that: 
    \begin{compactitem}
    \item either $\varphi(\vec x,f_1(\vec x),z_1)$ is true; 
    \item or $\varphi(\vec x,f_2(\vec x,z_1), z_2)$ is true; 
    \item or $\varphi(\vec x,f_3(\vec x,z_1,z_2), z_3)$ is true;
    \item $\dots$; 
    \item or $\varphi(\vec x,f_k(\vec x,z_1,\dots,z_{k-1}),z_k)$ is true. 
    \end{compactitem}
    Moreover, over any fixed input length for $\vec x$, $f_1,\dots,f_k$ are computable by \emph{polynomial-size deterministic circuits}.  
\end{restatable}

\begin{restatable}[KPT Witnessing for $\APX_1$]{theorem}{KPTforAPX}\label{thm: KPT APX1}
    $\APX_1$ admits the KPT witnessing property with circuits. 
\end{restatable}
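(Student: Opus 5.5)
The plan follows the standard route for KPT-type witnessing: Herbrand's theorem for universal theories, applied to the Skolemized form of the $\forall\exists\forall$ sentence. The non-standard ingredient is replacing the oracle $\P$ by something a deterministic circuit can compute, namely empirical counting over a hard-wired flat distribution, exactly as in the Predictor Extraction Lemma (\Cref{lmm: predictor extraction}).

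\textbf{Step 1 (Herbrand).} Suppose $\APX_1\vdash\forall\vec x\,\exists y\,\forall z\,\varphi(\vec x,y,z)$. Since $\APX_1$ is universally axiomatizable (\Cref{prop: apx1 universal}), I will introduce a fresh function symbol $h$ and consider the universal theory $\APX_1(h)$. In this theory $\exists y\,\varphi(\vec x,y,h(\vec x,y))$ is provable. Herbrand's theorem then yields terms $t_1,\dots,t_k$ in $\PV(\P)\cup\{h\}$ such that
\[
\APX_1(h)\vdash \bigvee_{j} \varphi(\vec x,t_j,h(\vec x,t_j)).
\]
By the usual KPT normalization, I will arrange that $t_j$ is $t_j(\vec x,z_1,\dots,z_{j-1})$ with $z_i=h(\vec x,t_i)$. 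Encoding the disjunction as a single equation and using conservativity (\Cref{prop: APX1 conservative}), I obtain an $\APX(h)$-provable equation $t^\star(\vec x)=1$, where $h$ is an uninterpreted oracle.

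\textbf{Step 2 (removing $\P$ non-uniformly).} I will take the Predictor Extraction Lemma and relativize it to the free symbol $h$. Its proof is by induction on $\APX$ derivations, and it should go through verbatim with an extra uninterpreted function, since that symbol only contributes to the oracle queries the terms make. This yields polynomials $k(n),m(n)$ with the following property. Let $\P$ be simulated by empirical counting on any flat $\disD\in(\zo^{k})^m$, and let $h$ be answered arbitrarily, for instance by the adversary's $z_i$. If the disjunction fails, a deterministic algorithm $E$ outputs an index $i$ and a circuit $P$ of size $s$ predicting bit $i$ of $\disD$ with advantage $\delta$, where $(\delta^2/10)m\ge s+\lceil\log k\rceil+1$.

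The key point is then a counting argument in the meta-theory, and this is where I expect the main obstacle. I need one $\disD$, fixed per input length, on which no such predictor exists, uniformly over all adversary answers $z_1,z_2,\dots$. The number of candidate predictors $(i,P)$ is at most $2^{s+\lceil\log k\rceil}$. For each fixed candidate, a Chernoff bound shows that a uniformly random $\disD$ makes it succeed with advantage $\delta$ with probability at most $e^{-\delta^2 m/2}$. A union bound over candidates is then below $1$ by the parameter relation. Hence some $\disD_n$ admits no predictor of size $s$ at all, and therefore $E$ can never succeed on it, whatever the $z_i$ are. It is essential that this argument be independent of the answers $z_i$. Otherwise the union bound would have to range over the $z_i$ as well, and those are unbounded. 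For this reason I will quantify over all size-$s$ predictors rather than only over the outputs of $E$.

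\textbf{Step 3 (the circuits).} I define $f_j(\vec x,z_1,\dots,z_{j-1})$ to be $t_j$ evaluated with $\P$ answered by empirical counting on $\disD_{|\vec x|}$, which is hard-wired as advice, and with $h$-queries answered by the earlier $z_i$. These are polynomial-size deterministic circuits. On $\disD_n$ the simulated equation $t^\star=1$ holds for every choice of $z$'s by Step 2, so some disjunct $\varphi(\vec x,f_j,z_j)$ is true. This is exactly the KPT witnessing property with circuits. The step I would check most carefully is the relativization of \Cref{lmm: predictor extraction} to the free symbol $h$, in particular that the arguments of $\P$-queries depending on $h$-values do not affect the parameter bounds $k(n)$ and $s$.
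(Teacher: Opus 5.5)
There is a genuine gap, and it is exactly the step you flagged. In the KPT witnessing property the $z_i$ range over all of $\zo^*$, so their lengths are unbounded relative to $|\vec x|$. Meanwhile $f_1(\vec x)$, and hence the oracle used to compute it, is fixed before any $z_i$ is seen.

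After KPT normalization there is in fact no $h$ left. \Cref{thm: KPT general} gives $\PV(\P)$-terms $t_j(\vec x,z_1,\dots,z_{j-1})$ in which the $z_i$ are ordinary free variables. These terms may query $\P$ on circuits built from the $z_i$ (e.g.\ $\P(z_1,\Delta)$ with $z_1$ read as a circuit), so query sizes and input lengths grow with $|z_i|$.

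Accordingly, the parameters of \Cref{lmm: predictor extraction} are polynomials in the total length of \emph{all} free variables of the equation. In the substitution case $k_0$ is built from bounding values of intermediate terms. In the Local Consistency case the extracted predictor has size $s=n$, where $n$ is the length of the violating query. A term containing an uninterpreted $h$ has no bounding value, so a relativized lemma with bounds depending only on $|\vec x|$ is not available.

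Concretely, no single finite $\mathcal{D}_{|\vec x|}\in(\zo^{K})^{m}$ can serve all $z$'s. Circuits with more than $K$ inputs cannot even be simulated. Circuits of size much larger than $mK$ can hard-wire the support of $\mathcal{D}_{|\vec x|}$ and predict it, so empirical counting on it is no longer an approximate counter, and nothing forces the simulated disjunction to hold. Your union bound is correct once $k,m,s$ are fixed, but they cannot be fixed independently of the $z_i$.

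Indexing the distribution by the full input length does not rescue this. \Cref{lmm: predictor extraction} needs a single $\mathcal{D}$ to answer every query in the proof: Local Consistency compares $C$ with $\Fix_b(C)$, and Precision Consistency compares different $\Delta$. But $f_1$ cannot see $|z_1|$, so it could not use the same distribution as $f_2$.

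The missing idea is to interpret $\P$ by a \emph{globally correct} approximate counter instead of empirical counting on one distribution, and then use soundness rather than predictor extraction; this is the paper's proof. Take a family $F$ of deterministic polynomial-size circuits solving $\CAPP$ at every input length. It exists since $\CAPP\in\pr\BPP$, by Adleman's argument, and exactness on syntactically constant circuits is trivial to add. Then $\bbM(F)$ is a standard model.

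The disjunction from \Cref{thm: KPT general}, rewritten as an $\APX$-provable equation, holds in $\bbM(F)$ for all $\vec x$ and all $z_i$ by \Cref{prop: soundness APX}. Setting $f_j\eqdef t_j^{\bbM(F)}$ gives a polynomial-size circuit at each fixed input length, because the oracle queries there are polynomially bounded. Hard-wired pseudorandom sets still appear, but inside $F$, one for each query size. That is precisely what keeps the answers consistent across arbitrarily long $z_i$, and no counting argument over predictors is needed.
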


The theorem can be proved using the standard KPT witnessing theorem (see, e.g., \cite[Theorem 3.2]{Oliveira25}) and the fact that the circuit acceptance probability problem is solvable by (non-uniform) polynomial-size circuits.\footnote{A similar KPT witnessing theorem is proved in \cite[Theorem 4]{DBLP:conf/stoc/PichS21} (see also \cite[Theorem 25]{ILW23}) for the theory $\PV_1+$ ``uniform $\dWPHP(\PV)$'', which  might be incomparable with $\APX_1$.} Put another way, \Cref{thm: KPT APX1} holds as we can hard-wire a sequence of explicit pseudorandom distributions to implement the approximate counting oracle $\P$. Since the argument is standard, we defer the proof of the theorem to \Cref{apd: KPT APX}. 

We will use the following result that is implicit in the proof of \cite[Theorem 24]{ILW23}; we refer readers to \cite{ILW23} for precise statements of the assumptions. 

\begin{theorem}[Implicit in {\cite[Theorem 24]{ILW23}}]\label{thm: ILW}
Assume the existence of JLS-secure $\iO$ and that $\coNP$ is not contained infinitely often in $\NP_{/\poly}$. For any theory $\calT$ extending $\PV_1$ that satisfies the KPT witnessing property with circuits, there is a $\forall\Sigma_2^b$ sentence in the language of $\PV_1$ that is provable in $\APC_1$, but is unprovable in $\calT$.  
\end{theorem}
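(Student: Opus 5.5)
The approach is to specialize the separation of \cite{ILW23} to the dual weak pigeonhole principle. The only property of $\calT$ it uses is the KPT witnessing property with circuits (\Cref{def: KPT}).

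\textbf{Choice of sentence.} Take $\varphi$ to be the instance of $\dWPHP(\PV)$ for circuits, written as a $\forall\Sigma_2^b$ sentence:
\[
\forall n\in\Log~\forall C\colon\zo^n\to\zo^{n+1}~\exists y\in\zo^{n+1}~\forall x\in\zo^n~C(x)\ne y.
\]
Since $\APC_1=\PV_1+\dWPHP(\PV)$, this sentence is provable in $\APC_1$, via the universal $\PV$ circuit evaluator. It remains to show that $\calT\nvdash\varphi$. Suppose towards a contradiction that $\calT\vdash\varphi$.

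\textbf{Step 1: KPT witnessing gives a student.} The matrix $C(x)\ne y$ is quantifier-free, so the KPT witnessing property with circuits applies. It yields a constant $k$ and polynomial-size deterministic circuits $f_1,\dots,f_k$ forming a $k$-round student. On input $C$, the student proposes $y_1=f_1(C)$. Whenever $y_i$ is in the range of $C$, a teacher returns a preimage $z_i$ with $C(z_i)=y_i$, and the student proposes $y_{i+1}=f_{i+1}(C,z_1,\dots,z_i)$. The property guarantees that some $y_i$ lies outside the range of $C$. In particular, Range Avoidance admits a constant-round, non-uniform, deterministic counterexample-driven algorithm.

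\textbf{Step 2: derive $\coNP\subseteq$ i.o.-$\NP_{/\poly}$.} Here I follow the proof of \cite[Theorem 24]{ILW23}. Fix a $\coNP$-complete language, and map each instance $\psi$ to an obfuscated circuit $C_\psi=\iO(\cdot)$ of a suitable expanding circuit.
\begin{compactitem}
\item When $\psi$ is in the language, the obfuscated circuit is functionally equivalent to a fixed "canonical" circuit.
\item When $\psi$ is not in the language, the obfuscated circuit differs from the canonical one in a way that a successful avoider must exploit.
\end{compactitem}
The $\NP_{/\poly}$ algorithm for deciding $\psi$ runs the student $f_1,\dots,f_k$ on $C_\psi$, with the student hard-wired as advice. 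It nondeterministically guesses the teacher's preimages $z_i$; these are efficiently verifiable, since checking $C_\psi(z_i)=y_i$ takes polynomial time. It accepts according to whether some $y_i$ is certified as a non-image. By JLS-security of $\iO$, the student behaves indistinguishably on functionally equivalent obfuscations. This pins down the acceptance behaviour on yes-instances, while the guarantee of Step 1 forces the correct behaviour on no-instances. The resulting algorithm would place $\coNP$ in i.o.-$\NP_{/\poly}$, contradicting the assumption.

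\textbf{Main obstacle.} The hard part is the multi-round interaction. The teacher's answers $z_i$ depend on the specific obfuscated circuit, so the indistinguishability argument must be applied round by round. For this I would use a hybrid over the $k$ rounds, which is possible because $k$ is constant. I would also use nondeterminism to absorb the teacher, arguing that whatever preimages are guessed, the next proposal $y_{i+1}$ is determined by circuits of polynomial size. This is exactly where the hypothesis that the KPT witnesses are deterministic polynomial-size circuits is needed.

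Finally, the reasoning in Steps 1 and 2 uses nothing about $\calT$ beyond the KPT witnessing property with circuits. Hence, combined with \Cref{thm: KPT APX1}, it immediately yields the separation of $\APX_1$ from $\APC_1$.
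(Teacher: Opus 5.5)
Your overall plan matches the paper's proof. The paper's proof only observes that the argument of \cite[Theorem 24]{ILW23}, written for $\mathsf{UAPC}_1$, uses nothing about that theory except its KPT witnessing property with circuits (\cite[Theorem 25]{ILW23}). Your choice of the circuit form of $\dWPHP$ and your Step 1 fit that reading. If Step 2 means only ``now run the ILW23 argument verbatim'', you agree with the paper.

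The Step 2 you actually write, however, does not work as stated. Its acceptance rule, ``some $y_i$ is certified as a non-image'', is a $\coNP$ condition, since an $\NP$ machine can only verify preimages. So the procedure is not an $\NP_{/\poly}$ algorithm as described. To obtain an $\NP$-checkable event you must make the circuit family explicit. For example:
\begin{itemize}
\item take a canonical circuit whose non-images are syntactically recognisable;
\item fix an advice string $y^*$ that the student outputs with probability about $2^{-n}$ on obfuscations of that circuit;
\item plant $y^*$ into the range under a guard that is satisfiable exactly on no-instances;
\item have the verifier guess the obfuscation randomness and check that the student outputs $y^*$.
\end{itemize}
Transferring that event from the canonical circuit to the planted one needs the $i\mathcal{O}$ distinguishing advantage to be below about $2^{-n}$. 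That is a quantitative use of the security assumption, not just ``the student behaves indistinguishably''.

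More seriously, for $k>1$ rounds the no-instance direction fails. The KPT guarantee only says that some proposal is a non-image once all earlier proposals have received valid preimages. It does not stop the student from proposing the planted, in-range string $y^*$. On a no-instance the student can receive a preimage of $y^*$ and then move on to some other non-image. In the guarded construction that preimage exists only on no-instances and encodes a witness for the instance. So the event your verifier checks is not excluded on no-instances. Nor can a hybrid over rounds match the two worlds past that round, because on yes-instances the corresponding preimage does not exist.

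Handling this multi-round interaction is the substantive part of the argument you are deferring to \cite{ILW23}. Your ``hybrid over the $k$ rounds'' and ``nondeterminism absorbs the teacher'' do not supply it. Either replace Step 2 by an explicit appeal to \cite{ILW23}, together with the observation that its proof uses only the KPT property with circuits (which is all the paper does), or give that argument in full.
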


\begin{proof}[Proof Sketch]
\cite[Theorem 24]{ILW23} only proves the theorem for a theory called $\UAPC_1$. Nevertheless, by a closer inspection, the only property of the theory used in the proof is that $\UAPC_1$ satisfies the KPT witnessing property with circuits (see \cite[Theorem 25]{ILW23}). 
\end{proof}

By combining \Cref{thm: KPT APX1} and \ref{thm: ILW}, it immediately follows that: 

\begin{corollary}\label{cor: APC strict extension}
    Assume the existence of JLS-secure $\iO$ and that $\coNP$ is not contained infinitely often in $\NP_{/\poly}$. There is a $\forall\Sigma_2^b$ sentence in the language of $\PV_1$ that is provable in $\APC_1$, but is unprovable in $\APX_1$. 
\end{corollary}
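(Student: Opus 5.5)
The plan is to take the ILW-type separation (\Cref{thm: ILW}) as a black box and reduce the corollary to verifying its one hypothesis. That hypothesis is that $\APX_1$, viewed through its consequences in the language of $\PV_1$, satisfies the KPT witnessing property with circuits of \Cref{def: KPT}, which is exactly \Cref{thm: KPT APX1}. Combining the two theorems then gives a $\forall\Sigma^b_2$ sentence $\psi$ in the language of $\PV_1$ with $\APC_1\vdash\psi$ but $\APX_1\nvdash\psi$.

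One technical mismatch must be handled first. \Cref{thm: ILW} is stated for theories $\calT$ extending $\PV_1$ in the language of $\PV_1$, while $\APX_1$ lives in the richer language $\PV(\P)$. I will apply the theorem to $\calT\eqdef\{\varphi\in\text{language of }\PV_1 : \APX_1\vdash\varphi\}$. This theory extends $\PV_1$, since every $\PV_1$ theorem is an $\APX_1$ theorem. It also inherits KPT witnessing with circuits. For that, I note that \Cref{def: KPT} only quantifies over quantifier-free $\PV_1$-formulas $\varphi(\vec x,y,z)$ with $\calT\vdash\forall\vec x\,\exists y\,\forall z\,\varphi$, and each such provability statement is literally an $\APX_1$-provability statement. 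So \Cref{thm: KPT APX1} supplies the student functions $f_1,\dots,f_k$ computed by polynomial-size deterministic circuits. Then \Cref{thm: ILW} yields the unprovability in $\calT$, hence in $\APX_1$, of the sentence $\psi$ that $\APC_1$ proves.

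If \Cref{thm: KPT APX1} had to be unpacked, the main obstacle would be the step that justifies it. One applies Herbrand/KPT to the universal axiomatization of $\APX_1$ (\Cref{prop: apx1 universal}) to obtain $\PV(\P)$-terms as students. One then replaces $\P$ on each input length by a non-uniformly hardwired correct approximate counter, namely empirical averages over a fixed sample set that works for all circuits of the relevant size (this set exists by a Chernoff and union bound). This gives a standard model by \Cref{thm: standard equiv admissible}, in which the $\APX_1$-proof remains sound (\Cref{prop: soundness APX1}), so the resulting deterministic circuits witness the KPT disjunction in the real world.

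The subtlety is that the hardwired counter must be correct on all queries, including those of sizes depending on the $z_i$'s. I would handle this by letting the hardwired advice cover all circuits up to the polynomial size bound induced by the input lengths, which is legitimate since the definition only requires circuits per fixed length of $\vec x$.
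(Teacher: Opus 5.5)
Your proposal is correct and follows the paper's route: the corollary is exactly \Cref{thm: KPT APX1} combined with \Cref{thm: ILW}, and your sketch of the KPT step matches the paper's appendix proof (KPT on the universal axiomatization, then a hardwired correct approximate counter so that $\bbM(\hat\P)$ is a standard model in which the $\APX_1$-proof is sound). The restriction to the $\PV_1$-fragment is harmless but unnecessary, since \Cref{def: KPT} only concerns $\PV_1$-formulas and \Cref{thm: ILW} applies to $\APX_1$ directly; likewise a correct counter gives a standard model by definition, so \Cref{thm: standard equiv admissible} is not needed.
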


\subsubsection{An Open Problem: Further Separations?}

\Cref{cor: APC strict extension} shows that $\APC_1$ is likely strictly stronger than $\APX_1$. In other words, under computational assumptions, there are $\forall\Sigma_2^b(\PV)$ sentences provable in $\APC_1$ that are not provable in $\APX_1$. An intriguing open problem is whether $\APC_1$ is strictly stronger than $\APX_1$ with respect to $\forall\Sigma_1^b(\PV)$ sentences: 

\begin{open}\label{open: TFNP separation}
    Is there a $\forall\Sigma_1^b$ sentence in the language of $\PV_1$ that is provable in $\APC_1$, but unprovable in $\APX_1$? In other words, is there an $\APC_1$ provably total $\TFNP$ problem (in the language of $\PV$) that is not provably total in $\APX_1$?  
\end{open}

As $\LossyCode$ captures the $\forall\Sigma_1^b$-fragment of $\APC_1$ (see \Cref{thm: APC1 witnessing}), and $\RefYao$ captures the $\forall\Sigma_1^b$-fragment of $\APX_1$ (see \Cref{thm: witness TFNP}), a related question in the theory of pseudorandomness is whether derandomizing $\LossyCode$ is harder than derandomizing $\RefYao$. Formally: 

\begin{open}\label{open: lossy to ref yao}
    Is there a deterministic polynomial-time reduction from $\LossyCode$ to $\RefYao$? In other words, is there a converse to \Cref{thm: ref yao to lossy}? 
\end{open}

We note that these two open problems are technically incomparable. For instance, a positive answer to \Cref{open: lossy to ref yao} may not give a negative answer to \Cref{open: TFNP separation} if it does not have a feasible correctness proof. Nevertheless, it is conceivable that techniques developed for one of them are likely useful for the other.

\subsection{Predictor Extraction Lemma: Proof of \Cref{lmm: predictor extraction}}\label{sec:predictor_extraction_proof}

\Extraction*

Before proving this lemma, we briefly explain  the  intuition. Recall that $\APX$ is defined as the extension of $\PV(\P)$ by additional axioms: \refaxiom{basic}, \refaxiom{boundary}, \refaxiom{local}, and \refaxiom{precision}. An $\APX$ proof of the equation $t_1(x)=t_2(x)$ is, at a high level, a proof of the following statement: For every interpretation of the oracle $\P$, either $t_1(x)=t_2(x)$, or the oracle $\P$ does not satisfy one of the axioms. 

For our specific implementation of the oracle $\P$ in \Cref{lmm: predictor extraction}, \refaxiom{basic}, \refaxiom{boundary} and \refaxiom{precision} are always satisfied, therefore only \refaxiom{local} can be violated. In such cases, for a circuit $C:\zo^t\to\zo$ and strings $\Delta, B$ (constructed in the $\APX$ proof) such that
\[
\left|\Pr_{u\gets\disD}[C(u_{\le t})]-\frac{1}{2}\left(\Pr_{u\gets\disD}[C(u_{<t} 0)]+\Pr_{u\gets \disD}[C(u_{<t}1)]\right)\right| > 2 \cdot  \frac{1}{|\Delta|}+\frac{1}{|B|},
\]
following a similar argument as in the proof of Yao's lemma (see, e.g., \cite[Chapter 9]{AB09} or \Cref{sec:Yao_transformation}), we can construct a predictor from $C$ via a deterministic polynomial-time algorithm. 

From a conceptual point of view, the argument crucially explores that a predictor can be constructed not only from the ability to distinguish a distribution from a random string, as in the standard formulation of Yao's lemma, but also from the ability to detect a \emph{local inconsistency} when using the distribution as a random source for approximate counting. This is a perspective that might be of independent interest.

In order to implement this intuition, we prove the lemma using a careful proof-theoretic analysis. Formally, we will prove \Cref{lmm: predictor extraction} by induction on the $\APX$ proof $\pi$ of the equation $t_1(x)=t_2(x)$. The functions $\ell,k,m,s,\delta$ and the algorithm $E$ will be determined based on the last rule or axiom of $\pi$ and the functions and algorithms obtained from the induction hypothesis. 

\begin{proof}[Proof of \Cref{lmm: predictor extraction}]
    We will prove a stronger statement: For any provable equation $e(x):t_1(\vec x)=t_2(\vec x)$, there are non-decreasing polynomials $k_0(n),m_0(n)$ such that the lemma holds for every polynomials $k(n)\ge k_0(n)$ and $m(n)\ge m_0(n)$, when the function $m(n)$ is non-decreasing. We prove this by induction on the length of the proof. 
    
    Consider the axiom or rule used in the last line of the proof that concludes $e(x)$. 

    \paragraph{\refaxiom{basic}.} Suppose that $e$ is a provable equation in $\PV(\P)$ and it is introduced via the \refaxiom{basic}.\footnote{Note that our proof does not look into the $\PV(\P)$ proof of $e$; it works provided that $e$ is a $\PV(\P)$ provable equation.} Then, for any interpretation of $\P$ and any string $x$, $e(x)$ must be true. We set $k_0(n)$ to be sufficiently large such that $t\le k_0(n)$ for every $C:\zo^t\to\zo$ queried in simulations $t_1^\calD(\vec x),t_2^\calD(\vec x)$ on $x\in\zo^n$; this is possible as $t_1,t_2$ are polynomial-time oracle algorithms. We set other functions and $E$ arbitrarily as $t_1^\calD(\vec x)=t_2^\calD(x)$ always holds. 

    Suppose that $e$ is one of the equations encoding $\P(C,\Delta)\in\bbQ$, $\P(C,\Delta)\le 1$, or $\P(C,\Delta)\ge 0$, where $C,\Delta$ are open variables. Let $k_0(n)\eqdef n$ and $m_0(n)=1$. Similar to the previous case, one can see that for any $k(n)\ge k_0(n)$ and $m(n)\ge m_0(n)$, $e(\vec x)$ must be true if we interpret $\P$ following \Cref{equ: simulation of P by dist}. We can set other functions and $E$ arbitrarily as $t_1^\calD(\vec x)=t_2^\calD(\vec x)$ always holds. 

    \paragraph{\refaxiom{boundary}.} Suppose that $e(\vec x)$ is an equation encoding that for any circuit $C\in B_n$, $\IsConst(C)\to \P(C,\Delta)=\Bool(C)$. Note that $C$ and $\Delta$ are the only open variables of the equation. We can set $k_0(n)=n$ and $m_0(n)=1$. For every $k(n)\ge k_0(n)$ and $m(n)\ge m_0(n)$, we can set $E$ arbitrarily. 
    
    This is correct as for every $(C,\Delta) \in\zo^n$ and every distribution $\disD$ of size $m(n)$ over $k(n)\ge n$ bits, when we interpret $\P$ following \Cref{equ: simulation of P by dist}, if $C:\zo^t\to\zo$ is a constant circuit, we have $t\le n\le k(n)$ and 
    \[
    \P(C,\Delta) = \Pr_{u\gets\disD}[C(u_{\le t})] = \Bool(C). 
    \]
    In other words, $t_1^\disD(C,\Delta) = t_2^\disD(C,\Delta)$ is always true. 
    
    \paragraph{\refaxiom{precision}.} This is similar to the case for the \refaxiom{boundary}. Indeed, when the oracle $\P$ is interpreted as \Cref{equ: simulation of P by dist}, $\P(C,\Delta_1)=\P(C,\Delta_2)$ for \emph{any} $\Delta_1,\Delta_2$. 

    \paragraph{\refaxiom{local}.} In this case, $e(\vec x)$ is an equation encoding the following sentence: For every circuit $C:\zo^t\to\zo$ and strings $\Delta,B$, we have 
    \[
    \left|\P(C,\Delta) - \frac{\P(\Fix_0(C),\Delta)+\P(\Fix_1(C),\Delta)}{2}\right| \le \frac{2}{|\Delta|} + \frac{1}{|B|},
    \]
    where $\Fix_\sigma(C)$ is a $\PV$-term that outputs the circuit obtained from $C$ by fixing the rightmost input bit to $\sigma\in\zo$. In this equation, $C,\Delta,B$ are the only open variables. Let $k_0(n)$ and $m_0(n)$ be polynomials to be determined later. 

    Let $k(n)\ge k_0(n)$ and $m(n)\ge m_0(n)$. Let $\vec x=(C,\Delta,B)\in\zo^n$ and $\disD\in(\zo^{k(n)})^{m(n)}$. Suppose that $t_1^\disD (\vec x)\ne t_2^\disD(\vec x)$. Since the oracle $\P$ is interpreted following \Cref{equ: simulation of P by dist}, we have 
    \[
    \left|\Pr_{u\gets\disD}[C(u_{\le t})] - \frac{\Pr_{u\gets\disD}[C(u_{< t}0)] + \Pr_{u\gets\disD}[C(u_{< t}1)]}{2}\right| > \frac{2}{|\Delta|} + \frac{1}{|B|}\ge \frac{1}{n}. 
    \]
    This implies that 
    \begin{align}
      & \frac{1}{2}\left|\Pr_{u\gets\disD}[C(u_{<t} u_t)] - \Pr_{u\gets\disD}[C(u_{<t}\overline{u_t})]\right| \nonumber\\ 
    = & \left|\Pr_{u\gets\disD}[C(u_{\le t})] - \frac{\Pr_{u\gets\disD}[C(u_{< t} u_t)] + \Pr_{u\gets \disD}[C(u_{<t} \overline{u_t})]}{2}\right| \nonumber\\ 
    = & \left|\Pr_{u\gets\disD}[C(u_{\le t})] - \frac{\Pr_{u\gets\disD}[C(u_{< t}0)] + \Pr_{u\gets\disD}[C(u_{< t}1)]}{2}\right| \ge \frac{1}{n}, \label{equ: diff Cu Cnot}
    \end{align}
    where in the second equality we used linearity of expectation and that for every fixed $u$, $C(u_{<t} u_t) + C(u_{<t} \overline{u_{t}}) = C(u_{<t}0) + C(u_{<t} 1)$.
    
    For simplicity, we only consider the case that 
    \begin{equation}
    \Pr_{u\gets\disD}[C(u_{<t} u_t)] - \Pr_{u\gets\disD}[C(u_{<t}\overline{u_t})]\ge \frac{2}{n},
    \end{equation}
    and the other case can be resolved accordingly. We can rewrite the equation above as follows: 
    \begin{align*}
        & \left(\Pr_{u\gets\disD}[C(u_{<t} 0)\oplus 1=u_t\land u_t=0] - \Pr_{u\gets\disD}[C(u_{<t}0)\oplus 1\ne u_t\land u_t=1]\right)  \\
    +   & \left(\Pr_{u\gets\disD}[C(u_{<t} 1)\oplus 1\ne u_t\land u_t=1] - \Pr_{u\gets\disD}[C(u_{<t}1)\oplus 1 =u_t\land u_t=0]\right) \ge \frac{2}{n}. 
    \end{align*}
    Therefore, one of the terms in the LHS must be at least $1/n$. Again, we will only consider the case that the first term is at least $1/n$, and the other term can be resolved accordingly. 
    
    Notice that 
    \begin{align*}
      &  \Pr_{u\gets\disD}[C(u_{<t} 0)\oplus 1=u_t\land u_t=0] - \Pr_{u\gets\disD}[C(u_{<t}0)\oplus 1\ne u_t\land u_t=1] \\
    = &  \Pr_{u\gets\disD}[C(u_{<t} 0)\oplus 1=u_t\land u_t=0] + \Pr_{u\gets\disD}[C(u_{<t}0)\oplus 1= u_t\land u_t=1] - \Pr_{u\gets\disD}[u_t=1] \\ 
    = & \Pr_{u\gets\disD}[C(u_{<t} 0)\oplus 1 = u_t] - \Pr_{u\gets\disD}[u_t=1] \ge \frac{1}{n}. 
    \end{align*}
    Subsequently, either $\Pr_{u\gets\disD}[u_t=1]\le 1/2 - 1/(2n)$ or $\Pr_{u\gets\disD}[C(u_{<t} 0)\oplus 1 = u_t] \ge 1/2 + 1/(2n)$. In either case -- and we can efficiently determine which case holds since $\mathcal{D}$ is explicitly given -- we can construct a circuit of size at most $|C|\le s\eqdef n$ that predicts the $t$-th bit of $\disD$ from the first $(t-1)$ bits with advantage at least $\delta\eqdef 1/(2n)$. By setting $m_0(n)\eqdef n^4$ and $k_0(n)\eqdef n$, we can ensure that 
    \[
    (\delta^2/10)\cdot m(n) \ge s + \lceil \log k(n)\rceil + 1
    \]
    when $n$ is sufficiently large. 

    \paragraph{\refaxiom{logical}.} We will only consider the substitution rule $t_1=t_2\vdash t_1(x/t)=t_2(x/t)$; other logical rules can be resolved accordingly. In this case, $e(\vec x)$ is of form $t_1(x/t)=t_2(x/t)$, where $t$ is a term and $x$ is one of the open variables of $t_1$ and $t_2$, and there is a shorter proof of the premise $t_1=t_2$. Without loss of generality, we assume that the variable $x$ does not occur in the term $t$. 

    Let $\vec y$ be the open variables in $t$ but not in $t_1,t_2$, $\vec z$ be the open variables in $t_1,t_2$ (excluding $x$) but not in $t$, and $\vec w$ be the open variables in both $t$ and $t_1,t_2$ (excluding $x$). By the induction hypothesis, there are polynomials $k_0'(n),m_0'(n)$ such that for every polynomials $k'(n)\ge k_0'(n),m'(n)\ge m_0'(n)$, there are $\delta'(n)$, $s'(n)$, and an  algorithm $E'(\vec x,\disD)$ that satisfies the lemma for the equation
    \begin{equation}
    t_1(x,\vec z,\vec w) = t_2(x,\vec z,\vec w). 
    \end{equation}

    Let $\ell(n)$ be an upper bound on the output length of $t$ when the input length is at most $n$ (this is called the \emph{bounding value} of the term, see \cite{Coo75}). We define $k_0(n)\eqdef k_0'(n+\ell(n))$ and $m_0(n)\eqdef m_0'(n+\ell(n))$. 

    To show that this is correct, fix any $k(n)\ge k_0(n)$ and $m(n)\ge m_0(n)$, and let $s(n),\delta(n)$ be determined later. The algorithm $E$ works as follows. Given any $(\vec y,\vec z,\vec w)\in\zo^n$ and $\disD\in(\zo^{k(n)})^{m(n)}$ such that 
    \begin{equation}
    t_1^\disD(t(\vec y,\vec w),\vec z,\vec w) \ne t_2^\disD(t(\vec y,\vec w),\vec z,\vec w),\label{equ: t1 noteq t2 with disD simulation}
    \end{equation}
    our goal is to output a predictor of a bit of $\disD$ with size $s(n)$ and advantage $\delta(n)$. 
    
    The algorithm first computes $x_\disD\eqdef t^\disD(\vec y,\vec w)$, which is a string of length at most $\ell(n)$. By \Cref{equ: t1 noteq t2 with disD simulation} and the definition of the simulation, 
    \[
    t_1^\disD(x_\disD,\vec z,\vec w) \ne t_2^\disD(x_{\disD},\vec z,\vec w). 
    \]
    Subsequently, by the induction hypothesis (with functions $k'(n+\ell(n))\ge k(n)$ and $m'(n+\ell(n))\ge m(n)$), $E'((x_\disD,\vec z,\vec w),\disD)$ outputs $(i,P)$ such that $P$ is a circuit of size $s$ that predicts the $i$-th bit of $\disD$ with advantage $\delta$ such that
    \[
    ((\delta^2/10)\cdot m'(n+\ell(n)) \ge s+\lceil \log k'(n+\ell(n))\rceil + 1. 
    \]
    It suffices to define $E((\vec y,\vec z,\vec w),\disD)\eqdef E'((x_\disD,\vec z,\vec w),\disD)$.
    
    \paragraph{\refaxiom{induction}.} In this case, $e$ is of form $f_1(x,\vec y)=f_2(x,\vec y)$ for $\PV(\P)$ functions $f_1,f_2$, and there are $\PV(\P)$ functions $g,h_0,h_1$ such that there are shorter proofs of equations  
    \begin{align}
    & f_j(\eps,\vec y) = g(\vec y) \label{equ: ind base fj eq g} \\ 
    & f_j(s_i(x),\vec y) = h_i(x, \vec y, f_j(x,\vec y))\label{equ: ind ind fj hi}
    \end{align}
    for $j\in\{1,2\}$ and $i\in\zo$. By the induction hypothesis, the lemma holds for each of the $6$ equations above. 

    Let $k_0(n)$ and $m_0(n)$ are polynomials to be determined. For any polynomials $k(n)\ge k_0(n)$ and $m(n)\ge m_0(n)$, we will design an algorithm $E$ that, given $(x,\vec y)\in\zo^n$ and $\disD\in(\zo^{k(n)})^{m(n)}$ satisfying 
    \begin{align*}
    f^\disD_1(x,\vec y) \ne f_2^\disD(x,\vec y),
    \end{align*}
    it outputs $(i,P)$ such that $P$ is a circuit of size $s$ that predicts the $i$-th bit of $\disD$ with advantage $\delta$ such that $(\delta^2/10)\cdot m(n) \ge s + \lceil \log k(n)\rceil + 1$. 

    \subparagraph{Case 1.} Suppose that $f_1^\disD(\eps,\vec y)\ne f_2^\disD(\eps,\vec y)$. Then there exists $j\in\{1,2\}$ such that $f_j^\disD(\eps,\vec y)\ne g^\disD(\vec y)$. Note that $\vec y$ is of length at most $n$. As \Cref{equ: ind base fj eq g} admits a shorter proof, by the induction hypothesis, there are polynomials $k_0^{(j)}(n'),m_0^{(j)}(n')$ such that when 
    \begin{equation} 
    k(n')\ge k_0^{(j)}(n'),\quad m(n')\ge m_0^{(j)}(n'), \label{equ: case 1 requirement k and m}
    \end{equation}
    then $E_j(\vec y,\disD)$ outputs a size-$s$ predictor with advantage $\delta$ such that $(\delta^2/10)\cdot m(n') \ge s + \lceil \log k(n')\rceil + 1$, where $n'$ denotes the input length of $\vec y$. It then suffices to define 
    \begin{align*}
    & E((x,\vec y),\disD)\eqdef E_j(\vec y,\disD) \\ 
    & k_0(n)\eqdef k_0^{(j)}(n) \in \poly(n), \quad 
    m_0(n)\eqdef m_0^{(j)}(n)\ge m_0^{(j)}(n').
    \end{align*}

    \subparagraph{Case 2.} Let $t\eqdef |x|$. The algorithm $E$ first finds the first index $i\le t$ such that $f_1^\disD(x_{\le i},\vec y) =f_2^\disD(\vec x_{\le i},\vec y)$ but $f_1^\disD(x_{\le i+1},\vec y) \ne f_2^\disD(\vec x_{\le i+1},\vec y)$; such an index must exist as $f_1^\disD(\eps,\vec y)= f_2^\disD(\eps,\vec y)$ and $f_1^\disD(x,\vec y)\ne f_2^\disD(x,\vec y)$. 
    
    Let $\sigma\eqdef x_{i+1}$. Then there exists $j \in\{1,2\}$ such that 
    \[
    f_j^{\disD}(s_{\sigma}(x_{\le i}),\vec y) \ne h_{\sigma}^\disD(x_{\le j},\vec y,f_j^\disD(x_{\le i},\vec y)). 
    \]
    That is, the string $(x_{\le j},\vec y)$ of length at most $n$ violates \Cref{equ: ind ind fj hi} when the approximate counting oracle $\P$ is implemented using $\disD$ following \Cref{equ: simulation of P by dist}. By the induction hypothesis applied to \Cref{equ: ind ind fj hi}, there are polynomials $k_0'(n)$ and $m_0'(n)$ such that when
    \begin{equation} 
    k(n)\ge k_0'(n),\quad m(n)\ge m_0'(n), \label{equ: case 2 requirement k and m}
    \end{equation}
    there is an algorithm $E'$ such that $E'((x_{\le i},\vec y),\disD)$ outputs a size-$s$ predictor with advantage $\delta$ such that $(\delta^2/10)\cdot m(n) \ge s + \lceil \log k(n)\rceil + 1$. It then suffices to define $E((x,\vec y),\disD)\eqdef E'((x_{\le i},\vec y),\disD)$. 

    \subparagraph{Wrapping up.} Finally, we set $k_0(n)$ and $m_0(n)$ to be sufficiently large polynomials such that when $k(n)\ge k_0(n)$ and $m(n)\ge m_0(n)$, both \Cref{equ: case 1 requirement k and m} and \eqref{equ: case 2 requirement k and m} hold. Therefore, in either case, the algorithm $E$ satisfies the requirement of the lemma. 
\end{proof}

\begin{remark}
    By looking into the proof, we note that the polynomials $k_0(n),m_0(n)$ (which define the minimum size of the distribution $\disD$) and the running time of $E$ depend on the $\APX$ proof; in particular, they may far exceed the running time of the terms $t_1,t_2$ in the equation $e$.  
    
    For instance, $k_0(n)$ is defined as $k_0'(n+\ell(n))$ in the case for logical rules, where $\ell(n)$ is the output length (i.e.~bounding value) of a term $t$ in the proof, and the term $t$ does not necessarily appear in the final equation. At a high level, this is because we need to set the distribution $\disD$ to be large enough to accommodate \emph{all oracle queries} in the $\APX$ proof; we can then look through the proof and find a violation of the \refaxiom{local}, which produces a predictor. 
\end{remark}

\subsection{Simulating \texorpdfstring{$\P(C,\Delta)$}{P(C,Delta)} with \texorpdfstring{$\NW(C,\Delta)$}{NW(C, Delta)}: Proof of \texorpdfstring{\Cref{thm: ub APC}}{Theorem \ref{thm: ub APC}}}
\label{apd: ub APC}

\newcommand{\surjecto}{\twoheadrightarrow}
\newcommand{\lsim}{\lesssim}

We follow the notation in \cite{Jerabek07}. A set $X$ is said to be a \emph{bounded set} defined by a circuit $C$ if $X=\{x < a\mid C(x)=1\}$. We use $x\in X$ to denote the formula $x < a \land C(x)=1$, and $X\subseteq b$ to denote the formula $\forall x\in X~x < b$. Note that bounded definable sets are \emph{not} objects in the theory $\APC_1$, but an abbreviation in the meta-theory. For two bounded definable sets $X\subseteq a$ and $Y\subseteq b$, we define 
\begin{align*}
& X\times Y\eqdef \{bx+a\mid x\in X,y\in Y\}\subseteq ab,\\ 
& X\cupdot Y\eqdef X\cup\{y+a\mid y\in Y\}\subseteq a + b. 
\end{align*}

We say $C:X\to Y$ if $C$ is a circuit from $X$ to $Y$, i.e., for every $x\in X$, $C(x)\in Y$. We say $C:X\injecto Y$ if the circuit $C:X\to Y$ is injective, i.e., for $x_1,x_2\in X$, $x_1\ne x_2$, $C(x_1)\ne C(x_2)$. We use $C:X\surjecto Y$ to denote that $C$ is onto, i.e., for all $y\in Y$, there exists an $x\in X$ such that $C(x)=y$. 

\begin{definition}[in $\PV_1$]\label{def: size comparison}
Let $X,Y\subseteq 2^n$ be definable sets, and $\eps \le 1$. We say that $X$ is $\eps$-approximately smaller than $Y$, denoted by $X\lesssim_\eps Y$, if there exists a circuit $G$ and $v\ne 0$ such that 
\[ 
G:v\times (Y\cupdot \eps 2^n)\surjecto v\times X.
\] 
\end{definition}

\begin{definition}[in $\PV_1$]\label{def: equal size}
We say that $X$ and $Y$ are $\eps$-approximately of equal size, denoted by $X\approx_\eps Y$, if $X\lesssim_\eps Y$ and $Y\lesssim_\eps X$. In particular, we say that $X$ is $\eps$-approximately of size $s$ if $X\approx_\eps s$. 
\end{definition}

\begin{lemma}[{\cite[Lemma 2.10]{Jerabek07}}]\label{lmm: basic counting}
Let $X,Y,X',Y',Z\subseteq 2^n$ and $W,W'\subseteq 2^n$ be bounded definable sets, and $\eps,\delta\le 1$. The following statements are provable in $\PV_1$.
\begin{compactenum}[(1)]
\item \label{item: mono} If $X\lesssim_\eps Y, \eps\le \delta$, then $X\lesssim_\delta Y$. 
\item If $X\lesssim_0 Y$, then $X\lsim_\eps Y$. 
\item \label{item: trans} If $X\lsim_\eps Y$, $Y\lsim_\delta Z$, then $X\lsim_{\delta+\eps} Z$. 
\item \label{item: union} If $X\lsim_\eps X'$, $Y\lsim_\delta Y'$, and $X',Y'$ are separable by the set $W$ (i.e., $X'\subseteq W$ and $Y'\subseteq 2^n\setminus W$), then $X\cup Y\lsim_{\eps+\delta} X'\cup Y'$.
\item \label{item: product} If $X\lsim_\eps X',W\lsim_\delta W'$, then $X\times W\lsim_{\eps+\delta+\eps\delta} X'\times W'$. 
\end{compactenum}
\end{lemma}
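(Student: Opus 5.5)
This is Jeřábek's Lemma 2.10, and each item reduces to composing or combining the surjections witnessing $\lesssim$. Throughout I unfold Definition~\ref{def: size comparison}: $X\lesssim_\eps Y$ means there are a circuit $G$ and $v\neq 0$ with $G:v\times(Y\cupdot \eps 2^n)\surjecto v\times X$. Each item is proved in $\PV_1$ by building a new circuit from the given ones and checking surjectivity. That check is a quantifier-free statement of the form ``for every $y$ in the target, the explicitly computed preimage maps to $y$.'' It holds whenever the new circuit comes with an explicit (non-uniform, circuit-computed) right-inverse built from right-inverses of the old circuits. So I will not carry right-inverses as data. Instead I argue existentially: given $y$ in the target, I obtain a preimage from the hypotheses by unpacking coordinates, which is first-order reasoning available in $\PV_1$.

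For (1), I pad the domain. The map $v\times(Y\cupdot\delta 2^n)\to v\times(Y\cupdot \eps 2^n)$ sends points of $\delta 2^n$ below $\eps 2^n$ to themselves and the rest to $0$ (or any fixed element). It is onto and is computed by a simple circuit, so composing with $G$ gives the claim. For (2), I apply (1) with $0\le\eps$. For (3), given $G_1:v_1\times(Y\cupdot\eps 2^n)\surjecto v_1\times X$ and $G_2:v_2\times(Z\cupdot\delta 2^n)\surjecto v_2\times Y$, I take $v=v_1v_2$. I then write $v\times(Z\cupdot(\delta+\eps)2^n)$ as $v_1\times v_2\times Z$ together with padding, and define the combined map as follows. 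On $Z$-points, apply $G_2$ in the $v_2$-coordinate to reach $v_1\times v_2\times Y$, then $G_1$ in the $v_1$-coordinate. The $\delta 2^n$ padding is handled by $G_2$, and the $\eps 2^n$ padding is passed straight to the $\eps 2^n$ part of $G_1$'s domain. Surjectivity follows by chasing a target point back through $G_1$ and then $G_2$.

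For (4), I first bring both witnesses to a common $v=v_1v_2$ by tensoring with the identity on the extra coordinate. I then split the domain $v\times((X'\cup Y')\cupdot(\eps+\delta)2^n)$ by the separating set $W$, which is decidable by a circuit. Points in $W$ and the first $\eps 2^n$ of padding go through the $X$-map. Points outside $W$ and the remaining $\delta 2^n$ go through the $Y$-map. Surjectivity onto $v\times(X\cup Y)$ follows, since every target point lies in $X$ or in $Y$. For (5), I take the product of the two surjections, which maps $v_1v_2\times(X'\cupdot\eps 2^n)\times(W'\cupdot\delta 2^n)$ onto $v_1v_2\times X\times W$. I then expand the domain as $X'\times W'$ together with three padding pieces of sizes at most $\eps 2^n\cdot 2^n$, $2^n\cdot\delta 2^n$ and $\eps\delta 2^{2n}$, relative to the product universe $2^{2n}$. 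Reindexing these with an explicit bijection, and using (1) to absorb rounding, gives error $\eps+\delta+\eps\delta$.

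I expect the main obstacle to be bookkeeping rather than ideas. The padding sets $\eps 2^n$ need $\eps 2^n$ to be an integer, or else a rounding convention. The arithmetic encodings of $\times$ and $\cupdot$ must also be handled carefully enough that the reindexing bijections are provably correct in $\PV_1$. I would fix the encodings from the definitions given just before Definition~\ref{def: size comparison}, and then prove a small reindexing lemma once and reuse it across all five items.
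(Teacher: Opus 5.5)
Your proposal is correct and follows the standard argument from Je\v{r}\'abek's paper, which this paper cites without reproving. Each item comes from composing the witnessing surjections, tensoring them with identities, routing through the separating circuit for $W$, or taking their product, and ontoness is verified by chasing preimages, which is ordinary first-order reasoning in $\PV_1$. On the bookkeeping point you raised, read the padding set $\eps 2^n$ as having $\lfloor \eps 2^n\rfloor$ elements: then $\lfloor a\rfloor+\lfloor b\rfloor\le\lfloor a+b\rfloor$ makes every split of padding in (3)--(5) work, and (1) absorbs the leftover slack. With the ceiling reading, item (3) is actually false (take $\eps 2^n=\delta 2^n=1/2$, $|Z|=0$, $|Y|=1$, $|X|=2$), and (1) cannot repair rounding in that direction, since it only increases the error.
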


\begin{lemma}[{\cite[Lemma 2.11]{Jerabek07}}]\label{lmm: counting}
Let $X,Y\subseteq 2^n$ be bounded definable sets, $s,t,u\le 2^n$, $\eps,\delta,\eta,\xi\le 1$, and $\xi^{-1}\in\Log$. The following statements are provable in $\APC_1$.  
\begin{compactenum}[(1)]
\item \label{item: size exists} There exists $s\le 2^n$ such that $X\approx_\xi s$.
\item \label{item: size cons} $s\lsim_\eps X\lsim_\delta t$ implies $s\le t+(\eps+\delta+\xi)\cdot 2^n$. 
\item \label{item: dichotomy} $X\lsim_\xi Y$ or $Y\lsim_\xi X$. 
\item \label{item: negation} $X\lsim_\eps Y$ implies $2^n\setminus Y\lsim_{\eps+\xi} 2^n\setminus X$. 
\item \label{item: inclusion exclusion} $X\approx_\eps s$, $Y\approx_\delta t$, $X\cap Y\approx_\eta u$ imply $X\cup Y\approx_{\eps+\delta+\eta+\xi} s + t - u$. 
\end{compactenum}
\end{lemma}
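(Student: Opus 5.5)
We will work in $\HARDA$, which is conservative over $\APC_1$ by \cite[Theorem 2.13]{Jerabek07}. This gives us a $2^{-\eps n}$-hard truth table $\alpha$, and hence the Nisan--Wigderson generator $\NW$ built from it. All five items are $\PV(\alpha)$-statements about surjections, so they transfer back to $\APC_1$. The backbone is item (\ref{item: size exists}). Every other item is reduced to it, together with \Cref{lmm: basic counting} and a single use of $\dWPHP$ of the form \emph{``there is no surjection from a set of approximate size $s$ onto one of approximate size $s+\gamma 2^n$''}.

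\textbf{Item (\ref{item: size exists}).} Let $X=\{x<2^n\mid C(x)=1\}$. Choose $\NW$ with seed length $O(\log(|C|/\xi))$, so that seeds can be enumerated in $\PV$. Put $s\eqdef 2^n\cdot\Pr_{r}[C(\NW(r))=1]$, rounded. Formalized pseudorandomness of $\NW$ against $C$, in $\PV_1$ from the hardness assumption, shows that $s$ is close to the true density. The difficulty is converting this density statement into the explicit surjections $G:v\times(s\cupdot\xi 2^n)\surjecto v\times X$ and, symmetrically, $v\times(X\cupdot \xi 2^n)\surjecto v\times s$ that \Cref{def: size comparison} demands. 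To do so I would follow Je\v r\'abek's approach. First, use $\dWPHP$ (applied to a hashing/encoding map) to obtain a family of $v$ pairwise-independent hash functions. Then combine it with the $\NW$-estimate of the relevant collision probabilities to build $G$ blockwise. Pseudorandomness against the circuit ``$h$ fails to be onto on block $i$'' guarantees that all but a $\xi$-fraction of targets are covered, and the padding $\xi 2^n$ absorbs the rest. I expect this step to be the main obstacle: it is the only place where genuine counting power, namely hardness plus $\dWPHP$, is converted into explicit surjections, and the error bookkeeping is delicate.

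\textbf{Item (\ref{item: size cons}).} Compose the two witnesses using transitivity (\Cref{lmm: basic counting}(\ref{item: trans})). This yields $s\lsim_{\eps+\delta} t$, i.e.\ a surjection $v\times(t\cupdot(\eps+\delta)2^n)\surjecto v\times s$. If $s>t+(\eps+\delta+\xi)2^n$, the domain would be smaller than the range by $v\xi2^n$ elements. Padding the domain then produces a $\PV$ map violating $\dWPHP$, which $\APC_1$ refutes.

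\textbf{Items (\ref{item: dichotomy})--(\ref{item: inclusion exclusion}).} These follow by arithmetic on sizes.
\begin{itemize}
\item[(\ref{item: dichotomy})] Take sizes $X\approx_{\xi/3}s$ and $Y\approx_{\xi/3}t$ from (\ref{item: size exists}). Integers are comparable, so say $s\le t$. Then $s\lsim_0 t$, and transitivity gives $X\lsim_\xi Y$.
\item[(\ref{item: negation})] Take $X\approx s$ and $Y\approx t$. Explicit surjections between $2^n\setminus s$ and the complements are obtained from those for $X$ and $Y$ by the union rule \Cref{lmm: basic counting}(\ref{item: union}), using that $X$ and $2^n\setminus X$ are separable and $X\cup(2^n\setminus X)=2^n$. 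Item (\ref{item: size cons}) gives $s\le t+(\eps+\xi)2^n$, hence $2^n\setminus Y\approx 2^n-t\le 2^n-s+(\eps+\xi)2^n\approx 2^n\setminus X$.
\item[(\ref{item: inclusion exclusion})] Write $X\cup Y=(X\setminus Y)\cupdot Y$ and $X=(X\setminus Y)\cupdot(X\cap Y)$. Apply (\ref{item: size exists}) to $X\setminus Y$. Use the union rule and (\ref{item: size cons}) to identify its size as $s-u$ up to $\eps+\eta+\xi$. Adding $t$ via the union rule gives the claim.
\end{itemize}
Throughout, the slack $\xi$ is split into constantly many pieces, and each piece is used in one application of item (\ref{item: size exists}).
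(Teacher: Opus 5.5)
\paragraph{Gap in item (1).} Item (1) is the backbone, as you say, but you do not actually prove it, and the route you sketch would not produce the required witnesses.

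\begin{itemize}
\item Pairwise-independent families are explicit and need no $\dWPHP$.
\item The predicate ``$h$ fails to be onto on block $i$'' is not decided by any small circuit, so $\NW$ cannot be applied to it.
\item Even if all but a $\xi$-fraction of targets were covered, ``fraction'' has no meaning in $\PV_1$ except through the very surjections you are trying to build.
\item The padding $\xi 2^n$ can absorb the uncovered targets only if you exhibit a circuit mapping the padding onto exactly those targets, and you cannot enumerate them.
\end{itemize}

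Hashing naturally gives maps \emph{out of} $X$ into a small set. That is the wrong direction for $\lsim$, whose witnesses are surjections \emph{onto} $v\times X$. For the same reason, the sentence ``pseudorandomness of $\NW$ shows that $s$ is close to the true density'' has no content in $\PV_1$: the formalized $\NW$ theorem \emph{is} the surjection statement.

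\paragraph{The missing idea.} The hybrid argument behind $\NW$ yields the surjections directly, with no counting principle.

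\begin{itemize}
\item Take $f$ with $\HardA_{1/4}(f)$ on $\ell=O(\log(|C|/\xi))$ variables, so $2^\ell$ is polynomial. Let $S=\{u<2^k : C(\NW_f(u))=1\}$, and let $H_i\subseteq 2^k\times 2^n$ be the $i$-th hybrid. Then $H_0=2^k\times X$ and $H_n=S\times 2^n$.
\item Fix all seed bits outside the $(i+1)$-st design set, together with all random bits other than $w_{i+1}$. This splits $2^k\times 2^n$ into slices of size $2^{\ell+1}$.
\item On each slice, $|H_i|$ and $|H_{i+1}|$ can be counted exactly. Apply $\HardA_{1/4}(f)$ to the usual next-bit predictor with the fixed bits hardwired; its agreement with $f$ is checked by brute force. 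This bounds the difference of the two counts by a $\xi/n$-fraction of the slice.
\item Ranking inside each slice gives a surjection, with padding indexed by (slice, offset). These glue into one circuit with an explicit preimage map.
\item Composing the $n$ steps by transitivity gives $2^k\times X\approx_\xi 2^n\times S$. This is $X\approx_\xi s$ with $s=2^{n-k}|S|$ and $v=2^k$.
\end{itemize}

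This is \cite[Theorem 2.7]{Jerabek07}, which needs only $\PV_1$ plus the hardness hypothesis (cf.\ \Cref{lmm: size function}). $\dWPHP$ is used only to obtain the hard truth table, which $\APC_1$ can do directly, so the detour through $\HARDA$ is unnecessary.

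\paragraph{Remaining items.} Your reductions of items (2), (3) and (5) to item (1), \Cref{lmm: basic counting} and $\dWPHP$ are fine. Two repairs are needed:
\begin{itemize}
\item In (2), the surjection beats the domain only by a factor $1+\Theta(\xi)$. You must first amplify, e.g.\ by $O(\xi^{-1})$-fold products, before $\dWPHP$ applies; this is exactly where $\xi^{-1}\in\Log$ is used.
\item In (4), the union rule only combines approximations of separable pieces. It cannot turn $X\approx s$ into $2^n\setminus X\approx 2^n-s$, because that is a cancellation. Instead, apply (1) to $2^n\setminus X$ and $2^n\setminus Y$ to get sizes $s',t'$. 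Use the union rule to get $2^n\approx s+s'$ and $2^n\approx t+t'$. Then use (2) to pin down $s'\approx 2^n-s$ and $t'\approx 2^n-t$ numerically before chaining, splitting $\xi$ into constantly many pieces.
\end{itemize}
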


\begin{lemma}[{Implicit in \cite[Lemma 2.14]{Jerabek07}}]\label{lmm: size function}
Let $\eps<1/3$. There is a $\PV(\alpha)$ function $\Size(\cdot,\cdot)$ such that the following sentence is provable in $\HARDA$: For every $n,\delta^{-1}\in\Log$ and set $X\subseteq 2^n$ defined by a circuit $C:\zo^n\to\zo$, the following holds: 
\begin{compactitem}
\item $0\le \Size(C,1^{\delta^{-1}})\le 2^n$;
\item $X\approx_\delta\Size(C,1^{\delta^{-1}})$;
\item If $C$ is a constant circuit that always outputs $0$ \emph{(}resp.~$1$\emph{)}, then $\Size(C,\Delta)=0$ \emph{(}resp.~$\Size(C,\Delta)=2^n$\emph{)}. 
\end{compactitem}
\end{lemma}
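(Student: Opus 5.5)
We plan to construct $\Size$ by the Nisan--Wigderson approach that underlies Je\v r\'abek's approximate counting in $\HARDA$. Given $n,\delta^{-1}\in\Log$ and a circuit $C:\zo^n\to\zo$, first pick a length $\ell\in\Log\Log$ with $2^{\eps\ell}$ larger than a suitable polynomial in $|C|$ and $\delta^{-1}$. Then call $\alpha$ on an input $x$ with $\|x\|=\ell$ to obtain a truth table $f$ that $\HardA_{1/4}(\alpha)$ certifies to be average-case hard. Next, build the NW generator $\NW_f:\zo^{d}\to\zo^n$ from a combinatorial design whose sets have size $\ell$, with seed length $d=O(\ell^2/\log n)$. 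We need $2^d$ to be polynomial, i.e.\ $d\in O(\log)$, so that the seeds can be enumerated in $\PV(\alpha)$. The output is
\[
\Size(C,1^{\delta^{-1}})\eqdef 2^{n-d}\cdot\bigl|\{r\in\zo^d : C(\NW_f(r))=1\}\bigr|,
\]
except that constant circuits are special-cased syntactically: they output $0$ or $2^n$ by direct evaluation. This special case settles the third bullet by definition, and the first bullet is immediate from the form of the expression.

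For the main bullet $X\approx_\delta\Size(C,1^{\delta^{-1}})$ we plan to follow the route of \cite[Lemma 2.14]{Jerabek07}. The first step is to use \Cref{lmm: counting}(\ref{item: size exists}) in $\APC_1$ (and hence in $\HARDA$) to obtain $s$ with $X\approx_\xi s$ for a small $\xi$. The second step is to compare $s$ with the empirical count $s'\eqdef\Size(C,\cdot)$. Suppose $|s-s'|>(\delta/2)2^n$. Then $C$ distinguishes $\NW_f(\mathcal U_d)$ from uniform, and the hybrid argument turns this into a next-bit predictor; the predictor yields a circuit of size $\le 2^{\eps\ell}$ that computes $f$ with advantage exceeding $2^{-\eps\ell}$, contradicting $\HardA$.

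Both steps need counting. Each hybrid compares sizes of definable subsets of $2^n$, and these comparisons are handled by \Cref{lmm: basic counting} together with \Cref{lmm: counting}(\ref{item: dichotomy},\ref{item: inclusion exclusion}). The averaging step that fixes the bits outside the design set, the one remaining piece, uses \Cref{lmm: counting}(\ref{item: size cons}) and an explicit surjection argument. Finally, the comparison $|s-s'|\le(\delta/2)2^n$ has to be converted into surjections witnessing $X\lsim_\delta s'$ and $s'\lsim_\delta X$; for this we apply the transitivity in \Cref{lmm: basic counting}(\ref{item: trans}) to $X\approx_\xi s$ and $s\approx_{\delta/2} s'$, where the second relation is trivial between intervals.

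The main obstacle is the formalization of the NW/Yao reduction inside $\HARDA$. The issue is the averaging step: we must find a fixing of the non-design seed bits that preserves the distinguishing advantage, and this is reasoning about approximate sizes of exponentially large sets. We expect to cite \Cref{lmm: counting} for the required approximate-size estimates rather than re-derive them, since this is exactly the content packaged in \cite[Section 2]{Jerabek07}, and to restrict $\eps<1/3$ so that the design parameters fit. Once the reduction is in place, the remaining parameter bookkeeping should be routine.
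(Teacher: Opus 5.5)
Your definition of $\Size$ matches the paper's: a Nisan--Wigderson generator built from the truth table $\alpha(x)$, a seed space of polynomial size, and output $2^{n-d}$ times the number of accepted seeds. The first and third bullets are also fine. Special-casing constant circuits is harmless; the paper instead notes that the seed count is automatically $0$ or $2^d$ when $C$ is constant.

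The gap is in the second bullet, which is the whole content of the lemma. You propose to reprove NW correctness inside $\HARDA$: compare $s'$ with an abstract size $s$ from \Cref{lmm: counting}(\ref{item: size exists}), and derive a contradiction from $|s-s'|>(\delta/2)2^n$.
\begin{itemize}
\item To extract a predictor you must first locate a hybrid index with a large gap. That needs approximate sizes of all $n+1$ hybrid sets at once. But \Cref{lmm: counting}(\ref{item: size exists}) is a separate $\forall\exists$ statement for each set, and nothing in \Cref{lmm: basic counting} or \Cref{lmm: counting} lets you assemble these witnesses into one sequence. That would need a collection principle, or a uniform size function, which is exactly what you are trying to build.
\item Those lemmas contain no averaging principle either, so citing them does not cover the step you call the main obstacle.
\item \Cref{lmm: counting}(\ref{item: size exists}) is itself obtained in $\APC_1$ by applying Je\v r\'abek's NW-based counting to a hard function supplied by $\dWPHP$. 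Your detour therefore uses the NW counting theorem to reprove the NW counting theorem.
\end{itemize}

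The missing observation, and essentially the paper's whole argument, is that \cite[Theorem 2.7]{Jerabek07} is uniform in the hard truth table: that table is its only non-uniform ingredient. Substituting $\alpha(x)$, whose hardness is exactly what $\HardA_{1/4}(\alpha)$ asserts, gives $X\approx_\delta\Size(C,1^{\delta^{-1}})$ directly, with no second counter to compare against.

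If you do want to reprove it, run the reduction pointwise rather than globally.
\begin{itemize}
\item Fix an index $i$, the non-design seed bits, and the uniform suffix. The fibers of consecutive hybrids then live in a space of size $2^{\ell+1}$ (times irrelevant coordinates) and can be counted exactly.
\item $\HardA$ bounds the advantage of every such fixed predictor.
\item This gives, uniformly in $i$, explicit fiberwise surjections between consecutive hybrids. Composing them yields the two surjections witnessing $X\approx_\delta s'$.
\end{itemize}

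One parameter fix is also needed. With intersection bound $\log n$, $d=O(\ell^2/\log n)$ is not $O(\log)$ when $n$ is small relative to $|C|/\delta$. Use designs with intersections $\gamma\ell$, giving $d=O(\ell)$. Also count exactly when $n\le d$, since otherwise $2^{n-d}$ is not an integer.
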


\begin{proof}[Proof Sketch] We assume some familiarity with \cite{Jerabek07}. 
The first two bullets of the lemma hold for a $\PV(\alpha)$ function symbol  as the only non-uniformity in \cite[Theorem 2.7]{Jerabek07} is the choice of the hard function, which is given by $\alpha$. The last bullet holds as $\Size(C,\Delta)$ is obtained by computing the acceptance probability of $C$ on a pseudorandom distribution produced via the Nisan-Wigderson PRG, and for a circuit that always accepts (resp.~rejects), its acceptance probability on any distribution is always $1$ (resp.~$0$). 

We refer interested readers to \cite[Appendix D]{ILW23} for a self-contained presentation of the proof of \cite[Theorem 2.7]{Jerabek07}.  
\end{proof}

\UBAPX* 

\begin{proof}
    Let $\Size(\cdot,\cdot)$ be the function in \Cref{lmm: size function}. We define $\NW(C,\Delta)\eqdef \Size(C,1^{|\Delta|}) / 2^n$, where $C:\zo^n\to\zo$ is a circuit. Note that we can encode the rational number $\Size(C,1^{|\Delta|}) / 2^n$ \emph{precisely} using $n$ binary digits so that there is no rounding issue; recall that in the definition of $\APX$, the output length of $\P$ could be as large as $|C|\cdot |\Delta|^2 > n$ (see \Cref{sec: def: notation}). 
    
    It suffices to verify that the axioms are provable in $\HARDA$. In the rest of the proof, we argue in $\HARDA$.  
    \begin{compactitem}
        \item (\refaxiom{basic}). It follows immediately from the first bullet of \Cref{lmm: size function}. 
        \item (\refaxiom{boundary}). It follows immediately from the third bullet of \Cref{lmm: size function}.
        \item (\refaxiom{local}). Let $C:\zo^n\to\zo$ be a circuit, $\Delta,B$ be strings with $\delta=1/|\Delta|$, $\beta=1/|B|$, $C_0\eqdef\Fix_0(C),C_1\eqdef \Fix_1(C)$. We need to prove that 
        \begin{equation}
        \left|\NW(C,\Delta) - \frac{\NW(C_0,\Delta)+\NW(C_1,\Delta)}{2}\right| \le \delta + \beta. \label{equ: target NW local consistent}
        \end{equation}
        Let $X,X_0,X_1\subseteq\zo^n$ be the bounded sets defined by $C,C_0,C_1$, respectively. Let $s\eqdef \Size(C,1^{|\Delta|})$, $s_0=\Size(C_0,1^{|\Delta|})$, $s_1\eqdef \Size(C_1,1^{|\Delta|})$. By the second bullet of \Cref{lmm: size function}, we know that 
        \[
        X\approx_\delta s,\quad X_0\approx_\delta s_0, \quad X_1\approx_\delta s_2.
        \]
        Note that as $X_0\cap X_1=\varnothing$, we have that $X_0\cap X_1\approx_0 0$. By \Cref{lmm: counting} \eqref{item: inclusion exclusion}, $X_0\cup X_1\approx_{2\delta} s_0 + s_1$. 

        Let $f:v_1 (s +\delta\cdot 2^n)\surjecto v_1\times X$ be the witness of $X\lsim_\delta s$, and $g:v_2\times ((X_0\cup X_1)\cupdot 2\delta \cdot 2^{n-1})\surjecto v_2(s_0+s_1)$ be the witness of $s_0+s_1\lsim_{2\delta} X_0\cup X_1$. We define a function $h$  
        \begin{align*}
        & h:v_1v_2(s+2\delta\cdot 2^n)\surjecto v_1v_2(s_0+s_1) 
        \end{align*}
        as follows: 
        \begin{compactenum}
        \item[(\emph{i})] Let $i_1<v_1,i_2<v_2,j<s+\delta\cdot 2^n$, the tuple $(i_1,i_2,j)\in v_1v_2 (s+2\delta\cdot 2^n)$. We compute $(i'_1,x)\eqdef f(i_1,j)\in v_1\times X$, where $x_{\le n-1}\in X_0\cup X_1$. The algorithm then computes $(i_2',j')\eqdef g(i_2,x_{\le n-1})\in v_2(s_0+s_1)$, and outputs $h(i_1,i_2,j)\eqdef (i_1',i_2',j')$. 
        \item[(\emph{ii})] Let $i_1<v_1,i_2<v_2,s+\delta\cdot 2^n\le j<s+2\delta\cdot 2^n$, the tuple $(i_1,i_2,j)\in v_1v_2 (s+2\delta\cdot 2^n)$. Note that $j-(s+\delta\cdot 2^n)\in \delta\cdot 2^n=2\delta \cdot 2^{n-1}$. The algorithm then computes $(i_2',j')\eqdef g(i_2, j-(s+\delta\cdot 2^n))\in v_2 (s_0+s_1)$, and outputs $h(i_1,i_2,j)\eqdef (i_1,i_2',j')$. 
        \end{compactenum}
        
        It can be verified that the function is indeed onto,  and thus by definition, $s_0+s_1\lsim_{2\delta} s$. Similarly, we can prove that $s\lsim_{2\delta} s_0+s_1$. Then we have 
        \[
        s \lsim_0 s \lsim_{2\delta } s_0+s_1\quad \text{and}\quad s_0+s_1\lsim_{0}s_0+s_1\lsim_{2\delta} s,
        \]
        by \Cref{lmm: counting} \eqref{item: size cons}, we have that $s\in (s_0+s_1)\pm (2\delta + \beta)\cdot 2^n$.
        This immediately implies \Cref{equ: target NW local consistent} as $\NW(C,\Delta)\eqdef \Size(C,1^{|\Delta|})/2^n$.  
        \item (\refaxiom{precision}). Let $C:\zo^n\to\zo$ be a circuit, $\Delta_1,\Delta_2,B$ be strings and $\delta_1\eqdef 1/|\Delta_1|,\delta_2\eqdef 1/|\Delta_2|$, $\beta\eqdef 1/|B|$. We need to prove that 
        \begin{equation}
        |\NW(C,\Delta_1)-\NW(C,\Delta_2)|\le \delta_1+\delta_2+\beta. \label{equ: target NW precision}
        \end{equation}
        Let $X\subseteq\zo^n$ be the bounded set defined by $C$, and let $s_i\eqdef \Size(C,1^{|\Delta_i|})$, $i\in\{1,2\}$. By the second bullet of \Cref{lmm: size function}, 
        \[
        X\approx_{\delta_1} s_1,\quad X\approx_{\delta_2} s_2.  
        \]
        Therefore, we have $s_1\lsim_{\delta_1} X\lsim_{\delta_2} s_2$, and by \Cref{lmm: counting} \eqref{item: size cons}, $s_1\le s_2+(\delta_1+\delta_2+\beta)\cdot 2^n$. Similarly, we can prove that $s_2\le s_1+(\delta_1+\delta_2+\beta)\cdot 2^n$. This immediately implies \Cref{equ: target NW local consistent} as $\NW(C,\Delta)\eqdef \Size(C,1^{|\Delta|})/2^n$. 
    \end{compactitem}
    This completes the proof. 
\end{proof}

\subsection{A KPT Witnessing Theorem for \texorpdfstring{$\APX_1$}{APX1}\texorpdfstring{: Proof of \Cref{thm: KPT APX1}}{}}
\label{apd: KPT APX}

Recall the definition of the KPT witnessing property with circuits: 

\KPTdef*

\KPTforAPX*

To prove \Cref{thm: KPT APX1}, we will need the standard KPT witnessing theorem for universal first-order theories \cite{KrajicekPT91}; interested readers are referred to \cite[Theorem 3.2]{Oliveira25} for detailed discussions. 

\begin{theorem}[KPT witnessing theorem]\label{thm: KPT general}
Let $\calT$ be a universal theory. Let $\varphi(\vec x,y,z)$ be a quantifier-free formula in the language of $\calT$ such that $\calT\vdash \forall\vec x~\exists y~\forall z~\varphi(\vec x,y,z)$. Then there is a constant $k\in\bbN$ and terms $t_1,t_2,\dots,t_k$ (in the language of $\calT$) such that the following statement is provable in $\calT$: 

For every $\vec x$ and every $z_1,z_2,\dots,z_k$, either $\varphi(\vec x,t_1(\vec x),z_1)$, or $\varphi(\vec x,t_2(\vec x,z_1), z_2)$, or $\varphi(\vec x,t_3(\vec x,z_1,z_2), z_3)$, $\dots$, or $\varphi(\vec x,t_k(\vec x,z_1,\dots,z_{k-1}),z_k)$.   
\end{theorem}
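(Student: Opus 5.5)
The plan is to reduce to Herbrand's theorem by Herbrandizing the innermost universal quantifier. Then I will remove the fresh Skolem-type symbol semantically: each occurrence of it gets replaced by a fresh variable, chosen in order of term nesting.

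\textbf{Step 1 (Herbrandization).} Let $f$ be a fresh binary function symbol, and let $\calT'$ be $\calT$ viewed in the language extended by $f$. Since $\calT$ has no axioms mentioning $f$, $\calT'$ is still universal. From $\calT\vdash\forall\vec x~\exists y~\forall z~\varphi(\vec x,y,z)$ we get, by instantiating $z\eqdef f(\vec x,y)$,
\[
\calT'\vdash\forall\vec x~\exists y~\varphi(\vec x,y,f(\vec x,y)).
\]
Herbrand's theorem, as stated earlier for universal theories, then gives terms $s_1,\dots,s_c$ of the extended language such that
\[
\calT'\vdash\forall\vec x~\bigvee_{i\le c}\varphi\bigl(\vec x,s_i(\vec x),f(\vec x,s_i(\vec x))\bigr).
\]

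\textbf{Step 2 (listing the $f$-subterms).} Collect every subterm of the form $f(\vec x,u)$ occurring in any of the terms $f(\vec x,s_i(\vec x))$, including these terms themselves. Order them as $f(\vec x,u_1),\dots,f(\vec x,u_k)$ so that a subterm is listed before every term containing it; this is possible because containment of subterms is a well-founded order.

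Now define $\calT$-language terms $t_j(\vec x,z_1,\dots,z_{j-1})$: $t_j$ is obtained from $u_j$ by replacing each maximal occurrence of some $f(\vec x,u_l)$ by the variable $z_l$. Every such $l$ satisfies $l<j$, because $f(\vec x,u_l)$ is a proper subterm of $u_j$ and hence of $f(\vec x,u_j)$. So $t_j$ contains no $f$ and only uses $z_1,\dots,z_{j-1}$. These $k$ and $t_1,\dots,t_k$ are the witnesses for the KPT disjunction.

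\textbf{Step 3 (semantic verification).} By completeness, it suffices to show that the KPT disjunction holds in every model $M\models\calT$ for all $\vec x,z_1,\dots,z_k\in M$. Fix such $M,\vec x,\vec z$, and write $a_j\in M$ for the value of $t_j(\vec x,z_1,\dots,z_{j-1})$.

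Expand $M$ to the language with $f$ as follows. Set $f(\vec x,a)\eqdef z_{j^*}$, where $j^*$ is the \emph{least} $j$ with $a_j=a$. On all other arguments, define $f$ arbitrarily. The expansion is still a model of $\calT'$, since $\calT$ does not mention $f$.

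By induction along the ordering of Step 2, the value of $u_j$ in the expanded model equals $a_j$. Consequently the value of $f(\vec x,u_j)$ is $z_{j^*}$, where $j^*\le j$ and $a_{j^*}=a_j$. Here the inductive step uses that every $f$-subterm of $u_j$ has already been evaluated to the corresponding $z_{l^*}$, and that $t_{l}$ and $t_{l^*}$ take the same value $a_l=a_{l^*}$.

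Now apply the Herbrand disjunction in this expanded model: some $i$ satisfies $\varphi(\vec x,s_i,f(\vec x,s_i))$. The term $f(\vec x,s_i)$ is $f(\vec x,u_j)$ for some $j$, so this says $\varphi(\vec x,a_j,z_{j^*})$ holds. Since $a_{j^*}=a_j$, this is exactly the $j^*$-th disjunct $\varphi(\vec x,t_{j^*}(\vec x,z_1,\dots,z_{j^*-1}),z_{j^*})$.

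\textbf{Main obstacle.} The delicate point is Step 3's consistency issue. Distinct listed subterms may evaluate to the same argument $a$, while $f$ must be a function. Choosing the least index $j^*$ resolves this. It must then be checked that the recursive evaluation of the $u_j$ really reproduces the values $a_j$ computed from the $z$'s. This is the induction over the subterm ordering, and I expect it to require the most care. Everything else is routine.
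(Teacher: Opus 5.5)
The gap is in Step~3, in the claim that under your ``least index'' definition of $f$ each $u_j$ evaluates to $a_j$. Suppose $f(\vec x,u_l)$ occurs inside $u_j$ and $a_l=a_{l^*}$ for some $l^*<l$. Then your $f$ returns $z_{l^*}$ at that position, while $t_j$ was built with $z_l$ there. The equality $a_l=a_{l^*}$ that you invoke says nothing about whether $z_l=z_{l^*}$, and the $z$'s are arbitrary.

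This is not a presentational slip: the terms of Step~2 can fail to be KPT witnesses. Take a unary $g$, a binary $h$ and a unary predicate $D$. Let $\calT$ consist of $\forall x\,(D(g(x))\lor g(x)=x)$ and $\forall w\,D(h(w,w))$, and let $\varphi(x,y,z)\equiv D(y)\lor z=x$.

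The terms $s_1=x$, $s_2=g(x)$, $s_3=h(f(x,x),f(x,g(x)))$ give a Herbrand disjunction provable in $\calT'$. If $g(x)\neq x$, the second disjunct holds. If $g(x)=x$, then $s_3=h(f(x,x),f(x,x))$ satisfies $D$. So nothing prevents Herbrand's theorem from handing you these terms, and nothing in your argument depends on which admissible terms you get.

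Step~2 then yields $t_1=x$, $t_2=g(x)$, $t_3=h(z_1,z_2)$; the other admissible ordering fails symmetrically. Consider the model on $\{0,1,2,3\}$ with $g=\mathrm{id}$, $D=\{3\}$, $h(1,2)=0$ and $h=3$ elsewhere; it satisfies $\calT$. The values $x=0$, $z_1=1$, $z_2=2$, $z_3=1$ falsify all three disjuncts $D(x)\lor z_1=x$, $D(g(x))\lor z_2=x$ and $D(h(z_1,z_2))\lor z_3=x$.

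Here $a_1=a_2=a_3=0$, so your $f$ sets $f(0,0)=z_1$. Then $u_3$ evaluates to $h(1,1)=3\neq a_3$. The Herbrand disjunct that actually holds has $y=h(z_1,z_1)$, a term you never listed.

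Two repairs are available.

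\textbf{Repair within Herbrand.} Keep Herbrand's theorem but list more terms. For each $u_j$ in order, include every term obtained by replacing each maximal $f$-subterm $f(\vec x,u_l)$ of $u_j$ by the variable of \emph{any} previously listed term that arose from some $u_{l'}$ with $l'\le l$. Each listed term gets its own fresh $z$. Then define $f$ greedily. When the value $b_j$ of $u_j$ is new, set $f(\vec x,b_j)$ to the variable of the listed term that currently computes $b_j$; otherwise reuse the earlier assignment. An induction shows that each $b_j$ is the value of a listed term $\tau$ with $f(\vec x,b_j)=z_\tau$. Hence the Herbrand disjunct that holds is a KPT disjunct.

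\textbf{Repair by compactness.} Avoid Herbrand altogether. Enumerate terms $t_1,t_2,\dots$, with $t_j$ in $\vec x,z_1,\dots,z_{j-1}$, so that every term in $\vec x,z_1,z_2,\dots$ occurs. Suppose $\calT\cup\{\neg\varphi(\vec c,t_j(\vec c,d_1,\dots,d_{j-1}),d_j):j\ge 1\}$ had a model. The substructure generated by $\vec c,\vec d$ would satisfy the universal theory $\calT$ together with $\forall y\,\exists z\,\neg\varphi(\vec c,y,z)$, which is impossible. So by compactness some finite subset is inconsistent, and that is the KPT disjunction.

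\textbf{A second, smaller gap in Step~2.} The Herbrand terms may contain $f$ with first arguments other than $\vec x$, e.g.\ $f(g(\vec x),u)$. In that case your $t_j$ still contain $f$. Replacing $\vec x$ by fresh constants and Herbrandizing with a unary $f(y)$ removes this issue.
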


\begin{proof}[Proof of \Cref{thm: KPT APX1}]
    Recall that $\APX_1$ admits a universal axiomatization (see \Cref{prop: apx1 universal}). By \Cref{thm: KPT general} with $\calT\eqdef \APX_1$, if $\APX_1\vdash \forall \vec x~\exists y~\forall z~\varphi(\vec x,y,z)$, there is a constant $k\in\bbN$ and terms $t_1,\dots,t_k$ such that $\APX_1$ proves the following sentence: 
    \begin{compactitem}
    \item[] ($\Phi$): For every $\vec x$ and $z_1,z_2,\dots,z_k$, either $\varphi(\vec x,t_1(\vec x),z_1)$, or $\varphi(\vec x,t_2(\vec x,z_1), z_2)$, or $\varphi(\vec x,t_3(\vec x,z_1,z_2), z_3)$, $\dots$, or $\varphi(\vec x,t_k(\vec x,z_1,\dots,z_{k-1}),z_k)$.
    \end{compactitem}
    Note that $t_1,\dots,t_k$ are $\PV(\P)$ terms, which can be  interpreted as polynomial time $\P$-oracle algorithms.

    Similarly to the proof of \Cref{thm: witness TFNP}, we can rewrite the universal sentence ($\Phi$) as an equation $e_\Phi$ in $\APX$, such that it is provable in $\APX$. Therefore, we know by the soundness of $\APX$ (see \Cref{prop: soundness APX}) that $e_\Phi$ is true in any standard model $\bbM(\hat\P)$.

    Let $\CAPP$ be the search problem that, given any circuit $C$ and a string $\Delta$, outputs a number $p\in\Pr[C(x)]\pm 1/|\Delta|$. It is well-known that the problem is in $\pr\BPP$ (see, e.g., \cite{Goldreich11g}), and thus can be computable by a family of deterministic polynomial-size circuits. Fix any family of circuits $F(C,\Delta)$ that solves $\CAPP$. By definition, $\bbM(F)$ is a standard model of $\APX$. Subsequently, $e_\Phi$ is true in the model $\bbM(F)$. 
    
    The theorem follows by setting $f_1,f_2,\dots,f_k$ as $t_1^{\bbM(F)},t_2^{\bbM(F)},\dots,t_k^{\bbM(F)}$. Since for each fixed input length, there is a polynomial upper bound on the size of $C$ and on the length of $\Delta$ in the oracle calls to $F(C,\Delta)$ during the computation of $t_1$, $\ldots$, $t_k$, this allows us to fix a family of polynomial-size circuits for $f_1$, $\ldots$, $f_k$.
\end{proof}

\section{Reverse Mathematics of Randomized and Average-Case Lower Bounds}\label{sec:rev_math}

The retraction weak pigeonhole principle for polynomial-time functions is one of the most important combinatorial principles that is known to be provable in $\APC_1$, but unknown to be provable in $\Apx_1$. In this section, we explore counting variants of the retraction pigeonhole principle and characterize their equivalence class (with respect to provability in $\Apx_1$). We show that this class encompasses certain average-case and randomized communication complexity lower bounds, establishing that these results are all equivalent to appropriate variants of the retraction pigeonhole principle.

\subsection{Variants of the Retraction Pigeonhole Principle}

We start with the definition of the Retraction Weak Pigeonhole Principle $\rWPHP(\PV)$. For simplicity, we introduce the following notation. We use $m:\Log\to\Log$ to denote a $\PV$ function symbol $m(n)$ whose input and output are encoded in unary. We use $\eps:\Log\to\Log^{-1}$ to denote a $\PV$ function symbol $e(n)$ whose input and output are encoded in unary, and $\eps(n)$ is an abbreviation of the rational number $1/e(n)$. 

\begin{definition}[retraction weak pigeonhole principle]
Let $m:\Log\to\Log$. The retraction weak pigeonhole principle $\rWPHP[m](\PV)$ with stretch $m$ denotes the following statement in the language of $\Apx_1$: 

For every $n\in\Log$ and circuits $C:\zo^n\to\zo^{m(n)}$, $D:\zo^{m(n)}\to\zo^n$, if $m(n)<n$, then there exists a string $x\in\zo^n$ such that $D(C(x))\ne x$.     
\end{definition}

We will define two variants of $\rWPHP(\PV)$: an approximate counting version $\rcWPHP(\PV)$, and a randomized compression version $\rrWPHP(\PV)$. 

\begin{definition}[approximate counting $\rWPHP$] 
Let $m:\Log\to\Log,\eps:\Log\to\Log^{-1}$. The approximate counting retraction weak pigeonhole principle $\rcWPHP[m,\eps](\PV)$ with stretch $m$ and error $\eps$ denotes the following statement in the language of $\Apx_1$:

For every $n\in\Log$ and circuits $C:\zo^n\to\zo^{m(n)}$, $D:\zo^{m(n)}\to\zo^n$, let $T$ be the circuit such that $T(x)=1$ if $D(C(x))\ne x$. Then, if $m(n)<n$ and $\eps(n) < 1-2^{n-m(n)}$, there exists $\delta^{-1},\beta^{-1}\in\Log$ such that $\P_\delta(T) > \eps(n)+\delta +\beta$. 
\end{definition}

\newcommand{\sd}{\mathsf{sd}}
\newcommand{\msg}{\mathsf{msg}}

\begin{definition}[randomized compression $\rWPHP$]
Let $m:\Log\to\Log$, $\eps:\Log\times\Log\to\Log^{-1}$. The randomized compression retraction weak pigeonhole principle $\rrWPHP[m,\eps](\PV)$ with stretch $m$ and error $\eps$ denotes the following statement in the language of $\Apx_1$: 

For every $n,r\in\Log$ and circuits $C:\zo^n\times \zo^r\to\zo^{m(n)}$, $D:\zo^{m(n)}\to\zo^n$, if $m(n)<n$ and $\eps(n+r)<1-2^{n-m(n)}$, then there exists an $x\in\zo^n$ and $\delta^{-1},\beta^{-1}\in\Log$ such that the following holds: Let $T_{x}:\zo^r\to\zo$ be the circuit such that $T_x(\sd)=1$ if and only if $D(C(x,\sd))\ne x$. Then $\P_\delta(T_x) > \eps(n,r)+\delta + \beta$.  
\end{definition}

It follows immediately from the definition that both variants of $\rWPHP$ are true statements in any standard model of $\APX_1$, which is left as an exercise. 

\begin{proposition}
    For every $m:\Log\to\Log$, $\eps:\Log\to\Log^{-1}$, $\rcWPHP[m,\eps](\PV)$ and $\rrWPHP[m,\eps](\PV)$ are true statements in any standard model $\bbM(\hat \P)$ of $\APX_1$.
\end{proposition}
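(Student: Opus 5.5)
Both principles are $\forall$-statements over the parameters whose hypotheses concern only deterministic circuits. The plan is to fix an arbitrary standard model $\bbM(\hat\P)$ and argue semantically. We reason in $\mathbb{N}$ using the defining property of a standard model: $\hat\P(C,\Delta)$ lies within $1/|\Delta|$ of the true acceptance probability of $C$. So it suffices to show, in each case, that the true error probability of the compression scheme exceeds $\eps$ by a margin. Then we choose $\delta,\beta$ small enough, say $\delta=\beta$ with $3\delta$ below that margin, so that $\P_\delta(T)\ge p-\delta>\eps+\delta+\beta$.

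For $\rcWPHP[m,\eps]$, fix $n$ and circuits $C,D$ with $m(n)<n$. The set $\{x : D(C(x))=x\}$ injects into $\zo^{m(n)}$ via $C$, because $D$ inverts $C$ on it. Hence at most $2^{m(n)}$ strings are correctly decompressed. The true probability of $T$ is therefore at least $1-2^{m(n)-n}$; the condition as written, $\eps(n)<1-2^{n-m(n)}$, should be read with this exponent, since otherwise the hypothesis is vacuous. By hypothesis, $1-2^{m(n)-n}$ strictly exceeds $\eps(n)$. Both quantities are rationals of polynomially bounded bit-length, so their gap $\gamma>0$ is at least inverse-exponential, and some standard $\delta^{-1}\in\Log$ with $3\delta<\gamma$ exists. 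Picking that $\delta$ and $\beta=\delta$ gives $\P_\delta(T)\ge p(T)-\delta>\eps(n)+\delta+\beta$.

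For $\rrWPHP[m,\eps]$, fix $n,r,C,D$ and apply averaging over pairs $(x,\sd)$. For each fixed seed $\sd$, the map $x\mapsto C(x,\sd)$ is a deterministic compressor, so by the count above at most $2^{m(n)}$ strings $x$ satisfy $D(C(x,\sd))=x$. Averaging over $\sd$, the probability over uniform $(x,\sd)$ of correct decompression is at most $2^{m(n)-n}$. Hence some $x$ has $\Pr_{\sd}[D(C(x,\sd))\ne x]\ge 1-2^{m(n)-n}>\eps(n+r)$. For that $x$, the same choice of small $\delta,\beta$ makes $\P_\delta(T_x)$ exceed $\eps+\delta+\beta$.

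The only delicate point is ensuring that the needed precision is attainable within $\Log$, i.e.\ that $\delta$ and $\beta$ can be taken as inverses of lengths. This holds in the standard model because every natural number is the length of some string, so any positive gap can be beaten.
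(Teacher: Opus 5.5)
Your argument is correct, and it is the routine semantic check the paper leaves as an exercise. In a standard model $\hat\P$ is within $1/|\Delta|$ of the true acceptance probability. Since $C$ is injective on correctly decompressed strings (seed by seed, then averaging over seeds for $\rrWPHP$), the true failure probability is at least $1-2^{m(n)-n}$, and because every positive integer is available as $\delta^{-1},\beta^{-1}$, any positive gap can be beaten. Your reading of the hypothesis as $\eps<1-2^{m(n)-n}$ is the intended one; under the literal exponent $n-m(n)$ the hypothesis is never satisfied and the statement holds vacuously, so your proof covers both readings.
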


These principles can be viewed as the worst-case and (weak) average-case hardness of compression-decompression algorithms. Specifically: 
\begin{compactitem}
\item $\rWPHP[m](\PV)$ says that for any deterministic compression-decompression pair $(C, D)$ with compression rate $m$, there is an incompressible string;
\item $\rcWPHP[m,\eps](\PV)$ says that for any deterministic compression-decompression pair $(C,D)$ with compression rate $m$, there is an $\eps$-fraction of incompressible strings;
\item $\rrWPHP_m(\PV)$ says that for any $(C,D)$ where $C$ is a \emph{randomized} compression algorithm and $D$ is a \emph{deterministic} decompression algorithm, there must be an input string over which the compression-decompression pair has \emph{error probability} $\eps$.
\end{compactitem}

A classical result in bounded arithmetic is that the retraction weak pigeonhole principle admits a stretch reduction in $\PV$. Concretely: 

\begin{theorem}[\cite{thapen2002weak,Jerabek-phd}]
For any constant $\eps\in(0,1)$, $\PV+\rWPHP[{n^\eps}](\PV)\vdash \rWPHP[{n-1}](\PV)$.  
\end{theorem}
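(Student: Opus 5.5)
We prove the stretch reduction by composition (iterated compression). We work in $\PV_1$, argue contrapositively, and use a tree-like iteration of a short compressor in the style of the classical argument for $\dWPHP$ amplification, with the compositional structure reversed.

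Suppose $C:\zo^n\to\zo^{n-1}$ and $D:\zo^{n-1}\to\zo^n$ satisfy $D(C(x))=x$ for every $x\in\zo^n$. We want to build, for some $N$ with $N^\eps\ge$ the output length, a pair $C^*:\zo^N\to\zo^{N^\eps}$ and $D^*$ with $D^*(C^*(y))=y$ for all $y$. This contradicts $\rWPHP[n^\eps](\PV)$.

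\textbf{Step 1 (iteration).} For $\ell\in\Log$, define $C_\ell:\zo^{n+\ell}\to\zo^{n-1}$ recursively, and likewise the decompressors $D_\ell$. The base is $C_0=C$. For the recursive step, $C_{\ell+1}(x\circ b)$ splits $x\circ b$ into its first $n+\ell$ bits $u$ and its last bit $b$. It compresses $u$ by $C_\ell$ to get $n-1$ bits, appends $b$ to obtain $n$ bits, and applies $C$. The decompressor reverses this: $D_{\ell+1}(w)$ first computes $v=D(w)$, splits $v$ into $v'\circ b$, and outputs $D_\ell(v')\circ b$.

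Iterating naively makes the circuit size grow like $\ell\cdot|C|$, which is fine for $\ell$ polynomial. The decompressor is also a sequential composition of $\ell$ copies. Both are therefore polynomial-size circuits constructible by a $\PV$ function from $(C,D,1^\ell)$.

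\textbf{Step 2 (correctness by induction).} We prove $\forall y\in\zo^{n+\ell}~D_\ell(C_\ell(y))=y$ by induction on $\ell$. This is an open-formula statement only if the universal quantifier over $y$ is removed, so we instead prove the contrapositive form constructively. A $\PV$ function, given $y$ with $D_\ell(C_\ell(y))\neq y$, walks down the recursion and outputs some $x\in\zo^n$ with $D(C(x))\ne x$.

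At level $\ell+1$, consider $y=u\circ b$. If $D(C(C_\ell(u)\circ b))\neq C_\ell(u)\circ b$, the algorithm outputs $C_\ell(u)\circ b$. Otherwise the failure must occur at $D_\ell(C_\ell(u))\ne u$, and we recurse. Correctness of this search is proved by quantifier-free induction on $\ell$ in $\PV_1$, which is straightforward.

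\textbf{Step 3 (parameters and the main obstacle).} Take $N=n+\ell$ with $\ell=n^{1/\eps}$, so that $n-1\le N^\eps$. Padding the output to exactly $m(N)=\lfloor N^\eps\rfloor$ bits with zeros, which $D^*$ ignores, yields a counterexample to $\rWPHP[N^\eps]$ at length $N$.

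The main obstacle is that $\rWPHP[n^\eps]$ gives us some $y\in\zo^N$ with $D^*(C^*(y))\ne y$, whereas we need a failure of the original pair. This is handled exactly by the search algorithm of Step 2, which converts the failure into $x\in\zo^n$ with $D(C(x))\ne x$.

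One delicacy remains: $N$ must lie in $\Log$, which holds since $N$ is polynomial in $n$ for constant $\eps$. The exponent $1/\eps$ is a meta-level constant, so the construction is a single $\PV$ term.
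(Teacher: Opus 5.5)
Your proof is correct. The paper does not prove this statement; it only cites Thapen and Je\v{r}\'abek, and your iterated-compression argument with witness extraction is the standard way to amplify the stretch of a retraction pair.

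Two small presentation points:
\begin{itemize}
\item What you build is a linear chain $C_j(u\circ b)=C(C_{j-1}(u)\circ b)$, not the ``tree-like'' iteration announced in your first paragraph.
\item The open induction in Step 2 must hold the failing string $y\in\zo^{n+\ell}$ fixed as a parameter and induct on $j\le\ell$ over its prefixes $y_{\le n+j}$. A statement quantifying over all $y\in\zo^{n+j}$ is not open, and the induction hypothesis is needed for a prefix, not for $y$ itself. A clean version: let the search function test the $\ell+1$ candidates $y_{\le n}$ and $C_{j-1}(y_{\le n+j-1})\circ y_{n+j}$ for $j\in[\ell]$. Then prove by open induction on $j$ that if none of them is a failure of $(C,D)$, then $D_j(C_j(y_{\le n+j}))=y_{\le n+j}$.
\end{itemize}

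The closest argument the paper actually carries out is the block-wise amplification in \Cref{lmm: stretch reduction for compression} and \Cref{lmm: w2a compression}. There, each round does the following:
\begin{itemize}
\item cut the current string into $\lfloor |z_{i-1}|/n\rfloor$ disjoint $n$-bit blocks;
\item compress every block by one bit in parallel;
\item set the leftover bits $y_i$ aside.
\end{itemize}
This is repeated for $d=O(n\log\ell)$ rounds.

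For the deterministic worst-case statement, your chain is simpler: there is no leftover bookkeeping, and tracing a failure is a one-line recursion.

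The block structure is what the paper needs in the randomized settings. All blocks of round $i$ depend only on $z$ and $\sd_1,\dots,\sd_{i-1}$, which gives three things:
\begin{itemize}
\item one fresh seed $\sd_i$ per round can serve every block;
\item fixing the other seeds turns the failure of block $j$ into the event $T_{x_j}(\sd_i)$ that the averaging argument requires;
\item the seeds, which must be written into the compressed output, cost only $dr$ bits.
\end{itemize}
A chain does not allow this. Reusing one seed across all steps fails, because the input of step $j$ depends on that seed through $C_{j-1}$. Using fresh per-step seeds costs $\ell r$ output bits, which destroys the stretch.
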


\subsection{One-Way Communication Lower Bounds}

We prove an equivalence result involving  $\rcWPHP(\PV)$ and communication complexity lower bounds for \emph{Set Disjointness} against \emph{one-way protocols} with either \emph{public randomness} or \emph{private randomness}.

\paragraph{Formalization of One-Way Communication Protocols.} We start with the formalization of a communication protocol. Let $n,m,r\in\Log$. A pair of circuits $g_\Alice:\zo^n\times \zo^r\to\zo^m$ and $d_\Bob:\zo^n\times \zo^m\times \zo^r\to\zo$ defines a one-way randomized communication protocol as follows: 
\begin{itemize}
    \item (\emph{Public Coin Model}). On any pair of inputs $(x,y)\in\zo^n\times \zo^n$ and a uniformly generated public random seed $\sd\in\zo^r$, Alice sends the message $\msg\eqdef g_\Alice(x,\sd)$ to Bob, and Bob decides to accept if and only if $d_\Bob(y,\msg,\sd) = 1$. 
    \item (\emph{Private Coin Model}). On any pair of inputs $(x,y)\in\zo^n\times \zo^n$ and uniformly generated private random seeds $\sd\in\zo^r$, Alice sends the message $\msg\eqdef g_\Alice(x,\sd)$ to Bob, and Bob decides to accept if and only if $d_\Bob(y,\msg,0^r) = 1$.
\end{itemize}

Let $f:\zo^n\times \zo^n\to\zo$ be a function specified by a circuit. For every $x,y\in\zo^n$, let $T^\pub_{f,x,y}:\zo^r\to\zo$ be the circuit such that $T^\pub_{f,x,y}(\sd)=1$ if and only if the public-coin protocol outputs $1-f(x,y)$ on the input $(x,y)$ with seed $\sd$, i.e., 
\begin{equation}\label{equ: def T pub}
T^\pub_{f,x,y}(\sd)\eqdef \Id[d_\Bob(y,g_\Alice(x,\sd),\sd) \ne  f(x,y)]\in\zo.
\end{equation}
Let $\eps\in\bbQ$.  We say that a public-coin protocol $(g_\Alice,d_\Bob)$ computes the function $f$ with error $\eps$ if for $x,y\in\zo^n$, $\delta^{-1},\beta^{-1}\in\Log$, 
\begin{equation}
\P_\delta(T^\pub_{f,x,y})\le \eps + \delta + \beta.\label{equ: pub correctness}
\end{equation}
Note that here we consider the two-sided error setting, while one can also naturally define the correctness in terms of one-sided error.

Accordingly, one may define $T^\prv_{f,x,y}:\zo^r\to\zo$ to be the circuit such that $T^\pub_{f,x,y}(\sd)=1$ if and only if the private-coin protocol outputs $1-f(x,y)$ on the input $(x,y)$ with seed $\sd$, i.e., 
\begin{equation}\label{equ: def T prv}
T^\prv_{f,x,y}(\sd) \eqdef \Id[d_\Bob(y,g_\Alice(x,\sd),0^r)\ne f(x,y)]\in\zo. 
\end{equation}
We say that a private-coin protocol $(g_\Alice, d_\Bob)$ computes the function $f$ with error $\eps$ if for every $x,y\in\zo^n$, $\delta^{-1},\beta^{-1}\in\Log$, 
\begin{equation}
    \P_\delta(T_{f,x,y}^\prv)\le \eps + \delta + \beta.
\end{equation}

\paragraph{Communication Complexity Lower Bounds.} Fix any function $f:\zo^n\times\zo^n\to\zo$, $n,m,r\in\Log$, and $\eps\in\bbQ$. We define the sentence $\pub\de\rsLB^f[n,m,r,\eps]$ as follows: For every \emph{public-coin} protocol $(g_\Alice,d_\Bob)$ as defined above, $(g_\Alice,d_\Bob)$ fails to compute $f$ with error $\eps$. In other words, there are $x,y\in\zo^n$ and $\delta^{-1},\beta^{-1}\in\Log$ such that $\P_\delta(T^\pub_{f,x,y}) > \eps+\delta + \beta$. 

Accordingly, we define the sentence $\prv\de\rsLB^f[n,m,r,\eps]$ as follows: For every \emph{private-coin} protocol $(g_\Alice,d_\Bob)$ as defined above, $(g_\Alice,d_\Bob)$ fails to compute $f$ with error $\eps$. 

Recall that the Set Disjointness function $\SetDisj(x,y)$ outputs $1$ if and only if for every index $i\in[n]$, either $x_i=0$ or $y_i=0$, i.e., $x$ and $y$ have no common $1$-index. Let $m:\Log\to\Log$, $\eps:\Log\times\Log\to\Log^{-1}$ be functions. We define $\pub\de\rsLB^\SetDisj[m,\eps]$ as the following sentence: 
\begin{quote}
    For $n,r\in\Log$, $\pub\de\rsLB^{\SetDisj}[n,m(n),r,\eps(n,r)]$. 
\end{quote}
In other words, every public-coin one-way protocol computing $\SetDisj$ with communication complexity $m(n)$ must have error probability at least $\eps(n)$. As we will prove in \Cref{sec: equivalence class comm}, the lower bound is correct even for $m(n)=n-n^{\Omega(1)}$. Accordingly, we define  $\prv\de\rsLB^{\SetDisj}[m,\eps]$ as the following sentence: 
\begin{quote}
    For $n,r\in\Log$, $\prv\de\rsLB^{\SetDisj}[n,m(n),r,\eps(n,r)]$. 
\end{quote}

We also consider a weaker statement that, instead of formalizing the lower bound for  a specific function, formalizes the \emph{existence} of a function  for which the lower bound holds. Let $m:\Log\to\Log,\eps:\Log\times\Log\to\Log^{-1}$. We define $\pub\de\rLB^{\some}[m,\eps]$ as the following sentence: 
\begin{quote}
    For $n,r\in\Log$, there exists a circuit $f:\zo^n\times \zo^n\to\zo$ such that $\pub\de\rsLB^{f}[n,m(n),r,\eps(n,r)]$ holds. 
\end{quote}
In other words, there exists a function $f$ such that every public-coin one-way protocol computing $f$ with communication complexity $m(n)$ must have a non-negligible error probability. This is implied by $\pub\de\rsLB^\SetDisj[m,\eps]$ by fixing $f$ to be $\SetDisj$. Accordingly, we can define $\prv\de\rsLB^{\some}[m,\eps]$ as the following sentence:
\begin{quote}
    For $n,r\in\Log$, there exists a circuit $f:\zo^n\times \zo^n\to\zo$ such that $\prv\de\rsLB^{f}[n,m(n),r,\eps(n,r)]$ holds. 
\end{quote}

\paragraph{Upper bound for Equality.} As a sanity check, we note that as there is a communication complexity \emph{upper bound} for \emph{Equality} with public randomness using linear hashing (see \Cref{thm: linear hashing}), the corresponding lower bound is unprovable in $\Apx_1$.

\begin{theorem}[Upper Bound for Equality]
    There are $\PV$ functions $m:\Log\to\Log$, $\eps:\Log\to\Log^{-1}$ satisfying that $m(n)=\Theta(\log n)$, $\eps(n)=1-1/n^{\Theta(1)}$ such that 
    \begin{equation}
    \APX_1\vdash \forall n~\lnot \pub\de\rsLB^\EQ[n,m(n),n\cdot m(n),\eps(n)].\label{equ: equality random ub}
    \end{equation}
    In particular, $\APX_1\vdash \lnot \pub\de\rLB[m,\eps]$. 
\end{theorem}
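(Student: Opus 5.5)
The plan is to exhibit the standard linear-hashing protocol for $\EQ$ and prove its correctness inside $\APX_1$, using \Cref{thm: linear hashing}. Fix $m(n)\eqdef \lceil c\log n\rceil$ for a constant $c$ to be chosen, and seed length $r\eqdef n\cdot m(n)$. The seed $\sd$ is parsed as a matrix $A\in\zo^{m(n)\times n}$. Alice sends $g_\Alice(x,\sd)\eqdef Ax\bmod 2$. Bob outputs $d_\Bob(y,\msg,\sd)\eqdef \Id[\msg = Ay\bmod 2]$. Both are explicit $\PV$-constructible circuits. To refute $\pub\de\rsLB^\EQ[n,m(n),nm(n),\eps(n)]$, we argue in $\APX_1$: for all $x,y\in\zo^n$ and all $\delta^{-1},\beta^{-1}\in\Log$, we show $\P_\delta(T^\pub_{\EQ,x,y})\le \eps(n)+\delta+\beta$. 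We choose $\eps(n)$ to be a fixed small error bound. The statement says $1-1/n^{\Theta(1)}$, and since any smaller error also suffices, this is easily absorbed.

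\textbf{Case $x=y$.} For every seed, $Ax=Ay$, so Bob accepts and $\EQ(x,y)=1$. Hence $T^\pub_{\EQ,x,y}$ is provably equivalent to $\Null_r$. The \refmeta{global} together with the \refaxiom{boundary} then gives $\P_\delta(T^\pub_{\EQ,x,y})\le 2\delta+\beta'$ for any $\beta'$. Taking $\beta'=\beta/2$ and using $\delta\le\eps(n)$ (or otherwise trivially absorbing the extra $\delta$, since $\eps(n)$ is at least a constant), we get the required bound.

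\textbf{Case $x\ne y$.} Now $\EQ(x,y)=0$, and $T^\pub_{\EQ,x,y}(A)=1$ exactly when $Ax\equiv Ay$. This is provably the same circuit as $T_{x,y}$ from \Cref{thm: linear hashing}, up to the \refmeta{global}, which costs an extra $\delta+\beta'$ via a precision change to a fine $\eta$. \Cref{thm: linear hashing} gives $\P_\eta(T_{x,y})\le \eta+\beta'+(1/2+\beta')^{m(n)}$. We take $\beta'\eqdef\min\{\beta/10,1/4\}$, so that $(1/2+\beta')^{m(n)}\le (3/4)^{m(n)}\le 1/n^{\Theta(1)}$. Applying the \refaxiom{precision} to return to precision $\delta$ then yields $\P_\delta(T^\pub_{\EQ,x,y})\le \eps(n)+\delta+\beta$ for any $\eps(n)\ge (3/4)^{m(n)}$, in particular for $\eps(n)=1-1/n^{\Theta(1)}$.

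Combining the two cases proves \eqref{equ: equality random ub}. The ``in particular'' clause follows because $\pub\de\rLB[m,\eps]$ (read as $\pub\de\rsLB^\EQ$ at these parameters) is contradicted by this protocol.

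The main obstacle is bookkeeping rather than substance. We must check that the protocol's error circuit is $\PV_1$-provably functionally equivalent to $T_{x,y}$ or to $\Null$, and we must absorb the additive slack from the global-consistency and precision steps into $\delta+\beta$ by choosing the auxiliary precision $\eta$ as $\beta/C$ for a large constant $C$.
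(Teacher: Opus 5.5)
Your proof is correct and is essentially the paper's argument: the linear-hashing protocol, the Boundary Axiom (via Global Consistency) when $x=y$, and \Cref{thm: linear hashing} when $x\ne y$. You also correctly identify the error circuit $T^{\pub}_{\EQ,x,y}$ for $x\ne y$ with $T_{x,y}$ itself, so you get error $1/n^{\Theta(1)}$ directly, which implies the stated $1-1/n^{\Theta(1)}$, without the Complementary Counting step the paper's sketch invokes. The one thing to tidy is the case $x=y$: do not appeal to $\delta\le\eps(n)$, which need not hold. Instead apply Global Consistency at an auxiliary precision $\eta=\beta/C$ and return with the Precision Consistency Axiom, which gives $\P_\delta(T^{\pub}_{\EQ,x,y})\le\delta+\beta$ outright.
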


\begin{proof}[Proof Sketch]
    We argue in $\APX_1$ that \Cref{equ: equality random ub} holds, where $m,r,\eps$ will be determined later. As the cases when $n$ is small can be proved in brute-force, it suffices to consider $n>n_0$, where $n_0\in\bbN$ is a constant to be determined later. 
    
    Fix any $n>n_0$. The one-way communication works as follows. Let $x\in\zo^n$ be the input for Alice and $y\in\zo^n$ be that for Bob. They parse the public randomness as a matrix $A\in\zo^{m(n)\times n}$. Alice sends $Ax\in\zo^{m(n)}$ as the message, and Bob accepts if and only if $Ax=Ay$. It remains to prove that the protocol works with error at most $\eps(n)$. 

    Fix any input $x,y\in\zo^n$. If $x=y$, the protocol always accepts. In other words, the circuit $T_{\EQ,x,y}^\pub$ in \Cref{equ: def T pub} is a constant circuit that always rejects. The correctness, i.e.~\Cref{equ: pub correctness}, follows immediately from the \refaxiom{boundary}. For the case that $x\ne y$, the circuit $T_{\EQ,x,y}^\pub$ is functionally equivalent to the negation of the circuit $T_{x,y}$ in \Cref{thm: linear hashing}. Therefore, by \refmeta{complementation}, the theorem holds as long as we set $m(n)=n^{O(1)}$, $\eps(n)>1-0.51^{m(n)}$, and $n_0\in\bbN$ be sufficiently large. This completes the proof.  
\end{proof}

\subsection{The Main Equivalence Result  for Communication Complexity}
\label{sec: equivalence class comm}

We establish an equivalence between several statements with respect to their provability in $\Apx_1$. 

\begin{theorem}\label{thm:rev_equivalences}
    The following statements are equivalent over $\Apx_1$:
    \vspace{0.2cm}
    \begin{compactenum}[\rm(1)]
    \item\label{rcWPHP small} $\rcWPHP[n-1,n^{-k}](\PV)$, where $k\in\bbN$ is some constant; 
    \item\label{rcWPHP large} $\rcWPHP[n^{\eps},n^{-k}](\PV)$, where $\eps\in(0,1)$ and $k\in\bbN$ are some constants; 
    \item\label{rrWPHP small} $\rrWPHP[n-1,(n+r)^{-k}](\PV)$, where $k\in\bbN$ is some constant; 
    \item\label{rrWPHP large} $\rrWPHP[n^{\eps},(n+r)^{-k}](\PV)$, where $\eps\in(0,1)$ and $k\in\bbN$ are some constants;
    \item\label{pub setdisj max} $\pub\de\rsLB^\SetDisj[n-1,(n+r)^{-k}]$, where $k\in\bbN$ is some constant;  
    \item\label{pub setdisj weak} $\pub\de\rsLB^\SetDisj[n^{\eps},(n+r)^{-k}]$, where $\eps\in(0,1)$ and $k\in\bbN$ are some constants;
    \item\label{prv setdisj max} $\prv\de\rsLB^\SetDisj[n-1,(n+r)^{-k}]$, where $k\in\bbN$ is some constant;
    \item\label{prv setdisj weak} $\prv\de\rsLB^\SetDisj[n^{\eps},(n+r)^{-k}]$, where $\eps\in(0,1)$ and $k\in\bbN$ are some constants;
    \item\label{pub some max}  $\pub\de\rsLB^\some[n-1,(n+r)^{-k}]$, where $k\in\bbN$ is some constant;
    \item\label{pub some weak} $\pub\de\rsLB^\some[n^{\eps},(n+r)^{-k}]$, where $\eps\in(0,1)$ and $k\in\bbN$ are some constants;
    \item\label{prv some max} $\prv\de\rsLB^\some[n-1,(n+r)^{-k}]$, where $k\in\bbN$ is some constant;
    \item\label{prv some weak} $\prv\de\rsLB^\some[n^{\eps},(n+r)^{-k}]$, where $\eps\in(0,1)$ and $k\in\bbN$ are some constants.
    \end{compactenum}
\end{theorem}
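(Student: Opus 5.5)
We will organize the twelve statements into a cycle of implications, all argued in $\APX_1$. The trivial arrows come first. A statement with stretch $n-1$ implies the corresponding one with stretch $n^{\eps}$: a protocol or compressor with $n^{\eps}$ output bits can be padded to $n-1$ bits. A $\SetDisj$ lower bound implies the corresponding $\some$ lower bound by instantiating $f\eqdef\SetDisj$. A public-coin lower bound implies the private-coin one, because a private-coin protocol $(g_\Alice,d_\Bob)$ is turned into a public-coin protocol by letting Bob ignore the shared seed; the error circuits $T^\prv_{f,x,y}$ and $T^\pub_{f,x,y}$ are then provably identical, so \refmeta{global} applies. Finally, $\rrWPHP$ implies $\rcWPHP$ in the same stretch. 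Given a deterministic pair $(C,D)$, view $C$ as a randomized compressor that ignores its seed. Assume toward a contradiction that $\P_\delta(T)\le n^{-k}+\delta+\beta$, where $T$ flags failures of $(C,D)$. By \refmeta{avg on exp}, some fixed string $x$ then fails with probability at most that bound, and each $T_x$ is a constant circuit. The \refaxiom{boundary} axiom makes $\P_\delta(T_x)\in\zo$, so after choosing a suitable seed length the statement contradicts \eqref{rrWPHP small}. The parameters match if we take $r$ polynomial and adjust $k$.

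The substantive arrows reduce back to $\rcWPHP$, and there are two of them. The first is (\ref{prv some weak}) $\Leftarrow$ (\ref{rcWPHP large}). Given a private-coin protocol for an arbitrary $f$ with message length $n^{\eps}$, we obtain a deterministic compressor of the truth table of $f(x,\cdot)$ on Alice's side, or rather of Alice's row. This step is delicate because the $\some$ version quantifies existentially over $f$, so we must produce $f$ ourselves. We therefore choose $f$ from a random function via $\rcWPHP$. The clean way is to go in the other direction first: show that $\rcWPHP[n^{\eps},n^{-k}]$ implies (\ref{pub setdisj max}) directly. The plan is the standard encoding argument. A public-coin protocol for $\SetDisj$ with error at most $(n+r)^{-k}$ on every input lets Bob recover $x$ from the message by querying each $y=e_i$, and $\SetDisj(x,e_i)=1-x_i$. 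Fix the seed by \refmeta{avg on exp}, applied to the sum over $i\in[n]$ of the indicators of errors on $(x,e_i)$, averaged over $x$. This yields a seed $\sd$ for which the deterministic pair $C(x)\eqdef g_\Alice(x,\sd)$ and $D(\msg)\eqdef(1-d_\Bob(e_i,\msg,\sd))_{i\in[n]}$ fails on at most an $n\cdot(n+r)^{-k}+O(\delta+\beta)$ fraction of $x$. The \refmeta{union bound} bounds this failure probability by the sum of the per-coordinate errors. This contradicts $\rcWPHP[n-1,n^{-k'}]$ for suitable $k'$. To get the private-coin and $\some$ versions into the cycle, we show that (\ref{prv some weak}) implies (\ref{rcWPHP small}) and that (\ref{rcWPHP small}) implies (\ref{rrWPHP small}).

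The implication (\ref{prv some weak}) $\Rightarrow$ (\ref{rcWPHP small}) is the main obstacle, since we must transform a good deterministic compressor into a cheap protocol for every $f$. We would argue contrapositively. Suppose $(C,D)$ compresses all but an $n^{-k}$ fraction of $N$-bit strings to $N-1$ bits. Take $N$ to be the length of Alice's row $f(x,\cdot)$ restricted to a block of coordinates, iterate the compression $N-n^{\eps}$ times, and let the private seed choose a random shift $z$, so that the compressor is applied to $x\oplus z$. This is exactly the role of \refmeta{rerand}, which shows the resulting failure probability is the same for every fixed $x$. Iterating the compression poses the difficulty: error accumulation over many rounds requires composing compressors. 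We expect to use instead the classical stretch-reduction idea behind the theorem of \cite{thapen2002weak,Jerabek-phd}, carried out with approximate counting. Its error analysis would combine the \refmeta{union bound} over rounds with \refmeta{avg on exp}. We anticipate this amplification step, namely stretch reduction for the counting variant, to require the most care.

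Finally, (\ref{rcWPHP small}) $\Rightarrow$ (\ref{rrWPHP small}) follows by averaging. Given a randomized compressor $C(x,\sd)$ whose error is small for every $x$, the circuit flagging failures on $(x,\sd)$ has small probability by \refmeta{avg on exp}. A good seed $\sd$ then gives a deterministic compressor failing on few $x$, contradicting \eqref{rcWPHP small}.
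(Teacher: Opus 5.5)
\textbf{Gap: nothing valid leads out of (2), (3), (4).} You list (3)$\Rightarrow$(1) and (4)$\Rightarrow$(2), i.e.\ $\rrWPHP$ implies $\rcWPHP$ in the same stretch, as trivial, but your argument does not prove them. The averaging argument applied to the failure circuit $T$ of a deterministic pair $(C,D)$ yields only \emph{one} $x$ with $D(C(x))=x$. That contradicts nothing, because $\rrWPHP$ only asserts that \emph{some} $x$ has large error. A deterministic average-case compressor, viewed as a randomized one that ignores its seed, fails with probability $1$ on each of its bad inputs, so it refutes $\rrWPHP$ only if it is perfect.

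Passing from worst-case hardness of randomized compression to average-case hardness of deterministic compression needs a genuine worst-case-to-average-case reduction. This is the heart of the paper's step (4)$\Rightarrow$(1):
\begin{itemize}
\item $C$ is fed blocks $x_j\oplus\mathsf{sd}_i$, and the Re-randomization Lemma bounds each block's failure probability by the average-case error, for every fixed input.
\item The shifts are appended to the output. This is affordable because each shift is shared by many blocks.
\item The construction iterates for $d\approx n\log\ell$ rounds with fresh shifts, on inputs of length $\ell=n^{O(1/\eps)}$, to reach stretch $\ell^{\eps}$.
\item The union bound and averaging over the $d\ell$ failure events finish the proof.
\end{itemize}

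Separately, the arrow you announce as ``$\rcWPHP[n^{\eps},n^{-k}]$ implies (5)'' is actually (1)$\Rightarrow$(5). The refuting protocol sends $n-1$ bits, so the compressor you extract saves one bit. It therefore contradicts $\rcWPHP[n-1,\cdot]$, not $\rcWPHP[n^{\eps},\cdot]$. Consequently no valid arrow leaves $\{(2),(3),(4)\}$, and your cycle does not close.

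\textbf{What is sound.} Padding, the Boolean cube on (5)--(12), (1)$\Rightarrow$(3) by seed fixing, and (1)$\Rightarrow$(5) via the queries $y=e_i$ are all fine. For the last one, choose the seed once at a fixed polynomially small precision, and recover all $\delta,\beta$ by Precision Consistency at the cost of a slightly larger $k$. The paper sidesteps this by compressing $(x,\mathsf{sd})$ jointly.

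\textbf{How to repair.} Split your (12)$\Rightarrow$(1) sketch as $\lnot(1)\Rightarrow\lnot(4)\Rightarrow\lnot(12)$, as the paper does.
\begin{itemize}
\item The second step is immediate: Alice sends $C(x,\mathsf{sd})$ and Bob evaluates $f(D(\mathsf{msg}),y)$. Alice's row $f(x,\cdot)$ plays no role.
\item The first step is the construction above. Note that $C$ has a fixed input length, so it must be applied blockwise.
\item A fresh shift is needed in every round. After one round the blocks fed to $C$ are no longer uniform, so your single initial shift $z$ does not suffice.
\item The shifts must sit inside the message, because a private-coin Bob never sees them.
\end{itemize}
Then tie (2) in. One option is the paper's (2)$\Rightarrow$(3), iterating a worst-case randomized one-bit compressor on inputs that include the seeds. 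A simpler option is your own seed-fixing argument, which gives (2)$\Rightarrow$(4) exactly as it gives (1)$\Rightarrow$(3).
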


\begin{remark}
    In the statements above, the quantification over $k\in\bbN$ and $\eps\in(0,1)$ takes place outside the theory. For instance, \eqref{rcWPHP small} $\Rightarrow$ \eqref{rcWPHP large} means that for every $k_1\in\bbN$, there exists a $k_2\in\bbN$ and $\eps_2\in(0,1)$ such that the sentence
    \[
    \rcWPHP[n-1,n^{-k_1}](\PV) \rightarrow \rcWPHP[n^{\eps_2},n^{-k_2}](\PV). 
    \]
    is provable in $\APX_1$.  
\end{remark}

\pgr{Trivial directions} Both \eqref{rcWPHP small} $\Rightarrow$ \eqref{rcWPHP large} and \eqref{rrWPHP small} $\Rightarrow$ \eqref{rrWPHP large} are straightforward. Indeed, a  compression-decompression pair with small stretch can be converted into one with larger stretch by padding dummy bits.  It is also easy to observe that statements \eqref{pub setdisj max} to \eqref{prv some weak} form a lattice isomorphic to a three-dimensional Boolean cube with respect to implication over $\Apx_1$, where \eqref{pub setdisj max} is the maximal element (i.e., the strongest lower bound) and \eqref{prv some weak} is the minimal element (i.e., the weakest lower bound). This is because lower bounds against public-coin protocols imply lower bounds against private-coin protocols; $n-1$ communication lower bounds imply $n^{\Omega(1)}$ communication lower bounds; and lower bounds for $\SetDisj$ imply lower bounds for some function $f$ (by fixing $f$ to be $\SetDisj$).  

\pgr{Non-trivial directions} Observe that, in order to complete the proof of \Cref{thm:rev_equivalences}, it suffices to establish the following implications: \eqref{prv some weak} $\Rightarrow$ \eqref{rrWPHP large}, \eqref{rcWPHP small} $\Rightarrow$ \eqref{pub setdisj max}, \eqref{rcWPHP large} $\Rightarrow$ \eqref{rrWPHP small}, and \eqref{rrWPHP large} $\Rightarrow$ \eqref{rcWPHP small}. The proof of these implications is provided in the subsequent sections. 

\subsubsection{Compression Implies Communication Upper Bound: \texorpdfstring{\eqref{prv some weak} $\Rightarrow$ \eqref{rrWPHP large}}{(12) => (4)}}

\begin{lemma}
    For every $\eps_{12}\in(0,1)$ and $k_{12}\in\bbN$, there are $\eps_{4}\in(0,1)$ and $k_4\in\bbN$ such that 
    \[ 
    \Apx_1+\prv\de\rsLB^{\some}[n^{\eps_{12}},(n+r)^{-k_{12}}]\vdash\rrWPHP[n^{\eps_4},(n+r)^{-k_4}](\PV).
    \] 
\end{lemma}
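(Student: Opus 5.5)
We argue the contrapositive inside $\Apx_1$. We fix $\eps_{12}$ and $k_{12}$, and we will choose $\eps_4\eqdef\eps_{12}$ and $k_4\eqdef k_{12}$, possibly adjusted by constants to absorb the reparametrisation of input lengths. Suppose $\rrWPHP[n^{\eps_4},(n+r)^{-k_4}](\PV)$ fails. Then there are $n,r\in\Log$, a randomized compressor $C:\zo^n\times\zo^r\to\zo^{m(n)}$ with $m(n)=n^{\eps_4}<n$, and a deterministic decompressor $D$ with the following property: for every $x\in\zo^n$ and all $\delta^{-1},\beta^{-1}\in\Log$, the failure circuit satisfies $\P_\delta(T_x)\le \eps(n,r)+\delta+\beta$. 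From this pair we will construct, for every circuit $f:\zo^n\times\zo^n\to\zo$, a private-coin one-way protocol of cost $m(n)$ and error $\eps$. This refutes $\prv\de\rsLB^\some[n^{\eps_{12}},(n+r)^{-k_{12}}]$ at the same $n,r$, since that principle asserts the existence of some hard $f$.

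The protocol is the obvious one. Alice sets $g_\Alice(x,\sd)\eqdef C(x,\sd)$. Bob ignores his randomness argument and sets $d_\Bob(y,\msg,\cdot)\eqdef f(D(\msg),y)$; for the private-coin model he is fed $0^r$, which he does not use. Fix inputs $x,y$. The key pointwise observation is that, for every seed $\sd$, if $D(C(x,\sd))=x$ then Bob outputs $f(x,y)$. It follows that $T^\prv_{f,x,y}(\sd)\le T_x(\sd)$ for all $\sd\in\zo^r$, and this inequality is provable in $\PV_1$ by unfolding definitions.

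By \refmeta{monotone}, applied at a fresh precision $\eta$, we get $\P_\eta(T^\prv_{f,x,y})\le\P_\eta(T_x)+3\eta$. Using the hypothesis on $T_x$ at precision $\eta$ together with the \refaxiom{precision}, we then obtain $\P_\delta(T^\prv_{f,x,y})\le\eps(n,r)+\delta+\beta$ for arbitrary $\delta^{-1},\beta^{-1}\in\Log$. This requires choosing $\eta\eqdef\beta/10$ and using the slack of the $\beta$-terms. Since $x$ and $y$ were arbitrary, the protocol computes $f$ with error $\eps$. As $f$ was an arbitrary circuit, no $f$ witnesses the lower bound, which is the contradiction we want.

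The main obstacle is bookkeeping rather than mathematics. The two principles index their parameters differently: the communication lower bound has input length $n$ for each player, seed length $r$, and error $(n+r)^{-k}$, and the side condition $\eps(n+r)<1-2^{n-m(n)}$ in $\rrWPHP$ must be matched. We have to make sure the quantifier structure lines up. The negation of $\rrWPHP$ gives a $\forall x\,\forall\delta,\beta$ bound, which is exactly what the protocol-correctness definition demands, and the error functions coincide when $k_4=k_{12}$ and $\eps_4=\eps_{12}$. If the side condition on $\eps$ misbehaves for small $n$, we handle $n\le n_0$ by brute force, or by enlarging $k_4$ so that the hypothesis is only used where $m(n)<n$.
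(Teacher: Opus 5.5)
Your proposal is correct and is essentially the paper's proof. It uses the same contrapositive with $\eps_4=\eps_{12}$ and $k_4=k_{12}$, the same private-coin protocol (Alice sends $C(x,\sd)$, Bob outputs $f(D(\msg),y)$ and ignores his coin input), the same pointwise fact that $T^{\prv}_{f,x,y}(\sd)=1$ implies $T_x(\sd)=1$, and then \refmeta{monotone} followed by the \refaxiom{precision}. Your bookkeeping worries are unnecessary: the side condition of $\rrWPHP$ comes as a hypothesis when you negate it, the communication lower bound imposes none, and your choice $\eta=\beta/10$ absorbs the accumulated $7\eta$ of slack.
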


\begin{proof}
    Fix any constant $\eps_6\in(0,1)$, $k_{12}\in\bbN$, let $\eps_2\eqdef \eps_6$ and $k_4\eqdef k_{12}$. Let $m_4(n)\eqdef n^{\eps_4}$ and $m_{12}(n)\eqdef n^{\eps_{12}}$. We will prove in $\Apx_1$ that $\lnot\rrWPHP[n^{\eps_4},n^{-k_4}](\PV)$ implies $\lnot\prv\de\rsLB^{\some}[n^{\eps_{12}},n^{-k_{12}}]$. 

    Suppose that $\rrWPHP[m_4,n^{-k_4}](\PV)$ does not hold. Then there are $n,r\in\Log$ and circuits $C:\zo^n\times \zo^r\to\zo^{m_2(n)},D:\zo^{m_2(n)}\to\zo^n$ such that the following holds. Let $T_x$ be the circuit that $T_x(\sd)=1$ if $D(C(x,\sd))\ne x$. Then for every $\delta^{-1},\beta^{-1}\in\Log$ and $x\in\zo^n$, 
    \begin{equation} 
    \P_\delta(T_x)\le  (n+r)^{-k_{4}}+\delta + \beta .\label{equ: Tx is large} 
    \end{equation}
    Fix $n,r,C,D$ as described above. 

    We will now prove that $\lnot\prv\de\rsLB^{\some}[n^{\eps_{12}},n^{-k_{12}}]$. In particular, we will prove that for every $f:\zo^{n}\times \zo^n\to\zo$, $\prv\de\rsLB^{f}[n^{\eps_{12}},n^{-k_{12}}]$ does not hold. Fix any circuit $f:\zo^n\times \zo^n\to\zo$. Our goal is to construct a private-coin communication protocol with communication complexity $n^{\eps_{12}}$ that computes $f$ with error $n^{-k_{12}}$. The protocol works as follows: 
    \begin{compactitem}
    \item Given $x\in\zo^n$ and uniformly random seed $\sd$, Alice sends the message $g_\Alice(x,\sd)\eqdef C(x,\sd)$.  
    \item Given $y\in\zo^n$ and the message $\msg$, Bob accepts if and only if $d_\Bob(y,\msg,\sd)\eqdef f(D(\msg),y)=1$. 
    \end{compactitem}

    To prove that the protocol computes $f$ with error $n^{-k_{12}}$, fix any $\delta^{-1},\beta^{-1}\in\Log$. Let $\eta^{-1}\in\Log$ be a parameter to be determined later, and $T_{f,x,y}^\prv(\sd)$ be the circuit as defined in \Cref{equ: def T prv}. It can be verified that for every $x,y\in\zo^n$ and $\sd\in\zo^r$, if $T_{f,x,y}^\prv(\sd)=1$, then $T_x(\sd)=1$. By the \refmeta{monotone}, 
    \[
    \P_\eta(T_{f,x,y}^\prv) \le \P_\eta(T_x) + 3\eta \le (n+r)^{-k_4}+5\eta,  
    \]
    where the last inequality follows from \Cref{equ: Tx is large}. It then follows from the \refaxiom{precision} that 
    \[
    \P_\delta(T_{f,x,y}^\prv) \le \delta + \P_\eta(T_{f,x,y}^\prv) \le (n+r)^{-k_4}+\delta + 5\eta \le (n+r)^{-k_4}+\delta + \beta
    \]
    by setting $\eta\eqdef \beta/5$. This completes the proof. 
\end{proof}

\subsubsection{Compression from Communication Upper Bound: \texorpdfstring{\eqref{rcWPHP small} $\Rightarrow$ \eqref{pub setdisj max}}{(1) => (5)}} 

\begin{lemma}
    For every constant $k_1\in \bbN$, there exists a $k_5\in\bbN$ such that $\Apx_1+\rcWPHP[n-1,n^{-k_1}](\PV)\vdash \pub\de\rsLB^\SetDisj[n-1,(n+r)^{-k_5}]$. 
\end{lemma}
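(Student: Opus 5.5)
We prove the contrapositive inside $\Apx_1$: from a public-coin one-way protocol $(g_\Alice,d_\Bob)$ for $\SetDisj$ with communication $m=n-1$, random seed length $r$, and error at most $(n+r)^{-k_5}$ on every input pair, we build a deterministic compressor--decompressor pair violating $\rcWPHP[n'-1,n'^{-k_1}](\PV)$ for a suitable input length $n'$.

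\textbf{Construction.} Set $n'\eqdef n+r$ and treat $(x,\sd)\in\zo^n\times\zo^r$ as a single input string. The compressor outputs
\[
C(x,\sd)\eqdef (g_\Alice(x,\sd),\sd)\in\zo^{m+r}=\zo^{n'-1}.
\]
The decompressor, given $(\msg,\sd)$, recovers every bit of $x$. For each $i\in[n]$ it computes $\hat x_i\eqdef 1-d_\Bob(e_i,\msg,\sd)$, where $e_i$ is the $i$-th unit vector; this works because $\SetDisj(x,e_i)=1-x_i$. It then outputs $D(\msg,\sd)\eqdef(\hat x,\sd)$.

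\textbf{Bounding the failure probability.} Whenever $D(C(x,\sd))\ne(x,\sd)$, some index $i$ has $T^\pub_{\SetDisj,x,e_i}(\sd)=1$. So the failure circuit $T$ is pointwise dominated by the disjunction over $i\in[n]$ of these events. We want to bound $\P_\delta(T)$ using the \refmeta{union bound}, but the seed of $T$ includes $x$, while the protocol's error guarantee holds for each fixed $x$. We handle this with a pointwise-to-global step, in three parts.
\begin{itemize}
\item For each fixed $x$, the protocol guarantee applied at $y=e_i$ gives $\P_\eta(T^\pub_{\SetDisj,x,e_i})\le (n+r)^{-k_5}+O(\eta)$.
\item The \refmeta{union bound}, applied over the $n$ events with $\sd$ as the seed, then gives $\bbE_\eta[I_T|_x]\le n(n+r)^{-k_5}+O(n\eta)$.
\item The \refmeta{avg on exp}, with $x$ as the fixed part of the seed and the lower bound taken on $-\bbE[I_T]$, transfers this pointwise bound to the global estimate $\P_\eta(T)\le n(n+r)^{-k_5}+O(n\eta)$.
\end{itemize}
\Cref{prop: indicator variable} converts between indicator expectations and $\P$ at each step.

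\textbf{Parameters and contradiction.} Choose $k_5\eqdef k_1+1$, so that $n(n+r)^{-k_5}\le n'^{-k_1}$. Choosing $\eta$ small relative to $\beta/n$ and applying the \refaxiom{precision} gives $\P_\delta(T)\le n'^{-k_1}+\delta+\beta$ for every $\delta^{-1},\beta^{-1}\in\Log$. The side condition $n'^{-k_1}<1-2^{n'-m(n')}=1/2$ holds for large $n'$. Small $n'$ are covered because the statement is only needed asymptotically, or else by brute force via the \refmeta{brute force}.

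This contradicts $\rcWPHP[n'-1,n'^{-k_1}]$, which asserts that some $\delta,\beta$ satisfy $\P_\delta(T)>n'^{-k_1}+\delta+\beta$.

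The main obstacle is the global-averaging step. The correctness hypothesis is stated per input as a bound on $\P$ over $\sd$ alone, whereas $T$ also randomizes over $x$, so the \refmeta{avg on exp} argument must be set up carefully, including the seed ordering (handled by the \refmeta{permutation exp}). The error bookkeeping over the $n$ union-bound terms is routine.
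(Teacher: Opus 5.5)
Your proposal is correct and follows essentially the same route as the paper. Both proofs use the same compressor $(x,\sd)\mapsto(g_\Alice(x,\sd),\sd)$, recover each $x_i$ by running Bob on $e_i$ (using $\SetDisj(x,e_i)=1-x_i$), and set $k_5=k_1+1$. The only difference is the order of bookkeeping. The paper first applies the union bound over the whole seed $(x,\sd)$, then uses the averaging argument for expectation separately on each of the $n$ coordinate events to bring in the fixed-$x$ guarantee. You instead apply the union bound for each fixed $x$ and then do a single averaging step on the failure indicator. Because you also drop the paper's unnecessary failure branch in the compressor, your failure circuit is dominated pointwise by the disjunction directly, so you do not need to invoke monotonicity.
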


\begin{proof}
    Fix any $k_1\in\bbN$ and let $k_5\in\bbN$ be determined later. We argue in $\Apx_1$ that $\lnot \pub\de\rsLB^\SetDisj[n-1,n^{-k_5}]$ implies $\lnot \rcWPHP[n-1,n^{-k_1}](\PV)$. 
    
    Suppose that $\pub\de\rsLB^\SetDisj[n-1,n^{-k_5}]$ does not hold. Then there are $n,r\in\Log$ and a one-way public-coin protocol $g_\Alice:\zo^n\times \zo^r\to\zo^{n-1}$, $d_\Bob:\zo^n\times \zo^{n-1}\times \zo^r\to\zo$ such that the following holds: For every $x,y\in\zo^n$ and $\delta^{-1},\beta^{-1}\in\Log$, let $T_{\SetDisj,x,y}^\pub(\sd)$ be the circuit defined as \Cref{equ: def T pub}, then 
    \begin{equation}\label{equ: error set disj is small}
        \P_\delta(T_{\SetDisj,x,y}^\pub) \le (n+r)^{-k_5} + \delta + \beta. 
    \end{equation}
    Our goal is to construct a compression-decompression scheme that violates $\rcWPHP[n-1,n^{-k_1}](\PV)$. 

    \pgr{Construction of the compression scheme} We construct a pair of circuits $C:\zo^{n+r}\to\zo^{n+r-1},D:\zo^{n+r-1}\to\zo^{n+r}$  as follows.
    \begin{compactitem}
    \item (\emph{Compression}): The circuit $C$ parses the input as $(x,\sd)\in\zo^n\times \zo^r$ and computes $\sigma\in\zo^n$ defined as  
    \begin{equation}
    \sigma_i\eqdef d_\Bob(e_i,g_\Alice(x,\sd),\sd)\oplus x_i\oplus 1, \label{equ: def error in compression}
    \end{equation}
    where $e_i$ denotes the string with the $i$-th bit being its only $1$-index. If $e\ne 0^n$, the compression fails and it outputs $0^{n}$. Otherwise, it outputs the concatenation of $\msg\eqdef g_\Alice(x,\sd)$ and $\sd$.
    \item (\emph{Decompression}): The circuit $D$ parses the input as the concatenation of $\msg$ and $\sd$ as mentioned above, computes $y\in\zo^{n}$ as 
    \[
    y_i\eqdef d_\Bob(e_i, \msg, \sd)\oplus1, 
    \]
    and outputs the concatenation of $y$ and $\sd$. 
    \end{compactitem}
    It is clear that when $\sigma=0^n$, the compression-decompression scheme is correct. 

    \pgr{Analysis of the error probability} We will prove that $(C,D)$ is a compression-decompression scheme that violates $\rcWPHP[n-1,n^{-k_1}](\PV)$. Fix any $\delta^{-1},\beta^{-1}\in\Log$ and let $T:\zo^{n+r}\to\zo$ be the circuit that $T(z)$ outputs $1$ if $D(C(z))\ne z$. Our goal is to prove that $\P_\delta(T)\le n^{-k_1}+\delta + \beta$.  

    Let $\eta^{-1}\in\Log$ be a parameter to be determined later, and $T':\zo^{n+r}\to\zo$ be the following circuit: Given $(x,\sd)\in\zo^n\times\zo^r$, it computes $\sigma$ via \Cref{equ: def error in compression}, and outputs $1$ if and only if $\sigma\ne 0^{n}$. As mentioned above, for every $x$ and $\sd$, $T(x\circ \sd)=1$ implies that $T'(x\circ \sd)=1$. Therefore, by the \refmeta{monotone}, we have 
    \begin{equation}\label{equ: monotone T and T'}
    \P_\eta(T)\le \P_\eta(T') + 3\eta.  
    \end{equation}

    Let $V\eqdef \{0,1\}$ and $X_1,X_2,\dots,X_n$ be the random variables supported over $\zo^{n+r}$ such that $X_i=1$ if and only if $e_i\ne 0$. Let $F_i(x,\sd)$ be the circuit that defines $X_i$ for every $i\in[n]$. It is clear that $T'(x,\sd)$ is the circuit that outputs $1$ if and only if $X_i=1$ for some $i\in[n]$. Let $Y$ be the random variable defined by $(V,n+r,T')$. By the \refmeta{union bound}, we have 
    \begin{equation}\label{equ: union bound bit by bit}
        \P_\eta(T')\le \bbE_\eta[Y] + 3\eta \le \bbE_\eta[X_1] + \dots + \bbE_\eta[X_n] + 3\eta\cdot (n+1), 
    \end{equation}
    where the first inequality follows from \Cref{prop: indicator variable}. 
    
    In addition, for every $i\in[n]$ and every $x\in\zo^n$, we can see that $F_i(x,\cdot)$ is functionally equivalent to $T_{\SetDisj,x,e_i}^\pub(\cdot)$. Let $X_i|_x$ be the random variable obtained by fixing the first part of the seed to be $x$. Then for $x\in\zo^n$, 
    \[ 
    \bbE_\eta[X_i|_x] \le \P_\eta(F_i(x,\cdot)) + 3\eta \le \P_\eta(T_{\SetDisj,x,e_i}^\pub) + 6\eta \le (n+r)^{-k_5} + 8\eta, 
    \] 
    where the last inequality follows from \Cref{equ: error set disj is small}. By \refmeta{avg on exp}, we have 
    \begin{equation}\label{equ: each bit small prob}
    \bbE_\eta[X_i] \le (n+r)^{-k_5} + 8\eta + 3\eta \le n^{-k_5} + 11\eta. 
    \end{equation}

    Combining the results above, we have: 
    \begin{align*}
        \P_\delta(T) & \le \P_\eta(T) + \delta + 2\eta \tag{\refaxiom{precision}} \\ 
        &\le \P_\eta(T') + \delta + 5\eta \tag{\Cref{equ: monotone T and T'}} \\ 
        &\le \bbE_\eta[X_1] + \dots + \bbE_\eta[X_n] + \delta + 3\eta\cdot (n+1) + 5\eta \tag{\Cref{equ: union bound bit by bit}} \\ 
        &\le \delta + ((n+r)^{-k_5}+11\eta)\cdot n + 3\eta\cdot (n+1) + 5\eta \tag{\Cref{equ: each bit small prob}} \\ 
        &\le (n+r)^{-k_1} + \delta + \beta, 
    \end{align*}
    where the last inequality follows by setting $k_5\eqdef k_1+1$ and $\eta \eqdef \beta / (50n)$. This violates $\rcWPHP[n-1,n^{-k_1}](\PV)$ and thus completes the proof. 
\end{proof}

\subsubsection{Stretch Reduction for Compression: \texorpdfstring{\eqref{rcWPHP large} $\Rightarrow$ \eqref{rrWPHP small}}{(2) => (3)}} 

\begin{lemma}\label{lmm: stretch reduction for compression}
    For any $\eps_2\in(0,1)$ and $k_2\in\bbN$, there exists $k_3\in\bbN$ such that $\Apx_1+\rcWPHP[n^{\eps_2},n^{-k_2}](\PV)\vdash \rrWPHP[n-1,(n+r)^{-k_3}](\PV)$.
\end{lemma}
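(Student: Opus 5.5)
The plan is to prove the contrapositive in $\Apx_1$, following the same pattern as the previous two lemmas: a violation of $\rrWPHP[n-1,(n+r)^{-k_3}](\PV)$ will be turned into a violation of $\rcWPHP[n^{\eps_2},n^{-k_2}](\PV)$. The construction iterates the randomized compressor, in the style of the classical stretch reduction for $\rWPHP$ (compress the first block, prepend, repeat), and then fixes all random seeds by the \refmeta{avg on exp}. I take $k_3\eqdef \lceil (k_2+3)/\eps_2\rceil$ and $K\eqdef n^{\lceil 1/\eps_2\rceil}$, and will check at the end that these suffice.

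\textbf{Step 1 (iterated compressor).} Suppose the violating data are $n,r\in\Log$, $C:\zo^n\times\zo^r\to\zo^{n-1}$ and $D:\zo^{n-1}\to\zo^n$, with $\P_\delta(T_x)\le (n+r)^{-k_3}+\delta+\beta$ for every $x$ and every $\delta^{-1},\beta^{-1}\in\Log$. Set $N\eqdef n-1+K$. For $z\in\zo^N$ and seeds $\vec\sd=(\sd_1,\dots,\sd_K)$, define $w_0\eqdef z$ and let $w_j$ be obtained from $w_{j-1}$ by replacing its first $n$ bits $u_j$ with $C(u_j,\sd_j)$. Then $|w_j|=N-j$ and $w_K\in\zo^{n-1}$. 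The decompressor undoes the steps in reverse order by replacing the first $n-1$ bits $v$ with $D(v)$. By induction on $j$, which is an open induction, decompression recovers $z$ whenever none of the events $F_j\eqdef[D(C(u_j,\sd_j))\ne u_j]$ occurs.

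\textbf{Step 2 (error bound with random seeds).} Let $X_j$ be the indicator variable of $F_j$ and $Y$ the indicator of failure, both on seed $(z,\vec\sd)$. Pointwise $Y\le \bigvee_j X_j$, so the \refmeta{union bound} gives $\bbE_\eta[Y]\le\sum_j\bbE_\eta[X_j]+3\eta K$. To bound $\bbE_\eta[X_j]$, fix $z$ and all seeds other than $\sd_j$. The string $u_j$ is then determined and $X_j$ restricted this way is the indicator of $T_{u_j}$, so its expectation is at most $(n+r)^{-k_3}+O(\eta)$ by hypothesis and \Cref{prop: indicator variable}. By \refmeta{permutation exp} I may place $\sd_j$ last, and then the \refmeta{avg on exp} transfers the pointwise bound to the global one, giving $\bbE_\eta[X_j]\le (n+r)^{-k_3}+O(\eta)$.

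\textbf{Step 3 (fixing the seeds).} I apply the \refmeta{avg on exp} once more, now with $\vec\sd$ as the suffix, to $-\bbE_\eta[Y]$, or equivalently to the indicator of failure. This produces a fixed $\vec\sd^\star$ such that $\bbE_\eta[Y|_{\vec\sd^\star}]\le K(n+r)^{-k_3}+O(\eta K)$. With the seeds hardwired, $(C^\star,D^\star)$ is a deterministic scheme mapping $\zo^N$ to $\zo^{n-1}$, and I pad the output to length $\lfloor N^{\eps_2}\rfloor\ge n-1$. The choice of $K$ ensures $N^{\eps_2}\ge n$, and $K(n+r)^{-k_3}\le N^{-k_2}/2$ because $N\le 2n^{1/\eps_2+1}$. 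Taking $\eta$ small relative to $\beta/K$ and applying \Cref{prop: indicator variable} and the \refaxiom{precision}, I get $\P_\delta(T)\le N^{-k_2}+\delta+\beta$ for all $\delta,\beta$, which contradicts $\rcWPHP[N^{\eps_2},N^{-k_2}]$. The side condition $N^{-k_2}<1-2^{N^{\eps_2}-N}$ holds for large $N$, and small $n$ is handled by brute force.

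The step I expect to be the main obstacle is Step 2. In the iterated scheme the inputs $u_j$ are not uniform, since they depend on $z$ and on the earlier seeds, so the per-input guarantee has to be applied pointwise after fixing everything except $\sd_j$. The bookkeeping must be kept as a single quantifier-free induction or union bound over $j\le K$ with explicit $O(\eta)$ losses, and the restrictions have to be applied in the suffix order required by the averaging theorem, which is why \refmeta{permutation exp} is invoked.
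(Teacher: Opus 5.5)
Your construction is viable, but Step 3 has a quantifier-order gap. The seeds $\vec\sd^\star$ come from the \refmeta{avg on exp} applied at precision $\eta$. You then take $\eta$ ``small relative to $\beta/K$'', so the hardwired pair $(C^\star,D^\star)$ depends on $\beta$. Refuting $\rcWPHP[N^{\eps_2},N^{-k_2}](\PV)$ requires a single pair $(C^\star,D^\star)$ with $\P_\delta(T)\le N^{-k_2}+\delta+\beta$ for \emph{all} $\delta^{-1},\beta^{-1}\in\Log$. A family of pairs, each certified only for its own $\beta$, gives no contradiction, because $\rcWPHP$ may answer each pair with a smaller $\beta$.

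The repair is cheap.
\begin{itemize}
\item Let $\eta$ depend only on $N$, say $\eta^{-1}\eqdef c_0KN^{k_2}$ for a large constant $c_0$.
\item Run Steps 2 and 3 once at this precision. This gives $\P_\eta(T^\star)\le K(n+r)^{-k_3}+O(\eta K)\le\tfrac{3}{4}N^{-k_2}$.
\item Fix $\vec\sd^\star$ from that run.
\item Only then, for arbitrary $\delta,\beta$, apply the \refaxiom{precision}: $\P_\delta(T^\star)\le\P_\eta(T^\star)+\delta+\eta+\beta\le N^{-k_2}+\delta+\beta$.
\end{itemize}
The factor-$\tfrac{1}{2}$ slack you reserved is exactly what absorbs the $O(\eta K)$ loss.

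Separately, $k_3=\lceil(k_2+3)/\eps_2\rceil$ is too small in general. From $N\le 2n^{1/\eps_2+1}$, the inequality $K(n+r)^{-k_3}\le N^{-k_2}/2$ follows only when $k_2+1<2/\eps_2$, and it genuinely fails otherwise. For example, take $\eps_2=0.9$, $k_2=5$, $r=1$. Then $K=n^2$ and $k_3=9$, so $K(n+1)^{-9}\approx n^{-7}$, while $N^{-5}\approx n^{-10}$. Since $N\le 2n^{\lceil1/\eps_2\rceil}$, the choice $k_3\eqdef\lceil1/\eps_2\rceil(k_2+1)+2$ works.

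With these fixes your proof is correct, and it takes a different route from the paper's. The paper never fixes seeds. It makes $(\sd_1,\dots,\sd_d)$ part of the uniformly random input of a deterministic compressor and copies them into the output. So $(C',D')$ is defined before $\delta,\beta$, and the quantifier issue never arises.

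The price is that the $dr$ seed bits occupy the output. That is why the paper compresses all $n$-bit blocks in parallel, with one shared seed per round, over only $d=O(n\log\ell)$ rounds. Your sequential iteration with $K$ fresh seeds would put $Kr$ seed bits into the output, far above $(N+Kr)^{\eps_2}$. It is also why the paper's $\ell=(n+r)^{10/\eps_2}$ must depend on $r$.

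Your route spends one extra application of the averaging argument to hardwire the seeds. In exchange it gets the classical one-bit-at-a-time iteration, an $r$-independent input length $N$, and a union bound over only $K$ events.
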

\begin{proof}
    Fix any constant $\eps_2\in(0,1)$ and $k_2\in\bbN$, and let $k_3\in\bbN$ be determined later. We argue in $\Apx_1$ that if $\rrWPHP[n-1,(n+r)^{-k_3}](\PV)$ does not hold, then $\rcWPHP[n^{\eps_2},n^{-k_2}](\PV)$ does not hold. 

    Suppose that $\rrWPHP[n-1,n^{-k_3}](\PV)$ does not hold. Then there are $n,r\in\Log$ and circuits $C:\zo^n\times \zo^r\to\zo^{n-1}$, $D:\zo^{n-1}\to\zo^n$ such that for every $x\in\zo^n$ and every $\delta^{-1},\beta^{-1}\in\Log$, let $T_x:\zo^r\to\zo$ be the circuit such that $T_x(\sd)=1$ if and only if $D(C(x,\sd))\ne x$, then 
    \begin{equation}\label{equ: Tx small from assumption}
        \P_\delta(T_x) \le (n+r)^{-k_3}+\delta + \beta. 
    \end{equation}
    In other words, there is a one-bit randomized compression scheme that is worst-case correct with error $n^{-k_3}$. Our goal is to construct a deterministic and average-case compression-decompression algorithm that violates $\rcWPHP[n^{\eps_2},n^{-k_2}](\PV)$. 

    \pgr{Compression and decompression circuits} Let $\ell\in\Log$ and $d\in\Log\Log$ be parameters to be determined later. The compression circuit takes an $(\ell+dr)$-bit string as input, parses it as $z\in\zo^\ell$ and $(\sd_1,\dots,\sd_d)\in\zo^r$, and runs a $d$-round iterative compression algorithm. 
    
    Initialize $z_0\gets z$. In the $i$-th round, the iterative algorithm works as follows:  
    \begin{compactenum}
    \item Parse $z_{i-1}$ as $x_1\circ x_2\circ \dots\circ x_k\circ y_i$, where $k\eqdef \lfloor |z_{i-1}|/n\rfloor$ and $x_1,\dots,x_k\in\zo^n$. 
    \item For every $j\in[k]$, compute $x'_j\eqdef C(x_j,\sd_i)$. 
    \item Set $z_i\gets x'_1\circ x'_2\circ \dots \circ x'_k$. 
    \end{compactenum}
    Finally, the compression circuit outputs the encoding of the tuple $(z_d, y_1,\dots,y_d,\sd_1,\dots,\sd_d)$. 

    The decompression circuit takes $(z_d, y_1,\dots,y_d,\sd_1,\dots,\sd_d)$ and works reversely via a $d$-round iterative algorithm. In the $i$-th iteration, the algorithm works as follows:  
    \begin{compactenum}
    \item Parse $z_{d+1-i}$ as $x_1'\circ x_2'\circ\dots\circ x_k'$, where $k\eqdef \lfloor|z_{d+1-i}|/(n-1)\rfloor$ and $x'_1,\dots,x'_k\in\zo^{n-1}$. 
    \item For every $j\in[k]$, compute $x_j\eqdef D(x_j')$. 
    \item Set $z_{d-i}\gets x_1\circ x_2\circ \dots\circ x_k\circ y_{d+1-i}$. 
    \end{compactenum}

    We now set the parameters $\ell$ and $d$ such that the compression scheme above has stretch at least $(\ell+dr)^{\eps_2}$. Let $z_0,z_1,\dots,z_d,y_1,\dots,y_d$ be the strings obtained by the compression algorithm, it is clear that 
    \[
    |z_i| \le |z_{i-1}|\cdot \left(1-\frac{1}{n}\right), |y_i|\le n,
    \]
    and thus the output length of the compression circuit is at most 
    \[
    \ell\cdot \left(1-\frac{1}{n}\right)^d + d\cdot (n+r). 
    \]
    We can set $\ell=(n+r)^{10/\eps_2}$ and $d=10\cdot n\log \ell$ such that the output length is at most $O(n(n+r)\log\ell)\ll \ell^{\eps_2}\le (\ell+dr)^{\eps_2}$. Therefore, the compression stretch is $(\ell+dr)^{\eps_2}$ for sufficiently large $n$ and $r$; the cases when $n,r$ are small can be proved by a brute-force case study. 

    \pgr{Analysis of the error probability} Fix $\ell\in\Log$ and $d\in\Log\Log$ as above. Let $C':\zo^{\ell+dr}\to\zo^{(\ell+dr)^{\eps_2}}$, $D':\zo^{(\ell+dr)^{\eps_2}}\to\zo^{\ell+dr}$ be the compression and decompression algorithms mentioned above. Let $T:\zo^{\ell+dr}$ be the circuit that $T(z)=1$ if and only if $D'(C'(z))\ne z$, i.e., the compression scheme fails. Our goal is to prove that for every $\delta^{-1},\beta^{-1}\in\Log$, $\P_\delta(T)\le (\ell+dr)^{-k_2}+\delta + \beta$. 

    Fix any $\delta^{-1},\beta^{-1}\in\Log$ and let $\eta^{-1}\in\Log$ be a parameter to be determined later. Let $V\eqdef \{0,1\}$. For every $i\in[d]$ and $j\le \lfloor\ell/n\rfloor$, we define $F_{ij}(z,\sd_1,\dots,\sd_d)$ be the circuit that outputs $1$ if and only if the following holds: 
    \begin{compactitem}
    \item In the $i$-th round of the compression algorithm, let $k\eqdef \lfloor |z_{i-1}|/n\rfloor$, then $j\le k$ and $D(C(x_j,\sd_i))\ne x_j$.  
    \end{compactitem}
    Let $X_{ij}$ be the random variable defined by $(V,\ell+dr,F_{ij})$. Let $F(z,\sd_1,\dots,\sd_d)$ be the circuit that outputs $1$ if and only if $F_{ij}(z,\sd_1,\dots,\sd_d)=1$ for some $i\in[d]$ and $j\le \lfloor \ell/n\rfloor$, and $Y$ be the random variable defined by $(V,\ell+dr,F_{ij})$. By the \refmeta{union bound}, we have 
    \begin{equation}
    \P_\eta(F)\le \bbE_\eta[Y]+3\eta \le \sum_{ij}\bbE_\eta[X_{ij}] + 3\eta\cdot (d\ell+1),\label{equ: F smaller than sum Xij via union bound}
    \end{equation}
    where the first inequality follows from \Cref{prop: indicator variable}. 
    
    It is clear that $\PV$ proves that for every $z\in\zo^\ell,\sd_1,\dots,\sd_d\in\zo^r$, if $T(z,\sd_1,\dots,\sd_d)=1$, then $F(z,\sd_1,\dots,\sd_d)=1$. To see this, notice that if $F(z,\sd_1,\dots,\sd_d)=0$, we can prove by induction on $i$ that if we run the iterative compression algorithm on the input $z\circ \sd_1\circ\dots\circ \sd_d$ for $i$ rounds, and run the iterative decompression algorithm starting from the $d-i$ round, it will be correctly decompressed. This can be implemented by induction on a $\PV$ term, which is available in $\PV$. Subsequently, by the \refmeta{monotone}, 
    \begin{equation}\label{equ: T smaller than F via monotone}
    \P_\eta(T)\le \P_\eta(F)+3\eta. 
    \end{equation}

    Next, we prove an upper bound on $\bbE_\eta[X_{ij}]$. Fix any $i\in[d]$ and $j\le \lfloor \ell/n\rfloor$. Let 
    \[ 
    \rho=(z,\sd_1,\dots,\sd_{i-1},\sd_{i+1},\dots,\sd_d)
    \] 
    be an arbitrary assignment to all but the interval $\sd_i$ in the seed of $X_{ij}$. Let $x_j$ be the string in the $i$-th round of the compression algorithm on the input $z$ and using $\sd_1,\dots,\sd_{i-1}$ in the first $i-1$ rounds. Note that $x_j$ can be computed by a $\PV$ term given $\rho$. Recall that $T_{x_j}(\sd)$ is the circuit that outputs $1$ if and only if $D(C(x_j,\sd))\ne x_j$. It can be proved that $T_{x_j}(\sd_i)=1$ if and only if $F_{ij}(\rho\cup\sd_i)=1$, i.e., $X_{ij}|_\rho$ is the indicator variable of $T_{x_j}(\sd)=1$. Subsequently, 
    \begin{equation}
    \bbE_\eta[X_{ij}|_{\rho}] \le \P_\eta(T_{x_j}) + 6\eta \le (n+r)^{-k_3}+8\eta, \label{equ: Xij small after assignment}
    \end{equation}
    where the first inequality follows from \Cref{prop: indicator variable} and \refmeta{global}, and the second inequality follows from \Cref{equ: Tx small from assumption}. 
    
    Note that \Cref{equ: Xij small after assignment} holds for any assignment $\rho$. By \refmeta{avg on exp}, we can further deduce that 
    \begin{equation}\label{equ: Xij small via averaging on exp}
    \bbE_\eta[X_{ij}] \le n^{-k_3}+8\eta + 3\eta \le n^{-k_3}+11\eta. 
    \end{equation}

    Combining the results above, we can now calculate 
    \begin{align*}
        \P_\delta(T) & \le \P_\eta(T) + \delta + 2\eta \tag{\refaxiom{precision}} \\ 
        & \le \P_\eta(F) + \delta + 5\eta \tag{\Cref{equ: T smaller than F via monotone}} \\ 
        & \le \sum_{ij}\bbE_\eta[X_{ij}] + 3\eta\cdot (d\ell+1) + \delta + 5\eta \tag{\Cref{equ: F smaller than sum Xij via union bound}}\\
        & \le ((n+r)^{-k_3}+11\eta) \cdot d\cdot \ell + 3\eta \cdot (d\ell+1) + \delta + 5\eta \tag{\Cref{equ: Xij small via averaging on exp}} \\
        & \le (\ell+dr)^{-k_2}+\delta + \beta,
    \end{align*}
    where the last inequality follows by setting $\eta\eqdef \beta/(100(d\ell + 1))$ and $k_3\eqdef 100 k_2/\eps_2 + 10k_2 + 10$. This shows that the pair of circuits $C',D'$ violates $\rcWPHP[n^{\eps_2},n^{-k_2}](\PV)$ and thus completes the proof. 
\end{proof}

\subsubsection{Worst-Case to Average-Case Reduction: \texorpdfstring{\eqref{rrWPHP large} $\Rightarrow$ \eqref{rcWPHP small}}{(4) => (1)}} 

\begin{lemma}\label{lmm: w2a compression}
    For any $\eps_4\in(0,1)$ and $k_4\in\bbN$, there exists $k_1\in\bbN$ such that $\Apx_1+\rrWPHP[n^{\eps_4},(n+r)^{-k_4}](\PV)\vdash \rcWPHP[n-1,n^{-k_1}](\PV)$. 
\end{lemma}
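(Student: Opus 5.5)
We prove the contrapositive inside $\Apx_1$: assuming $\lnot\rcWPHP[n-1,n^{-k_1}](\PV)$, we build a violation of $\rrWPHP[n^{\eps_4},(n+r)^{-k_4}](\PV)$. So fix $n$ and a deterministic pair $C:\zo^n\to\zo^{n-1}$, $D:\zo^{n-1}\to\zo^n$ whose failure circuit $T(x)=\Id[D(C(x))\ne x]$ satisfies $\P_\delta(T)\le n^{-k_1}+\delta+\beta$ for all $\delta^{-1},\beta^{-1}\in\Log$. This pair is only average-case correct, while $\rrWPHP$ asks for worst-case correctness over inputs and allows randomness. The plan combines two standard ideas.

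\textbf{Step 1: worst case to average case by re-randomization.} Define a randomized one-bit compressor $C_1(x,\sd)\eqdef C(x\oplus\sd)$ and output it together with \dots. This is a problem, because a decompressor does not know $\sd$ unless $\sd$ is sent. The cleaner route is to apply the stretch-reduction iteration of \Cref{lmm: stretch reduction for compression} first, obtaining a deterministic average-case compressor with stretch $n^{\eps}$. We then make it worst-case using the public seed as a one-time pad:
\[
C'(x,\sd)\eqdef C(x\oplus\sd),\qquad D'(\msg)\eqdef D(\msg)\oplus\sd .
\]
But $D$ in $\rrWPHP$ is deterministic and does not see $\sd$. To handle this, we fix the pad inside the encoding, taking care that the pad does not dominate the length.

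The intended fix is to use \emph{amortization}. Choose $\ell=\poly(n)$ blocks and a single pad $\sd\in\zo^n$ reused across blocks. The input of the new scheme is $x=(x_1,\dots,x_\ell)\in\zo^{n\ell}$. The randomized compressor outputs $\sd$ followed by $C(x_j\oplus\sd)$ for each $j$, and then iterates the halving of \Cref{lmm: stretch reduction for compression} to shrink the output to $(n\ell)^{\eps_4}$ bits. Concretely, iterate $d=O(n\log\ell)$ rounds, each time re-padding with fresh seed bits that are themselves written in the output. This costs $O(d n)$ extra bits, which is $\ll(n\ell)^{\eps_4}$ for $\ell=n^{O(1/\eps_4)}$. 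The deterministic decompressor reads the written pads and inverts round by round, exactly as in the earlier lemma.

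\textbf{Step 2: error analysis.} Fix a worst-case input $x$. For each round $i$ and block $j$, define $X_{ij}$ as the indicator that block $j$ fails in round $i$. Condition on all seed parts other than the round-$i$ pad. The block fed to $C$ is then $w\oplus\sd_i$ for a fixed $w$, and its failure indicator is $T^\oplus_w(\sd_i)$. By the \refmeta{rerand} and the assumption, $\P_\eta(T^\oplus_w)\le n^{-k_1}+O(\eta)$.

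The \refmeta{avg on exp} lifts this bound to $\bbE_\eta[X_{ij}]$. The \refmeta{union bound} over the $d\ell$ pairs, together with \refmeta{monotone} (failure of the scheme implies some $X_{ij}=1$, which is provable by induction on rounds in $\PV$), gives an error at most $d\ell\, n^{-k_1}+\delta+\beta$. Precision is then adjusted via \refaxiom{precision} with $\eta=\beta/\poly$. Choosing $k_1$ large relative to $k_4$ and $1/\eps_4$ makes this at most $(n\ell+r)^{-k_4}$, contradicting $\rrWPHP$.

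\textbf{Main obstacle.} The delicate point is the conditioning in Step 2. The round-$i$ input block depends on earlier pads, so it must be fixed first. The pad $\sd_i$ must be fresh and uniformly XORed, so that the \refmeta{rerand} applies to a fixed string. Reusing one pad across all blocks in a round is fine, because the union bound handles each block separately. I would also verify that writing the pads into the output keeps the stretch below $(n\ell)^{\eps_4}$.
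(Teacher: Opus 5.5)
Your final construction matches the paper's proof and is correct. It uses a fresh pad $\sd_i\in\zo^n$ per round, XORed into every $n$-bit block of that round before applying $C$ and written into the output so the deterministic decompressor can strip it, iterated for $d=O(n\log\ell)$ rounds with $\ell=n^{O(1/\eps_4)}$; your error analysis (fix the worst-case input, condition on all pads except $\sd_i$, apply the Re-randomization Lemma to $T^\oplus_w$, lift with the averaging argument for expectation, then combine monotonicity with the union bound over the $d\ell$ block failures) is also the paper's. Delete the false starts in Step~1, e.g.\ applying the earlier stretch-reduction lemma to a deterministic average-case scheme, whose hypothesis is a randomized worst-case compressor; only the interleaved per-round padding makes the conditioning argument work.
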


We will use the iterative compression algorithm in \Cref{lmm: stretch reduction for compression} to boost the stretch to $m_2$, while a new trick is required to construct worst-case compression from average-case compression algorithm. At a high level, we observe that the compression-decompression problem with large stretch admits \emph{random self-reducibility} that is provably correct via the \refmeta{rerand}.

\begin{proof}[Proof of \Cref{lmm: w2a compression}.]
    Fix any constant $\eps_4\in(0,1)$, $k_4\in\bbN$, and let $k_1\in\bbN$ be determined later. We argue in $\Apx_1$ that assuming $\rcWPHP[n-1,n^{-k_4}](\PV)$ does not hold, $\rrWPHP[n^{\eps_4},(n+r)^{-k_4}](\PV)$ also does not hold. In other words, we will construct a polynomial-stretch randomized worst-case compression scheme from a one-bit deterministic average-case compression scheme. 

    Assume for contradiction that $\rcWPHP[n-1,n^{-k_1}](\PV)$ does not hold. Then there is an $n\in\Log$ and circuits $C:\zo^n\to\zo^{n-1}$, $D:\zo^{n-1}\to\zo^n$ such that the following holds. Let $T$ be the circuit that $T(x)=1$ if $D(C(x))\ne x$. Then for every $\delta^{-1},\beta^{-1}\in\Log$, $\P_\delta(T)\le n^{-k_1}+\delta + \beta$. By the \refmeta{rerand}, we know that for every $x\in\zo^n$, let $T_x^\oplus$ be the circuit $T_x^\oplus(\sd)\eqdef T(x\oplus \sd)$, then 
    \begin{equation}\label{equ: bound after rerand}
    \P_\delta(T_x^\oplus) \le \P_\delta(T)+2\delta+\beta\le n^{-k_1}+3\delta + 2\beta. 
    \end{equation}
    Note that we can assume without loss of generality that $n$ is larger than any fixed standard integer $n_0\in\bbN$, as the cases when $n\le n_0$ can be resolved in brute force.  

    \paragraph{Compression and decompression circuits.} Let $\ell\in\Log$ and $d\in\Log\Log$ be parameters to be determined later. The compression circuit takes an $\ell$-bit string as input $z$, an $nd$-bit random seed $(\sd_1,\dots,\sd_d)\in\zo^{r}$, and performs the following $d$-round iterative algorithm. It initializes $z_0\gets z$. In the $i$-th round, the algorithm works as follows: 
    \begin{compactenum}
    \item Parse $z_{i-1}$ as $x_1\circ x_2\circ \dots\circ x_k\circ y_i$, where $k\eqdef \lfloor|z_{i-1}|/n\rfloor$ and $x_1,\dots,x_k\in\zo^n$; 
    \item For every $j\in[k]$, compute $x_j'\eqdef C(x_j\oplus \sd_i)$. 
    \item Set $z_i\gets x_1'\oplus x_2'\oplus\dots \oplus x_k'$. 
    \end{compactenum}
    Finally, the compression circuit outputs the encoding of the tuple $(z_d,y_1,\dots,y_d,\sd_1,\dots,\sd_d)$. 

    The decompression circuit takes $(z_d, y_1,\dots,y_d,\sd_1,\dots,\sd_d)$ and works reversely via a $d$-round iterative algorithm. In the $i$-th iteration, the algorithm works as follows:  
    \begin{compactenum}
    \item Parse $z_{d+1-i}$ as $x_1'\circ x_2'\circ\dots\circ x_k'$, where $k\eqdef \lfloor|z_{d+1-i}|/(n-1)\rfloor$ and $x'_1,\dots,x'_k\in\zo^{n-1}$. 
    \item For every $j\in[k]$, compute $x_j\eqdef D(x_j')\oplus \sd_i$. 
    \item Set $z_{d-i}\gets x_1\circ x_2\circ \dots\circ x_k\circ y_{d+1-i}$. 
    \end{compactenum}

    Similar to the proof of \Cref{lmm: stretch reduction for compression}, we can set the parameters $\ell\eqdef n^{10/\eps_4}$ and $d\eqdef 10\cdot n\log \ell$ such that the compression scheme above has stretch at least $\ell^{\eps_4}$. The length of random string of the compression scheme is $r\eqdef dn$. 

    \paragraph{Analysis of the error probability.} Fix $\ell\in\Log$ and $d\in\Log\Log$ as above. Let $C':\zo^\ell\times \zo^{dn}\to\zo^{\ell^{\eps_4}}$, $D':\zo^{\ell^{\eps_4}}\to\zo^\ell$ be the compression and decompression algorithms mentioned above. Let $T_z:\zo^{dn}\to\zo$ be the circuit that parses the input as $\sd\eqdef (\sd_1,\dots,\sd_d)\in\zo^{dn}$ and outputs $1$ if and only if $D'(C'(z,\sd))\ne z$, i.e., the compression scheme fails on the input $z$. Our goal is to prove that for every $\delta^{-1},\beta^{-1}\in\Log$ and $z\in\zo^\ell$, $\P_\delta(T_z)\le (\ell+dn)^{-k_4}+\delta+\beta$. 

    Fix any $\delta^{-1},\beta^{-1}\in\Log,z\in\zo^\ell$ and let $\eta^{-1}\in\Log$ be a parameter to be determined later. Let $V\eqdef\zo$. For every $i\in[d]$ and $j\le \lfloor\ell/n\rfloor$, we define $F_{ij}(\sd_1,\dots,\sd_d)$ be the circuit that outputs $1$ if and only if the following holds: 
    \begin{compactitem}
    \item In the $i$-th round of the compression algorithm, let $k\eqdef \lfloor |z_{i-1}|/n\rfloor$, then $j\le k$ and $D(C(x_j\oplus \sd_i))\ne x_j\oplus \sd_i$.  
    \end{compactitem}
    Let $X_{ij}$ be the random variable defined by $(V,dn,F_{ij})$. Let $F(\sd_1,\dots,\sd_d)$ be the circuit that outputs $1$ if and only if $F_{ij}(\sd_1,\dots,\sd_d)=1$ for some $i\in[d]$ and $j\le \lfloor \ell/n\rfloor$, and $Y$ be the random variable defined by $(V,dn,F_{ij})$. By the \refmeta{union bound}, we have 
    \begin{equation}
    \P_\eta(F)\le \bbE_\eta[Y]+3\eta \le \sum_{ij}\bbE_\eta[X_{ij}] + 3\eta\cdot (d\ell+1),\label{equ: w2a F smaller than sum Xij via union bound}
    \end{equation}
    where the first inequality follows from \Cref{prop: indicator variable}. 

    It is clear that $\PV$ proves that for every $\sd_1,\dots,\sd_d\in\zo^r$, if $T_z(\sd_1,\dots,\sd_d)=1$, it follows that $F(\sd_1,\dots,\sd_d)=1$. To see this, notice that if $F(\sd_1,\dots,\sd_d)=0$, we can prove by induction on $i$ that if we run the iterative compression algorithm on the input $z$ for $i$ rounds, and run the iterative decompression algorithm starting from the $d-i$ round, it will be correctly decompressed. This can be implemented by induction on a $\PV$ term, which is available in $\PV$. Subsequently, by the \refmeta{monotone}, 
    \begin{equation}\label{equ: w2a T smaller than F via monotone} 
    \P_\eta(T_z)\le \P_\eta(F)+3\eta. 
    \end{equation}
    
    Next, we prove an upper bound on $\bbE_\eta[X_{ij}]$. Fix any $i\in[d]$ and $j\le \lfloor \ell/n\rfloor$. Let 
    \[ 
    \rho=(\sd_1,\dots,\sd_{i-1},\sd_{i+1},\dots,\sd_d)
    \] 
    be an arbitrary assignment to all but the interval $\sd_i$ in the seed of $X_{ij}$. Let $x_j$ be the string in the $i$-th round of the compression algorithm on the input $z$ and using $\sd_1,\dots,\sd_{i-1}$ in the first $i-1$ rounds. Note that $x_j$ can be computed by a $\PV$ term given $\rho$. Recall that $T^\oplus_x(\sd)$ is the circuit that outputs $1$ if and only if $D(C(x\oplus \sd))\ne x\oplus \sd$. It can be proved that $T^\oplus_{x_j}(\sd_i)=1$ if and only if $F_{ij}(\rho\cup\sd_i)=1$, i.e., $X_{ij}|_\rho$ is the indicator variable of $T_{x_j}^\oplus(\sd)=1$. Subsequently, 
    \begin{equation}
    \bbE_\eta[X_{ij}|_{\rho}] \le \P_\eta(T_{x_j}^{\oplus}) + 6\eta \le n^{-k_1}+11\eta, \label{equ: w2a Xij small after assignment}
    \end{equation}
    where the first inequality follows from \Cref{prop: indicator variable} and \refmeta{global}, and the second inequality follows from \Cref{equ: bound after rerand}. 
    
    Note that \Cref{equ: w2a Xij small after assignment} holds for any assignment $\rho$. By \refmeta{avg on exp}, we can further deduce that 
    \begin{equation}\label{equ: w2a Xij small via averaging on exp}
    \bbE_\eta[X_{ij}] \le n^{-k_1}+11\eta + 3\eta \le n^{-k_1}+14\eta. 
    \end{equation}

    Combining the results above, we can now calculate 
    \begin{align*}
        \P_\delta(T) & \le \P_\eta(T) + \delta + 2\eta \tag{\refaxiom{precision}} \\ 
        & \le \P_\eta(F) + \delta + 5\eta \tag{\Cref{equ: w2a T smaller than F via monotone}} \\ 
        & \le \sum_{ij}\bbE_\eta[X_{ij}] + 3\eta\cdot (d\ell+1) + \delta + 5\eta \tag{\Cref{equ: w2a F smaller than sum Xij via union bound}}\\
        & \le \left(n^{-k_1}+14\eta\right) \cdot d\cdot \ell + 3\eta \cdot  (d\ell+1) + \delta + 5\eta \tag{\Cref{equ: w2a Xij small via averaging on exp}} \\
        & \le n^{-k_1}\cdot d\ell+\delta + \beta,
    \end{align*}
    where the last inequality follows by setting $\eta\eqdef \beta/(100(d\ell + 1))$. Recall that $\ell=n^{10/\eps_4}$ and $d=10\cdot n\log\ell$, we have 
    \[
    n^{-k_1}\cdot d\ell = n^{-k_1}\cdot n^{10/\eps_4}\cdot \frac{100n}{\eps_4}\cdot \log n\le n^{-k_4} \le (\ell+dn)^{-k_4}. 
    \]
    by setting $k_1\eqdef 100/\eps_4 + 10 k_4 + 10$ when $n$ is sufficiently large. This shows that $C',D'$ violates $\rrWPHP_{m_2}(\PV)$ and thus completes the proof. 
\end{proof}

\bibliographystyle{alpha}
\bibliography{ref}

\end{document}